\newtheorem{definition}{Definition}[chapter]
\newtheorem{proposition}{Proposition}[chapter]
\newtheorem{theorem}{Theorem}[chapter]
\newtheorem{lemma}{Lemma}[chapter]
\newtheorem{corollary}{Corollary}[chapter]
\newtheorem{example}{Example}[chapter]
\newtheorem{observation}{Observation}[chapter]
\newtheorem{property}{Property}[chapter]
\newtheorem{remark}{Remark}[chapter]
\newcommand{\kibitz}[2]{\ifnum\Comments=1\textcolor{#1}{#2}\fi}
\newcommand{\sg}[1]  {\ifnum\Comments=1      {[SG: #1]}\fi}
\newcommand{\yn}[1]  {\ifnum\Comments=1      {[YN: #1]}\fi}
\newcommand{\ads}[1]  {\ifnum\Comments=1      {[ADS: #1]}\fi}
\newcommand{\dcp}[1]  {\ifnum\Comments=1      {[DCP: #1]}\fi}
\newcommand{\jz}[1]  {\ifnum\Comments=1      {[JZ: #1]}\fi}
\newcommand\blfootnote[1]{%
  \begingroup
  \renewcommand\thefootnote{}\footnote{#1}%
  \addtocounter{footnote}{-1}%
  \endgroup
}
\newcommand{\squishlist}{\begin{itemize}}
\newcommand{\squishend}{\end{itemize}}
\newcommand\T{\rule{0pt}{2.4ex}}
\newcommand\B{\rule[-1.2ex]{0pt}{0pt}}
\DeclareMathOperator*{\argmax}{arg\,max}
\DeclareMathOperator*{\argmin}{arg\,min}
\DeclareMathOperator*{\avg}{avg\,}
\newenvironment{customthm}[1]
  {\innercustomthm}
  {\endinnercustomthm}
\newenvironment{customprop}[1]
  {\innercustomprop}
  {\endinnercustomprop}
\newenvironment{customlem}[1]
  {\innercustomlem}
  {\endinnercustomlem}
\newcommand{\blankpage}{
\newpage
\thispagestyle{empty}
\mbox{}
\newpage
}
\newcommand{\blankpagewithnumber}{
\newpage
\mbox{}
\newpage
}
\crefname{observation}{observation}{observations}
\crefname{algorithm}{algorithm}{algorithms}
\crefname{align}{equation}{equations}
\crefname{eqnarray}{equation}{equations}
\begin{document}

\title{New Models and Methods for Formation and Analysis of\\Social Networks} 

\submitdate{September, 2016} 
%\submitdate{\monthyeardate\today} 
%\me
\phd
%\mscengg
%\degree{Master of Engineering} 
\dept{Computer Science and Automation}
\faculty{Faculty of Engineering}
\author{Swapnil Dhamal}

% Using the watermark package which is in StyleFiles/
% and to remove DRAFT COPY ONLY appearing on the top of all pages comment out below line
%\watermark{DRAFT COPY ONLY}

\maketitle
\blankpage
\pagenumbering{gobble}
%\begin{center}
%\LARGE{\underline{\textbf{Declaration of Originality}}}
%\end{center}
%\noindent I, \textbf{Dhamal Swapnil Vilas}, with SR No. \textbf{4717-120-101-07988} hereby declare that
%the material presented in the thesis titled
%
%\begin{center}
%\textbf{New Models and Methods for Formation and Analysis of Social Networks} 
%\end{center}
%
%\noindent represents original work carried out by me in the \textbf{Deparment
%of Computer Science and Automation} at \textbf{Indian Institute of
%Science} during the years \textbf{2010-2016}.
%
%\noindent With my signature, I certify that:
%\begin{itemize}
%	\item I have not manipulated any of the data or results.
%	\item I have not committed any plagiarism of intellectual
%	property.
%	I have clearly indicated and referenced the contributions of
%	others.
%	\item I have explicitly acknowledged all collaborative research
%	and discussions.
%	\item I have understood that any false claim will result in severe
%	disciplinary action.
%	\item I have understood that the work may be screened for any form
%	of academic misconduct.
%\end{itemize}
%
%\vspace{20mm}
%
%\noindent {\footnotesize{Date:	\hfill	Student Signature}} \qquad
%
%\vspace{20mm}
%
%\noindent In my capacity as supervisor of the above-mentioned work, I certify
%that the above statements are true to the best of my knowledge, and 
%I have carried out due diligence to ensure the originality of the
%report.
%
%\vspace{20mm}
%
%\noindent  {\footnotesize{Advisor Name: Prof. Y. Narahari \hfill Advisor Signature}} \qquad
%
%
%
%\blankpage

\vspace*{\fill}
\begin{center}
\large\bf \textcopyright \ Swapnil Dhamal\\
\large\bf September, 2016\\
%\large\bf \monthyeardate\today\\
\large\bf All rights reserved
\end{center}
\vspace*{\fill}
\thispagestyle{empty}

\blankpage

\vspace*{\fill}
\begin{center}
DEDICATED TO \\[2em]
\Large\it My parents, sister, and grandparents for their love and support,\\[1em]
\Large\it Narahari sir for making my Ph.D. an enjoyable experience,\\[1em]
\Large\it And all animals and animal lovers
\end{center}
\vspace*{\fill}
\thispagestyle{empty}
\blankpage
%\blankpage
%\includepdf[pages={1}]{declaration.pdf}

%\vspace*{\fill}
%\begin{tabular}{p{0.4\columnwidth}p{0.5\columnwidth}}
% {\em Signature of the Author}: & \dotfill \\
% & Your Name \\
% & Dept.\ of Computer Science and Automation \\ 
% & Indian Institute of Science, Bangalore \vspace{1in}\\
% {\em Signature of the Thesis Supervisor}: & \dotfill \\
% & Your Advisor's Name \\
% & Professor \\
% & Dept.\ of Computer Science and Automation \\ 
% & Indian Institute of Science, Bangalore
%\end{tabular}
%\vspace*{\fill}
%\thispagestyle{empty}

%\blankpage

%set the number of sectioning levels that get number and appear in the contents
\setcounter{secnumdepth}{3}
\setcounter{tocdepth}{3}

\frontmatter % book mode only
\pagenumbering{roman}

\prefacesection{Acknowledgments}

First and foremost, I thank Prof. Y. Narahari for being an ideal guide, by allowing me to explore and think freely, suggesting me problems which would be of interest to the community, continuously keeping me motivated to come up with better solutions and approaches, inspiring me to have a firm goal in life, and supporting and encouraging me during times of paper rejections. I am really grateful to him for giving me a chance to pursue Ph.D. under his guidance at the time when I felt the need of proving myself to myself. I also thank him for creating a great atmosphere in the department under his chairmanship. He has guided many students towards achieving their goals and I wish that he continues to guide many more for many more years.

I thank Prabuchandran K. J., Satyanath Bhat, and Jay Thakkar for being alongside me for most part of my IISc life and also for readily participating in crazy games and activities invented by me.
I am greatly grateful to Shweta Jain and Manish Agrawal for all the awesome time at trips, meals, and games;
Rohith D. Vallam for always being there like an elder brother;
Vibhuti Shali for being my ultimate partner in craziness and senselessness;
Monika Dhok for letting me freely express my anger and frustration at times;
Divya Padmanabhan for giving me moral and emotional support in times of need;
Ganesh Ghalme for making me laugh in any situation;
Surabhi Akotiya for being the awesomest company to share my stories and experiences with;
Sujit Gujar for showing that age is not a criterion for being young; and
Amleshwar Kumar for inspiring me to try and work towards being phenomenal.

I thank my colleagues with whom I got a chance to work with on a number of problems, projects, and papers. Specifically, I would like to thank Rohith D. Vallam, Prabuchandran K. J., Akanksha Meghlan, Surabhi Akotiya, Shaifali Gupta,  Satyanath Bhat, Anoop K. R., and Varun Embar for lengthy yet interesting technical discussions and tireless joint efforts.
Being a mentor for {\em Game Theory} course projects, I thank the group members of the projects allotted to me, namely,
Sravya Vankayala, Aakar Deora, Pankhuri Sai, Arti Bhat, Nilam Tathawadekar, Cressida Hamlet, Marilyn George, Aiswarya S, Chandana Dasari,
Bhupendra Singh Solanki, Hemanth Kumar N., Rakesh S., Sumit Neelam,
Mrinal Ekka, Lomesh Meshram, Arunkumar K., Narendra Sabale,
Avinash Mohan, Samrat Mukhopadhyay, Avishek Ghosh,
Debmalya Mandal, Durga Datta Kandel, Gaurav Chaudhory, and Pramod M. J.
Being a teaching assistant for the undergraduate course of {\em Algorithms and Programming} enabled me to interact with enthusiatic youngsters, while further developing my own algorithmic and programming skills.

I am extremely thankful to Dinesh Garg and Ramasuri Narayanam for some of the most useful feedback on my research.
I thank Shiv Kumar Saini, Balaji Vasan Srinivasan, Anandhavelu N., Arava Sai Kumar, and Harsh Jhamtani, from Adobe research labs for useful discussions and feedback during meetings on various topics related to marketing using social networks.
I also thank the many lab visitors and anonymous paper reviewers for their invaluable feedback.

There were lots of hands behind the Facebook app that we developed. In particular, I thank Rohith D. Vallam for doing most of the coding and Akanksha Meghlan for helping kick-start the user interface.
I also thank Mani Doraisamy for technical help regarding Google App Engine;
Tharun Niranjan and Srinidhi Karthik B. S. for helping with the code; and
Nilam Tathawadekar, Cressida Hamlet, Marilyn George, Aiswarya Sreekantan, and Chandana Dasari for helping in designing the incentive scheme for the app as well as in enhancing its user interface.
I would like to acknowledge 
Google India, Bangalore (in particular Mr. Ashwani Sharma) for providing us free credits for hosting our app on Google App Engine;
and
Pixabay.com which provides images free of copyright and other issues.
I am thankful to several of my CSA colleagues for testing and lots of suggestions.
%For testing and lots of suggestions, I am thankful to a lot of people, including but not limited to, Aritra Chatterjee, Vibhuti Shali, Surabhi Akotiya, Pankhuri Sai, Arti Bhat, Shaifali Gupta, Shweta Jain, Divya Padmanabhan, Sneha Mondal, Amleshwar Kumar, Ganesh Ghalme, Satyanath Bhat, Santosh Srinivas, Palash Dey, Arpita Biswas, Arupratan Ray, Priyanka Bhatt, Surabhi Punjabi, Disha Makhija, and Narayan Hegde.
I also thank Prof. Ashish Goel, Prof. Sarit Kraus, Prof. V. S. Subrahmanian, and Prof. Madhav Marathe for their useful feedback on the Facebook app.

{The images in the introduction chapter of this thesis are created using raw images obtained from sites such as {deviantart.com}, youtube.com, flickr.com, google.com, and blogdoalex.com.}

I am thankful to the members of Game Theory lab, past and present, for creating a great environment for doing research while having fun. In particular, I thank Rohith D. Vallam for being, as what Narahari sir says, the pillar of the lab in many ways;
 Satyanath Bhat, Sourav Medya, Ashutosh Verma, and Ratul Ray for being some of the best batchmates having joined the lab with me;
Shweta Jain, Priyanka Bhatt, and Akanksha Meghlan for lighting up the lab atmosphere with their unbounded energy and enthusiasm;
Surabhi Akotiya, Vibhuti Shali, Ganesh Ghalme, and Shaifali Gupta for creating a fun-loving environment;
Debmalya Mandal for being the center of discussions on a variety of topics;
Swaprava Nath for being a practical guide for a variety of issues;
Pankaj Dayama and Praphul Chandra for attending lab talks to give feedback despite their busy job schedules; 
Amleshwar Kumar for adding a new flavor to the lab by being one of the most knowledgeable persons despite being in his undergraduate stage;
and
Sujit Gujar and Chetan Yadati for adding a lot of value to the lab as research associates.
I also thank other lab members, including but not limited to,
Divya Padmanabhan,
Palash Dey,
Arpita Biswas, 
Arupratan Ray,
Aritra Chatterjee,
Sneha Mondal,
Chidananda Sridhar, 
Santosh Srinivas,
Dilpreet Kaur,
Thirumulanathan Dhayaparan,
Manohar Maddineni,
Moon Chetry, and
Shourya Roy.

I am utmost thankful to my pre-IISc friends, namely, Abhishek Dhumane, Mihir Mehta, Amol Walunj, Tejas Pradhan, Siddharth Wagh, Nilesh Kulkarni, Anupam Mitra, and Praneet Mhatre for taking me outside my Ph.D. world every once in a while.
I thank Priyanka Agrawal for being my first friend in IISc;
Chandramouli Kamanchi, Sindhu Padakandla, and Ranganath Biligere for being a great company every once in a while; and Shruthi Puranik for being a great animal-loving friend with whom I could freely talk about animal welfare.
I also thank Anirudh Santhiar, Narayan Hegde, Joy Monteiro, Aravind Acharya, Prachi Goyal, Pallavi Maiya, Thejas C. R., Shalini Kaleeswaran, Pranavadatta Devaki, Lakshmi Sundaresh, Ninad Rajgopal, K. Vasanta Lakshmi, Roshan Dathathri, Chandan G., Prasanna Pandit, Malavika Samak, Vinayaka Bandishti, Suthambhara N., and others, with whom I shared dinner table at times while listening to their out-of-the-world stories. For infrequent yet fun interactions, I thank Chandrashekar Lakshminarayanan, Brijendra Kumar, Akhilesh Prabhu, Vishesh Garg, Goutham Tholpadi, Divyanshu Joshi, Ankur Miglani, Pramod Mane, Surabhi Punjabi, Disha Makhija, Lawqueen Kanesh, Varun, Nikhil Jain, Saurabh Prasad, Ankur Agrawal, the family of Karmveer, Bharti, and Manan Sharma, among others.

I thank IISc for providing great facilities and environment for high-quality research. I would also like to thank its administration, security, mess and canteen workers, cleaners, among others, for making my stay in IISc a very pleasant one.
I thank CSA office staff and security for enabling smooth functioning of our department. In particular, I thank Sudesh Baidya for creating a friendly atmosphere whenever I entered and exited the department while doing his job with utmost sincerity; and also Lalitha madam, Suguna madam, Meenakshi madam, and Kushal madam for being a highly efficient and friendly office staff.
I would also like to thank IBM Research Labs for awarding me Ph.D. fellowship for the years 2013-2015.

Ph.D. being a fairly long journey during one's learning phase, one is bound to learn some life lessons on the way. I specifically thank
Vibhuti Shali for teaching me that one should stick to what one feels right without worrying about what others think;
Monika Dhok for making me understand that one should either solve or forget a problem and not complicate it;
Shweta Jain for constantly guiding me on how to and how not to behave in social situations;
Prabuchandran K. J. for making me understand that one should not have expectations from others beyond a certain limit;
Satyanath Bhat and Divya Padmanabhan for teaching me the importance of having lots of contacts since one cannot wholly rely on a selected few;
and Sireesha and Shravya Yakkali for teaching me the importance of being positive and successful in life.

I would like to appreciate the Indian cricket team for time and again showing me and everyone else, that nothing is impossible and that we can always spring back from any situation. Specifically, I am grateful to Sir Ravindra Jadeja, Mahendra Singh Dhoni, Virat Kohli, and Suresh Raina for showing the importance of working with dedication, keeping calm in the worst of situations, focusing on what is at hand, and being a good team player.

Finally and most importantly, I express my gratitude to my family, especially my parents, Sudarshana and Vilas Dhamal, and my sister, Swati Dhamal, for unconditional love and constant support throughout my life, and making it possible for me to enjoy my Ph.D. journey in the best possible way.

\prefacesection{Abstract}
Social networks are an inseparable part of human lives, and play a major role in a wide range of activities in our day-to-day as well as long-term lives. 
The rapid growth of online social networks has enabled people to reach each other, while bridging the gaps of geographical locations, age groups, socioeconomic classes, etc. 
It is natural for government agencies, political parties, product companies, etc. to harness social networks for planning the well-being of society, launching effective campaigns, making profits for themselves, etc.
Social networks can be effectively used to spread a certain information so as to increase the sales of a product using word-of-mouth marketing, to create awareness about something, to influence people about a viewpoint, etc.
Social networks can also be used to know the viewpoints of a large population by knowing the viewpoints of only a few selected people; this could help in predicting outcomes of elections, obtaining suggestions for improving a product, etc.
The study on social network formation helps us know how one forms social and political contacts, how terrorist networks are formed, and how one's position in a social network makes one an influential person or enables one to achieve a particular level of socioeconomic status. 

This doctoral work focuses on three main problems related to social networks:
\begin{itemize}
\item 
{\em Orchestrating Network Formation}: 
We consider the problem of orchestrating formation of a social network having a certain given topology that may be desirable for the intended usecases. Assuming the social network nodes to be strategic in forming relationships, we derive conditions under which a given topology can be uniquely obtained. We also study the efficiency and robustness of the derived conditions.

\item 
{\em Multi-phase Influence Maximization}: 
We propose that information diffusion be carried out in multiple phases rather than in a single instalment. With the objective of achieving better diffusion, we discover optimal ways of splitting the available budget among the phases, determining the time delay between consecutive phases, and also finding the individuals to be targeted for initiating the diffusion process.

\item 
{\em Scalable Preference Aggregation}: 
It is extremely useful to determine a small number of  representatives of a social network such that the individual preferences of these nodes, when aggregated, reflect the aggregate preference of the entire network. Using real-world data collected from Facebook with human subjects, we discover a model that faithfully captures the spread of preferences in a social network. We hence propose fast and reliable ways of computing a truly representative aggregate preference of the entire network.
\end{itemize}
%\begin{itemize}
%\item How do we form a network with a certain desired topology, given that people try to maximize the benefits that they obtain from the network?
%
%\item Suppose certain information is to be spread in multiple phases, how do we split the available budget among the campaigns, what should be the time delay between consecutive campaigns, and which people to target to initiate the spreading process?
%
%\item How do we determine the representatives of a population so that their viewpoints resemble the viewpoint of the entire population?
%\end{itemize}

\noindent
In particular, we develop models and methods for solving the above problems, which primarily deal with formation and analysis of social networks. 

\prefacesection{Publications based on this Thesis}

%\noindent\textbf{Conference Publications}
\begin{itemize}
\item
\textbf{Forming networks of strategic agents with desired topologies.} \\
Swapnil Dhamal and Y. Narahari. \\
%In 
%Goldberg, P., editor, 
{\em Internet and Network Economics (WINE)},
  Lecture Notes in Computer Science, pages 504--511. Springer Berlin
  Heidelberg, 2012.
  
  \item
  \textbf{Scalable preference aggregation in social networks.} \\
  Swapnil Dhamal and Y. Narahari. \\
%In 
{\em First AAAI Conference on Human Computation and
  Crowdsourcing (HCOMP)}, pages 42--50. AAAI, 2013.

\item
\textbf{Multi-phase information diffusion in social networks.} \\
Swapnil Dhamal, Prabuchandran~K.~J., and Y. Narahari. \\
%In 
{\em 14th International Conference on Autonomous Agents \& Multiagent Systems (AAMAS)},
pages 1787--1788. IFAAMAS, 2015.
\end{itemize}

%\noindent\textbf{Journal Publications}
\begin{itemize}

\item
\textbf{Formation of stable strategic networks with desired topologies.} \\
  Swapnil Dhamal and Y. Narahari. \\
  {\em Studies in Microeconomics}, 3(2): pages 158--213. Sage Publications, 2015.
\end{itemize}

%\noindent\textbf{Papers Submitted to Journals}
\begin{itemize}
\item
\textbf{Information diffusion in social networks in two phases.} \\
Swapnil Dhamal, Prabuchandran~K.~J., and Y. Narahari. \\
%In 
{\em Transactions on Network Science and Engineering}, 3(4): pages 197--210. IEEE, 2016.

  \item
  \textbf{Modeling spread of preferences in social
                  networks for sampling-based preference
                  aggregation.} \\
  Swapnil Dhamal, Rohith D. Vallam, and Y. Narahari. \\
%In 
{\em Journal Submission}, 2016.
\end{itemize}

\blankpagewithnumber
\tableofcontents
%\blankpagewithnumber
\prefacesection{Acronyms and Notation}
\label{acronyms and notation}

%
%\noindent
%{\bf Chapter~\ref{chap:prelims} \\ \nameref{chap:prelims}}\\
%\vspace{-0.3cm}
%\hrule
%\begin{tabbing}
%
%xxxxxxxxxxxx \= \kill
%
%\end{tabbing}

\noindent
{\bf Chapter~\ref{chap:nfsc} \\ \nameref{chap:nfsc}}\\
\vspace{-0.3cm}
\hrule
\begin{tabbing}

xxxxxxxxxxxx \= \kill
$g$ \> A social graph or network \\
$N$ \> Set of nodes present in a given social network \\
$n$ \> Number of nodes present in a given social network \\
$e$ \> Typical edge in a social network \\
$u_j(g)$ \> The net utility of node $j$ in social network $g$ \\
$l(j,w)$ \> The shortest path distance between nodes $j$ and $w$ \\
$b_i$ \> The benefit obtained from a node at distance $i$ in absence of rents \\
$c$ \> The cost for maintaining link with an immediate neighbor \\
$c_0$ \> The network entry factor \\
$d_j$ \> The degree of node $j$ \\
$E(x,y)$ \> Set of essential nodes connecting nodes $x$ and $y$ \\
$e(x,y)$ \> Number of essential nodes connecting nodes $x$ and $y$ \\
$\gamma$ \> The fraction of indirect benefit paid to corresponding set of essential nodes \\  
$\text{T}(j)$ \> The node to which node $j$ connects to enter the network \\
$\textbf{I}_{\{j=\text{NE}\}}$ \> 1 when $j$ is a newly entering node about to create its first  link,  else 0 \\ 
GED \> Graph edit distance \\
%$\mu$ \>  \\
%$\xi$ \>  \\
$\Delta$ \> highest degree in the graph \\

\end{tabbing}

\noindent
{\bf Chapter~\ref{chap:mpid} \\ \nameref{chap:mpid}}\\
\vspace{-0.3cm}
\hrule
\begin{tabbing}

xxxxxxxxxxxx  \= \kill
$G$ \> A social graph or network \\
$N$ \> Set of nodes in a given social network \\
$n$ \> Number of nodes in a given social network \\
$E$ \> Set of weighted and directed edges in a given social network \\
$m$ \> Number of weighted and directed edges in a given social network \\
$(u,v)$ \> A typical directed edge \\
$\mathcal{P}$ \> Set of probabilities associated with the edges \\
$p_{uv}$ \> The weight of edge $(u,v)$ \\
IC \> Independent Cascade \\
LT \> Linear Threshold \\
WC \> Weighted Cascade \\
TV \> Trivalency \\
$k$ \> The total budget \\
$\mathbb{A}$ \> A typical influence maximizing algorithm \\
$k_1$ \> Number of seed nodes to be selected for the first phase \\
$k_2$ \> Number of seed nodes to be selected for the second phase \\
$d$ \> Delay between the first phase and the second phase \\
$D$ \> The length of the longest path in a graph \\
$X$ \> A typical live graph \\
$p(X)$ \> The probability of occurrence of a live graph $X$ \\
$S,T$ \> Typical sets of nodes \\
$\sigma^X(S)$ \> The number of nodes reachable from set $S$ in live graph $X$ \\
$\sigma(S)$ \> The expected number of influenced nodes in single phase diffusion with $S$ as \\ \> the seed set \\
$Y$ \> Partial observation of diffusion at beginning of the second phase \\
$\mathcal{A}^Y$ \> Set of already activated nodes as per observation $Y$ \\
$\mathcal{R}^Y$ \> Set of recently activated nodes as per observation $Y$ \\
$f(\cdot)$ \> The two-phase objective function \\
$\mathcal{MC}$ \> Monte-Carlo simulations \\
$\mathcal{M}$ \> Number of Monte-Carlo simulations \\
$\Delta$ \> The maximum degree in a graph \\
GDD \> Generalized Degree Discount \\
SD \> Single Discount \\
WD \> Weighted Discount \\
PMIA \> Prefix excluding Maximum Influence Arborescence \\
FACE \> Fully Adaptive Cross Entropy \\
SPIN \> ShaPley value based discovery of Influential Nodes \\
SPIC \> ShaPley value based discovery of influential nodes in IC model \\
$t$ \> A typical time step \\
$\sigma_{(t)}(S)$ \> The expected number of newly activated nodes at time step $t$ \\
$t_j^{X,S}$ \> The minimum number of time steps in which node $j$ can be reached from set $S$ \\ \> in live graph $X$ \\
$\Gamma(t)$ \> The value obtained for influencing a node a time step $t$ \\
$\delta$ \> Decay factor \\
$\nu(\cdot)$ \> The two-phase objective function with temporal constraints \\
$b_{u,v}$ \> The degree of influence that node $v$ has on node $u$ (specific to LT model) \\
$\chi_u$ \> The influence threshold of node $u$ (specific to LT model) \\

\end{tabbing}

\noindent
{\bf Chapter~\ref{chap:pasn} \\ \nameref{chap:pasn} \\}
\vspace{-0.3cm}
\hrule
\begin{tabbing}

xxxxxxxxxxxx  \= \kill
$G$ \> A social graph or network \\
$N$ \> Set of nodes in a given social network \\
$n$ \> Number of nodes in a given social network \\
$E$ \> Set of weighted and undirected edges in a given social network \\
$m$ \> Number of weighted and undirected edges in a given social network \\
$r$ \> Number of alternatives \\
$P$ \> Preference profile of set $N$ \\
$f$ \> Preference aggregation rule \\
$M$ \> Set of representatives who report their preferences \\
$k$ \> $|M|$, cardinality of $M$ \\
$\mathcal{D}$ \> Discretized Truncated Gaussian Distribution \\
$\mu_{ij}$ \> $\mu$ parameter of the Gaussian distribution from which distribution $\mathcal{D}$ between \\
\> preferences of nodes $i$ and $j$ is derived \\
$\sigma_{ij}$ \> $\sigma$ parameter of the Gaussian distribution from which distribution $\mathcal{D}$ between \\
\> preferences of nodes $i$ and $j$ is derived \\
$d(i,j)$ \> Expected distance between preferences of nodes $i$ and $j$ \\
$c(i,j)$ \> $1-d(i,j)$ \\
$\tilde{d}(x,y)$ \> Distance between preferences $x$ and $y$ \\
$\mathbb{T}$ \> Number of generated topics \\
$A_j$ \> Set of assigned neighbors of node $j$ \\
$deg(i)$ \> Degree of node $i$ \\
RPM \> Random Preferences Model \\
$\Phi(S,i)$ \> Representative of node $i$ in set $S$ \\
$R$ \> A generic preference profile of set $M$ \\
$Q$ \> Profile containing unweighted preferences of $M$ \\ 
$Q'$ \> Profile containing weighted preferences of $M$ \\ 
$\Delta$ \> Error operator between aggregate preferences \\ 
TU \> Transferable Utility \\
$\nu(\cdot)$ \> Characteristic function of a TU game \\

\end{tabbing}

\listoffigures
\listoftables
\blankpagewithnumber
% \printnomenclature  %% Print the nomenclature
% \addcontentsline{toc}{chapter}{Nomenclature}

\mainmatter % book mode only
\setcounter{page}{1}

\begin{comment}
%%%%%%%%%%%%%%%%%%%%%%%%%%%%%%%%%%%%%%%%%%%%%%%%%%%%%%%%%%%%%%%%%%%%%%%%%%%
% The frontmatter environment for everything that comes with roman numbering
\begin{frontmatter}
%%%%%%%%%%%%%%%%%%%%%%%%%%%%%%%%%%%%%%%%%%%%%%%%%%%%%%%%%%%%%%%%%%%%%%%%%%%
%
% Everything is optional in the front matter.
%
%%%%%%%%%%%%%%%%%%%%%%%%%%%%%%%%%%%%%%%%%%%%%%%%%%%%%%%%%%%%%%%%%%%%%%
%                         THE TITLEPAGE                              %
%%%%%%%%%%%%%%%%%%%%%%%%%%%%%%%%%%%%%%%%%%%%%%%%%%%%%%%%%%%%%%%%%%%%%%
%\figurespagetrue
%\tablespagetrue

%%%%% Copyright
\newpage
\vspace*{3in}
\begin{center}
{\large\bf \copyright \hspace{11pt} Swapnil Dhamal\\ April 2016.\\All rights reserved}
\end{center}

\newpage
%\input{thanks.tex}

%%%%%%%%%%%%%%%%%%%%%%%%%%%%%%%%%%%%%%%%%%%%%%%%%%%%%%%%%%%%%%%%%%%%%%
%                         ACKNOWLEDGEMENTS                           %
%%%%%%%%%%%%%%%%%%%%%%%%%%%%%%%%%%%%%%%%%%%%%%%%%%%%%%%%%%%%%%%%%%%%%%
\prefacesection{Acknowledgments}
\input{ack.tex}
%%%%%%%%%%%%%%%%%%%%%%%%%%%%%%%%%%%%%%%%%%%%%%%%%%%%%%%%%%%%%%%%%%%%%%
%                              ABSTRACT                              %
%%%%%%%%%%%%%%%%%%%%%%%%%%%%%%%%%%%%%%%%%%%%%%%%%%%%%%%%%%%%%%%%%%%%%%
%\begin{abstract}
\prefacesection{Abstract}
\input{abstract.tex}
% You can use \prefacesection{abstract}  as well if you like
%\end{abstract}
%%%%%%%%%%%%%%%%%%%%%%%%%%%%%%%%%%%%%%%%%%%%%%%%%%%%%%%%%%%%%%%%%%%%%
% papers list
%\norule
\prefacesection{Research Papers Based on the Thesis Work}
%\prefacesection{}
%\vspace*{-2cm}
\thispagestyle{empty}
\input{papers.tex}

%%%%%%%%%%%%%%%%%%%%%%%%%%%%%%%%%%%%%%%%%%%%%%%%%%%%%%%%%%%%%%%%%%%%%%
%                              NOTATION                              %
%%%%%%%%%%%%%%%%%%%%%%%%%%%%%%%%%%%%%%%%%%%%%%%%%%%%%%%%%%%%%%%%%%%%%%
\prefacesection{Acronyms and Notation}
\input{acronyms.tex}
%%%%%%%%%%%%%%%%%%%%%%%%%%%%%%%%%%%%%%%%%%%%%%%%%%%%%%%%%%%%%%%%%%%%%%

%%%%%%%%%%%%%%%%%%%%%%%%%%%%%%%%%%%%%%%%%%%%%%%%%%%%%%%%%%%%%%%%%%%%%%
%                              CONTENTS                              %
%%%%%%%%%%%%%%%%%%%%%%%%%%%%%%%%%%%%%%%%%%%%%%%%%%%%%%%%%%%%%%%%%%%%%%
%\makecontents
%%%%%%%%%%%%%%%%%%%%%%%%%%%%%%%%%%%%%%%%%%%%%%%%%%%%%%%%%%%%%%%%%%%%%%
\end{frontmatter}
%%%%%%%%%%%%%%%%%%%%%%%%%%%%%%%%%%%%%%%%%%%%%%%%%%%%%%%%%%%%%%%%%%%%%%
\end{comment}

%%%%%%%%%%%%%%%%%%%%%%%%%%%%%%%%%%%%%%%%%%%%%%%%%%%%%%%%%%%%%%%%%%%%%
% After this command page numbering will be changed to arabic
%  Now the style becomes almost like the report style,
%  except for double spacing, page layout and headings.)
%%%%%%%%%%%%%%%%%%%%%%%%%%%%%%%%%%%%%%%%%%%%%%%%%%%%%%%%%%%%%%%%%%%%%

%Author: Swapnil Dhamal
%Indian Institute of Science, Bangalore

\chapter[Introduction]{Introduction}
\label{chap:intro}

\begin{quote}
In this chapter, we provide the background and motivation for this thesis work. We bring out the context for this work  and describe the main contributions of the thesis. Finally we provide an outline and organization of the rest of the thesis.
\end{quote}

%\begin{center}
%\vspace{-5mm}
%\includegraphics[scale=.4]{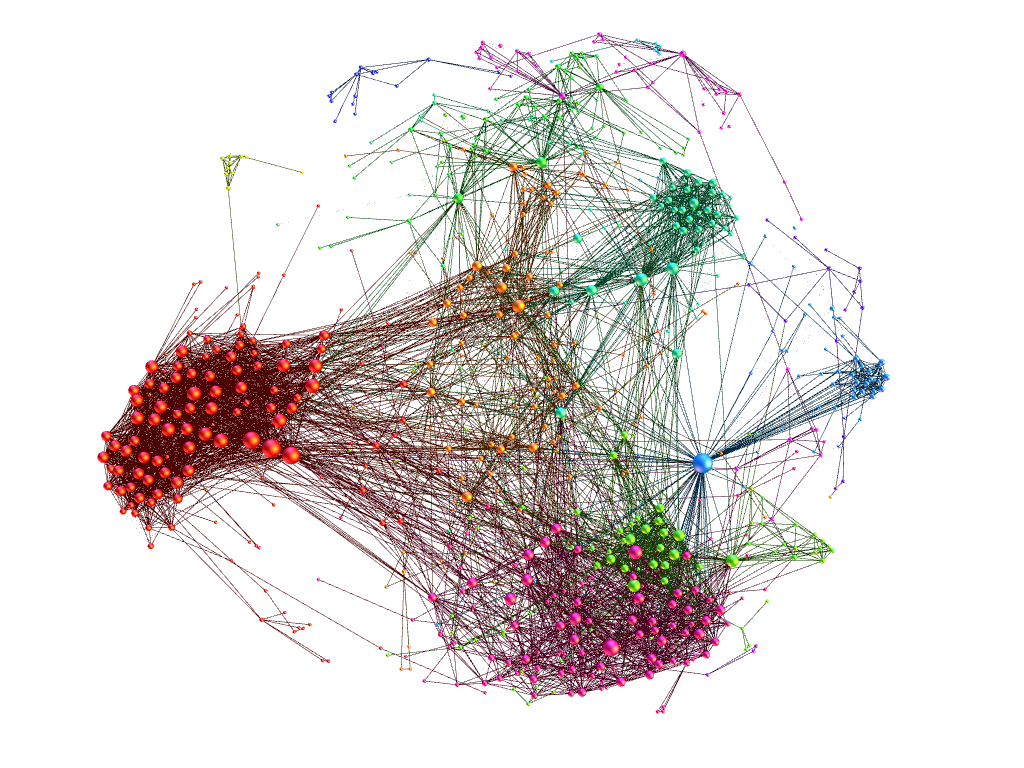}
%\vspace{-5mm}
%\end{center}

\noindent
We encounter networks everywhere in our lives, some we can see, some we can perceive, while some remain hidden or even unknown to us. They appear in a wide variety of domains, both natural and artificial,
ranging
from
biological networks which capture interactions among species in an ecosystem,
to
organizational networks which capture business and other relations among organizations,
from
transportation networks which capture the feasibilities of transportation between any two physical locations,
to
telecommunications networks which capture the feasibilities of communication between any two agents,
from
biological neural networks which capture a series of interconnections among neurons in a living being,
to 
artificial neural networks which are used to estimate functions that can depend on a large number of inputs,
from 
computer networks which capture the exchange of data among computing devices,
to 
social networks which capture connections and interactions among us.

%Yes, even we are connected, otherwise you wouldn't be reading this thesis right now!
%This thesis has reached you through some or the other network, be it Internet, postal network, through intermediate people who connect us indirectly, or something which I did not even think of while writing this thesis!
%%
%After knowing or revising the importance and significance of networks, let us now zoom in to the main theme of this thesis - social networks.

Social networks have been an integral part of human lives for thousands of years, hence the term `social animal'. They are used by people for a variety of purposes, ranging from basic ones such as help in times of need, to extravagant one such as bonus points in online games. The development, the socioeconomic status, and the quality of life in general, of an individual heavily depends on the individual's social circle. So people think and act rationally while deciding who to have in their social circles and who not to have. 
An individual's social circle also determines his or her views and habits owing to the regular interactions involved. 
In addition, social networks act as a very effective means of spreading information since people readily listen to their friends and trust the information provided by them; this trust is either limited or absent when it comes to other means such as mass media.

\begin{figure}[t]
\centering
\includegraphics[scale=.4]{}
\vspace{-5mm}
\caption{An example of a social network constructed from Facebook data}
\label{fig:intro_fig}
\end{figure}

In recent times, online social networks (OSNs) and social media have become very common and popular. Online social networks can be viewed as a digital version of real-world social networks. With over 1.2 billion users on Facebook as of 2016, the resourcefulness and effects of OSNs are something which cannot be ignored.
Figure~\ref{fig:intro_fig} shows a network that was created from Facebook data that was collected by us using a Facebook app developed as part of this thesis work.
The impact of OSNs on day-to-day routines of people is also growing day-by-day. Most people access their OSN accounts on a daily basis and it won't be an exaggeration to say that a large portion of them would become restless if they don't get to access them for a long time. Accessing their OSN accounts is one of the first activities after getting access to the Internet. One cannot definitively say if OSNs are a boon or a curse to human existence; this coin also has two sides like any other invention. But it is difficult to deny that one's presence on OSNs has started to become more of a necessity than a leisurely activity. 

\begin{center}
\begin{tabular}{p{0.9\textwidth}}
\textit{``We don't have a choice on whether we do social media, the question is how well we do it.''}
\\
\textit{ \hfill - Erik Qualman, author of Socialnomics}
\end{tabular}
\end{center}

\begin{center}
\begin{tabular}{p{0.9\textwidth}}
\textit{``Social media is like a snowball rolling down the hill. It's picking up speed. Five years from now, it's going to be the standard.''}
\\
\textit{ \hfill - Jeff Antaya (five years ago as of 2016), chief marketing officer at Plante Moran}
\end{tabular}
\end{center}
%
%\begin{center}
%\begin{tabular}{p{0.9\textwidth}}
%\textit{``Social networks represent the digital reflection of what humans do: we connect and share.''}
%\\
%\textit{ \hfill - Jeremiah Owyang, partner at Altimeter Group}
%\end{tabular}
%\end{center}
%
%
%
There have been several endorsements of OSNs since 
they help reach people easier and quicker especially in times of emergencies,
they enable people to come together and fight for a cause,
they allow people to express their opinions immediately to a wide audience, 
etc.
 \begin{center}
 \begin{tabular}{p{0.9\textwidth}}
 \textit{``Right now, with social networks and other tools on the Internet, all of these 500 million people (at the time) have a way to say what they're thinking and have their voice be heard.''}
 \\
 \textit{ \hfill - Mark Zuckerberg, co-founder of Facebook}
 \end{tabular}
 \end{center}
 \begin{center}
 \begin{tabular}{p{0.9\textwidth}}
 \textit{``The power of social media is that it forces necessary change.''}
 \\
 \textit{ \hfill - Erik Qualman, author of Socialnomics}
 \end{tabular}
 \end{center}
 On the other hand, there have been several criticisms since 
 they allow people to spread information immediately without checking its validity or usefulness, 
they have increased the exposure of personal information,
 they have lured people into spending their social time and effort for a large number of Internet friendships rather than a few good real-world friendships,
 etc.
  \begin{center}
  \begin{tabular}{p{0.9\textwidth}}
  \textit{``Twitter provides us with a wonderful platform to discuss and confront societal problems. We trend Justin Bieber instead.''}
  \\
  \textit{ \hfill - Lauren Leto, co-founder of TFLN and Bnter}
  \end{tabular}
  \end{center}
In the next section, we give a more elaborate introduction to social networks and their important properties.
 
\section{Characteristics of Social Networks}
\label{sec:basicprops}

Social networks exhibit a variety of properties, which have been consistently validated using empirical observations over a large number and variety of experiments since 1960s, across local and global networks, online and offline networks, friendship and collaboration networks, etc. 
To add further interest to these properties, these are not only followed by social networks, but several other networks such as network among Web pages, citation networks, etc.
We now describe some of these properties in brief.

\subsection{Birds of a Feather Flock Together}
\label{sec:homophily}

There is a natural tendency of individuals to form friendships with others who are similar to them.
% is termed {\em selection}. 
The similarities could vary from belonging to the same race or geographical location, to sharing similar educational background or behavior.
On the other hand, people also change their mutable characteristics to align themselves more closely with the characteristics of their friends.
%, is termed {\em social influence}. 
Examples of such mutable characteristics are behavior, health, attitude, etc.
In addition to these factors, there is a high likelihood of dissolution of friendships between dissimilar individuals.
So in a typical social network, there is a bias in friendships between individuals with similar characteristics. 
This phenomenon is termed {\em homophily}. 
%
%Hence selection and social influence can be viewed as complements of each other. Social influence can be viewed as the reverse of selection: with selection, the individual characteristics drive the formation of links, while with social influence, the existing links in the network serve to shape people's (mutable) characteristics.

\subsection{Friend of a Friend Becomes a Friend}
\label{sec:triadic}

It is typically observed that, ``If two people in a social network have a friend in common, then there is an increased likelihood that they will become friends themselves at some point in the future''.
This phenomenon is termed {\em triadic closure}.
One of the primary reasons for this effect is that, owing to having a common friend, the two people have a higher chance of meeting each other as well as a good basis for trusting each other. The common friend may also find it beneficial to bring the two friends together so as to avoid resources such as time and energy to be separately expended in the two friendships. If the closure is not formed, it is likely that one of the existing friendships would weaken or break, owing to resource sharing or stress between the two friendships.

\subsection{Weak Ties are Strong}
\label{sec:weakties}

It has been deduced by interviewing people, with great regularity that, ``their best job leads came from acquaintances rather than close friends''.
The reason is that people in a tightly-knit group (those connected to each other with strong ties) tend to have access to similar sources of information. On the other hand, weak ties or acquaintance connections enable people to reach into a different part of the network, and offer them access to useful information they otherwise may not have access to.
Hence in scenarios such as finding good job opportunities, {\em the strength of weak ties} comes into play.

\subsection{Managers Hold Powerful Positions}
\label{sec:strholes}

Empirical studies have correlated an individual's success within an organization to the individual's access to weak ties in the organizational network. Individuals with good access to weak ties typically hold managerial positions in organizations, allowing them to play the role of what are termed as {\em structural holes} in the network. They act as connections between two groups that do not otherwise interact closely.
The advantages of being in such a role are that, they have early access to information originating in non-interacting parts of the network, they have opportunities to develop novel ideas by combining these disparate sources of information in new ways, and they can regulate and control the information flow from one group to another, thus making their positions in the organization, powerful.

\subsection{Rich Get Richer}
\label{sec:degdist}

The {\em rich-get-richer} phenomenon states that an individual's value grows at a rate proportional to its current value. 
The value could be in terms of popularity, number of friends, socioeconomic status, etc. 
Social network structures also follow this rule where degree of an individual grows at a rate proportional to its current degree, where degree is defined as the number of direct connections that an individual has.
The reasoning behind this phenomenon is the natural tendency of people to connect to individuals who already have many connections, since it gives some indication of trust, usefulness, etc.
Furthermore, the degree distribution is observed to have a {\em long tail}, that is, though most individuals have low degrees, there exist individuals having extremely high degrees.

\subsection{It's a Small World}
\label{sec:smallworld}

Existence of short paths has been empirically observed on a consistent basis in global social networks. This phenomenon is termed as the {\em small-world} phenomenon since it takes a small number of friendship hops to connect almost any two people in this world. It is also popularly referred to as the {\em six degrees of separation} owing to experimental observations that one can reach anyone in this world within a certain number of hops, the median being six.
In fact, not only do short paths exist, but they are in abundance and people are observed to be effective at collectively finding these short paths.

\subsection{There is Core and There is Periphery}
\label{sec:coreper}

Large social networks tend to have a {\em core-periphery structure}, where high-status people are linked in a densely-connected core, while the low-status people are atomized around the periphery of the network. 
A primary reason is that high-status people have the resources to travel widely and establish links in the network that span geographic and social boundaries, while low-status people tend to form links that are more local. 
So two low-status people who are geographically or socially far apart, are typically connected through some high-status people in the core.
This property suggests that network structure is intertwined with status and the varied positions that different groups occupy in society.

\subsection{And It Cascades}
\label{sec:cascading}

Cascading is common in social networks wherein, some effect or information tends to propagate from one part of the network to another. Whether or not a successful cascade takes place, depends on certain conditions which are observed to follow a threshold.
It is often a result of individuals imitating others, even if it is inconsistent with their own information. For instance, one would prefer using a social networking site if most of his or her friends use it, despite knowing about an alternative site that offers better features. 
Financial crisis may result owing to the cascade of financial failures from one individual or organization to another. 
Information diffusion in another form of cascade (its effect can be judged based on the truthfulness and intensity of the information being propagated).
Viral marketing is yet another form of cascade used by companies, wherein individuals recommend their friends to buy a product, who in turn recommend their friends to buy the product, and so on.
\\

For detailed descriptions of the properties discussed above, the reader is referred to \cite{networkscrowdsmarkets,jacksonbook}.

%It has been observed that in making adoption or purchasing decisions, nodes rely not only on their own preferences, but also on their friends' owing to social correlation primarily caused by homophily \cite{chua2012generative}. This, in effect, can be used to suggest products to a node based on its and its friends' adoption or purchasing behaviors.
%Braun and Bonfrer \cite{braun2011scalable} develop a probabilistic framework for modeling agent interactions and find that the framework draws insights into latent similarities of customers; they also discuss how marketers can apply these insights to segmentation and targeting activities.
%There have been studies considering both geographical and social influence for point of interest recommendations in location-based social networks \cite{cheng2012fused}.

\section{Specific Topics in Social Networks Investigated in this Thesis}

A typical social network goes through several events, the most important ones being the changes in its structure, the information diffusing over it, and the development of preferences of its individuals. These events are correlated, and in fact, one event often leads to the other. For instance, the structure of the network plays a leading role in determining how the information spreads among the individuals and how an individual's connections are likely to change his or her preferences over time. 
The information spreading over the network often changes the preferences of individuals and may determine how new links form and disappear over time leading to change in its structure.
Similarly, the preferences of individuals may lead to formation and deletion of connections and also change their influencing strengths thus altering the way information diffuses. 
This thesis investigates novel problems in the aforementioned three broad topics, which we now introduce in brief.

\subsection{Network Formation}

It is a part of human psychology to naturally want to be a part of the society, by forming connections with other people. The number and strengths of connections may vary to a great extent across people depending on a variety of criteria. Apart from the basic human psychology, connecting with other people goes a long way in helping an individual to develop intellectually, financially, emotionally, etc. In short, networking plays a key role for an individual to have a good quality of life. 
   \begin{center}
   \begin{tabular}{p{0.9\textwidth}}
   \textit{``If you want to go fast, go alone. If you want to go far, go together.''}
   \\
   \textit{ \hfill - African proverb}
   \end{tabular}
   \end{center}
   Networking plays an even more vital role in the current age, where socioeconomic status is of utmost importance. It is not only the number of strengths of the connections that matter, but also with whom the connections are made. Given this fact, knowingly or unknowingly, people are constantly in search of making suitable connections and also involved in the process of weakening or breaking connections which they deem unsuitable. 
   \begin{center}
   \begin{tabular}{p{0.9\textwidth}}
   \textit{``The richest people in the world look for and build networks, everyone else looks for work.''}
   \\
   \textit{ \hfill - Robert Kiyosaki, author of the Rich Dad Poor Dad series of books}
   \end{tabular}
   \end{center}
   Figure~\ref{fig:formation_pic} presents an illustration of links being formed and deleted in a social network.
          A more technical introduction to this topic is provided in Chapter~\ref{chap:nfsc}.
      \begin{figure}[t]
      \centering
%  \textbf{\color{red}{Figure yet to be put.}}
  \includegraphics[scale=.23]{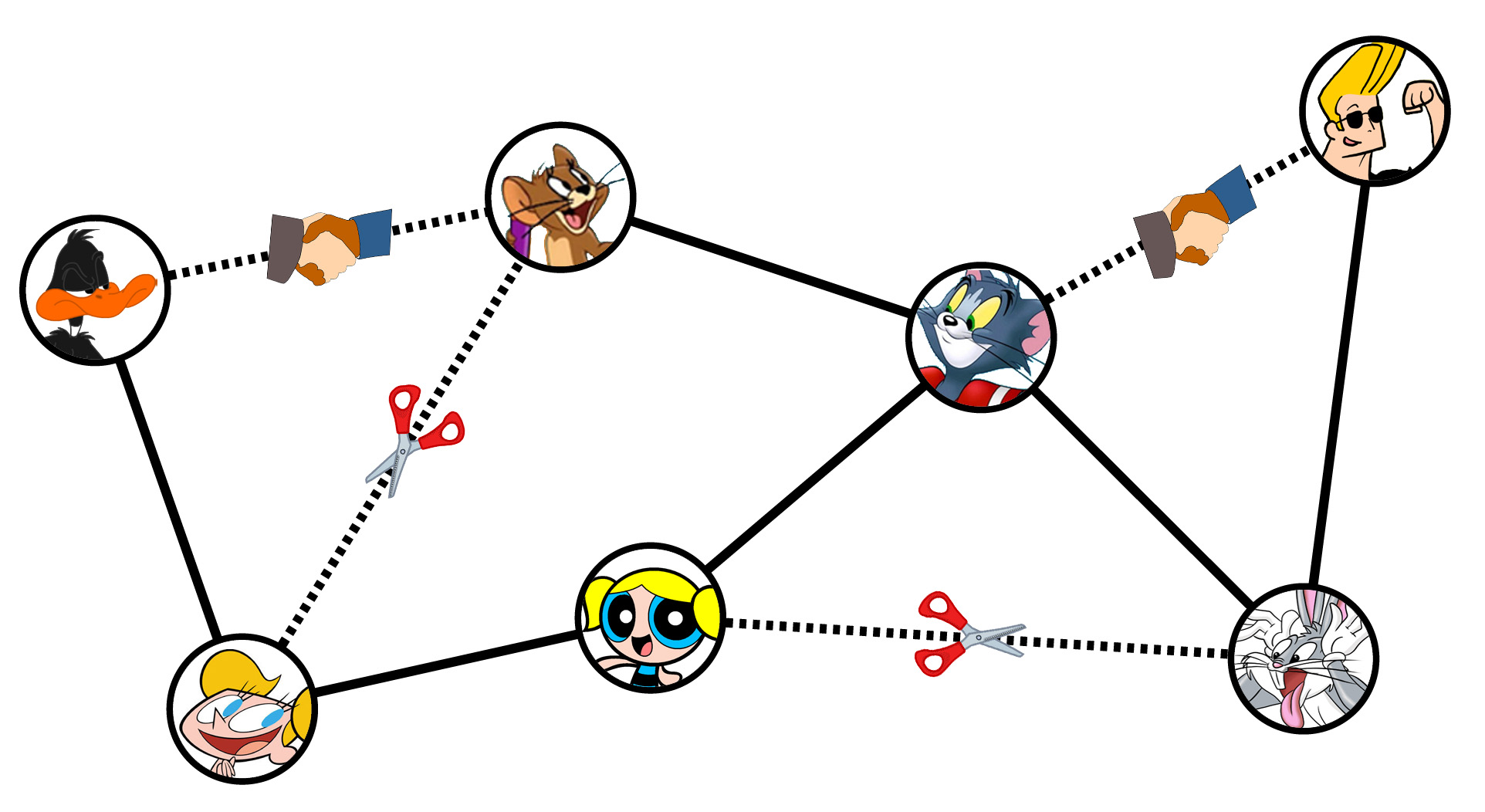}
   \caption{An illustration of links being formed and deleted in a social network
   %\footnote{The images used in this figure are labeled for non-commercial reuse. dexter: cocos111.deviantart.com, bubbles: jerimin19.deviantart.com, bravo: ibunniee.deviantart.com, tom, jerry: www.youtube.com, deedee: www.flickr.com, bugs: eiledon.deviantart.com}
   }
      \label{fig:formation_pic}
      \end{figure}

\subsection{Information Diffusion}

People who are linked in a social network discuss several things and share various pieces of information, be it serious or casual, be it voluntarily or involuntarily, be it with an intention of diffusing it to a wider audience or keeping it private within a friend circle. In short, social networks play an important role in information diffusion and sharing. 
If one's objective is to diffuse certain information (or propagate an influence) so that it reaches a wider audience (or target) for whatever reasons, one cannot ignore the possibility of using social networks. Moreover, in the present age of Internet and the ever-increasing popularity of various social networking and social media sites, social networks have become an effective and efficient media for information diffusion. 
%Figure~\ref{fig:diffusion_pic} illustrates information diffusion in social networks through various means such as Facebook, Twitter, Google+, offline interactions, etc.

Given this property of a social network, it is natural for companies to exploit it to maximize the sales of their products. A primary method used by companies is based on viral marketing where the existing customers market the product among their friends. Campaigning is another example where a particular idea or a series of ideas is presented to some audience and hence, the idea is spread through the audience. 
   \begin{center}
   \begin{tabular}{p{0.9\textwidth}}
   \textit{``The goal of social media is to turn customers into a volunteer marketing army.''}
   \\
   \textit{ \hfill - Jae Baer, author of Youtility}
   \end{tabular}
   \end{center}
   \begin{figure}[t]
   \centering
%  \textbf{\color{red}{Figure yet to be put.}}
  \includegraphics[scale=.22]{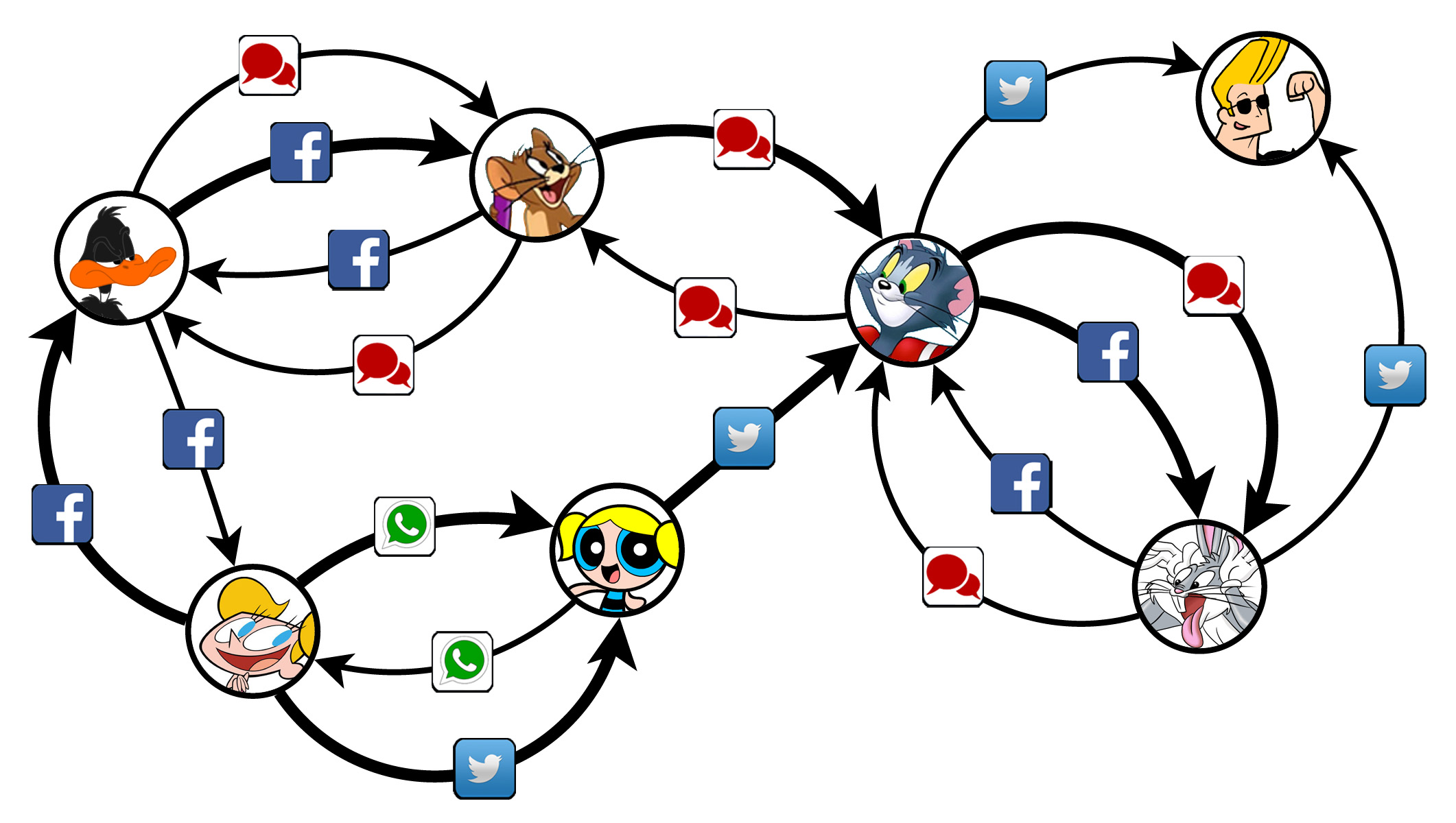}
   \caption{An illustration of information diffusing through social networks such as Facebook, Twitter, WhatsApp, offline interactions, etc. (the edge direction and thickness respectively represent the direction and strength of information propagation)
   %\footnote{The images used in this figure are labeled for non-commercial reuse. dexter: cocos111.deviantart.com, bubbles: jerimin19.deviantart.com, bravo: ibunniee.deviantart.com, tom, jerry: www.youtube.com, deedee: www.flickr.com, bugs: eiledon.deviantart.com}
   }
   \label{fig:diffusion_pic}
   \end{figure}
Figure~\ref{fig:diffusion_pic} presents an illustration of information diffusing (or influence propagating) through various social networks.
A more technical introduction to this topic is provided in Chapter~\ref{chap:mpid}.

\subsection{Development and Spread of Preferences}

{\em Homophily} in social networks arises because of two complementary factors: similar individuals becoming friends and friendships leading to individuals becoming similar. An individual's friendship network and social connections plays a vital role in determining how the individual develops, qualitatively as well as quantitatively, owing to the influences resulting out of regular interactions. 
      \begin{center}
      \begin{tabular}{p{0.9\textwidth}}
      \textit{``Your social networks may matter more than your genetic networks. But if your friends have healthy habits, you are more likely to as well. So get healthy friends.''}
      \\
      \textit{ \hfill - Mark Hyman, founder of the UltraWellness Center}
      \end{tabular}
      \end{center}
            \begin{figure}[t]
         \centering
         \includegraphics[scale=.18]{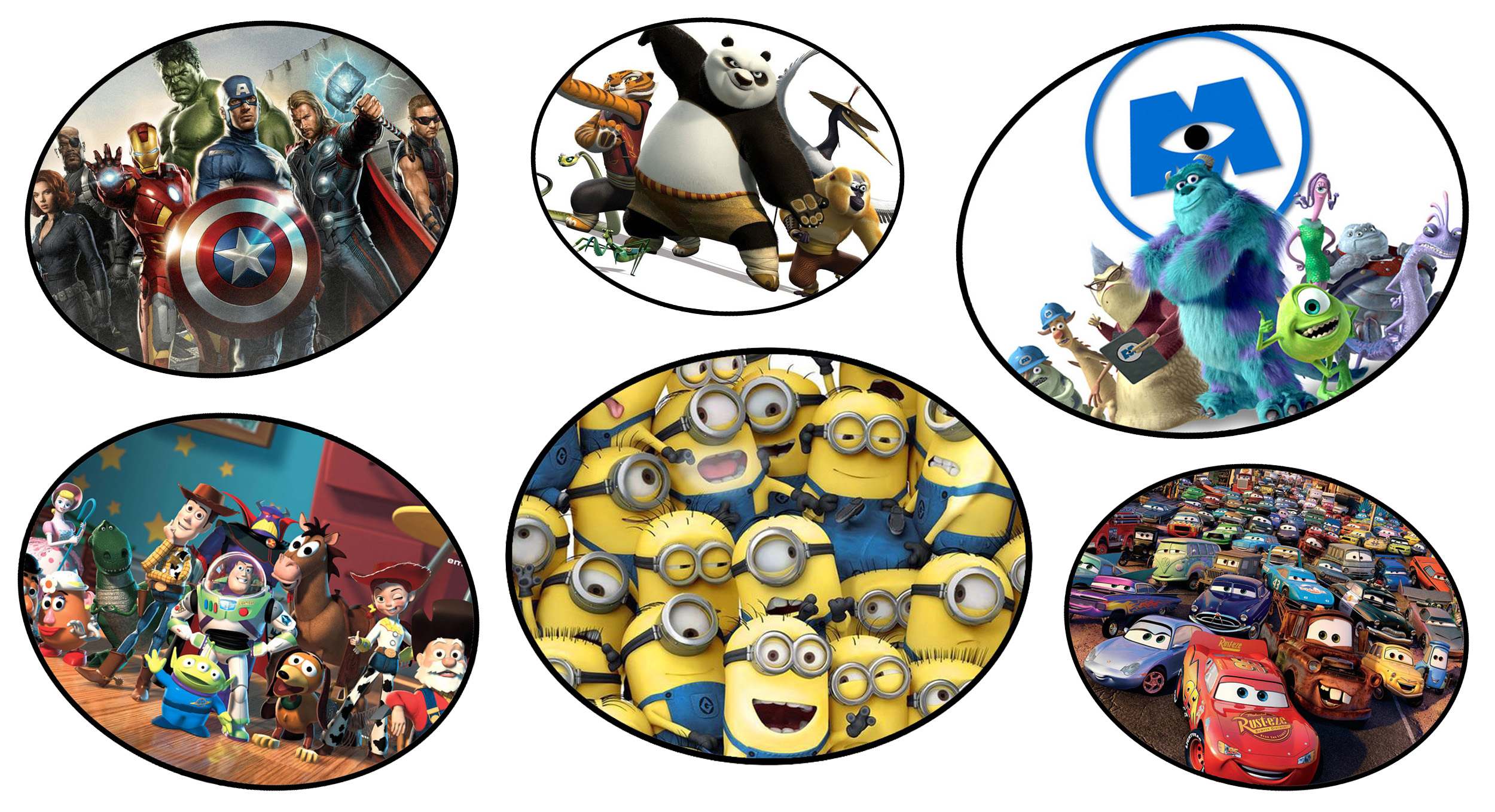}
            \caption{An illustration of the bias in friendships towards similar individuals (similarities lead to friendships and friendships lead to similarities)
         %\footnote{The images used in this figure are labeled for non-commercial reuse. avengers, monsters: www.flickr.com, toy story: picasaweb.google.com, cars, kungfu panda: www.youtube.com, minions: www.iletaitungeek.com, minions2: blogdoalex.com}
            }
            \label{fig:homophily_pic}
            \end{figure}
The social interactions also help develop the preferences of an individual for a variety of topics, ranging from personal ones such as favorite hangout place, to social ones such as favorite political party. Consider the example of political viewpoint itself. Though factors such as mass media (such as news, campaigns, posters, etc.) influence an individual's viewpoints, a significantly bigger factor is the discussions and information shared with people in the individual's social circle, whom he or she trusts and shares common goals and vision with.
   \begin{center}
   \begin{tabular}{p{0.9\textwidth}}
   \textit{``Information is the currency of democracy.''}
   \\
   \textit{ \hfill - Thomas Jefferson, 3rd President of the United States}
   \end{tabular}
   \end{center}
   Figure~\ref{fig:homophily_pic} presents an illustration of the bias in friendships towards similar individuals.
   A more technical introduction to this topic is provided in Chapter~\ref{chap:pasn}.

\section{Overview of the Thesis}

This thesis revolves around social networks. 
In particular, it motivates three novel problems in the context of social networks and attempts to solve them to a great extent. Owing to the novel nature of the studied problems, there is a great scope for future work based on this thesis.
%
%It studies different aspects of a social network in the following logical order:
%First the network is formed with a particular structure owing to the strategic nature of the individuals.
%The network then facilitates flow of information over it.
%The network structure and the information flow influence the preferences of the individuals.

\subsection{Orchestrating Network Formation}

      \begin{figure}[t]
   \centering
   \vspace{-3.5mm}
   \includegraphics[scale=.24]{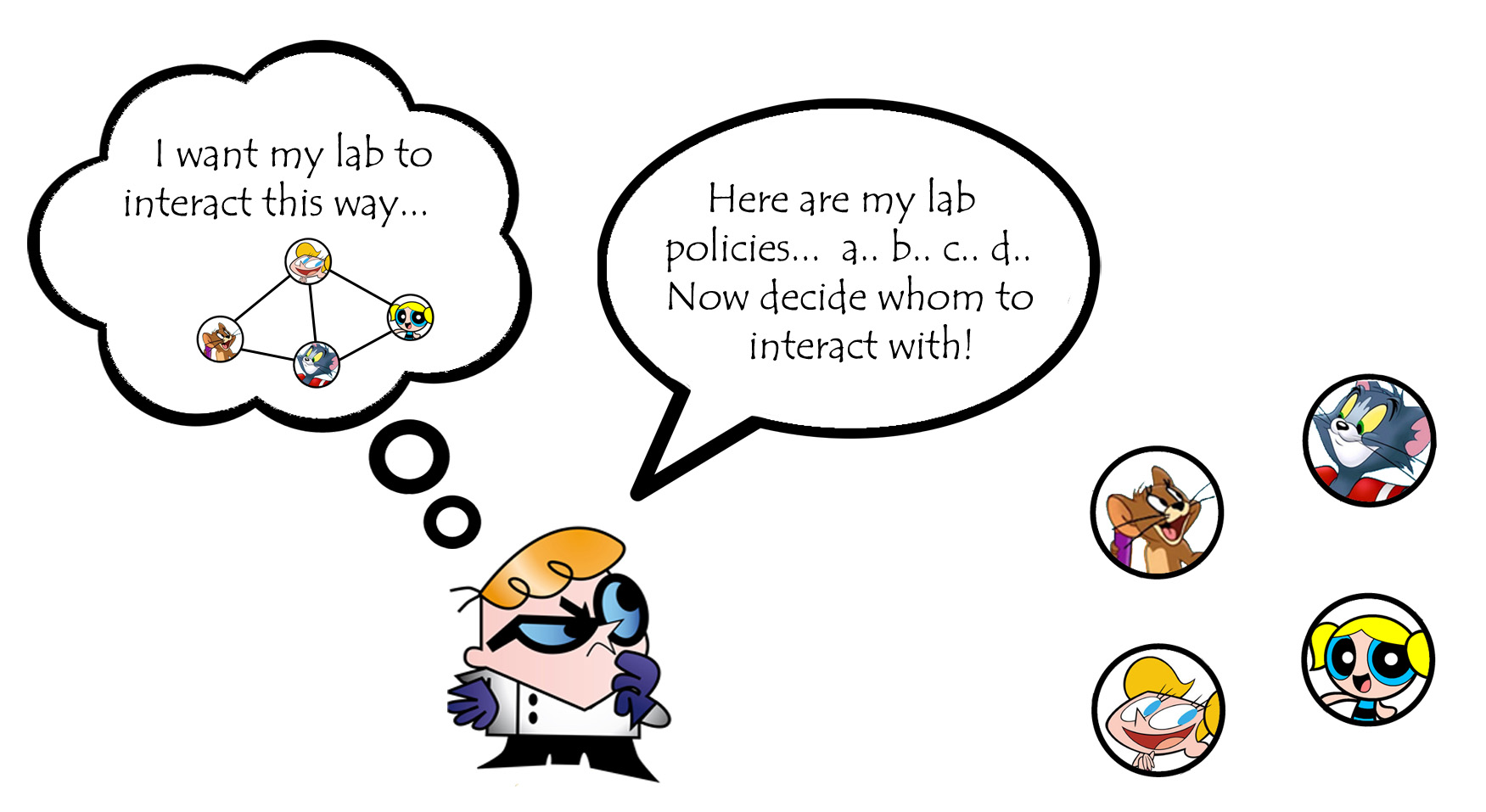}
      \caption{An illustration of orchestrating network formation
      }
      \label{fig:nfsc_problem}
      \end{figure}

Consider a scenario where an organization wants certain tasks to be completed with respect to knowledge management, information extraction, information diffusion, etc. It is known that the structure of the underlying interaction or communication network among the employees plays an important role in determining the ease and speed with which such tasks can be accomplished. In particular, the organization may want the network to be of a certain density, that is, the density should not be so low that it restricts the level of interaction and also not so high that an unreasonable amount of resources are spent for interactions alone. Also, it may want a good degree distribution so that the load of interactions is either borne or not borne by a selected few. In general, there may be a variety of reasons for an organization to prefer a particular network structure over others.

The employees in the organization, among whom the network is to be formed, are strategic and self-interested. While making connections with others, they consider how much they would benefit due to these connections and how much cost is involved in maintaining them. In an organizational setting, the benefits could be in the form of favors, information, discussions, etc., while the costs could be in the form of doing favors, sharing information, spending time and energy in discussion, etc. Employees would want to form connections with other employees such that they maximize the benefits while minimizing the costs at the same time. So if an organization desires to have a particular network structure among its employees, it needs to design its policies such that the employees find it best to direct the network structure towards the one desired by the organization.

Chapter~\ref{chap:nfsc} addresses the problem of deriving conditions under which autonomous link alteration decisions made by strategic agents lead to the formation of a desired network structure.
   Figure~\ref{fig:nfsc_problem} presents an illustration of orchestrating network formation.

\subsection{Multi-Phase Influence Maximization}

Consider a scenario where a company wants to market its newly launched product. There are several means of advertisement which the company can resort to. Viral marketing or word-of-mouth marketing is one such means in which the company offers the product to a few selected individuals for free or at a discounted price. These individuals can then suggest their friends to buy the product if they are satisfied with it. These friends can then decide whether to buy the product based on their individual criteria, who would then suggest their friends to buy the product provided it meets their level of satisfaction. 
The number of individuals to whom free or discounted products can be offered is determined by a certain budget allocated by the company for viral marketing. The company identifies these individuals based on criteria such as their influence and effectiveness of their suggestions on others. 

Viral marketing is known to be one of the most effective means of marketing owing to advertisement of the product by a trustworthy individual or a friend. 
 However, there are several uncertainties involved in viral marketing owing to the uncertainties in the behaviors of individuals involved in viral marketing. So it cannot be definitively said that triggering the viral marketing process at a selected set of individuals would be better than triggering it at some other.
 A natural solution to counter the ill-effects of such a randomness is to trigger the process, not at all the selected individuals, but just a fraction of them; and then make partial observations in the midst of the viral marketing process so as to select the subsequent individuals accordingly. 
However, a disadvantage of this multi-phase approach is that the process slows down owing to the delay involved in selecting subsequent individuals. This may be undesirable in presence of a competing product or when the value of the product decreases with time.

      \begin{figure}[t]
   \centering
   \includegraphics[scale=.25]{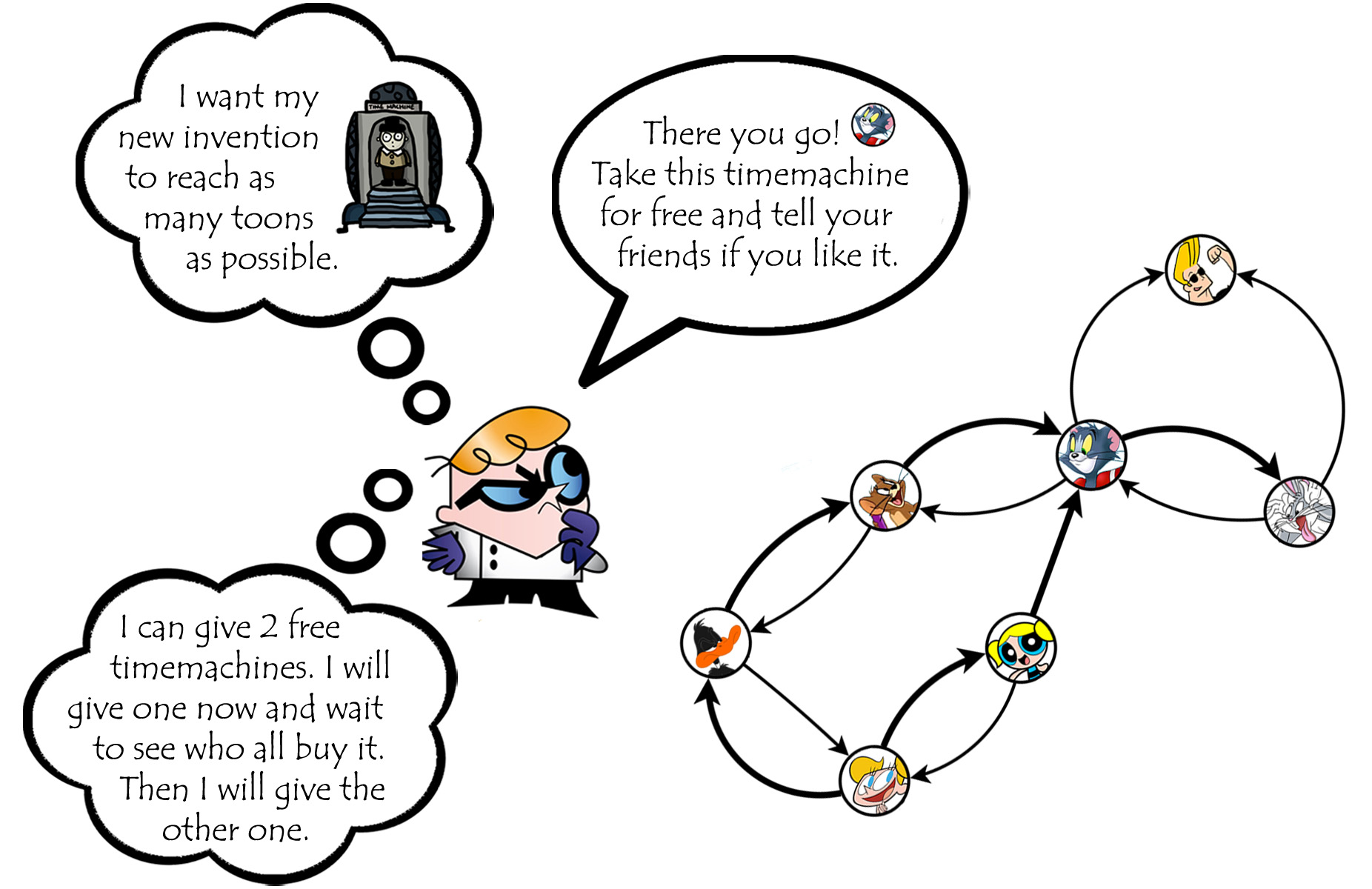}
      \caption{An illustration of multi-phase influence maximization
      }
      \label{fig:mpid_problem}
      \end{figure}

Chapter~\ref{chap:mpid} addresses the problem of effective viral marketing by determining how many and which individuals are to be selected in different phases, and what should be the delay between the phases.
   Figure~\ref{fig:mpid_problem} presents an illustration of multi-phase influence maximization.

\subsection{Scalable Preference Aggregation}

      \begin{figure}[t]
   \centering
   \includegraphics[scale=.27]{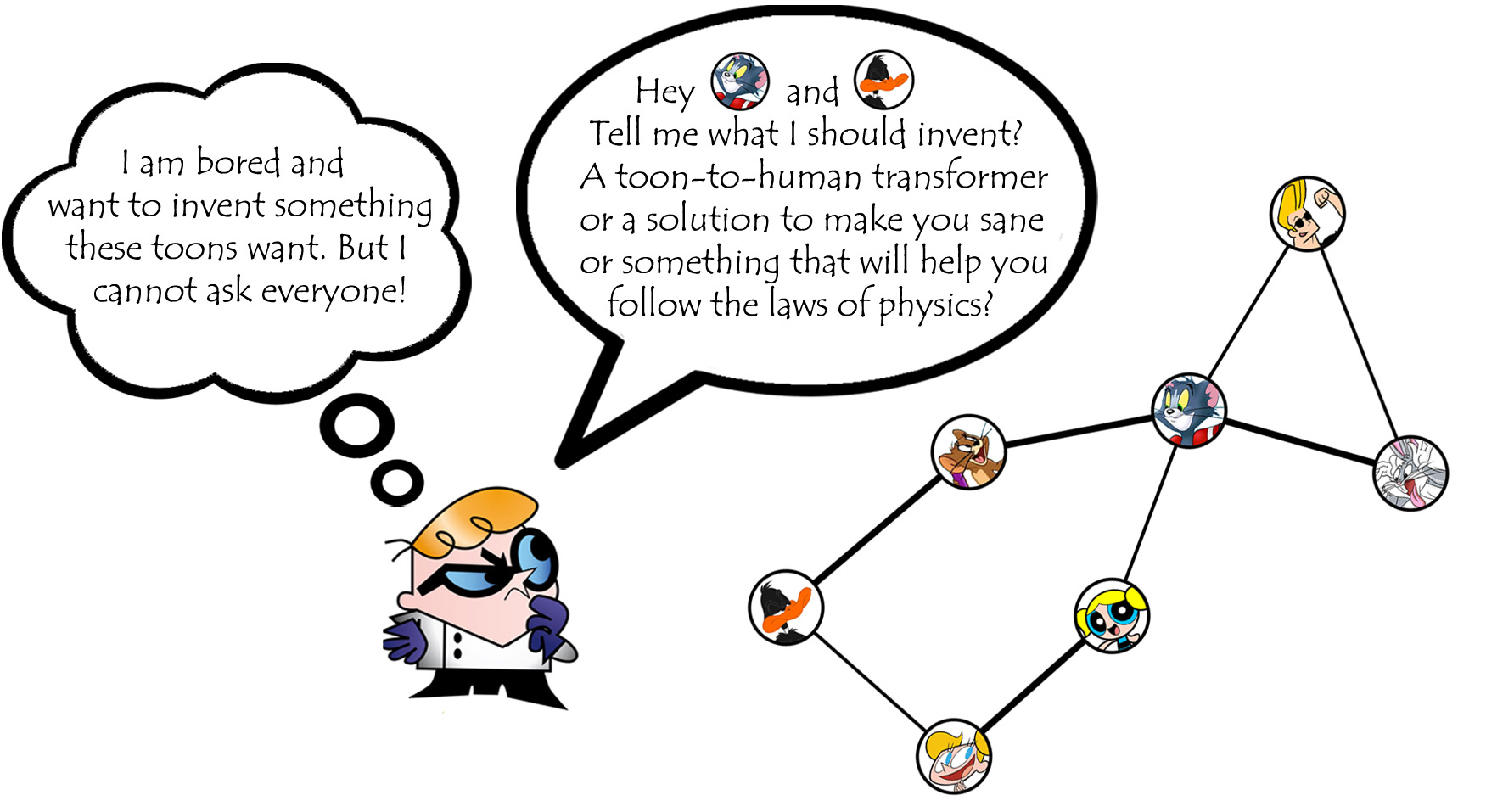}
      \caption{An illustration of scalable preference aggregation
      }
      \label{fig:pasn_problem}
      \end{figure}

Consider a scenario where a company wants to launch a new product based on the past experiences of its existing customers regarding its current products. The company would like to have the opinions of all its customers by sending a feedback request through fast means such as e-mail. However, such requests are not taken seriously by the customers and so the company may end up receiving only a small number of replies. In order to increase the participation, the company may offer incentives in some form to its customers, such as discount coupons and gift vouchers. The incentives need to be good, otherwise not many customers maybe willing to respond to such feedback queries in a prompt and honest way, and devote the required effort to provide a useful feedback. However, the company would have a certain budget for such incentives and would ideally like to arrive at a good balance between the investment for incentives and the level of participation. Even if the company ensures a good enough participation, it is also not clear if the feedback received from the participating customers is a good representative of the opinions of the entire customer base. So the company would ideally like to offer good incentives to a small number of customers, whose opinions would closely reflect the collective opinion of the entire customer base.

In the current age of electronic media, it is a common practice for companies to advise its customers to register their products using a registration website. The customers are given an option to either use their e-mail address or one of their other accounts such as Facebook, Google+, etc. for registration. It is also now becoming a common practice for people to use their online social networking accounts
%(primarily Facebook account) 
for registrations on other websites. So a company can potentially obtain the social network underlying its customer base.
If information on the underlying social network is available to the company, it could harness the homophily property, and also deduce which of its customers are good representatives of the population. 

Chapter~\ref{chap:pasn} addresses the problem of determining the best representatives using the underlying social network data, by modeling
the spread of 
% similarities between the preferences of individuals
preferences among individuals in a social network.
   Figure~\ref{fig:pasn_problem} presents an illustration of scalable preference aggregation.

 \begin{sidewaystable}
 \hspace{-7mm}
 \begin{tabular}{||p{0.1\textwidth}||p{0.29\textwidth}||p{0.29\textwidth}||p{0.29\textwidth}||}
 \hline \hline 
 \vspace{-1mm}
 Topic 
 \vspace{2mm}
 &
 \vspace{-1mm}
 \centering
 Network Formation (Chapter~\ref{chap:nfsc})
 &
 \vspace{-1mm}
 \centering
 Information Diffusion (Chapter~\ref{chap:mpid})
 &
 \vspace{-1mm}
 %\centering
 ~Spread of Preferences (Chapter~\ref{chap:pasn})
 \\ \hline \hline
 \vspace{-1.5mm}
 Literature
 &
 \vspace{-2mm}
 Given conditions on network parameters, which topologies are likely to emerge?
 \vspace{1mm}
 &
 \vspace{-2mm}
 Given a social network, how to select the seed nodes so as to maximize diffusion?
 &
 \vspace{-2mm}
 How to select a best representative set for voting using attributes of nodes and alternatives?
 \\ \hline
 \vspace{-2mm}
 This thesis
 &
 \vspace{-2mm}
 Inverse of this problem
 \vspace{1mm}
 &
 \vspace{-2mm}
 Multi-phase version of this problem
 &
 \vspace{-2mm}
 Network view of this problem
 \\ \hline 
 \vspace{-1mm}
 Problem
 &
 \vspace{-2mm}
 Under what conditions would best response link alteration strategies of strategic agents lead to the formation of a stable network with a desired topology?
 \vspace{2mm}
 &
 \vspace{-2mm}
 For two-phase influence maximization in a social network, what should be the budget split and the delay between the two phases, and how to select the seed nodes?
 &
 \vspace{-2mm}
 How to select a best representative set using information about the underlying social network?
 \\ \hline 
 \vspace{-1mm}
 Approach
 &
 \vspace{-5mm}
 \begin{itemize}[leftmargin=*]
 \itemsep-.25em 
 \item A new network formation model
 \item A very general utility model 
 \item Derivations of these conditions for a range of topologies
 \item Efficiency of these conditions
 \item Robustness of these conditions
 \vspace{-2mm}
 \end{itemize}
 &
 \vspace{-5mm}
 \begin{itemize}[leftmargin=*]
 \itemsep-.25em 
 \item Objective function formulation
 \item Properties and analysis
 \item Algorithms for seed selection
 \item Results on real-world datasets
 \item Combined optimization over budget split, delay, and seed set
 \vspace{-2mm}
 \end{itemize}
 &
 \vspace{-5mm}
 \begin{itemize}[leftmargin=*]
 \itemsep-.25em 
 \item Facebook app for new dataset
 \item Models for spread of preferences in a social network
 \item Objective function formulation 
 \item Algorithms with guarantees
 \item Experimental results
 \vspace{-2mm}
 \end{itemize}
 \\ \hline 
 \vspace{-1mm}
 Conclusions
 &
 \vspace{-5mm}
 \begin{itemize}[leftmargin=*]
 \itemsep-.25em 
 \item Conditions on network entry impact degree distribution
 \item Conditions on link costs impact density
 \item Constraints on intermediary rents owing to contrasting densities of connections
 \vspace{-5mm}
 \end{itemize}
 &
 \vspace{-5mm}
 \begin{itemize}[leftmargin=*]
 \itemsep-.25em 
 \item Strict temporal constraints: use single phase 
 \item Moderate temporal constraints: most budget to first phase with a short delay between phases 
 \item No constraints: $\frac{1}{3}:\frac{2}{3}$ budget split with a long delay between phases
 \vspace{-5mm}
 \end{itemize}
 &
 \vspace{-5mm}
 \begin{itemize}[leftmargin=*]
 \itemsep-.25em 
 \item A sampling based method acts as a good model for spread of preferences
 %\item Abstraction of the problem leads to effective and efficient algorithms
 \item People with high degree serve as good representatives
 \item Using social networks more effective, reliable than random polling
 \vspace{-5mm}
 \end{itemize}
 \\ \hline
 \hline 
 \end{tabular}
 \caption{Summary of the thesis}
 \label{tab:summary}
 \end{sidewaystable}

\section{Contributions and Outline of the Thesis}

This thesis starts by introducing the basics of social networks and the preliminaries required to follow the technical content, followed by the technical contributions and directions for future work. 
 Table~\ref{tab:summary} presents the summary.
We now present the thesis outline.

\subsection*{Chapter~\ref{chap:intro}: \nameref{chap:intro}}

This chapter presents a brief introduction to social networks and descriptions of specific topics of our interest, followed by motivations to the problems addressed in this thesis with the help of real-world examples, and then contributions and outline of this thesis.

\subsection*{Chapter~\ref{chap:prelims}: \nameref{chap:prelims}}

This chapter provides an overview of the following topics which are the prerequisites for
understanding later chapters of the thesis.
\begin{itemize}
\item \nameref{sec:graphtheory}:
isomorphism, automorphism, graph edit distance
\vspace{-2mm} \item \nameref{sec:gametheory}:
an example, subgame perfect equilibrium
\vspace{-2mm} \item \nameref{sec:netform}:
an example model, pairwise stability
\vspace{-2mm} \item \nameref{sec:infodiff}:
independent cascade model, linear threshold model
\vspace{-2mm} \item \nameref{sec:prefaggr}:
dissimilarity measures, aggregation rules
\vspace{-2mm} \item \nameref{sec:setfns}:
non-negativity, monotonicity, submodularity, supermodularity, subadditivity, superadditivity
\vspace{-2mm} \item \nameref{sec:modelprelims}:
KL divergence, RMS error, maximum likelihood estimation
\vspace{-2mm} \item \nameref{sec:opttools}:
greedy hill-climbing, cross entropy method, golden section search
%\vspace{-2mm} \item \nameref{sec:datasets}
\vspace{-2mm} \item \nameref{sec:cgt}:
the core, Shapley value, nucleolus, Gately point, $\tau$-value
\end{itemize}

\subsection*{Chapter~\ref{chap:nfsc}: \nameref{chap:nfsc}}

In this chapter, we study the problem of determining sufficient conditions
under which, the desired topology uniquely emerge %be obtained as a pairwise stable network
when 
%self-interested 
agents adopt their best response strategies.  
The chapter is organized as follows:

\begin{itemize} 
\item Section~\ref{sec:intro_nfsc} introduces social network formation, 
 Section~\ref{sec:motiv_nfsc} motivates the problem of orchestrating social network formation,
 Section~\ref{sec:relevant_nfsc} presents some relevant work, and
 Section~\ref{sec:gameinbrief} enlists the contributions of this chapter.

\item In Section~\ref{sec:model}, we propose a recursive model of  network formation 
%, with which we can guarantee that a network being formed retains a designated topology in each of its stable states. 
%Our model ensures that, for common network topologies, the analysis can be carried out independent of the current number of nodes in the network and also independent of the upper bound on the number of nodes in the network.
%\item
and a very general
  utility model that captures most key aspects relevant to strategic network formation.
%(a) benefits from immediate neighbors, (b) costs of maintaining links with immediate neighbors, (c) benefits from indirect neighbors, (d) bridging benefits, (e) intermediation rents, and (f) an entry fee  for entering the network. 
We then present our procedure for deriving sufficient conditions for the formation of a given topology as the unique one. 
\item In Section~\ref{sec:analysis}, using the proposed models, we study common and important network topologies,
%, namely, star network, complete network, bipartite Tur\'an network, and $k$-star network, 
%and network with a certain bounded diameter, 
and derive sufficient conditions under which these topologies uniquely emerge.
%\item 
We also investigate the social welfare properties of these topologies. 
\item In Section~\ref{sec:deviation}, we study the effects of deviation 
from the derived sufficient conditions on the resulting network, using the notion of graph edit distance. 
%For this purpose, we use the notion of {\em graph edit distance}.
In this process, we develop polynomial time algorithms for computing graph edit distance from certain topologies.
%between a given graph and a corresponding $k$-star graph. % with same number of nodes as $g$.
\item Section~\ref{sec:conclusion_nfsc} concludes the chapter.
\end{itemize}

\subsection*{Chapter~\ref{chap:mpid}: \nameref{chap:mpid}}

In this chapter, we study the problem of influence maximization using multiple phases, in particular, determining an optimal budget split among the phases and their scheduling, and also determining an optimal set of seed nodes so that the resulting influence is maximized.
The chapter is organized as follows:
\begin{itemize} %[leftmargin=*]
\item Section~\ref{sec:intro_mpid} introduces information diffusion in social networks,
 Section~\ref{sec:relevant_mpid} presents some relevant work,
 Section~\ref{sec:motiv_mpid} motivates the problem of multi-phase influence maximization, and
 Section~\ref{sec:contrib_mpid} enlists the contributions of this chapter.

\item In Section~\ref{sec:problem_mpid}, focusing on two-phase diffusion process in social networks,
% under the IC model, 
we formulate an appropriate objective function that measures the expected number of influenced nodes, and investigate its properties. 
%In particular, we show that the objective function is monotone but non-submodular. 
We then motivate and propose an alternative objective function for ease and efficiency of practical implementation. 
\item In Section~\ref{sec:algo}, we investigate different candidate  algorithms for two-phase diffusion including extensions of
existing algorithms that are popular for single phase diffusion. 
%In particular, we propose the use of the cross entropy method and a Shapley value based method as promising algorithms for influence maximization in social networks.
%Selecting seed nodes for the two phases using an influence maximization algorithm could be done in two natural ways: (a) myopic or (b) farsighted. 
%, one of which would provide a constant factor approximation to the optimal solution, subject to the submodularity of the objective function, which we conjecture to be true.
%\item We integrate the diffusion simulation with Gephi, a graph visualization tool, in order to observe the diffusion process step by step.
\item In Section~\ref{sec:simulations}, using extensive simulations on real-world datasets, we study the performance of the proposed algorithms to get an idea how two-phase diffusion would perform, even when used most na\"ively. 
\item In Section~\ref{sec:practical}, 
%to achieve the best performance out of  two-phase diffusion, 
we focus on two constituent problems, namely, how to split the total available budget between the two phases, and when to commence the second phase. 
%Through a deep investigation of the nature of our observations, we propose efficient algorithms for the combined optimization problem of budget splitting, scheduling, and seed sets selection.
We then present key insights from a detailed simulation study. 
%\item We compare the results for two-phase diffusion processes with varying parameters against that for the single-phase diffusion in real-world social networks.
\item Section~\ref{sec:conclusion_mpid} concludes with discussion and some notes.
%In Section~\ref{sec:conclusion_mpid}, we conclude the chapter with (a) a note on how the value of diffusion would decay with time, (b) a note on subadditivity of the objective function, and (c) an overview of how two-phase diffusion could be used under the linear threshold model. 
%, and (c) a list of interesting future directions to this work.
\end{itemize}

\subsection*{Chapter~\ref{chap:pasn}: \nameref{chap:pasn}}

%We are given a social graph with a set of individuals $N$ and an aggregation rule $f$.
% that aggregates the preferences of the entire population $N$. 
In this chapter, we study the problem of determining a set 
of representative nodes of a given cardinality such that, the aggregate preference of the nodes in this set closely approximates the aggregate preference of the entire population. 
The chapter is organized as follows:
%Enlisted are the major contributions of the paper.

\begin{itemize}
\item
 Section~\ref{sec:intro_pasn} introduces preference aggregation,
 Section~\ref{sec:motiv_pasn} motivates the problem of scalable preference aggregation,
 Section~\ref{sec:relevant_pasn} presents some relevant work, and
 Section~\ref{sec:contrib_pasn} enlists the contributions of this chapter.
\item
In Section~\ref{sec:modeling}, we describe the Facebook app that we developed for eliciting the preferences of individuals for a range of topics, while also obtaining the social network among them.
% The Facebook app has helped us obtain a real-world dataset consisting of over 1000 nodes.
We propose a number of simple yet faithful models with the aim of capturing how preferences are spread in a social network, with the help of the collected Facebook data.
%; some of these models provide an acceptable fit to the data collected above. 

\item
In Section~\ref{sec:problem_pasn},
we formulate an appropriate objective function for the problem of determining an optimal set of representative nodes.
%critical nodes that best represent a given social network in terms of preferences. 
We propose a property, {expected weak insensitivity}, which captures the robustness of an aggregation rule, and 
%we show that several aggregation rules satisfy this property.
%We then establish a fundamental relation between (a) the closeness of the chosen representative set to the population in terms of expected distance and (b) the error incurred in the aggregate preference if that set is chosen as the representative set.
%Following this, we 
hence
present two alternate objective functions for computational purposes.
%that are appropriate and natural for this setting.
%and show the existence of efficient approximation algorithms for optimization problems involving these two objective functions. 
%they are non-negative, monotone, and submodular, which immediately makes our approach amenable for use of efficient approximation algorithms. 

\item In Section~\ref{sec:results},
we propose algorithms for selecting the best representatives. 
%Our algorithms include the popular greedy hill-climbing algorithm for optimal set selection. 
We provide a guarantee on the performance of one of the algorithms, 
%subject to the aggregation rule satisfying the expected weak insensitivity property, 
and study desirable properties of one other algorithm from the viewpoint of cooperative game theory. 
%We also see that the degree centrality heuristic performs very well, thus demonstrating the ability of 
%We also observe that high-degree nodes indeed serve as good representatives of the population. 
We compare the performance of our proposed algorithms with that of 
%conclude that instead of using 
the popular method of random polling, and hence justify using social networks for scalable preference aggregation.
%it is better to harness social networks for scalable preference aggregation as a more effective and reliable approach. 
%%%%%% YN - We should give some names to these algorithms
%%%%%% SD - Given.

\item Section~\ref{sec:conclusion_pasn} concludes with discussion and some notes.
%We provide insights on the effectiveness of our approach for aggregating preferences with respect to personal and social topics, as well as a note on an alternate way of defining the distribution of preferences between nodes. 
%We conclude with a list of future directions to this work. 
%(Section~\ref{sec:conclusion_pasn})

\end{itemize}

\subsection*{Chapter~\ref{chap:conclusion}: \nameref{chap:conclusion}}

This chapter concludes the thesis with a brief summary of the work done, and presents some interesting future directions.\\

\blankpagewithnumber

%\subimport{Prelim/}{prelim}

%Author: Swapnil Dhamal
%Indian Institute of Science, Bangalore

\chapter[Preliminaries]{Preliminaries}
\label{chap:prelims}

\begin{quote}
This chapter provides an overview of selected topics required to understand the technical content in this thesis.
We cover these topics, up to the requirement of following this thesis, in the following order:
\nameref{sec:graphtheory},
\nameref{sec:gametheory},
\nameref{sec:netform},
\nameref{sec:infodiff},
\nameref{sec:prefaggr},
\nameref{sec:setfns},
\nameref{sec:modelprelims},
\nameref{sec:opttools},
%\nameref{sec:datasets},
\nameref{sec:cgt}.
\end{quote}

\section{Graph Theory}
\label{sec:graphtheory}

Social networks can be naturally represented as graphs. Depending on the nature of a social network, it can be represented by a directed or undirected, weighted or unweighted graph. So graph theory is an inherent part of any social network study, including this thesis.

A graph $g$ consists of vertices (or nodes) and edges (or links). Let $V(g)$ be the set of its vertices and $E(g)$ be the set of its edges, $n=|V(g)|$ be the number of vertices and $m=|E(g)|$ be the number of edges. 
If $g$ is directed, edges $(u,v)$ and $(v,u)$ are distinct and existence of one does not imply existence of the other.
Furthermore, if $g$ is weighted, there exists a weight function $w \colon E(g) \to \mathbb{R}$, which gives a weight to each edge. For certain graphs, weights are given to vertices also, that is, there exists a weight function $w' \colon V(g) \to \mathbb{R}$.

A {\em path} can be defined as a sequence of distinct vertices $\langle w_1,\ldots,w_k \rangle$ such that there exists an edge between adjacent vertices. It can also be viewed as a sequence of edges which connect a sequence of distinct vertices. We say that nodes $u$ and $v$ are {\em connected} if there exists a path $\langle u,w_1,\ldots,w_k,v \rangle$, that is, if $u$ and $v$ are extreme nodes of a path. 
The {\em length of a path} can be defined as the number of distinct edges constituting the path, while the {\em weight of a path} can be defined as the total weight of the distinct edges constituting the path.

For an unweighted graph, a {\em shortest path} between two nodes is defined as a path having the least length, and this length is termed as the {\em shortest path distance}. 
The {\em diameter} of such a graph is defined as the maximum shortest path distance over all pairs of nodes in the graph.
These definitions for weighted graphs consider weights in place of lengths.

\subsection{Graph Isomorphism and Automorphism}

We need to understand the concepts of graph isomorphism and automorphism for simplifying the network formation analysis in Chapter~\ref{chap:nfsc} by classifying the nodes and connections (or links) into different types. The notion of {\em types\/} will let us perform analysis for all nodes (or connections) of the same type at once, instead of an individual analysis for each of them.

An isomorphism of graphs $g$ and $t$, $f \colon V(g) \to V(t)$,  is a bijection between the vertex sets of $g$ and $t$ such that, any two vertices $u$ and $v$ are adjacent in $g$ if and only if $f(u)$ and $f(v)$ are adjacent in $t$.
%That is, for all $u,v \in V(g)$, $(u,v) \in E(g) \Leftrightarrow (f(u),f(v)) \in E(t)$.
It is clear that graph isomorphism is an equivalence relation on graphs. 
Graph isomorphism determines whether the given graphs are structurally same, while ignoring their representations.

%
%Graph isomorphism allows us to distinguish the structural properties of a graph from properties associated with graph representations: graph drawings, data structures for graphs, graph labelings, etc. 

\begin{figure}[h]
\begin{tabular}{p{0.5\textwidth} p{0.5\textwidth}}
\centering
{
\includegraphics[scale=.65]{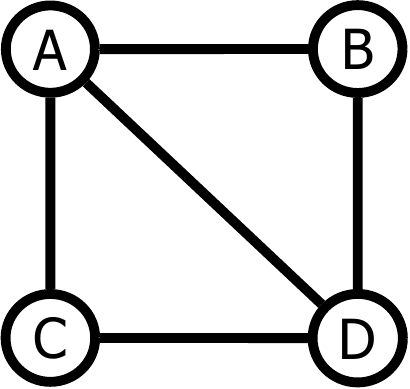}
\\
(a) Graph $g$
}
&
\centering
{
\includegraphics[scale=.65]{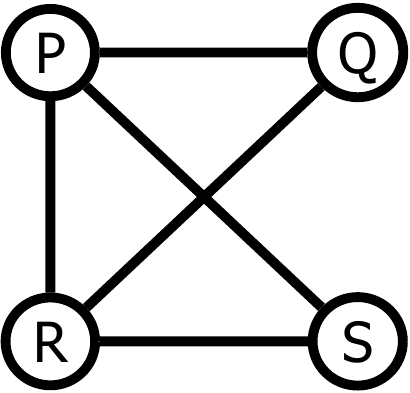}
\\
(b) Graph $t$
}
\end{tabular}
\caption{An example of graph isomorphism}
\label{fig:isomorphism_eg}
\end{figure}

Figure~\ref{fig:isomorphism_eg} shows an example of graph isomorphism, where 
$V(g) = \{A,B,C,D\}$, $V(t) = \{P,Q,R,S\}$,
and the two graphs $g$ and $t$ are isomorphic with an isomorphism $f(A)=P, f(B)=Q, f(C)=S, f(D)=R$.
Note that there may exist multiple isomorphisms for a given pair of graphs, for example, another isomorphism for the considered example is $f(A)=R, f(B)=S, f(C)=Q, f(D)=P$.

An {\em automorphism} $f \colon V(g) \to V(g)$ is a bijection where the vertex set of $g$ is mapped onto itself. An automorphism for graph $g$ in the considered example is $f(A)=D, f(B)=C, f(C)=B, f(D)=A$.

\subsection{Graph Edit Distance}

Graph edit distance (GED) is a standard measure to quantify the distance between two graphs. We will use this notion in Chapter~\ref{chap:nfsc} for studying the deviation of a network from the desired topology, owing to the deviations of network parameters.
Graph edit distance has been defined in several different ways in literature~\cite{gao2010survey}. We will use the following definition for our purpose.

\begin{definition}
Given two graphs $g$ and $h$ having same number of nodes, GED %the {\em graph edit distance} 
between them is the minimum number of link additions and deletions required to transform $h$ into a graph that is isomorphic to $g$.
\end{definition}

\begin{figure}[h]
\begin{tabular}{p{0.5\textwidth} p{0.5\textwidth}}
\centering
{
\includegraphics[scale=.65]{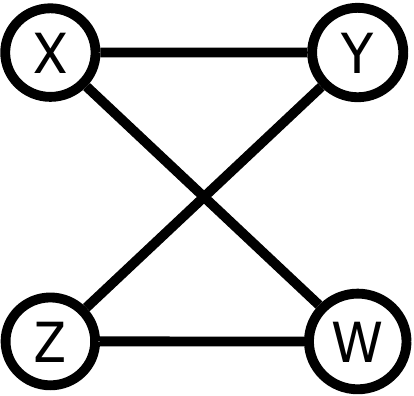}
\\
(b) Graph $h$
}
&
\centering
{
\includegraphics[scale=.65]{original.pdf}
\\
(a) Graph $g$
}
\end{tabular}
\caption{An example for computing graph edit distance}
\label{fig:ged_eg}
\end{figure}

Figure~\ref{fig:ged_eg} shows an example for computing graph edit distance between graphs $g$ and $h$. With visual inspection, one may conclude that graph $h$ can be transformed into graph $g$ by adding links $(X,Z)$ and $(Y,W)$ and deleting link $(Y,Z)$, and so the graph edit distance between them is 3. However, we have seen in Figure~\ref{fig:isomorphism_eg} that graph $t$ is isomorphic to graph $g$, and graph $h$ can be transformed into graph $t$ by adding just link $(X,Z)$ alone. It can be easily seen that this is the minimum transformation required (since graphs $h$ and $t$ are not isomorphic). So the graph edit distance between graphs $g$ and $h$ is 1.

The problem of computing GED between two graphs is NP-Hard, in general~\cite{zeng2009comparing}.
However, structural properties of certain graphs can be exploited to compute GED between them and other graphs, in polynomial time. We will discuss this point in detail in Chapter~\ref{chap:nfsc}.

\section{Extensive Form Games}
\label{sec:gametheory}

%\textbf{\color{red}{Section \ref{sec:gametheory} is yet to be completed.}}

In order to analyze the sequential play of a network formation game in Chapter~\ref{chap:nfsc}, and hence direct the play to obtain a particular desired outcome (a desired network topology), we use the extensive form representation of a game.
%The extensive form representation of a game captures complete sequential play of a game. 
The extensive form representation captures \cite{narahari2014game}:
\begin{itemize}
\item The ordering in which players play their actions
\item The actions available to each player 
\item The information available to players before playing at each stage
\item The outcomes as a function of the actions of the players
\item The payoff that each player obtains from each outcome
\end{itemize}

\subsection{An Example}

\begin{figure}[h]
\centering
   \iftoggle{clr}{
   \includegraphics[scale=.65]{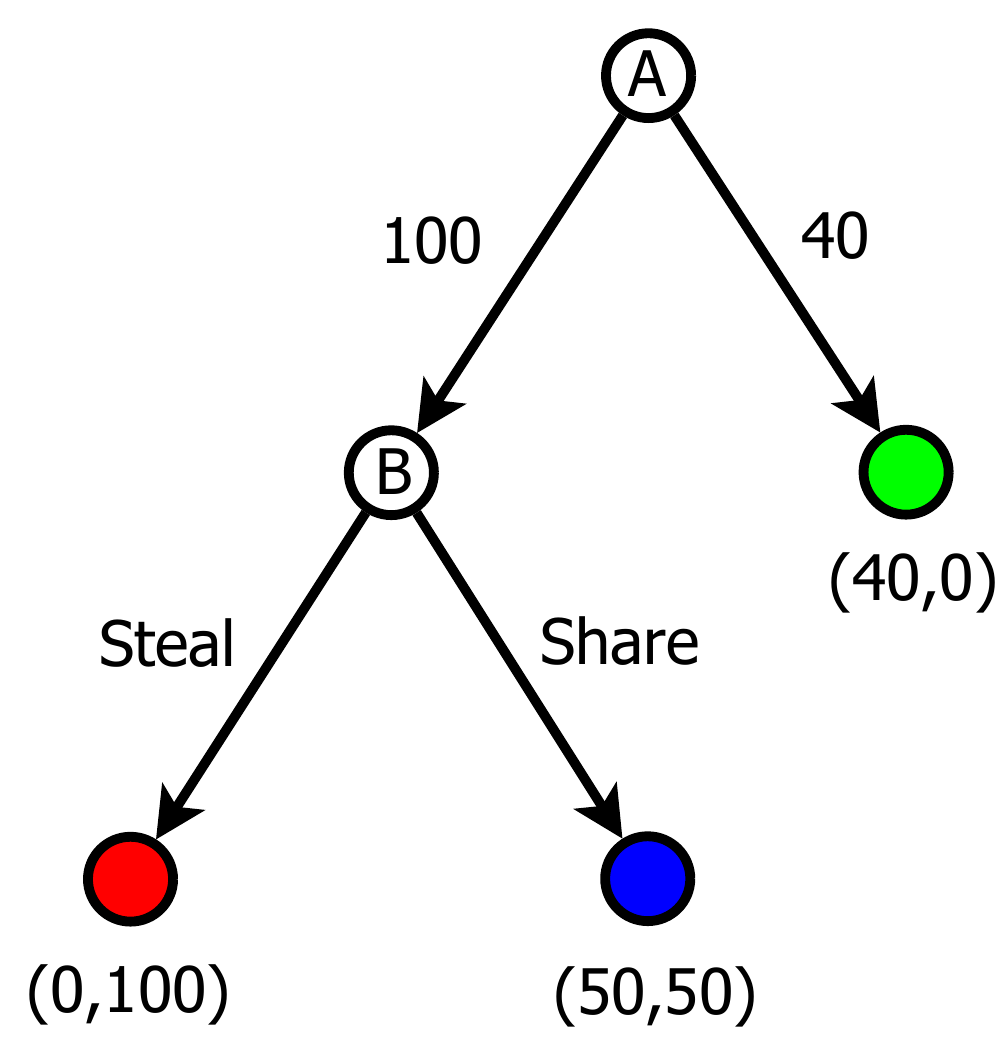}
   }{
\includegraphics[scale=.65]{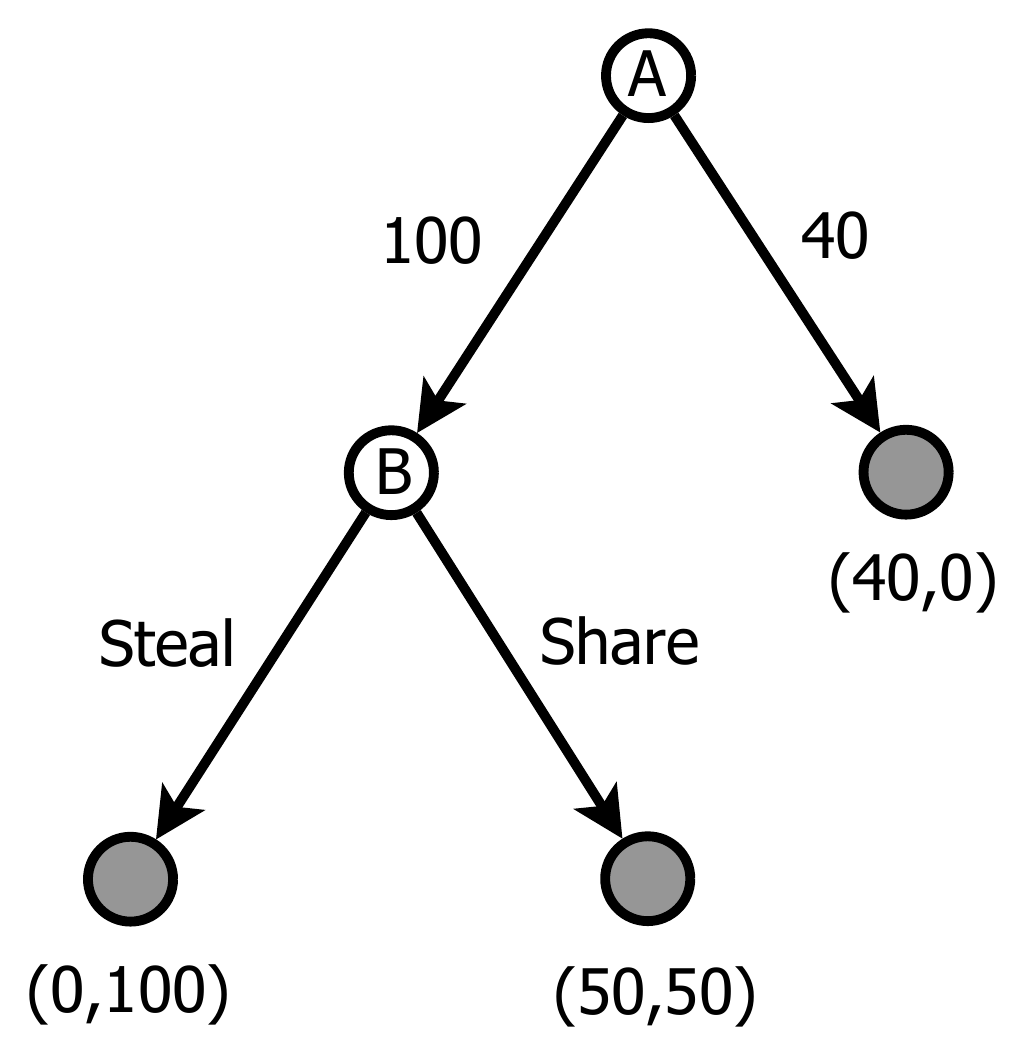}
}
\caption{An extensive form game tree}
\label{fig:extensiveform}
\end{figure}

Consider a two-player sequential game where player $A$ first gets to decide whether (a) to accept 40 units of money, or (b) to choose 100 units and let player $B$ decide whether to share it 50-50 or steal it completely. So, if player $A$ chooses to accept 40 units, the outcome is $(40,0)$, resulting in player $A$ getting the utility of 40 and player $B$ getting 0. Note that $A$ can get a higher utility of 50 on choosing 100, if $B$ chooses to share leading to the outcome $(50,50)$. However, this action of $A$ may lead to 0 utility if $B$ decides to steal and keep the entire 100 units, leading to the outcome $(0,100)$.
This game can be represented as an extensive form game as shown in Figure~\ref{fig:extensiveform}.
It can be notationally represented as $\Gamma=\langle N,(\mathcal{A}_i),\mathcal{H},\mathcal{P},(\mathbb{I}_i),({u}_i) \rangle$ where,

\begin{itemize}
\item 
$N = \{A,B\}$ is the finite set of players 
\item 
$\mathcal{A}_A = \{100,40\} , \mathcal{A}_B = \{{Steal},{Share}\}$ are the sets of actions available to the individual players 
\item 
$\mathcal{H} = \{\langle 100,{Steal}\rangle , \langle 100,{Share}\rangle , \langle 40\rangle\}$ is the set of all terminal histories (a terminal history is a path of actions from the root to a terminal node such that it is not a proper subhistory of any other terminal history) 
\\
$\mathcal{S}_{\mathcal{H}} = \{\epsilon , 100\}$ is the set of all proper subhistories (including the empty history $\epsilon$) of all terminal histories 
\item 
$\mathcal{P}(\epsilon) = A , \mathcal{P}(100) = B$ is the player function that associates each proper subhistory to a certain player 
\item 
$\mathbb{I}_A = \{\{\epsilon\}\} , \mathbb{I}_B = \{\{100\}\}$ is the set of all information sets of the individual players (an information set of a player is a set of that player's decision nodes that are indistinguishable to it). For a game with {\em perfect information\/}, information sets of all the players are singletons.  
\item 
The utilities of the individual players corresponding to each terminal history are \\
$u_A(100,Steal) = 0 , u_A(100,Share) = 50 , u_A(40) = 40 ,$ \\
$u_B(100,Steal) = 100 , u_B(100,Share) = 50 , u_B(40) = 0 $ 
\end{itemize} 
We now present an equilibrium notion for extensive form games, called {\em subgame perfect equilibrium}.

\subsection{Subgame Perfect Equilibrium}

Subgame perfect equilibrium ensures that each player's strategy is optimal given the strategies of other players, after every possible history.

\begin{definition}
Given an extensive form game $\Gamma=\langle N,(\mathcal{A}_i),\mathcal{H},\mathcal{P},(\mathbb{I}_i),({u}_i) \rangle$, a strategy profile $s^*=(s_i^*)_{i\in N}$ is a {\em subgame perfect equilibrium} if $\forall i \in N$,
\begin{displaymath}
u_i(O_h(s_i^*,s_{-i}^*)) \geq u_i(O_h(s_i,s_{-i}^*)) , \;\;\; \forall h \in \{x \in \mathcal{S}_\mathcal{H} : \mathcal{P}(x)=i\}, \;\;\; \forall s_i \in S_i
\end{displaymath}
where $O_h(s_i^*,s_{-i}^*)$ denotes the outcome corresponding to the history $h$ in the strategy profile $(s_i^*,s_{-i}^*)$.
\end{definition}

When it is $B$'s turn to take action, with the history that $A$ has played 100, it is a best response for $B$ to play \textit{Steal} getting a utility of 100 instead of 50 obtained by playing \textit{Share}. Knowing this, $A$ knows that playing 100 will lead to a utility of 0, and so it is $A$'s best response to choose 40. Thus, in the subgame perfect equilibrium of this game, player $A$ decides to accept 40 units of money, denying player $B$ to make a decision. 
\\

\textbf{Note:}
The extensive form game that we study in this thesis does not have any predetermined ordering in which players play their actions. We will explain how to analyze such a game in Chapter~\ref{chap:nfsc}.

\section{Network Formation}
\label{sec:netform}

In this section, we present the basics of network formation required for Chapter~\ref{chap:nfsc}. In particular, we present an example utility model and a well-studied equilibrium notion used in the context of social network formation.

\subsection{An Example Model: Symmetric Connections Model}

Several models have been proposed in literature based on the empirical structure of social networks \cite{jacksonbook}, for instance, Erdos-Renyi random graph model, the small world model, preferential attachment model, etc.
% \cite{watts1998collective}
 However, they do not capture the strategic nature of the nodes, who can choose their links based on their utilities.
Several models have been proposed to capture the utilities of nodes in a network or graph \cite{networkscrowdsmarkets}. We now describe one such model - the {\em symmetric connections model} \cite{jackson1996strategic}.

Nodes benefit from their friends or direct links in the form of favors, information, company, etc., while maintaining a link involves some cost in the form of doing favors, giving information, spending time, etc. Nodes also benefit from indirect links like friends of friends, friends of friends of friends, and so on. However, such benefits are of a lesser value than those obtained from direct friends; the most distant the linkage, the lesser are the benefits. 
The symmetric connections model captures this idea using two parameters, $\delta \in (0,1)$ for benefits and $c$ for costs, which are common for the entire network. 

Given an undirected and unweighted network $g$, let $u_j(g)$ be the utility of node $j$, $l_{ij}(g)$ be the length of the shortest path connecting nodes $i$ and $j$, and $d_j(g)$ be the degree of node $j$.
According to this model, a node $j$ gets benefit of $\delta^{l_{ij}(g)}$ from a node $i$ which is at a distance $l_{ij}(g)$ from it, that is, it benefits $\delta$ from each of its friends, $\delta^2$ from each of its friends of friends, and so on. The value of $\delta$ being in the range $(0,1)$ ensures that closer friendships are more beneficial than distant ones.
Also, a node incurs a cost of $c$ for maintaining a link with each of its direct friends, the number of such friends being $d_j(g)$.
So the net utility of a node $j$ in a given graph $g$ is
\begin{displaymath}
u_j(g) = \sum_{i \neq j} \delta^{l_{ij}(g)} - c d_j(g)
\end{displaymath}

Consider the graph in Figure~\ref{fig:ps_eg}(a). Node $A$ benefits $\delta$ from each of its 3 direct friends, namely, $B,C,D$, and incurs a cost $c$ for these links, giving it a net utility of $3\delta - 3c$. 
Node $B$ benefits $\delta$ from its direct friend $A$ and $\delta^2$ from its 2 friends of friends, namely, $C$ and $D$. It also incurs a cost of $c$ for its link with $A$, giving it a net utility of $\delta+2\delta^2-c$. On similar lines, the net utilities of nodes $C$ and $D$ are $\delta+2\delta^2-c$ each.

\subsection{Pairwise Stability}

Pairwise stability is a well-studied notion of equilibrium in the context of social network formation. 
It accounts for bilateral deviations arising from mutual agreement of link creation between two nodes, that Nash equilibrium fails to capture~\cite{jacksonbook}. Deletion is unilateral and a node can delete a link without consent from the other node. 

Let $u_j(g)$ denote the utility of node $j$ when the network formed is $g$.

\begin{definition} %\citep{jacksonbook}
\label{def:ps}
A network is said to be {\em pairwise stable} if it is a best response for a node not to delete any of its links and there is no incentive for any two unconnected nodes to create a link between them. So $g$  is pairwise stable if \\(a) for each edge $e = (i, j) \in g$, $u_i(g \setminus \{e\}) \leq u_i(g)$ and  $u_j(g \setminus \{e\}) \leq u_j(g)$, and\\
(b) for each edge $e' = (i, j) \notin g$, if $ u_i(g \cup \{e'\})>u_i(g) $, then $u_j(g \cup {e'})<u_j(g)$.
\end{definition}

\begin{figure}[t]
\begin{tabular}{p{0.3\textwidth} p{0.3\textwidth} p{0.3\textwidth} p{0.01\textwidth}}
\centering
{
\includegraphics[scale=.6]{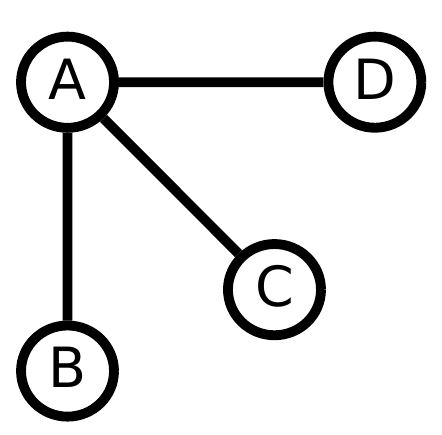}
\\
(a) Graph $g$
}
&
\centering
{
\includegraphics[scale=.6]{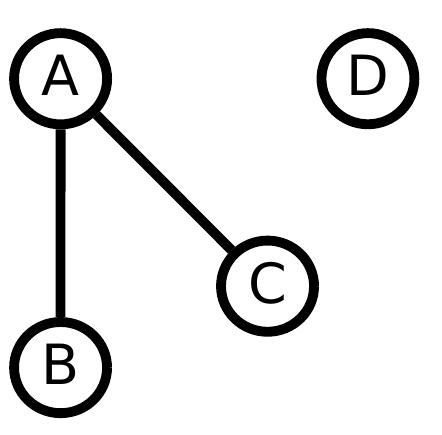}
\\
(b) Graph $g \setminus \{(A,D)\}$
}
&
\centering
{
\includegraphics[scale=.6]{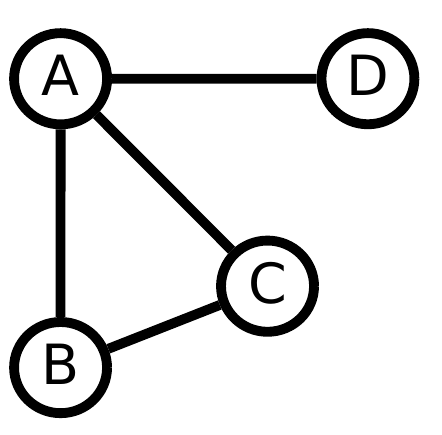}
\\
(c) Graph $g \cup \{(B,C)\}$
}
&
\end{tabular}
\caption{An example of pairwise stability}
\label{fig:ps_eg}
\end{figure}

\begin{table}
\centering
\begin{tabular}{c|c|c|c}
 \hline \hline
\T \B
Node 	&	$g$									&	$g \setminus \{(A,D)\}$	&	$g \cup \{(B,C)\}$
\\ \hline
$A$		&	$3(\delta-c)$					&	$2(\delta-c)$					&	$3(\delta-c)$
\\ \T \B
$B$		&	$\delta-c+2\delta^2$	&	$\delta-c+\delta^2$			&	$2(\delta-c)+\delta^2$
\\ \T \B
$C$		&	$\delta-c+2\delta^2$	&	$\delta-c+\delta^2$			&	$2(\delta-c)+\delta^2$
\\ \T \B
$D$		&	$\delta-c+2\delta^2$	&	0										&	$\delta-c+2\delta^2$
\\ \hline \hline
\end{tabular}
\caption{Utilities of nodes in the networks in Figure~\ref{fig:ps_eg}}
\label{tab:ps}
\end{table}

Consider the example in Figure~\ref{fig:ps_eg}. Let the utility function be defined based on the symmetric connections model; the utilities are presented in Table~\ref{tab:ps}. Consider the values of the parameters $\delta$ and $c$ to be such that $\delta-\delta^2 \leq c \leq \delta$ (with $\delta,c\geq 0$). It can be shown that graph $g$ is pairwise stable under these conditions.
If either nodes $A$ or $D$ delete the link $(A,D)$, their utilities change from $3(\delta-c)$ to $2(\delta-c)$ and from $\delta-c+2\delta^2$ to 0, respectively. That is, either of their utilities do not increase. Owing to symmetry, this is true for links $(A,B)$ and $(A,C)$ as well, and hence condition (a) in Definition~\ref{def:ps} is satisfied.
Also, if nodes $B$ and $C$ create mutual link $(B,C)$, their utilities change from $\delta-c+2\delta^2$ to $2(\delta-c)+\delta^2$. That is, either of their utilities do not increase. Owing to symmetry, this is true for links $(B,D)$ and $(C,D)$ as well, and hence condition (b) in Definition~\ref{def:ps} is satisfied.

\section{Models of Information Diffusion}
\label{sec:infodiff}

The problem of influence maximization in social networks has been extensively studied in the literature, and several models of information diffusion have been proposed in this direction \cite{networkscrowdsmarkets,guille2013information}.
The independent cascade (IC) model and the linear threshold (LT) model are two of the most extensively studied models for information diffusion in social networks.
In Chapter~\ref{chap:mpid}, we will be studying these models in the context of multi-phase diffusion.

\subsection{Independent Cascade Model}
\label{sec:icm}

%IC model is extensively used for observing information diffusion in social networks. This model is useful in tracing and understanding the flow of information across the network.
%
Let $E$ be the set of weighted, directed edges in graph $G$ and denote $|E|=m$.
In the IC model, for every directed edge $(u,v)$, there is an associated probability $p_{uv}$ that represents the probability with which the source node $u$ can influence the target node $v$. The diffusion starts synchronously at time step 0 with a set of initially activated or influenced seed nodes, following which the diffusion proceeds in discrete time steps, one at a time. In each time step, nodes which got influenced in the previous time step ({\em recently activated nodes}) try to influence their neighbors, and succeed in doing so with probabilities that are respectively associated with the corresponding edges. 
These neighbors, if successfully influenced, will now act as recently activated nodes for the next time step. Only recently activated nodes can contribute in diffusing information in any particular time step. After this time step, such nodes are no longer recently activated; instead we call them {\em already activated} nodes. Nodes, once activated, remain activated for the rest of the diffusion process. 
In short, when node $u$ gets activated at a time step, it gets a single chance to activate each of its inactive neighbors, $v$ (that too in the immediately following time step), with the given success 
probability $p_{uv}$.
The diffusion process concludes when no further nodes can be activated.

We now explain the concept of {\em live graph}, which helps simplify the  analysis for IC model.
A live graph $X$ is an instance of graph $G$, obtained by sampling the edges:
an edge $(u,v)$ is present in a live graph with probability $p_{uv}$ and absent with probability $1-p_{uv}$, independent of the presence of other edges in the live graph (so a live graph is a directed graph with no edge probabilities). 
Thus $p(X)$, the probability of occurrence of any live graph $X$, can be obtained as
$\prod_{(u,v) \in X} (p_{uv}) \prod_{(u,v) \notin X} (1-p_{uv})$.
%Note that a live graph is a directed graph with no edge probabilities.
%
It can be seen that as long as a node $u$, when influenced, in turn influences node $v$ with probability $p_{uv}$ that is independent of time, sampling the edge $(u,v)$ in the beginning of the diffusion is equivalent to sampling it when $u$ is activated \cite{kempe2003maximizing}.

\subsection{Linear Threshold Model}
\label{sec:ltm}

%\cite{granovetter1978threshold}
In Linear Threshold (LT) model, an influence degree $b_{u,v}$ is associated with every directed edge $(v,u)$, where $b_{u,v} \geq 0$ is the degree of influence that node $v$ has on node $u$, and an influence threshold $\chi_u$ with every node $u$. The weights $b_{u,v}$ are such that $\sum_v b_{u,v} \leq 1$. Also, owing to lack of knowledge about the thresholds, which are held privately by the nodes, it is assumed that the thresholds are chosen uniformly at random from $[0,1]$. The diffusion process starts at time step 0 and proceeds in discrete time steps, one at a time. In each time step, a node is influenced or activated if and only if the sum of influence degrees of the edges incoming from activated neighbors (irrespective of the time of activation of the neighbors) crosses its own influence threshold, that is, when
$
\sum_v b_{u,v} \geq \chi_u
$.
Nodes, once activated, remain activated for the rest of the diffusion process. 
 In any given time step, the recently activated nodes along with previously activated ones contribute to the diffusion process. The diffusion process stops when it is not possible to activate or influence any further nodes.

\section{Preference Aggregation}
\label{sec:prefaggr}

Preference aggregation is a well-studied topic in social choice theory.
In Chapter~\ref{chap:pasn}, we harness the underlying social network to make preference aggregation, more efficient and effective in practice.

Given a set of alternatives, individuals have certain preferences over them. These alternatives can be any entity, ranging from political candidates to food cuisines.
We assume that 
an individual's preference can be represented as a complete ranked list of alternatives.
%each individual's preferences are (a) complete, that is, for each pair $X$ and $Y$, either $X \succ_i Y$ or $Y \succ_i X$, but not both and (b) transitive, that is, if $X \succ_i Y$ and $Y \succ_i Z$, then $X \succ_i Z$. Owing to these two assumptions, 
%
%~\cite{networkscrowdsmarkets}.
%
We refer to a ranked list of alternatives 
as {\em preference} and the multiset consisting of the preferences of the individuals as {\em preference profile}.
For example, if the set of alternatives is $\{X,Y,Z\}$ and individual $i$ prefers $Y$ the most and $X$ the least, then $i$'s preference 
can be written as
$(Y,Z,X)_i$.
Suppose individual $j$'s preference is $(X,Y,Z)_j$, then the preference profile of the population $\{i,j\}$ is $\{(Y,Z,X),(X,Y,Z)\}$.

\subsection{Measures of Dissimilarity between Preferences}
\label{sec:dissimmeasures}

We now describe Kendall-Tau distance and Footrule distance, two of the most popular measures of dissimilarity between two preferences.

\subsubsection{Kendall-Tau Distance}

A widely used measure of dissimilarity between two preferences is {\em Kendall-Tau distance}. It counts the number of pairwise inversions with respect to the alternatives.
For computing normalized Kendall-Tau distance, given that the number of alternatives is $r$, we normalize Kendall-Tau distance to be in $[0,1]$, by dividing actual distance by \begin{scriptsize}$\dbinom{r}{2}$\end{scriptsize}, the maximum distance between any two preferences on $r$ alternatives.
%
%For aggregating the preferences of the individuals, we primarily use {\em Kemeny's rule}~\cite{kemeny1959mathematics}, which satisfies several desirable properties of preference aggregation~\cite{young1988condorcet}. 
%Kemeny's rule outputs all preferences that agree with as many pairwise preferences of the individuals as possible.
%
For example, the Kendall-Tau distance between preferences $(X,Y,Z)$ and $(Y,Z,X)$ is 2, because two pairs $\{X,Y\}$ and $\{X,Z\}$ are inverted between them. The normalized Kendall-Tau distance is 
$2/$\begin{scriptsize}$\dbinom{3}{2}$\end{scriptsize} $ = \frac{2}{3}$.
%
%The preferences of individuals $i$ and $j$ are $(X,Y,Z)_i$ and $(Y,Z,X)_j$, respectively.
%The Kendall-Tau distance between them is 2, because two pairs $\{X,Y\}$ and $\{X,Z\}$ are inverted between these preferences.
%With $i$ and $j$ as the only voters, the Kemeny's rule would output three aggregate preferences, namely, $(X,Y,Z), (Y,X,Z), (Y,Z,X)$, each agreeing with a total of 4 pairwise preferences across both the voters.
%xyz 3+1 = 4
%xzy 2+0 = 2
%yxz 2+2 = 4
%yzx 1+3 = 4
%zxy 1+1 = 2
%zyx 0+2 = 2

\subsubsection{Footrule Distance}

Another popular measure of dissimilarity between two preferences is {\em Spearman's Footrule distance}, which sums up the displacements for all the alternatives. In the above example, $X,Y,Z$ are displaced by $2,1,1$ positions, respectively, giving Spearman's Footrule distance of 4.
The normalized Spearman's Footrule distance can be obtained by dividing the absolute distance with the maximum possible distance, which can be shown to be 
$
2 \lceil \frac{r}{2}\rceil \lfloor \frac{r}{2} \rfloor
$,
where $r$ is the number of alternatives.
So the normalized Spearman's Footrule distance in the above example is
$\frac{4}{2 \lceil \frac{3}{2}\rceil \lfloor \frac{3}{2} \rfloor} = 1$.

\subsection{Aggregation Rules}
\label{aggrrules}

An {\em aggregation rule} takes a preference profile as input and outputs the {\em aggregate preference(s)}, which in some sense reflect(s) the collective opinion of all the individuals. 
We consider a wide range of voting rules for our study, namely, Bucklin, Smith set, Borda, Veto, Minmax (pairwise opposition), Dictatorship, Random Dictatorship, Schulze, Plurality, Kemeny, and Copeland.
%%%%5 YN - please give a reference for the above rules
A survey of voting rules
and related topics 
can be found in \cite{brandt2012computational}.
A more concise table of voting rules and their properties can be found in \cite{wiki:voting}.
Of these rules, only Kemeny, Dictatorship, and Random Dictatorship output the entire aggregate preference; others either determine a winning alternative or give each alternative a score.
For consistency, for all rules except Kemeny, Dictatorship, and Random Dictatorship, we employ the following well accepted approach for converting a series of winning alternatives into an aggregate preference: rank a winning alternative as first, then vote over the remaining alternatives and rank a winning alternative in this iteration as second, and repeat until all alternatives have been ranked
\cite{brandt2012computational}.
As we are indifferent among alternatives, we do not assume any tie-breaking rule so as to avoid bias towards any particular alternative, while determining a winner.
So an aggregation rule may not output a unique aggregate preference, that is, it is a correspondance.

\section{Properties of Set Functions}
\label{sec:setfns}

A {\em set function} is a function whose domain is a collection of sets. In the context of this thesis, let $N$ be a finite set of elements and $2^{N}$ be its power set. Then a set function $f$ takes a subset of $N$ as input and outputs a real number. That is,
\begin{displaymath}
f : 2^{N} \rightarrow \mathbb{R}
\end{displaymath}
 
We now present some properties concerned with set functions, that we will be looking at in Chapters~\ref{chap:mpid} and \ref{chap:pasn}.
\subsection{Non-negativity}
A set function $f$ is said to be {\em non-negative} if 
\begin{displaymath}
f(S) \geq 0, \;\;\; \forall S \subseteq N
\end{displaymath}
This property states that the value of any set should be non-negative.

\subsection{Monotonicity}
A set function $f$ is said to be {\em monotone increasing} if 
\begin{displaymath}
f(S) \leq f(T), \;\;\; \forall S \subset T \subseteq N
\end{displaymath}
This property means that addition of elements to any set should not decrease its value. This is often the case in most real-world applications.

Similarly, a set function $f$ is said to be {\em monotone decreasing} if 
\begin{displaymath}
f(S) \geq f(T), \;\;\; \forall S \subset T \subseteq N
\end{displaymath}

\subsection{Submodularity and Supermodularity}
A set function $f$ is said to be {\em submodular} if 
\begin{displaymath}
f(S \cup \{i\}) - f(S) \geq f(T \cup \{i\}) - f(T), \;\;\; \forall i \in N \setminus T, \;\;\; \forall S \subset T \subset N
\end{displaymath}
That is, submodular functions have a diminishing returns property which means, the marginal value added by an element to a superset of a set is not more than the marginal value added by that element to that set. Informally, the marginal value added by an element to a set decreases as the set grows larger.
For a finite $N$, the above definition is equivalent to
\begin{displaymath}
f(S)+f(T) \geq f(S \cup T) + f(S \cap T), \;\;\; \forall S,T \subseteq N
\end{displaymath}

Submodular functions occur in several real-world applications, since the diminishing returns property is a natural one is several domains.

On the other hand, a set function $f$ is said to be {\em supermodular} if 
\begin{displaymath}
f(S \cup \{i\}) - f(S) \leq f(T \cup \{i\}) - f(T), \;\;\; \forall i \in N \setminus T, \;\;\; \forall S \subset T \subset N
\end{displaymath}

Note that a function can be both submodular and supermodular, for example,
$f(S) = \sum_{i \in S} w_i$.
Also, a function can be neither submodular nor supermodular, for example, 
$f(S) = 1$ when $|S|$ is even and $0$ when $|S|$ is odd.

%\textbf{\color{red}{Subadditivity and superadditivity}}

\subsection{Subadditivity and Superadditivity}
A set function $f$ is said to be {\em subadditive} if 
\begin{displaymath}
f(S \cup T) \leq f(S) + f(T), \;\;\; \forall S , T \subseteq N
\end{displaymath}
That is, the value of a union of any two sets is at most the sum of their individual values.

On the other hand, a set function $f$ is said to be {\em superadditive} if 
\begin{displaymath}
f(S \cup T) \geq f(S) + f(T), \;\;\; \forall S , T \subseteq N
\end{displaymath}

It can be easily seen that a non-negative submodular function is subadditive, while a non-negative superadditive function is supermodular.

\section{Modeling}
\label{sec:modelprelims}

In this section, we present some basics required for modeling the spread of preferences in a social network in Chapter~\ref{chap:pasn}.
%based on a data obtained from Facebook

\subsection{Kullback-Leibler (KL) Divergence}

The Kullback-Leibler (KL) divergence is a measure of the difference between two probability distributions $D_t$ and $D_m$, where $D_t$ typically represents the true distribution of data, while $D_m$ typically represents a model or approximation of $D_t$.
Informally, the KL divergence of $D_m$ from $D_t$ is the amount of information lost when $D_m$ is used to approximate $D_t$.

For the purpose of this thesis, we will be requiring KL divergence specifically for the case of discrete probability distributions.
Here, the KL divergence of $D_m$ from $D_t$ is defined as
\begin{displaymath}
\sum_i D_t(i) \, \log\frac{D_t(i)}{D_m(i)}
\end{displaymath} 
where $D_t(i)$ is the probability mass function value of $D_t$ at $i$.
Note that KL divergence is not symmetric in $D_t$ and $D_m$,
and is defined only if $D_m(i)=0 \implies D_t(i)=0, \forall i$. Also when $D_t(i)=0$, the contribution of the $i^{th}$ term to the above summation is interpreted as zero.

\subsection{Root Mean Square (RMS) Error}

In the process of model-fitting or validating a model based on its performance, there is an error of the value given by the model against the true value for each test data point. There are different methods available for aggregating the errors 
%and compute the average model-fitting error 
over all test data points. For our study in this thesis, we use root mean square (RMS) error, which is given by
$\sqrt{\avg_{x} [err(x)]^2}$,
where $err(x)$ is the error of the value given by the model against the true value for the test data point $x$.

RMS error has the following desirable properties:
\begin{itemize}
\item Like any valid measure of average model-fitting error, it ensures that the positive and negative differences are not cancelled out.
\item Like mean square error, which is well-accepted in most machine learning applications, it magnifies larger errors more than smaller errors.
\item It holds an advantage over mean square error in that, it gives an idea about the order of magnitude of the errors for individual test data points.
\end{itemize}

\subsection{Maximum Likelihood Estimation (MLE)}
\label{sec:mle}

%\textbf{\color{red}{Section \ref{sec:mle} is yet to be completed.}}

Given independent and identically distributed observations $x_1,\ldots,x_n$ known to have been drawn from an assumed probability density or mass function $f(x_i;\theta_1,\ldots,\theta_m)$ with unknown parameters $\theta_1,\ldots,\theta_m$, we aim to find a good estimate of the parameters. 
A good estimate of the parameters would be a value that maximizes the likelihood of getting the observed data.

The {\em likelihood function} is defined as 
\begin{displaymath}
\mathcal{L}(\theta_1,\ldots,\theta_m;x_1,\ldots,x_n) = f(x_1,\ldots,x_n|\theta_1,\ldots,\theta_m) = \prod_{i=1}^n f(x_i|\theta_1,\ldots,\theta_m)
\end{displaymath}
Let $\hat\theta_1,\ldots,\hat\theta_m$ be the respective parameter values that maximize the above function. Then $\hat\theta_i$ is called the {\em maximum likelihood estimator} of $\theta_i$.

It is often more convenient to work with the logarithm of the likelihood function, and logarithm being a monotone increasing function, the above is equivalent to maximizing the {\em log-likelihood},
\begin{displaymath}
\ln\mathcal{L}(\theta_1,\ldots,\theta_m;x_1,\ldots,x_n) = \sum_{i=1}^n \ln f(x_i|\theta_1,\ldots,\theta_m)
\end{displaymath}

\section{Optimization Tools}
\label{sec:opttools}

We will be addressing a variety of optimization problems in Chapters~\ref{chap:mpid} and \ref{chap:pasn}. This section describes some of the general solution approaches.

\subsection{Greedy Hill-climbing}
\label{sec:greedyhill}

The greedy hill-climbing algorithm  is one of the most basic algorithms, often taken as baseline, for solving combinatorial optimization problems.
For maximizing an objective function $f$, it selects elements one at a time, each time choosing an element that provides the largest marginal increase in the value of $f$, until the budget (upper bound on the cardinality of the solution set) is exhausted.
Starting with $\psi^{(0)}=\{\}$, let $\psi^{(t)}$ be the solution set (under construction) after adding $t^{th}$ element to the set.
Then the $t^{th}$ chosen element is 
(until the budget is exhausted)
$
\argmax_{z \in N \setminus \psi^{(t-1)}} f(\psi^{(t-1)} \cup z)
$.

Given $n$ candidate elements, budget $k$, and time $\mathcal{T}$ for computing $f$, the time complexity of the greedy hill-climbing algorithm is $O(kn\mathcal{T})$.

The following result shows the effectiveness of greedy hill-climbing algorithm for solving combinatorial optimization problems, especially those that are NP-hard.

\begin{theorem} 
\label{thm:nemhauser}
For a non-negative, monotone increasing, submodular function $f$, let $S^G$ be a set of size $k$ obtained using greedy hill-climbing. Let $S^O$ be a set that maximizes the value of $f$ over all sets of cardinality $k$. Then 
$f (S^G) \geq (1- \frac{1}{e}) f (S^O )$~\cite{nemhauser1978analysis}.
Furthermore, 
for any $\epsilon>0$, there is a $\gamma>0$ such that by using $(1 + \gamma)$-approximate values for $f$, we obtain a $(1-\frac{1}{e}-\epsilon)$-approximation \cite{kempe2003maximizing}.
\end{theorem}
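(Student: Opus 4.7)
The plan is to adapt the classical argument of Nemhauser, Wolsey, and Fisher, centered on tracking the gap $f(S^O) - f(\psi^{(i)})$ as the greedy algorithm progresses. Write $S^G = \{g_1, \ldots, g_k\}$ where $g_i$ denotes the element selected at the $i$-th greedy step, so $\psi^{(i)} = \{g_1, \ldots, g_i\}$ and $\psi^{(0)} = \emptyset$. Let $S^O = \{o_1, \ldots, o_k\}$ be a maximizer of $f$ over sets of size $k$.

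First I would establish the key one-step inequality. By monotonicity, $f(S^O) \leq f(S^O \cup \psi^{(i)})$. By repeatedly applying submodularity (writing $f(S^O \cup \psi^{(i)})$ as $f(\psi^{(i)})$ plus a telescoping sum of marginal gains obtained by adding the elements of $S^O \setminus \psi^{(i)}$ one at a time), each marginal term is upper-bounded by $f(\psi^{(i)} \cup \{o\}) - f(\psi^{(i)})$. Finally, the greedy selection rule guarantees that each such marginal is at most $f(\psi^{(i+1)}) - f(\psi^{(i)})$. Combining these steps yields
\[
f(S^O) - f(\psi^{(i)}) \leq k\bigl[f(\psi^{(i+1)}) - f(\psi^{(i)})\bigr].
\]
Rearranging, $f(S^O) - f(\psi^{(i+1)}) \leq \bigl(1 - \tfrac{1}{k}\bigr)\bigl[f(S^O) - f(\psi^{(i)})\bigr]$, and iterating this inequality from $i=0$ to $i=k$ (using $f(\psi^{(0)}) \geq 0$ by non-negativity) gives
\[
f(S^O) - f(S^G) \leq \Bigl(1 - \tfrac{1}{k}\Bigr)^k f(S^O) \leq \tfrac{1}{e}\, f(S^O),
\]
which rearranges to the claimed $(1 - 1/e)$-approximation.

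For the second statement, I would repeat the same accounting, but now the greedy step is made with respect to an oracle $\tilde f$ satisfying $f(T)/(1+\gamma) \leq \tilde f(T) \leq (1+\gamma) f(T)$ for all relevant $T$. Then the comparison $\tilde f(\psi^{(i)} \cup \{o\}) \leq \tilde f(\psi^{(i+1)})$ translates to $f(\psi^{(i)} \cup \{o\}) \leq (1+\gamma)^2 \bigl[f(\psi^{(i+1)}) - f(\psi^{(i)})\bigr] + f(\psi^{(i)})$, so the one-step inequality above degrades by a factor of $(1+\gamma)^2$. Tracking this through the recursion and choosing $\gamma$ small enough that $(1-1/e)/(1+\gamma)^2 \geq 1 - 1/e - \epsilon$ yields the desired bound.

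The main obstacle is the bookkeeping in the submodularity step: one must carefully apply the diminishing-returns inequality along the chain of insertions of $o_1, \ldots, o_k$ into $\psi^{(i)}$ without accidentally double-counting elements that already lie in $\psi^{(i)}$. Handling this cleanly --- via a telescoping decomposition restricted to $S^O \setminus \psi^{(i)}$ and then enlarging the sum back to all of $S^O$ using non-negativity of submodular marginals on a monotone function --- is where the proof has to be written most carefully. The extension to the approximate oracle is then essentially a propagation-of-error exercise on the same recursion.
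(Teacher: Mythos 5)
The paper does not actually prove this theorem: it is imported verbatim from the literature, with the first claim attributed to Nemhauser, Wolsey, and Fisher and the second to Kempe, Kleinberg, and Tardos, so there is no in-paper proof to compare against. Your reconstruction of the first claim is the standard and correct argument: the chain monotonicity $\rightarrow$ telescoping over $S^O \setminus \psi^{(i)}$ $\rightarrow$ submodularity $\rightarrow$ greedy choice gives $f(S^O) - f(\psi^{(i)}) \leq k\bigl[f(\psi^{(i+1)}) - f(\psi^{(i)})\bigr]$, and iterating the resulting recursion with $(1-1/k)^k \leq 1/e$ closes the proof. Your worry about double-counting is handled exactly as you say, since $|S^O \setminus \psi^{(i)}| \leq k$ and the omitted marginals are nonnegative by monotonicity.

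The only imprecision is in the approximate-oracle part. From $\tilde f(\psi^{(i+1)}) \geq \tilde f(\psi^{(i)} \cup \{o\})$ and the oracle guarantee you get $f(\psi^{(i)} \cup \{o\}) \leq (1+\gamma)^2 f(\psi^{(i+1)})$, which expands to
\[
f(\psi^{(i)} \cup \{o\}) - f(\psi^{(i)}) \leq (1+\gamma)^2\bigl[f(\psi^{(i+1)}) - f(\psi^{(i)})\bigr] + \bigl((1+\gamma)^2 - 1\bigr)f(\psi^{(i)}),
\]
so the inequality you wrote silently drops the nonnegative additive term $\bigl((1+\gamma)^2-1\bigr)f(\psi^{(i)})$. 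That term is what makes the error propagation genuinely additive rather than a single multiplicative degradation of the final ratio, and it is why in Kempe et al.\ the admissible $\gamma$ depends on $k$ as well as $\epsilon$ (each of the $k$ iterations contributes an error of order $\gamma\, f(S^O)$). Your condition $(1-1/e)/(1+\gamma)^2 \geq 1 - 1/e - \epsilon$ is therefore not sufficient as stated; you need something like $\gamma = O(\epsilon/k)$ so that the accumulated additive error over $k$ steps stays below $\epsilon\, f(S^O)$. The overall strategy (propagate the perturbed one-step inequality through the same recursion) is the right one and does go through once this is tracked correctly.
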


\subsection{Cross Entropy Method}
\label{sec:ce}

%\textbf{\color{red}{Section \ref{sec:ce} is yet to be completed.}}

The cross entropy (CE) method is a generic and practical tool for solving combinatorial optimization problems, including those that are NP-hard.
The CE method involves an iterative procedure where each iteration can be broken
down into two phases \cite{de2005tutorial}:
\begin{enumerate}
\item Generate random data samples (trajectories, vectors, etc.) according to a specified mechanism.
\item Update  the  parameters  of  the  random  mechanism  based  on  the  data  to  produce better samples in the next iteration.
\end{enumerate}

We now explain a simple version of the cross entropy method, with an example of determining a set $S \subseteq N$ that maximizes the value of function $f$.
%, such that $|S|=k$.
For ease of explanation, we will represent a set $S_i$ as a Bernoulli vector ${\boldsymbol X}_i = (X_{i1},\ldots,X_{in})$ where its $j^{th}$ component $X_{ij}=1$ if $j\in S_i$ and $X_{ij}=0$ if $j\notin S_i$.

%This problem can be viewed as an abstraction to the influence maximization problem, where the objective is to determine a seed set that maximizes the expected influence, such that cardinality of the seed set is at most $k$ (it is typically beneficial to have the cardinality to be exactly $k$ owing to the expected influence being a monotone increasing function of the seed set).

We initiate the method by setting the probability vector to $\hat {\boldsymbol p}_0 = (\hat p_{0,1}, \ldots, \hat p_{0,n})$, whose $j^{th}$ component $\hat { p}_{0,j}$ denotes the probability of $X_{ij}=1$.
% or the probability that element $j$ would be included in the solution set. 
%If there is no prior knowledge regarding the appropriateness of an element to be included in the solution set, $\hat {\boldsymbol p}_0$ could simply assign equal probabilities to all the elements. 
%
Let $\mathcal{N}$ be the number of samples to be drawn in each iteration, $\rho$ be a performance parameter telling the fraction of samples to be discarded owing to their relatively low function values, and $\alpha$ be a smoothing parameter telling how much weight is to be given to the current iteration (as against the previous iterations).

Following are the steps involved in a simple version of the CE method:\\
Starting with $t=1$, iterate through the following steps while incrementing $t$, until the stopping criterion is met: 
\begin{enumerate}
\item Draw samples ${\boldsymbol X}_1,\ldots,{\boldsymbol X}_{\mathcal{N}}$ of Bernoulli vectors with success probability vector $\hat {\boldsymbol p}_{t-1}$.
Compute $f({\boldsymbol X}_i)$ for all $i$, and order them in descending order of values, say $f_{(1)} \geq \ldots \geq f_{(\mathcal{N})}$. Let $\hat \gamma_t$ be $(1-\rho)$ sample quantile, that is, $\hat \gamma_t = f_{(\lceil (1-\rho)\mathcal{N} \rceil)}$.
\item Use the samples to compute $\hat {\boldsymbol q}_t = (\hat q_{t,1}, \ldots, \hat q_{t,n})$ where
\begin{displaymath}
\hat q_{t,j} = \frac{\sum_{i=1}^{\mathcal{N}}\mathbb{I}_{\{ f({\boldsymbol X}_i)\geq \hat \gamma_t \}} \mathbb{I}_{\{X_{ij}=1\}} f({\boldsymbol X}_i)} {\sum_{i=1}^{\mathcal{N}}\mathbb{I}_{\{ f({\boldsymbol X}_i)\geq \hat \gamma_t \}} f({\boldsymbol X}_i)}
\end{displaymath}
%where ${\boldsymbol X}_i = (X_{i1},\ldots,X_{in})$.
Update the probability vector using
\begin{displaymath}
\hat {\boldsymbol p}_{t} = (1-\alpha) \hat {\boldsymbol p}_{t-1} + \alpha \hat {\boldsymbol q}_{t}
\end{displaymath}
\end{enumerate}
The stopping criterion could be the convergence of $\hat \gamma_t$, or an upper bound on the number of iterations $t$, or something similar.
Once the stopping criterion is met, an optimal set can be chosen based on the probabilities (or appropriateness) of the elements to be included in the set and the application at hand.

For a more detailed and fully adaptive version of the CE method, the reader is referred to \cite{de2005tutorial}.

\subsection{Golden Section Search}
\label{sec:gss}

Golden section search is an efficient method for maximizing a unimodal function. Let the function be unimodal in $[X_{\text{min}},X_{\text{max}}]$. Let $[x_{\text{min}},x_{\text{max}}]$ be the search domain, which is updated after every iteration. These are initialized as $x_{\text{min}} := X_{\text{min}}$ and $x_{\text{max}} := X_{\text{max}}$. Following are the iterative steps of golden section search (refer to Figure~\ref{fig:gss}):
\begin{enumerate}
\item Divide the interval $[x_{\text{min}},x_{\text{max}}]$ into 3 sections using two internal points $x_1$ and $x_2$.
\item %Evaluate the function at the two internal points $x_1$ and $x_2$.\\
If  $f(x_1)  > f(x_2)$,
the maximum is in $[x_{\text{min}}, x_2]$,
so redefine $x_{\text{min}} := x_{\text{min}},  x_{\text{max}} := x_2$.\\
If  $f(x_1) <  f(x_2)$, 
the maximum is in $[x_1,x_{\text{max}}]$
so redefine $x_{\text{min}} := x_1,  x_{\text{max}} := x_{\text{max}}$.
\end{enumerate}
The algorithm terminates, either after a fixed number of iterations, or when we obtain a solution $x^{\text{sol}}$ which cannot more than $\epsilon$ far away from the optimal solution $x^{\text{opt}}$, where $\epsilon$ is the desired error (that is, when $|x^{\text{sol}}-x^\text{{opt}}|\leq\epsilon$; note that this error is with respect to the solution and not with respect to the function value).

\begin{figure} [t!]
\centering
\includegraphics[scale=0.5]{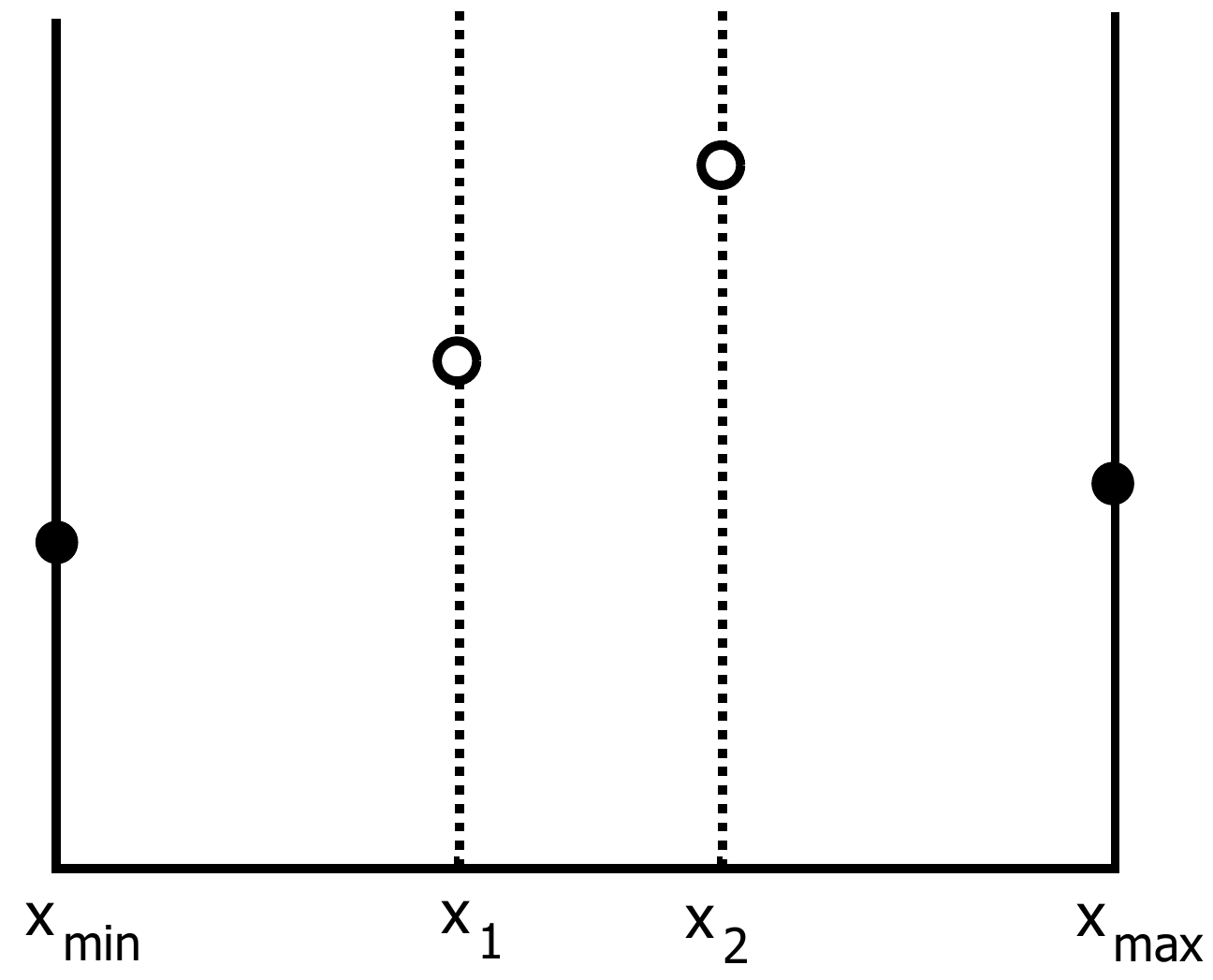}
\caption{
An iteration of the golden section search}
\label{fig:gss}
\end{figure}

In each iteration, golden section search selects the internal points $x_1$ and $x_2$ such that it reuses one of the internal values of the previous iteration, so as to minimize computation. It can be shown that $x_1$ and $x_2$ should be selected such that
\begin{displaymath}
 \frac{x_{\text{max}}-x_1}{x_{\text{max}}-x_{\text{min}}}=\frac{x_2-x_{\text{min}}}{x_{\text{max}}-x_{\text{min}}}=\frac{\sqrt{5}-1}{2} \approx 0.618 \;\;\; \text{(golden ratio)}
\end{displaymath}

\section{Cooperative Game Theory}
\label{sec:cgt}

We will be encountering cooperative game theory in some form or the other, in Chapters~\ref{chap:mpid} and \ref{chap:pasn}.
In this section, we provide a brief insight into the cooperative game theory concepts~\cite{straffin1993game,saad2009coalitional,chun2007coincidence}, namely, the Core, the Shapley value, the Nucleolus, the Gately point, and the $\tau$-value.

A cooperative game or coalitional game or characteristic function form game $(N,\nu)$ consists of two parameters $N$ and $\nu$. $N$ is the set of players and $\nu:2^N\rightarrow \mathbb{R}$ is the characteristic function, which defines the value $\nu(S)$ of any coalition $S\subseteq N$. 

The coalition $N$ consisting of all the players is called the {\em grand coalition}.
Assuming that the grand coalition is formed, the question is how to distribute the total obtained payoff among the individual players.
In what follows, let $x_i$ represent the payoff allocated to player $i$ and $n=|N|$. Cooperative game theory studies several payoff allocations $x = (x_1, . . . , x_n)$, each satisfying a number of certain desirable properties. We now briefly describe some of the payoff allocations, more popularly known as {\em solution concepts}.

\subsection{The Core}
\label{core}

The core consists of all payoff allocations $x = (x_1, . . . , x_n)$ that satisfy the following properties:
\begin{enumerate}
 \item Individual rationality: $x_i \geq \nu(\{i\}) \; \forall \; i \in N$
 \item Collective rationality: $ \sum_{i \in N} x_i = \nu(N)$.
 \item Coalitional rationality: $ \sum_{i \in S} x_i \geq \nu(N) \; \forall S \subseteq N$.
\end{enumerate}
A payoff allocation satisfying individual rationality and collective rationality is called an \textit{imputation}.

\subsection{The Shapley Value}
\label{sec:shapley}
%This solution concept was proposed by Lloyd Shapley. This is based on \textit{fairness} rather than \textit{bargaining} amongst the participating players. 
The Shapley value $\phi(\nu) = (\phi_1(\nu), . . . ,\phi_n(\nu))$ is the unique imputation that satisfies the following three axioms which are based on the idea of fairness~\cite{straffin1993game}:

\begin{enumerate}
 \item The Shapley value should depend only on $\nu$, and should respect any symmetries in $\nu$. That is, if players $i$ and $j$ are symmetric, then $\phi_j(\nu)=\phi_i(\nu)$.
\item If $\nu(S)=\nu(S \setminus \{i\})\ \forall S \subseteq N$, then $\phi_i(\nu)=0$. In other words, if player $i$ contributes nothing to any coalition, then the player can be considered as a dummy. Furthermore, adding a dummy should not affect the original game.
\item Consider two games defined on the same set of players, represented by $(N,\nu)$ and $(N,w)$. Define a sum game $(N,(\nu+w))$ where $(\nu+w)(S)$ = $\nu(S)+w(S)\ \forall S \subseteq N$. Also, if $\phi(\nu)$ and $\phi(w)$ represent the Shapley values of the two games, then the Shapley value of the sum game should satisfy $\phi(\nu+w) = \phi(\nu) +\phi(w)$.
\end{enumerate}

%Shapley proved that there is unique imputation which satisfies the above axioms.
%
For any general coalitional game with transferable utility $(N,\nu)$, the Shapley value of player $i$ is given by
\begin{eqnarray*}
\nonumber
 \phi_i(\nu) &=& \frac{1}{n!} \sum_{i\in S} (|S|-1)!(n-|S|)![\nu(S)-\nu(S\setminus\{i\})] \\
 &=&\frac{1}{n!}\sum_{\pi\in \Pi}x_i^\pi
\end{eqnarray*}
$\Pi=$ set of all permutations on $N$\\
$x_i^\pi=$ contribution of player $i$ to permutation $\pi$

\subsection{The Nucleolus}
%Nucleolus is a solution concept proposed by David Schmeidler~\cite{schmeidler1969nucleolus}.  
The basic motivation behind the nucleolus is that, instead of applying Shapley value (having general fairness axiomization), one can provide an allocation that minimizes the dissatisfaction of the players from the allocation they can receive in a game~\cite{saad2009coalitional}.

Let $x$ be any payoff vector (or allocation) and $q(x)$ be a vector whose components are the numbers $(\nu(S)-x(S))$ arranged in non-increasing order, where $S$ runs over all coalitions in $N$ except the grand coalition. Then, payoff vector $x$ is at least as acceptable as payoff vector $y$, if $q(x)$ is lexicographically less than $q(y)$; write it as $x \succeq y$. The nucleolus of a game is the set 
%of those points 
%which are `most acceptable', that is, 
$\{x \in X :x \succeq y, \; \forall y \in X \}$, where $X$ is the set of all  payoff vectors. It is shown that every game possesses a non-empty nucleolus and is unique~\cite{schmeidler1969nucleolus}. 
%Hence, the nucleolus is the imputation $x$ which minimizes the excesses in a non-increasing order. 
%
In other words, nucleolus is an allocation that minimizes the dissatisfaction of the players from the allocation they can receive in a game~\cite{schmeidler1969nucleolus}. 

For every imputation $x$, consider the excess defined by
\begin{displaymath}
 e_S(x) = \nu(S) - \sum_{i \in S} x_i
\end{displaymath}
$e_S(x)$ is a measure of unhappiness of $S$ with $x$. The goal of nucleolus is to minimize the most unhappy coalition (that is, the largest of the $e_S(x)$). 
The linear programming formulation is as follows:
\begin{displaymath}
\text{min }Z 
\end{displaymath}
subject to
\begin{displaymath}
 Z + \sum_{i \in S} x_i \geq \nu(S) \; \;\; \forall S \subseteq N
\end{displaymath}
\begin{displaymath}
\sum_{i \in N} x_i = \nu(N)
\end{displaymath}
The nucleolus $Nu(\nu)=(Nu_1(\nu),\ldots,Nu_n(\nu))$ of a game $(N,\nu)$ has the following properties~\cite{saad2009coalitional}:
\begin{enumerate}
 \item The nucleolus depends only on $\nu$, and respects any symmetries in $\nu$. That is, if players $i$ and $j$ are symmetric, then $Nu_j(\nu)=Nu_i(\nu)$.
\item If $\nu(S)=\nu(S\setminus \{i\})\ \forall S \subseteq N$, then $Nu_i(\nu)=0$. In other words, if player $i$ contributes nothing to any coalition, then the player can be considered as a dummy. 
%Furthermore, adding a dummy should not affect the original game.
\item If players $i$ and $j$ are in the same coalition, then the highest excess that $i$ can make in a coalition without $j$ is equal to the highest excess that $j$ can make in a coalition without $i$. This is derived from the fact that nucleolus lies in the \textit{Kernel} of the game, which is the set of all allocations $x$ such that 
\begin{displaymath}
\max_{\substack{S\subseteq N\setminus \{j\} \\ i \in S}} e_S(x)= \max_{\substack{T\subseteq N\setminus \{i\}\\ j \in T}} e_T(x)
\end{displaymath}

\end{enumerate}

Geometrically, nucleolus is the point in the core whose distance from the closest wall of the core is as large as possible.
The reader is referred to \cite{saad2009coalitional} for the detailed properties of nucleolus. 
%It combines a number of fairness criteria with stability. It is the imputation which is lexicographically central and thus fair and optimum in the min-max sense.

\subsection{The Gately Point}
\label{sec:gately}
Player $i$'s \textit{propensity to disrupt} the grand coalition is defined to be the following ratio~\cite{straffin1993game}.
\begin{equation}
\nonumber
\label{gatelyeq}
d_i(x) = \frac{\sum_{j\neq i}x_j - \nu(N\setminus\{i\})}{x_i - \nu(\{i\})}
\end{equation}
If $d_i(x)$ is large, player $i$ may lose something by deserting the grand coalition, but others will lose a lot more. The Gately point $Gv(\nu)=(Gv_1(\nu),\ldots,Gv_n(\nu))$ of a game is the imputation which minimizes the maximum propensity to disrupt. The general way to minimize the largest propensity to disrupt is to make all of the propensities to disrupt, equal. When the game is normalized so that $\nu(\{i\}) = 0$ for all $i$, the way to set all the $d_i(x)$'s equal is to choose $x_i$ in proportion to $\nu(N) - \nu(N\setminus\{i\})$.
That is,
\begin{equation}
\nonumber
Gv_i(\nu) = \left(\frac{\nu(N) - \nu(N\setminus\{i\})}{\sum_{j\in N}(\nu(N) - \nu(N\setminus\{j\}))}\right)\nu(N)
\end{equation}

\subsection{The $\tau$-value}

For each $i\in N$, let
$
M_i(\nu) = \nu(N) - \nu(N\setminus\{i\}) \text{ and }m_i(\nu) = \nu(\{i\})
$,
and let $M(\nu) = (M_i(\nu))_{i\in N}$ and $m(\nu) = (m_i(\nu))_{i\in N}$.
The $\tau$-value $\tau(\nu)=(\tau_1(\nu),\ldots,\tau_n(\nu))$ of a game is the unique solution concept which is {\em efficient} (or collectively rational) and has the following properties~\cite{tijs1987axiomatization}:
\begin{enumerate}
\item
The {\em minimal right property\/}, which implies that $\tau(\nu) = m(\nu) + \tau(\nu-m(\nu))$. 
So it does not matter for a player $i$,
whether $i$ gets the $\tau$-value payoff allocation in the game $(N,\nu)$,
or whether $i$ obtains first the minimal right payoff $m_i(\nu)$ in the game $(N,\nu)$ and then the $\tau$-value payoff allocation in the right reduced game $(N,(\nu-m(\nu)))$.
This property is a weaker form of the additivity property: $(\nu+w)(S)$ = $\nu(S)+w(S)\ \forall S \subseteq N$, which plays a role in the axiomatic characterization of the Shapley value.
\item
The {\em restricted proportionality property\/}, which implies that $\tau(\nu)$ is a multiple of the vector $M(\nu)$. 
So for games with minimal right payoff vector $m(\nu)$ as zero, the payoff allocation to the players is proportional to the marginal contribution of the players to the grand coalition.
\end{enumerate}

%The minimal right property is weaker than the additivity property, which plays a role in the axiomatic characterization of the Shapley value: together with individual rationality and efficiency additivity implies the minimal right property. The restricted proportionality property says that for games with minimal right vector zero, the dividend given to the players is proportional to the marginal contribution of the players to the grand coalition~\cite{tijs1987axiomatization}.
%For each $i\in N$, let
%\begin{equation}
%\nonumber
%\label{taumi}
%M_i(\nu) = \nu(N) - \nu(N\setminus\{i\}) \text{ and }m_i(\nu) = \nu(\{i\})
%\end{equation} 
The $\tau$-value selects the maximal feasible allocation on the line connecting $M(\nu) = (M_i(\nu))_{i\in N}$ and $m(\nu) = (m_i(\nu))_{i\in N}$~\cite{chun2007coincidence}.
For each convex game $(N,\nu)$,
\begin{equation}
\nonumber
\label{taueq}
\tau(\nu) = \lambda M(\nu) + (1-\lambda)m(\nu)
\end{equation}
where $\lambda \in [0,1]$ is chosen so as to satisfy
\begin{equation}
\nonumber
\label{tausatis}
\sum_{i\in N} [\lambda(\nu(N)-\nu(N\setminus\{i\})) + (1-\lambda)\nu(\{i\})] = \nu(N)
%\vspace{5mm}
\end{equation}

There exist a number of solution concepts in the literature on cooperative game theory; the ones explained above suffice for the purpose of this thesis.

\vspace{10mm}

With the required conceptual preliminaries and tools in hand, we now move on to the technical contributions of this thesis. The next chapter deals with the problem of orchestrating social network formation. The classical network problem focuses on predicting which network topologies are likely to emerge, given the conditions on the network parameters. So the problem studied in the next chapter is the inverse of the classical problem, since we aim to derive the conditions on the network parameters, given that we want the network to have a particular desired topology.

\chapter[Formation of Stable Strategic Networks with Desired Topologies]{Formation of Stable Strategic Networks with Desired Topologies
  \blfootnote{A part of this chapter is published as \cite{dhamalwine}:
Swapnil Dhamal and Y. Narahari. Forming networks of strategic agents with desired topologies. In Paul W. Goldberg, editor, {\em Internet and Network Economics (WINE)}, Lecture Notes in Computer Science, pages 504--511. Springer Berlin Heidelberg, 2012.}
  \blfootnote{A significant part of this chapter is published as \cite{dhamalsim}:
Swapnil Dhamal and Y. Narahari. Formation of stable strategic networks with desired topologies. {\em Studies in Microeconomics}, 3(2):158--213, 2015.}
}

\label{chap:nfsc}

\begin{quote}
%\begin{abstract} 
Many real-world networks such as social networks consist of strategic agents. The topology of these networks often plays a crucial role in determining the ease and speed with which certain information driven tasks can be accomplished. Consequently, growing a stable network having a certain desired topology is of interest. Motivated by this, we study the following important problem: given a certain desired topology, under what conditions would best response link alteration  strategies adopted by strategic agents, uniquely lead to formation of a stable network having the given topology. This problem is the inverse of the classical network formation problem where we are concerned with determining stable topologies, given the conditions on the network parameters. We study this interesting inverse problem by proposing (1) a recursive model of network formation and (2) a utility model that captures key determinants of network formation. Building upon these models, we explore relevant topologies such as star graph, complete graph, bipartite Tur\'an graph, and multiple stars with interconnected centers. We derive a set of sufficient conditions under which these topologies uniquely emerge, study their social welfare properties, and investigate the effects of deviating from the derived conditions.
%\end{abstract}
%
\end{quote}

%\noindent
%\textbf{Keywords:}
%strategic networks, network formation, social welfare, game theory, pairwise stability, 
%network topology, graph edit distance.

\newpage
\section{Introduction}
\label{sec:intro_nfsc}

A primary reason for networks such as social networks to be formed is that every person (or agent or node)  gets certain benefits from the network. These benefits assume different forms in different types of networks. These benefits, however, do not come for free. Every node in the network has to  incur
a certain cost for maintaining links with its immediate neighbors or direct friends. 
This cost takes the form of time, money, or effort, depending on the type of network. 
Owing to the tension between benefits and costs, self-interested or rational nodes 
think strategically while choosing their immediate neighbors. 
A stable network that forms out
of this process will have a topological structure that is dictated by the individual
utilities and the resulting best response strategies of the nodes.

The underlying social network structure plays a key role in determining the dynamics of several processes 
%over the network 
such as, the spread of epidemics~\cite{ganesh2005effect} and the diffusion of information~\cite{jacksonbook}. 
This, in turn, affects the decision of which nodes should be selected to be vaccinated \cite{abbassi2011toward}, 
or to trigger a campaign so as to either maximize the spread of certain information \cite{kempe2003maximizing} %\cite{narayanam2010shapley} 
or minimize the spread of an already spreading misinformation \cite{budak2011limiting}. %\cite{premm2012influence}.
Often, stakeholders such as a social network owner or planner, who work with the networks so formed, would like the network to have a certain desirable topology to facilitate efficient handling of 
%knowledge management, information retrieval, and information diffusion 
information driven tasks using the network. Typical examples of these tasks include spreading certain information to nodes (information diffusion), extracting certain critical information from nodes (information extraction), enabling optimal communication among nodes for maximum efficiency (knowledge management), etc. If a particular topology is an ideal one for the set of tasks to be handled, it would be useful to orchestrate network formation in a way that only the desired topology emerges as the unique stable topology.

A network in the current context can be naturally represented as a 
graph consisting of strategic agents called {\em nodes} 
and connections 
%or friendships 
among them called {\em links}. 
Bloch and Jackson~\cite{bloch2006definitions} examine a variety of stability and equilibrium notions that have been used to study strategic network formation.
% of networks consisting of strategic agents.
%
Our analysis in this chapter is based on the notion of {\em pairwise stability\/} which accounts for bilateral deviations arising from mutual agreement of link creation between two nodes, that Nash equilibrium fails to capture~\cite{jacksonbook}. Deletion is unilateral and a node can delete a link without consent from the other node. 
Consistent with the definition of pairwise stability, we consider that all nodes are homogeneous and they have global knowledge of the network (this is a common, well accepted assumption in the literature on social network formation~\cite{jacksonbook}).
%\footnote{This is a common, well accepted assumption in the literature on social network formation~\cite{jacksonbook}.
%}

Before we proceed further, 
we present two important definitions from the literature~\cite{jacksonbook} for ease of discussion.
Let $u_j(g)$ denote the utility of node $j$ when the network formed is $g$.

\begin{definition} %\citep{jacksonbook}
A network is said to be {\em pairwise stable} if it is a best response for a node not to delete any of its links and there is no incentive for any two unconnected nodes to create a link between them. So $g$  is pairwise stable if \\(a) for each edge $e = (i, j) \in g$, $u_i(g \backslash \{e\}) \leq u_i(g)$ and  $u_j(g \backslash \{e\}) \leq u_j(g)$, and\\
(b) for each edge $e' = (i, j) \notin g$, if $ u_i(g \cup \{e'\})>u_i(g) $, then $u_j(g \cup {e'})<u_j(g)$.
\end{definition}

\begin{comment} %SeptEdit
Sarangi and Gilles~\cite{sarangi2004sps} introduce a variation of pairwise stability called {\em strong pairwise stability}. 
A network is said to be strongly pairwise stable if it is a best response for a node not to delete any {\em subset} of its links and there is no incentive for any two unconnected nodes to create a link between them. 
The authors argue that since link deletion is a unilateral act, it should not be restricted to one link at a time.
However, when a node gets a chance to alter its links, it has to examine the power set of its current links in order to compute a subset of links to be deleted as its best response. 
%Though pairwise stable and strongly pairwise stable networks coincide for the {\em symmetric connections model}~\cite{jackson1996strategic}, this is not true for a general utility model. 
As nodes have limited computational resources, it is reasonable to consider pairwise stability as the equilibrium notion. 
%Furthermore, as computing best response becomes more tractable for the nodes when considering alteration of at most one link at a time, the analysis becomes more tractable for us.
%
\end{comment} %SeptEdit
\begin{comment} %SeptEdit
We recall another important property relevant to networks, namely, efficiency~\cite{jacksonbook}.
\end{comment} %SeptEdit

\begin{definition} %\citep{jacksonbook}
A network is said to be {\em efficient} if the sum of the utilities of the nodes in the network is maximal. So given a set of nodes $N$, $g$ is efficient if it maximizes $\sum_{j\in N} u_j(g)$, that is, for all networks $g'$ on $N$, $\sum_{j\in N} u_j(g) \geq \sum_{j\in N} u_j(g') $.
\end{definition}

Every network has certain parameters that influence its evolution process.
%In this paper, 
We refer to the tuple of values of these parameters as {\em conditions on the network}.
%
%(details in Appendix \ref{app:conditions_network}).
By conditions on a network, we mean a listing of the range of values taken by the various parameters that influence network formation, including the relations between these parameters.
For example, let $b_1$ be the benefit that a node gets from each of its direct neighbors, $b_2$ be the benefit that it gets from each node that is at distance two from it, and $c$ be the cost it pays for maintaining link with each of its direct neighbors. In real-world networks, it is often the case that $0 \leq b_2 \leq b_1$ and $c \geq 0$. The list of relations, say (1) $0 < b_2 < b_1$ and (2) $b_1-b_2 < c < b_1$, are the conditions on the network. Based on these conditions, the utilities of the involved nodes are determined, which in turn affect their (link addition/deletion) strategies, hence influencing the process of formation of that network.
Throughout this chapter, we ignore enlisting trivial conditions such as $0 \leq b_2 \leq b_1$ and $c \geq 0$.

In general, the evolution of a real-world social network would depend on several other factors such as the information diffusing through the network~\cite{ehrhardt2006diffusion,zhang2013strategic}.
%In this paper, 
For simplicity, we make a well accepted assumption that the network evolves purely based on the conditions on it and does not
depend on any other factor.
%, conditioning on the conditions on the network.
%\footnote{Ehrhardt et al.~\cite{ehrhardt2006diffusion} examine a model of network co-evolution, where information diffusion takes place along the current network and reciprocally, network formation depends on the information profile.}

\section{Motivation}
\label{sec:motiv_nfsc}
One of the key problems addressed in the literature on social network formation is: 
given a set of self-interested nodes and a model of social network formation, 
which topologies are  stable and which ones are  efficient.
The trade-off between stability and efficiency
is a key topic of  interest and concern in the literature on social network formation~\cite{jackson2005survey,jacksonbook}.

This work focuses on the inverse problem, namely, 
given a certain desired topology, under what conditions would  best response 
(link addition/deletion) strategies played 
by self-interested agents, uniquely lead to the formation of a stable (and perhaps efficient)
network with that topology. 
The problem becomes important because networks, such as an organizational network of a global company, play an important role in a
variety of knowledge management, information extraction, and information diffusion tasks. The topology of these networks is one of the major factors that decides the ease and speed with which the above tasks can be accomplished. In short, a certain topology might serve the 
%business 
interests of the network owner better.
%
%(a note in Appendix \ref{app:multiple_networks}).

%\subsection{Multiple Networks on a Set of Nodes}
%\label{app:multiple_networks}
%
In social networks, in general, it is difficult to figure out what the desired topology is. Moreover, it is possible that the social network is being formed for more than one reason. It can, however, be argued that given a set of individuals, there may not exist a unique social network amongst them. For instance, there may exist several networks like friendship network, collaboration network, organizational network, etc. on the same set of nodes. 
Different networks have different cost and benefit parameters, for example, from a mutual connection, two nodes may gain more in collaboration network than in friendship network, and also pay more cost. 
Furthermore, in real-world networks, a link between two nodes in one network may influence the corresponding link in another network. The influence may be positive (friendship trust leads to business trust) or negative (time spent for maintaining link in one network may adversely affect the corresponding link in another network).
For simplicity, we consider these various networks to be formed independently of each other.
A way to look at the problem under consideration is that, we focus on one such network at a time and derive conditions so that it has the desired topology or structure. 

\begin{figure}[t!]
\begin{tabular}{p{5cm} p{5cm} p{5cm}}
%\hspace{-.5cm}
%\begin{minipage}{.16\textwidth}
\centering
\includegraphics[scale=0.5]{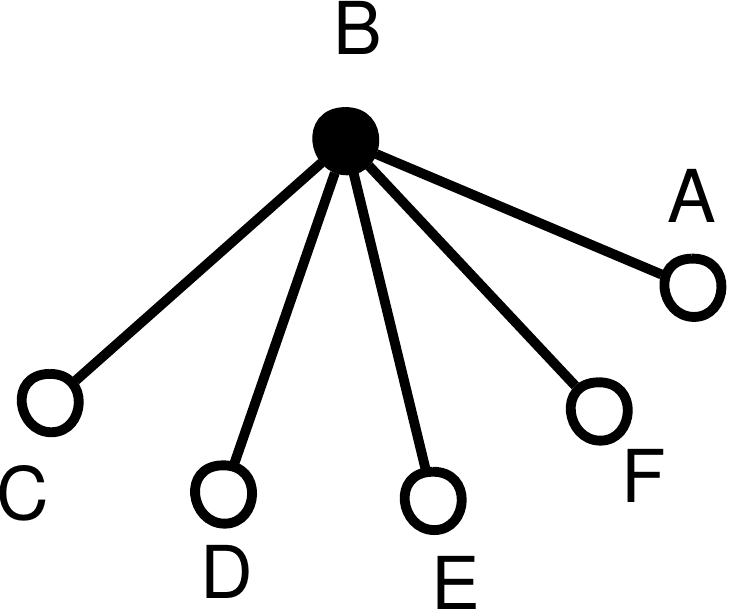}
%\label{fig:motiv_star}
%\end{minipage}
&
%\begin{minipage}{.10\textwidth}
\centering
\includegraphics[scale=0.5]{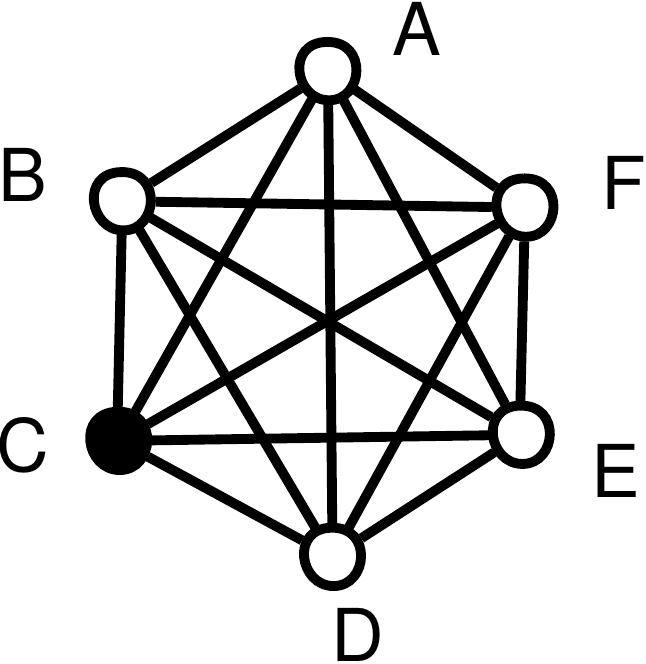}
%\label{fig:motiv_complete}
%\end{minipage}
&
%\begin{minipage}{.10\textwidth}
%\vspace{-.4cm}
%\centering
\hspace{7mm}
\includegraphics[scale=0.5]{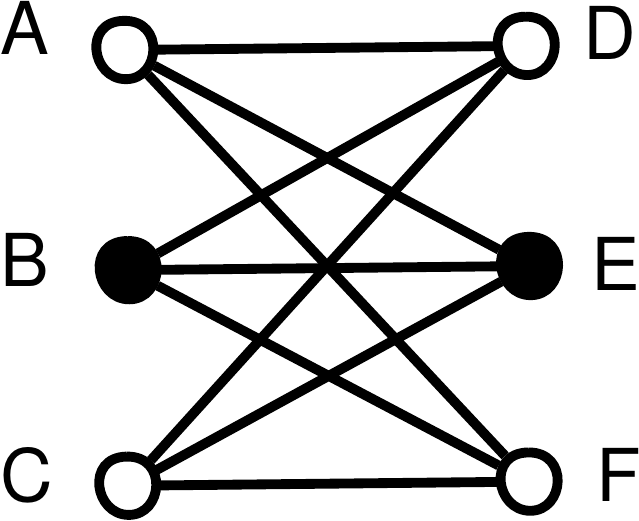}
%\label{fig:motiv_bipartite}
%\end{minipage}
\\
\centering (a) Star & 
\centering (b) Complete & 
\centering (c) Bipartite Tur\'an 
\vspace{5mm}
\end{tabular}

\begin{tabular}{p{7.5cm} p{7.5cm}}
%\begin{minipage}{.24\textwidth}
\centering
%\hspace{-.3cm}
\includegraphics[scale=0.5]{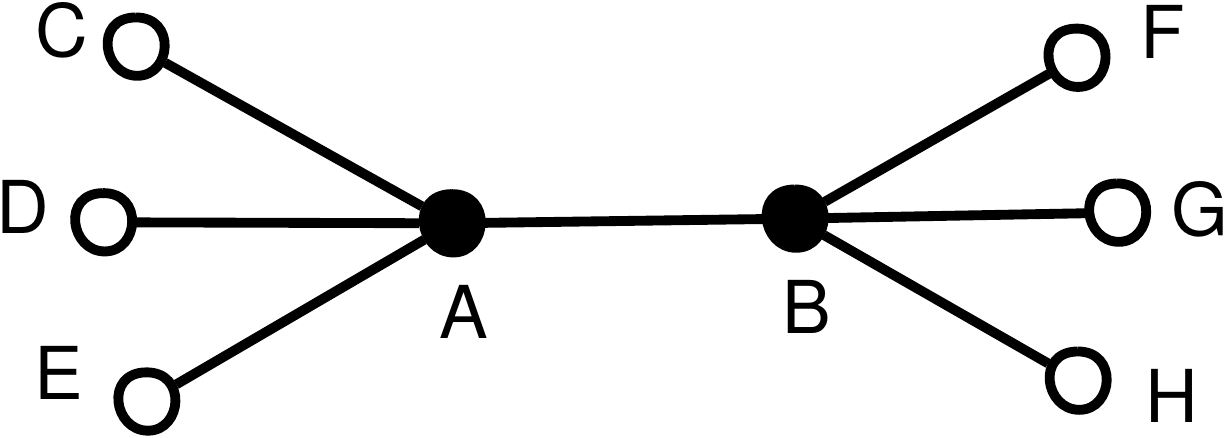}
%\label{fig:motiv_2star}
%\end{minipage}
&
%\begin{minipage}{.2\textwidth}
%\centering
\hspace{8.7mm}
%\vspace{.1cm}
%\hspace{-.1cm}
\includegraphics[scale=0.5]{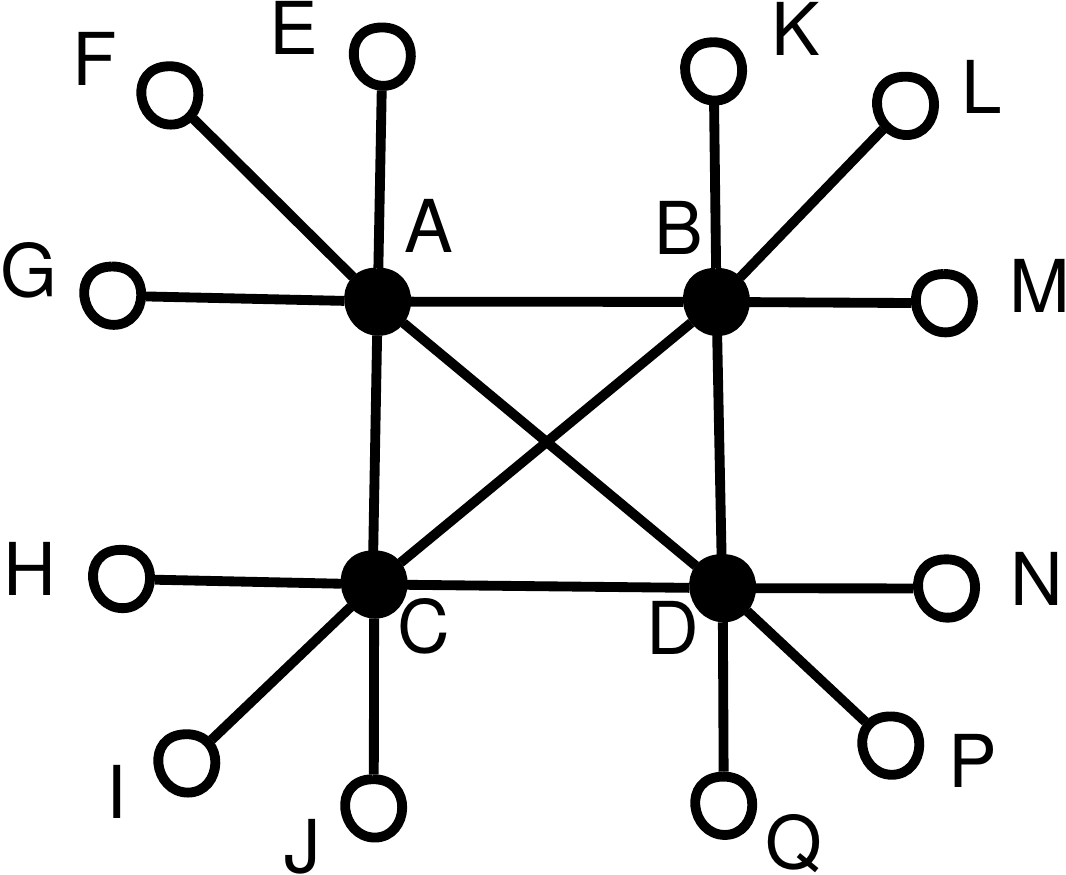}
%\label{fig:motiv_kstar}
%\end{minipage}
%\vspace{.2cm}
\\ 
\centering (d) 2-star & 
\centering (e) $k$-star ($k=4$)
\end{tabular}

\caption{Relevant topologies investigated in this work}
\label{fig:motiv_nfsc}
%\vspace{-.25cm}
\end{figure}

In this chapter, for the sake of brevity, we consider only a representative set of commonly encountered topologies for investigation. However, our approach is general and can be used to study other topologies, albeit with more involved analysis.
We motivate our investigation further with the help of several 
%examples of 
relevant topologies shown
in Figure~\ref{fig:motiv_nfsc}.

Consider a network where there is a need to rapidly spread certain critical information, requiring redundant ways of communication to account for any link failures. 
The information may be received by any of the nodes and it is important that all other 
nodes also get the information at the earliest.
%Also owing to the criticality of the information, it is desirable that there are redundant ways of communication, to take care of any link failures. 
In such cases, a complete network (Figure~\ref{fig:motiv_nfsc}(b)) would be ideal. 
In general, if the information received by any node is required to be propagated throughout the network within a certain number of steps $d$, the network's diameter should be bounded by the number $d$.

Consider a different scenario where the time required to spread the information is critical, but
there is also a need for moderation to verify the authenticity of the information before 
spreading it to the other nodes in the network (for example, it could be a rumor).
 Here a star network (Figure~\ref{fig:motiv_nfsc}(a)) would be desirable since the center would act as a moderator and any 
information that originates in any part of the network has to flow through the
moderator before it reaches other nodes in the network. 
Virus inoculation is a related example where a star network would be desirable 
since vaccinating the center may be sufficient to prevent spread of the virus
to other parts of the network, thus reducing the cost of vaccination.

Our next example concerns  two sections
 of a society where some or all members of a section
receive certain information simultaneously. The objective here is to forward
the information to the other section. Moreover, it is desirable to not have intra-section links to save on resources. In this case, it would be desirable to have 
a bipartite network. Moreover, if the information is critical and urgent, 
requiring redundancy, a complete bipartite network would be desirable. 
A bipartite Tur\'an network (Figure~\ref{fig:motiv_nfsc}(c)) is a practical special case where both sections
are required to be nearly of equal sizes.

Consider a generalization of the star network where there are $k$ centers and 
the leaf nodes are evenly distributed among them, that is, the difference between the number of leaf nodes connected to any two centers, is at most one. Such a network would be desirable 
when the number of nodes is expected to be very large and there is a need for 
decentralization for efficiently controlling information in the network. 
We call such a network, $k$-star network (Figures~\ref{fig:motiv_nfsc}(d-e)).

For similar reasons, if fast information extraction is the main criterion, certain topologies may be better than others. Information extraction in social networks can be thought of as the reverse of information diffusion. Also, an information extraction or search algorithm would work better on some topologies than others.

The problem under study also assumes importance in knowledge management. McInerney~\cite{mcinerney2002knowledge} defines knowledge management as an effort to increase useful knowledge within an organization, and highlights that the ways to do this include encouraging communication, offering opportunities to learn, and promoting the sharing of appropriate knowledge artifacts. An organization may want to develop a particular network within, so as to make the most of knowledge management. A complete network would be desirable if the nodes are trustworthy with no possibility of manipulation.
% or bias. 
For practical reasons, an organization may want nodes of different sections to communicate with each other and not within sections so that each node can aggregate knowledge received from nodes belonging to the other section, in its own way. A bipartite Tur\'an network would be desirable in such a case. Such a network may also be more desirable than the complete network in order to prevent inessential investment of time for communication within a section.

Similarly, for a variety of reasons, there may be a need to form networks having certain other structural properties.
%for instance, 
%having diameter less than a certain value, 
%such as nodes having similar degrees and so on.
So depending on the tasks for which the network would be used, a certain topology might
be more desirable than others. This provides the motivation for our work.

\section{Relevant Work}
\label{sec:relevant_nfsc}

Models of network formation in literature can be broadly classified as either simultaneous move models or sequential move models.
Jackson and Wolinsky~\cite{jackson1996strategic} propose a simultaneous move game model where nodes simultaneously propose the set of nodes with whom they want to create a link, and a link is created between any two nodes if they mutually propose a link to each other.
Aumann and Myerson~\cite{myerson20} provide a sequential move game model where nodes are farsighted, whereas
Watts~\cite{watts618} considers a sequential move game model where nodes are myopic. In both of these approaches and in any sequential network formation model in general, the resulting network is based on the ordering in which links are altered and owing to the assumed random ordering, it is not clear which networks would emerge.
 
 The modeling of strategic formation in a general network setting was first studied by
  Jackson and Wolinsky~\cite{jackson1996strategic} by proposing a utility model called {\em symmetric connections model}. This widely cited model, however, does not capture many key determinants involved in strategic network formation.
Since then, several utility models have been proposed in literature in the effort of capturing these determinants. 
 Jackson~\cite{jackson2003stability} reviews several such models in the literature and highlights that pairwise stable networks may not exist in some settings. 
 Hellmann and Staudigl \cite{hellmann2014evolution} provide a survey of random graph models and game theoretic models for analyzing network evolution. %OctEdit

Given a network, Myerson value~\cite{myerson1977graphs} gives an allocation to each of the involved nodes based on certain desirable properties.
 Jackson~\cite{jackson2005allocation} proposes a family of allocation rules that consider alternative network structures when allocating the value generated by the network to the individual nodes.
 Narayanam and Narahari~\cite{ramasuri1} investigate the topologies of networks formed with a generic model based on value functions and analyze resulting networks using 
 %one such value function, 
 Myerson value.
 %~\cite{myerson1977graphs}.
 %
 There have also been studies on stability and efficiency of specific networks such as R\&D networks \cite{konig2012efficiency}. %OctEdit
 Atalay \cite{atalay2013sources} studies sources of variation in social networks by extending the model in \cite{jackson2007meeting} by allowing agents to have varying abilities to attract contacts. % OctEdit
 %
 %Blume et al. \cite{blume2013network} study network formation in the presence of contagious risks such as financial, epidemic, etc. and provide bounds on the welfare of stable and efficient networks. %OctEdit

    Goyal and Joshi~\cite{goyal2006unequal} explore two specific models of network formation and arrive at circumstances under which networks exhibit an unequal distribution of connections across agents.
 Goyal and Vega-Redondo~\cite{goyal2007structural} propose a non-cooperative game model capturing bridging benefits wherein they introduce the concept of {\em essential nodes}, which is a part of our proposed utility model. Their model, however, does not capture the decaying of benefits obtained from remote nodes.
Kleinberg et al.~\cite{kleinberg2008strategic} propose a localized model that considers benefits that a node gets by bridging any pair of its neighbors separated by a path of length 2. Their model does not capture indirect benefits and bridging benefits that nodes can gain by being intermediaries between non-neighbors which are separated by a path of length greater than 2.
%
%\begin{comment} %SeptEdit
Under another localized model where a node's bridging benefits depend on its clustering coefficient, Vallam et al.~\cite{vallam2013topologies} study stable and efficient topologies.
%\end{comment} %SeptEdit

%
Hummon~\cite{hummon2000utility} uses agent-based simulation approaches to explore the dynamics of network evolution based on the symmetric connections model. 
Doreian~\cite{doreian2006actor}, given some conditions on a network, analytically arrives at specific networks that are pairwise stable using the same model. However, the complexity of analysis increases exponentially with the number of nodes and the analysis in the paper is limited to a network with only five nodes.
Some gaps in this analysis are addressed by
Xie and Cui~\cite{xie2008cost,xie2008note}.

Most existing models of social network formation assume that all nodes are present throughout the evolution of a network, thus allowing nodes to form links that may be inconsistent with the desired network.
For instance, if the desired topology is a star, 
%with certain conditions on the network, 
it is desirable to have conditions that ensure a link between two nodes, of which one would play the role of the center. But with the same conditions, links between other pairs would be created with high probability, leading to inconsistencies with the star topology.
Also, with all nodes present in an unorganized network, a random ordering over them in sequential network formation models adds to the complexity of analysis.
However, in most social networks, not all nodes are present from beginning itself. A network starts building up from a few nodes and gradually grows to its full form. Our model captures such a type of network formation.

There have been  a few approaches earlier to design incentives for nodes so that the resulting network is efficient.
Woodard and Parkes~\cite{woodard2003strategyproof} 
use mechanism design to
%design incentives 
ensure that the outcome is an efficient network. 
Mutuswami and Winter~\cite{mutuswami2002subscription} design a mechanism that ensures efficiency, budget balance, and equity. 
Though it is often assumed that the welfare of a network is based only on its efficiency, there are many
situations where this may not be true. A network
may not be efficient in itself, but it may be desirable for reasons
external to the network, as explained in Section~\ref{sec:motiv_nfsc}.

\section{Contributions of this Chapter}
\label{sec:gameinbrief}
In this chapter, we study the inverse of the classical network formation problem, that is, 
under what conditions would the desired topology uniquely emerge %be obtained as a pairwise stable network
when 
%self-interested 
agents adopt their best response strategies.  
Our specific contributions are summarized below.

\begin{itemize} 
\item We propose a recursive model of  network formation, with which we can guarantee that a network being formed retains a designated topology in each of its stable states. Our model ensures that, for common network topologies, the analysis can be carried out independent of the current number of nodes in the network and also independent of the upper bound on the 
number of nodes in the network.
%\item
 The utility model we propose captures most key aspects relevant to strategic network formation:
(a) benefits from immediate neighbors, (b) costs of maintaining links with immediate neighbors, (c) benefits from indirect neighbors, (d) bridging benefits, (e) intermediation rents, and (f) an entry fee  for entering the network. 
We then present our procedure for deriving sufficient conditions for the formation of a given topology as the unique one. (Section~\ref{sec:model})
\item Using the proposed models, we study common and important networks, namely, star network, complete network, bipartite Tur\'an network, and $k$-star network, 
%and network with a certain bounded diameter, 
and derive sufficient conditions under which these topologies uniquely emerge.
%\item 
We also investigate the efficiency (or social welfare) properties of the above network topologies. (Section~\ref{sec:analysis})
\item We introduce the concept of dynamic conditions on a network and study the effects of deviation 
from the derived sufficient conditions on the resulting network, using the notion of graph edit distance. 
%For this purpose, we use the notion of {\em graph edit distance}.
In this process, we develop a polynomial time algorithm for computing graph edit distance between a given graph and a corresponding $k$-star graph. % with same number of nodes as $g$.
(Section~\ref{sec:deviation})
\end{itemize}

To the best of our knowledge, this is the first detailed effort in investigating the problem of obtaining a desired topology uniquely in social network formation.
%
%The outline of the chapter is as follows.
%Section~\ref{sec:model} describes the proposed recursive model of network formation, the proposed utility model, and our procedure for deriving sufficient conditions for the formation of a given topology as the unique one.
%Section~\ref{sec:analysis} gives such sufficient conditions for formation of common and important networks, namely, star network, complete network, bipartite Tur\'an network, and $k$-star network, 
%%and network with a certain bounded diameter,
%and hence 
%%Section~\ref{sec:efficiency} 
%investigates the efficiency of these networks, assuming that they are formed using the derived  conditions.
%Section~\ref{sec:deviation} gives simulation results addressing the effects of deviating from the derived  conditions on the resulting network.
%Section~\ref{sec:conclusion_nfsc} concludes the chapter.
%% with discussion and some future directions.

\begin{figure} [!t]
\centering
\includegraphics[scale=0.42]{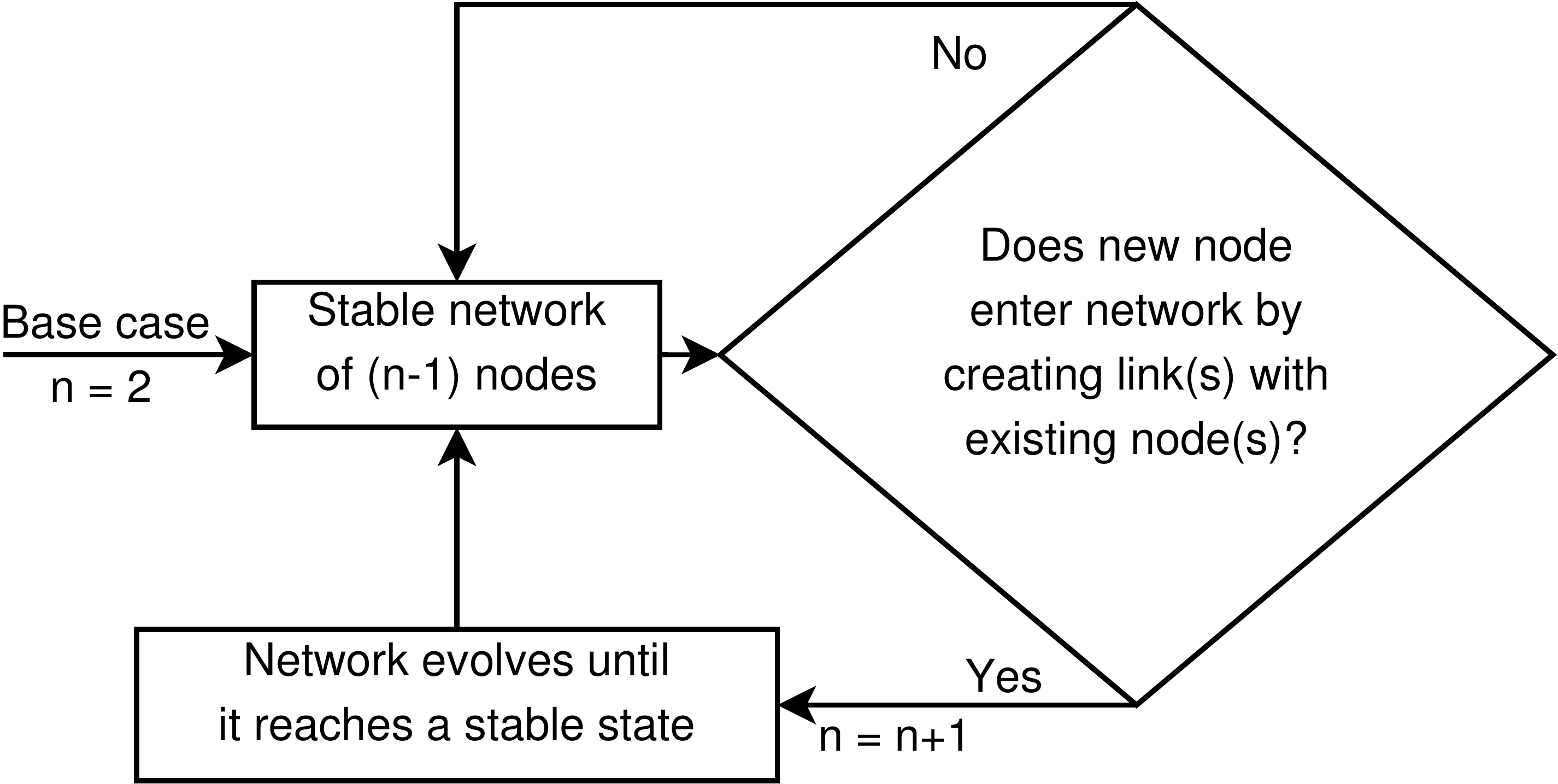}
\caption{
%RELOOK
Proposed recursive model of network formation}
\label{fig:model}
%\vspace{-.2cm}
%\vspace{-.25cm}
\end{figure}

\section{The Model}
\label{sec:model}

We consider the
%game is played amongst 
process of formation of a network consisting of 
strategic nodes, where each node aims at maximizing its utility it gets from the network. 

\subsection{A Recursive Model of Network Formation}

The network consists of $n$ nodes at any given time, where $n$ could vary over time. 
The process starts with one node, whose only strategy is to remain in its current state. The strategy of the second node is to either (a) not enter the network or (b) propose a link with the first node. We make a natural assumption that in order to be a part of the network, the second node has to propose a link with the first node and not vice versa. 
%Also, for successful link creation, utility of the first node should not decrease. 
Based on the model under study, the first node may or may not get to decide whether to accept this link. 
If this link is created, the second node successfully enters the network. Following this, the network evolves to reach a stable state after which, 
the third node considers entering the network. 
%If it enters the network by successfully creating a link with one of the first two nodes, 
The third node can enter the network by successfully creating link(s) with one or both of the first two nodes. In this chapter, we consider that at most one link is altered at a time, and so the third node can enter the network by successfully creating a link with exactly one of the already present nodes in the network. If it does, 
the network of these three nodes evolves. 
Once the network reaches a stable state, 
the fourth node considers entering the network, and
this process continues.
%, which thus results in the formation of a stable network of, say $n$ nodes. 
Note that in the above process, 
no node in the network of $n-1$ nodes can create a link with the newly entering $n^{th}$ node until the latter proposes and successfully creates a link 
%with one of the existing nodes in 
in order to enter
the network.
After the new node enters the network successfully, 
%nodes who get to make their move are chosen at random at all time and 
the network evolves 
until it reaches a stable state consisting of $n$ nodes. Following this, a new ${(n+1)}^{th}$ node considers entering the network and the process goes on recursively. The assumption that a node considers entering the network only when it is stable may seem unnatural in general networks, but can be justified in networks where entry of nodes can be controlled by a network administrator.
This recursive model is depicted in Figure~\ref{fig:model}.
Note that the model is not based on any utility model, network evolution model, or equilibrium notion.

%\subsection{Dynamic Conditions on a Network}
%\label{app:dynamic_conditions}
%
It can be observed at first glance that, if at some point of time, a new node fails to enter the network by failing to create a link with some existing node, the network will cease to grow. In such cases, it may seem that Figure~\ref{fig:model} 
goes into infinite loop for no reason, while it may have just pointed to an exit. The argument holds for the current social network models where the cost and benefit parameters, and hence the conditions on the network, are assumed to remain unchanged throughout the network formation process. 
But in real-world networks, this is often not the case and the conditions may vary over time or evolve owing to some internal or external factors. For instance, if the individual workload on the employees increases, the cost of maintaining link with each other also increases. On the other hand, if the workload is of collaborative nature, then the benefit parameters attain an increased value.
It is possible that no node successfully enters the network for some time, but with changes in the conditions, nodes may resume entering and the network may start to grow again. We explore this concept of {\em dynamic conditions} on a network in Section~\ref{sec:deviation}.

\subsection{Dynamics of Network Evolution}
\label{sec:directing}

The model of network evolution considered in this chapter is based on a sequential move game \cite{watts618}.
During the evolution phase, nodes which get to make a move are chosen at random at all time. Each node has a set of strategies at any given time and when it gets a chance to make a move, it chooses its {\em myopic best response} strategy which maximizes its immediate utility. 
A strategy can be of one of the three types, namely (a) creating a link with a node that is not its immediate neighbor, (b) deleting a link with an immediate neighbor, or (c) maintaining status quo. 
Note that a node will compute whether a link it proposes, decreases utility of the other node, because if it does, it is not its myopic best response as the link will not be accepted by the latter. 
Moreover, consistent with the notion of pairwise stability, if a node gets to make a move and altering a link does not strictly increase its utility, then it prefers not to alter it.
%
%As the network evolution model we consider is based on a sequential move game, it can be represented as an extensive form game tree. 
%
The aforementioned sequential move evolution process can be represented as an extensive form game tree. 

\subsubsection{Game Tree}
\label{sec:tree}

The entry of each node in the network results in one game tree, and so the network formation process results in a series of game trees, each tree corresponding to a sequential move game
(see Figure~\ref{fig:star}).
\begin{comment} %SeptEdit
Each node of a game tree (not to be confused with a node of the network) represents a network state, while 
\end{comment} %SeptEdit
Each branch represents a possible transition from a network state, owing to decision made by a node.
So, the root of a game tree represents the network state in which a new node considers entering the network.

A way to find an equilibrium in an extensive form game consisting of farsighted players, is to use backward induction~\cite{osborne}. 
%Our game is a special case of such a game where 
However, in our game, 
the players have bounded rationality, that is, their best response strategies are myopic. So instead of the regular backward induction approach or the bottom-up approach, we take a top-down approach for ease of understanding.
%, which results in derivation of the same set of conditions under which a topology gets formed. 
%
We now recall the definition of an {\em improving path}~\cite{jackson2002evolution}.

\begin{definition}
\label{def:improving} %\citep{jackson2002evolution}
An {\em improving path} is a sequence of networks, where each transition is obtained by either any two nodes choosing to add a mutual link or any node choosing to delete any of its links. 
\end{definition}

Thus, a pairwise stable network is one from which there is no improving path leaving it.
%
%In the literature, some studies observe that agents are myopic \cite{pantz497}, while some others observe otherwise \cite{van2014individual}.
The notion of improving paths is based on the assumption of myopic agents, who make their decisions 
%of altering links 
without considering how their actions affect the decisions of other nodes and hence the evolution of the network. 
\begin{comment} %OctEdit
Though 
this process of improving paths exhibits bounded rationality, it is a natural variation on best response dynamics and has experimental justifications~\cite{pantz497}. 
\end{comment} %OctEdit
%So, deriving using the top-down approach leads to an 
%intuitive 
%understanding of the dynamics of network evolution using the notion of improving paths.

\subsubsection{Notion of Types}
\label{sec:types}
As the order in which nodes take decisions is random, in a general game, the number of branches arising from each state in the game tree depends on the number of nodes, $n$, as well as the number of possible direct connections each node can be involved in (or number of possible direct connections with respect to each node), $n-1$. 
The complexity of analysis can, however, be significantly reduced by the notion of {\em types} using which, several nodes and links can be analyzed at once. This is a widely used technique in analyzing pairwise stability of a network. 
%
%We formalize the same using the following two definitions.
%We explain the notion of types in more detail in the later part of the paper as well as in Appendix \ref{app:notion_types}.
%
We now explain the notion of types in detail.

%\subsection{Notion of Types}
%\label{app:notion_types}
%
\begin{definition}
\label{def:typenodes}
Two nodes $A$ and $C$ of a graph $g$ are of the same type if there exists an automorphism
$f:V(g)\rightarrow V(g)$ such that $f(A)=C$, where $V(g)$ is the vertex set of $g$.
\end{definition}
The implication of nodes being of the same type is that, for any automorphism $f$, if a best response strategy of node $A$ is to alter its link with node $D$, then a best response strategy of $f(A)$ is to alter its link with $f(D)$. So at any point of time, it is sufficient to consider the best response strategies of one node of each type.
\begin{definition}
\label{def:typeconnections}
Two connections with respect to a node $B$, connections $BA$ and $BC$, are of the same type if there exists an automorphism
$f$ such that $f(A)=C$ and $f(B)=B$.
\end{definition}
The implication of connections being of the same type with respect to a node is that, the node is indifferent between the connections, irrespective of the underlying utility model. Different types of connections with respect to a node form different branches in the game tree.

\begin{wrapfigure}{l}{75mm}
%\vspace{-.35cm}
  \centering
    \includegraphics[scale=0.45]{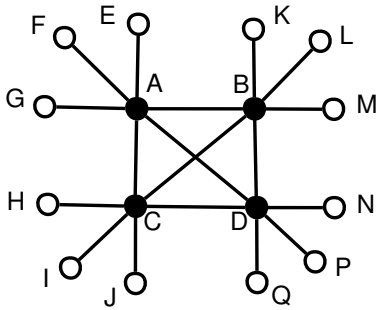} 
%  \vspace{-.15cm}
  \caption{A 4-star graph}
% \vspace{-.35cm}
  \label{fig:kstar}
\end{wrapfigure}

%\begin{figure}[h]
%%\vspace{-.35cm}
%  \centering
%    \includegraphics[scale=0.32]{motiv_kstar.pdf} 
%%  \vspace{-.15cm}
%  \caption{A 4-star graph}
%% \vspace{-.35cm}
%  \label{fig:kstar}
%\end{figure}
%
For example, in Figure~\ref{fig:kstar}, nodes $G$ and $H$ are of the same type. Also, the two possible connections $MG$ and $MH$ with respect to node $M$, are of the same type.
But the possible connections $EG$ and $EH$ with respect to node $E$, are not of the same type. So, these two strategies of node $E$, namely, connecting with node $G$ and connecting with node $H$, form different branches in the game tree, implying that the utilities arising from these two types of connections are not necessarily equal.

\subsubsection{Directing Network Evolution}
\label{directing_dymanics}

Our procedure for deriving sufficient conditions for the formation of a given topology as the unique topology, is modeled on the lines of {\em mathematical induction}. Consider a base case network with very few nodes (two in our analysis).
We derive conditions so that the network formed with these few nodes has the desired topology. Then using induction, we assume that a network with $n-1$ nodes has the desired topology, and derive conditions so that, the network with $n$ nodes, also has that topology. 
Without loss of generality, we explain this procedure with the example of star topology, referring to the game tree in Figure~\ref{fig:star}. 
%Conditions for base case need to be derived (we do this as part of actual derivation).
%deriving sufficient conditions under which the desired topology uniquely emerges). 
%Now 
Assuming that the network formed with $n-1$ nodes is a star, our objective is to derive conditions so that the network of $n$ nodes is also a star.

In Figure~\ref{fig:star}, at the root of the game tree, node $A$ is the newly entering $n^{th}$ node and the network is in state 0, where a star with $n-1$ nodes is already formed. 
%
%It is useful to recollect 
Recall that the complexity of analyzing a network
%in the sense that what all networks it can transit to, 
depends on the number of different types of nodes as well as the number of different types of possible connections with respect to a node in that network.
Note that in state 0, with respect to node $A$, there are two types of possible connections: (a) with the center and (b) with a leaf node. 
In states 1, 3, 4 and 5, there are two types of nodes, and two types of possible connections with respect to a leaf node and one with respect to the center. It will be seen that, the network is directed to not enter state 2, so even though there are four types of nodes in that state, it is not a matter of concern.

\begin{figure}[t]

\begin{tabular}{cc}
\hspace{1cm}
\begin{minipage}{3.7cm}
\vspace{-2.8in}
\centering
\includegraphics[scale=0.37]{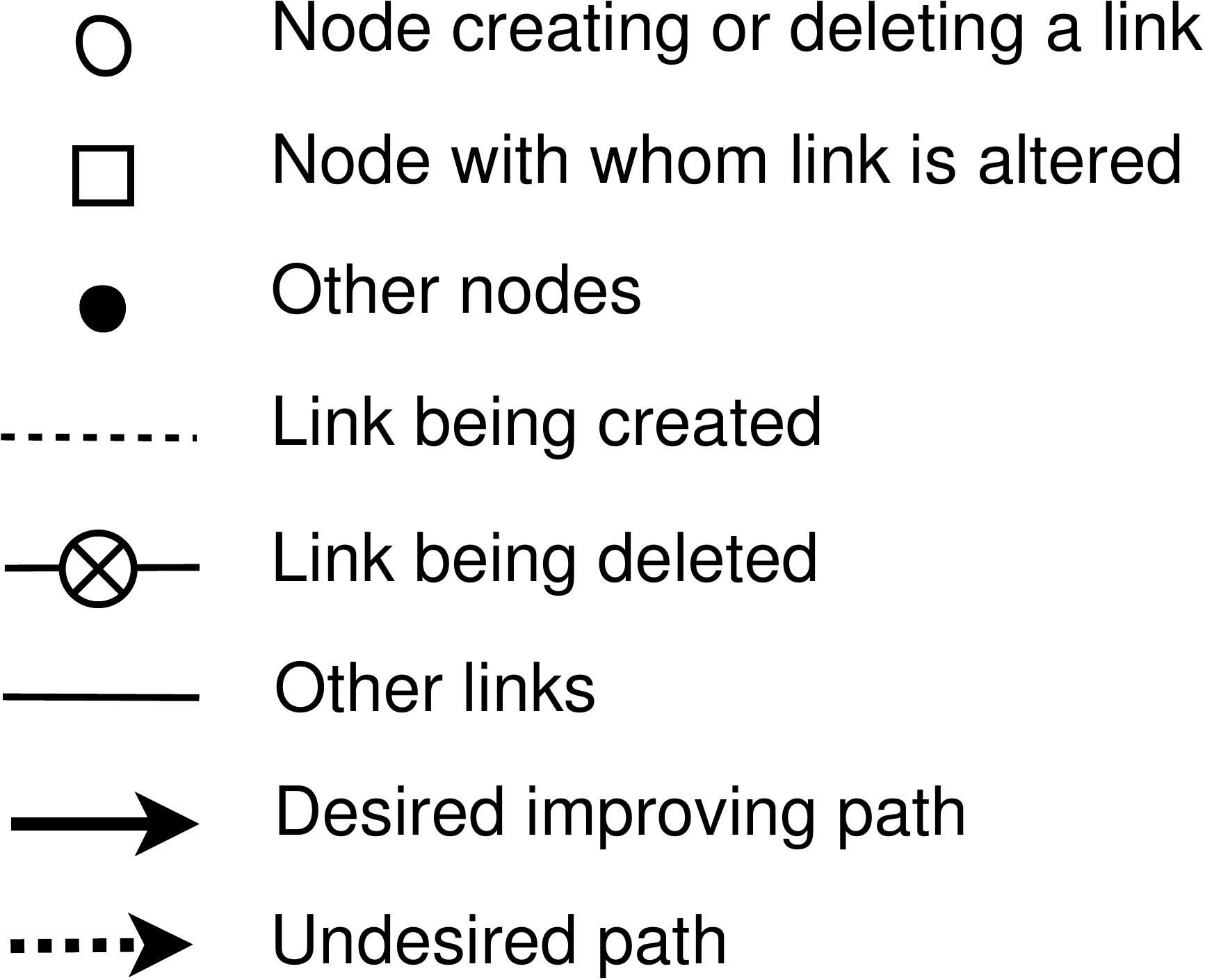}
\end{minipage}
&
\begin{minipage}{7cm}
\centering
\includegraphics[scale=0.37]{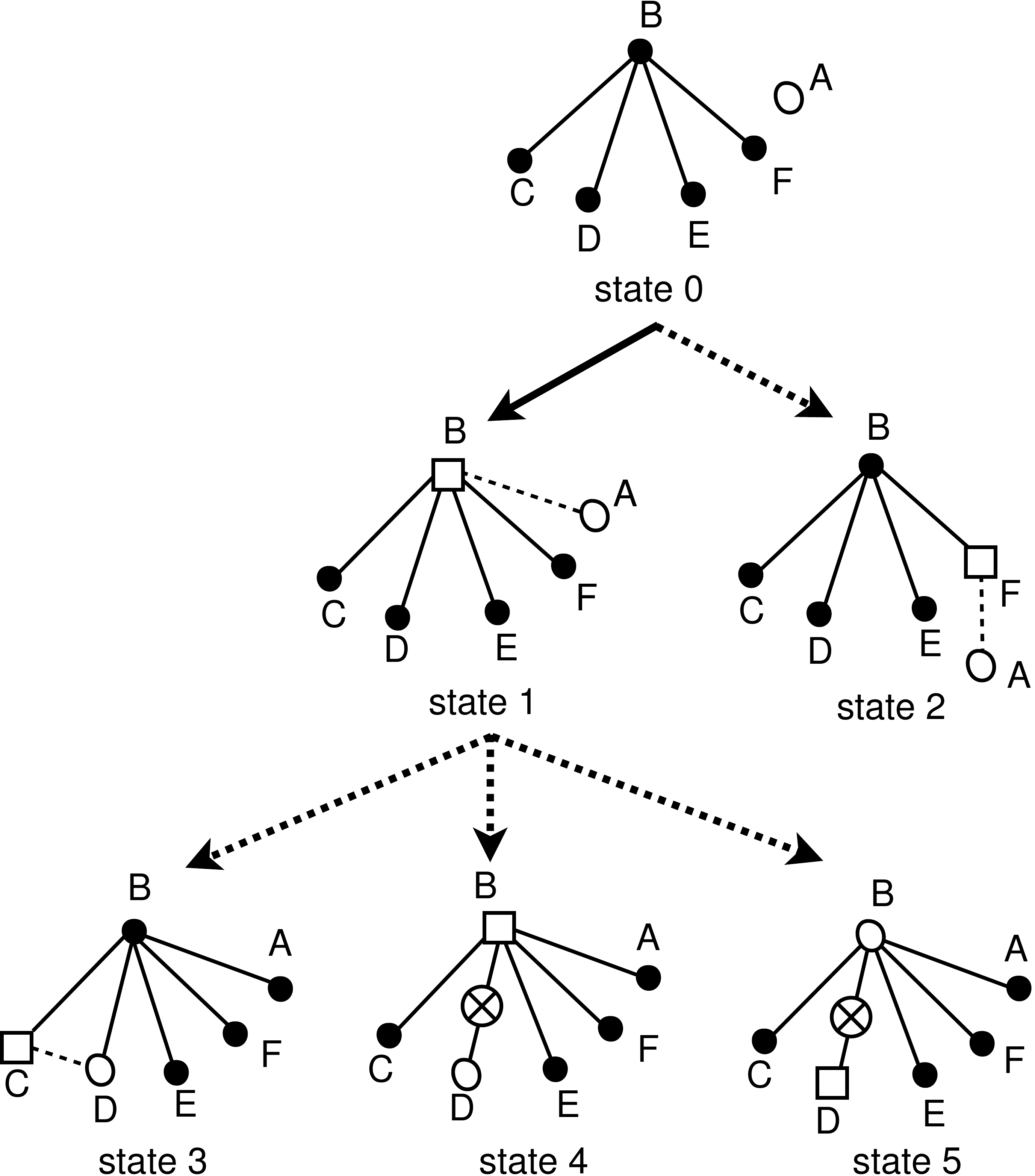}
\end{minipage}
\end{tabular}
\caption{Directing the network evolution for the formation of star topology uniquely}
\label{fig:star}
%\vspace{-.5cm}
\end{figure}

Let $u_j(s)$ be the utility of node $j$ when the network is in state $s$.
In state 0, as the newly entering node $A$ gets to make the first move,
we want it to connect to the center by choosing the improving path that transits from state 0 to state 1. So utility of node $A$ in state 1 should be greater than that in state 0, that is, $u_A(1) > u_A(0)$.
%Equation~(\ref{eqn:utility}) is one such utility function. 
Similarly, for node $B$ to accept the link from node $A$, $B$'s utility should not decrease, that is $u_B(1) \geq u_B(0)$. 
We do not want node $A$ to connect to any of the leaf nodes, that is, we do not want the network to enter state 2. Note that as we are interested in sufficient conditions, we are not concerned if there exists an improving path from state 2 that eventually results in a star (we discard state 2 in order to shorten the analysis). 
One way to ensure that the network does not enter state 2, irrespective of whether it lies on an improving path, is by making it less favorable for node $A$ than the desired state 1, that is, $u_A(2) < u_A(1)$. 
\begin{comment} %SeptEdit
That is, starting from state 0, there should exist an alternative improving path which gives node $A$ better utility than entering state 2. We say that, for node $A$, the strategy of creating a link with any of the leaf nodes is dominated by the strategy of creating link with the center. For state 2 to be dominated by state 1, $u_A(2) < u_A(1)$. 
\end{comment} %SeptEdit
Another way to ensure the same is by a condition for a leaf node such that, accepting a link from node $A$ decreases its utility, and so the leaf node does not accept the link, thus forcing node $A$ to connect to the center. That is, $u_j(2) < u_j(0)$ for any leaf node $j$. Thus the network enters state 1, which is our desired state.

To ensure pairwise stability of our desired state, no improving paths should lead out of it, for which we need to consider two cases. First, when node $B$ gets to make its move, it should not break any of its links (state 5),
%As we do not want the network to enter state 5, node $B$'s utility should be at least as good remaining in state 1 as entering 5, 
that is $u_B(1) \geq u_B(5)$.
Second, when any of the leaf nodes is chosen at random, it should neither create a link with some other leaf node (state 3), nor delete its link with the center (state 4).
%or (c) remain in the same state 1. As we want the network to stay in state 1, the additional 
The corresponding conditions are $u_j(1) \geq u_j(3)$ and $u_j(1) \geq u_j(4)$ for any leaf node $j$.

Thus we direct the network evolution along a desired improving path by imposing a set of conditions, ensuring that the resulting network is in the desired state or has the desired topology uniquely. 
In the evolution process of a network consisting of homogeneous nodes, the number of branches from a state of the game tree depends on the number of different types of nodes and the number of different types of possible connections with respect to a node, at that particular instant.
As we are primarily interested in the formation of special topologies in a recursive manner (nodes are already organized according to the topology and the objective is to extend the topology to that with one more node, so the existing nodes play the same role as before, and most or all of the existing links do not change), the number of different types of nodes as well as the number of different types of possible connections with respect to a node, are small constants at any instant, thus simplifying the analysis.

\subsection{The Utility Model}
\label{sec:utility}

Keeping in view the necessity of solving the problem in a setting that reflects real-world networks in a reasonably general way, we propose a utility model that captures several key determinants of social network formation.
In particular, our model is a considerable generalization of the extensively explored symmetric connections model~\cite{jackson1996strategic} and also builds upon other well known models in literature~\cite{goyal2007structural,kleinberg2008strategic}.
Furthermore, as nodes have global knowledge of existing nodes in the network while making their decisions (for instance, proposing a link with a faraway node), we propose a utility model that captures the global view of indirect and bridging benefits. 

\begin{definition} \cite{goyal2007structural}
\label{def:essential}
A node $j$ is said to be {\em essential} for nodes $y$ and $z$ if $j$ lies on every path joining $y$ and $z$.
% in the network. 
%That is, $j \in E(x,y)$, the set of essential nodes connecting $y$ and $z$.
\end{definition}

Whenever nodes $y$ and $z$ are directly connected, they get the entire benefits arising from the direct link. On the other hand, when they are indirectly connected with the help of other nodes, of which at least one is essential, $y$ and $z$ lose some fraction of the benefits arising from their communication, in the form of intermediation rents paid to the essential nodes without whom the communication is infeasible. 

Let $E(x,y)$ be the set of essential nodes connecting nodes $y$ and $z$.
The model proposed by Goyal and Vega-Redondo~\cite{goyal2007structural} suggests that the benefits produced by $y$ and $z$ be divided in a way that $x$, $y$, and the nodes in $E(x,y)$ get fraction $\frac{1}{|E(x,y)|+2}$ each.
%, where $e(x,y) = |E(x,y)|$. 
However, in practice, if nodes $y$ and $z$ can communicate owing to the essential nodes connecting them, that pair would want to enjoy at least some fraction of the benefits obtained from each other, since that pair is the real producer of these benefits (and possess human characteristics such as ego and prestige). That is, the pair would not agree to give away more than some fraction, say $\gamma$, to the corresponding set of essential nodes. As this fact is known to all nodes, in particular, to the set of essential nodes, they as a whole will charge the pair exactly $\gamma$ fraction as intermediation rents. As each essential node in the set is equally important for making the communication feasible, it is reasonable to assume that the intermediation rents are equally divided among them.

It can be noted that nodes which lie on every shortest path connecting $y$ and $z$, but are not essential for connecting them, also have bargaining power, since without them, the indirect benefits obtained from the communication would be less. And so, they should get some fraction proportional to their differential contribution, that is, the indirect benefits produced through the shortest path minus the indirect benefits produced through the second shortest path.
But, for simplicity of analysis, we ignore this differential contribution and assume that nodes that lie on path(s) connecting $y$ and $z$, but are not essential, do not get any share of the intermediation rents. So, when $y$ and $z$ are indirectly connected with the help of other nodes of which none is essential, they get the entire indirect benefits arising from their communication.

We now describe the determinants of network formation that our model captures, and thus obtain expression for the utility function.
Let $N$ be the set of nodes present in the given network,
$d_j$ be the degree of node $j$,
$l(j,w)$ be the shortest path distance between nodes $j$ and $w$,
$b_i$ be the benefit obtained from a node at distance $i$ in absence of rents (assume $b_\infty=0$),
and
$c$ be the cost for maintaining link with an immediate neighbor.

\begin{comment} %SeptEdit
Table~\ref{tab:notation} enlists the notation we use in the paper.

\begin{table}[t]
\caption{Notation for the Proposed Utility Model}
\label{tab:notation}
\centering
  \begin{tabular}{ p{.7cm}  l }
 \hline  \hline
\T \B 
$N$ & set of nodes present in the given network \\ \hline
\T \B 
$j$ & a typical node in the network \\ \hline
\T \B 
$u_j$ & net utility that node $j$ gets from the given network \\ \hline
\T \B 
$d_j$ & degree of node $j$ \\ \hline
\T \B 
$b_i$ & benefit from a node at distance $i$ in absence of rents \\ \hline
\T \B 
$c$ & cost for maintaining link with an immediate neighbor \\ \hline
\T \B 
$l(j,w)$ & shortest path distance between nodes $j$ and $w$ \\ \hline
\T \B 
$E(j,w)$ & set of nodes essential to connect $j$ and $w$ \\ \hline
%\T \B 
%$e(j,w)$ & cardinality of $E(j,w)$ \\ \hline
\T \B 
$\gamma$ & fraction of indirect benefit paid to corresponding set of \\ & essential nodes \\ \hline
\T \B 
$c_0$ & network entry factor (see 
%Section~\ref{sec:fee}: 
{\em Network Entry Fee}) \\ \hline
\T \B 
$\text{T}(j)$ & node to which node $j$ connects to enter the network \\ \hline
\T \B 
$\textbf{I}_{\{j=\text{NE}\}}$ & 1 when $j$ is a newly entering node about to create its \\ & first  link,  else 0 \\ 
\hline  \hline
  \end{tabular}
\end{table}
\end{comment} %SeptEdit

%\subsubsection{Network Entry Fee}
%\label{sec:fee}
\textbf{\textit{(1) Network Entry Fee:}} 
Since nodes enter a network one by one, we introduce the notion of {\em network entry fee}. 
This fee corresponds to some cost a node has to bear in order to be a part of the network. 
It is clear that, if a newly entering node wants its first connection to be with an existing node which is of high importance or degree, then it has to spend more time or effort. 
So we assume the entry fee that the former pays to be an increasing function of the latter's degree, say $d_{\text{T}}$. 
For simplicity of analysis, we assume the fee to be directly proportional to $d_{\text{T}}$ and call the proportionality constant, {\em network entry factor} $c_0$.

%

%\subsubsection{Direct Benefits}
%\label{sec:direct}
\textbf{\textit{(2) Direct Benefits:}}
These benefits are obtained from immediate neighbors  in a network.
For a node $j$, these benefits equal $b_1$ times $d_j$.

%\subsubsection{Link Costs}
%\label{sec:cost}
\textbf{\textit{(3) Link Costs:}}
These costs are the amount of resources like time, money, and effort a node has to spend in order to maintain links with its immediate neighbors.
For a node $j$, these costs equal $c$ times $d_j$.

%\subsubsection{Indirect Benefits}
%\label{sec:indirect}
\textbf{\textit{(4) Indirect Benefits:}}
These benefits are obtained from indirect neighbors, and these decay with distance $(b_{i+1} < b_i)$.
In the absence of rents, the total indirect benefits that a node $j$ gets is $\sum_{w \in N, \text{ } l(j,w)>1}{b_{l(j,w)}}$.

%\leavevmode

%\subsubsection{Intermediation Rents}
%\label{sec:rents}
\textbf{\textit{(5) Intermediation Rents:}}
Nodes pay a fraction $\gamma$ ($0 \leq \gamma < 1$) of the indirect benefits, in the form of additional favors or monetary transfers to the corresponding set of essential nodes, if any. The loss incurred by a node $j$ due to these rents is $\sum_{w \in N , \text{ } E(j,w)\neq \phi}{\gamma b_{l(j,w)}}$.

%\leavevmode

%\subsubsection{Bridging Benefits}
%\label{sec:bridging}
\textbf{\textit{(6) Bridging Benefits:}}
\begin{comment} %SeptEdit
In our model, a node gets bridging benefits for enabling communication between pairs of nodes which are otherwise disconnected.
Two nodes pay a fraction $\gamma$ of the indirect benefits to the set of essential nodes connecting them, which is assumed to be equally divided among the essential nodes connecting that pair. 
\end{comment} %SeptEdit
Consider a node $j \in E(y,z)$.
%be one of the essential nodes connecting two nodes $y$ and $z$.
Both $y$ and $z$ benefit $b_{l(y,z)}$ each and so this indirect connection produces a total benefit of $2 b_{l(y,z)}$. 
As described earlier, each node from the set $E(y,z)$ gets a fraction $\frac{\gamma}{|E(y,z)|}$, the absolute benefits being $\left( \frac{\gamma}{|E(y,z)|} \right) 2 b_{l(y,z)}$.
So the bridging benefits obtained by a node $j$ from the entire network is $\sum_{j \in E(y,z),\text{ }\{y,z\}\subseteq N}{ \left( \frac{\gamma}{|E(y,z)|} \right) 2 b_{l(y,z)} }$.

%\leavevmode

%\subsubsection{Utility Function}
%\label{sec:utilityfn}
\textbf{\textit{Utility Function:}}
The utility of a node $j$ is a function of the network, that is, $u_j:g \rightarrow \mathbb{R}$. We drop the notation $g$ from the following equation for readability.
Summing up all the aforementioned determinants of network formation that our model captures, we get
%the utility of node $j$ is 
%given by
%
\begin{equation}
%\hspace{-2cm}
\label{eqn:utility}
\begin{split}
u_j =& -c_0d_{\text{T}(j)}\textbf{I}_{\{j=\text{NE}\}} + d_j(b_1-c) +\sum_{\substack{w \in N \\l(j,w)>1}}{b_{l(j,w)}} \\
& - \sum_{\substack{w \in N \\E(j,w)\neq \phi}}{\gamma b_{l(j,w)}}  
    + \sum_{\substack{j \in E(y,z) \\ \{y,z\}\subseteq N}}{ \left( \frac{\gamma}{|E(y,z)|} \right) 2 b_{l(y,z)} } 
    \end{split}
\end{equation}
where 
$\text{T}(j)$ is the node to which node $j$ connects to enter the network, and
$\textbf{I}_{\{j=\text{NE}\}}$ is 1 when $j$ is a newly entering node about to create its first link, else it is 0.
%The way the utility of a node in a network is computed, will be described in detail in Section~\ref{sec:analysis}.

%
%\section{Formation of Relevant Topologies Uniquely}
\section{Analysis of Relevant Topologies}
\label{sec:analysis}

%In this section, we analyze the dynamics of formation of several relevant network topologies, namely star, complete graph, bipartite Tur\'an graph, 2-star, and $k$-star, and derive sufficient conditions under which these and only these topologies are formed.
%In this section, 
Using the proposed model of recursive and sequential network formation and the proposed utility model,
%we analyze some common network topologies using the proposed model of recursive and sequential network formation and the proposed utility model.
we provide sufficient conditions under which several relevant network topologies, namely star, complete graph, bipartite Tur\'an graph, 2-star, and $k$-star, uniquely emerge as pairwise stable networks.
Note that as the conditions derived for any particular topology are sufficient, 
% under the given setting, 
 there may exist alternative conditions that result in the same topology uniquely.

%\subsection{Sufficient Conditions for Formation of a Star Topology Uniquely}
\subsection{Sufficient Conditions for the Formation of Relevant Topologies Uniquely}

We use Equation~(\ref{eqn:utility}) for mathematically deriving the conditions.

\begin{proposition}
\label{thm:star}
For a network, if $b_1-b_2 + \gamma b_2 \leq c < b_1$ and $c_0 < \left( 1-\gamma \right) \left( b_2-b_3 \right)$, 
the unique resulting topology is star.
\end{proposition}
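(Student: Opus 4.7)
The plan is to prove this by induction on the number of nodes $n$, following the recursive model and the extensive-form game tree depicted in Figure~\ref{fig:star}. The base case is $n=2$, where the network is trivially a star: I need only check that the second node strictly benefits from entering, i.e.\ $b_1 - c - c_0 > 0$, which follows from $c < b_1$ and the fact that $c_0$ is bounded above by $(1-\gamma)(b_2-b_3)$, a quantity that may be taken small relative to $b_1 - c$ (or verified directly). For the inductive step, I assume that the $(n-1)$-node network is a star with center $B$, and I trace through the six transitions labelled in Figure~\ref{fig:star}: (i) $u_A(1)>u_A(0)$ so $A$ enters by linking to $B$; (ii) $u_B(1)\geq u_B(0)$ so $B$ accepts; (iii) $u_A(1)\geq u_A(2)$ so $A$ prefers the center over any leaf; (iv) $u_j(1)\geq u_j(3)$ so two leaves do not create a mutual link; (v) $u_j(1)\geq u_j(4)$ so a leaf does not sever its link to the center; and (vi) $u_B(1)\geq u_B(5)$ so the center does not drop any leaf.

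Using Equation~\eqref{eqn:utility}, I compute the relevant utilities. In state~$1$, a leaf obtains $(b_1-c) + (n-2)(1-\gamma)b_2$ because the center is essential for each indirect connection, while the center obtains $(n-1)(b_1-c) + (n-1)(n-2)\gamma b_2$ from direct links plus its bridging rents over all $\binom{n-1}{2}$ leaf pairs. The newcomer $A$ in state~$1$ additionally pays the entry fee $c_0(n-1)$ to link to $B$. In state~$2$, where $A$ attaches to a single leaf $L$, $A$ pays entry fee $2c_0$, is at distance $2$ from $B$ (with $L$ essential) and at distance $3$ from the remaining $n-3$ leaves (with both $L$ and $B$ essential).

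The two crucial inequalities are (iii) and (iv), and these are where the hypotheses appear sharply. A direct calculation gives
\begin{equation*}
u_A(1) - u_A(2) = (n-3)\bigl[(1-\gamma)(b_2 - b_3) - c_0\bigr],
\end{equation*}
which is nonnegative precisely under the assumed bound $c_0 < (1-\gamma)(b_2-b_3)$, with the case $n=3$ trivial since state~$2$ is then itself a star. For (iv), comparing a leaf's utility before and after forming a mutual link with another leaf (noting that $B$ remains essential for that leaf's reach to every other leaf, while the newly linked neighbour is not essential for anything), the difference telescopes to
\begin{equation*}
u_j(1) - u_j(3) = c - b_1 + (1-\gamma)b_2,
\end{equation*}
which is nonnegative exactly when $c \geq b_1 - b_2 + \gamma b_2$. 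Inequalities (ii), (v), (vi) reduce to $b_1 - c > 0$, which is given. Inequality (i) follows by writing $u_A(1) = (b_1-c) - c_0 + (n-2)[(1-\gamma)b_2 - c_0]$ and using $c < b_1$ together with $(1-\gamma)b_2 - c_0 > 0$, itself implied by the hypothesis on $c_0$.

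The main obstacle will be the careful book-keeping of essential nodes, shortest-path distances, and bridging rents across states, especially in state~$3$, where the new leaf--leaf edge reshapes the shortest-path tree while leaving $B$ essential for most pairs. A secondary subtlety is ensuring that state~$2$ is genuinely suppressed for every $n\geq 4$; for $n=3$ state~$2$ coincides (up to isomorphism) with state~$1$, so no separation is needed. Once the six comparisons are established, pairwise stability of the $n$-node star is immediate from Definition~\ref{def:ps}, and the induction carries the conclusion through to every $n$, yielding the star as the unique resulting topology under the stated conditions.
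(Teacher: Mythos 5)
Your overall strategy is the same as the paper's: induction on $n$ via the recursive model, walking through the six transitions of the game tree in Figure~\ref{fig:star}, and isolating the two binding inequalities $u_A(1)>u_A(2)$ (giving $c_0<(1-\gamma)(b_2-b_3)$) and $u_j(1)\geq u_j(3)$ (giving $c\geq b_1-b_2+\gamma b_2$). Your computations of those two differences, $(n-3)\bigl[(1-\gamma)(b_2-b_3)-c_0\bigr]$ and $c-b_1+(1-\gamma)b_2$, agree with the paper, as does the treatment of bridging rents for the center.

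There is, however, a concrete error in your entry-fee bookkeeping, and it is not cosmetic. The fee is $c_0\, d_{\text{T}(j)}$ where $d_{\text{T}(j)}$ is the degree of the target node \emph{before} the new link is formed: in the base case the first node has degree $0$ and the fee is zero, and a newcomer joining the center of an $(n-1)$-node star pays $(n-2)c_0$, not $(n-1)c_0$; joining a leaf costs $c_0$, not $2c_0$. You consistently use the post-link degree, which adds a spurious $c_0$ to every entry utility. This cancels in the difference $u_A(1)-u_A(2)$ (which is why your key formula still comes out right), but it does not cancel in the absolute entry conditions. Your base-case requirement $b_1-c-c_0>0$ and your verification of transition (i) via ``$c<b_1$ together with $(1-\gamma)b_2-c_0>0$'' do not follow from the hypotheses: take $b_1=1$, $b_2=0.9$, $b_3=0.1$, $\gamma=0$, $c=0.95$, $c_0=0.7$; all stated conditions hold, yet $b_1-c-c_0<0$, so under your fee convention no node would ever enter and the proposition would fail. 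The remark that $c_0$ ``may be taken small relative to $b_1-c$'' is not available --- $c_0$ is a given parameter, not a choice. With the correct convention the base case reduces to $c<b_1$ and transition (i) becomes $b_1-c+(n-2)\bigl[(1-\gamma)b_2-c_0\bigr]>0$, which does follow from $c<b_1$ and $c_0<(1-\gamma)(b_2-b_3)\leq(1-\gamma)b_2$, exactly as in the paper. Fixing the fee convention repairs the argument; everything else in your outline is sound.
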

\begin{proof}
Refer to Figure~\ref{fig:star} throughout the proof. For the base case of $n=2$, the requirement for the second node to propose a link 
to the first is that its utility should become strictly positive. Also as the first node has degree $0$, there is no entry fee.
\begin{equation}
\label{B1for2} 
0 < b_1-c  \iff c < b_1
\end{equation}
Now, consider a star consisting of $n-1$ nodes. Let the newly entering $n^{th}$ node get to make a decision of whether to enter the network. For $n\geq3$, if the entering node connects to the center, it gets indirect benefits of $b_2$ each from $n-2$ nodes. But as the center is essential for enabling communication between newly entering node and other leaf nodes, the new node has to pay $\gamma$ fraction of these benefits to the center. Also, it has to pay an entry fee of $(n-2)c_0$ as the degree of center is $n-2$. So in Figure~\ref{fig:star}, 
$u_A(0) < u_A(1)$ gives
\begin{equation}
\nonumber
0 < b_1-c+(n-2) \left( 1-\gamma \right) b_2-(n-2)c_0
\end{equation}
\begin{equation}
\nonumber
\iff  c < b_1+(n-2) \left( \left( 1-\gamma \right) b_2-c_0 \right)
\end{equation}
As it needs to be true for all $n \geq 3$, we set the condition to
\begin{equation}
\nonumber
 c < \min_{n \geq 3} \Big\{ b_1+(n-2) \left( \left( 1-\gamma \right) b_2-c_0 \right) \Big\}
\end{equation}
\begin{equation}
\label{B1}
\Longleftarrow  c <b_1+  \left( 1-\gamma \right)  b_2-c_0
\end{equation}
The last step is obtained so that the condition for link cost is independent of the upper limit on the number of nodes, by enforcing
\begin{equation}
\label{B1forc0}
c_0 \leq  \left( 1-\gamma \right) b_2
\end{equation}
which enables us to substitute $n=3$ and the condition holds for all $n\geq 3$.\\
For the center to accept a link from the newly entering node, we need to have $u_B(0) \leq u_B(1)$.
For $n=2$, the requirement for the first node to accept link from the second node is $0 \leq b_1-c$ which is satisfied by Inequality~(\ref{B1for2}).
For $n=3$, as the center is essential for connecting the other two nodes separated by distance two, it gets $\gamma$ fraction of $b_2$ from both the nodes. So it gets bridging benefits of $2 \gamma b_2$.
\begin{equation}
\nonumber
 b_1-c \leq 2(b_1-c)+2\gamma b_2
\end{equation}
\begin{equation}
\nonumber
\iff c \leq b_1+ 2\gamma b_2
\end{equation}
This condition is satisfied by Inequality~(\ref{B1for2}).
For $n\geq 4$, prior to entry of the new node, the center alone bridged $\dbinom{n-2}{2}$
\normalsize pairs of nodes at distance two from each other, while after connecting with the new node, the center is the sole essential node for \small{$\dbinom{n-1}{2}$} 
\normalsize such pairs.
So the required condition:
%\begin{small}
\begin{equation}
\nonumber
 (n-2)(b_1-c)+\gamma \dbinom{n-2}{2}2 b_2 \leq (n-1)(b_1-c)+\gamma \dbinom{n-1}{2}2 b_2
\end{equation}
%\end{small}
This condition is satisfied by Inequality~(\ref{B1for2}) for all $n \geq 4$.\\
For the newly entering node to prefer the center over a leaf node as its first connection (not applicable for $n=2$ and $3$), we need $u_A(1) > u_A(2)$.
%\begin{small}
\begin{equation}
\nonumber
\begin{split}
 b_1-c +(n-2) \left( 1-\gamma \right) b_2 -(n-2)c_0  > b_1-c+ \left( 1-\gamma \right) b_2 -c_0 +(n-3) \left( 1-\gamma \right) b_3
 \end{split}
\end{equation}
%\end{small}
\begin{equation}
\label{B3a}
\iff c_0 <  \left( 1-\gamma \right) \left( b_2-b_3 \right)
\end{equation}
Alternatively, the newly entering node may want to connect to the leaf node, but the leaf node's utility decreases. In that case, the alternative condition can be $u_j(2)<u_j(0)$ for $j=C,D,E,F$.
Note that this leaf node gets bridging benefits of $2\gamma b_2$ for being essential for indirectly connecting the new node with the center. Also, as it is one of the two essential nodes for indirectly connecting the new node with the other $n-3$ leaf nodes (the other being the center), it gets bridging benefits of $(n-3) (\frac{\gamma}{2})2 b_3 = (n-3) \gamma b_3$.
%\begin{small}
\begin{equation}
\nonumber
\begin{split}
b_1-c +(n-3) \left( 1-\gamma \right) b_2 > 2(b_1-c)+(n-3) \left( 1-\gamma \right) b_2  + 2\gamma b_2  + (n-3) \gamma b_3
 \end{split}
\end{equation}
%\end{small}
which gives $c>b_1+ 2\gamma b_2 + (n-3) \gamma b_3$. But this is inconsistent with the condition in Inequality~(\ref{B1for2}). So in order to ensure that the newly entering node connects to the center and not to any of the leaf nodes, we use Inequality~(\ref{B3a}).

Now that a star of $n$ nodes is formed, we ensure its pairwise stability by deriving conditions for the same. 
Firstly, we ensure that the center does not delete any of its links. So we need $u_B(1) \geq u_B(5)$. Note that from the center's point of view, state $5$ is same as state $0$ and as we have seen earlier that $u_B(0) \leq u_B(1)$, the required condition $u_B(5) \leq u_B(1)$ is already ensured.\\
Next, no two leaf nodes should form a link between them. So we should ensure that, not creating a link between them is at least as good for them as creating, that is $u_j(1) \geq u_j(3)$ for any leaf node $j$. This condition is applicable for $n\geq 3$. 
\begin{equation}
\nonumber
b_1-c+(n-2) \left( 1-\gamma \right) b_2 \geq 2(b_1-c)+(n-3) \left( 1-\gamma \right) b_2
\end{equation}
\begin{equation}
\label{B4}
\iff c \geq b_1-b_2+\gamma b_2
\end{equation}
For a leaf node to not delete its link with the center, we need $u_j(1) \geq u_j(4)$ for any leaf node $j$. For $n \geq 2$, we have
\begin{equation}
\nonumber
b_1-c+(n-2)  \left( 1-\gamma \right) b_2  \geq 0
\end{equation}
\begin{equation}
\nonumber
\iff c \leq b_1+(n-2)  \left( 1-\gamma \right) b_2
\end{equation}
which is a weaker condition than Inequality~(\ref{B1for2}) for $n\geq 2$.

Note that Inequalities~(\ref{B1for2}) and (\ref{B3a}) put together are stronger than Inequalities~(\ref{B1}) and (\ref{B1forc0}) combined. 
We get the required result using Inequalities~(\ref{B1for2}), (\ref{B3a}) and (\ref{B4}).
%\qed 
\end{proof}

%\leavevmode

We provide the proofs of the remaining results of this section in Appendices~\ref{app:smallworld} through \ref{app:kstar}. %\cite{dhamal2012sufficient}.
%Appendices \ref{app:star} through \ref{app:2star}.

%\subsection{Sufficient Conditions for Formation of Other Topologies Uniquely}

\begin{proposition}
\label{thm:smallworld}
For a network, if $c<b_1-b_{d+1}$ 
%($d \geq 1$) 
and $c_0\leq(1-\gamma)b_2$, the resulting diameter is at most $d$.
\end{proposition}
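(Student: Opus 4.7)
My plan is to prove Proposition 3.2 by induction on the number of nodes $n$, combined with a structural lemma characterizing pairwise stable networks under the given conditions. The structural lemma I aim to establish is: under $c < b_1 - b_{d+1}$, any pairwise stable network has diameter at most $d$. I would prove this by contradiction. Suppose nodes $i$ and $j$ are at distance $l \geq d+1$ in a pairwise stable network, and consider the change in $u_i$ upon adding the link $(i, j)$. I decompose the change into four parts: (a) the direct contribution $b_1 - c$; (b) the loss of the $(i, j)$ pair's former contribution to $i$'s indirect benefit and rent terms, equal to $-(1-\gamma) b_l$ if $E(i,j) \neq \emptyset$ and $-b_l$ otherwise, and so in either case at least $-b_l \geq -b_{d+1}$; (c) the change in $i$'s net contribution from other pairs $(i, w)$; and (d) the change in $i$'s bridging benefits. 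I would then argue (c) and (d) are weakly non-negative, so the net change is at least $(b_1 - c) - b_{d+1} > 0$. By symmetry $u_j$ strictly increases as well, so $i$ and $j$ would strictly benefit from forming the link, contradicting pairwise stability.

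For the induction itself, the base case $n = 2$ follows directly: the second node proposes a link with the first (which has degree zero, hence charges no entry fee), and since $c < b_1 - b_{d+1} \leq b_1$, both nodes strictly benefit; the resulting diameter is $1 \leq d$. For the inductive step, assuming the $(n-1)$-node pairwise stable network has diameter $\leq d$, I would use the condition $c_0 \leq (1-\gamma) b_2$ to show that the $n^{th}$ node $A$ successfully joins. Connecting to any existing node $T$ of degree $d_T$, $A$ receives indirect benefit $(1-\gamma) b_2$ from each of $T$'s $d_T$ direct neighbors (with $T$ serving as an essential intermediary), so using Equation~(\ref{eqn:utility}),
\begin{displaymath}
u_A \;\geq\; -c_0 d_T + (b_1 - c) + (1-\gamma) d_T b_2 \;=\; (b_1 - c) + d_T\bigl[(1-\gamma) b_2 - c_0\bigr] \;\geq\; b_1 - c \;>\; 0.
\end{displaymath}
Moreover, $T$ strictly benefits since it gains $b_1 - c$ plus positive bridging benefits from being essential for every pair $(A, w)$, and so accepts. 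Once $A$ joins, the network evolves to a pairwise stable state on $n$ nodes, which by the structural lemma has diameter $\leq d$.

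The main obstacle will be the bookkeeping in step (c) of the structural lemma, since the net contribution of a pair $(i, w)$ to $u_i$ takes two forms ($b_{l(i,w)}$ or $(1-\gamma) b_{l(i,w)}$) depending on whether $E(i, w)$ is empty, and the edge addition may cause $E(i, w)$ to transition from non-empty to empty. To handle this I would rely on two monotonicity facts: adding a single edge can only weakly decrease pairwise distances (hence $b_{l(i,w)}$ weakly increases), and it can only weakly shrink essential sets $E(y, z)$ (so rents are not newly introduced where previously absent). Case analysis on the pre- and post-addition status of $E(i, w)$, combined with the trivial inequality $b_{l(i,w)} \geq (1-\gamma) b_{l(i,w)}$, then shows the contribution weakly increases in every case. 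A similar but simpler analysis handles (d), relying on the fact that $i$'s own essentiality for any pre-existing pair is preserved, since any new path created by edge $(i, j)$ must pass through $i$ itself.
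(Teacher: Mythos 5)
Your proposal is correct and follows essentially the same route as the paper's proof: the paper likewise argues that the entry conditions $c<b_1$ and $c_0\leq(1-\gamma)b_2$ guarantee each new node joins, and that any two nodes at distance exceeding $d$ would each gain at least $(b_1-c)-b_{d+1}>0$ from a mutual link (since distances, essential-node sets, and hence the remaining indirect and bridging benefits all change weakly in their favour), contradicting stability. Your version merely packages this as an explicit structural lemma plus induction and spells out the case analysis for the other pairs $(i,w)$ that the paper dismisses as ``easily seen.''
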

%

%We provide a proof of Proposition~\ref{thm:smallworld} in Appendix~\ref{app:smallworld}.
%When $d=1$, we get the following corollary.
%The following corollary is immediate by substituting $d=1$.
% in the above proposition.
The following corollary results when $d=1$.

\begin{corollary}
\label{thm:complete}
For a network, if $c < b_1-b_2$ and $c_0 \leq \left( 1-\gamma \right) b_2$, the unique resulting topology is complete graph.
\end{corollary}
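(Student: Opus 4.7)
The plan is to derive this corollary directly from Proposition~\ref{thm:smallworld} by specializing to the case $d=1$. Substituting $d=1$ in the hypothesis of Proposition~\ref{thm:smallworld} yields exactly the conditions stated in the corollary: $c < b_1 - b_{2}$ and $c_0 \leq (1-\gamma)b_2$. The conclusion of Proposition~\ref{thm:smallworld} then asserts that the resulting network has diameter at most $1$.

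It remains only to observe that, for any network on $n \geq 2$ nodes, a diameter of at most $1$ is equivalent to the graph being complete: by definition of diameter, every pair of distinct nodes must be joined by a path of length $\leq 1$, i.e., a direct edge. For the base case $n=1$ the claim is vacuous, and for $n=2$ the condition $c < b_1 - b_2 \leq b_1$ already ensures via the base-case argument used in the proof of Proposition~\ref{thm:star} (Inequality~(\ref{B1for2})) that the single possible link is formed. Hence in every stable state of the recursive formation process, the unique resulting topology is the complete graph on the current node set.

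The argument is essentially a one-line specialization, so there is no real obstacle; the only mild subtlety is to confirm that Proposition~\ref{thm:smallworld}'s proof handles the boundary $d=1$ correctly (in particular, that the decay condition $b_{d+1}=b_2 < b_1$ is used in the same way as for larger $d$, and that the recursive entry step still forces a newly entering node to connect to every existing node rather than to only some of them). Since the uniqueness guaranteed by Proposition~\ref{thm:smallworld} is about the diameter bound, and a complete graph is the \emph{only} graph with diameter $\leq 1$, uniqueness of the topology follows immediately.
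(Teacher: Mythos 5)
Your proposal is correct and matches the paper's own treatment exactly: the paper introduces the corollary with the single remark that it ``results when $d=1$'' in Proposition~\ref{thm:smallworld}, which is precisely your specialization argument (diameter at most $1$ being equivalent to completeness). Your added checks on the $n=2$ base case and the $d=1$ boundary are harmless elaborations of the same route.
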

%
%We provide proofs of Propositions~\ref{thm:bipartite} and \ref{thm:2star} in 
%Appendices 
%\ref{app:bipartite} and \ref{app:2star}, respectively.

%\leavevmode

\begin{proposition}
\label{thm:bipartite}
For a network with $\gamma <   \frac{b_2 - b_3}{3b_2 - b_3} $, if $b_1-b_2+ \gamma \left( 3b_2 - b_3 \right) <  c < b_1 - b_3$ 
and $\left( 1-\gamma \right) \left( b_2-b_3 \right) < c_0 \leq \left( 1-\gamma \right) b_2$, the unique resulting topology is 
bipartite Tur\'an graph.
\end{proposition}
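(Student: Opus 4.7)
\textbf{Proof plan for Proposition~\ref{thm:bipartite}.} The plan is to mimic the structure of the proof of Proposition~\ref{thm:star}: use strong induction on the number of nodes $n$, and for each inductive step construct a game tree rooted at the state where a new node $D$ considers entering an existing bipartite Tur\'an graph $K_{a,b}$ on $n-1$ nodes (with $a\geq b$). The inductive hypothesis gives us $K_{a,b}$; we must show that the conditions force $D$ to create its first link in a way that, together with the subsequent myopic best responses, produces $K_{\lceil n/2\rceil,\lfloor n/2\rfloor}$ as the unique stable network. The base case $n=2$ gives $K_{1,1}$ and requires only $c<b_1$, which is implied by $c<b_1-b_3$; the case $n=3$ coincides with the $n=3$ analysis of Proposition~\ref{thm:star}, since $K_{2,1}$ is a star.

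Next I would handle the entry move. Since $D$ can join either the larger side (by linking to a lower-degree node on side $A$, paying a smaller entry fee) or the smaller side (linking to a higher-degree node on side $B$), I would compute $D$'s utility for both alternatives using Equation~(\ref{eqn:utility}). The key intermediate state is $K_{a,b}\cup\{Dv\}$ with $v\in A$, in which $v$ becomes essential for $D$'s access to the remaining $b$ nodes at distance $2$ and $a-1$ nodes at distance $3$. Comparing the two utilities, the difference reduces to $c_0$ versus $(1-\gamma)(b_2-b_3)$, so the condition $c_0>(1-\gamma)(b_2-b_3)$ dominates the alternative where $D$ links to a higher-degree node. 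Simultaneously, $c_0\leq (1-\gamma)b_2$ is needed to guarantee that $D$ finds it worthwhile to enter at all (this mirrors Inequality~(\ref{B1forc0})).

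Once $D$ has entered, I would use $c<b_1-b_3$ to drive the improving path: each remaining side-$A$ node $v'$ is currently at distance $3$ from $D$ with $v$ essential on that path, so adding the edge $Dv'$ yields $D$ a net change whose dominant term is $b_1-b_3-c>0$; a mutual consent check for $v'$ is similar. Iterating this until $D$ is adjacent to all of $A$ produces $K_{a,b+1}=K_{\lceil n/2\rceil,\lfloor n/2\rfloor}$. The lower bound $c>b_1-b_2+\gamma(3b_2-b_3)$ then does the opposite job: it must rule out the formation of every ``wrong'' intra-side edge in every intermediate state encountered on the way. The tightest such temptation arises in a path-like configuration (e.g.\ the $n=4$ state $D\!-\!v\!-\!w\!-\!v'$), where the end node of a two-step indirect chain contemplates short-circuiting and consequently gains bridging rents for two new essential roles together worth $2\gamma b_2$, together with the usual $b_1-b_2-c$ and $-\gamma b_3$ adjustments; collecting terms gives precisely the bound $b_1-b_2+\gamma(3b_2-b_3)$. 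Finally, pairwise stability of $K_{\lceil n/2\rceil,\lfloor n/2\rfloor}$ itself is straightforward: since $a,b\geq 2$ leaves no essential nodes, deletion of any edge loses $(b_1-c)-b_2>0$ by the upper bound on $c$, and creation of an intra-side edge loses $b_1-b_2-c<0$ by the lower bound.

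The main obstacle is a bookkeeping one rather than a conceptual one: the intermediate states during the $a-1$ successive link additions by $D$ have varying essential-node structures, and one must verify that in \emph{each} of them the only improving moves are either ``$D$ adds its next edge to a side-$A$ node'' or ``stay put,'' while every other deviation (another node proposing a same-side link, or $D$ proposing a link to side $B$, or anyone deleting) is blocked by one of the three derived inequalities. The condition $\gamma<\tfrac{b_2-b_3}{3b_2-b_3}$ is needed exactly to keep the simultaneous constraints on $c$ compatible, i.e.\ to make the interval $\bigl(b_1-b_2+\gamma(3b_2-b_3),\,b_1-b_3\bigr)$ non-empty. I would organize the case analysis by the number of edges $D$ has created so far, and within each case exploit the node- and connection-type symmetries of the intermediate graph (as in Section~\ref{sec:types}) to reduce the verification to a small constant number of utility comparisons independent of $n$.
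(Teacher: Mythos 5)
Your plan follows essentially the same route as the paper's proof: induction via the recursive entry model, the entry-fee comparison of $c_0$ against $(1-\gamma)(b_2-b_3)$ steering the entrant toward a node of the larger partition, the upper bound $c<b_1-b_3$ driving the improving path that attaches the entrant to every node of that partition, the lower bound $c>b_1-b_2+\gamma(3b_2-b_3)$ extracted from the $n=4$ path configuration to block intra-partition links, and the condition $\gamma<\frac{b_2-b_3}{3b_2-b_3}$ to keep the interval for $c$ non-empty. The intermediate-state bookkeeping you anticipate is organized in the paper via a four-way node-type classification (the newly entered node, its current neighbours, the other nodes of the large partition, and the rest), with roughly a dozen deviation checks and separate discrete cases for small $n$ --- exactly the case analysis you sketch.

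One step is wrong as stated. In the final stability check you claim that deleting a cross-partition edge of the Tur\'an graph costs $(b_1-c)-b_2>0$. But after deleting $(i,j)$ with $i\in A$, $j\in B$, the endpoints are at distance $3$, not $2$: a length-$2$ path $i\!-\!x\!-\!j$ would require $x$ adjacent to both, i.e.\ $x\in A\cap B$. Moreover $(b_1-c)-b_2>0$ is actually \emph{false} under the hypotheses, since the lower bound forces $c>b_1-b_2+\gamma(3b_2-b_3)\geq b_1-b_2$; taken literally, your check would conclude that deletion is profitable. The correct loss is $(b_1-c)-b_3$ (plus rent corrections when a partition has size at most $2$), which is positive precisely because $c<b_1-b_3$ --- this is the real reason the upper bound in the statement is $b_1-b_3$ rather than $b_1-b_2$. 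A smaller quibble: in your $n=4$ derivation the itemized terms $b_1-b_2-c$, $+2\gamma b_2$, $-\gamma b_3$ sum to $b_1-b_2-c+\gamma(2b_2-b_3)$, missing the $+\gamma b_2$ that arises because the indirect benefit being forfeited is only $(1-\gamma)b_2$; the final bound you state is nevertheless the correct one.
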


%\leavevmode

%
\begin{proposition}
\label{thm:2star}
Let $\sigma$ be the upper bound on the number of nodes that can enter the network and $\lambda = \lceil \frac{\sigma}{2} -1 \rceil \left( 2b_2-b_3 \right)$.
Then, if $\left( 1-\gamma \right) \left( b_2-b_3 \right) < c_0 < \left( 1-\gamma \right) \left( b_2-b_4 \right) $ and either \\
(i) $\gamma < \min \Big\{  \frac{b_2-b_3}{\lambda-b_3} ,  \frac{b_3}{b_2+b_3} \Big\}$ and $b_1-b_3+\gamma(b_2+b_3) \leq c < b_1$, or\\
(ii) $\frac{b_2-b_3}{\lambda-b_3} \leq \gamma < \min \Big\{ \frac{b_2}{\lambda+b_2} , \frac{b_3}{b_2+b_3}  \Big\}$ and $b_1-b_2+\gamma b_2 + \gamma \lambda \leq c < b_1$,
\\
the unique resulting topology is 
2-star.
\end{proposition}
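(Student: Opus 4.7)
The plan is to follow the same inductive methodology as Proposition~\ref{thm:star}, setting up a base case and then assuming a 2-star on $n-1$ nodes to direct the recursive formation so that the network on $n$ nodes is again a 2-star. In the base case, the 2-node network (the edge between the two eventual centers) emerges from $c < b_1$ exactly as before, and for $n=3$ the entering node attaches to one of the two symmetric centers under the stated upper bound on $c_0$. For the inductive step I would draw the game tree rooted at an $(n-1)$-node 2-star with the newly entering node $A$ at the root, and distinguish $A$'s first-link strategies into: attach to the less-populated center $B_1$ (the desired move), attach to the more-populated center $B_2$, and attach to some leaf. Using Equation~(\ref{eqn:utility}), attaching to a center of degree $d$ costs $d c_0$ upfront but gives $A$ indirect access at distance $2$ to all nodes on that center's side and at distance $3$ to nodes on the opposite side, modulo rent $\gamma$; attaching to a leaf instead pushes all opposite-side nodes to distance $4$. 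The requirement that center-attachment dominates leaf-attachment, and that less-populated center is strictly preferred, should reduce to the two-sided bound $(1-\gamma)(b_2-b_3) < c_0 < (1-\gamma)(b_2-b_4)$, while the receiver's willingness to accept follows from $c < b_1$ combined with gains in bridging benefits as the side grows by one leaf.

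For pairwise stability of the resulting $n$-node 2-star I would enumerate the non-trivial deviations and require each to be non-improving: (a) a center deleting its link to the other center, (b) a center deleting one of its leaf-links, (c) a leaf deleting its center-link, (d) two leaves on the same side forming a mutual link, (e) two leaves on opposite sides forming a mutual link, and (f) a leaf forming a link to the opposite-side center. Deviation (d) is the standard same-side condition $c \geq b_1 - b_2 + \gamma b_2$; deviation (e) uses $l=3$ and gives $c \geq b_1 - b_3 + \gamma (b_2+b_3)$, where the $\gamma b_2$ and $\gamma b_3$ terms account for the fact that the two leaves, by linking directly, forfeit the rents they formerly paid to their two essential centers; deviation (f) for a leaf becoming adjacent to the opposite center makes that leaf essential for pairs on the opposite side and shifts its role in the rent-sharing on its own side, which is where the constant $\lambda = \lceil \sigma/2 - 1\rceil(2b_2 - b_3)$ enters as the worst-case bridging value it would pick up (or lose, from the symmetric center's perspective). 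The dichotomy between subcases (i) and (ii) corresponds precisely to which of (e) or (f) binds first as $\gamma$ grows: the threshold $\gamma < \frac{b_2-b_3}{\lambda - b_3}$ is the exact point at which case~(f) overtakes case~(e), and accordingly the lower bound on $c$ switches from $b_1-b_3+\gamma(b_2+b_3)$ to $b_1-b_2+\gamma b_2 + \gamma \lambda$. The secondary bound $\gamma < \frac{b_3}{b_2+b_3}$ (and $\gamma < \frac{b_2}{\lambda+b_2}$ in case~(ii)) is what keeps the resulting lower bound on $c$ strictly below the upper bound $b_1$.

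The main obstacle will be the careful bookkeeping of bridging benefits and intermediation rents, because in a 2-star both centers are essential for every cross-side pair of leaves, so the $\gamma$-terms are shared between them and appear on both sides of several deviation inequalities. Unlike the single-star analysis where the center's bridging value grew in a tidy linear fashion and dropped out of the final bounds, here the appearance of $\lceil \sigma/2 - 1\rceil$ in $\lambda$ reflects the fact that the two centers accumulate their own-side bridging rents over the full growth horizon, so deviation~(f) must be blocked uniformly over all $n \leq \sigma$; this is exactly the reason the condition cannot be stated independently of $\sigma$. Once these terms are bookkept consistently and the resulting system of linear inequalities in $c$, $c_0$, and $\gamma$ is solved, intersecting the feasible region for entry, acceptance, and all pairwise-stability deviations should yield the two disjoint subcases (i) and (ii) stated in the proposition.
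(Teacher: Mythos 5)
Your proposal follows essentially the same route as the paper's proof: build the four-node skeleton, enumerate the pairwise-stability deviations of a general 2-star, derive the entry conditions on $c_0$, and obtain the two subcases from whichever lower bound on $c$ binds. Your identification of deviations (d), (e), (f) and of the threshold $\gamma = \frac{b_2-b_3}{\lambda-b_3}$ as the crossover point between the two lower bounds on $c$ matches the paper exactly, as do the roles of $\frac{b_3}{b_2+b_3}$ and $\frac{b_2}{\lambda+b_2}$ in keeping the range of $c$ non-empty.

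One mechanistic point in your treatment of deviation (f) is off, and it is precisely where $\lambda$ originates. A leaf that links to the opposite center does not "become essential for pairs on the opposite side" --- with only two links it is essential for no pair. The paper blocks this deviation in two candidate ways: (a) the leaf's utility decreases, which yields $c > b_1-(1-\gamma)b_2+(1-\gamma)\lceil\frac{\sigma}{2}-1\rceil(b_2-b_3)$ and is then shown to be \emph{inconsistent} with $c<b_1$ for large $\sigma$, so it must be discarded; and (b) the center's utility decreases, which is the route actually used. There, the center linking to an opposite-side leaf gains direct benefit $b_1-c$, loses $(1-\gamma)b_2$ of indirect benefit, and gains bridging rents $\gamma m_1(2b_2-b_3)$ because it becomes the sole essential node for $m_1$ new distance-2 pairs (its own leaves paired with the newly adjacent leaf) that were previously distance-3 pairs with shared rents; maximizing over $m_1\leq\lceil\frac{\sigma}{2}-1\rceil$ gives $\lambda$. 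Your parenthetical about the "symmetric center's perspective" suggests you would find this, but your plan as written would first lead you down route (a), whose infeasibility for unbounded $\sigma$ is a step you do not anticipate and which is needed to justify why the $c$-bound takes the form it does.
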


%\leavevmode

%For arbitrarily large upper bound on the number of nodes that can enter the network, the following corollary is immediate from the above proposition.
%The following corollary transforms the above conditions to be independent of the upper bound on the number of nodes in the network.
%When $\gamma=0$, the conditions are independent of the upper bound on the number of nodes that can enter the network.
The following corollary transforms the above conditions in (i) to be independent of the upper bound on the number of nodes that can enter the network.

\begin{corollary}
\label{cor:2star}
For a network with $\gamma=0$, if $b_1-b_3 \leq  c < b_1$ and $b_2-b_3< c_0 < b_2-b_4$,
the unique resulting topology is 
2-star.
\end{corollary}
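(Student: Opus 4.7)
The plan is to derive this corollary as a direct specialization of Proposition~\ref{thm:2star}(i) at $\gamma=0$, and then to observe that, once $\gamma$ is set to zero, every occurrence of $\lambda$ (and hence of the upper bound $\sigma$) either drops out or becomes trivially inactive. So the first step I would carry out is to verify that the hypothesis $\gamma < \min\bigl\{\tfrac{b_2-b_3}{\lambda-b_3},\,\tfrac{b_3}{b_2+b_3}\bigr\}$ of Proposition~\ref{thm:2star}(i) is automatic at $\gamma=0$ whenever the two fractions are strictly positive. Since benefits decay with distance ($b_1>b_2>b_3>b_4\geq 0$) and $b_3>0$ is meaningful for the 2-star (whose maximum pairwise distance is $3$), the second fraction is positive; and for any $\sigma\ge 4$ (the smallest $\sigma$ at which a 2-star is defined) we have $\lceil \sigma/2-1\rceil\ge 1$, hence $\lambda\ge 2b_2-b_3 > b_3$, so the first fraction is positive as well. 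The premise then reduces to $0<\text{(positive)}$, which holds trivially.

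Next I would substitute $\gamma=0$ into the remaining inequalities of Proposition~\ref{thm:2star}(i). The link-cost bound $b_1-b_3+\gamma(b_2+b_3)\leq c<b_1$ collapses to $b_1-b_3\le c<b_1$, and the entry-fee bound $(1-\gamma)(b_2-b_3)<c_0<(1-\gamma)(b_2-b_4)$ collapses to $b_2-b_3<c_0<b_2-b_4$. These are exactly the hypotheses of the corollary. Crucially, $\lambda$ appears in Proposition~\ref{thm:2star}(i) only in expressions of the form $\gamma\lambda$ or $\gamma/(\lambda-b_3)$-type bounds, so zeroing out $\gamma$ eliminates every dependence on $\sigma$; this is precisely what the corollary claims. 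Applying Proposition~\ref{thm:2star}(i) with these specialized parameters then yields the desired conclusion that a 2-star is the unique pairwise stable topology.

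The main (and only mild) obstacle is the bookkeeping positivity check for the two min-terms and confirming that for every admissible $\sigma\ge 4$ the hypothesis of Proposition~\ref{thm:2star}(i) is genuinely implied by the corollary's hypotheses; no new game-tree analysis or induction is required, since Proposition~\ref{thm:2star} has already verified pairwise stability and uniqueness of the 2-star along every branch of the recursive formation. In other words, the corollary is essentially a ``clean limit'' of Proposition~\ref{thm:2star}(i), trading a slightly stronger interval constraint on $c_0$ (namely $c_0<b_2-b_4$) for freedom from any upper bound on network size.
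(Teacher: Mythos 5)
Your proposal is correct and matches the paper's (implicit) derivation: the corollary is obtained exactly by setting $\gamma=0$ in Proposition~\ref{thm:2star}(i), whereupon the $\gamma$-hypothesis reduces to the trivially true $0<\min\{\cdot\}$ (both fractions being positive for any admissible $\sigma\geq 4$) and the bounds on $c$ and $c_0$ collapse to those stated, removing all dependence on $\sigma$. No further argument is needed, and your positivity check of the two min-terms is the only bookkeeping the paper leaves to the reader.
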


%\leavevmode

%\subsection{Sufficient Conditions for Formation of a General $k$-star Topology Uniquely}
%\label{sec:base}

We define {\em base graph} of a network formation process as the graph from which the process starts. The conditions derived for the formation of the above networks are obtained starting from the graph consisting of a single node (corresponding to the base case of formation of a network with $n=2$). 
Now for certain topologies to be well-defined, it is required that the network has a certain minimum number of nodes. For instance, for a network to have a well-defined $k$-star topology, it should consist of at least $2k$ nodes (complete network on $k$ centers with one leaf node connected to each center). So it is reasonable to consider this network of $2k$ nodes as a base graph for forming a $k$-star network.
Moreover, in case of some topologies (under a given utility model), the conditions required for its formation on discretely small number of nodes, may be inconsistent with that required on arbitrarily large number of nodes. We will now see that, under the proposed network formation and utility models, $k$-star ($k\geq 3$) is one such topology; and a way to circumvent this problem is to start the network formation process from the aforementioned base graph.

Note that in a real-world network, the upper bound on the number of nodes 
%that would be a part of the network, 
is unknown to the network owner. So it is essential that, irrespective of the number of nodes, the desired topology is formed and is stable. That is, the conditions on the network must be set such that the entire family of networks having that topology, is stable.
%The following lemma presents such conditions for the family of $k$-stars (given some $k\geq 3$). 
%The following lemma presents the necessary conditions for the entire family of $k$-star topology to be stable, for a given $k$. 
%We provide its proof in \cite{dhamal2012sufficient}.
%Appendix~\ref{app:kstar0}.

%\leavevmode

\begin{lemma}
\label{lem:kstar0}
Under the proposed utility model, for the entire family of $k$-star networks (given some $k\geq 3$) to be pairwise stable, it is necessary that 
$\gamma=0$ and $c=b_1-b_3$.
\end{lemma}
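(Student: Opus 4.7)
My plan is to pin down both $\gamma$ and $c$ by combining two opposing pairwise-stability conditions that must simultaneously hold in the smallest admissible $k$-star of the family, namely the one with $n = 2k$ in which every center has exactly one leaf. The key idea is that one natural stability requirement provides a lower bound on $c$ and another provides an upper bound, and the two can only be reconciled when $\gamma = 0$ and $c = b_1 - b_3$.

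First, I would derive a lower bound by ruling out a mutual link between two leaves $\ell, \ell'$ attached to distinct centers $A, B$. Before the link, $\ell$ and $\ell'$ are at distance $3$ with essential set $\{A,B\}$. After the link, the only other distance that effectively changes for $\ell$ is to $B$: it remains $2$ but the essential set drops from $\{A\}$ to $\emptyset$ (there are now two length-$2$ paths, $\ell$–$A$–$B$ and $\ell$–$\ell'$–$B$). All other distances, essential sets, and bridging roles of $\ell$ are unchanged. Plugging into the utility function gives a marginal change of $b_1 - c - (1-\gamma)b_3 + \gamma b_2$ for $\ell$ (and symmetrically for $\ell'$), so pairwise stability forces
\[
c \;\geq\; b_1 - b_3 + \gamma(b_2 + b_3).
\]

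Next, I would derive an upper bound by ruling out unilateral deletion of the $A$–$B$ link by center $A$ in the same $n = 2k$ network. After deletion, $A$–$B$ becomes distance $2$ and $A$ to $B$'s unique leaf becomes distance $3$; all other distances and essential roles of $A$ are untouched. The analysis splits on $k$: for $k \geq 4$ the rerouted $A$–$B$ path has essential set $\emptyset$ (many third centers), whereas for $k = 3$ the unique third center is essential. Computing $A$'s marginal change and requiring it to be non-positive yields
\[
c \;\leq\; b_1 - b_3 - \gamma(b_2 - b_3) \quad (k \geq 4), \qquad c \;\leq\; b_1 - b_3 + \gamma b_3 \quad (k = 3).
\]
Combining with the lower bound collapses to $2\gamma b_2 \leq 0$ when $k \geq 4$ and $\gamma b_2 \leq 0$ when $k = 3$; since $b_2 > 0$ either case forces $\gamma = 0$, after which both bounds reduce to $c = b_1 - b_3$.

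The main obstacle is the careful bookkeeping of how a single link change alters pairwise distances, essential sets, intermediation rents, and bridging benefits under the proposed utility model; a naive calculation misses the rent-saving term $+\gamma b_2$ from $\ell$–$B$ in the leaf-leaf argument and the case split at $k=3$ in the center-deletion argument, either of which leaves slack in the squeeze. Once these subtleties are handled correctly, the two-sided bound makes the conclusion immediate.
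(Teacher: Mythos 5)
Your overall strategy is exactly the paper's: you squeeze $c$ between the condition that no two leaves of different centers create a mutual link (lower bound) and the condition that no center deletes its link with another center (upper bound), both evaluated on the minimal $2k$-node member of the family, and conclude $\gamma=0$, $c=b_1-b_3$. The problem is that both of your marginal-utility computations drop terms, and one of the resulting inequalities is not actually implied by pairwise stability. In the leaf--leaf step, after adding $\ell\ell'$ the essential set of $\ell$ collapses to $\emptyset$ not only for $B$ but for \emph{every} other center $C$ (the path $\ell$--$\ell'$--$B$--$C$ bypasses $A$), so the rent saving is $(k-1)\gamma b_2$, not $\gamma b_2$; the correct lower bound is $c \geq b_1-b_3+\gamma\left((k-1)b_2+b_3\right)$. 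Your weaker bound is still implied by this, so it remains a valid necessary condition, but the derivation is off.

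The center-deletion step is where the argument actually breaks. Your claim that ``all other distances and essential roles of $A$ are untouched'' is false: $A$ remains essential for the pairs $(\ell_A,B)$ and $(\ell_A,\ell_B)$, but those pairs move from distances $2$ and $3$ to distances $3$ and $4$, so $A$'s bridging income changes from $2\gamma b_2+\gamma b_3$ to $2\gamma b_3+\gamma b_4$ (for $k\geq 4$). Carrying these terms through, the correct necessary condition is $c \leq b_1-b_3+\gamma(b_2-b_4)$, which for $\gamma>0$ is strictly \emph{weaker} than your $c\leq b_1-b_3-\gamma(b_2-b_3)$; a similar issue occurs at $k=3$, where the paper obtains $c\leq b_1-b_3+\gamma(2b_2+b_3)-\tfrac{2}{3}\gamma b_4$. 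Since your upper bound is stronger than what stability actually forces, it is not a necessary condition, and the squeeze as written is unsound. The conclusion is nonetheless recoverable: with the corrected bounds the reconciliation requirement becomes $\gamma\left((k-2)b_2+b_3+b_4\right)\leq 0$ for $k\geq 4$ (and $\tfrac{2}{3}\gamma b_4\leq 0$ for $k=3$), which still forces $\gamma=0$ and hence $c=b_1-b_3$ --- but note how thin the margin is at $k=3$, which is precisely why the bridging bookkeeping cannot be waved away.
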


%\leavevmode

%RELOOK
%2 and k star defined for some lower bound on nodes\\

\begin{comment} %SeptEdit
For a given topology to be well-defined for a network, it may be necessary that the network has a certain minimum number of nodes. For example, a bipartite Tur\'an as well as a 2-star topology is well-defined for a network with a minimum of two nodes, while a $k$-star requires a minimum of $k$ nodes.
However, one can argue that a network with size less than this minimum number, should be said to have the topology if it is not inconsistent with the topology structure. 
So certain networks can be trivially classified into multiple topologies, for example, a network consisting of two interconnected nodes can be classified as bipartite Tur\'an, 2-star, as well as $k$-star. The following lemma is concerned with  $k$-star networks ($k\geq3$) which cannot be classified as 2-star.\\
\end{comment} %SeptEdit

%From Lemma~\ref{lem:kstar0}, 
It can be seen that 
the conditions necessary for the family of $k$-star networks to be pairwise stable (Lemma~\ref{lem:kstar0})
%are $\gamma=0$ and $c=b_1-b_3$, which 
are sufficient conditions for the formation of a 2-star network uniquely, 
when $b_2-b_3< c_0 < b_2-b_4$ (Corollary~\ref{cor:2star}).
%This results in the following lemma which is concerned with $k$-star networks ($k\geq3$) that cannot be classified as 2-star.
%
When $c_0 < b_2-b_3$, these conditions $\gamma=0$ and $c=b_1-b_3$, are sufficient for the formation of a star topology uniquely (Proposition~\ref{thm:star}).
When $b_2-b_4 < c_0 < b_2$, these necessary conditions form a cycle among the initially entered nodes, but fails to form a clique among $k$ nodes even as more nodes enter the network, thus making it inconsistent with the $k$-star topology.
%$c_0>b_2$ does not allow the network to grow beyond two nodes. 
It can be similarly seen that for other values of $c_0$ including the boundary cases $c_0 = b_2-b_3$ and $c_0 = b_2-b_4$, the network so formed is not consistent with $k$-star topology for any $k\geq 3$. 
So we have that,
%
%\begin{lemma}
%\label{lem:kstar}
under the proposed network formation and utility models, with the requirement that the entire family be pairwise stable, no $k$-star network (given some $k\geq3$) can be formed starting with a network consisting of a single node.
%\footnote{By no $k$-star network, we mean no $k$-star network which cannot be classified as a 2-star network, for example, networks consisting of two interconnected nodes or line networks consisting of four nodes can be trivially classified as both 2-star as well as $k$-star networks.}
%\end{lemma}
%
%\begin{IEEEproof}
%From Lemma~\ref{lem:kstar0}, the conditions necessary for the family of $k$-star networks to be pairwise stable are $\gamma=0$ and $c=b_1-b_3$, which are sufficient conditions for the formation of a 2-star network uniquely, according to Corollary~\ref{cor:2star}.
%%\qed 
%\end{IEEEproof}

%\leavevmode

A reasonable solution to overcome this problem is to start the network formation process from some other base graph. Such a graph can be obtained by external methods such as providing additional incentives to its nodes.
For 
%instance, for 
initializing the formation of $k$-star, 
as mentioned earlier,
the base graph can be taken to be the complete network on the $k$ centers, with the centers connected to one leaf node each. As the base graph consists of $2k$ nodes, the induction starts with the base case for formation of $k$-star network with $n=2k+1$.
%Proposition~\ref{thm:kstar} gives the sufficient conditions for the formation of $k$-star network uniquely, provided the network starts building itself from this base graph. 
%We provide its proof in
%\cite{dhamal2012sufficient}.
%Appendix~\ref{app:kstar}.
%The following proposition results.

%\leavevmode

\begin{proposition}
\label{thm:kstar}
For a network starting with the base graph for $k$-star (given some $k \geq 3$), and $\gamma =0 $, if $c =b_1-b_3 $ 
and $ b_2-b_3  < c_0 < b_2-b_4$, the unique resulting topology is 
$k$-star.
\end{proposition}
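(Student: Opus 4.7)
The plan is to mirror the inductive scheme of Proposition~\ref{thm:star}, but with the game tree now branching over which center (or leaf) a newly arriving node chooses to attach to. The hypothesis $\gamma=0$ eliminates intermediation rents and bridging benefits in Equation~(\ref{eqn:utility}), collapsing the utility to
\begin{equation*}
u_j = -c_0\,d_{\text{T}(j)}\,\mathbf{I}_{\{j=\text{NE}\}} + d_j(b_1-c) + \sum_{\substack{w\in N\\ l(j,w)>1}} b_{l(j,w)},
\end{equation*}
and the hypothesis $c = b_1 - b_3$ lets me replace $b_1 - c$ by $b_3$ throughout, so every pairwise-deviation differential will reduce to a clean combination of $b_2-b_3$, $b_3-b_4$, and the entry-fee gap $c_0-(b_2-b_3)$.

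The base of the induction is the prescribed $2k$-node base graph, which is already a $k$-star. I would verify its pairwise stability directly: deletion of an inter-center edge changes a center's utility by $(b_2-b_3)(1-\ell_{M_2})\leq 0$ (with $\ell_{M_2}=1$ here); deletion of a center--leaf edge isolates the leaf and costs the center $b_3$; and no unilateral deletion improves any leaf's utility. For non-adjacent pairs the knife-edge case is two leaves on different centers, whose individual differentials collapse to $(b_1-c)-b_3=0$, so strict improvement fails and the weak form of pairwise stability in Definition~\ref{def:ps} is preserved.

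For the inductive step, assume a $k$-star on $n-1\geq 2k$ nodes and let node $A$ enter. If $A$ attaches to a center $M$ carrying $\ell_M$ leaves, then $M$ has degree $k-1+\ell_M$, exactly $k-1+\ell_M$ nodes sit at distance $2$, and the remaining $n-k-\ell_M-1$ leaves sit at distance $3$, giving
\begin{equation*}
u_A^{(M)} = -c_0(k-1+\ell_M) + b_3 + (k-1+\ell_M)\,b_2 + (n-k-\ell_M-1)\,b_3.
\end{equation*}
The coefficient of $\ell_M$ is $(b_2-b_3)-c_0$, strictly negative by $c_0>b_2-b_3$, so $A$ strictly prefers the center $M^\ast$ with the \emph{fewest} leaves --- precisely the move that keeps the network a $k$-star. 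Counting distances for the leaf option yields, after cancellation,
\begin{equation*}
u_A^{(M^\ast)} - u_A^{(L^\ast)} = (n-k-\ell_{\min}-1)(b_3-b_4) - (k-2+\ell_{\min})\bigl(c_0-(b_2-b_3)\bigr),
\end{equation*}
which is strictly positive for every admissible $n$: the window $c_0<b_2-b_4$ gives $c_0-(b_2-b_3)<b_3-b_4$, and the balanced distribution $\ell_{\min}=1+\lfloor(n-1-2k)/k\rfloor$ forces $n-k-\ell_{\min}-1\geq k-2+\ell_{\min}$ for all $n\geq 2k+1$. Hence $A$'s unique best move is to attach to a minimum-leaf center, preserving the $k$-star structure.

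The final obligation is pairwise stability of the resulting $n$-node $k$-star, which I would dispatch via the same four cases used above for the base graph; the leaf-to-non-parent-center case is additionally ruled out because the non-parent center strictly loses $b_2-b_3>0$, so by Definition~\ref{def:ps}(b) the link is not formed. The main obstacle lies in the knife-edge indifference created by $c=b_1-b_3$ for leaf--leaf links across different centers, which is compatible with the weak form of pairwise stability but would fail under any strict variant; verifying that every derived bound is independent of $n$ (all $n$-dependent contributions telescope via $b_1-c=b_3$) closes the induction.
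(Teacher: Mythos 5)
Your proposal is correct and follows essentially the same route as the paper's proof: induction from the $2k$-node base graph, using $c_0>b_2-b_3$ to steer the entrant to a minimum-leaf center, $c_0<b_2-b_4$ to rule out leaf attachment, and the knife-edge values $\gamma=0$, $c=b_1-b_3$ (forced by the pairwise-stability requirements on inter-center links and cross-center leaf pairs) for stability of the resulting $k$-star. One small imprecision: the dominant leaf option for the entrant is a leaf of the \emph{most}-loaded center, so the coefficient in your displayed differential should be $n-k-\ell_{\max}-1$ rather than $n-k-\ell_{\min}-1$; the positivity still goes through since $n-k-\ell_{\max}-1\geq k-2+\ell_{\min}$ for every balanced configuration with $n-1\geq 2k$, which is exactly the two-case check ($m_p=m_1$ and $m_p=m_1+1$) the paper carries out explicitly.
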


\begin{comment} %SeptEdit
\subsection{A Note on the Derived Sufficient Conditions}
\label{sec:anote}
\end{comment} %SeptEdit

\subsection{Intuition Behind the Sufficient Conditions}
\label{sec:explain}

The network entry fee has an impact on the resulting topology as seen from the above propositions. For instance, in Propositions~\ref{thm:star} and \ref{thm:bipartite}, the intervals spanned by the values of $c$ and $\gamma$ may intersect, but the values of network entry factor $c_0$ span mutually exclusive intervals separated at $(1-\gamma)(b_2-b_3)$. In case of star, $c_0$ is low and so a newly entering node can afford to connect to the center, which in general, has very high degree. In case of bipartite Tur\'an graph, it is important to ensure that the sizes of the two partitions are as equal as possible. As $c_0$ is high, a newly entering node connects to a node with a lower degree (whenever applicable), that is, to a node that belongs to the partition with more number of nodes. Hence the newly entering node potentially becomes a part of the partition with fewer number of nodes, thus maintaining a balance between the sizes of the two partitions. 
In case of $k$-star, as the objective is to ensure that a newly entering node connects to a node with moderate degree, the network entry factor is not so high that a newly entering node prefers connecting to a leaf node and not so low that it prefers connecting to a center with the highest degree. This intuition is clearly reflected in Propositions~\ref{thm:2star} and \ref{thm:kstar} where $c_0$ takes intermediate values.
In general, {\em network entry factor} $c_0$ plays an important role in dictating the degree distribution of the resulting network; 
a higher value of $c_0$ lays the foundation for formation of a more regular graph.
%and plays an important role in dictating the degree distribution of the resulting network. 

\begin{comment} %SeptEdit
Also as discussed earlier, a high value of {\em network entry factor} $c_0$ lays the foundation for formation of a regular graph. In general, it plays an important role in dictating the degree distribution of the resulting network.
\end{comment} %SeptEdit

As $c$ increases, the desirability of a node to form links decreases.
This is clear from Proposition~\ref{thm:smallworld} which says that, as $c$ decreases, nodes would create more links, hence effectively reducing the network diameter.
In particular, a complete network is formed when the costs of maintaining links is extremely low, as reflected in Corollary~\ref{thm:complete}. The remaining topologies are formed in the intermediate ranges of $c$. 

From Propositions~\ref{thm:bipartite}, \ref{thm:2star} and \ref{thm:kstar}, it can be seen that the feasibility of a network being formed depends on the values of $\gamma$ as well, which arises owing to contrasting densities of connections in a network. 
For instance, in a bipartite Tur\'an network, nodes belonging to different partitions are densely connected with each other, while that within the same partition are not connected at all. Similarly, in a $k$-star network, there is an extreme contrast in the densities of connections (dense amongst centers and sparse for leaf nodes).

\subsection{Connection to Efficiency}
\label{sec:efficiency}

We now analyze efficiency of the considered networks. 
As the derived conditions are sufficient, there may exist other sets of conditions that uniquely result in a given topology. We analyze the efficiency 
%based on the assumption 
assuming that the networks are formed using the derived conditions.
%
%We provide the proofs of the results of this section in %\cite{dhamal2012sufficient}.
%Appendix \ref{app:eff_bipartite_kstar}.
%Section 3 of the supplementary material.

From Equation~(\ref{eqn:utility}), the intermediation rents are transferable among the nodes, and so do not affect the efficiency of a network. Furthermore, the network entry fee is paid by any node at most once, and so does not account for efficiency in the long run. So the expression for efficiency of a network is
\begin{equation}
\nonumber
\sum_{j\in N} \Bigg( d_j(b_1-c) + \sum_{\substack{w \in N \\l(j,w)>1}}{b_{l(j,w)}} \Bigg)
 \end{equation}

The following result follows from the analysis by 
Narayanam and Narahari~\cite{ramasuri1}.
\begin{lemma}
\label{lem:efficient}
Let $\mu$ be the number of nodes in  network. \\
(a) If $c < b_1-b_2$, complete graph is uniquely efficient.\\
(b) If $b_1-b_2< c \leq b_1 + \left( \frac{\mu-2}{2} \right) b_2$, star is the unique efficient topology.\\
(c) If $c >b_1+ \left( \frac{\mu-2}{2}\right)b_2 $, null graph is uniquely efficient.
\end{lemma}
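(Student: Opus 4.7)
The plan is to directly analyse the welfare by grouping contributions over unordered pairs. Setting $f(1)=b_1-c$, $f(l)=b_l$ for $l\geq 2$, and $f(\infty)=0$, we can write
\[ W(g) \;=\; 2\sum_{\{j,w\}\subseteq N} f\bigl(l_g(j,w)\bigr). \]
Each case of the lemma then corresponds to a different distance $l$ that maximises $f$, and the problem reduces to constructing a graph that realises the right distance profile.

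For case (a), $c<b_1-b_2$ gives $f(1)>f(2)\geq f(l)$ for $l\geq 2$, so $f$ peaks at $l=1$; adding any missing link strictly raises that pair's contribution from at most $2b_2$ to $2(b_1-c)>2b_2$ and can only shrink other distances, so iterating shows the complete graph is the unique maximiser. I would defer case (c) to the analysis of case (b) below, where it will emerge that the best non-null configuration is the $\mu$-star with welfare $(\mu-1)\bigl[2(b_1-c)+(\mu-2)b_2\bigr]$; this is strictly negative under $c>b_1+\tfrac{\mu-2}{2}b_2$, whereas the null graph yields $0$, so the null graph dominates.

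Case (b) is the heart of the argument. In the regime $b_1-b_2<c$ we have $f(2)>f(1)$ and $f(2)>f(l)$ for $l\geq 3$, so the optimum wants as many pairs as possible at distance exactly $2$. \textbf{Step 1 (component-level).} For a connected component on $k$ nodes with $m$ edges, using $f(l)\leq f(2)$ for $l\geq 2$ gives $W\leq 2mf(1)+2\bigl[\tbinom{k}{2}-m\bigr]f(2)$, which is decreasing in $m$ since $f(1)<f(2)$; at $m=k-1$ the bound equals the welfare of the star on $k$ nodes, with equality forcing a spanning tree of diameter at most $2$, i.e.\ a star. Hence every welfare-maximiser is a disjoint union of stars of sizes $k_1,\dots,k_r$. \textbf{Step 2 (global).} Writing $x_i=k_i-1$ and $s=\sum_i x_i=\mu-r$, one finds
\[ W(g) \;=\; \bigl(2(b_1-c)-b_2\bigr)\,s+b_2\sum_i x_i^2. \]
For fixed $s$ the convex function $\sum_i x_i^2$ on the non-negative-integer simplex is maximised at the vertex $x_1=s$ (others zero), giving $W(s)=2s(b_1-c)+s(s-1)b_2$, a strictly convex parabola on $\{0,\dots,\mu-1\}$ whose maximum lies at an endpoint. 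Comparing $W(0)=0$ with $W(\mu-1)=(\mu-1)\bigl[2(b_1-c)+(\mu-2)b_2\bigr]$ shows that when $c<b_1+\tfrac{\mu-2}{2}b_2$ the unique maximiser is $s=\mu-1$ (the $\mu$-star), and when $c>b_1+\tfrac{\mu-2}{2}b_2$ the unique maximiser is $s=0$ (the null graph). This resolves cases (b) and (c) simultaneously.

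The main technical obstacle is the strictness in Step 1 of case (b): one must verify that only the star, and no other connected graph on $k$ nodes, attains the upper bound $2(k-1)f(1)+2\tbinom{k-1}{2}f(2)$. This requires tracking when the inequality $f(l)\leq f(2)$ is tight (forcing every non-adjacent pair to be at distance exactly $2$) in combination with $m=k-1$ (forcing the component to be a tree); together these force diameter $\leq 2$ and hence a star. Once this is in hand, the convexity reformulation in Step 2 provides a single clean global optimisation, avoiding the weaker thresholds that a purely local pairwise-merging argument between components would yield. This overall strategy mirrors the Jackson--Wolinsky efficiency analysis as adapted by Narayanam and Narahari \cite{ramasuri1}.
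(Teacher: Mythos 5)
Your proof is correct. The paper itself offers no proof of this lemma---it defers entirely to the analysis of Narayanam and Narahari \cite{ramasuri1}---and your argument is exactly the standard Jackson--Wolinsky-style efficiency analysis underlying that citation: write welfare as a sum of $f(l(j,w))$ over unordered pairs, bound each connected component's welfare by that of a star (with equality forcing a tree of diameter at most two), and then optimize the resulting strictly convex expression in $s=\mu-r$ over its endpoints. The one caveat is at the boundary $c = b_1 + \left(\frac{\mu-2}{2}\right)b_2$, where your own computation shows the star and the null graph both attain welfare $0$, so the uniqueness claimed in part (b) fails there; this is an imprecision in the lemma as stated rather than a gap in your argument.
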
 

The null network in the proposed model of recursive network formation corresponds to a single node to which no other node prefers to connect, and so the network does not grow. 

%\leavevmode

\begin{proposition}
\label{thm:eff_star}
Based on the derived sufficient conditions, null network, star network, and complete network are efficient.
\end{proposition}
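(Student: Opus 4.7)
The plan is to compare, topology by topology, the sufficient conditions derived in Propositions~\ref{thm:star} and Corollary~\ref{thm:complete} (together with the natural condition that yields the null network in the recursive model) against the three efficiency regimes in Lemma~\ref{lem:efficient}, and verify that the derived conditions always fall inside the efficiency regime that produces the corresponding topology. Note that intermediation rents are transferable and the entry fee is paid at most once per node, so both drop out of the long-run efficiency expression; hence only $b_1$, $b_2$, $c$ and the number of nodes $\mu$ play a role, exactly as in Lemma~\ref{lem:efficient}.

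First I would handle the complete graph. Corollary~\ref{thm:complete} gives $c < b_1-b_2$, which is literally the hypothesis of Lemma~\ref{lem:efficient}(a), so the complete graph is uniquely efficient in this regime. Next I would handle the star. Proposition~\ref{thm:star} gives $b_1-b_2+\gamma b_2 \leq c < b_1$. Since $\gamma \geq 0$, the left inequality implies $c > b_1-b_2$. For the upper bound required by Lemma~\ref{lem:efficient}(b), I note $c < b_1 \leq b_1 + \tfrac{\mu-2}{2}b_2$ for every $\mu \geq 2$ (and for $\mu = 2$ both topologies coincide). Thus the derived star conditions sit inside the star-efficient regime for any network size, and the star is efficient.

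The main obstacle is the null network, because the chapter never states explicit sufficient conditions for it; I would first have to identify them. In the recursive model, the null network arises precisely when even the $n=2$ step fails, i.e.\ when the second entrant cannot obtain strictly positive utility from its first link. From Inequality~(\ref{B1for2}) in the proof of Proposition~\ref{thm:star}, this happens iff $c \geq b_1$. In that case $\mu = 1$, and the bound in Lemma~\ref{lem:efficient}(c) reads $c > b_1 + \tfrac{\mu-2}{2}b_2 = b_1 - \tfrac{b_2}{2}$, which is satisfied whenever $c \geq b_1$ (since $b_2 \geq 0$). Hence the null network is efficient under the conditions that produce it.

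Combining the three cases yields the proposition. I would close by remarking that the bipartite Tur\'an, $2$-star, and $k$-star topologies are conspicuously absent from the statement: their derived conditions land inside the star-efficient regime of Lemma~\ref{lem:efficient}(b) (they all require $c > b_1 - b_2$ and $c < b_1$), so those topologies are in general \emph{not} efficient, which actually motivates the later discussion on the stability-efficiency trade-off and on the role of the entry fee $c_0$ in shaping the degree distribution away from the efficient benchmark.
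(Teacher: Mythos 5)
Your proof is correct and follows essentially the same route as the paper: the null network is identified with the single-node case arising when $c \geq b_1$ (hence trivially/vacuously efficient), and the star and complete conditions are checked to lie inside the corresponding efficiency ranges of Lemma~\ref{lem:efficient}. The only quibble is that at $\gamma=0$ your star bound gives $c \geq b_1-b_2$ rather than the strict $c > b_1-b_2$ of Lemma~\ref{lem:efficient}(b), a boundary case the paper also glosses over.
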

%\begin{comment} %SeptEdit
\begin{proof}
It is easy to see that irrespective of the value of $c_0$, if $c>b_1$, no node, external to the network, connects to the only node in the network and hence, does not enter the network. Such a network is trivially efficient as in the range $c>b_1$, it is a star of one node and also a null network. It is also clear that the star network and the complete network are efficient as the conditions on $c$ from Proposition~\ref{thm:star} and Corollary~\ref{thm:complete}, respectively, form a subset of the range of $c$ in which these topologies are respectively efficient.
%\qed 
\end{proof}
%\end{comment} %SeptEdit

%\leavevmode

%Propositions~\ref{thm:eff_bipartite} and \ref{thm:eff_kstar} give bounds on the efficiency of bipartite Tur\'an and $k$-star networks, respectively.
It can be seen that when the number of nodes in the network is small, the absolute difference between the efficiency of the resulting network and that of the efficient network is also small, and hence the network owner will not be too concerned about the efficiency of the network. 
So for the following propositions, we make a reasonable assumption that the number of nodes in the network is sufficiently large. 
%We provide proofs of Propositions~\ref{thm:eff_bipartite} and \ref{thm:eff_kstar} in Appendix~\ref{app:eff_bipartite_kstar}.

%\leavevmode

\begin{proposition}
\label{thm:eff_bipartite}
Based on the derived sufficient conditions, for sufficiently large number of nodes, the efficiency of a bipartite Tur\'an network is half of that of the efficient network in the worst case and the network is close to being efficient in the best case.
%\\
\end{proposition}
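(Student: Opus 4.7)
The first step is to identify the topology that is efficient in the regime specified by Proposition~\ref{thm:bipartite}. The derived conditions force $c > b_1 - b_2 + \gamma(3b_2 - b_3) \geq b_1 - b_2$ and $c < b_1 - b_3 < b_1$. For sufficiently large $\mu$, the threshold $b_1 + \tfrac{\mu-2}{2}b_2$ in Lemma~\ref{lem:efficient} exceeds $b_1 - b_3$, so case~(b) of Lemma~\ref{lem:efficient} applies and the unique efficient topology is the star. Hence it suffices to compare the efficiency of the bipartite Tur\'an network with that of the star.

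\textbf{Efficiency computations.} Both topologies have diameter~$2$, so only $b_1$ (from direct links) and $b_2$ (from distance-$2$ pairs) contribute. Using $\mathrm{Eff}(g) = 2|E(g)|(b_1-c) + 2 \cdot (\text{number of unordered distance-$2$ pairs}) \cdot b_2$, I would compute
\begin{align*}
\mathrm{Eff}(\text{Star}) &= 2(\mu-1)(b_1-c) + (\mu-1)(\mu-2)\, b_2, \\
\mathrm{Eff}(\text{BT}) &= 2\lceil \tfrac{\mu}{2}\rceil \lfloor \tfrac{\mu}{2}\rfloor (b_1-c) + 2\left[\tbinom{\lceil \mu/2\rceil}{2} + \tbinom{\lfloor \mu/2\rfloor}{2}\right] b_2,
\end{align*}
where for the bipartite Tur\'an graph the distance-$2$ pairs are precisely the intra-partition pairs.

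\textbf{Asymptotic ratio.} Letting $\mu \to \infty$ (the ``sufficiently large'' hypothesis), the dominant terms are $\tfrac{\mu^2}{2}(b_1-c+b_2)$ for bipartite Tur\'an and $\mu^2 b_2$ for the star (the $O(\mu)$ direct-benefit term becomes negligible for the star). Hence
\begin{equation*}
\lim_{\mu \to \infty}\frac{\mathrm{Eff}(\text{BT})}{\mathrm{Eff}(\text{Star})} \;=\; \frac{b_1 - c + b_2}{2 b_2}.
\end{equation*}
Substituting the two ends of the admissible $c$-interval: the lower end $c = b_1 - b_2 + \gamma(3b_2-b_3)$ gives ratio $1 - \tfrac{\gamma(3b_2-b_3)}{2b_2}$, which tends to $1$ as $\gamma \to 0$ (and $\gamma$ is indeed constrained to be small by Proposition~\ref{thm:bipartite}); the upper end $c = b_1 - b_3$ gives ratio $\tfrac{b_2 + b_3}{2b_2} = \tfrac{1}{2} + \tfrac{b_3}{2b_2}$, which approaches $\tfrac{1}{2}$ as $b_3 \to 0$. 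Monotonicity of the ratio in $c$ (strictly decreasing) pins down the two extremes, yielding the claimed worst- and best-case behaviour.

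\textbf{Anticipated obstacles.} The only delicate points are (i) handling the floor/ceiling carefully so that $\lceil \mu/2 \rceil \lfloor \mu/2 \rfloor = \tfrac{\mu^2}{4} + O(1)$ justifies the asymptotic step cleanly for both even and odd $\mu$, and (ii) making precise what ``worst case'' and ``close to efficient'' mean, which I would phrase as the infimum and supremum of the ratio over the feasible region of $(c,\gamma)$ described by Proposition~\ref{thm:bipartite}, with the infimum $\tfrac{1}{2}$ approached as $b_3/b_2 \to 0$ and the supremum~$1$ approached as both $\gamma \to 0$ and $c \to (b_1-b_2)^+$.
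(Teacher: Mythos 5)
Your proposal is correct and follows essentially the same route as the paper: identify the star as the efficient benchmark via Lemma~\ref{lem:efficient}, compute both efficiencies, take the large-$\mu$ limit to obtain the ratio $\tfrac{1}{2}+\tfrac{b_1-c}{2b_2}$, and bound it using the $c$-range from Proposition~\ref{thm:bipartite}. Your treatment is marginally more careful than the paper's (explicit floors/ceilings for odd $\mu$, and retaining the $\gamma(3b_2-b_3)$ term in the lower bound on $c$ rather than relaxing it to $b_1-b_2$), but the argument is the same.
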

%
%\begin{comment}
\begin{proof} 
As $\mu$ is large, $\mu$ can be assumed to be even without loss of accuracy. The sum of utilities of nodes in a bipartite Tur\'an network with even number of nodes is approximately
%\begin{small}
\begin{equation}
\nonumber
\left( \frac{\mu}{2} \right)^2 2(b_1-c)+2 \dbinom{\frac{\mu}{2}}{2}2b_2
\end{equation}
%\end{small}
From Lemma~\ref{lem:efficient}, star network is efficient in the range of $c$ derived in Proposition~\ref{thm:bipartite}. So, to get the efficiency of the bipartite Tur\'an network relative to the star network, we divide the above expression by the sum of utilities of nodes in a star network, which is
%\begin{small}
\begin{equation}
\label{eqn:star_eff}
2(\mu-1)(b_1-c)+\dbinom{\mu -1}{2}2b_2
\end{equation}
%\end{small}
Using the assumption that $\mu$ is large and the fact from the derived sufficient conditions that $b_2$ is comparable to $b_1-c$, it can be shown that the efficiency relative to the star network, approximately is
$\frac{1}{2}+\frac{b_1-c}{2b_2}$.
%\begin{equation}
%\nonumber
%\frac{1}{2}+\frac{b_1-c}{2b_2}
%\end{equation}
As the range of $c$ in Proposition~\ref{thm:bipartite} depends on the value of $\gamma$, the values of $c$ are bounded by $b_1-b_2$ and $b_1-b_3$. So the efficiency is bounded by 1 on the upper side and $\left( \frac{1}{2}+\frac{b_3}{2b_2} \right)$ on the lower side, of that of the star network; $\left( \frac{1}{2}+\frac{b_3}{2b_2} \right)$ can take a minimum value of $\frac{1}{2}$ when $b_3<<b_2$. 
%\qed 
\end{proof}
%\end{comment}

%\leavevmode

\begin{proposition}
\label{thm:eff_kstar}
Based on the derived sufficient conditions, for sufficiently large number of nodes, the efficiency of a $k$-star network is $\frac{1}{k}$ of that of the efficient network in the worst case and the network is close to being efficient in the best case.
\end{proposition}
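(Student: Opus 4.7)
The plan is to mirror the structure of the proof of Proposition~\ref{thm:eff_bipartite}. First I would invoke Lemma~\ref{lem:efficient} with the derived sufficient condition $c = b_1 - b_3$ from Proposition~\ref{thm:kstar}: since $b_1 - b_2 < b_1 - b_3 \leq b_1 + \tfrac{\mu-2}{2}b_2$ for sufficiently large $\mu$, the uniquely efficient topology in this regime is the star. So I need to compare the total utility of a $k$-star with that of a star on $\mu$ nodes (the latter given by Equation~(\ref{eqn:star_eff})).

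Next, I would write down the total utility of the $k$-star by enumerating its structural quantities. A $k$-star has $k$ centers forming a clique with roughly $(\mu-k)/k$ leaves attached to each center. I would count: (i) the number of edges, $\binom{k}{2} + (\mu-k)$, (ii) the number of unordered pairs at shortest-path distance $2$, which comes from (a) a center and a leaf of another center, and (b) two leaves sharing the same center, totaling about $(\mu-k)(k-1) + k\binom{(\mu-k)/k}{2}$, and (iii) the number of pairs at distance $3$, which are leaves attached to different centers, totaling about $\binom{\mu-k}{2} - k\binom{(\mu-k)/k}{2}$. Since $\gamma = 0$, intermediation rents are zero, so the total utility is simply
\begin{equation*}
2(b_1-c)\Big(\tbinom{k}{2} + \mu - k\Big) + 2b_2\,N_2 + 2b_3\,N_3,
\end{equation*}
where $N_2, N_3$ are the counts above.

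Then I would take the ratio with Equation~(\ref{eqn:star_eff}) and pass to the limit of large $\mu$. The $\Theta(\mu)$ terms are subdominant, so only the $\mu^2$ terms survive; retaining these and using $c = b_1 - b_3$, the ratio simplifies (after dividing numerator and denominator by $\mu^2 b_2$) to roughly
\begin{equation*}
\frac{1}{k} + \frac{(k-1)\,b_3}{k\,b_2}.
\end{equation*}

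Finally, I would read off the two extremes. Since the derived conditions allow $b_3$ to range over $(0, b_2)$, the worst case $b_3 \ll b_2$ gives efficiency $\to \tfrac{1}{k}$ of the star's efficiency, while the best case $b_3 \to b_2$ drives the ratio $\to 1$, matching the claim. The main obstacle I anticipate is bookkeeping in step two: carefully separating same-center leaf pairs from different-center leaf pairs and confirming that the non-divisibility of $(\mu-k)/k$ contributes only lower-order terms that vanish in the large-$\mu$ limit; the rest is arithmetic analogous to the bipartite Tur\'an case.
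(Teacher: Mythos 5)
Your proposal is correct and follows essentially the same route as the paper: identify the star as the efficient benchmark via Lemma~\ref{lem:efficient}, enumerate the distance-1, distance-2, and distance-3 pairs of the $k$-star (your count $\binom{\mu-k}{2}-k\binom{(\mu-k)/k}{2}$ for cross-center leaf pairs is algebraically identical to the paper's $\binom{k}{2}\bigl(\frac{\mu-k}{k}\bigr)^2$), divide by Expression~(\ref{eqn:star_eff}), and keep only the $\Theta(\mu^2)$ terms to obtain the ratio $\frac{1}{k}+\bigl(1-\frac{1}{k}\bigr)\frac{b_3}{b_2}$, whose extremes over $b_3\in(0,b_2)$ give the claimed bounds.
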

%
%\begin{comment}
\begin{proof}
As $\mu$ is large, in particular, $\mu >> k$ (not necessarily $>>k^2$), $\mu$ can be assumed to be divisible by $k$ without loss of accuracy. The sum of utilities of nodes in such a $k$-star network is approximately
%\begin{small}
%\begin{equation}
%\nonumber
%\begin{split}
%2\dbinom{k}{2}(b_1-c)+2(\mu-k)(b_1-c)+ 2k \left( 1-\frac{1}{k} \right) (\mu -k) b_2\\+2k \dbinom{\frac{\mu}{k} -1}{2}b_2 + k \left( \frac{\mu}{k}-1 \right) \left( 1- \frac{1}{k} \right) (n-k) b_3
%\end{split}
%\end{equation}
%\end{small}
%\begin{small}
\begin{equation}
\nonumber
\begin{split}
\left\{\dbinom{k}{2}+(\mu-k) \right\}2(b_1-c)+ \left\{k (k-1)\left( \frac{\mu-k}{k} \right)+k \dbinom{\frac{\mu-k}{k} }{2} \right\} 2b_2  + \dbinom{k}{2} \left( \frac{\mu-k}{k} \right) ^2 2b_3
\end{split}
\end{equation}
%\end{small}
From Lemma~\ref{lem:efficient}, star network is efficient in the range of $c$ derived in Propositions~\ref{thm:2star} and \ref{thm:kstar}. So, to get the efficiency of the $k$-star network relative to the star network, we divide the above expression by Expression~(\ref{eqn:star_eff}).
Using the assumption that $\mu$ is large and the fact from the derived sufficient conditions that $b_2$ and $b_3$ are comparable to $b_1-c$, it can be shown that the efficiency relative to the star network, approximately is
$\frac{1}{k}+ \left( 1- \frac{1}{k} \right) \frac{b_3}{b_2}$.
%\begin{equation}
%\nonumber
%\frac{1}{k}+ \left( 1- \frac{1}{k} \right) \frac{b_3}{b_2}
%\end{equation}
As $b_3$ is bounded by $0$ and $b_2$, the efficiency of $k$-star is bounded by $\frac{1}{k}$ and 1 of that of the star network.
%\qed 
\end{proof} 
%\end{comment}

%\section{Effects of Deviation from the Derived Sufficient Conditions: A Simulation Study}
\section{Deviation from the Derived Sufficient Conditions: A Simulation Study}
\label{sec:deviation}
We have derived sufficient conditions under which various network topologies uniquely emerge. In this section, we investigate the robustness of the derived sufficient conditions by studying the deviation in network topology when there is a slight deviation in these sufficient conditions. This problem is of practical interest since it may be difficult to maintain the conditions on a network throughout its formation process.

\begin{comment} %SeptEdit
We use {\em graph edit distance} (GED) to measure the dissimilarity between two graphs.
A graph can be transformed to another one by a sequence of graph edit operations which are defined in different manners in the literature. The notion of graph edit distance is defined by the least-cost edit operation sequence~\cite{gao2010survey}. In this paper, we use the following definition of graph edit distance.
\end{comment} %SeptEdit
We use the notion of {\em graph edit distance} (GED)~\cite{gao2010survey} to measure 
%the dissimilarity between two graphs. 
the deviation in network topology.

\begin{definition}
Given two graphs $g$ and $h$ having same number of nodes, the {\em graph edit distance} 
between them is the minimum number of link additions and deletions required to transform $g$ into a graph that is isomorphic to $h$.
\end{definition}

\subsection{Computation of Graph Edit Distance}
\label{sec:ged}

The problem of computing GED between two graphs is NP-hard, in general~\cite{zeng2009comparing}.
However, we can exploit structural properties of certain graphs to compute GED between them and other graphs, in polynomial time;
we state three such results. 
%Theorems~\ref{thm:gedstar} and \ref{thm:gedcomplete} are easy to prove. 
%We provide a proof of Theorem~\ref{thm:gedkstar} in 
%Appendix \ref{app:gedkstar}.
%Section 2 of the supplementary material.
%the Appendix.
%\ref{app:gedkstar}.
%\\

\begin{theorem}
\label{thm:gedstar}
%The graph edit distance 
The graph edit distance between a graph $g$ and a star graph with same number of nodes as $g$, is $\mu+\xi-2\Delta-1$, where $\mu$ and $\xi$ are the number of nodes and edges in $g$, respectively, and $\Delta$ is the 
 highest degree 
%degree of a node having the highest degree 
in $g$.%\\
\end{theorem}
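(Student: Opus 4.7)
The plan is to prove both an upper bound and a matching lower bound of $\mu + \xi - 2\Delta - 1$ on the graph edit distance between $g$ and a star graph $S_\mu$ on the same number of nodes.

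For the \emph{upper bound}, I exhibit an explicit transformation. Let $v$ be a node of $g$ of degree $\Delta$. I designate $v$ to play the role of the center in the target star. The $\Delta$ edges of $g$ incident to $v$ are already edges of the star centered at $v$, so they are kept. The remaining $\mu - 1 - \Delta$ edges of that star, namely edges from $v$ to the nodes currently not adjacent to it, must be added. All $\xi - \Delta$ edges of $g$ not incident to $v$ must be deleted. This yields a star graph (isomorphic to $S_\mu$) using exactly $(\mu - 1 - \Delta) + (\xi - \Delta) = \mu + \xi - 2\Delta - 1$ operations, establishing the upper bound.

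For the \emph{lower bound}, consider any sequence of link additions and deletions transforming $g$ into some star graph $g'$ isomorphic to $S_\mu$. In $g'$, exactly one vertex $v^{*}$ has degree $\mu - 1$ (the center) and all other vertices have degree $1$; in particular, the edge set $E(g')$ is precisely the set of edges incident to $v^{*}$. Let $d = \deg_g(v^{*}) \leq \Delta$. Every edge in $E(g') \setminus E(g)$ must have been added at some stage, and there are exactly $\mu - 1 - d$ such edges. Every edge in $E(g) \setminus E(g')$ must have been deleted at some stage, and there are exactly $\xi - d$ such edges (these are precisely the edges of $g$ not incident to $v^{*}$). Since additions and deletions of the same edge are disjoint operations and any extra intermediate additions/deletions only add to the total cost, the number of operations is at least $(\mu - 1 - d) + (\xi - d) = \mu + \xi - 2d - 1 \geq \mu + \xi - 2\Delta - 1$, where the last inequality uses $d \leq \Delta$.

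Taking the minimum of these two bounds over the choice of target star (equivalently, over the isomorphism), we obtain that the graph edit distance equals $\mu + \xi - 2\Delta - 1$. The argument is essentially combinatorial bookkeeping; the only mild subtlety is the lower bound, where one must be careful to note that the isomorphism class of the target star is fixed but the identity of the node playing the role of ``center'' is free, and that the optimal choice is necessarily a node of maximum degree in $g$. No real obstacle is expected, since the extreme rigidity of the star structure (a single vertex of degree $\mu-1$, the rest of degree $1$) forces the edge sets of $g$ and $g'$ to disagree on exactly $(\mu - 1 - d) + (\xi - d)$ edges for any given choice of center.
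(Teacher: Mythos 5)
Your proof is correct and follows essentially the same approach as the paper: parametrize by the degree $d$ of the node chosen to play the center, count $(\mu-1-d)$ additions and $(\xi-d)$ deletions, and minimize over $d$ by taking $d=\Delta$. Your explicit separation into upper and lower bounds is a slightly more careful write-up of the same counting argument.
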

%\begin{comment}
\begin{proof}
While transforming $g$ into a corresponding star graph, we need to map one node of $g$ to the center while the others to the leaf nodes. Let $d$ be the degree of the node which is mapped to the center. In order to transform $g$ into a star graph, the node mapped to the center must be connected to $\mu-1$ nodes. So the number of edges to be added is $(\mu-1)-d$. Also all edges connecting any two nodes, that are mapped to the leaf nodes, must be deleted, that is, all edges except the ones incident to the node mapped to the center, must be removed. These account for $\xi - d$ edges. Thus, total number of edges to be added and deleted is $\mu+\xi-2d-1$. This is minimized when $d=\Delta$.
%\qed
\end{proof}
%\end{comment}

\begin{theorem}
\label{thm:gedcomplete}
%The graph edit distance 
The graph edit distance between a graph $g$ and a complete graph with same number of nodes as $g$, is $\frac{\mu(\mu-1)}{2} - \xi$, where $\mu$ and $\xi$ are the number of nodes and edges in $g$, respectively.%\\
\end{theorem}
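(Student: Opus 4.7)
The plan is to observe that the complete graph on $\mu$ vertices is the unique (up to isomorphism) graph on $\mu$ vertices whose edge set is the full set $\binom{V}{2}$, and in particular every permutation of its vertex set is an automorphism. Consequently, when transforming $g$ into a graph isomorphic to the complete graph $K_\mu$, the choice of bijection between $V(g)$ and $V(K_\mu)$ is irrelevant: the target edge set under any such bijection is the same set of all unordered pairs of vertices.

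First I would note that any valid transformation sequence must, at minimum, result in a graph containing every one of the $\binom{\mu}{2} = \frac{\mu(\mu-1)}{2}$ possible edges. Since $g$ already contains $\xi$ of these edges, at least $\frac{\mu(\mu-1)}{2} - \xi$ edge additions are required, giving a lower bound on the GED. Deletions can only decrease the edge count and therefore cannot reduce the required number of additions, so no transformation can do better than this lower bound.

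Next I would show this lower bound is achieved. Take the identity bijection (or indeed any bijection) between $V(g)$ and $V(K_\mu)$. Simply add each of the $\frac{\mu(\mu-1)}{2} - \xi$ missing edges one at a time; no deletions are needed because every existing edge of $g$ is also an edge of $K_\mu$. After these additions the resulting graph equals $K_\mu$, which is trivially isomorphic to the target. This matches the lower bound, so the GED is exactly $\frac{\mu(\mu-1)}{2} - \xi$.

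There is no real obstacle here, unlike in Theorem~\ref{thm:gedstar} where the isomorphism freedom was exploited to minimize over the choice of the center; here the complete graph's total vertex-transitivity makes the isomorphism class collapse to a single labeled graph on $V(g)$, so the optimization is trivial and the bound is tight. The only care needed is to confirm that additions and deletions are counted symmetrically in the definition of GED (which they are, as both are unit-cost edit operations), so that the argument ``no deletions needed'' is genuinely optimal.
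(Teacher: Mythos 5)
Your proof is correct and follows essentially the same approach as the paper, which simply observes that $g$ is transformed into the complete graph in the minimum number of steps by adding the absent edges. Your version merely makes explicit the lower-bound argument (deletions cannot help, so at least $\frac{\mu(\mu-1)}{2}-\xi$ additions are needed) and the observation that vertex-transitivity of $K_\mu$ makes the choice of isomorphism irrelevant, both of which the paper leaves implicit.
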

%\begin{comment}
\begin{proof}
Graph $g$ can be transformed into the corresponding complete graph in minimum number of steps by adding the edges which are absent.
%\qed
\end{proof}
%\end{comment}

%

\begin{theorem}
\label{thm:gedkstar}
There exists an $O(\mu^{k+2})$ polynomial time algorithm to compute the graph edit distance %the graph edit distance
 between a graph $g$ and a $k$-star graph with same number of nodes as $g$, where $\mu$ is the number of nodes in $g$.
\end{theorem}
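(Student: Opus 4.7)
The plan is to enumerate over the possible choices of the $k$ center nodes and, for each choice, reduce the residual problem to a polynomial-time balanced bipartite assignment problem whose optimum yields the minimum edit count consistent with that center set.

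First, I would observe that any $k$-star on $\mu$ vertices is completely specified by (a) a set $C$ of $k$ vertices forming the central clique, and (b) a map $a\colon V(g)\setminus C\to C$ assigning each leaf to one of the centers, subject to the balance requirement $|a^{-1}(c_j)|\in\{\lfloor(\mu-k)/k\rfloor,\lceil(\mu-k)/k\rceil\}$ for every $c_j\in C$. Computing the GED then amounts to minimizing, over all valid pairs $(C,a)$, the number of edge edits required to transform $g$ into the labeled $k$-star prescribed by $(C,a)$.

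Second, I would derive an explicit cost formula. For any pair $(C,a)$, let $e_C$, $e_{CL}$, $e_L$ denote the number of edges of $g$ lying inside $C$, across $C$ and $V(g)\setminus C$, and inside $V(g)\setminus C$, respectively, and let $d_C(v)$ denote the number of centers in $C$ adjacent to the leaf $v$. The required edits consist of $\binom{k}{2}-e_C$ additions to complete the center clique; for each leaf $v$, $d_C(v)-[v\sim a(v)]$ deletions of stray center--leaf edges plus $1-[v\sim a(v)]$ possible addition of the assignment edge; and $e_L$ deletions of all leaf--leaf edges. Summing and simplifying gives a total cost of
\[
\binom{k}{2}-e_C+e_L+e_{CL}+(\mu-k)-2\,M(C,a),
\]
where $M(C,a)=\sum_{v\notin C}[v\sim a(v)]$ counts leaves already adjacent to their assigned center.

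Third, since the first four summands depend only on $C$, minimizing the cost for a fixed $C$ is equivalent to maximizing $M(C,a)$ over all balanced assignments $a$. I would cast this as a maximum-profit balanced bipartite $b$-matching between leaves and centers, with unit leaf capacities, center capacities of $\lfloor(\mu-k)/k\rfloor$ or $\lceil(\mu-k)/k\rceil$, and unit profit on the edge $(v,c)$ whenever $(v,c)\in E(g)$. This transportation-type problem is solvable in polynomial time (for instance, via min-cost max-flow); exploiting the $0/1$ profit structure and the constant number of capacity classes, the subproblem reduces to maximum bipartite matching on $O(\mu)$ vertices and $O(\mu k)$ edges, which Hopcroft--Karp resolves in $O(\mu^2)$ time.

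Finally, the outer algorithm loops over all $\binom{\mu}{k}=O(\mu^k)$ candidate center sets, runs the inner $O(\mu^2)$ balanced assignment, and returns the minimum total cost. Correctness follows because every $k$-star isomorphic to the target decomposes uniquely into a center set and a balanced leaf assignment, so the enumeration is exhaustive. The main obstacle I foresee is pinning the inner step at $O(\mu^2)$ rather than the $O(\mu^3)$ of generic assignment solvers; I would address this by the bipartite-matching reformulation above, which leverages the two-valued capacities (only $r$ centers receive the ceiling count, where $r=(\mu-k)\bmod k$) and the $0/1$ profits to stay within the claimed $O(\mu^{k+2})$ overall budget.
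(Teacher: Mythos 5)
Your proposal is correct and follows essentially the same route as the paper's proof: enumerate the $O(\mu^k)$ candidate center sets, derive the identical edit-count formula $\beta_1+\beta_2+\beta_3+(\mu-k)-2f$ (your $e_C,e_L,e_{CL},M(C,a)$ are the paper's $\binom{k}{2}-\beta_1$, $\beta_2$, $\beta_3$, $f$), and maximize the number of retained center--leaf edges under the balance constraint via a polynomial flow/matching computation. The only differences are cosmetic --- you encode the ``at most $r$ centers take the ceiling load'' condition through capacity classes rather than the paper's $\binom{k}{q+1}$ vacancy constraints, and you invoke Hopcroft--Karp where the paper uses Ford--Fulkerson (note that after splitting centers into $\lceil(\mu-k)/k\rceil$ copies the edge count can exceed $O(\mu k)$, but Ford--Fulkerson on the capacitated network still gives $O(\mu^2)$ per center set, so the stated $O(\mu^{k+2})$ bound is unaffected).
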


We provide the proof of Theorem~\ref{thm:gedkstar} in Appendix~\ref{app:gedkstar}.

\subsection{Simulation Setup}
\label{sec:simsetup}

In order to study the robustness of the derived sufficient conditions, we observed the effects of deviation from these conditions, on the resulting networks, using GED as the measure of topology deviation. We first observed the effect when the conditions were made to deviate throughout the network formation process. The results were, however, uninteresting since the deviation from the sufficient conditions for the formation of one topology, lead to the formation of a completely different topology. 
A primary reason for such observations is that, under the deviated conditions, some other networks are pairwise stable and these networks have a very different topology than the desired one.
In some cases, these deviated conditions were sufficient conditions for other topologies, which were, however, not the desired ones.

In fact, it is unreasonable to assume that the conditions remain deviated throughout the entire network formation process.
%, since the network owner will take necessary actions in such cases. 
It is possible that the conditions deviate at some point of time, 
%owing to some internal or external factors. The 
but the network owner will observe the resulting network under such deviations and take necessary actions to rectify this problem.
% We use the concept of dynamic conditions here.
%We refer to this as 
This lets us introduce the concept of {\em dynamic conditions} on the network.

In simulations, we assume that the conditions deviate during the entry of a new node and remain deviated throughout the evolution of the network until it reaches pairwise stability. Once stability is reached, the network owner observes the deviation of the network from the desired one, and takes actions to restore the original conditions. As it is undesirable for the network to remain stagnant, any node which wants to enter the deviated network next, is allowed to do so immediately, and the original conditions take effect during the entry of such a node and evolution thereafter.
% of the network.

We observe how the topology deviates when the conditions deviate, and if, how, and when the topology is restored, once the sufficient conditions are restored. We also observe the values within the sufficient conditions which are more robust than others, that is, when the conditions are restored to these values, the topology is restored at the earliest.
%and is restored within the fewest following node entries.

For simulations, we set the benefit parameters as per the symmetric connections model~\cite{jackson1996strategic}, that is, we set $b_i=\delta^i$, where $\delta \in (0,1)$; we set $\delta=0.8$ in our simulations.
We consider three types of values within the sufficient conditions, namely, \{low($L$), moderate($M$), high($H$)\}
 for each of the parameters $c$, $c_0$ and $\gamma$ (whenever applicable) and observe the combination of their values which are the most robust to deviations.
 %\footnote{
 In our simulation study, low values correspond to value around the lower 10\% of the range in sufficient conditions, moderate to around 50\% mark, and high to around higher 10\%.
 %}
Also, for each combination, we run the network formation process several times in order to account for the effects of randomization in the order in which nodes take decisions.

Owing to sequential entry of nodes, there is an inherent ordering on nodes and they can be numbered from 1 to the current number of nodes in the network, in the order in which they enter. We call the node number at which the sufficient conditions deviate, as the {\em deviation node}. The sufficient conditions are restored during the entry of the node immediately following the deviation node. 
We say that the deviation from sufficient conditions on a parameter is {\em negative} if the deviated value of the parameter is less than its lower bound in the sufficient conditions, and {\em positive} if its deviated value is greater than its upper bound.
%\footnote{
In our simulation study, the amount of deviation for each parameter was 2\% of the length of its range in sufficient conditions. The results observed for 5\% and 10\% deviations were almost same. For parameters whose range in sufficient conditions is a singleton, the results were studied for an absolute deviation of 0.01 on the scale where $b_i = 0.8^i$.
%} % on the scale where $b_i = 0.8^i$.}

\subsection{Simulation Results}
\label{sec:simresults}

We observe the effects of deviation from the derived sufficient conditions for $c$ and $c_0$ on the resulting network. The observations can be primarily classified into the following four cases, in the decreasing order of desirability to network owner:
\begin{enumerate}
\item[(A)] The network does not deviate during the entry and also during the evolution after the entry of deviation node.
\item[(B)] The network deviates after the entry of the deviation node, and perhaps remains deviated during the entry and evolution for the entry of nodes following the deviation node, but after a certain number of such node entries, the network regains its original topology.
\item[(C)] The network deviates after the entry of the deviation node and remains deviated during the entry and evolution for the entry of nodes following the deviation node; the network does not regain its original topology, but the deviation is constant and so a near-desired topology is obtained.
\item[(D)] The network deviates after the entry of the deviation node and the deviation increases monotonically during the entry and evolution for the entry of nodes following the deviation node.
\end{enumerate}

\begin{figure} [t!]
\begin{tabular}{c}
\hspace{-.7cm}
\begin{minipage}{.5\textwidth}
\centering
\iftoggle{clr}{
\includegraphics[scale=0.62]{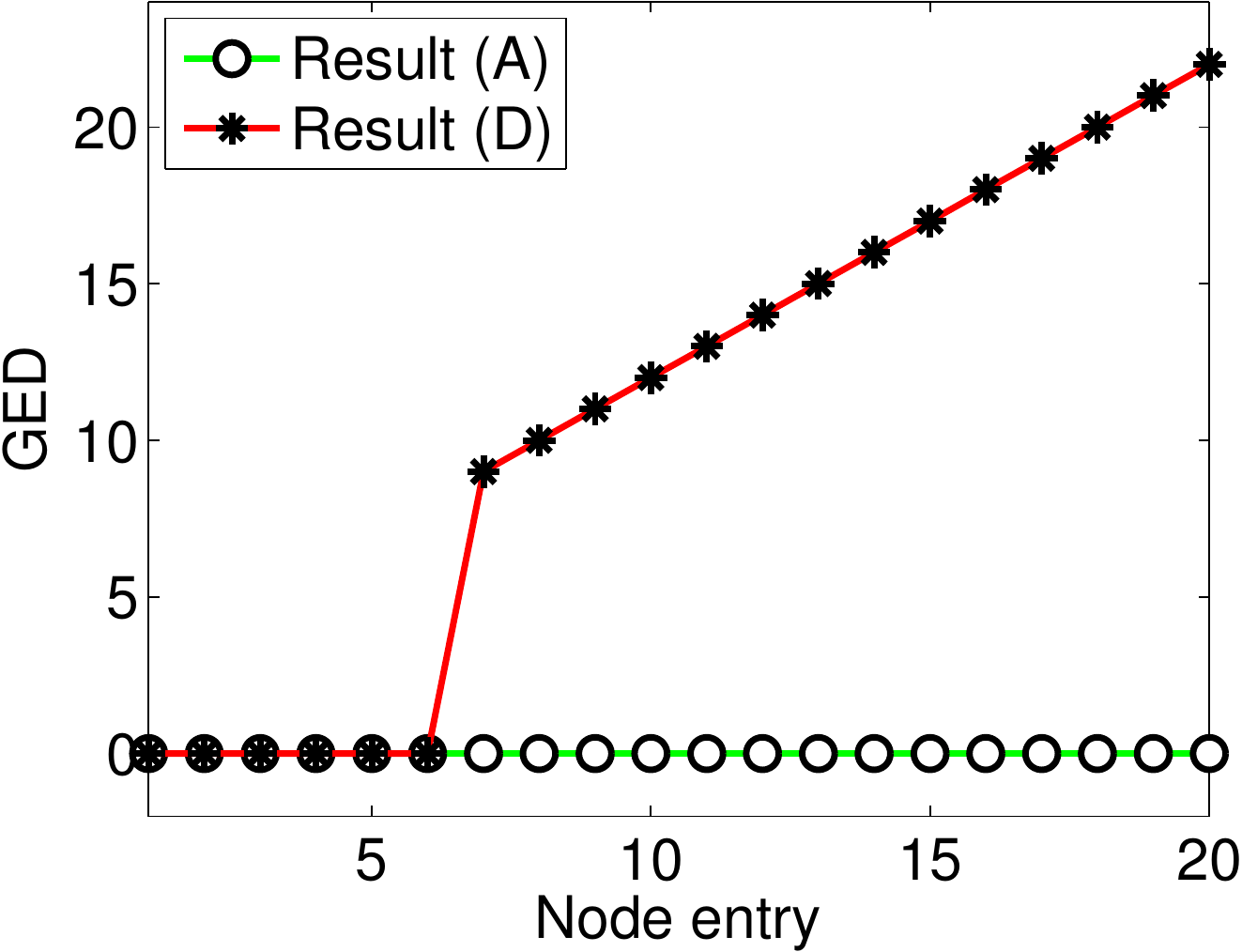}
}{
\includegraphics[scale=0.62]{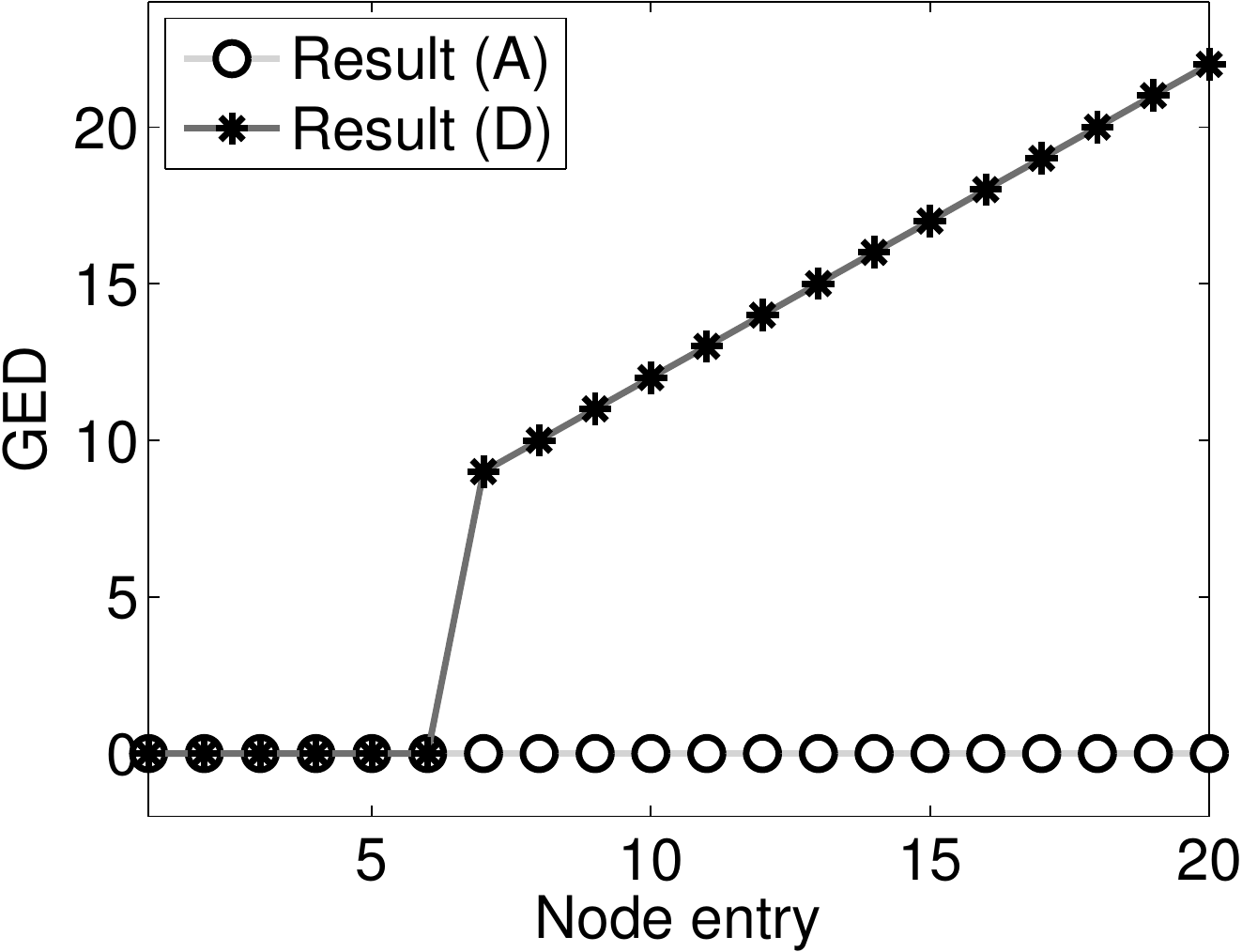}
}
\end{minipage}
\begin{minipage}{.5\textwidth}
\centering
\iftoggle{clr}{
\includegraphics[scale=0.62]{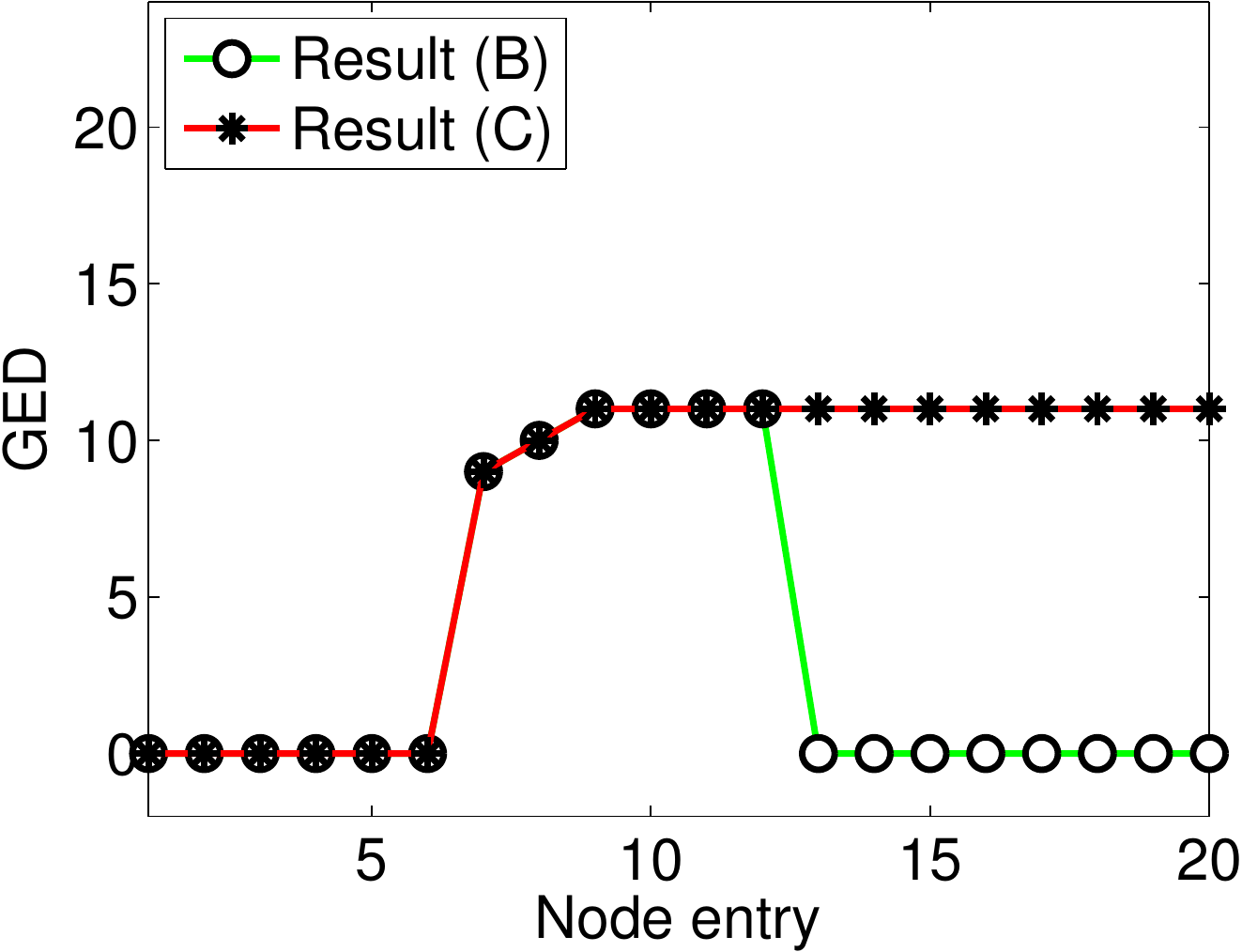}
}{
\includegraphics[scale=0.62]{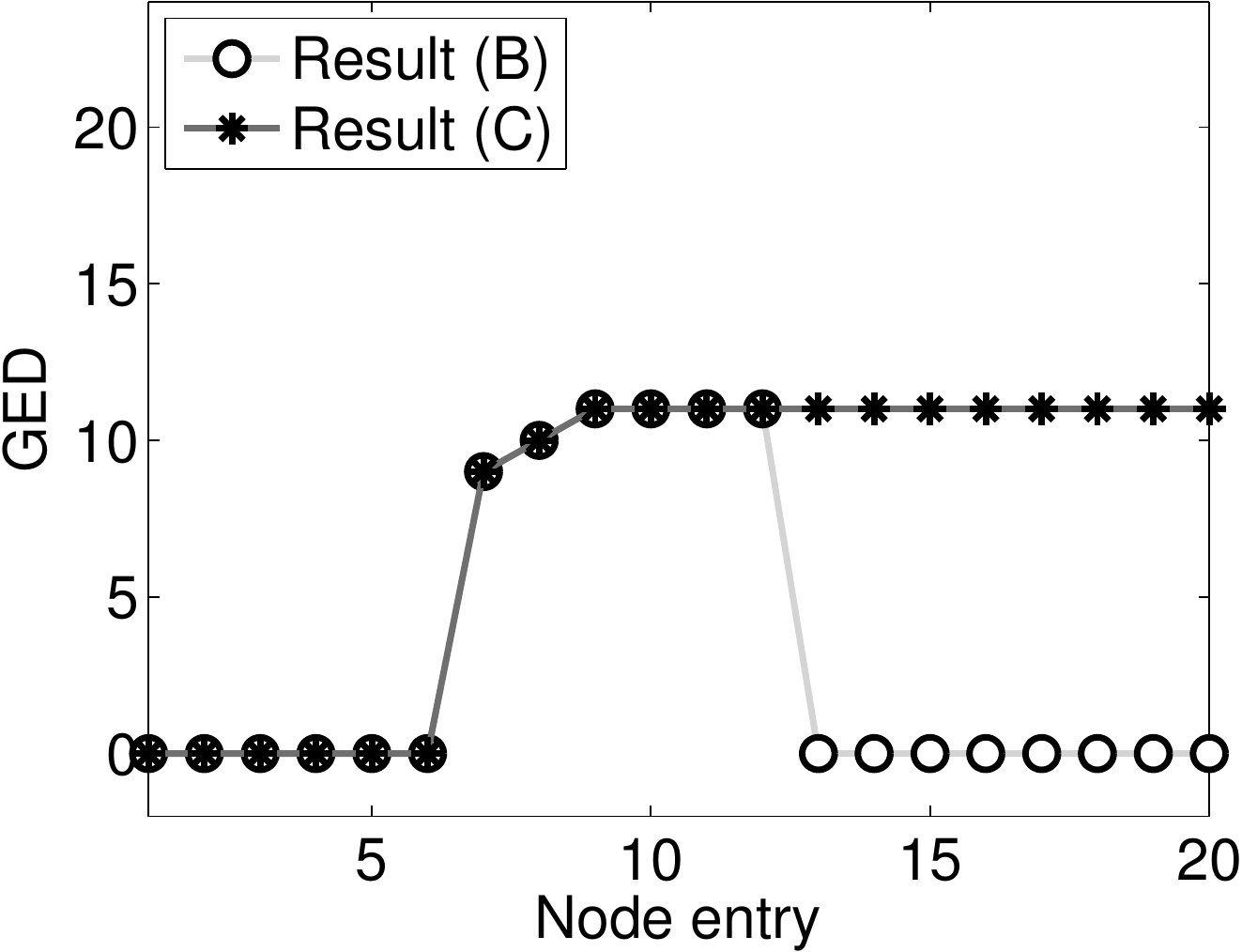}
}
\end{minipage}
\\
%\hspace{-.5cm}
\begin{minipage}{.5\textwidth}
\centering
\small{(a)}
\end{minipage}
\begin{minipage}{.5\textwidth}
\centering
\small{(b)}
\end{minipage}
\end{tabular}
\caption{(a-b) Typical results of deviation from the derived sufficient conditions 
%for the formation of a star network uniquely, 
for deviation node 7
(Y-axis gives the deviation when the network consists of number of nodes given on X-axis)
}
\label{fig:resulttypes}
%\vspace{-.25cm}
\end{figure}

Figures~\ref{fig:resulttypes}(a-b) give typical plots of the above four cases.
% owing to deviation from the derived sufficient conditions.
% for the formation of a star network.
The plots are split into two parts for clarity.
Result (A) is the most desirable but can be obtained only for some particular deviation nodes depending on the topology for which the sufficient conditions are derived. Result (B) is very common and
this is the result the network owner should be looking at. Result (C) is good from a practical viewpoint as the resulting network need not be exactly the desired one, but it may still serve the purpose almost entirely. Result (D) is the one that any network owner should avoid.

Recall that $c$ is the cost incurred by a node in order to maintain a link with each of its immediate neighbors. So as $c$ increases, the desirability of a node to form links decreases.
Also as discussed earlier, a higher value of {\em network entry factor} $c_0$ lays the foundation for formation of a more regular graph. In general, it plays an important role in dictating the degree distribution of the resulting network.
In what follows, we study the effects of all valid deviations from sufficient conditions on cost parameters $c$ and $c_0$, on the resulting network.
In the tables that follow, if there were very few instances in which the network did not deviate, we ignore them since such cases are remote when nodes take decisions in some particular order.
For observing deviations from $k$-star topology ($k \geq 3$), the network is assumed to start with the corresponding base graph consisting of $2k$ nodes as discussed earlier.

Enlisted are the major findings of the simulations:
\begin{itemize} 
\item Certain values of parameters within the derived sufficient conditions may be more robust than others, that is, the value to which the conditions are restored during the entry of the node immediately following the entry of the deviation node, may directly affect the restoration of the topology.
\item Network with certain number of nodes may be bottleneck for the range of sufficient conditions (can be seen from the derivations of these conditions). In such cases, the topology deviates only for discretely few deviation nodes, while it does not for others. So the network owner may relax the conditions for most of the network formation process.
\item The sufficient conditions on $c$ are more sensitive than those on $c_0$, that is, the network deviates more from the desired topology when the value of $c$ deviates than when the value of $c_0$ deviates by similar margins.
\item Results obtained owing to deviation from sufficient conditions during the entry of a deviation node may be very different from that obtained owing to deviation during the entry of some other deviation node.
\item It may be possible to uniquely form some interesting topologies which may not be feasible using any static sufficient conditions.
\item In most scenarios, the order in which nodes take decisions plays an important role in deciding the resulting topology. Deviations from sufficient conditions may cause large deviations from the desired topology due to some ordering, while no deviation at all due to some other.
\end{itemize}

%\begin{figure}[t!]
%\begin{tabular}{cc}
%\begin{minipage}{.5\textwidth}
%\centering
%%\hspace{-9mm}
%\iftoggle{clr}{
%    \includegraphics[scale=0.85]{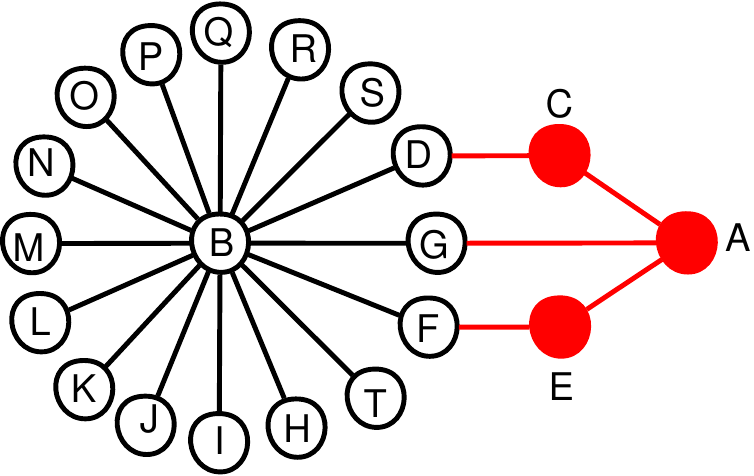}
%    }{
%    \includegraphics[scale=0.85]{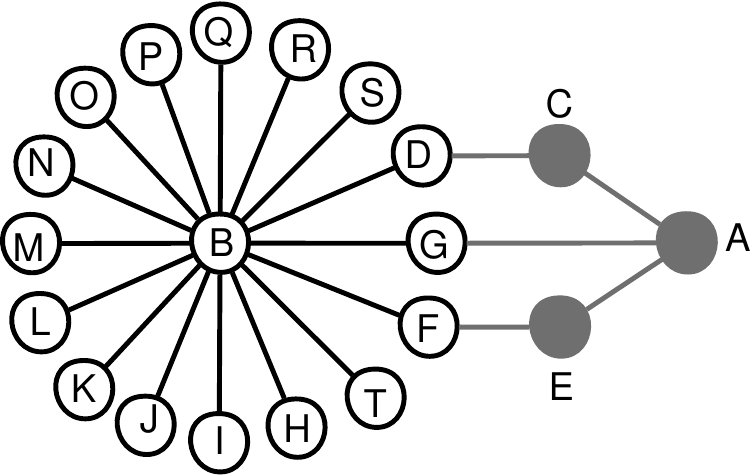}
%    }
%\end{minipage}
%\begin{minipage}{.5\textwidth}
%\centering
%%\hspace{-9mm}
%\iftoggle{clr}{
%    \includegraphics[scale=0.85]{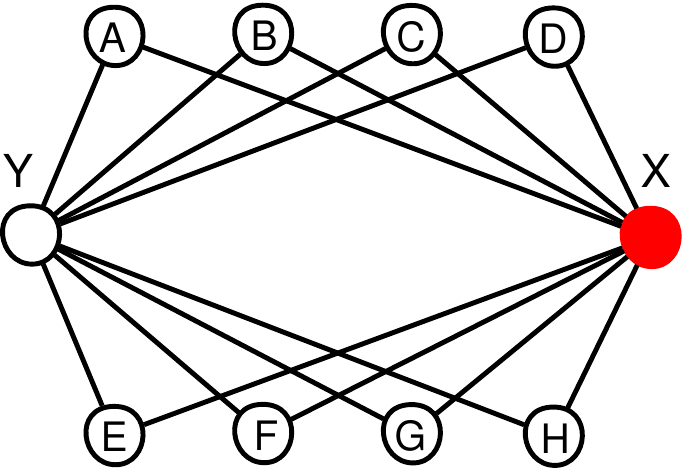}
%}{
%    \includegraphics[scale=0.85]{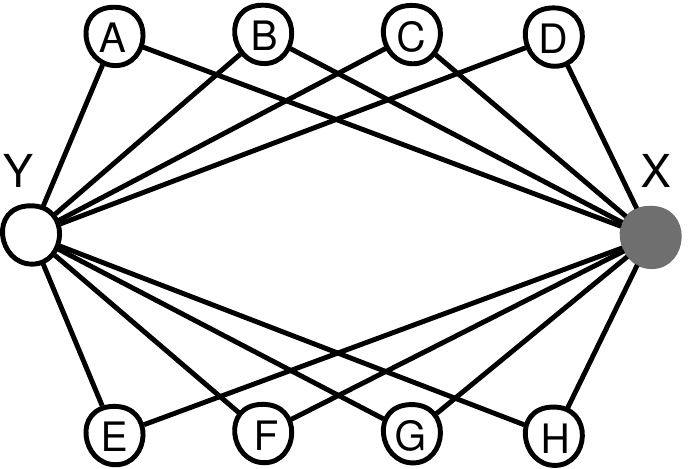}
%    }
%\end{minipage}
%\\
%\begin{minipage}{.5\textwidth}
%\centering
%    (a)
%\end{minipage}
%\begin{minipage}{.5\textwidth}
%\centering
%    (b)
%\end{minipage}
%\end{tabular}
%  \caption{(a) A near-star network and (b) A $(2,8)$-complete bipartite network
%%  Result of restoring the condition on $c$ to be a low value 
%%after positive deviation of $c_0$ from the sufficient conditions for star network
%}
%  \label{fig:bothdevstars}
%%  \vspace{-.25cm}
%\end{figure}

\begin{figure} [t!]
\begin{tabular}{c}
\hspace{-.7cm}
\begin{minipage}{.5\textwidth}
\centering
\iftoggle{clr}{
\includegraphics[scale=0.62]{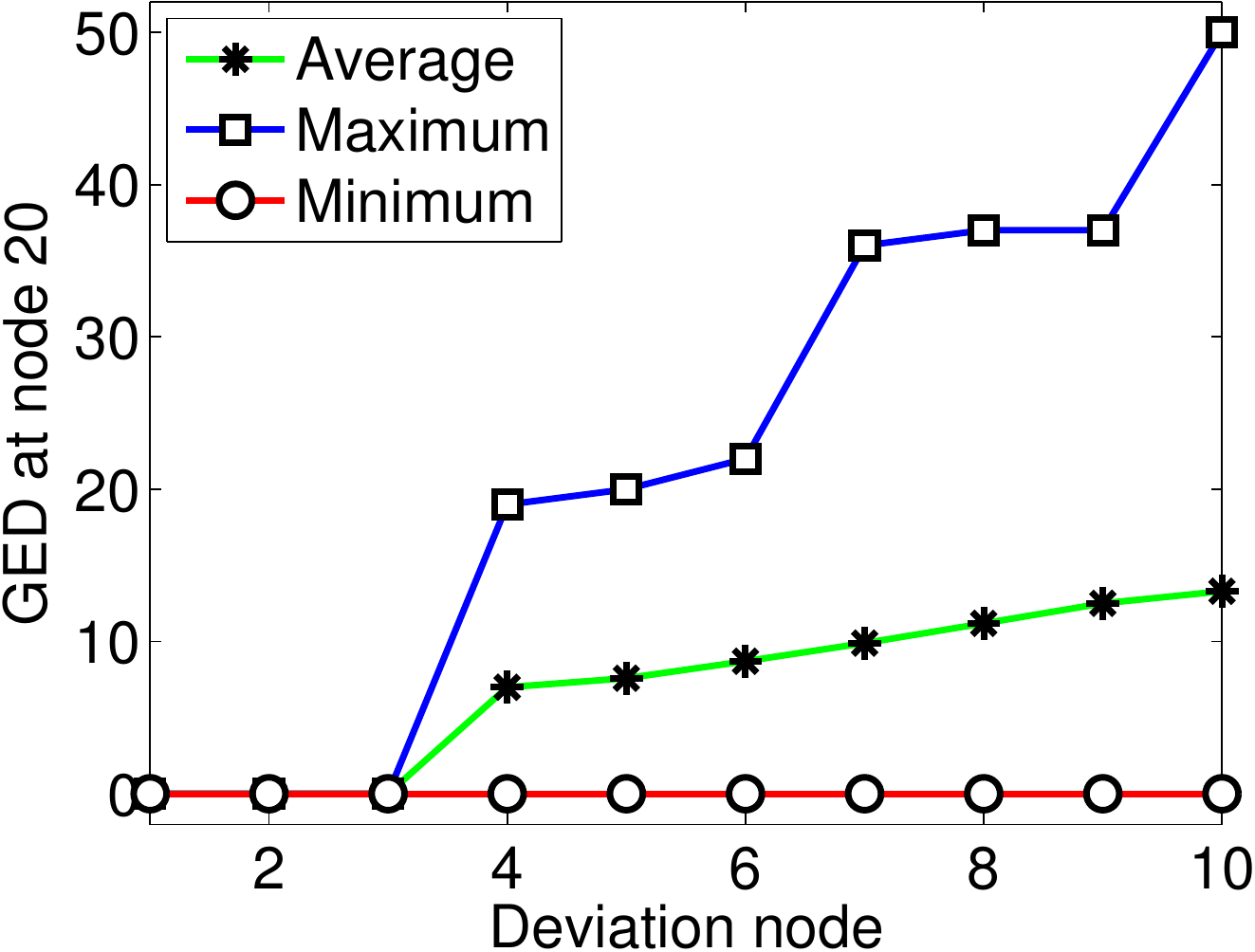}
}{
\includegraphics[scale=0.62]{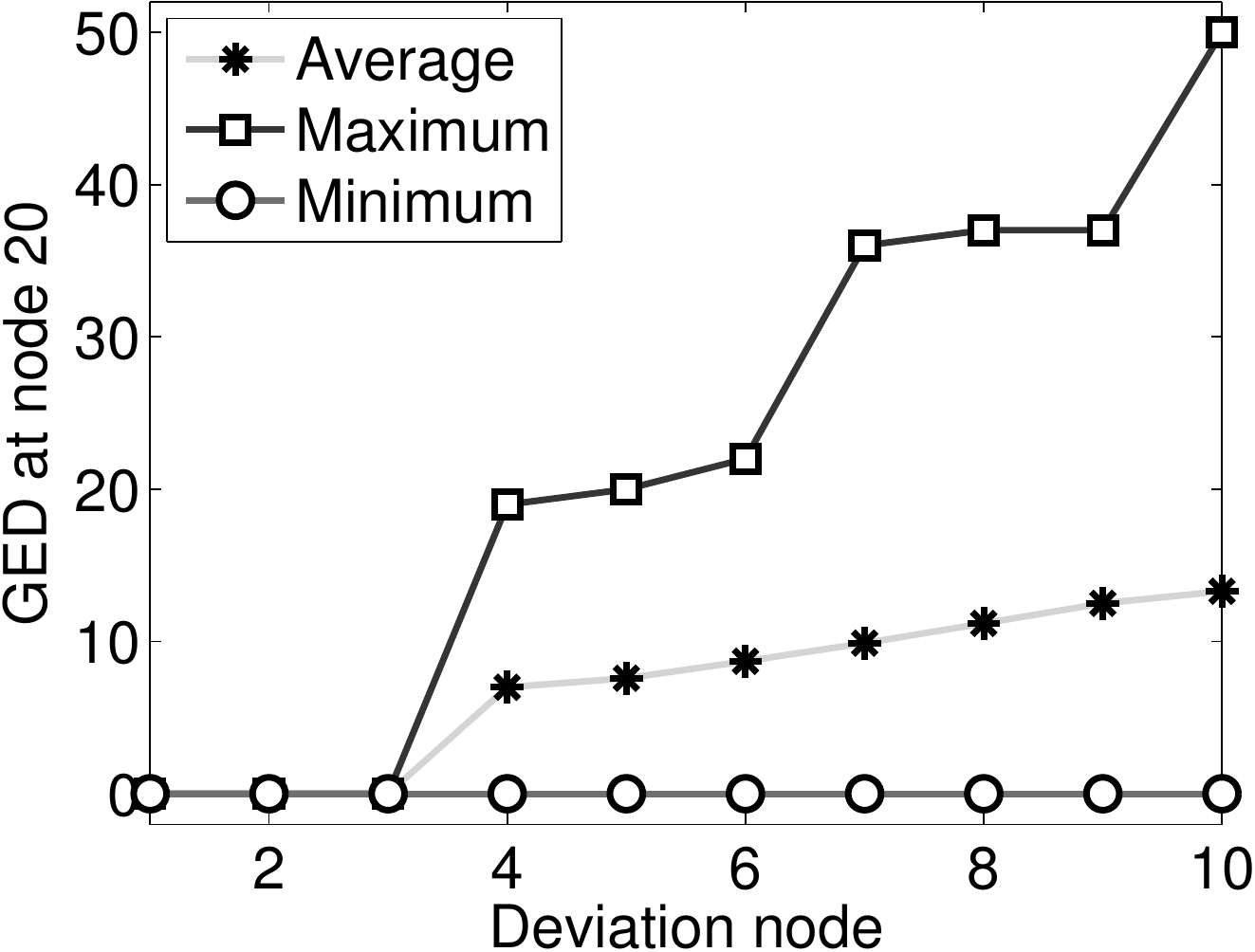}
}
\end{minipage}
\begin{minipage}{.5\textwidth}
\centering
\iftoggle{clr}{
    \includegraphics[scale=0.85]{devnearstar.pdf}
}{
    \includegraphics[scale=0.85]{devnearstar_bw.pdf}
}
\end{minipage}
\\
%\hspace{-.5cm}
\begin{minipage}{.5\textwidth}
\centering
\small{(a)}
\end{minipage}
\begin{minipage}{.5\textwidth}
\centering
\small{(b)}
\end{minipage}
\end{tabular}
\caption{
(a) Results of negative deviation of $c$ from the sufficient conditions for star topology when the network consists of 20 nodes
and 
(b) A near-star network
}
\label{fig:star_costneg}
%\vspace{-.25cm}
\end{figure}

The reader should note the difference in labels on the X and Y axes of the different plots in this chapter.

%\subsection{Specific Results for Deviation with respect to $c$}
\subsection[Results for Deviation with Respect to ${c}$]{Results for Deviation with Respect to $\boldsymbol{c}$}
\label{sec:devcost}

%\noindent
\subsubsection*{{Negative deviation of $\boldsymbol{c}$ from sufficient conditions for star network:}}
These results are shown qualitatively in Table~\ref{tab:devstarcostneg} and quantitatively in Figure~\ref{fig:star_costneg}(a). 
%The graph in 
Figure~\ref{fig:star_costneg}(a) plots the deviation from network as observed for a network with 20 nodes, if the conditions were deviated at a given deviation node.
For deviation nodes 2 and 3, no deviation in network was observed. 
For other deviation nodes, Table~\ref{tab:devstarcostneg} shows the type of result obtained owing to deviation from sufficient conditions on $c$ at a deviation node, following which, the values of $\gamma$, $c_0$ and $c$ are restored to one of \{$L,M,H$\}. 
The results are invariant with respect to the restored value of $c_0$.
The table shows that $\gamma=L$ coupled with $c=H$, and $\gamma=M$ coupled with $c=M \text{ or } H$, give the best results, where the star topology is restored as per result (B). 
$\gamma=L$ coupled with $c=M$, and $\gamma=H$ coupled with $c=M \text{ or } H$, give decent results for practical purposes, where a near-star network (Figure
\ref{fig:star_costneg}(b)) 
%\ref{fig:devstarc0}(a), 
is obtained as per result (C). 
$c=L$ is unacceptable and should be avoided by network owner desiring to form a star network, as these values are not robust to deviations from sufficient conditions.
Typical observations 
%of all these results 
are shown in Figures~\ref{fig:resulttypes}(a-b). %\\

\begin{figure*} [t!]
\begin{tabular}{c}
\hspace{-.7cm}
\begin{minipage}{.5\textwidth}
\centering
\iftoggle{clr}{
\includegraphics[scale=0.62]{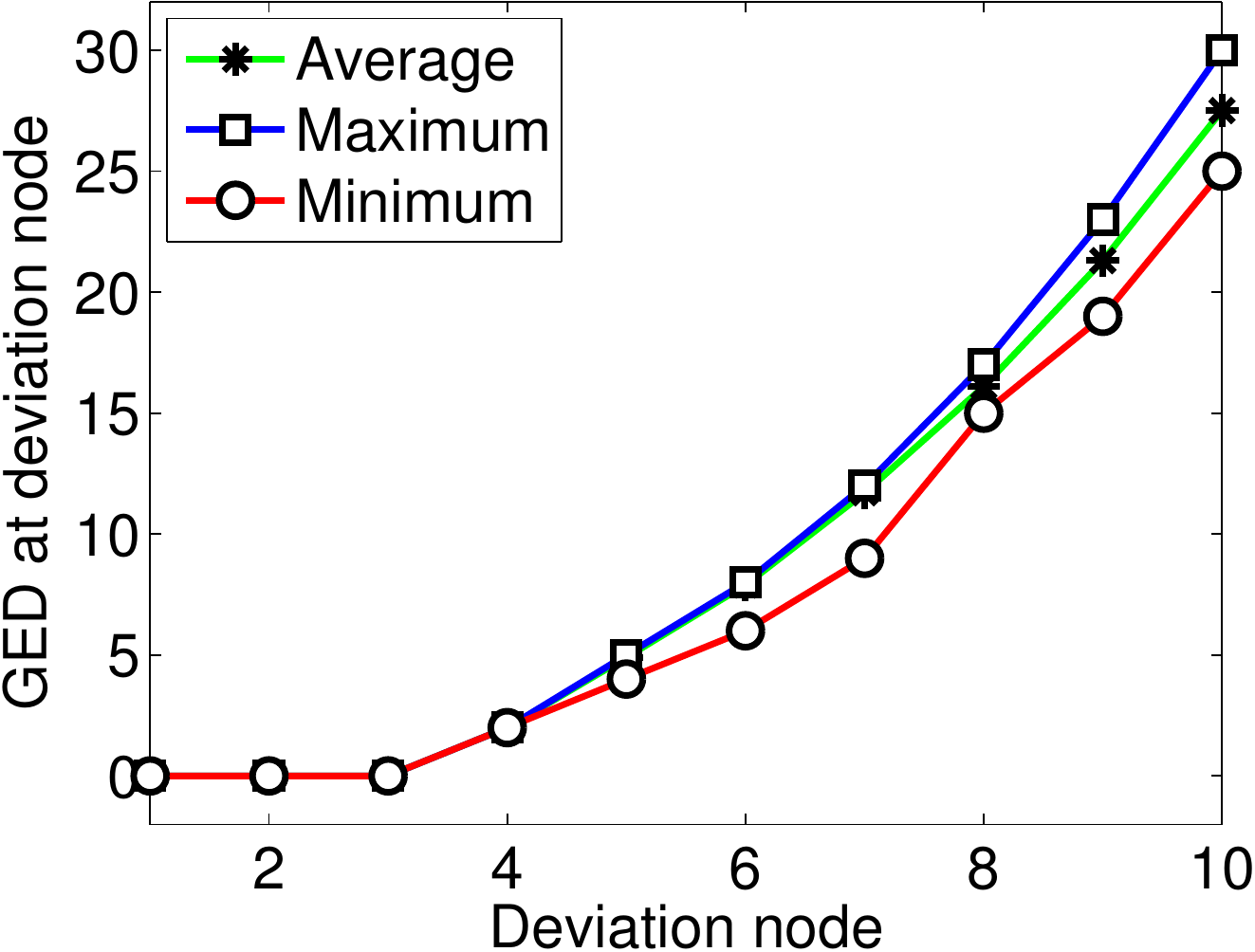}
}{
\includegraphics[scale=0.62]{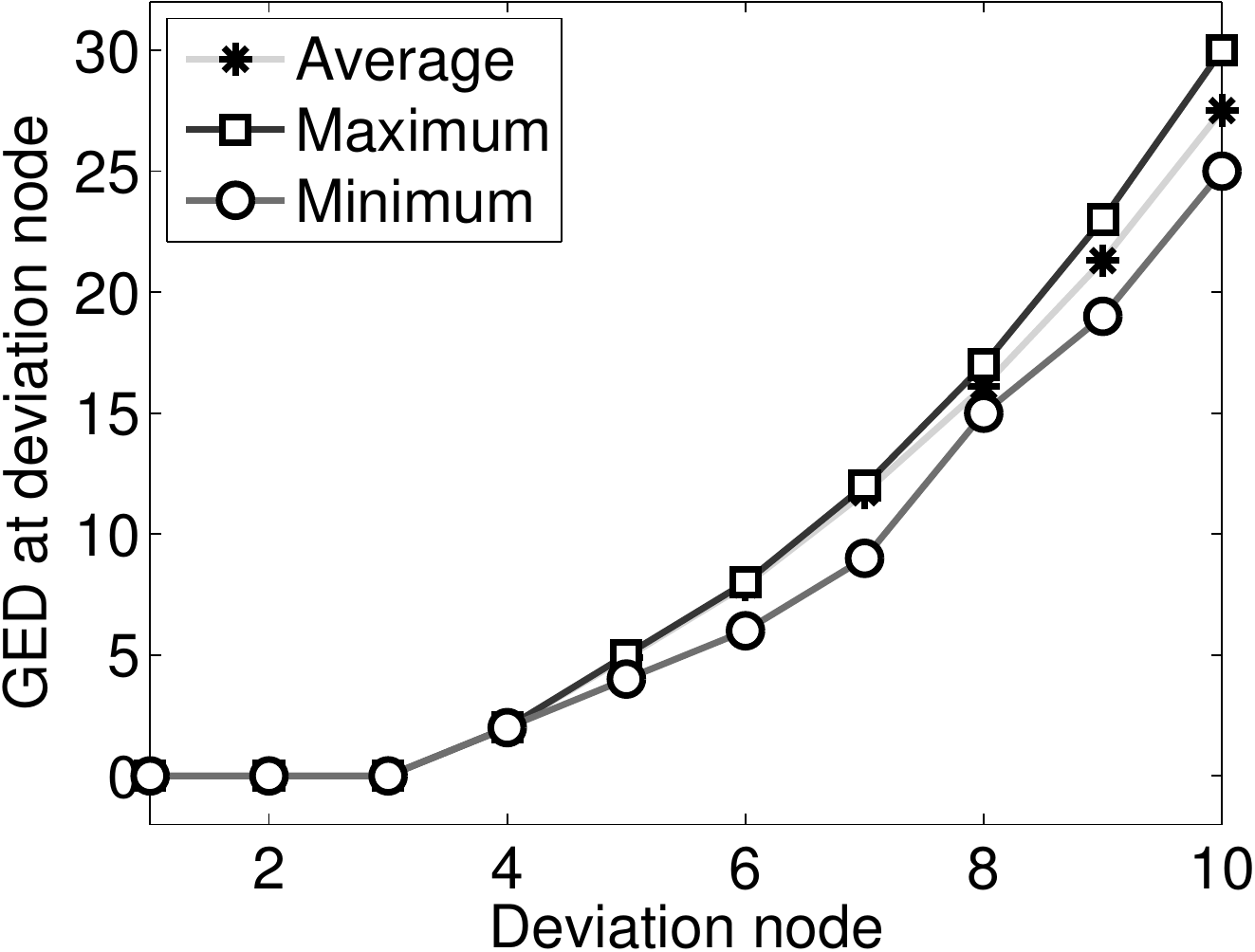}
}
\end{minipage}
\begin{minipage}{.5\textwidth}
\centering
\iftoggle{clr}{
\includegraphics[scale=0.62]{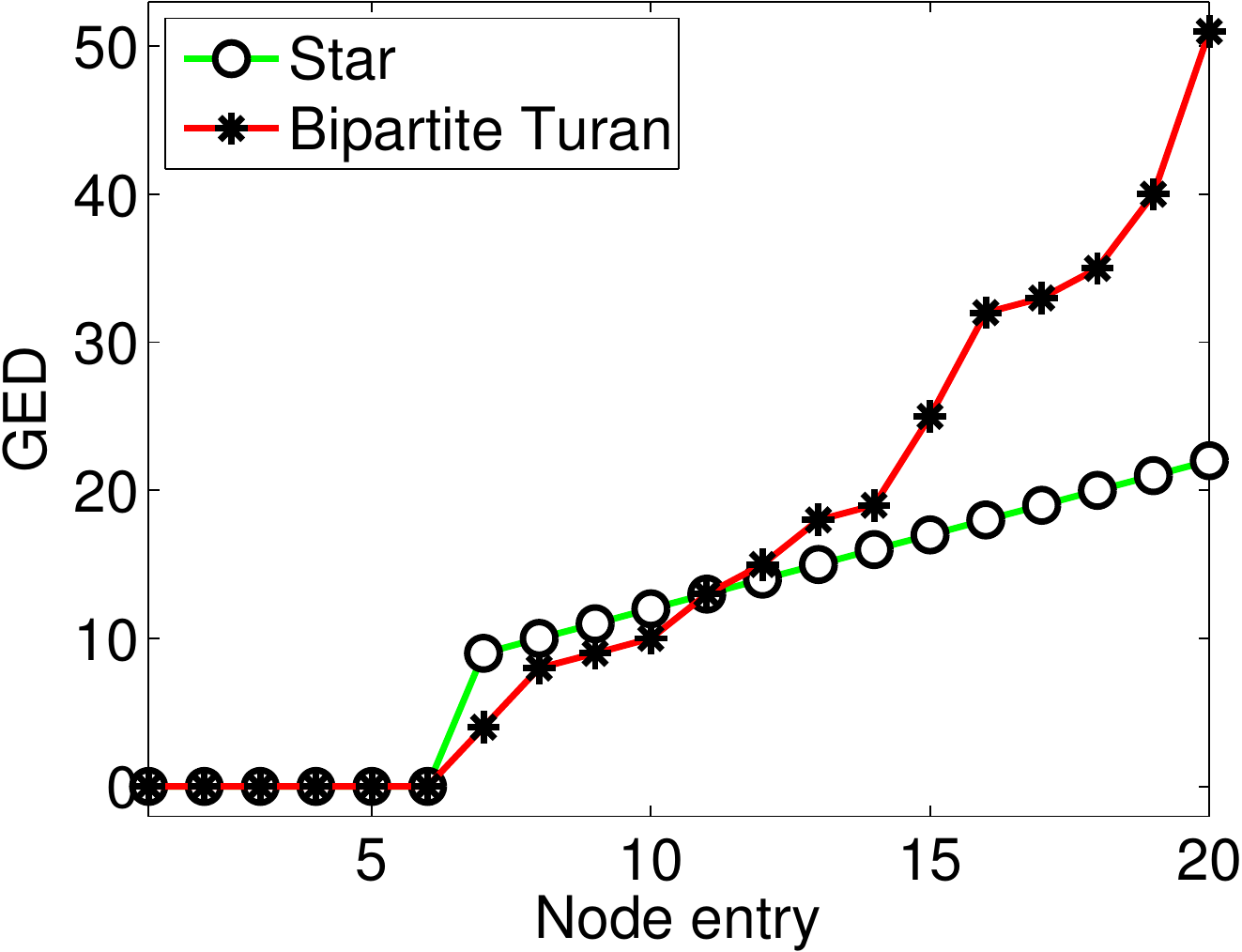}
}{
\includegraphics[scale=0.62]{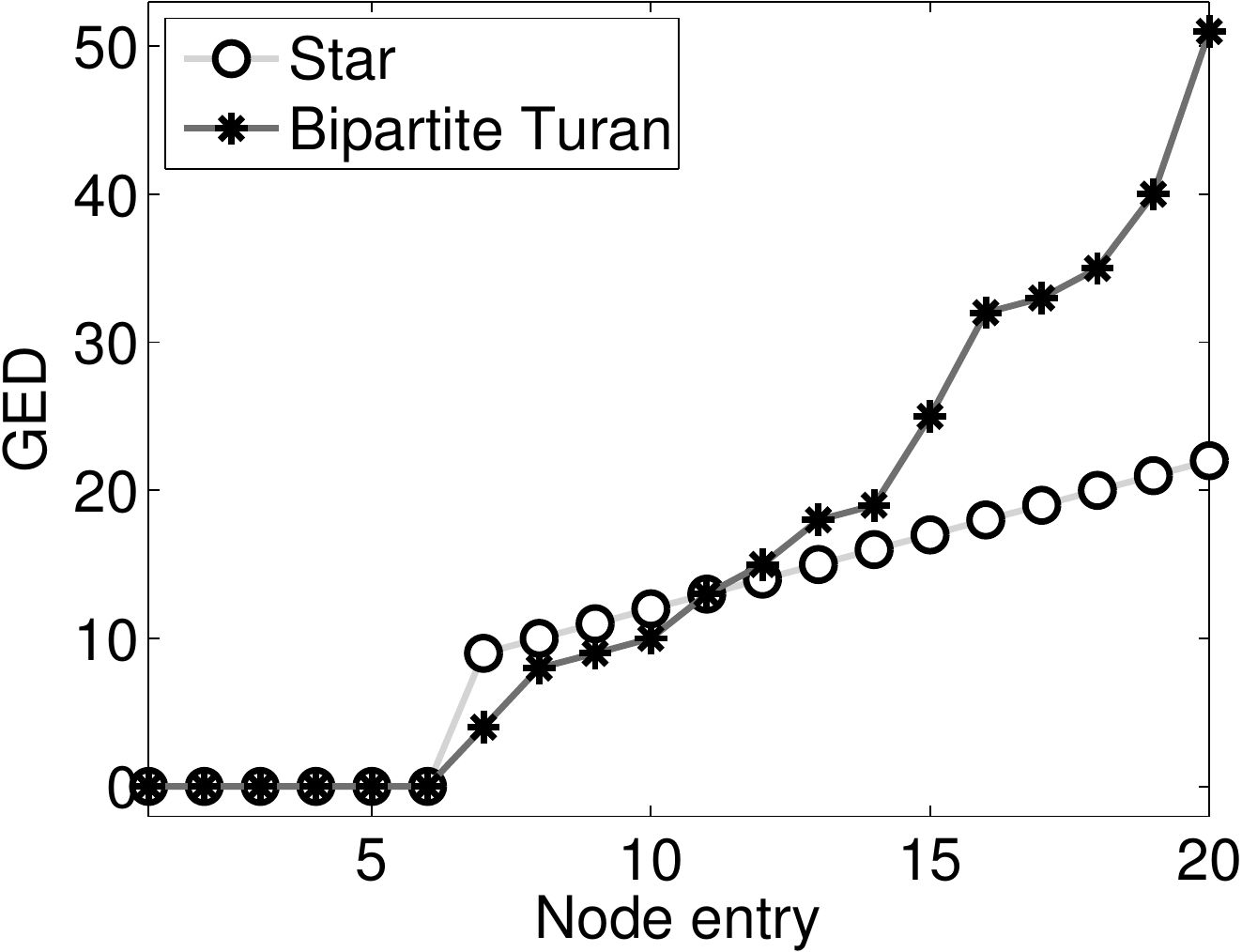}
}
\end{minipage}
\\
%\hspace{-.5cm}
\begin{minipage}{.5\textwidth}
\centering
\small{(a)}
\end{minipage}
\begin{minipage}{.5\textwidth}
\centering
\small{(b)}
\end{minipage}
\end{tabular}
\caption{(a) Results of positive deviation of $c$ from the sufficient conditions for complete network and
(b) Comparison between result (D) for star network and bipartite Tur\'an network for deviation node 7
(Y-axis gives the deviation when the network consists of number of nodes given on X-axis)
}
\label{fig:devc}
%\vspace{-.25cm}
\end{figure*}

\begin{table}[t]
\centering
%  \begin{tabular}{| c || c | c | c || c | c | c || c | c | c |}
  %\begin{tabular}{ p{2.2cm}  p{1cm}  p{1cm}  p{1.5cm}  p{1cm}  p{1cm}  p{1.5cm} p{1cm}  p{1cm}  p{1.5cm} }
  \begin{tabular}{ p{2.2cm}  p{.25cm}  p{.25cm}  p{.75cm}  p{.25cm}  p{.25cm}  p{.75cm} p{.25cm}  p{.25cm} p{.64cm} }
  \hline  \hline
 \T \B 
 & \multicolumn{3}{c}{\hspace{-.3cm}$\gamma=L$}  & \multicolumn{3}{c}{\hspace{-.3cm}$\gamma=M$}  & \multicolumn{3}{c}{\hspace{-.3cm}$\gamma=H$} \\ \hline
  \backslashbox{$c_0$}{$c$} & $L$ & $M$ & $H$  & $L$ & $M$ & $H$  & $L$ & $M$ & $H$ 
  \\ \hline %\hline
\T \B $L/M/H$ &
D & C & B &
D & B & B &
D & C & C
%\\ %\hline
%\T \B $M$ &
%D & C & B &
%D & B & B &
%D & C & C
%\\ %\hline
%\T \B $H$ &
%D & C & B &
%D & B & B &
%D & C & C
\\ \hline  \hline
  \end{tabular}
  %\vspace{-.25cm}
  \caption{Results of negative deviation of $c$ for star network
  }
  \label{tab:devstarcostneg}
\end{table}

%\noindent
\subsubsection*{{Positive deviation of $\boldsymbol{c}$ from sufficient conditions for star network:}}
No node enters the network at deviation node 2, while for all other deviation nodes, the network does not deviate at all and so result (A) is obtained. The same is clear from the derivation of sufficient conditions for star network, that entry of node 2 is the bottleneck on the upper bound for $c$ ($c<b_1$). So node 2 stays out of the network until the sufficient conditions are restored so that they are favorable for it to enter the network, and hence the network builds up as desired. 
%All these observations belong to result (A) and 
These results are desirable if the network owner is not too concerned about the delay of node 2's entry into the network. %\\

%\noindent
\subsubsection*{{Positive deviation of $\boldsymbol{c}$ from sufficient conditions for complete network:}}
No deviation in network was observed for deviation nodes 2 and 3. 
For other deviation nodes, deviations in network were observed only during the entry of the deviation node until the stabilization of the network henceforth (Figure~\ref{fig:devc}(a)). Following this, the sufficient conditions were restored and the network regained the desired topology, after the entry of the node following the deviation node and the stabilization henceforth (result (B)), since the condition $c<b_1-b_2$ ensures that the network so formed has diameter at most 1 (Proposition~\ref{thm:smallworld}), and this is irrespective of the preceding network states. %\\
%that is, it is a complete network.
%The graph edit distance from the complete network at a given deviation node, owing to positive deviation of $c$, is plotted in Figure~\ref{fig:devc}(a).\\

%\noindent
\subsubsection*{{Negative deviation of $\boldsymbol{c}$ from sufficient conditions for bipartite Tur\'an network:}}
The desired network was obtained for all deviation nodes except 4, as clear from the derivation of  sufficient conditions (the 4-node network is the bottleneck for the lower bound on $c$).
%($c>b_1-b_2+\gamma(3b_2-b_3)$). 
For deviation node 4, GED between the resulting network of 4 nodes and the corresponding bipartite Tur\'an network was 3. The topology was restored from the entry of the following node onwards in most instances, while it took up to 9 node entries for some.
%to settle back to a bipartite Tur\'an network. %\\

%\noindent
\subsubsection*{{Positive deviation of $\boldsymbol{c}$ from sufficient conditions for bipartite Tur\'an network:}}
%GED between the resulting network and the corresponding bipartite Tur\'an network was 0 
No deviation in network was observed 
for deviation nodes 2 to 5. However, deviation node 6 onwards, result (D) was observed regularly for all combinations of values \{$L,M,H$\} assigned to $\gamma$, $c_0$ and $c$, apart from when nodes take decisions in a particular order (in which case, no deviation was observed).
For each deviation node 6 onwards, the average GED when the network reached the size of 20 nodes was around 50 and was increasing rapidly as shown in Figure~\ref{fig:devc}(b).
%\footnote{
%Note that 
This GED is expected to be more than that in the case of star network, owing to its relatively high edge density.
%}
Such deviations from the desired network were observed even for extremely minor deviations of $c$ from the derived sufficient conditions.
So restoring the sufficient conditions is not a viable solution for this case.
The network owner should ensure that the values of $c$ are on the lower side so as to stay away from the upper bound. %\\

%\noindent
\subsubsection*{{Negative deviation of $\boldsymbol{c}$ from sufficient conditions for $k$-star network:}}
GED for all deviation nodes were strictly positive and monotonically increasing, qualitatively looking like result (D) in Figure~\ref{fig:resulttypes}(a). %\\

%\noindent
\subsubsection*{{Positive deviation of $\boldsymbol{c}$ from sufficient conditions for $k$-star network:}}
Result (A) was observed for all deviation nodes except $2k$ through $3k-1$. 
The reason for the deviation in network for these deviation nodes is that, in the $k$-star network consisting of number of nodes between $2k$ and $3k-1$, both inclusive, there exists at least one center with only one leaf node linked to it.
When there is a positive deviation of $c$ from the sufficient conditions for $k$-star network, it is beneficial for any other center to delete link with a center that is linked to only one leaf node, and this link deletion leads to other link alterations among other nodes, thus deviating the network from the desired topology.
For deviation nodes $2k$ through $3k-1$, result (D) was observed consistently, which qualitatively looked like the one in Figure~\ref{fig:resulttypes}(a).

%

%\subsection{Specific Results for Deviation with respect to $c_0$}
\subsection[Results for Deviation with Respect to ${c_0}$]{Results for Deviation with Respect to $\boldsymbol{c_0}$}
\label{sec:devc0}

%\noindent
\subsubsection*{{Positive deviation of $\boldsymbol{c_0}$ from sufficient conditions for star network:}}
These results are shown qualitatively in Table~\ref{tab:devstarc0pos} and quantitatively in Figure~\ref{fig:star_c0pos}(a). 
The graph in Figure~\ref{fig:star_c0pos}(a) plots the deviation from network as observed when the network reached the size of 20 nodes, if the conditions were deviated at a given deviation node.
For deviation nodes 2 and 3, no deviation in network was observed. 
For other deviation nodes, Table~\ref{tab:devstarc0pos} shows the type of result obtained owing to deviation from sufficient conditions on $c_0$ at a deviation node, following which, the values of $\gamma$, $c_0$ and $c$ are restored to one of \{$L,M,H$\}. 
%
%
%
\begin{comment} %SeptEdit
The results are invariant with respect to the restored value of $c_0$.
%
It is clear from the table that low or moderate values of $\gamma$ coupled with moderate or high values of $c$, give the best results, where the star topology is resumed as per result (B). 
High values of $\gamma$ coupled with moderate or high values of $c$, give decent results for practical purposes, where a near-star network is obtained as per result (C). 
%
Low values of $c$ are unacceptable and should be avoided by any network owner desiring to form a star network, because these values are not robust to deviations from sufficient conditions.
\end{comment} %SeptEdit
When the sufficient conditions are restored to low values of $c$ after deviating from the sufficient conditions, the resulting network is a $(2,n-2)$-complete bipartite network (result (D)) similar to that in Figure~\ref{fig:star_c0pos}(b), where node $Y$ was the original center and the conditions were deviated during entry of node $X$. %\\

\begin{figure} [t!]
\begin{tabular}{c}
\hspace{-.7cm}
\begin{minipage}{.5\textwidth}
\centering
\iftoggle{clr}{
\includegraphics[scale=0.62]{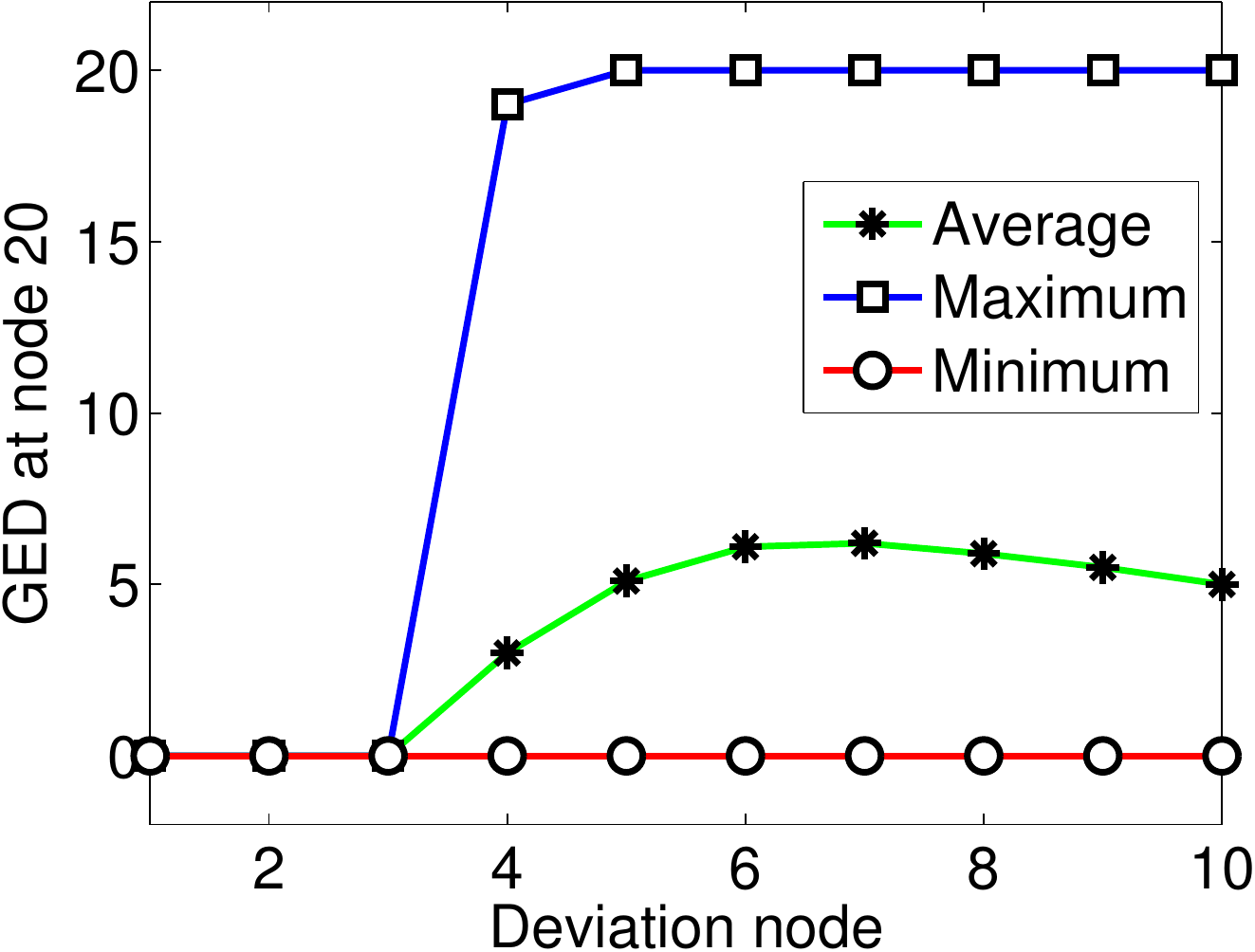}
}{
\includegraphics[scale=0.62]{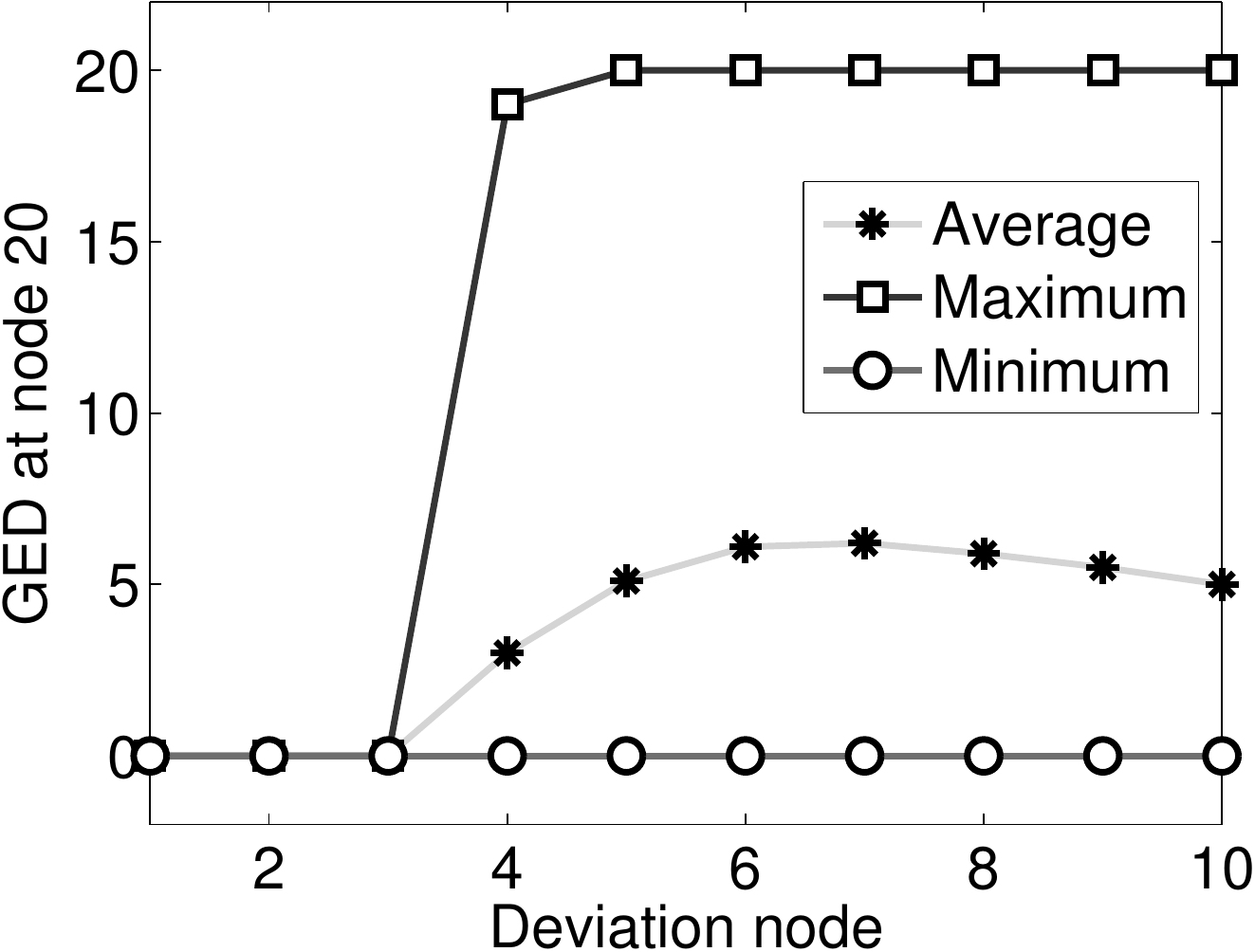}
}
\end{minipage}
\begin{minipage}{.5\textwidth}
\centering
\iftoggle{clr}{
    \includegraphics[scale=0.85]{devdoublestar.pdf}
}{
    \includegraphics[scale=0.85]{devdoublestar_bw.pdf}
}
\end{minipage}
\\
%\hspace{-.5cm}
\begin{minipage}{.5\textwidth}
\centering
\small{(a)}
\end{minipage}
\begin{minipage}{.5\textwidth}
\centering
\small{(b)}
\end{minipage}
\end{tabular}
\caption{
(a) Results of positive deviation of $c_0$ from the sufficient conditions for star network when the network consists of 20 nodes
and 
(b) A $(2,8)$-complete bipartite network
}
\label{fig:star_c0pos}
%\vspace{-.25cm}
\end{figure}

\begin{table}[t]
\centering
  %\begin{tabular}{| c || c | c | c || c | c | c || c | c | c |}
  %\begin{tabular}{ p{2.2cm}  p{1cm}  p{1cm}  p{1.5cm}  p{1cm}  p{1cm}  p{1.5cm} p{1cm}  p{1cm}  p{1.5cm} }
  \begin{tabular}{ p{2.2cm}  p{.25cm}  p{.25cm}  p{.75cm}  p{.25cm}  p{.25cm}  p{.75cm} p{.25cm}  p{.25cm} p{.64cm} }
    \hline  \hline
   \T \B 
   & \multicolumn{3}{c}{\hspace{-.3cm}$\gamma=L$}  & \multicolumn{3}{c}{\hspace{-.3cm}$\gamma=M$}  & \multicolumn{3}{c}{\hspace{-.3cm}$\gamma=H$} \\ \hline
  \backslashbox{$c_0$}{$c$} & $L$ & $M$ & $H$  & $L$ & $M$ & $H$  & $L$ & $M$ & $H$ 
  \\ \hline %\hline
\T \B $L/M/H$ &
D & B & B &
D & B & B &
D & C & C
%\\ %\hline
%\T \B $M$ &
%D & B & B &
%D & B & B &
%D & C & C
%\\ %\hline
%\T \B $H$ &
%D & B & B &
%D & B & B &
%D & C & C
\\ \hline \hline
  \end{tabular}
  %\vspace{-.25cm}
  \caption{Results of positive deviation of $c_0$ for star network
  }
  \label{tab:devstarc0pos}
\end{table}

%\noindent
\subsubsection*{{Positive deviation of $\boldsymbol{c_0}$ from sufficient conditions for complete and bipartite Tur\'an networks:}}
No deviation was observed for early deviation nodes, that is, if the conditions were deviated when the network consisted of less number of nodes.
Let $d_T$ be the degree of the node to which a new node desires to connect in order to enter the network. For both complete and bipartite Tur\'an networks, beyond a certain limit on the number of nodes, the minimum value of $d_T$ is very high.
So during positive deviation of $c_0$, the term $d_T((1-\gamma)b_2 - c_0)$ becomes extremely negative, overpowering other benefits, thus making it undesirable for a new node to enter the network. A new node enters once the sufficient conditions are restored.
%These observations belong to result (A) and are the most 
These results are desirable if the network owner is not concerned about the delay of node entry.
% into the network. %\\

\begin{figure} [t]
\begin{tabular}{c}
%\hspace{-.2cm}
\begin{minipage}{.4\textwidth}
\centering
\iftoggle{clr}{
\includegraphics[scale=0.87]{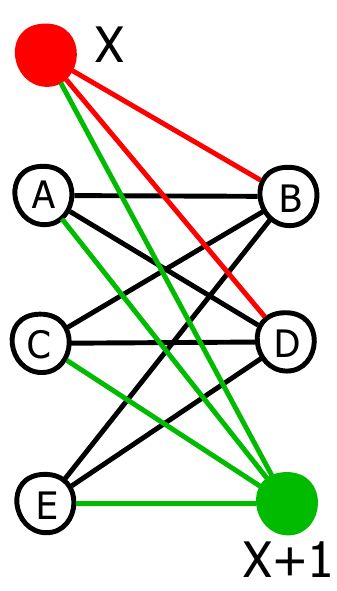}
}{
\includegraphics[scale=0.87]{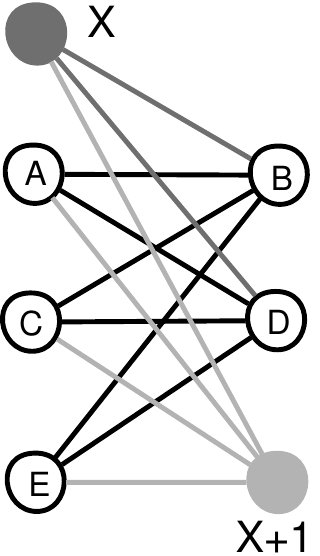}
}
\end{minipage}
\begin{minipage}{.6\textwidth}
\centering
\iftoggle{clr}{
\includegraphics[scale=0.82]{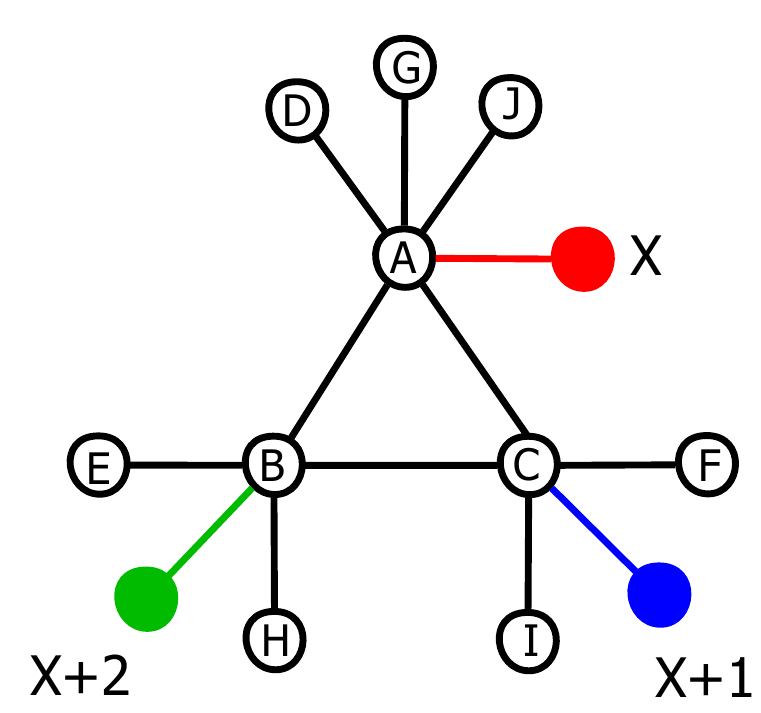}
}{
\includegraphics[scale=0.82]{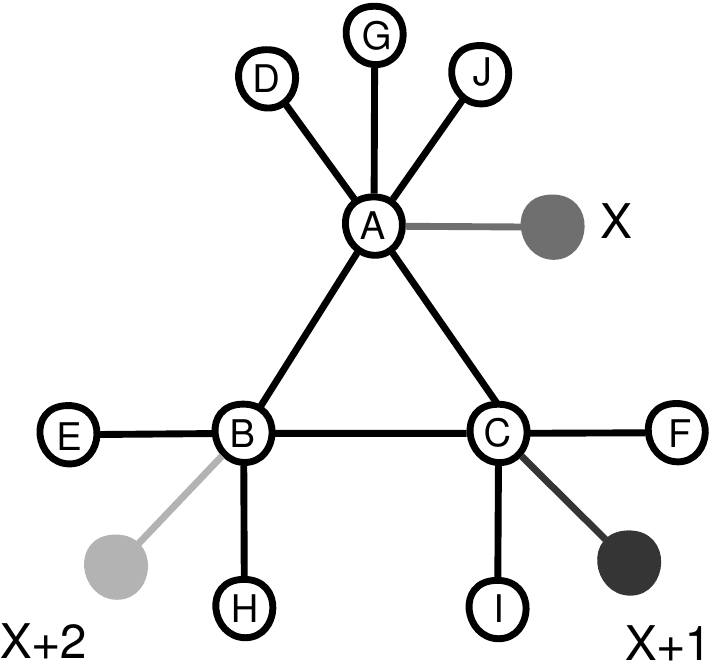}
}
\end{minipage}
\\
%\hspace{-.2cm}
\begin{minipage}{.4\textwidth}
\centering
(a)
\end{minipage}
\begin{minipage}{.6\textwidth}
\centering
(b)
\end{minipage}
\end{tabular}
\caption{Restorations of (a) bipartite Tur\'an and (b) 3-star network topologies
%Results of (a) negative deviation of $c_0$ from the sufficient conditions for bipartite Tur\'an network and
%(b) negative deviation of $c_0$ from the sufficient conditions for 3-star
}
\label{fig:devc0}
%\vspace{-.25cm}
\end{figure}

%\noindent
\subsubsection*{{Negative deviation of $\boldsymbol{c_0}$ from sufficient conditions for bipartite Tur\'an network:}}
The desired network was obtained for all odd numbered deviation nodes and deviation node 2. For deviation node 4, GED between the resulting network of 4 nodes and the corresponding bipartite Tur\'an network was 3. For most instances, the topology was restored from the entry of the following node onwards; but some instances took up to 9 node entries to settle back to a bipartite Tur\'an network (very similar to the case of negative deviation of $c$). 
For every even-numbered deviation node $n \geq 4$, deviations in network were observed only during the entry of the deviation node until the stabilization of the network henceforth, with GED $=n-1$. Following this, the sufficient conditions were restored and the network regained the desired topology, after the entry of the node following the deviation node and the stabilization henceforth. 
Figure~\ref{fig:devc0}(a) shows the result when node $X$ tries to enter the bipartite Tur\'an network consisting of nodes $A,B,C,D,E$, as the $6^{th}$ node, during negative deviation of $c_0$. It creates links with nodes $B,D$ instead of $A,C,E$, thus giving graph edit distance of 5. Following this, the sufficient conditions are restored and the following node $X+1$ 
%tries to enter, which 
forms links with low degree nodes, forming a bipartite Tur\'an network of 7 nodes, thus restoring the topology. %\\

%\noindent
\subsubsection*{{Negative deviation of $\boldsymbol{c_0}$ from sufficient conditions for $k$-star network:}}
For deviation node $n$ such that $(n \mod k)=1$, the network did not deviate and so result (A) was observed. 
For all other deviation nodes, result (B) was observed.
In general, for deviation node $n$, GED was observed to be 2, and it took $\left[ (k+1-z)\mod k \right]$ node entries for the topology to be restored once the sufficient conditions were restored, where $z = (n \mod k)$.
Figure~\ref{fig:devc0}(b) shows the result when node $X$ tries to enter the 3-star network consisting of nodes $A$ through $J$, as the $11^{th}$ node, during negative deviation of $c_0$.
It creates a link with node $A$ instead of either $B$ or $C$, thus giving GED of 2. Following this, the sufficient conditions are restored and so the following node $X+1$ 
%tries to enter, which 
forms links with a lowest degree center, say $C$; but GED remains 2. 
Then the next node $X+2$ tries to enter, which forms a link with the only lowest degree center $B$, forming a 3-star network of 13 nodes, thus restoring the topology.
In this example, $k=3$ and $n=11$ and so it takes 2 node entries for the topology to be restored. %\\

%\noindent
\subsubsection*{{Positive deviation of $\boldsymbol{c_0}$ from sufficient conditions for $k$-star network:}}
%RELOOK
Let $C$ be a center with the lowest degree and $m_j$ be the number of leaf nodes already connected to center $j$.
It can be shown that result (A) will be obtained if the positive deviation of $c_0$ is less that the threshold:
\begin{equation}
\nonumber
(b_3-b_4)\left( \frac{\sum_{j \neq C}m_j + \textbf{I}_{n \neq pk+1}}{k+m_C-2}-1 \right)
\end{equation}
where $n$ is the deviation node, and $\textbf{I}_{n \neq pk+1}$ is 1 if $n \neq pk+1$ for any integer $p$, else it is 0.
If the deviation crossed this threshold in simulations, result (D) was observed consistently, which qualitatively looked like the one in Figure~\ref{fig:resulttypes}(a).
The result is owing to the fact that a high value of $c_0$ would force a new node to prefer connecting to a leaf node which is linked to a center with the highest degree, rather than any center directly; this leads to other link alterations among other nodes, thus deviating the network from the desired topology.

\section{Conclusion}
\label{sec:conclusion_nfsc}

We proposed a model of recursive network formation where nodes enter a network sequentially, thus triggering evolution 
%of the network 
each time a new node enters.
We considered a sequential move game model with myopic nodes under a very general utility model, and pairwise stability as the equilibrium notion; however the proposed model (Figure~\ref{fig:model}) is independent of the network evolution model, the equilibrium notion, as well as the utility model.
The recursive nature of our model enabled us to analyze the network formation process using an elegant induction-based technique.
%Given a network topology, 
For each of the relevant topologies, by directing network evolution as desired, we derived sufficient conditions 
%using the proposed utility model
under which that topology uniquely emerges. 
%for the formation of that topology.
% along a desired improving path.
% in the 
%sequential move
 %game tree.
  The derived conditions suggest that conditions on network entry impact degree distribution, while
   conditions on link costs impact density;
   also there arise constraints on intermediary rents owing to contrasting densities of connections in the desired topology.
We then analyzed the social welfare properties of the considered topologies,
% of relevant topologies, 
and studied the effects of deviating from the derived conditions.
% using simulations. 

\begin{comment} %SeptEdit
A question to investigate is what happens if the network gets disconnected into multiple components if the values of the cost parameters rise significantly over time. A network owner is in the best position to answer this question since he may want to decide the outcome based on his objectives. He may want to try to restore the sufficient conditions at the earliest, he may want to completely remove smaller components from the network, or he may want to go ahead with the network formation process by treating the different components as different networks.
\end{comment} %SeptEdit

\vspace{-.2cm}

%OctEdit
%\balance
%\bibliographystyle{plain}
%\bibliography{studies_in_microeconomics_references}  

%

\vspace{10mm}
This chapter dealt with the problem of forming a social network. Now assuming that a social network is already formed and available to us, and also there is some level of trust developed among connected individuals, the next chapter will focus on the problem of maximizing the diffusion of an information using the social network. In particular, we study the multi-phase version of the well-studied problem of influence maximization in social networks.

\begin{subappendices}
%\subimport{NFSC/}{appendix_NFSC}
%Authors: Swapnil Dhamal, Y. Narahari
%Indian Institute of Science, Bangalore

%\chapter{Proofs of Chapter~\ref{chap:nfsc}}

\chapter*{Appendix for Chapter~\ref{chap:nfsc}}
\addcontentsline{toc}{chapter}{Appendix for Chapter~\ref{chap:nfsc}}
\label{chap:appendix_nfsc}

\section{Proof of Proposition~\ref{thm:smallworld}
}
\label{app:smallworld}
%\noindent
\begin{customprop}{\ref{thm:smallworld}}
%\label{thm:smallworld}
For a network, if $c<b_1-b_{d+1}$ ($d \geq 1$) and $c_0\leq(1-\gamma)b_2$, the resulting diameter is at most $d$.
\end{customprop}
\begin{proof}
The conditions $c<b_1$ and $c_0\leq(1-\gamma)b_2$ ensure that any new node successfully enters the network, that is, it gets a positive utility by doing so, and the node to which it connects to in order to enter the network, also gets a higher utility.

Now consider a network where $c<b_1-b_{d+1}$ and there exist two nodes, say $A$ and $B$, which are at a distance $x>d$ from each other. The indirect benefit they get from each other is  $b_x \leq b_{d+1}$. In the case where there exist essential nodes connecting these nodes, each has to pay an additional rent of $\gamma b_{x}$. By establishing a connection between them, each node gets an additional direct benefit of $b_1$ and incurs an additional cost $c$. Also this connection may decrease the distances between either of these nodes and other nodes, for instance, direct neighbors of node $B$ which were at distance $b_{x-1}$, $b_x$ or $b_{x+1}$ from node $A$, are now at distance  $\min\{b_2,b_{x-1}\}$, resulting in increase in indirect benefits for node $A$.

It can be easily seen that if either (or both) of these nodes acted as an essential node for some pair of nodes, it remains to do so even after the connection is established. Furthermore, it is possible that the established connection shortens the path between this pair, resulting in higher bridging benefits for the node under consideration.

Summing up, by establishing a mutual connection between nodes which are at distance $x>d$ from each other, the overall increase in utility for either node is at least $b_1-c$ and the overall decrease is at most $b_{d+1}$. So the condition sufficient for link creation is $b_1-c>b_{d+1}$. As this is true for any such pair, without loss of generality, the network will evolve until distance between any pair is at most $d$.
%\qed 
\end{proof}

%The following corollary results when $d=1$.
%
%
%\begin{customcor}{\ref{thm:complete}}
%%\label{thm:complete}
%For a network, if $c < b_1-b_2$ and $c_0 \leq \left( 1-\gamma \right) b_2$, the unique resulting topology is a complete graph.
%\end{customcor}

\section{Proof of Proposition~\ref{thm:bipartite}
}
\label{app:bipartite}
%\noindent
\begin{customprop}{\ref{thm:bipartite}}
%\label{thm:bipartite}
For a network with $\gamma <   \frac{b_2 - b_3}{3b_2 - b_3} $, if $b_1-b_2+ \gamma \left( 3b_2 - b_3 \right) <  c < b_1 - b_3$ 
and $\left( 1-\gamma \right) \left( b_2-b_3 \right) < c_0 \leq \left( 1-\gamma \right) b_2$, the unique resulting topology is a 
bipartite Tur\'an graph.
\end{customprop}
\begin{proof}
We first derive conditions for ensuring pairwise stability of a bipartite Tur\'an network, that is, assuming that such a network is formed, what conditions are required so that there are no incentives for any two unconnected nodes to create a link between them and for any node to delete any of its links. Note that these conditions can be integrated in the later part of the proof within different scenarios that we consider.\\
In what follows, $p_1$ is the size of the partition constituting the node taking its decision, $p_2$ is the size of the other partition and $n=p_1+p_2$ is the number of nodes in the network.
We need to consider cases for some discretely small number of nodes owing to the nature of essential nodes, after which, the analysis holds for arbitrarily large number of nodes. For brevity, we present the analysis for the base case and a generic case in each scenario, omitting presentation of discrete cases. \\

\noindent
\textbf{No two nodes belonging to the same partition should create a link between them:} Their utility should not increase by doing so. This is not applicable for $n=2$. \\
For $n=3$, 
\begin{equation}
\nonumber
2(b_1-c) \leq b_1-c+(1-\gamma)b_2
\vspace{-5mm}
\end{equation}
\begin{equation}
\label{E16}
\iff c \geq b_1-b_2+\gamma b_2
\end{equation}
For $n\geq 4$,
\begin{equation}
\nonumber
(p_2+1)(b_1-c)+(p_1-2)b_2 \leq p_2(b_1-c)+(p_1-1)b_2
\vspace{-5mm}
\end{equation}
\begin{equation}
\nonumber
\iff c \geq b_1-b_2
\end{equation}
which is a weaker condition that Inequality~(\ref{E16}).\\

\noindent
\textbf{No node should delete its link with any node belonging to the other partition:} That is, their utility should not increase by doing so. \\
For $n=2$,
\begin{equation}
\nonumber
0\leq b_1-c
\vspace{-5mm}
\end{equation}
\begin{equation}
\label{E17for2}
\iff c \leq b_1
\end{equation}
For $n\geq 6$,
\begin{equation}
\nonumber
(p_2-1)(b_1-c)+(p_1-1)b_2+b_3 \leq p_2(b_1-c)+(p_1-1)b_2
\vspace{-5mm}
\end{equation}
\begin{equation}
\label{E17}
\iff c \leq b_1-b_3
\end{equation}
It can be shown that conditions for the discrete cases $n=3,4,5$  are satisfied by Inequality~(\ref{E17}).\\  \\
In the process of formation of a bipartite Tur\'an network, at most four different types of nodes exist at any point in time.
  \begin{center}
  %\begin{small}
\begin{tabular}{l l }
    \hline \hline
\T \B  
I & newly entered node \\ \hline
\T \B  
II & nodes connected to the newly entered node\\ \hline
\T \B  
III & nodes in the same partition as Type II nodes, but not connected to newly entered node\\ \hline
\T \B  
IV & rest of the nodes\\ \hline \hline
  \end{tabular}
  %\end{small}
  \end{center}
The notation we use while deriving the sufficient conditions are as follows:
\begin{center}
%\begin{small}
\begin{tabular}{l l}
    \hline  \hline
\T \B 
$k$ & number of nodes of Type II \\ \hline
\T \B 
$n$ & number of nodes in network, including new node\\ \hline
\T \B 
$m_1$ & number of nodes of Types II and III put together\\ \hline
\T \B 
$m_2$ & number of nodes of Type IV\\ \hline \hline
  \end{tabular}
  %\end{small}
\end{center}

\noindent
\textbf{For the newly entering node to enter the network:} Its utility should be positive after doing so. Also, in case of even $n$, for the new node to be a part of the smaller partition, its first connection should be a node belonging to the larger partition. So for $k=0$, we have \\
For $n\geq 2$,
%\vspace{-.3cm} 
\begin{equation}
\nonumber
b_1-c+ \lceil \frac{n}{2}-1 \rceil \left( (1-\gamma)b_2 - c_0 \right) + \lfloor \frac{n}{2}-1 \rfloor (1-\gamma)b_3 >0
%\vspace{-.1cm} 
\end{equation}
It can be seen that the condition is the strongest when $n=2$ whenever
%\vspace{-.2cm} 
\begin{equation}
\label{E1b}
c_0 \leq (1-\gamma)b_2
%\vspace{-.1cm} 
\end{equation}
The condition thus becomes
%\vspace{-.2cm} 
\begin{equation}
\nonumber
c< b_1
%\vspace{-.1cm} 
\end{equation}
which is satisfied by Inequality~(\ref{E17}).\\

\noindent 
\textbf{The utility of a node in the larger partition, whenever applicable, should not decrease after accepting link from the new node:}\\
For $n=2$,
%\vspace{-.3cm} 
\begin{equation}
\nonumber
b_1-c \geq 0
\vspace{-5mm}
\end{equation}
\begin{equation}
\nonumber
\iff c \leq b_1
%\vspace{-.1cm} 
\end{equation}
For $n \geq 5$,
%\vspace{-.1cm} 
\begin{align}
\nonumber
\begin{split}
&
\lceil \frac{n}{2} \rceil (b_1 -c) + \lfloor \frac{n}{2}-1 \rfloor b_2 + \gamma \lceil \frac{n}{2}-1 \rceil 2 b_2 + \gamma \lfloor \frac{n}{2}-1 \rfloor 2 b_3 \\
&\geq \lceil \frac{n}{2}-1 \rceil (b_1 -c) + \lfloor \frac{n}{2}-1 \rfloor b_2
\end{split}
%\vspace{-.1cm} 
\end{align}
\begin{equation}
\nonumber
\iff c \leq b_1+ 2\gamma \lceil \frac{n}{2}-1 \rceil b_2 + \lfloor \frac{n}{2}-1 \rfloor b_3
\end{equation}
The conditions for these as well as the discrete cases $n=3,4$ are satisfied by Inequality~(\ref{E17}).\\

\noindent
\textbf{The new node should connect to a node in the larger partition, whenever applicable:} One way to see this is by ensuring that this strategy strictly dominates connecting to a node in the smaller partition.
This scenario arises for even values of $n\geq 4$.
%\vspace{-.1cm} 
\begin{align}
\begin{split}
\nonumber
&
b_1-c+ \left( \frac{n}{2}-1 \right) \left( (1-\gamma)b_2 - c_0 \right) + \left( \frac{n}{2}-1 \right) (1-\gamma)b_3 \\
&> b_1-c+ \left( \frac{n}{2} \right) \left( (1-\gamma)b_2 - c_0 \right) + \left( \frac{n}{2}-2 \right) (1-\gamma)b_3 
\end{split}
\end{align}
\begin{equation}
\label{E2a}
\iff c_0 > (1-\gamma)(b_2-b_3)
%\vspace{-.1cm} 
\end{equation}
An alternative condition would be such that the utility of a node in the smaller partition decreases if it accepts the link from the new node, thus forcing the latter to connect to a node in the other partition. But it can be seen that this condition is inconsistent with Inequality~(\ref{E17}) and so we use Inequality~(\ref{E2a}) to meet our purpose.\\

\noindent
\textbf{Type I node should prefer connecting to a Type III node, if any, than remaining in its current state:}
For $k\geq 2$, this scenario does not arise for $n<6$. 
For $n\geq 6$,
%\vspace{-.3cm} 
%\begin{small}
\begin{equation}
\nonumber
\begin{split}
(k+1)(b_1-c)+m_2b_2+(m_1-k-1)b_3 > k(b_1-c)+m_2b_2
+(m_1-k)b_3
%\vspace{-.1cm} 
\end{split}
\end{equation}
%\end{small}
\begin{equation}
\label{E6}
\iff c<b_1-b_3
%\vspace{-.1cm} 
\end{equation}
Now for $k=1$, this scenario does not arise for $n=2,3$.\\
For $n \geq 4$,
%\vspace{-.3cm} 
\begin{align}
\nonumber
\begin{split}
2(b_1-c)+m_2b_2+(m_1-2)b_3 
> b_1-c+(1-\gamma)m_2b_2
+(1-\gamma)(m_1-1)b_3
\end{split}
\end{align}
\vspace{-5mm}
\begin{equation}
\nonumber
\iff c<b_1-b_3+\gamma(m_2b_2+(m_1-1)b_3)
\end{equation}
Note that as $n \geq 4$, we have $m_1 \geq 2$ and $m_2 \geq 1$ and so the above condition is weaker that Inequality~(\ref{E6}).\\
It is also necessary that utility of Type III node does not decrease on accepting link from Type I node. In fact, when the former gets a chance to move, we derive conditions so that it also volunteers to create a link with the later.\\

\noindent
\textbf{The utility of Type III node should increase if it successfully creates a link with Type I node:}
When $k=1$, the case does not arise for $n=2,3$.\\
For $n \geq 6$,
%\vspace{-.3cm} 
%\begin{small}
\begin{equation}
\nonumber
(m_2+1)(b_1-c)+(m_1-1)b_2>m_2(b_1-c)+(m_1-1)b_2+(1-\gamma)b_3
\vspace{-5mm}
\end{equation}
%\end{small}
\begin{equation}
\nonumber
\iff c<b_1-b_3+\gamma b_3
%\vspace{-.1cm} 
\end{equation}
The conditions obtained from discrete cases $n=4,5$ are weaker than this one.\\
For $k\geq 2$, this case does not arise for $n <6$. \\
For $n\geq 6$,
%\vspace{-.3cm} 
\begin{equation}
\nonumber
(m_2+1)(b_1-c)+(m_1-1)b_2>m_2(b_1-c)+(m_1-1)b_2+b_3
\vspace{-5mm}
\end{equation}
\begin{equation}
\nonumber
\iff c<b_1-b_3
%\vspace{-.1cm} 
\end{equation}
The conditions for all cases are satisfied by Inequality~(\ref{E6}).\\

\noindent
\textbf{Type III node should not delete its link with Type IV node:} This can be assured if this strategy is dominated by its strategy of forming a link with Type I node.
This scenario does not arise for $n=2,3$. 
The conditions for the discrete cases $n=4,5,6$ are weaker than that for $n\geq 7$.\\
For $n\geq 7$,
%\vspace{-.3cm} 
%\begin{small}
\begin{equation}
\nonumber
\begin{split}
(m_2+1)(b_1-c)+(m_1-1)b_2>(m_2-1)(b_1-c)+b_3
+(m_1-1)b_2+(1-\gamma)b_3
%\vspace{-.1cm} 
\end{split}
\end{equation}
%\end{small}
\vspace{-5mm}
\begin{equation}
\nonumber
\iff c<b_1-b_3+\frac{\gamma}{2}b_3
%\vspace{-.1cm} 
\end{equation}
For $k\geq 2$, the cases applicable are $n\geq 6$.
The condition for discrete case $n=6$ is weaker than the following condition.\\
For $n\geq 7$,
%\vspace{-.3cm} 
%\begin{small}
\begin{equation}
\nonumber
\begin{split}
(m_2+1)(b_1-c)+(m_1-1)b_2>(m_2-1)(b_1-c)+b_3+b_3+(m_1-1)b_2
\end{split}
%\vspace{-.1cm} 
\end{equation}
%\end{small}
\vspace{-5mm}
\begin{equation}
\nonumber
\iff c<b_1-b_3
%\vspace{-.1cm} 
\end{equation}
Hence, all conditions for this scenario are satisfied by Inequality~(\ref{E6}).\\

\noindent
\textbf{Type III node should prefer connecting to Type I node than to another Type III node:} This does not arise for $n<6$.
When $k=1$, \\
For $n\geq 6$,
%\vspace{-.3cm} 
%\begin{small}
\begin{align}
\nonumber
\begin{split}
(m_2+1)(b_1-c)+(m_1-1)b_2
>(m_2+1)(b_1-c)
+(m_1-2)b_2+(1-\gamma)b_3
%\vspace{-.1cm} 
\end{split}
\end{align}
%\end{small}
\vspace{-5mm}
\begin{equation}
\nonumber
\iff b_2>(1-\gamma)b_3
%\vspace{-.1cm} 
\end{equation} 
which is always true. For $k\geq 2$,\\
For $n \geq 6$,
%\vspace{-.3cm} 
%\begin{small}
\begin{equation}
\nonumber
(m_2+1)(b_1-c)+(m_1-1)b_2>(m_2+1)(b_1-c)+(m_1-2)b_2+b_3
%\vspace{-.1cm} 
\end{equation}
%\end{small}
\vspace{-5mm}
\begin{equation}
\nonumber
\iff b_2>b_3
%\vspace{-.1cm} 
\end{equation} 
which is always true.\\

\noindent
\textbf{Type IV node should not delete its link with Type III node:} That is, its utility should not increase by doing so.
This does not arise for $n<4$.\\
For $n\geq 7$,
%\vspace{-.3cm} 
\begin{equation}
\nonumber
\begin{split}
(m_1-1)(b_1-c)+(m_2-1)b_2+(1-\gamma)b_2+b_3 \\ \leq m_1(b_1-c)+(m_2-1)b_2+(1-\gamma)b_2
%\vspace{-.1cm} 
\end{split}
\end{equation}
\begin{equation}
\nonumber
\iff  c\leq b_1-b_3
%\vspace{-.1cm} 
\end{equation} 
The conditions for discrete cases $n=4,5,6$ are weaker than the above condition.
For $k\geq 2$, the new cases are $n\geq 6$, where the discrete case $n=6$ result in conditions weaker than the following one.\\
For $n \geq 7$,
%\vspace{-.3cm} 
%\begin{small}
\begin{equation}
\nonumber
(m_1-1)(b_1-c)+(m_2-1)b_2+b_2+b_3 \leq m_1(b_1-c)+(m_2-1)b_2+b_2
%\vspace{-.1cm} 
\end{equation}
%\end{small}
\vspace{-5mm}
\begin{equation}
\nonumber
\iff c\leq b_1-b_3
%\vspace{-.1cm} 
\end{equation} 
It can be seen that all conditions of this scenario are satisfied by Inequality~(\ref{E6}).\\

\noindent
\textbf{Type IV node should also not break its link with Type II node:} That is, its utility should not increase by doing so.
For $k=1$,\\
For $n\geq 6$,
%\vspace{-.3cm} 
\begin{align}
\nonumber
\begin{split}
&
(m_1-1)(b_1-c)+(m_2-1)b_2+(1-\gamma)b_4+(1-\gamma)b_3 \\ 
&\leq m_1(b_1-c)+(m_2-1)b_2+(1-\gamma)b_2 
%\vspace{-.1cm} 
\end{split}
\end{align}
\begin{equation}
\nonumber
\iff c\leq b_1-b_3+(1-\gamma)(b_2-b_4)+\gamma b_3
%\vspace{-.1cm} 
\end{equation} 
The discrete cases $n=3,4,5$ result in weaker conditions than this.
For $k\geq 2$,\\
For $n\geq 6$,
%\vspace{-.3cm} 
%\begin{small}
\begin{equation}
\nonumber
(m_1-1)(b_1-c)+(m_2-1)b_2+b_2+b_3  \leq m_1(b_1-c)+(m_2-1)b_2+b_2 
%\vspace{-.1cm} 
\end{equation}
%\end{small}
\vspace{-5mm}
\begin{equation}
\nonumber
\iff c\leq b_1-b_3
%\vspace{-.1cm} 
\end{equation} 
The conditions are satisfied by Inequality~(\ref{E6}).\\

\noindent
\textbf{Type I node should not propose a link to a Type IV node:} One way is to ensure that this strategy of Type I node is dominated by its strategy to propose a link to a Type III node.
It can be seen that for $k \geq 2$ and $n \geq 6$, this translates to
%\vspace{-.3cm} 
\begin{align}
\nonumber
\begin{split}
&
(k+1)(b_1-c)+m_2b_2+(m_1-k-1)b_3\\ 
&> (k+1)(b_1-c)+(m_2-1)b_2+(m_1-k)b_2
%\vspace{-.1cm} 
\end{split}
\end{align}
\begin{equation}
\nonumber
\iff b_2-b_3>(m_1-k)(b_2-b_3)
%\vspace{-.1cm} 
\end{equation}
which is not true for $m_1>k$.\\
So we look at the alternative condition that the utility of Type IV node decreases if it accepts the link from Type I node, and as Type I node computes this decrease in utility, it will not propose a link to Type IV node.
First, we consider $k=1$. The discrete case $n=4$ gives the following condition.\\
%\vspace{-.2cm} 
\begin{equation}
\nonumber
3(b_1-c)+2\gamma b_2+2\gamma b_2 < 2(b_1-c)+(1-\gamma)b_2+2\gamma b_2+\gamma b_3 
\vspace{-5mm}
\end{equation}
\begin{equation}
\label{E7b}
\iff c>b_1-b_2+\gamma(3b_2-b_3)
%\vspace{-.1cm} 
\end{equation}
The other discrete cases $n=3,5$ result in weaker conditions than the above.\\
For $n\geq 6$,
%\vspace{-.3cm} 
%\begin{small}
\begin{equation}
\nonumber
(m_1+1)(b_1-c)+(m_2-1)b_2<m_1(b_1-c)+(m_2-1)b_2+(1-\gamma)b_2
\vspace{-5mm}
\end{equation}
%\end{small}
\begin{equation}
\nonumber
\iff c>b_1-b_2+\gamma b_2 
%\vspace{-.1cm} 
\end{equation}
which is a weaker condition than Inequality~(\ref{E7b}).
Now for $k \geq 2$, $n=4,5$ correspond to pairwise stability conditions and cases $n<4$ are not applicable.\\
For $n\geq 6$,
%\vspace{-.3cm} 
\begin{equation}
\nonumber
(m_1+1)(b_1-c)+(m_2-1)b_2 < m_1(b_1-c)+(m_2-1)b_2+b_2
\vspace{-5mm}
\end{equation}
\begin{equation}
\nonumber
\iff c>b_1-b_2
%\vspace{-.1cm} 
\end{equation}
which is satisfied by Inequality~(\ref{E7b}).\\

\noindent
\textbf{Type IV node should not propose a link to Type I node:} This scenario is essentially equivalent to the previous one scenario of utility of Type IV node decreasing due to link with Type I node, with the equalities permitted. So these result in weaker and hence no additional conditions.\\

\noindent
\textbf{Type III node should not propose a link to Type II node:} One way is to ensure that for Type III node, connecting to Type II node is strictly dominated by connecting to Type I node.
It can be seen that for $k\geq 2$ and $n\geq 6$, this translates to
%\vspace{-.3cm} 
%\begin{small}
\begin{equation}
\nonumber
(m_2+1)(b_1-c)+(m_1-2)b_2+b_2<(m_2+1)(b_1-c)+(m_1-1)b_2
%\vspace{-.1cm} 
\end{equation}
%\end{small}
which gives $0>0$. 
So we need to use the alternative condition that the utility of Type II node decreases on accepting link from Type III node.
For $k=1$,\\
For $n=4$,
%\vspace{-.7cm} 
\begin{equation}
\nonumber
3(b_1-c)+4\gamma b_2 < 2(b_1-c)+(1-\gamma)b_2+2\gamma b_2+\gamma b_3
\vspace{-5mm}
\end{equation}
\begin{equation}
\nonumber
\iff c > b_1-b_2+\gamma(3b_2 - b_3)
%\vspace{-.1cm} 
\end{equation}
which is same as Inequality~(\ref{E7b}). \\
For $n\geq 5$,
%\vspace{-.3cm} 
%\begin{footnotesize}
\begin{align}
\nonumber
\begin{split}
&
(m_2+2)(b_1-c)+(m_1-2)b_2+2\gamma(m_2+1)b_2+2\gamma(m_1-2)b_3 \\
&< (m_2+1)(b_1-c)+(m_1-1)b_2+2\gamma m_2b_2+ 2\gamma(m_1-1)b_3
\end{split}
\end{align}
%\end{footnotesize}
\begin{equation}
\nonumber
\iff c > b_1-b_2+2\gamma(b_2-b_3)
%\vspace{-.1cm} 
\end{equation}
which is a weaker condition than Inequality~(\ref{E7b}). 
Now for $k \geq 2$, the only new case is the following.\\
For $n\geq 6$,
%\vspace{-.3cm} 
\begin{equation}
\nonumber
(m_2+2)(b_1-c)+(m_1-2)b_2<(m_2+1)(b_1-c)+(m_1-1)b_2
\vspace{-5mm}
\end{equation}
\begin{equation}
\nonumber
\iff c>b_1-b_2
%\vspace{-.1cm} 
\end{equation}
which is satisfied by Inequality~(\ref{E7b}).\\

\noindent
\textbf{Type II node should not propose a link with Type III node:} This is essentially equivalent to the above scenario of utility of Type II node decreasing due to link with Type III node, with the equalities permitted. So these result in weaker and hence no additional conditions.\\

\noindent
\textbf{No Type II node should delete link with Type IV node:} First, we consider $k=1$.\\
For $n \geq 7$, 
%\vspace{-.3cm} 
%\begin{footnotesize}
\begin{align}
\nonumber
\begin{split}
&
m_2(b_1-c)+(b_1-c) + (m_1-1)b_2+2\gamma m_2 b_2 + 2\gamma(m_1-1)b_3 \\
&\geq (m_2-1)(b_1-c) + (b_1-c) + (m_1-1)b_2+b_3
+2\gamma (m_2-1)b_2+2\gamma (m_1-1)b_3 +2\gamma b_4
\end{split}
\end{align}
%\end{footnotesize}
\begin{equation}
\nonumber
\iff c \leq b_1 - b_3 + 2 \gamma (b_2 - b_4)
%\vspace{-.1cm} 
\end{equation}
This as well as all discrete cases $n<7$ are satisfied by Inequality~(\ref{E6}).\\
For $k \geq 2$,  the cases of $n=4,5$ correspond to pairwise stability condition that we have already considered, while cases $n<4$ are not applicable.\\
For $n \geq 6$,
%\vspace{-.3cm} 
\begin{equation}
\nonumber
m_2(b_1-c)+(m_1-1)b_2+b_3 \leq (m_2+1)(b_1-c)+(m_1-1)b_2
\vspace{-5mm}
\end{equation}
\begin{equation}
\nonumber
\iff c \leq b_1-b_3
%\vspace{-.1cm} 
\end{equation}
which is satisfied by Inequality~(\ref{E6}).\\

\noindent
\textbf{Two Type IV nodes should not create a mutual link:} That is their utilities should not increase by doing so.
When $k=1$, it is not applicable for $n<5$. Also, the discrete case $n=5$ results in the same condition as below.\\
For $n\geq 6$,
%\vspace{-.3cm} 
\begin{align}
\nonumber
\begin{split}
&
(m_1+1)(b_1-c)+(m_2-2)b_2+(1-\gamma)b_2 \\ 
&\leq m_1(b_1-c)+(m_2-1)b_2+(1-\gamma)b_2
%\vspace{-.1cm} 
\end{split}
\end{align}
\begin{equation}
\nonumber
\iff c \geq b_1-b_2
%\vspace{-.1cm} 
\end{equation} 
For $k\geq 2$, $n=5$ corresponds to pairwise stability condition. \\
For $n\geq 6$,
%\vspace{-.3cm} 
%\begin{small}
\begin{equation}
\nonumber
(m_1+1)(b_1-c)+(m_2-2)b_2+b_2 \leq m_1(b_1-c)+(m_2-1)b_2+b_2
\vspace{-5mm}
\end{equation}
%\end{small}
\begin{equation}
\nonumber
\iff c \geq b_1-b_2
%\vspace{-.1cm} 
\end{equation} 
These are weaker conditions than Inequality~(\ref{E7b}).\\

\noindent
\textbf{No two Type II nodes should create a link between themselves:} This only applies to $k\geq 2$.
Also $n=4,5$ result in pairwise stability condition.\\
For $n\geq 6$,
%\vspace{-.3cm} 
\begin{equation}
\nonumber
(m_2+2)(b_1-c)+(m_1-2)b_2 \leq (m_2+1)(b_1-c)+(m_1-1)b_2
\vspace{-5mm}
\end{equation}
\begin{equation}
\nonumber
\iff c\geq b_1-b_2
%\vspace{-.1cm} 
\end{equation} 
which is a weaker condition than Inequality~(\ref{E7b}).\\

\noindent
\textbf{Link between Type I node and Type II node  should not be deleted:} It is clear that it will not be deleted as such a link is just formed with no other changes in the network.\\

%\vspace{-.3cm} 
Inequalities~(\ref{E6}) and (\ref{E7b}) are stronger conditions than Inequalities~(\ref{E16}), (\ref{E17for2}) and (\ref{E17}).
Furthermore, for non-zero range of $c$, from Inequalities~(\ref{E6}) and (\ref{E7b}), we have
%\vspace{-.1cm} 
\begin{equation}
\label{E0}
\gamma < \frac{b_2-b_3}{3b_2-b_3}
\end{equation} 
The required sufficient conditions are obtained by combining Inequalities~(\ref{E1b}), (\ref{E2a}), (\ref{E6}), (\ref{E7b}) and (\ref{E0}).
%\qed 
\end{proof}

\section{Proof of Proposition~\ref{thm:2star}
}
\label{app:2star}
%\noindent
\begin{customprop}{\ref{thm:2star}}
%\label{thm:2star}
Let $\sigma$ be the upper bound on the number of nodes that can enter the network and $\lambda = \lceil \frac{\sigma}{2} -1 \rceil \left( 2b_2-b_3 \right)$.
Then, if $\left( 1-\gamma \right) \left( b_2-b_3 \right) < c_0 < \left( 1-\gamma \right) \left( b_2-b_4 \right) $ and either \\
(i) $\gamma < \min \Big\{  \frac{b_2-b_3}{\lambda-b_3} ,  \frac{b_3}{b_2+b_3} \Big\}$ and $b_1-b_3+\gamma(b_2+b_3) \leq c < b_1$, or\\
(ii) $\frac{b_2-b_3}{\lambda-b_3} \leq \gamma < \min \Big\{ \frac{b_2}{\lambda+b_2} , \frac{b_3}{b_2+b_3}  \Big\}$ and $b_1-b_2+\gamma b_2 + \gamma \lambda \leq c < b_1$,\\
the unique resulting topology is a 
2-star.
\end{customprop}
\begin{proof}
We derive sufficient conditions for the formation of a 2-star network by forming its skeleton of four nodes first, that is, a network with two interconnected centers, connected to one leaf node each. Once this is formed, we ensure that a newly entering node connects to the center with fewer number of leaf nodes, whenever applicable, so as to maintain the load balance between the two centers.\\

\noindent
\textbf{Forming the skeleton of the 2-star network:}
With one node in the network, the second node should successfully create a link with the former. The condition for ensuring this is
\begin{equation}
\label{F0.1}
c<b_1
\end{equation}
For the third node to enter, it should propose a link to any of the two existing nodes in the network, that is, it should get a positive utility by doing so. This gives
\begin{equation}
\nonumber
c<b_1+(1-\gamma)b_2-c_0
\end{equation}
This is ensured by Inequality~(\ref{F0.1}) and 
\begin{equation}
\label{F0.2}
c_0\leq (1-\gamma)b_2
\end{equation}
Also the existing node to which the link is proposed, should accept it, that is, its utility should not decrease by doing so.
\begin{equation}
\nonumber
2(b_1-c) + 2\gamma b_2 \geq b_1-c
\vspace{-5mm}
\end{equation}
\begin{equation}
\nonumber
\iff c \leq b_1+2\gamma b_2
\end{equation}
which is a weaker condition than Inequality~(\ref{F0.1}).
We have to also ensure that this V-shaped network of three nodes is pairwise stable. It is clear that no node will delete any of its links since such a link is just formed. However, we have to ensure that the two leaf nodes of this V-shaped network do not create a mutual link. This can be ensured by
\begin{equation}
\nonumber
2(b_1-c) \leq b_1-c + (1-\gamma)b_2
\vspace{-5mm}
\end{equation}
\begin{equation}
\label{F5for3}
\iff c \geq b_1-b_2+\gamma b_2
\end{equation}
Following this, the fourth node should propose a link to one of the two leaf node in the V-shaped network. For ensuring that its utility increases by doing so,
\begin{equation}
\nonumber
c<b_1+(1-\gamma)b_2-c_0 +(1-\gamma)b_3
\end{equation}
which is satisfied by Inequalities~(\ref{F0.1}) and (\ref{F0.2}).
Also, it should prefer connecting to a leaf node than the center of the V-shaped network, that is,
\begin{equation}
\nonumber
b_1-c +(1-\gamma)b_2 - c_0 +(1-\gamma)b_3 > b_1-c +2(1-\gamma)b_2 -2c_0
\vspace{-5mm}
\end{equation}
\begin{equation}
\label{F0.3}
\iff c_0 > (1-\gamma)(b_2-b_3)
\end{equation}
The leaf node to which the link is proposed, should accept the link.
\begin{equation}
\nonumber
2(b_1-c)+(1-\gamma)b_2+2\gamma b_2 +\gamma b_3 \geq b_1-c +(1-\gamma)b_2
\vspace{-5mm}
\end{equation}
\begin{equation}
\nonumber
\iff c \leq b_1 + \gamma(2b_2+b_3)
\end{equation}
which is satisfied by Inequality~(\ref{F0.1}).\\
We have to also ensure that this network is pairwise stable. We derive sufficient conditions for pairwise stability of a general 2-star network with number of nodes $n \geq 4$, which includes the sufficient conditions for pairwise stability of the skeleton thus formed.\\

Let the centers of the 2-star be labeled $C_1$ and $C_2$. Also, let the number of leaf nodes connected to $C_1$ be $m_1$ and that connected to $C_2$ be $m_2$. \\

\noindent
\textbf{Leaf nodes that are connected to different centers, should not create a mutual link:} 
This scenario is valid for $n\geq 4$. Without loss of generality, for a leaf node connected to $C_1$,
\begin{align}
\nonumber
\begin{split}
&
2(b_1-c)+(m_1-1)(1-\gamma)b_2 +b_2 +(m_2-1)(1-\gamma)b_3\\
&\leq b_1-c +m_1(1-\gamma)b_2 +m_2(1-\gamma)b_3
\end{split}
\end{align}
\begin{equation}
\label{F5}
\iff c \geq b_1-b_3+\gamma(b_2+b_3)
\end{equation}

\noindent
\textbf{Link between one center and a leaf node of the other center should not be created:}
One option to ensure this is to see that the utility of center $C_1$ decreases owing to its link with a leaf node of $C_2$. This is valid for $n\geq 4$.
%\begin{small}
\begin{align}
\nonumber
\begin{split}
&
(m_1+2)(b_1-c) +(m_2-1)(1-\gamma)b_2 + \gamma(2)(m_1)2b_2 
+\frac{\gamma}{2}(m_1)(m_2-1)2b_3\\
&< (m_1+1)(b_1-c) +m_2(1-\gamma)b_2 + \gamma(1)(m_1)2b_2 
+ \frac{\gamma}{2}(m_1)(m_2)2b_3
\end{split}
\end{align}
%\end{small}
\begin{equation}
\nonumber
\iff c > b_1-b_2+\gamma b_2 +\gamma m_1 (2b_2-b_3)
\end{equation}
As it needs to be true for all $n \geq 4$, we set the condition to
\begin{equation}
\nonumber
 c > \max_{n \geq 4} \Big\{ b_1-b_2+\gamma b_2 +\gamma m_1 (2b_2-b_3) \Big\}
\end{equation}
Since $\max\{m_1\} = \lceil \frac{\sigma}{2}-1 \rceil$, where $\sigma$ is the upper bound on the number of nodes that can enter the network,
\begin{equation}
\label{F6a}
c > b_1-b_2+\gamma b_2 +\gamma \lceil \frac{\sigma}{2}-1 \rceil (2b_2-b_3)
\end{equation}
An alternative option to the above condition is to ensure that the utility of leaf node connected to $C_2$ decreases when it establishes a link with $C_1$.
\begin{align}
\nonumber
\begin{split}
&
2(b_1-c) +(m_2-1)(1-\gamma)b_2 +m_1(1-\gamma)b_2 \\
&< b_1-c +m_2(1-\gamma)b_2  + m_1(1-\gamma)b_3
\end{split}
\end{align}
\begin{equation}
\nonumber
\iff c > b_1-(1-\gamma)b_2+m_1(1-\gamma)(b_2-b_3)
\end{equation}
As it needs to be true for all $n \geq 4$ and $\max\{m_1\} = \lceil \frac{\sigma}{2}-1 \rceil$, we set the condition to
\begin{equation}
\label{F6b}
 c > b_1-(1-\gamma)b_2+(1-\gamma) \lceil \frac{\sigma}{2}-1 \rceil(b_2-b_3)
\end{equation}

\noindent
\textbf{No link is broken in the 2-star network:}
It is easy to check that, as 2-star is a tree graph, the condition $c<b_1$ in Inequality~(\ref{F0.1}) is sufficient to ensure this.\\

\noindent
\textbf{Two leaf nodes of a center should not create a mutual link:}
This case arises for $n\geq 5$. It can be easily checked that the condition $c \geq b_1-b_2+\gamma b_2$ in Inequality~(\ref{F5for3}) is sufficient to ensure this.\\

This completes the sufficient conditions for pairwise stability of a 2-star network.
In what follows, we ensure that any new node successfully enters an existing 2-star network such that the topology is maintained.\\

\noindent
\textbf{A newly entering node should prefer connecting to the center with less number of leaf nodes, whenever applicable:}
This case arises when $n$ is even and $n\geq 6$, that is, when a new node tries to enter a 2-star network with odd number of nodes. Without loss of generality, let $m_1=m_2+1$. So the new node should prefer connecting to $C_2$ over $C_1$.
\begin{align}
\nonumber
\begin{split}
&
b_1-c +(m_2+1)((1-\gamma)b_2 - c_0) +m_1(1-\gamma)b_3\\
&> b_1-c +(m_1+1)((1-\gamma)b_2 -c_0) +m_2(1-\gamma)b_3
\end{split}
\end{align}
\begin{equation}
\nonumber
\iff c_0 > (1-\gamma)(b_2-b_3)
\end{equation}
which is same as Inequality~(\ref{F0.3}).\\

\noindent
\textbf{The new node should not stay out of the network:}
Its utility should be positive when it enters the network by connecting to the center with less number of leaf nodes, whenever applicable.
\begin{equation}
\nonumber
b_1-c +(m_2+1)((1-\gamma)b_2 - c_0) +m_1(1-\gamma)b_3 > 0
\end{equation}
It can be easily seen that, as $m_1,m_2 \geq 1$, the above is always true when Inequalities~(\ref{F0.1}) and (\ref{F0.3}) are satisfied.\\

\noindent
\textbf{The center with less number of leaf nodes, whenever applicable, should accept the link from the newly entering node:}
The condition $c<b_1$ in Inequality~(\ref{F0.1}) is sufficient to ensure this.\\

\noindent
\textbf{The newly entering node should prefer connecting to the center with less number of leaf nodes, whenever applicable, over connecting to any leaf node:}
It is easy to see that, as $b_3>b_4$, whenever the number of leaf nodes connected to the centers are different, a newly entering node prefers connecting to a leaf node connected to $C_1$ over that connected to $C_2$ (assuming $m_1=m_2+1$). 
Hence we have to ensure that connecting to the center with less number of leaf nodes, whenever applicable, is more beneficial to a newly entering node than connecting to a leaf node that is connected to $C_1$. Without loss of generality, we want the new node to prefer connecting to $C_2$ (irrespective of whether $m_1=m_2$ or $m_1=m_2+1$).
\begin{align}
\nonumber
\begin{split}
&
b_1-c +(m_2+1)((1-\gamma)b_2 - c_0) +m_1(1-\gamma)b_3 \\
&> b_1-c +(1-\gamma)b_2-c_0 +m_1(1-\gamma)b_3 +m_2(1-\gamma)b_4
\end{split}
\end{align}
\begin{equation}
\nonumber
\iff m_2(1-\gamma)b_2 -m_2 c_0 > m_2(1-\gamma)b_4
\end{equation}
As $m_2 \geq 1$,
\begin{equation}
\label{F4}
c_0 < (1-\gamma)(b_2-b_4)
\end{equation}

The conditions on $c$ can be obtained from Inequalities~(\ref{F0.1}), (\ref{F5for3}), (\ref{F5}), and either (\ref{F6a}) or (\ref{F6b}). 
Suppose we choose Inequality~(\ref{F6b}) over Inequality~(\ref{F6a}). So, for $c$ to have a non-empty range of values, from Inequalities~(\ref{F0.1}) and (\ref{F6b}), we must have
\begin{equation}
\nonumber
b_1-(1-\gamma)b_2+(1-\gamma) \lceil \frac{\sigma}{2}-1 \rceil(b_2-b_3) < b_1
\end{equation}
As $\gamma<1$, the above is equivalent to
\begin{equation}
\nonumber
b_2>\lceil \frac{\sigma}{2}-1 \rceil(b_2-b_3) 
\end{equation}
which is not true for arbitrarily large values of $\sigma$. So we cannot use Inequality~(\ref{F6b}).
Suppose we choose Inequality~(\ref{F6a}). So, for $c$ to have a non-empty range of values, from Inequalities~(\ref{F0.1}) and (\ref{F6a}), we must have
\begin{equation}
\nonumber
b_1-b_2+\gamma b_2 +\gamma \lceil \frac{\sigma}{2}-1 \rceil (2b_2-b_3) < b_1
\end{equation}
Let $\lambda = \lceil \frac{\sigma}{2}-1 \rceil (2b_2-b_3)$. So the above is equivalent to
\begin{equation}
\label{2stargamma1}
\gamma < \frac{b_2}{\lambda+b_2}
\end{equation}
which is a valid range of $\gamma$ as $\gamma \in {[0,1)}$.
So we use Inequality~(\ref{F6a}) instead of Inequality~(\ref{F6b}). Also, Inequality~(\ref{F5for3}) is weaker than Inequality~(\ref{F6a}).
For $c$ to have a non-empty range of values, it is also necessary, from Inequalities~(\ref{F0.1}) and (\ref{F5}), that
\begin{equation}
\nonumber
b_1-b_3+\gamma(b_2+b_3) < b_1
\vspace{-5mm}
\end{equation}
\begin{equation}
\label{2stargamma2}
\iff \gamma < \frac{b_3}{b_2+b_3}
\end{equation}
Both Inequalities~(\ref{F5}) and (\ref{F6a}) lower bound $c$. So we need to determine the stronger condition of the two. It can be seen that Inequality~(\ref{F5}) is at least as strong as Inequality~(\ref{F6a}) if and only if
\begin{equation}
\nonumber
b_1-b_3+\gamma(b_2+b_3) \geq b_1-b_2+\gamma b_2 + \gamma \lambda
\end{equation}
\begin{equation}
\label{2stargamma3}
\iff \gamma \leq \frac{b_2-b_3}{\lambda-b_3}
\end{equation}
We consider the cases when either is a stronger condition.

\noindent
\textbf{Case (i)} If Inequality~(\ref{2stargamma3}) is true:\\
Inequalities~(\ref{F0.1}) and (\ref{F5}) are the strongest conditions. So the sufficient condition on $c$ is
\begin{equation}
\label{2starfinalc1}
b_1-b_3+\gamma(b_2+b_3) \leq c < b_1
\end{equation}
and Inequalities~(\ref{2stargamma1}), (\ref{2stargamma2}) and (\ref{2stargamma3}) give
\begin{equation}
\nonumber
\gamma < \min \Big\{ \frac{b_2-b_3}{\lambda-b_3} , \frac{b_2}{\lambda+b_2} ,   \frac{b_3}{b_2+b_3}  \Big\}
\end{equation}
It can also be shown that for $\lambda b_3 \geq b_2^2$, 
$\min \Big\{ \frac{b_2-b_3}{\lambda-b_3} , \frac{b_2}{\lambda+b_2} ,   \frac{b_3}{b_2+b_3}  \Big\} = \frac{b_2-b_3}{\lambda-b_3}$ \\
 and for $\lambda b_3 \leq b_2^2$, 
$\min \Big\{ \frac{b_2-b_3}{\lambda-b_3} , \frac{b_2}{\lambda+b_2} ,   \frac{b_3}{b_2+b_3}  \Big\} = \frac{b_3}{b_2+b_3}$. So the above reduces to
\begin{equation}
\label{2starfinalgamma1}
\gamma < \min \Big\{ \frac{b_2-b_3}{\lambda-b_3} , \frac{b_3}{b_2+b_3}  \Big\}
\end{equation}

\noindent
\textbf{Case (ii)} If Inequality~(\ref{2stargamma3}) is not true:\\
Inequalities~(\ref{F0.1}) and (\ref{F6a}) are the strongest conditions. So the sufficient condition on $c$ is
\begin{equation}
\label{2starfinalc2}
b_1-b_2+\gamma b_2 + \gamma \lambda \leq c < b_1
\end{equation}
and Inequalities~(\ref{2stargamma1}), (\ref{2stargamma2}) and the reverse of (\ref{2stargamma3}) give
\begin{equation}
\label{2starfinalgamma2}
\frac{b_2-b_3}{\lambda-b_3} \leq \gamma < \min \Big\{ \frac{b_2}{\lambda+b_2}  ,  \frac{b_3}{b_2+b_3}  \Big\}
\end{equation}
Furthermore, Inequalities~(\ref{F0.2}), (\ref{F0.3}) and (\ref{F4}) give the sufficient conditions on $c_0$.
\begin{equation}
\label{2starfinalc0}
(1-\gamma)(b_2-b_3) < c_0 < (1-\gamma)(b_2-b_4)
\end{equation}
Inequalities~(\ref{2starfinalc1}), (\ref{2starfinalgamma1}) and (\ref{2starfinalc0}) give the sufficient conditions $(i)$ in the proposition, while Inequalities~(\ref{2starfinalc2}), (\ref{2starfinalgamma2}) and (\ref{2starfinalc0}) give the sufficient conditions $(ii)$.
%\qed 
\end{proof}

\section{Proof of Lemma~\ref{lem:kstar0}
}
\label{app:kstar0}
%\noindent
\begin{customlem}{\ref{lem:kstar0}}
%\label{lem:kstar0}
Under the proposed utility model, for the entire family of $k$-star networks (given some $k\geq 3$) to be pairwise stable, it is necessary that 
$\gamma=0$ and $c=b_1-b_3$.
\end{customlem}
\begin{proof}
We consider two scenarios sufficient to prove this.\\

%G6 %
\noindent
\textbf{I) No center should delete its link with any other center:} Here, only one case is enough to be considered, that is, when each center has just one leaf node,
since in all other cases, the benefits obtained by each center from the connection with other centers is at least as much. For $k=3$, 
%\begin{small}
\begin{align}
\nonumber
\begin{split}
&
3(b_1-c) + 2(1-\gamma)b_2 + \gamma(1)(2)2b_2 + \frac{\gamma}{2}(1)(2)b_3 \\
&\geq
2(b_1-c) + 2(1-\gamma)b_2 + (1-\gamma)b_3 + \gamma(1)(1)2b_2 
 + 2\left( \frac{\gamma}{2}(1)(1)2b_3 \right) +  \frac{\gamma}{3}(1)(1)2b_4
\end{split}
\end{align}
%\end{small}
\begin{equation}
\label{eq:kstarineq1}
\iff c \leq b_1-b_3+\gamma(2b_2+b_3)-\frac{2\gamma}{3}b_4
\end{equation}
For $k\geq 4$,
%\begin{small}
\begin{align}
\nonumber
\begin{split}
&
(k-1+1)(b_1-c) + (k-1)(1-\gamma)b_2 + \gamma(1)(k-1)2b_2  
+ \frac{\gamma}{2}(1)(k-1)2b_3\\
 &\geq (k-2+1)(b_1-c) + (k-2)(1-\gamma)b_2 + b_2 + (1-\gamma)b_3  \\
 &\;\;\;\;\; +  \gamma(1)(k-2)2b_2+ \frac{\gamma}{2}(1)(k-2)2b_3 + \gamma(1)(1)2b_3 + \frac{\gamma}{2}(1)(1)2b_4
\end{split}
\end{align}
%\end{small}
\begin{equation}
\label{eq:kstarineq2}
\iff c \leq b_1-b_3+\gamma(b_2-b_4)
\end{equation}

%G9 %
\noindent
\textbf{II) Leaf nodes of different centers should not form a link with each other:} Consider a leaf node. Let $m_i$ be the number of leaf nodes connected to the center to which the leaf node under consideration, is connected. For $k\geq 3$,
%\begin{small}
\begin{align}
\nonumber
\begin{split}
&
2(b_1-c) + (m_i-1)(1-\gamma)b_2 + (1-\gamma)b_3 (\sum_{j \neq i}m_i - 1) + (k-1)b_2 \\
&\leq b_1-c + (m_i-1)(1-\gamma)b_2 + (1-\gamma)b_3 \sum_{j \neq i}m_i + (k-1)(1-\gamma)b_2
\end{split}
\end{align}
%\end{small}
\begin{equation}
\label{eq:kstarineq3}
\iff c \geq b_1-b_3+\gamma((k-1)b_2+b_3)
\end{equation}
The only way to satisfy Inequalities~(\ref{eq:kstarineq1}), (\ref{eq:kstarineq2}) and (\ref{eq:kstarineq3}) simultaneously is by setting 
\begin{equation}
\label{eq:kstargamma}
\gamma=0
\end{equation}
and
\begin{equation}
\label{eq:kstarcost}
c=b_1-b_3
\end{equation}
thus proving the lemma.
%\qed 
\end{proof}

\section{Proof of Proposition~\ref{thm:kstar}
}
\label{app:kstar}
%\noindent
\begin{customprop}{\ref{thm:kstar}}
%\label{thm:kstar}
For a network starting with the base graph for $k$-star (given some $k \geq 3$), and $\gamma =0 $, if $c =b_1-b_3 $ 
and $ b_2-b_3  < c_0 < b_2-b_4$, the unique resulting topology is a 
$k$-star.
\end{customprop}
\begin{proof}
It is clear from Lemma~\ref{lem:kstar0} that under the proposed utility model, for the family of $k$-star networks ($k\geq 3$) to be pairwise stable, it is necessary that $\gamma=0$ and $c=b_1-b_3$ in order to stabilize all possible $k$-star networks for a given $k$, and hence forms the necessary part of sufficient conditions for the formation of a $k$-star network. Hence, for the rest of this proof, we will assume that
\begin{equation}
\label{eq:appkstargamma}
\gamma=0
\end{equation}
and
\begin{equation}
\label{eq:appkstarcost}
c=b_1-b_3
\end{equation}

 Without loss of generality, assume some indexing over the $k$ centers from $1$ to $k$.
Let $C_i$ be the center with index $i$ and $m_i$ be the number of leaf nodes it is linked to. 
Also we start with a base graph in which every center is linked to one leaf node and the number of leaf nodes linked to each center increases as the process goes on. 
So we have, $m_i \geq 1\text{ for } 1 \leq i \leq k$.\\

%G1%
\noindent
\textbf{For the newly entering node to propose entering the network:}
Our objective is to ensure that the newly entering node connects to a center with the least number of leaf nodes, in order to maintain balance over the number of leaf nodes linked to the centers. Without loss of generality, assume that we want the newly entering node to connect to $C_1$. The utility of the newly entering node should be positive after doing so.
\begin{equation}
\nonumber
b_1 - c + (m_1+k-1)\left(b_2-c_0\right) +b_3\sum_{i=2}^{k} m_i > 0
\end{equation}
Since the minimum value of $m_i$ is $1$ for any $i$, the above condition is true if
\begin{equation}
\nonumber
c < b_1 + k \left(b_2-c_0\right) + (k-1)b_3
\end{equation}
This is satisfied by Equation~(\ref{eq:appkstarcost}) and
\begin{equation}
\label{G1forc0}
c_0 < b_2+b_3
\end{equation}

%G3 %
\noindent
\textbf{The newly entering node should connect to a center with the least number of leaf nodes, whenever applicable:}
This case does not arise when all centers have the same number of leaf nodes. Moreover, the way we direct the evolution of the network, the number of leaf nodes connected to any two centers differs by at most one. Without loss of generality, assume that we want the newly entering node to connect to $C_1$. Consider a center $C_p$ such that $m_p = m_1+1$. So the newly entering node should prefer connecting to $C_1$ over connecting to $C_p$.
\vspace{-5mm}
\begin{align}
\nonumber
\begin{split}
&
b_1-c + (m_1+k-1) \left(b_2 - c_0 \right) +b_3\sum_{i=2}^{k} m_i \\
&> b_1-c + (m_p+k-1) \left( b_2 - c_0 \right) + b_3\sum_{\substack{1 \leq i \leq k \\ i \neq p}} m_i
\end{split}
\end{align}
As $m_p = m_1 + 1$, we have
\begin{equation}
\label{G3}
c_0 > b_2-b_3
\end{equation}

%G2%
\noindent
\textbf{For a center with the least number of leaf nodes to accept the link from the newly entering node:}
It can be easily seen that this is ensured by Equation~(\ref{eq:appkstarcost}).\\

%G4 %
\noindent
\textbf{The newly entering node should not connect to any leaf node:}
It can be easily seen that owing to benefits degrading with distance, for the newly entering node, connecting to any leaf node which is connected to a center with the most number of leaf nodes, strictly dominates connecting to any other leaf node, whenever applicable. So it is sufficient to ensure that the newly entering node does not connect to any leaf node which is connected to a center with the most number of leaf nodes.
This can be done by ensuring that for the newly entering node, connecting to a center with the least number of leaf nodes strictly dominates connecting to any leaf node which is connected to a center with the most number of leaf nodes.
Say we want the newly entering node to prefer connecting to center $C_1$ over a leaf node that is linked to center $C_p$.
\vspace{-5mm}
\begin{align}
\nonumber
\begin{split}
&
b_1-c + (m_1+k-1)(b_2-c_0) +  b_3\sum_{i=2}^k m_i \\
&> b_1-c +b_2 - c_0 + (m_p + k-2)b_3 +  b_4\sum_{\substack{1 \leq i \leq k \\ i \neq p}} m_i 
\end{split}
\end{align}
We need to consider two cases (i) $m_p = m_1+1$ and (ii) $m_p=m_1$\\
Case (i) $m_p = m_1+1$: Substituting  this value of $m_p$ gives
%\begin{small}
\begin{equation}
\nonumber
\begin{split}
(m_1+k-1)(b_2-c_0) +  (b_3-b_4)\sum_{\substack{2 \leq i \leq k \\ i \neq p}} m_i + m_1 (b_3-b_4) \\+ b_3
> b_2 - c_0 + (m_1 + k-1)b_3 
\end{split}
\end{equation}
%\end{small}
As the minimum value of $\sum_{2 \leq i \leq k, i \neq p}m_i$ is $k-2$, the above remains true if we replace $\sum_{2 \leq i \leq k, i \neq p}m_i$ with $k-2$. Further simplification gives
\begin{equation}
\nonumber
(m_1+k-2)(b_2-b_4-c_0)>0
\end{equation}
Since $m_1+k-2>0$ is positive (as $m_1 \geq 1$ and $k \geq 3$), we must have
\begin{equation}
\label{G4forc0}
c_0 < b_2-b_4
\end{equation}
Case (ii) $m_k=m_1$: It can be similarly shown that Equation~(\ref{G4forc0}) is the sufficient condition.\\

%Pairwise Stability
\noindent
Now that the newly entering node enters in a way such that $k$-star network is formed, we have to ensure that no further modifications of links occur so that the network thus formed, is pairwise stable.\\

%G5 %
\noindent
\textbf{For centers and the corresponding leaf nodes to not delete the link between them:} It can be easily seen that $c<b_1$, a weaker condition than Equation~(\ref{eq:appkstarcost}), is a sufficient condition to ensure this.\\

%G6 %
\noindent
\textbf{No center should delete its link with any other center:} This is ensured by the inequalities in the proof of Lemma~\ref{lem:kstar0}, which are weaker than Equations~(\ref{eq:appkstargamma}) and (\ref{eq:appkstarcost}). \\

%G8 %
\noindent
\textbf{Leaf nodes of a center should not form a link with each other:} The net benefit that a leaf node would get by forming such a link should be non-positive.
\begin{equation}
\nonumber
b_1-c - b_2 \leq 0
\vspace{-5mm}
\end{equation}
\begin{equation}
\nonumber
\iff c \geq b_1-b_2
\end{equation}
which is satisfied by Equation~(\ref{eq:appkstarcost}).\\

%G9 %
\noindent
\textbf{Leaf nodes of different centers should not form a link with each other:} This is ensured by the inequality in the proof of Lemma~\ref{lem:kstar0}, which is weaker than Equations~(\ref{eq:appkstargamma}) and (\ref{eq:appkstarcost}). \\

%G7 %
\noindent
\textbf{Link between a center and a leaf node of any other center should not be created:} Let $C_i$ be the center under consideration and the leaf node under consideration be linked to $C_j$ ($j\neq i$). There are two ways to ensure this. First is to ensure that a center neither proposes nor accepts a link with a leaf node of any other center. This mathematically is
\begin{align}
\nonumber
\begin{split}
&
(k-1+m_i+1)(b_1-c)+  b_2(\sum_{\substack{1 \leq q \leq k \\ q \neq i}}m_q-1) \\ &< (k-1+m_i)(b_1-c) + b_2 \sum_{\substack{1 \leq q \leq k \\ q \neq i}}m_q 
\end{split}
\end{align}
\begin{equation}
\nonumber
\iff c > b_1-b_2
\end{equation}

An alternative to this condition is to ensure that a leaf node neither proposes nor accepts a link with a center to which it is not connected, but since this condition is already satisfied by Equation~(\ref{eq:appkstarcost}), this alternative need not be considered.

Equations~(\ref{eq:appkstargamma}), (\ref{eq:appkstarcost}), (\ref{G1forc0}), (\ref{G3}) and (\ref{G4forc0}) give the required sufficient conditions for the $k$-star network topology.
%
%\qed 
\end{proof}

\section{Proof of Theorem~\ref{thm:gedkstar}
}
\label{app:gedkstar}
%\noindent
\begin{customthm}{\ref{thm:gedkstar}}
%\label{thm:gedkstar}
There exists an $O(\mu^{k+2})$ polynomial time algorithm to compute the graph edit distance %the graph edit distance
 between a graph $g$ and a $k$-star graph with same number of nodes as $g$, where $\mu$ is the number of nodes in $g$.
\end{customthm}
\begin{proof}
Assume that the mapping of the $k$ centers of the $k$-star network to the nodes in $g$, is known. Let us call these nodes of $g$ as {\em pseudo-centers}. The graph edit distance can be computed by taking the minimum number of edge edit operations over all possible mappings. In a $k$-star graph, each node, other than centers, is allotted to exactly one center. Hence, our objective is to allot nodes, other than pseudo-centers, (call them {\em pseudo-leaves}) in $g$ to pseudo-centers such that the graph edit distance is minimized.
Let $\mu$ and $\xi$ be the number of nodes and edges in $g$, respectively. 
Let {\em vacancy} of a pseudo-center at any point of time be defined as the maximum number of nodes that can be allotted to it, given the current allotment.
Note that if $\mu$ is not a multiple of $k$, the vacancy of a pseudo-center depends not only on the number of pseudo-leaves allotted to it, but also on the number of pseudo-leaves allotted to other pseudo-centers.

It is clear that if the mapping of the $k$ centers is known, for transforming $g$ to a corresponding $k$-star, it is necessary that all missing links between any two pseudo-centers be added (let $\beta_1$ be the number of such links) and all existing links between any two pseudo-leaves be deleted (let $\beta_2$ be the number of such links).
The only other links that need to be computed for additions or deletions, in order to minimize graph edit distance, are those interlinking pseudo-leaves with pseudo-centers.
%RELOOK
The number of links that already interlink pseudo-leaves with pseudo-centers in $g$ is $\beta_3 = (\xi-\beta_2-($\begin{footnotesize}$\dbinom{k}{2}$\end{footnotesize}$-\beta_1))$.
Say the number of these edges that are retained during the transformation to $k$-star, is $f$, that is, exactly $f$ pseudo-leaves are allotted a pseudo-center and $(\mu-k-f)$ are not. So the number of edges interlinking pseudo-leaves with pseudo-centers, that are deleted during the transformation, is $(\beta_3-f)$. Also, the number of edges to be added in order to allot the pseudo-leaves, that are not allotted to any pseudo-center, to some pseudo-center having a positive vacancy, is $(\mu-k-f)$.
So the number of edge edit operations is $(\beta_1+\beta_2+\beta_3+\mu-k-2f) = (\mu+\xi+2\beta_1-$\begin{normalsize}$\frac{k}{2}$\end{normalsize}$(k+1)-2f)$.
Given a mapping of the $k$ centers, the only variable in this expression is $f$. 
So in order to minimize its value, we need to maximize the number of edges interlinking pseudo-leaves and pseudo-centers, that remain intact after the transformation to $k$-star.
We now address this problem of maximizing $f$.

Let the number of nodes in $g$ be $\mu=pk+q$ where $p$ and $q$ are integers such that $p \geq 0$ and $1\leq q< k$. 
In a $k$-star graph with $\mu$ nodes, $q$ centers are linked to $p$ leaf nodes and the remaining $k-q$ are linked to $p-1$ leaf nodes. So for transforming $g$ to a corresponding $k$-star graph, $q$ pseudo-centers should be allotted $p$ pseudo-leaves and the remaining $k-q$ should be allotted $p-1$.
So, at most $q$ pseudo-centers should be allotted $p$ nodes, that is, the vacancy of at most $q$ pseudo-centers should be $p$, while that of the remaining $k-q$ should be $p-1$.
In other words, to start with, the sum of vacancies of any $q+1$ pseudo-centers should be at most $(q+1)p-1$. 

\begin{figure}[t!]
\centering
%\hspace{-.2cm}
\includegraphics[scale=0.57]{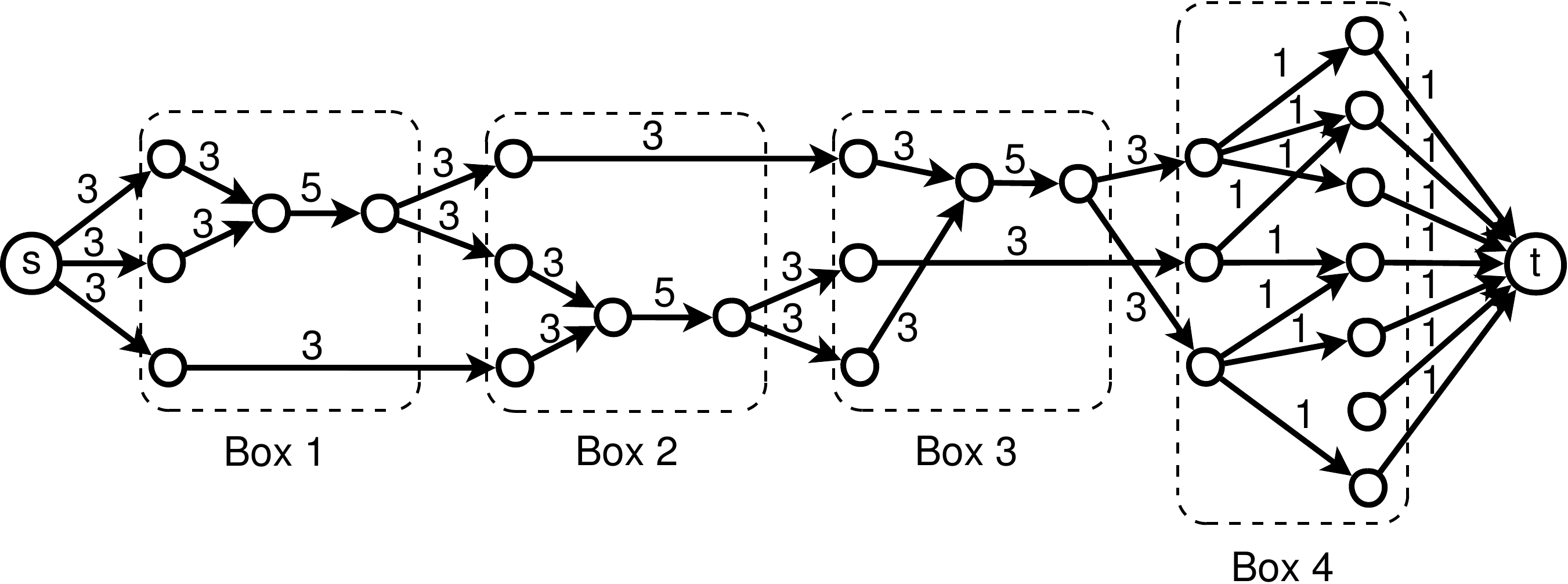}
\caption{Formulation of graph edit distance between graph $g$ ($\mu=10$) and a $3$-star graph with same number of nodes as $g$, as a max-flow problem}
\label{fig:max_flow_eg}
\end{figure}

The above problem can be formulated as an application of max-flow in a directed network.
Figure~\ref{fig:max_flow_eg} shows the formulation for a graph $g$ with 10 nodes and a 3-star graph, where $p=3$ and $q=1$. The edges directing from the source node $s$ to the left $k$ nodes in Box 1 (here $k=3$) and those in Boxes 1, 2 and 3, formulate the vacancy of each of these pseudo-centers to be $p$. Boxes 1, 2 and 3 formulate the constraint that the sum of vacancies of any $q+1$ pseudo-centers should be at most $(q+1)p-1$.
The rightmost Box 4 is obtained by considering edges only interlinking any pseudo-centers (left nodes) and pseudo-leaves (right nodes). 

As all the edges have integer capacities, the Ford-Fulkerson algorithm constructs an integer maximum flow.
%~\cite{cormen2001introduction}.
The number of constraints concerning the sum of vacancies of pseudo-centers is \begin{footnotesize}$\dbinom{k}{q+1}$\end{footnotesize} and number of edges added per such constraint is $2(q+1)+2$.

So the maximum number of edges, say $\chi$, in the max-flow formulation, is $k$ (from source node to left $k$ nodes in Box 1) $+$ $\left( 2(q+1)+2 \right)$\begin{footnotesize}$\dbinom{k}{q+1}$\end{footnotesize} (from the above calculation) $+$ $k(\mu-k)$ (upper limit on the number of edges in Box 4, interlinking pseudo-centers and pseudo-leaves) $+$ $(\mu-k)$ (number of edges directing towards target node). 
Since $1 \leq q < k$, we have $\chi=O(k^{\frac{k}{2}+1}+\mu k)$.
As the value of the maximum flow is upper bounded by $\mu-k$, the Ford-Fulkerson algorithm runs in $O(\chi \mu) = O(\mu k^{\frac{k}{2}+1}+\mu^2 k)$ time. 
Furthermore, as $k$ is a constant, the asymptotic worst-case time complexity is $O(\mu^2)$.

So given a mapping of the $k$ centers, the number of edge edit operations, $(\mu+\xi+2\beta_1-$\begin{normalsize}$\frac{k}{2}$\end{normalsize}$(k+1)-2f)$, is minimized since $f$ is maximized.
The time complexity of the above algorithm is dominated by the max-flow algorithm. The above analysis was assuming that the mapping of the $k$ centers of the $k$-star network to the nodes in $G$, is known. The graph edit distance can, hence, be computed by taking the minimum edit distance over all \begin{footnotesize}$\dbinom{\mu}{k}$\end{footnotesize} $= O(\mu^k)$ possible mappings. So the asymptotic worst-case time complexity of the algorithm is $O(\mu^{k+2}) = O(\mu^{O(1)})$, since $k$ is a constant.
%
%\qed
\end{proof}

%\section{Some title for an appendix}
%\lipsum[4]
%\section{Some title for an appendix}
%\lipsum[4]
\end{subappendices}
%\appendix
%\include{NFSC/}{appendix_NFSC}

\blankpagewithnumber

%Authors: Swapnil Dhamal, Prabuchandran K. J., Y. Narahari
%Indian Institute of Science, Bangalore

%\usepackage{array}
\newcolumntype{L}[1]{>{\raggedright\let\newline\\\arraybackslash\hspace{0pt}}m{#1}}
\newcolumntype{C}[1]{>{\centering\let\newline\\\arraybackslash\hspace{0pt}}m{#1}}
\newcolumntype{R}[1]{>{\raggedleft\let\newline\\\arraybackslash\hspace{0pt}}m{#1}}

\newcommand{\cmark}{\ding{51}}
\newcommand{\xmark}{\ding{55}}

\chapter[Information Diffusion in Social Networks in Multiple Phases]{Information Diffusion in Social Networks in Multiple Phases
  \blfootnote{A part of this chapter is published as \cite{dhamal2015multiphase}:
  Swapnil Dhamal, Prabuchandran K. J., and Y. Narahari. A multi-phase approach for improving information diffusion in social networks. In {\em Proceedings of the 14th International Conference on Autonomous Agents and Multiagent Systems (AAMAS)}, pages 1787--1788, 2015.}
\blfootnote{A significant part of this chapter is published as \cite{dhamal2016information}:
Swapnil Dhamal, Prabuchandran~K.~J., and Y. Narahari. Information diffusion in social networks in two phases. {\em Transactions on Network Science and Engineering}, 3(4):197--210, 2016.}
}

\label{chap:mpid}

\begin{quote}
%\begin{abstract}
The problem of maximizing information diffusion, given a certain budget constraint expressed in terms of the number of seed nodes, is an important topic in social networks research. Existing literature focuses on single phase diffusion where (a) all seed nodes are selected at the beginning of diffusion and (b) all the selected nodes are activated simultaneously. This chapter undertakes a detailed investigation of the effect of selecting and activating seed nodes in multiple phases. Specifically, we study diffusion in two phases assuming the well-studied independent cascade model. First, we formulate an objective function for two-phase diffusion, investigate its properties, and propose efficient algorithms for finding the seed nodes in the two phases. Next, we study two associated problems: (1) {\em budget splitting} which seeks to optimally split the total budget between the two phases and (2) {\em scheduling} which seeks to determine an optimal delay after which to commence the second phase. Our main conclusions include: (a) under strict temporal constraints, use single phase diffusion, (b) under moderate temporal constraints, use two-phase diffusion with a short delay allocating more of the budget to the first phase, and (c) when there are no temporal constraints, use two-phase diffusion with a long delay allocating roughly one-third of the budget to the first phase.
%\end{abstract}
%
\end{quote}

%\noindent
%\textbf{Keywords:}
%Social networks, 
%viral marketing, 
%information diffusion, 
%influence maximization, 
%independent cascade model.
%%greedy hill-climbing, 
%%cross entropy method, 
%%Shapley value.

\newpage
\section{Introduction}
\label{sec:intro_mpid}
Social networks play a fundamental role in the spread of information on a large scale. In particular, online social
networks have become very popular in recent times, and so is the trend of using them for information diffusion.  
%People or nodes constantly create new connections or links among themselves for a variety of personal and professional reasons.  Both the number of nodes as well as the number of links among them rapidly increase with time.  
%Many of these links are created owing to nodes sharing similar thoughts and behaviors, while others  enforce these similarities if the nodes are not already similar. 
%It has been observed that social networks play a fundamental role as a medium for the spread or diffusion of information among the nodes. 
An information can be of various types, namely, opinions, ideas, behaviors, innovations, diseases, rumors, etc. 
%
%Information diffusion is a very practical and important area of research in the field of social network analysis. In general, 
The objective of whether to maximize or restrict the spread of information would depend on the type of information, and thus the objective function is defined accordingly.
One of the central questions in information diffusion is the following: given a certain budget $k$ expressed in terms of the number of seed nodes, which $k$ nodes in the social network should be selected to trigger the diffusion so as to maximize a suitably defined objective function?

For example, if a company wishes to do a viral marketing of a particular product via, for instance, word-of-mouth,
% or viral marketing, 
%(seed nodes advertising to their friends, and these friends further advertising to their friends, and so on), 
the objective is to spread the information through the network such that number of nodes influenced at the end of the diffusion process, is maximized. So the company would try to select the seed nodes (nodes to whom free samples, discounts, or other such incentives are provided) such that the number of nodes influenced by viral marketing, and hence the sales of that product, would be maximized. 
%Spreading or sharing of an innovation by an individual through youtube videos is another example.
%
On the other hand, if an organization wishes to contain the spread of certain 
%misinformation or 
rumor that is already spreading in the network, it would want to trigger a competing positive campaign at selected seed nodes with the objective of  minimizing the effects of rumor.
%
%Control of epidemics is yet another example where selected nodes are inoculated so that the number of infected nodes is minimized.

In this chapter, we focus on {\em influence maximization} in which, given a budget $k$, our objective is to select at most $k$ seed nodes where the diffusion should be triggered,
%(to whom free samples should be given) 
so as to maximize the spread of influence at the end of the diffusion.
Throughout this chapter, we call $k$, the number of seed nodes, as the budget.

\subsection{Model for Information Diffusion}
We represent a social network as a graph $G$, having $N$ as its set of $n$ nodes and $E$ as its set of $m$ weighted and directed edges.
For studying diffusion in such a network, several models  have been proposed in the literature \cite{networkscrowdsmarkets}. 
The Independent Cascade (IC) model and the Linear Threshold (LT) model are two of the most well-studied models.
%graph-based models which aim at predicting the diffusion process in the network.
In this chapter, our focus will be on the IC model for most part; we later provide a note on the LT model.

\subsubsection{The Independent Cascade (IC) model}
%IC model is extensively used for observing information diffusion in social networks. This model is useful in tracing and understanding the flow of information across the network.
%Let $E$ be the set of weighted, directed edges in graph $G$ and $m$ the number of directed edges.
In the IC model, for each directed edge $(u,v) \in E$, there is an associated {\em influence probability\/} $p_{uv}$ that specifies the probability with which the source node $u$  influences the target node $v$. The diffusion starts at time step $0$ with simultaneous triggering of a set of initially activated or influenced seed nodes, following which, the diffusion proceeds in discrete time steps. In each time step, nodes which got influenced in the previous time step (call them {\em recently activated nodes}) attempt to influence their neighbors, and succeed in doing so with the influence probabilities that are associated with the corresponding edges. 
These neighbors, if successfully influenced, will now become the recently activated nodes for the next time step. In any given time step, only recently  activated nodes contribute to diffusing information. After this time step, such nodes are no longer recently activated and we call them {\em already activated} nodes. Nodes, once activated, remain activated for the rest of the diffusion process. 
In short, when node $u$ gets activated at a certain time step, it gets exactly one chance to activate each of its inactive neighbors (that too in the immediately following time step), with the given influence 
%or diffusion 
probability $p_{uv}$ for each neighbor $v$.
The diffusion process terminates when no further nodes can be activated.
%, that is, when there are no more recently activated nodes. 

\subsubsection{Notion of Live Graph}
The notion of {\em live graph} is crucial to the analysis of the IC model.
%Owing to the nature of the IC model, the concept of {\em live graph} becomes important. 
A live graph $X$ of a graph $G$ is an instance of graph $G$, obtained by sampling the edges;
%. The live graph has all the nodes of the original graph. The sampling is done on the edges alone, depending on the probabilities, 
an edge $(u,v)$ is present in the live graph with probability $p_{uv}$ and absent with probability $1-p_{uv}$, independent of the presence of other edges in the live graph (so a live graph is a directed graph with no edge probabilities). 
The probability $p(X)$ of occurrence of any live graph $X$, can be obtained as
$\prod_{(u,v) \in X} (p_{uv}) \prod_{(u,v) \notin X} (1-p_{uv})$.
%In short, an edge with higher probability associated with it has a greater chance of occurring in a live graph. 
%Note that a live graph is a directed graph with no edge probabilities.
%
%
It can be seen that as long as a node $u$, when influenced, in turn influences node $v$ with probability $p_{uv}$ that is independent of time, sampling the edge $(u,v)$ in the beginning of the diffusion is equivalent to sampling it when $u$ is activated \cite{kempe2003maximizing}. 

\subsubsection{Special Cases of the IC model}
In this chapter, when there is a need for transforming an undirected, unweighted network (dataset) into a directed and weighted network for studying the diffusion process, we consider two popular, well-accepted special cases of the IC model, namely, the {\em weighted cascade (WC) model} and the {\em trivalency (TV) model}. The weighted cascade model does the transformation by making all edges bidirectional and assigning a weight to every directed edge $(u,v)$ equal to the reciprocal of $v$'s degree in the undirected network~\cite{kempe2003maximizing}.
The trivalency model makes all edges bidirectional and assigns a weight to every directed edge by uniformly sampling from the set of values $\{0.001, 0.01, 0.1\}$.

%We use the terms {\em activated} and {\em influenced} interchangeably.

%\textbf{\color{red}{use live graph later for what?}}
\section{Relevant Work}
\label{sec:relevant_mpid}

The problem of influence maximization in social networks has been extensively studied in the literature \cite{networkscrowdsmarkets,guille2013information}.
The impact of recommendations and word-of-mouth marketing on product sales revenue is also well-studied in marketing; see for example,
\cite{godes2012strategic,van2010viral,aral2011creating,aral2012identifying,reichheld2003one}.

It has been shown that obtaining the exact value of the objective function for a seed set (that is, the expected number of influenced nodes at the end of the diffusion process that was triggered at the nodes of that set), is \#P-hard under the IC model \cite{chen2010scalable} as well as the LT model \cite{chen2010scalablelt}.
However, the value be obtained with high accuracy using a sufficiently large number of Monte-Carlo simulations.
Kempe, Kleinberg, and Tardos \cite{kempe2003maximizing} show that maximizing the objective function under the IC model is NP-hard,
%The authors prove that this objective function is non-negative, monotone increasing, and submodular, and so the greedy hill-climbing algorithm gives $(1-\frac{1}{e}-\epsilon)$ approximation to the optimal solution set, 
and present a $(1-\frac{1}{e}-\epsilon)$-approximate algorithm,
where $\epsilon$ is small for sufficiently large number of Monte-Carlo simulations.
Chen, Wang, and Yang \cite{chen2009efficient} propose fast heuristics for influence maximization under the IC model as the greedy algorithm is computationally intensive. 
%One of the heuristics is degree discount heuristic: select the highest degree node, then remove that node and all of its edges, again select the highest degree node from the new graph, and continue.
%
Another line of work 
%related to the IC model 
attempts to relax the assumption that the influence probabilities are known, for example, 
\cite{goyal2010learning}.

There exist fundamental generalizations of these basic models, for instance, general threshold model \cite{kempe2003maximizing} and decreasing cascade model \cite{kempe2005influential}.
Borodin, Filmus, and Oren \cite{borodin2010threshold} provide several natural extensions to the LT model and show that 
%the  greedy approach cannot be used; they show that 
for a broad family of competitive influence models, it is NP-hard to achieve an approximation that is better than a square root of the optimal solution.
% and hence suggest a natural model that is amenable to the greedy approach.
%
Jiang et al. \cite{jiang2013evolutionary} propose a multiagent model where each agent evolves its trust network consisting of other agents whom it trusts (or who can influence it).
Gabbriellini and Torroni \cite{gabbriellini2013arguments} simulate the propagation and evolution of opinions by proposing a model where agents belonging to a social network reason and interact argumentatively and decide whether and how to revise their own beliefs.
Subbian et al. \cite{subbian2013social} suggest finding the influencers in a social network based on their model which considers the individual social values generated by collaborations in the network.
Yu et al. \cite{yu2013emergence} suggest that based on the inherent voting rule adopted by the agents to aggregate the opinions of their neighbors, they are more influenced by the opinion adopted either by most of their neighbors or by a person who has a higher reputation.

Narayanam and Narahari \cite{narayanam2010shapley,suri2008determining} provide a Shapley value based algorithm 
%for influence maximization 
that gives satisfactory performance irrespective of whether or not the objective function is submodular.
%, unlike the greedy hill-climbing algorithm. 
%
Franks et al. \cite{franks2013manipulating} use influencer agents effectively to manipulate the emergence of conventions and increase convention adoption and quality.
Shakarian et al. \cite{shakarian2013mancalog} introduce a logical framework designed to describe cascades in complex networks.
% that meets several requisites.
%
Franks et al. \cite{franks2013learning} propose a general methodology for learning the network value of a node in terms of influence.
Ghanem et al. \cite{ghanem2012agents} study the different patterns of interaction behavior in online social networks wherein, they identify four primary classes of social agents and analyze the influence of agents from each class in the viral spread of ideas under various conditions. 
Mohite and Narahari \cite{mohite2011incentive} use a mechanism design approach to elicit influence values truthfully from the agents since these values are usually not known to the social planner and the strategic agents may not reveal them truthfully.
Bakshy et al. \cite{bakshy2012role} discuss the importance of weak ties in information diffusion.

% % % % % % % % % % % % % % % % % % % % % % % % % % % % % % % % % %

%Chen, Lu and Zhang \cite{chen2012time} study the problem of maximizing influence spread in a social network within a given deadline, by extending the standard LT and IC models to account for time.

Another well-studied problem in the topic of information diffusion in social networks is the problem of influence limitation \cite{budak2011limiting, premm2012influence}, where the objective is to minimize the spread of a negative campaign by triggering a positive campaign.
Bharathi, Kempe, and Salek \cite{bharathi2007competitive} study the problem of competitive influence maximization wherein multiple companies market competing products using viral marketing; they provide an approximation algorithm for computing the best response to the strategy of competitors.
Pathak, Banerjee, and Srivastava \cite{pathak2010generalized} provide a generalized version of the LT model for multiple cascades on a network while allowing nodes to switch between them wherein, the steady state distribution of a Markov chain is used to estimate highly likely states of the cascades' spread in the network.
Myers and Leskovec \cite{myers2012clash} develop a more realistic and practical model where contagions not only propagate at the same time but they also interact, that is, compete or cooperate with each other as they spread over the network.
%; the model learns the interaction between different contagions and hence predict the diffusion of a contagion through the network.
%
Goyal and Kearns \cite{goyal2012competitive} develop a game-theoretic framework for competitive diffusion in a social network and hence analyze the game in detail.

Time related constraints in the context of diffusion have also been studied in the literature. 
Chen, Lu, and Zhang \cite{chen2012time} consider the problem where the goal is to maximize influence spread within a given deadline.
%; they extend the IC model to incorporate the time delay aspect of diffusion in social networks. 
%The authors show that the greedy hill-climbing algorithm gives a $(1-\frac{1}{e}-\epsilon)$ approximation to the optimal solution in their model.
Nguyen et al. \cite{nguyen2012containment} aim to find the smallest set of influential nodes whose decontamination with good information would help contain the viral spread of misinformation, to a desired ratio in a given number of time steps,
given the seed nodes of the misinformation.
%that was initiated from a given set, 

The above papers address only single phase diffusion.
The idea of using multiple phases for maximizing an objective function has been presented in \cite{golovin2011adaptive}; the study is a preliminary one.
To the best of our knowledge, ours is the first detailed effort to study multi-phase information diffusion in social networks.
In the next section, we bring out the motivation for this work, present a motivating example, and
describe the agenda of this work.

\section{Motivation}
\label{sec:motiv_mpid}

Most of the existing literature on information diffusion works with the assumption that the diffusion is triggered at all the selected $k$ seed nodes in one go, that is, in the beginning of the process; in other words, the budget $k$ is exhausted in one single instalment. We consider triggering the diffusion in multiple phases by splitting the total budget
$k$ in an appropriate way across the multiple phases. 
% may yield a larger spread of influence. 
A tempting advantage of multi-phase diffusion is that the seed nodes in second and subsequent phases could be chosen based on the spread observed so far, thus reducing uncertainty while selecting the seed nodes.
%hence improving the spread of influence.
% at the end of the diffusion. 
However, the disadvantage could be that the diffusion may  slow down owing to instalment-based triggering of seed nodes.

In the IC model, where influence probabilities are crucial to the diffusion, the diffusion process is a random process and the general problem addressed in the literature is to maximize influence spread in expectation; it is possible the spread in certain instances may be much less than the expected one. This is a vital practical issue because a company or organization investing in triggering seed nodes for diffusion cannot afford awkward instances where the spread is disappointingly lower than the expected one. Multi-phase diffusion seems an attractive and  natural approach wherein, the company can modulate its decisions at intermediate times during the diffusion process, in order to avoid such instances. This happens because the company would be more certain about the diffusion process and hence would hopefully select better seed nodes in the second and subsequent phases.  However, as already noted, there is a delay in activating the second and subsequent seed sets and the overall diffusion process may be delayed, leading to compromise of time. 
%So there is a need to balance the certainty in influence spread against the accompanying delay.
This may be undesirable when the value of the product or information decreases with time, or when there is a competing diffusion and people get influenced by the product or information which reaches them first.
%\textbf{\color{red}{CLARIFY}}

There is thus a natural trade-off between (a) using better knowledge of influence spread to increase the number of influenced nodes at the end of the diffusion and (b) the accompanying delay in the activation of seed sets from the second phase onwards.
%The time factor is important especially in the presence of a competing influence, where a node would adopt the influence which reaches it first. 

%From a practical viewpoint, 
For multi-phase diffusion to be implemented effectively, it is necessary that the company is able to observe the status of nodes in the social network (inactive, recently activated, or already activated), that is, the company needs to link its customers to the corresponding nodes in the social network. To make such an observation, it would be useful to get the online social networking identity (say Facebook ID) of a customer as soon as the customer buys the product. 
This could be done using a product registration website (say for activating warranty) where a customer, on buying the product, needs to login using a popular social networking website (say Facebook), or needs to provide an email address that can be linked to a Facebook ID.
Thus the time step when the node buys the product, is obtained, and hence the node can be classified as already activated or recently activated.

In this chapter, to obtain a firm grounding on multi-phase diffusion, we focus our attention on two-phase diffusion. We believe that many of the findings of this work carry over to multi-phase diffusion.
It is to be noted that the start of the second phase does not kill the diffusion that commenced in the first phase. When the second phase commences, the recently activated nodes (activated due to diffusion in the first phase) effectively become additional seed nodes for second phase (in addition to the seed nodes that are separately selected for second phase).
% by an influence maximization algorithm).

\begin{figure}[t]
\centering
\begin{tabular}{cc}
\includegraphics[scale=.65]{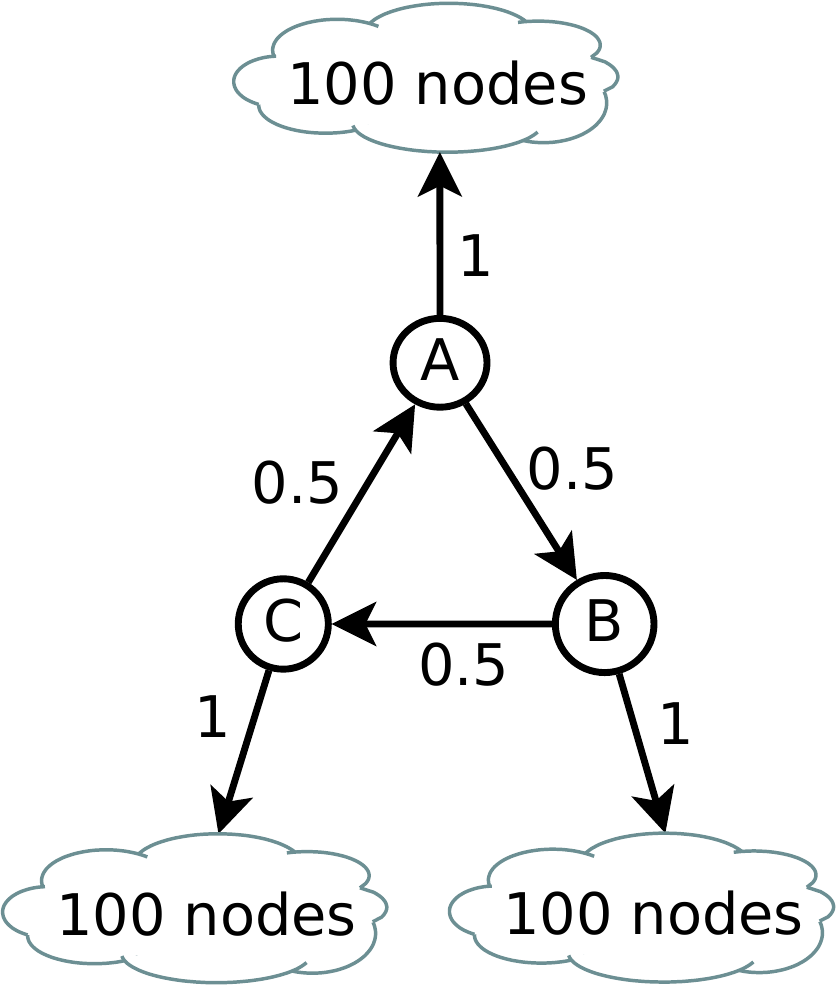}
&
\hspace{2cm}
\includegraphics[scale=.65]{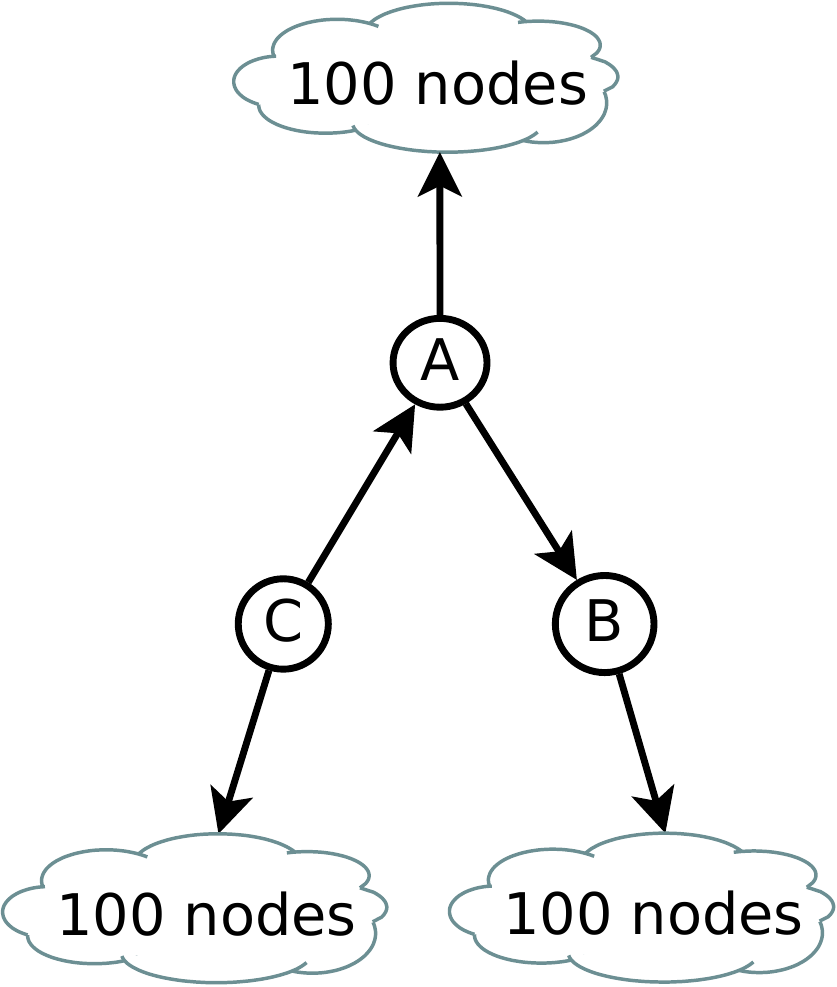}
%\vspace{.2cm}
\\ 
(a) input graph & 
\hspace{2cm}
(b) a live graph
\end{tabular}
%\vspace{-1mm}
\caption{\mbox{Multi-phase diffusion: a motivating example}}
\label{fig:motiv_mpid}
%\vspace{-3mm}
\end{figure}

\subsection{A Motivating Example}
\label{sec:example}
We now illustrate the usage of two-phase diffusion  with a simple stylized example.
Consider the graph in Figure~\ref{fig:motiv_mpid}(a) where the influence probabilities are as shown. Activation of node $A$ or $B$ or $C$ results in activation of 100 additional nodes each, in the following time step, with probability 1.
%A sampled live graph corresponding to it is shown in Figure~\ref{fig:motiv}(c). 
Consider a total budget of $k=2$. 
Assume that the live graph in Figure~\ref{fig:motiv_mpid}(b) is destined to occur (we do not have this information at time step 0 when we are required to select the seed nodes for triggering the diffusion).
Consider an influence maximization algorithm $\mathbb{A}$.

%\subsubsection{Single-phase diffusion for the example} 
%The step by step process for single-phase diffusion is shown in Figure~\ref{fig:motiv}(b-c). 
Let us study single-phase diffusion on this graph. Let $A$ and $B$ be the two seed nodes selected by 
%a certain 
%influence maximizing 
algorithm $\mathbb{A}$ in time step 0. 
In time step 1 as per the IC model, 200 additional nodes get influenced owing to recently activated nodes $A$ and $B$.
Since the realized live graph is as shown in Figure~\ref{fig:motiv_mpid}(b), the diffusion does not proceed any further as there is no outgoing edge from the recently activated nodes to any inactive node. So the diffusion stops at time step \textbf{1}, with \textbf{202} influenced nodes.
%As time steps increment, the diffusion proceeds as follows. Nodes 1 and 2 influence node 3 in time step 1. Now only node 3 is the recently activated node. In time step 2, node 3 would not be able to influence any other node as there is no outgoing edge from it, and so the diffusion process stops at time step 1 itself, with 3 influenced nodes.

%\subsubsection{Two-phase diffusion for the example} 
For two-phase diffusion, let the total budget $k=2$ be split as $1$ each for the two phases,
%for the first phase and $1$ for the second phase, 
and let the second phase be scheduled to start in time step 3 (the seed nodes for second phase are to be selected in time step 3). 
%The step by step process is illustrated in Figure~\ref{fig:motiv}(d-g). 
%
Now let us say that at time step 0, algorithm $\mathbb{A}$ selects node $A$ as the only seed node for triggering diffusion in the first phase. In time step 1, it influences its set of 100 nodes and also node $B$.
%, which in turn influences node 3 in time step 2. 
In the following time step (step 2), 
%the second phase is scheduled to start. 
%At this point, 
$B$'s set of 100 nodes get influenced. But more importantly, we know that $C$ is not influenced, thus deducing the absence of edge $BC$ in the live graph. So we are more certain about which live graph 
%has occurred or 
is likely to occur than we were in the beginning of the diffusion, as we have eliminated the possibilities of occurrence of live graphs containing edge $BC$.
%(because if $AD$ was present in the live graph, it would have come into play at time step 1, thus activating node $D$).
Now based on this observation, algorithm $\mathbb{A}$ would select 
%the uninfluenced 
 $C$ as the seed node for second phase (in time step 3), which in turn, would influence its set of 100 nodes in the following time step. Thus the process stops at time step \textbf{4} with \textbf{303} influenced nodes. 
%Also note that it took 2 time steps for influencing 3 nodes, as against 1 time step in the single-phase diffusion process.
%
%
Note that during its first phase, the two-phase diffusion is expected to be slower than the single phase one, because of using only a part of the budget.

If algorithm $\mathbb{A}$ had selected $B$ as the seed node for the first phase, the diffusion observed after 2 time steps would have guided the algorithm to select $C$ as the seed node for the second phase, since it would influence $A$ with probability 0.5 (also, given that $B$ is already influenced, selection of $A$ as the seed node would not influence $C$), thus leading to all 303 nodes getting influenced.
In another case, if $C$ gets selected as the seed node for the first phase, it would influence all the nodes without having to utilize the entire budget of $k=2$. So multi-phase diffusion can also help achieve a desired spread with a reduced budget.
%
%%% this is out of the scope of this paper, but we believe it to be an important direction to look at.

In short, the idea behind using the two-phase diffusion is that, for influence maximization algorithms (especially those predicting expected spread over live graphs), reducing the space of possible live graphs results in a better estimate of expected spread, leading to selection of a better seed set.
In fact, two-phase diffusion would facilitate an improvement while using a general influence maximization algorithm, owing to knowledge of already and recently activated nodes, and hence a refined search space for seed nodes to be selected for second phase. We discuss this point throughout this chapter.

\section{Contributions of this Chapter}
\label{sec:contrib_mpid}

With the objective of multi-phase influence maximization in social networks, this work makes the
following specific contributions.
\begin{itemize} %[leftmargin=*]
\item Focusing on two-phase diffusion process in social networks under the IC model, we formulate an appropriate objective function that measures the expected number of influenced nodes, and investigate its properties. 
%In particular, we show that the objective function is monotone but non-submodular. 
We then motivate and propose an alternative objective function for ease and efficiency of practical implementation. (Section~\ref{sec:problem_mpid})
\item We investigate different candidate  algorithms for two-phase diffusion including extensions of
existing algorithms that are popular for single phase diffusion. In particular, we propose the use of the cross entropy method and a Shapley value based method as promising algorithms for influence maximization in social networks.
Selecting seed nodes for the two phases using an influence maximization algorithm could be done in two natural ways: (a) myopic or (b) farsighted. (Section~\ref{sec:algo})
%, one of which would provide a constant factor approximation to the optimal solution, subject to the submodularity of the objective function, which we conjecture to be true.
%\item We integrate the diffusion simulation with Gephi, a graph visualization tool, in order to observe the diffusion process step by step.
\item Using extensive simulations on real-world datasets, we study the performance of the proposed algorithms to get an idea how two-phase diffusion would perform, even when used most na\"ively. (Section~\ref{sec:simulations})
\item To achieve the best performance out of  two-phase diffusion, we focus on two constituent problems, namely, (a) {\em budget splitting}: how to split the total available budget between the two phases and (b) {\em scheduling}: when to commence the second phase. 
Through a deep investigation of the nature of our observations, we propose efficient algorithms for the combined optimization problem of budget splitting, scheduling, and seed sets selection.
We then present key insights from a detailed simulation study. (Section~\ref{sec:practical})
%\item We compare the results for two-phase diffusion processes with varying parameters against that for the single-phase diffusion in real-world social networks.
\item We conclude the chapter with (a) a note on how the value of diffusion would decay with time, (b) a note on subadditivity of the objective function, and (c) an overview of how two-phase diffusion could be used under the linear threshold model. (Section~\ref{sec:conclusion_mpid})
%, and (c) a list of interesting future directions to this work.
\end{itemize}
%The rest of the paper is organized as follows. In Section~\ref{sec:problem}, we formulate the problem of two-phase influence maximization by deriving an appropriate objective function, and hence study its properties such as monotonicity and submodularity. In Section~\ref{sec:algo}, we describe the algorithms, following which, we present the simulation results in Section~\ref{sec:simulations}. In Section~\ref{sec:practical}, we address additional issues in the problem of two-phase influence maximization. In Section~\ref{sec:conclusion}, we conclude with a brief discussion and some promising future directions.

\section{Two Phase Diffusion: A Model and Analysis}
\label{sec:problem_mpid}

As mentioned earlier, we concentrate on two-phase diffusion in this chapter. 
Let $k$ be the total budget, that is, the sum of the number of seed nodes that can be selected in the two phases put together. At the beginning of the process (time step 0), suppose $k_1$ seed nodes are selected for the first phase and at time step, say $d$, $k_2$ (where $k_2 = k - k_1$) seed nodes are selected for the second phase. 
%Let the time difference between the start of the two phases be denoted by $d$. 
%Our objective is to study the trade-off between the \textbf{\color{red}{number of influenced nodes}} at the end of the diffusion process and the resultant delay. 
Our objective is to maximize the expected number of influenced nodes at the end of the two-phase diffusion process. 
In what follows, we assume $k_1,k_2,d$ to be given.
We study the problem of optimizing over these parameters in Section~\ref{sec:practical}.

\subsection{Objective Function}
\label{sec:objectivefn}

Let $X$ be a live graph obtained by sampling edges for a given graph $G$.
Let $\sigma ^X (S)$ be the number of nodes reachable from seed set $S$ in $X$, that is, the number of nodes influenced at the end of the diffusion process that starts at $S$, if the resulting live graph is $X$. Let $p(X)$ be the probability of occurrence of $X$. So the number of influenced nodes at the end of the process, in expectation, is
$\sigma (S) = \sum_X p(X) \sigma ^X (S)$.
%\begin{displaymath}
%\sigma (S) = \sum_X p(X) \sigma ^X (S)
%\end{displaymath}
%
It has been shown that $\sigma ^X (S)$, and hence $\sigma (S)$, are non-negative, monotone increasing, and submodular \cite{kempe2003maximizing}.

%In the problem of two-phase influence maximization, our objective is to develop algorithms that output the set of seed nodes for the first phase in the beginning of the process, and the set of seed nodes for the second phase at time step $d$.
%\textbf{\color{red}{In what follows, we drop the notations $k_2$ and $d$ as we assume them to be given (for now).}}
% an input to the algorithm and will be clear from the context.
%
We now formulate an appropriate objective function that measures the expected number of influenced nodes at the end of two-phase diffusion.
Let $S_1$ be the seed set for the first phase
and $X$ be the live graph that is destined to occur ($X$ is not known at the beginning of diffusion, but we know $p(X)$ from edge probabilities in $G$).
Let $Y$ be the partial observation at time step $d$, owing to the observed diffusion. 
As we will be able to classify activated nodes as already activated and recently activated at time step $d$, we assume that $Y$ conveys this information.
That is, from $Y$, the set of already activated nodes $\mathcal{A}^Y$ and the set of recently activated nodes $\mathcal{R}^Y$ at time step $d$, can be determined.
%Both $\mathcal{A}^Y$ and $\mathcal{R}^Y$ can be obtained deterministically.  
Given $Y$, we can now update the probability of occurrence of a live graph $X$ by $p(X|Y)$.

Now at time step $d$, we should select a seed set that maximizes the final influence spread, considering that nodes in $\mathcal{R}^Y$
will also be effectively acting like seed nodes for second phase.
Let $S_2 ^{O(Y,k_2)}$ be an optimal set of $k_2$ nodes to be selected as seed set, given the occurrence of partial observation $Y$ (which implicitly gives $\mathcal{A}^Y,\mathcal{R}^Y$). We can also write the above optimal set as $S_2 ^{O(X,S_1,d,k_2)}$, as $Y$ can be uniquely obtained for a given $d$ and particular $X$ and $S_1$. So
for all $S_2 ' \subseteq N \setminus S_1$ such that $|S_2 ' | \leq k_2$ (note that it is optimal to have $|S_2 ' | = k_2$ owing to monotone increasing property of $\sigma(\cdot)$),
%\begin{scriptsize}
\begin{displaymath}
%\begin{split}
 \sum_{X} p(X|Y) \sigma^{X \setminus \mathcal{A}^Y} (\mathcal{R}^Y \cup S_2 ^{O(X,S_1,d,k_2)}) 
 \geq \sum_{X} p(X|Y) \sigma^{X \setminus \mathcal{A}^Y} (\mathcal{R}^Y \cup S_2 ')
%\end{split}
\end{displaymath}
%\end{scriptsize}
%\begin{displaymath}
%\sum_{X} p(X|Y) \sigma^{X \setminus \mathcal{A}^Y} (\mathcal{R}^Y \cup S_2 ^{O(X,S_1)})  \geq \sum_{X} p(X|Y) \sigma^{X \setminus \mathcal{A}^Y} (\mathcal{R}^Y \cup S_2 ')
%\end{displaymath}
where ${X \setminus \mathcal{A}^Y}$ is the graph derived from $X$ by removing nodes belonging to $\mathcal{A}^Y$.
% and all of their edges.
%
Now, adding $\sum_{X} p(X|Y) |\mathcal{A}^Y|$ on both sides to account for the already activated nodes in the first phase, we get
\begin{small}
\begin{displaymath}
\sum_{X} p(X|Y) \sigma^X (S_1 \cup S_2 ^{O(X,S_1,d,k_2)})  \geq \sum_{X} p(X|Y) \sigma^X (S_1 \cup S_2 ')
\end{displaymath}
\end{small}
We call this inequality, the {\em optimality of} $S_2 ^{O(X,S_1,d,k_2)}$ throughout this chapter.
So assuming that, given a $Y$, we will select an optimal seed set for the second phase, our objective is to select an optimal $S_1$ (seed set for first phase). Now, as $Y$ is unknown at the beginning of the first phase, the objective function, say $\mathbb{F}(S_1,d,k_2)$,
%of a set $S_1$, 
is an expected value with respect to all such $Y$'s.
Until Section~\ref{sec:practical}, we assume $k_2$ and $d$ to be given, and so we write $\mathbb{F}(S_1,d,k_2)$ as $f(S_1)$.
So,
%$f(S_1)$ is
\vspace{-7mm}
%\begin{small}
\begin{align}
%f(S_1) 
%&=
f(S_1)&= 
\sum_{Y} p(Y)  \big\{ |\mathcal{A}^{Y}| + \sum_{X} p(X|Y) \sigma^{X \setminus \mathcal{A}^Y}(\mathcal{R}^Y \cup S_2 ^{O(Y,k_2)})   \big\}
\nonumber
\\ &= 
\sum_{Y} p(Y)  \big\{ |\mathcal{A}^{Y}| + \sum_{X} p(X|Y) \sigma^{X \setminus \mathcal{A}^Y}(\mathcal{R}^Y \cup S_2 ^{O(X,S_1,d,k_2)})   \big\}
\nonumber
\\ &= 
\sum_{Y} p(Y) \sum_{X} p(X|Y) \big\{ |\mathcal{A}^{Y}| +  \sigma^{X \setminus \mathcal{A}^Y}(\mathcal{R}^Y \cup S_2 ^{O(X,S_1,d,k_2)})   \big\}
\nonumber
%\\ & 
%\;\;\;\;\;(\text{since } \sum_{X} p(X|Y) = 1)
\\ &= 
\sum_{Y} p(Y) \sum_{X} p(X|Y) \sigma^X (S_1 \cup S_2 ^{O(X,S_1,d,k_2)}) 
\label{eqn:basic}
%\\ &=
%\sum_{X} \sum_{Y} p(Y) p(X|Y) \sigma^X (S_1 \cup S_2 ^{O(X,S_1)}) 
%\\ &=
%\sum_X p(X) \sigma^X (S_1 \cup S_2 ^{O(X,S_1,d,k_2)}) 
\\ &=
\sum_{X} \sum_{Y} p(Y) p(X|Y) \sigma^X (S_1 \cup S_2 ^{O(X,S_1,d,k_2)})  
\nonumber
\end{align}
%\end{small}
\vspace{-5mm}
\begin{equation}
\label{eqn:f}
\therefore \; f(S_1) = \sum_X p(X) \sigma^X (S_1 \cup S_2 ^{O(X,S_1,d,k_2)}) 
%\vspace{-2mm}
\end{equation}
Note that at time step $d$, the choice of $S_2 ^{O(X,S_1,d,k_2)}$ depends 
not only on $X$, but
%not on $X$ (since $X$ is not observable), but 
on partial observation $Y$, and hence on all live graphs that could result from $Y$ (just as in single phase, choice of the best seed set depends on all live graphs that could result from the given graph $G$).
It is easy to prove on similar lines as \cite{kempe2003maximizing} that, the problem of maximizing $f(\cdot)$ is NP-hard.
%(for instance, for $k=k_1$, $f(\cdot)$ reduces to $\sigma(\cdot)$ which is the objective function for single-phase diffusion, and the problem of maximizing $\sigma(\cdot)$ is NP-hard~\cite{kempe2003maximizing}).
%
We now show how to compute $f(\cdot)$ using an example.

%\vspace{-5mm}
\begin{figure}[h] %{r}{0.16\textwidth}%[h]
\centering
%\hspace{-50mm}
%\vspace{2mm}
\includegraphics[scale=.55]{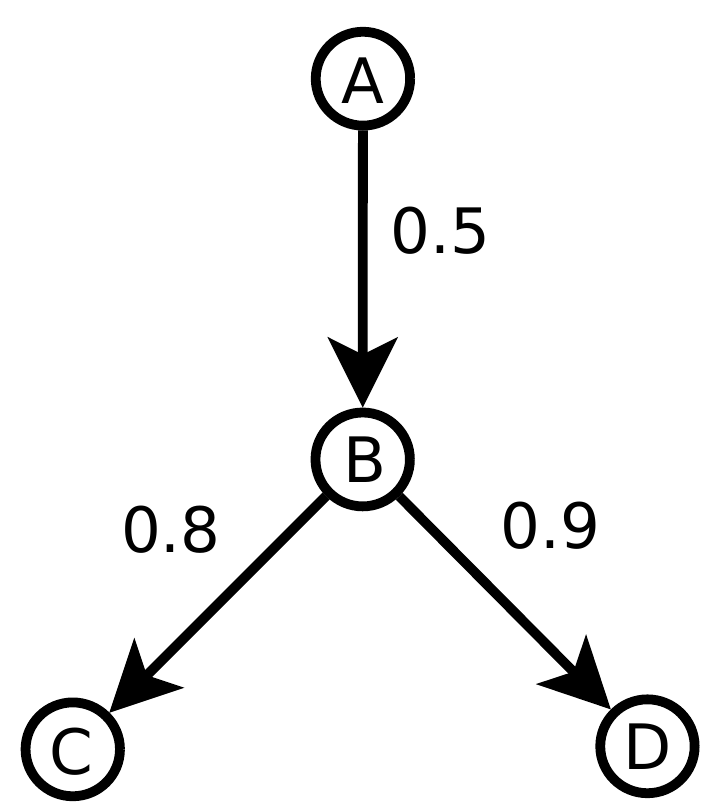}
\caption{An example network
%for proving non-submodularity of the objective function
%\vspace{-.6cm}
}
\label{fig:submod_counter}
%\vspace{-.3cm}
\end{figure}

%\noindent
\begin{example}
\label{eg:obj_fn}
%\vspace{-.3cm}
%
(Figure~\ref{fig:submod_counter})
Consider 
%the network in .
%Now 
%Let us study two-phase diffusion with 
$S_1=\{A\}$, $k_2=1$, and $d=1$.
Table~\ref{tab:mpid_example} lists the two possibilities of $Y$ ($\mathcal{A}^Y,\mathcal{R}^Y$) at $d=1$. 
%(note that $\mathcal{R}^Y = \{\}$ at $d=3$). 
%These correspond to the five possible $Y$'s. 
%Also note that from $\mathcal{A}^Y = \{A\}$, we cannot deduce the presence or absence of edges $BC$ and $BD$ since node $B$ is not activated.
The set $S_2^{O(Y,k_2)}$ is easy to compute for both the cases.
%each of the five $Y$'s. 
%$p(X)$ can be computed easily, for instance, $p(X=\{BC\})=(0.8)(1-0.5)(1-0.9)$.
So $f(\{A\})= \mathbb{F}(\{A\},1,1) = \sum_X p(X) \sigma^X (\{A\} \cup S_2 ^{O(X,\{A\},1,1)}) = 3.8$.
\end{example}

\begin{table}[b!]
\begin{center}
%\vspace{3mm}
%\begin{footnotesize}
%\hspace{-.5cm}
\begin{tabular}{c|c|c|c|c|c}
\hline \hline  
\multicolumn{5}{c}{\T \B ~~~~~~~~~~~~~~~~~~$S_1=\{A\}, k_2=1, d=1$} \\
\hline \T \B \hspace{-.4cm}
  \multirow{3}{*}{$X$} & \multirow{3}{*}{$p(X)$} & \multicolumn{2}{c|}{\T \B $Y$} & \multirow{3}{*}{$S_2^{O(Y,k_2)}$}  & \multirow{3}{*}{$f(S_1)$}  \\ 
  \cline{3-4} \hspace{-.4cm} %\T \B
 & & \multirow{2}{*}{$\mathcal{A}^Y$} & \multirow{2}{*}{$\mathcal{R}^Y$}  & & \\
 \hspace{-.4cm} %\T \B
 & & & & & \\
%\hline \hline
%\multicolumn{5}{c}{$S_1=\{A\}, k_2=1, d=3$} \\
%\hline \T \B \hspace{-.4cm}
%  $\mathcal{A}^Y$ & $S_2^{O(Y)}$  & $X$ & $\sigma^X(S_1\cup S_2^{O(Y)})$ & $p(X)$  \\
  \hline \hline \T \B  \hspace{-.4cm}
  $\{AB,BC,BD\}$ & 0.36 & \multirow{4}{*}{$\{A\}$}  & \multirow{4}{*}{$\{B\}$} &  \multirow{4}{*}{$\{C\}$} & 4\\
   \cline{1-2}\cline{6-6} \hspace{-.4cm} \T \B
  $\{AB,BC\}$  & 0.04 &  & &   & 3 \\
   \cline{1-2}\cline{6-6} \hspace{-.4cm} \T \B
     $\{AB,BD\}$ & 0.09 & & &  & 4 \\
   \cline{1-2}\cline{6-6} \hspace{-.4cm} \T \B
     $\{AB\}$ & 0.01 &  & &  & 3\\
  \hline \T \B \hspace{-.4cm}
  $\{BC,BD\}$ &  0.36 & \multirow{4}{*}{$\{A\}$} & \multirow{4}{*}{$\{\}$} & \multirow{4}{*}{$\{B\}$} 
  &  4 \\
  \cline{1-2}\cline{6-6} \hspace{-.4cm} \T \B
   $\{BC\}$ & 0.04  &  &  & & 3\\
   \cline{1-2}\cline{6-6} \hspace{-.4cm} \T \B
   $\{BD\}$ & 0.09  & &  & & 3\\
   \cline{1-2}\cline{6-6} \hspace{-.4cm} \T \B
   $\{\}$ & 0.01  & &  & & 2\\ 
   \hline \hline
\end{tabular}
%\end{footnotesize}
\end{center}
\caption{Table to aid the computation of objective function for Example~\ref{eg:obj_fn}}
\label{tab:mpid_example}
\end{table}

%\begin{figure*}
%\centering
%\includegraphics[scale=.23]{xy_tree.pdf}
%\caption{$S_1=\{A\}, k_2=1, d=1$}
%\end{figure*}

\subsection{Properties of the Objective Function}
\label{sec:props}

\begin{property}
\label{prop:monotone}
$f(\cdot)$ is non-negative and monotone increasing.
\end{property}
\begin{proof}
%Let $h^X(S_1) = \sum_{X} p(X|Y) \sigma^X (S_1 \cup S_2 ^{O(X,S_1)})$.
$f(\cdot)$ is non-negative since $\sigma ^X (\cdot)$ is non-negative.
Consider $S_1 \subset T_1$. Then,
\begin{align*}
f(T_1)
&=
%& \;\;\;\;
\; \sum_{X} p(X) \sigma^X (T_1 \cup T_2 ^{O(X,T_1,d,k_2)})
\\ & \geq
\; \sum_{X} p(X)   \sigma^X (T_1 \cup S_2 ^{O(X,S_1,d,k_2)})
   \\ & \geq 
\; \sum_{X} p(X)   \sigma^X (S_1 \cup S_2 ^{O(X,S_1,d,k_2)})
\;   =f(S_1)
\end{align*}
The first inequality is from optimality of $T_2 ^{O(X,T_1,d,k_2)}$  and the second one from monotonicity of $\sigma ^X (\cdot)$.
%So $f(\cdot)$ is non-negative and monotone increasing as it is a non-negative linear combination of $h^X (\cdot)$.
\end{proof}

\noindent
As $f(\cdot)$ is monotone increasing and $|S_1| \leq k_1$, given a fixed $k_1$, it is optimal to select $S_1$ such that $|S_1|=k_1$.

\begin{property}
\label{prop:nonsubmodular}
\mbox{$f(\cdot)$ is neither submodular nor supermodular.}
\end{property}
\begin{proof}
We prove this using a simple counterexample network in Figure~\ref{fig:submod_counter}.
Consider $d = 3$ and $k_2 = 1$. 

Considering $S_1 = \{\}$, $T_1 = \{D\}$, $i = C$, 
we get $f(S_1 \cup \{i\} ) = 2.95$, $f(S_1) = 2.7$, $f(T_1 \cup \{i\} ) = 3.5$, $f(T_1) = 2.9$.
So we have 
$f(S_1 \cup \{i\} ) - f(S_1) < f(T_1 \cup \{i\} ) - f(T_1)$
for some $T_1$, $S_1 \subset T_1$,  $i \notin T_1$, which proves non-submodularity of $f(\cdot)$.

Considering $S_1 = \{\}$, $T_1 = \{B\}$, $i = A$, 
we get $f(S_1 \cup \{i\} ) = 3.84$, $f(S_1) = 2.7$, $f(T_1 \cup \{i\} ) = 3.98$, $f(T_1) = 3.7$.
So we have 
$f(S_1 \cup \{i\} ) - f(S_1) > f(T_1 \cup \{i\} ) - f(T_1)$
for some $T_1$, $S_1 \subset T_1$,  $i \notin T_1$, which proves its non-supermodularity.
\end{proof}

\begin{remark}
It was observed using simulations on the test graphs that the diminishing returns property is satisfied in most cases. That is, for most $S_1$ and $T_1$ such that $S_1 \subset T_1 \subset N$ and  $i \in N\setminus T_1$, it was observed that $f(S_1 \cup \{i\} ) - f(S_1) \geq f(T_1 \cup \{i\} ) - f(T_1)$.
Furthermore, it can be shown that $f(\cdot)$ is subadditive; we provide a note regarding this in Section~\ref{sec:subadditive}.
\end{remark}

Owing to NP-hardness of the single phase influence maximization problem, it is impractical to compute $S_2^{O(X,S_1,d,k_2)}$ in Equation~(\ref{eqn:f}). We surmount this difficulty by maximizing an alternative function instead of $f(\cdot)$. To emphasize this point, note that this impractical computation is for finding the objective function value itself, which makes finding an optimal $S_1$, a computationally infeasible task. So the alternative function must be several orders of magnitude faster to compute than $f(\cdot)$.
%, while being a good approximation at the same time.
We now address this problem.

% as shown in Lemma~\ref{lem:f_approx_g}.  
\subsection{An Alternative Objective Function}
\label{sec:compute}

\subsubsection{Using Greedy Hill-climbing Algorithm}
\label{sec:greedy}
The greedy hill-climbing algorithm selects nodes one at a time, each time choosing a node that provides the largest marginal increase in the function value, until the budget is exhausted.

Now given the occurrence of the partial observation $Y$,
% (which implicitly gives $\mathcal{R}^Y$),
let $S_2 ^{G(Y,k_2)} = S_2 ^{G(X,S_1,d,k_2)}$ be a set of size $k_2$ obtained using the greedy hill-climbing algorithm.
% (selecting nodes one at a time, each time choosing a node that provides the largest marginal increase in the function value). 
%
Let
\begin{displaymath}
\label{eqn:g}
g(S_1) \overset{\mathcal{MC}}=
\sum_X p(X) \sigma^X (S_1 \cup S_2 ^{G(X,S_1,d,k_2)}) 
%\\  &\overset{\mathcal{MC}}{=}&
%\sum_X p(X) \Phi_{S_1}^X (S_2 ^{G(X,S_1)})
\end{displaymath}
that is, $g(\cdot)$ is obtained using Monte-Carlo (${\mathcal{MC}}$) simulations.

\begin{theorem} 
\label{thm:nemhauser}
For a non-negative, monotone increasing, submodular function $\mathcal{F}$, let $S^G$ be a set of size $k$ obtained using greedy hill-climbing. Let $S^O$ be a set that maximizes the value of $\mathcal{F}$ over all sets of cardinality $k$. Then 
%$\mathcal{F} (S^G) \geq (1- \frac{1}{e}) \mathcal{F} (S^O )$~\cite{nemhauser1978analysis}.
%Furthermore, 
for any $\epsilon>0$, there is a $\gamma>0$ such that by using $(1 + \gamma)$-approximate values for $\mathcal{F}$, we obtain a $(1-\frac{1}{e}-\epsilon)$-approximation~\cite{kempe2003maximizing}.
\end{theorem}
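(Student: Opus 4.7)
The plan is to follow the classical Nemhauser--Wolsey--Fisher argument for monotone submodular maximization under a cardinality constraint, and then patch it with an error-propagation step to handle the $(1+\gamma)$-approximate oracle. Throughout, let $S^{(t)}$ denote the greedy solution after $t$ additions, with $S^{(0)}=\varnothing$ and $|S^{(k)}|=k$.

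First, I would establish the exact-oracle version, namely that with perfect values of $\mathcal{F}$, greedy achieves at least $(1-1/e)\mathcal{F}(S^O)$. The key lemma is the following: for any $t$, the marginal gain of the greedy element satisfies
\begin{equation*}
\mathcal{F}(S^{(t+1)}) - \mathcal{F}(S^{(t)}) \;\geq\; \frac{1}{k}\bigl(\mathcal{F}(S^O) - \mathcal{F}(S^{(t)})\bigr).
\end{equation*}
This follows by enumerating $S^O = \{o_1,\dots,o_k\}$, using monotonicity to write $\mathcal{F}(S^O) \leq \mathcal{F}(S^{(t)} \cup S^O)$, telescoping along the additions of $o_1,\dots,o_k$ into $S^{(t)}$, applying submodularity to bound each telescoping term by $\mathcal{F}(S^{(t)}\cup\{o_i\}) - \mathcal{F}(S^{(t)})$, and then noting that the greedy choice is at least as good as any of these $k$ marginals. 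Writing $\delta_t := \mathcal{F}(S^O) - \mathcal{F}(S^{(t)})$, this gives the recursion $\delta_{t+1} \leq (1-1/k)\delta_t$, hence $\delta_k \leq (1-1/k)^k \delta_0 \leq e^{-1}\mathcal{F}(S^O)$, which yields $\mathcal{F}(S^{(k)}) \geq (1-1/e)\mathcal{F}(S^O)$.

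Next, I would introduce the approximate oracle. An oracle call returns some $\widehat{\mathcal{F}}(S)$ with $\mathcal{F}(S)/(1+\gamma) \leq \widehat{\mathcal{F}}(S) \leq (1+\gamma)\mathcal{F}(S)$. The greedy rule now picks the element that maximizes $\widehat{\mathcal{F}}(S^{(t)}\cup\{z\})$ rather than $\mathcal{F}(S^{(t)}\cup\{z\})$. The key observation is that the ``truly best'' marginal and the ``greedily chosen'' marginal can differ by at most a multiplicative factor depending on $\gamma$: if $z^\star$ is the true maximizer and $z^G$ the greedy pick, then $\widehat{\mathcal{F}}(S^{(t)}\cup\{z^G\}) \geq \widehat{\mathcal{F}}(S^{(t)}\cup\{z^\star\})$, and converting both sides from $\widehat{\mathcal{F}}$ to $\mathcal{F}$ yields $\mathcal{F}(S^{(t)}\cup\{z^G\}) \geq \mathcal{F}(S^{(t)}\cup\{z^\star\})/(1+\gamma)^2$. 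Plugging this weakened inequality into the recursion above gives
\begin{equation*}
\delta_{t+1} \;\leq\; \Bigl(1 - \frac{1}{(1+\gamma)^2 k}\Bigr)\delta_t + \text{(lower order error)},
\end{equation*}
and iterating gives an approximation ratio of the form $1 - e^{-1/(1+\gamma)^2} - O(\gamma)$. Since this ratio tends to $1-1/e$ as $\gamma\to 0$, for any $\epsilon>0$ one can choose $\gamma>0$ small enough that the ratio is at least $1-1/e-\epsilon$.

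The main obstacle I expect is bookkeeping the multiplicative error through the recursion cleanly, in particular making sure that the additive slack $O(\gamma)$ accumulated over $k$ greedy steps does not blow up; this is handled by absorbing the error into $\gamma$ (choosing $\gamma$ depending on $k$ and $\epsilon$), which is permissible because the theorem only asserts existence of such a $\gamma$. A secondary subtlety is that in our application $\mathcal{F}$ will be estimated via Monte-Carlo simulations, so one should note that standard concentration arguments let us realize the $(1+\gamma)$-approximate oracle with high probability using polynomially many samples; this justifies the in-practice use of the greedy method. Everything above is a restatement of the argument from Kempe--Kleinberg--Tardos, so the citation to \cite{kempe2003maximizing} serves as the authoritative reference for the full technical details.
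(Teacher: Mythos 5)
The paper offers no proof of this statement—it is quoted as a known result, with the $(1-\frac{1}{e})$ guarantee attributed to Nemhauser et al.\ and the robustness to $(1+\gamma)$-approximate oracle values attributed to Kempe, Kleinberg, and Tardos. Your reconstruction of that standard argument is correct, including the one genuinely delicate point: the approximate oracle perturbs the greedy recursion by an additive term of order $\gamma\,\mathcal{F}(S^{(t)}) \leq \gamma\,\mathcal{F}(S^O)$ per step rather than a purely multiplicative one, and you correctly note that choosing $\gamma$ as a function of $k$ and $\epsilon$ absorbs the accumulated slack, which is all the theorem's existential quantifier over $\gamma$ requires.
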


The $(1 + \gamma)$-approximate values for $\mathcal{F}$ with small $\gamma$ can be obtained using sufficiently large number of Monte-Carlo simulations for measuring influence spread in the IC model~\cite{kempe2003maximizing}.

%\begin{comment}
\begin{lemma}
\label{lem:f_approx_g}
$g(\cdot)$ gives a $\left( 1-\frac{1}{e}-\epsilon \right)$ approximation to $f(\cdot)$.
\end{lemma}
\begin{proof}
Let $\Phi_T (S) = \sigma (T \cup S)$ and $\Phi_T^X (S) = \sigma ^X (T \cup S)$.
It can be easily shown that $\Phi_T ^X (S)$, and hence $\Phi_T (S)$, are non-negative, monotone increasing, and submodular.
%
%As $\Phi_{S_1}^X$ is non-negative, monotone increasing, and submodular, 
%With $T := S_1$, from Theorem \ref{thm:nemhauser}, we get
So we have
\begin{align*}
g(S_1)
%&\overset{\mathcal{MC}}{=}& 
%\sum_X p(X) \sigma^X (S_1 \cup S_2 ^{G(X,S_1,d,k_2)}) 
%\\ 
&\overset{\mathcal{MC}}{=} 
\sum_X p(X) \Phi_{S_1}^X (S_2 ^{G(X,S_1,d,k_2)}) 
 \\ & \geq 
\text{\scriptsize{$\left( 1-\frac{1}{e}-\epsilon \right)$}}  \sum_X p(X) \Phi_{S_1}^X (S_2 ^{O(X,S_1,d,k_2)}) 
 \\ &=
\text{\scriptsize{$ \left( 1-\frac{1}{e}-\epsilon \right)$}} f(S_1) 
% \qedhere
\end{align*}
%Also, from optimality of $S_2 ^{O(X,S_1,d,k_2)}$, $g(S_1) \leq f(S_1)$.
where the first inequality results from Theorem~\ref{thm:nemhauser}.
\end{proof}
%\end{comment}

So one can aim to maximize $g(\cdot)$ instead of $f(\cdot)$.
However, greedy hill-climbing algorithm itself is expensive in terms of running time (even after diffusion specific optimizations such as in \cite{chen2009efficient}), so we aim to maximize yet another function which would act as a proxy for $g(\cdot)$.
%

%\subsubsection{Using Weighted Discount Heuristic}
%\label{sec:wd}
%%
%The weighted discount (WD) heuristic for a graph $G$ is as follows: select the node having the largest sum of outgoing edge probabilities in $G$, then remove that node and all of its incoming and outgoing edges to obtain a new graph $G'$, again select the node having the largest sum of outgoing edge probabilities in the new graph $G'$, and continue until the budget is exhausted.
%Its time complexity is $ O( k n \Delta )$, where $\Delta$ is the out-degree of node having the maximum number of out-neighbors (or number of outgoing edges).
%
%
%Given the occurrence of partially observable live graph $Y$,
%let $S_2 ^{H(Y)} = S_2 ^{H(X,S_1)}$ be a set of size $k_2$ obtained using the 
%weighted discount heuristic.

\subsubsection{Using Generalized Degree Discount Heuristic}
\label{sec:gdd}
Consider the process of selecting seed nodes one at a time.
At a given time in the midst of the process, let $\mathcal{X}=$ set of in-neighbors of node $v$ already selected as seed nodes and $\mathcal{Y}=$ set of out-neighbors of $v$ not yet selected as seed nodes.
We develop Generalized Degree Discount (GDD) Heuristic
as an extension to 
the argument for Theorem~2 in \cite{chen2009efficient}: 
if $v$ is not (directly) influenced by any of the already selected seeds, which occurs with probability
$ \prod_{x \in \mathcal{X}} (1-p_{xv}) $, then the additional expected number of nodes that it influences directly (including itself) is $\left(1+\sum_{y \in \mathcal{Y}} p_{vy}\right)$.
So until the budget is exhausted, GDD heuristic iteratively selects a node $v$ having the largest value of
\begin{equation}
\label{eqn:gdd_value}
w_v = \left( \prod_{x \in \mathcal{X}} (1-p_{xv}) \right) \left( 1+\sum_{y \in \mathcal{Y}} p_{vy} \right)
\end{equation}
%\textbf{\color{red}{Give intuition for expression.\\}}
Its time complexity is $ O( k n \Delta )$, where $\Delta$ is the maximum degree in the graph.

Given the occurrence of the partial observation $Y$,
let $S_2 ^{H(Y,k_2)} = S_2 ^{H(X,S_1,d,k_2)}$ be a set of size $k_2$ obtained using the 
GDD heuristic.
Let 
\begin{equation}
\label{eqn:h}
h(S_1) \overset{\mathcal{MC}}{=} 
\sum_X p(X) \sigma^X (S_1 \cup S_2 ^{H(X,S_1,d,k_2)}) 
\end{equation}
%
%, since the expected spread starting from the seed sets outputted by weighted discount heuristic and greedy algorithm are close (see Figure~\ref{fig:plots}(b)).
%\textbf{\color{red}{Justify weighted discount as a proxy for greedy.}}
%
We conducted simulations for checking how well $h(\cdot)$ acts as a proxy for $g(\cdot)$, using both weighted cascade and trivalency models. We observed the following. 
\begin{observation}
\label{obs:gdd_approx_greedy}
For almost all $S,T$ pairs: 
\begin{enumerate}
\item[(a)] If $g(T)>g(S)$, then $h(T)>h(S)$ (in particular, this is satisfied for almost all pairs of sets that give excellent objective function values), which ensures that the selected seed set remains unchanged in most cases when we have $h(\cdot)$ as our objective function instead of $g(\cdot)$.
\item[(b)] $\frac{h(S)}{h(T)} \approx \frac{g(S)}{g(T)}$, which ensures that the seed set selected by algorithms, which implicitly depend on the ratios of the objective function values given by any two sets, remains unchanged in most cases when we have $h(\cdot)$ as our objective function instead of $g(\cdot)$; two of the algorithms we consider, namely, FACE (Section~\ref{sec:ce_method}) and SPIC (Section~\ref{sec:shapley_method}) belong to this category of algorithms. 
\end{enumerate}
\end{observation}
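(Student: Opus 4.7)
The statement being justified is empirical rather than analytical: it asserts two approximate properties observed through simulation when comparing $g(\cdot)$ (computed via greedy hill-climbing on the single-phase subproblem) with $h(\cdot)$ (computed via the GDD heuristic). So my ``proof'' plan here is really a plan for the simulation study that would validate Observation~\ref{obs:gdd_approx_greedy}, backed by informal structural arguments for why the approximation should hold.

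The plan is to first run a controlled pairwise comparison on real-world social network datasets under both the weighted cascade and trivalency edge-weighting schemes. For fixed $d, k_2$, I would sample a large collection of seed-set pairs $(S,T)$ drawn from several regimes: (i) small sets chosen uniformly at random to cover a broad score range, (ii) sets produced by the candidate algorithms (greedy, GDD, CE, Shapley-based) so that the ``high-quality'' region of the domain is densely sampled, and (iii) perturbations of these algorithmic outputs by single-element swaps so that near-optimal comparisons are stressed. For each sampled pair I would estimate $g(S),g(T),h(S),h(T)$ using a sufficient number of Monte-Carlo simulations (on the order of $10^3$--$10^4$) to drive the noise in $\sigma^X(\cdot)$ below the gaps being compared. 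The output metrics would be: (a) the empirical probability $\Pr[\mathrm{sign}(h(T)-h(S)) = \mathrm{sign}(g(T)-g(S))]$, reported separately on the full sample and on the subset where both sets lie in the top quantile of $g(\cdot)$-values; and (b) the distribution of the relative error $|h(S)/h(T)-g(S)/g(T)|$, reported together with its mean and tail quantiles.

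Next I would give an informal structural argument for why we should expect these patterns. Both $g$ and $h$ are summed against $p(X)\sigma^X(\cdot)$ with the same $S_1$ and the same observed prefix of the diffusion, so both inherit the same ``baseline'' contribution from $S_1 \cup \mathcal{R}^Y$; the only difference is the quality of the continuation seed set $S_2^{G(\cdot)}$ vs.\ $S_2^{H(\cdot)}$. GDD's scoring rule in Equation~\eqref{eqn:gdd_value} is precisely the first-order analogue of the marginal gain that greedy maximizes at each step, which makes it plausible that the two selection procedures frequently pick overlapping or near-overlapping sets, and that when they differ the resulting influence spreads are close. This motivates both the ranking-preservation claim (a) and the proportionality claim (b), since a roughly constant multiplicative gap between $h$ and $g$ translates directly into ratio preservation.

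The main obstacle, and therefore the step that deserves the most care, will be controlling the Monte-Carlo noise in a statistically honest way: since the observation claims hold ``for almost all $S,T$ pairs,'' spurious sign flips caused by sampling error in $\sigma^X$ could either inflate or deflate the reported agreement rate. I would address this by (i) using common random numbers across $(S,T)$ pairs (reusing the same sampled live graphs $X$ to compute $\sigma^X(S)$ and $\sigma^X(T)$) so that the difference $\sigma^X(T)-\sigma^X(S)$ has much lower variance than the individual spreads, and (ii) reporting agreement only for pairs whose estimated $|g(T)-g(S)|$ exceeds a confidence-interval threshold, so that ``close calls'' do not pollute the count. A secondary obstacle is scale: enumerating $(S,T)$ pairs is combinatorial, so I would stratify the sampling (by seed-set size, by overlap $|S\cap T|$, and by $d$) rather than attempting exhaustive coverage, and report the observation statistics stratum-wise. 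If (a) and (b) hold uniformly across strata on both datasets and both weighting models, the observation is justified at the level of empirical regularity that the remainder of this chapter relies upon.
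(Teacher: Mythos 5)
Your proposal is correct and matches the paper's approach: the paper justifies this Observation purely empirically, stating only that simulations were conducted on moderately sized datasets under both the weighted cascade and trivalency models to check how well $h(\cdot)$ tracks $g(\cdot)$. Your plan is in fact considerably more detailed and statistically careful (common random numbers, stratified sampling of $(S,T)$ pairs, confidence thresholds on sign agreement) than what the paper reports, and your structural argument that GDD's score in Equation~(\ref{eqn:gdd_value}) is a first-order surrogate for greedy's marginal gain is consistent with the paper's own motivation for the heuristic.
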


\begin{remark}
One could question, why not use a function $\hat h(S_1) \overset{\mathcal{MC}}{=}  \sum_X p(X) \sigma^X (S_1 \cup S_2 ^{\hat H(X,S_1,d,k_2)})$ instead of $h(S_1)$, where $S_2 ^{\hat H(X,S_1,d,k_2)}$ is a set of size $k_2$ obtained using the PMIA algorithm (it has been observed that PMIA performs very close to greedy algorithm on practically all relevant datasets, while running orders of magnitude faster \cite{chen2010scalable}). However, it is to be noted that, though PMIA is an efficient algorithm for single phase influence maximization, it is highly undesirable to use it for computation of objective function value alone. On the other hand, GDD is orders of magnitude faster than PMIA. Though we use moderately sized datasets for making Observation~\ref{obs:gdd_approx_greedy}, we could stretch the size of datasets by aiming to observe how well $h(S_1)$ acts as a proxy for $\hat h(S_1)$.
\end{remark}

Owing to the above justifications, we aim to maximize $h(\cdot)$ instead of $f(\cdot)$, for two-phase influence maximization in the rest of this chapter.

\section{Algorithms for Two-Phase Influence Maximization}
\label{sec:algo}

In the previous section, we formulated the objective function for two-phase influence maximization $f(\cdot)$ and studied its properties. In addition to their theoretical relevance, these properties have implications for as to which algorithms are likely to perform well. We present them while describing the algorithms.

%We now present algorithms for two-phase influence maximization.
%The proposed greedy algorithm for two-phase diffusion process (a modified version of the greedy hill-climbing algorithm for influence maximization in social networks) is given in Algorithm 1. 

Let graph $G = (N, E, \mathcal{P})$ be the input directed graph where $N$ is the set of $n$ nodes, $E$ is the set of $m$ edges, and $\mathcal{P}$ is the set of probabilities associated with the edges. Let $k \leq n$ be the total budget (sum of the number of seed nodes for both the phases put together), $k_1 \leq k$ be the budget for the first phase ($k_1=k$ corresponds to single-phase diffusion), $d$ be the time step in which the second phase starts,
% (when the seed nodes for the second phase are selected), 
and $k_2=k-k_1$ be the budget for the second phase.
%
%Note that $k_1=k$ corresponds to single-phase diffusion.
% irrespective of the value of $d$.
Let $\mathcal{T}$ be the time taken for computing the objective function value for a given set.
%
%This evaluates to:
\begin{itemize} %[leftmargin=*]
\item For single phase objective function $\sigma(\cdot)$, $\mathcal{T} = O(m \mathcal{M})$,
where $\mathcal{M}$ is the number of Monte-Carlo simulations.
\item For two-phase objective function $h(\cdot)$, $\mathcal{T} = O(k_2 n \Delta m \mathcal{M}_1 \mathcal{M}_2)$,
%\textbf{\color{red}{Computational complexity: $ O( k_1 k_2 |N|^2 |E| \mathcal{M}^2 )$}}
where $\mathcal{M}_1$ and $\mathcal{M}_2$ are the numbers of Monte-Carlo simulations for first and second phases, respectively, and $\Delta$ is the maximum out-degree in the graph.
\end{itemize}

\subsection{Candidate Algorithms for Seed Selection}

Now we present the algorithms that we consider for seed selection for single phase influence maximization, 
which we later explain how to extend to the two-phase case.
%by considering the respective objective functions.

\subsubsection{Greedy Algorithm}

As described earlier, the greedy (hill-climbing) algorithm for maximizing a function $\mathcal{F}$, selects nodes one at a time, each time choosing a node that provides the largest marginal increase in the value of $\mathcal{F}$, until the budget is exhausted.
Its time complexity is $O(kn\mathcal{T})$.
As noted earlier, though our two-phase objective function is not submodular, we observed that the diminishing returns property was satisfied in most cases; so even though the condition in Theorem~\ref{thm:nemhauser} is not satisfied, the greedy algorithm is likely to perform well.
Further, unlike in the case of single phase influence maximization, we cannot use CELF optimization for the two-phase case owing to its objective function being non-submodular
(though satisfiability of the diminishing returns property in most cases may make it a reasonable approach,
%but may be okay because of near submodularity, revised complexity?}}
we do not use it so as
to preserve performance accuracy of the greedy algorithm for two-phase influence maximization).
%, we do not use CELF optimization in our simulations.

%As the above greedy algorithms are computationally intensive, we explore the following fast algorithm.

\subsubsection{Single/Weighted Discount Heuristics (SD/WD)}

The single discount (SD) heuristic~\cite{chen2009efficient} for a graph $G$ can be described as follows: select the node having the largest number of outgoing edges in $G$, then remove that node and all of its incoming and outgoing edges to obtain a new graph $G'$, again select the node having the largest number of outgoing edges in the new graph $G'$, and continue until the budget is exhausted.
%
%The weighted discount (WD) heuristic is as described in Section~\ref{sec:wd}.
Weighted discount (WD) heuristic is a variant of SD heuristic where, sum of outgoing edge probabilities is considered instead of number of outgoing edges.
The time complexity of these heuristics is $ O( k n \Delta )$.
These heuristics run extremely fast and hence can be used for efficient seed selection for very large networks. 

%Farsighted and myopic algorithms are same for single and weighted discount heuristics since they do not take the objective function under consideration while selecting the seed nodes.

\subsubsection{Generalized Degree Discount (GDD) Heuristic}
The generalized degree discount (GDD) heuristic is as described in Section~\ref{sec:gdd}.

%\textbf{\color{red}{Give complexities of all algos.}}

\subsubsection{PMIA}
\label{sec:pmia}
This heuristic, based on the arborescence structure of influence spread,
is shown to perform close to greedy algorithm and runs orders of magnitude faster \cite{chen2010scalable}.

\subsubsection{Fully Adaptive Cross Entropy Method (FACE)}
\label{sec:ce_method}

It has been shown that the cross entropy (CE) method provides a simple, efficient, and general method for solving combinatorial optimization problems~\cite{de2005tutorial}.
% where the problem is completely known and static, as well as noisy estimation problems where the unknown objective function needs to be estimated using techniques such as discrete event simulation~\cite{de2005tutorial}.
%It can be viewed as a generic and practical tool for solving NP-hard problems.
 %\cite{rubinstein1999cross}.
In our context, the CE method involves an iterative procedure where each iteration consists of two steps, namely,
(a) generating data samples (a vector consisting of a sampled 
%$k_1$ and $d$ and also a 
candidate seed set) according to a specified distribution and
(b) updating the distribution based on the sampled data to produce better samples in the next iteration.
 %For our simulations, 
 We use an adaptive version of the CE method called the {\em fully adaptive cross entropy} (FACE) algorithm~\cite{de2005tutorial}.
 Its time complexity is $O(n\mathcal{T} \mathcal{I})$, where $\mathcal{I}$ is the number of iterations taken for the algorithm to terminate.
 However, the running time can be drastically reduced for single phase diffusion using preprocessing similar to that for greedy algorithm as in \cite{chen2009efficient}.
An added advantage of this algorithm is that it would not only find an optimal seed set, but also implicitly determine how to split the total budget between the two phases and also the delay after which the second phase should be triggered (see Section~\ref{sec:practical}).
%
%\textbf{\color{red}{Scalability for single and two-phase?}}

\subsubsection{Shapley Value based - IC Model (SPIC)}
\label{sec:shapley_method}

We consider a Shapley value based method because it is shown to perform well even when the  objective function is non-submodular~\cite{narayanam2010shapley}.
%Though computing exact Shapley value is \#P-hard in general, it can be approximately computed quite efficiently by considering large number of permutations.
%The problem of influence maximization can be effectively converted into a cooperative game with nodes as the set of players and our desired objective function as the valuation function~\cite{suri2008determining}.
It has been observed that,
in order to obtain the seed nodes after computing Shapley values of the nodes, 
%it has been observed that 
some post-processing is required.
We present a number of post-processing methods in Appendix~\ref{app:ppmethods}.
%
%, because Shapley value fails to capture that the appropriateness of a node to be a part of the seed set depends on other nodes in the set. 
%We explore three methods, namely, (a) outputting nodes having the highest Shapley values until the budget is exhausted, (b) choosing seed nodes one by one using the scheme of eliminating neighbors of already chosen nodes (SPIN algorithm~\cite{suri2008determining}), and (c) choosing seed nodes one by one after discounting values of neighbors of already chosen nodes by a certain factor. 
As the post-processing step under the IC model, we propose the following discounting scheme for SPIC: 
%the value of each node gets discounted whenever its neighbor gets chosen in the seed set. 
\\
(a) Since node $x$ would get directly activated because of node $y$ with probability $p_{yx}$, we discount the value of $x$ by multiplying it with $(1-p_{yx})$ whenever any of its in-neighbors $y$ gets chosen in the seed set. 
%Equivalently, since node $x$ would not get directly activated by any of its neighbors that are chosen in the $top\text{-}k$ set, with probability $\prod_{w \in \mathcal{N}_x \cap top\text{-}k^{(t)}} (1-p_{wx})$, the value of node $x$ is updated using this factor. 
%
\\
(b) As node $z$ influences node $y$ directly with probability $p_{zy}$, it gets a fractional share of $y$'s value (since $z$ would be influencing other nodes indirectly, through $y$).
So when $y$ is chosen in the seed set, we subtract $y$'s share ($p_{zy}\phi_y$ where $\phi_y$ is the value of $y$ during its selection) from the current value of $z$.
If the value becomes negative because of oversubtraction, we assign zero value to it.
%$p_{xy} \, \phi_y^{(t-1)}$.
%
%
\\
%\textbf{\color{red}{Mathematical expressions and negative values.\\}}
A node, not already in the seed set, with the highest value after discounting, is then added to the seed set in a given iteration.
%Hence the $t^{th}$ node chosen is 
%$
%\argmax_{z \in N \setminus \psi^{(t-1)}}{\phi_z^{(t)}}
%$.
In our simulations, we observed that this discounting scheme outperforms the SPIN algorithm (choosing seed nodes one at a time while eliminating neighbors of already chosen nodes~\cite{narayanam2010shapley}).
Assuming $O(n)$ permutations for approximate computation of Shapley value~\cite{narayanam2010shapley} (since exact computation is \#P-hard),
the algorithm's time complexity is approximately 
%$O(n \mathcal{T} + kn \Delta) \approx 
$O(n \mathcal{T})$.
It is to be noted, however, that the SPIN algorithm~\cite{narayanam2010shapley} is not scalable to very large networks even for single phase influence maximization~\cite{chen2010scalable}, and so isn't SPIC.

\subsubsection{Random Sampling and Maximizing (RMax)}

Here, 
%given a budget $k$, 
we sample $O(n)$ number of sets that satisfy the budget constraint, 
%(since $k << n$, it is necessary to consider more samples for larger $k$, so as to account for increasing number of candidate sets), 
and then assign that set as the seed set which gives the maximum function value among the sampled sets.
Note that this method is different from the random set selection method~\cite{kempe2003maximizing}, 
%that is generally used as baseline, 
where only one sample is drawn.
Its time complexity is $O(n\mathcal{T})$.
We consider this method as it is very generic and agnostic to the properties of the objective function, and can be used for optimizing functions with arbitrary or no structure. This method is likely to perform well when the number of samples is sufficiently large.

\subsection{Extension of Algorithms to Two-phase Influence Maximization}

Now we present how the aforementioned single phase influence maximization algorithms can be extended for two-phase influence maximization.
Let $\mathcal{F}_1 (\cdot)$ and $\mathcal{F}_2 (\cdot)$ be objective functions corresponding to 
%seed sets for 
seed selection in
 first and second phases, respectively.
Consider an influence maximization algorithm $\mathbb{A}$.

%
%\vspace{.1cm}
%\hrule
%\vspace{.1cm}
%\noindent\textbf{Algorithm 1} Two-phase general algorithm (IC model)
%%\vspace{.1cm}
%\hrule
%\vspace{.1cm}
%\begin{algorithm}[H]
%\caption{Greedy Algorithm for Two-Phase Diffusion in Social Networks}
\begin{algorithm}[t!]

\KwIn{$G = (N, E, \mathcal{P})$, $k_1$, $k_2$, $d$}
\KwOut{Seed nodes for the first and second phases at time steps 0 and $d$, respectively }

\textbf{First phase:} 

~ Find set of size $k_1$ using algorithm $\mathbb{A}$ for maximizing $\mathcal{F}_1 (\cdot)$ on $G$\;
%, as in \cite{kempe2003maximizing} 

~ Run the diffusion using IC model until time step $d$ \;

\textbf{Second phase:} 

~ On observing $Y$ at time step $d$, construct $G^d$ from $G$ by deleting $\mathcal{A}^Y$\;

~ With $\mathcal{R}^Y$ forming partial seed set, find set of size $k_2$ using $\mathbb{A}$ for maximizing $\mathcal{F}_2 (\cdot)$ on $G^d$\;
%, as before 

~ Continue running the diffusion using IC model until no further nodes can be influenced\;

\caption{{Two-phase general algorithm (IC model)}}
\label{alg:generic_algo}

\end{algorithm}
%\end{algorithm}
%\vspace{-.3cm}
%\hrulefill

We explore two special cases of 
%the two-phase general algorithm described in 
Algorithm \ref{alg:generic_algo}
(the notation can be recalled from Sections~\ref{sec:objectivefn} and \ref{sec:compute}):
%Recall that $h(\cdot)$ (Equation~\ref{eqn:h}) approximates the two-phase objective function, $\sigma(\cdot)$ measures the expected spread, and $\Phi_T(S)=\sigma(T\cup S)$.
%\vspace{-1mm}
\begin{tabbing}
%\textbullet 
1.  Farsighted \= : $\;\mathcal{F}_1 (S_1) = h(S_1) \; , \;\mathcal{F}_2 (S_2) = \sigma(\mathcal{R}^Y \cup S_2)$ \\
%\textbullet
2.   Myopic \> : $\; \mathcal{F}_1 (S_1) = \sigma(S_1) \; , \;\mathcal{F}_2 (S_2) = \sigma(\mathcal{R}^Y \cup S_2)$ 
\end{tabbing}
%\textbf{\color{red}{Tell why consider these functions.\\}}
As explained earlier, the second phase objective function assumes that $\mathcal{R}^Y$ forms a partial seed set, hence the above form of $\mathcal{F}_2(\cdot)$.
The farsighted objective function looks ahead and accounts for the fact that there is going to be a second phase and hence attempts to maximize $h(\cdot)$, while the myopic function does not.
Note that heuristics such as PMIA, GDD, WD, SD do not consider the actual objective function for seed selection, and so the myopic and farsighted algorithms are the same for these heuristics.

We now formally prove the effectiveness of two-phase diffusion for influence maximization.
%is indeed more effective than single phase diffusion under optimal seed selection.

\begin{theorem}
For any given values of $k_1$ and $k_2$, the expected influence achieved using optimal two-phase algorithm is at least as much as that achieved using optimal single phase one. 
%Furthermore, the expected influence achieved using optimum farsighted algorithm is at least as much as that achieved using optimum myopic algorithm.
\end{theorem}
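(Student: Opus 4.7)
The plan is a short coupling-style argument: any single-phase seed set of size $k_1 + k_2$ can be mimicked by a particular (deterministic) two-phase strategy, so the value of the optimal two-phase strategy cannot fall below that of the optimal single-phase strategy.

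First I would fix an optimal single-phase seed set $S^{\star}$ of size $k := k_1 + k_2$, so that $\sigma(S^{\star}) = \max_{|S| \le k}\sigma(S)$. Choose an arbitrary $k_1$-subset $S_1 \subseteq S^{\star}$ and set $S_2^{\mathrm{fix}} := S^{\star} \setminus S_1$; by construction $|S_2^{\mathrm{fix}}| = k_2$, $S_2^{\mathrm{fix}} \subseteq N \setminus S_1$, and $S_1 \cup S_2^{\mathrm{fix}} = S^{\star}$. The key identity comes from the live-graph formulation of the IC model: in any fixed live graph $X$, the set of eventually activated nodes is precisely the set reachable from the union of all seeds, irrespective of the time at which individual seeds are activated (seeds injected later do not shrink reachability in the already-sampled $X$). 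Hence
\[
\sigma^X\!\bigl(S_1 \cup S_2^{\mathrm{fix}}\bigr) \;=\; \sigma^X(S^{\star}) \qquad \text{for every } X,
\]
and averaging over $X$ gives $\sum_X p(X)\,\sigma^X(S_1\cup S_2^{\mathrm{fix}}) = \sigma(S^{\star})$.

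Now I would invoke the optimality of $S_2^{O(X,S_1,d,k_2)}$ established in the discussion leading to \eqref{eqn:basic}: since $S_2^{\mathrm{fix}}$ is feasible for the second phase under every observation $Y$ (it is a fixed subset of $N\setminus S_1$ of size $k_2$, hence admissible as some $S_2'(Y)$), that inequality yields
\[
f(S_1) \;=\; \sum_X p(X)\,\sigma^X\!\bigl(S_1 \cup S_2^{O(X,S_1,d,k_2)}\bigr) \;\ge\; \sum_X p(X)\,\sigma^X\!\bigl(S_1\cup S_2^{\mathrm{fix}}\bigr) \;=\; \sigma(S^{\star}).
\]
Maximizing over feasible $S_1$ on the left then gives $\max_{|S_1|\le k_1} f(S_1) \ge \sigma(S^{\star})$, which is the claim.

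I do not anticipate a serious obstacle. The only non-trivial ingredient is the identity $\sigma^X(S_1 \cup S_2^{\mathrm{fix}}) = \sigma^X(S^{\star})$, which is essentially the time-invariance of reachability once the edges of $X$ are fixed; the rest is a direct application of the already-derived optimality of the second-phase seed set, so no new combinatorial or probabilistic machinery is required.
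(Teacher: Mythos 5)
Your proposal is correct and follows essentially the same route as the paper's proof: partition the optimal single-phase seed set $S^{\star}$ into a first-phase part $S_1$ of size $k_1$ and a fixed second-phase part of size $k_2$, apply the optimality of $S_2^{O(X,S_1,d,k_2)}$ with that fixed set as the feasible $S_2'$, and conclude $\max_{S_1} f(S_1) \ge f(S_1) \ge \sigma(S^{\star})$. The only cosmetic difference is that you spell out the time-invariance of reachability in a fixed live graph, which the paper leaves implicit in the identity $\sigma^X(S_1^*\cup S_2^*)=\sigma^X(S^*)$.
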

\begin{proof}
%\textbf{\color{red}{$f$ is function of $k_2$ and $d$ as well.\\}}
Let $S^*$ be the optimal seed set of cardinality $k=k_1+k_2$ selected in single phase diffusion. Let sets $S_1^*$ and $S_2^*$ be such that $|S_1|=k_1$, $|S_2|=k_2$, $S^*=S_1^* \cup S_2^*$, and $S_1^* \cap S_2^* =\emptyset$. 
Now assuming 
%$k_2 = k-|S_1^*|$ and 
any $d$, it is clear from the optimality of $S_2 ^{O(X,S_1^*,d,k_2)}$ (see derivation of $f(\cdot)$ preceding Equation~(\ref{eqn:f})) that,
\begin{small}
\begin{align*}
\sum_X p(X) \sigma^X (S_1^* \cup S_2 ^{O(X,S_1^*,d,k_2)}) \geq \sum_X p(X) \sigma^X (S_1^* \cup S_2^* ) 
\end{align*}
\end{small}
Note that 
%the left hand side has a fixed $S_1^*$ which is a subset of $S^*$, while $S_2 ^{O(X,S_1,d,k_2)}$ is chosen based on $Y$ observed at time step $d$ (or equivalently, based on $X,S_1^*,d$). So this is the expected influence spread using myopic algorithm.
%
%Furthermore, 
the left hand side is $f(S_1^*)$ (Equation~(\ref{eqn:f})) and right hand side is $\sigma(S^*)$. So we have,
\begin{small}
\begin{align*}
\max_{S_1} f(S_1) \geq f(S_1^*) \geq \sigma(S^*)
\end{align*}
\end{small}
The leftmost
%, middle, 
and rightmost expressions are the expected spreads using 
 two-phase
% farsighted two-phase, myopic two-phase, 
and single phase optimal algorithms, respectively, hence the result.
Note that this holds for any $d$.
% and $k_2 = k-|S_1|$. 
\end{proof}

%\newpage
\section{A Study to Demonstrate Efficacy of Two-phase Diffusion}
\label{sec:simulations}

In this section, we study how much improvement one can expect by diffusing information in two phases over a social network, even with the simplest of approaches. To start with, we assume that $k_1,k_2,d$ are known and our objective is to find the seed sets for the two phases (we study the problem of optimizing over these parameters in Section~\ref{sec:practical}). 
As a simple and na\"ive first approach, we consider an equal budget split between the two phases, that is, $k_1=k_2=\frac{k}{2}$. 
Furthermore, we consider $d=D$, where $D$ is the length of the longest path in the network, so that by time step $D$, the first phase would have completed its diffusion. In practice, $D$ could be the maximum delay that we are ready to incur in absence of any temporal constraints.
%\textbf{\color{red}{Why D is good intuitively and then theorem following.\\}}
Intuitively, it is clear that one should wait for as long as possible before selecting the seed nodes for second phase, as it would give a larger observation and a reduced search space. We now prove this formally.

\begin{lemma}
\label{lem:high_d_better}
For any given values of $k_1$ and $k_2$, the number of nodes influenced using an optimal two-phase influence maximization algorithm is a non-decreasing function of $d$.
\end{lemma}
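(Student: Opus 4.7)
The plan is to establish the pointwise inequality $f_{d'}(S_1) \geq f_d(S_1)$ for every seed set $S_1$ of cardinality $k_1$ and every pair $d' \geq d$, where I write $f_d(S_1)$ for the two-phase objective of Equation~(\ref{eqn:f}) with its dependence on $d$ made explicit. Maximizing over $S_1$ on both sides then yields the lemma, since this preserves ordering.

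The central observation I will rely on is that the partial observation at a later time refines the partial observation at an earlier time. Under the paper's standing assumption that activation times are recorded (see Section~\ref{sec:motiv_mpid}, where a node's buying time is deduced via its social-networking login at the product-registration site), an observer who sees the diffusion out to time $d'$ knows the exact activation step of every node activated by then. Because activated nodes stay activated, both the set of already-activated and the set of recently-activated nodes at time $d$ can be read off from the data available at time $d'$. I will make this precise by exhibiting a deterministic map $\psi$ with $Y_d(X,S_1) = \psi(Y_{d'}(X,S_1))$ for every live graph $X$.

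With this refinement in hand, the remainder of the argument is a short coupling step. Any seed-selection policy that is a function of $Y_d$ can be mimicked at time $d'$ by precomposing with $\psi$. In particular, the policy $\pi(Y_{d'}) := S_2^{O(\psi(Y_{d'}),k_2)}$ is feasible at delay $d'$ and produces exactly the spread that the optimal delay-$d$ policy would, because $\sigma^X(S_1 \cup S_2)$ depends only on $X$, $S_1$, and $S_2$, and not on the step at which $S_2$ is injected (once injected, the IC diffusion simply runs to completion). Since $S_2^{O(X,S_1,d',k_2)}$ is by definition an optimal response to the finer observation $Y_{d'}$, it can only improve upon $\pi$, giving
\begin{align*}
f_{d'}(S_1) &= \sum_X p(X)\,\sigma^X\bigl(S_1 \cup S_2^{O(X,S_1,d',k_2)}\bigr) \\
&\geq \sum_X p(X)\,\sigma^X\bigl(S_1 \cup \pi(Y_{d'}(X,S_1))\bigr) \\
&= \sum_X p(X)\,\sigma^X\bigl(S_1 \cup S_2^{O(X,S_1,d,k_2)}\bigr) \;=\; f_d(S_1).
\end{align*}

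The one step that deserves real care --- and which I expect to be the main obstacle --- is the refinement claim $Y_d = \psi(Y_{d'})$. I will need to argue carefully that the decomposition $(\mathcal{A}^{Y_d},\mathcal{R}^{Y_d})$ is genuinely recoverable from the observation at time $d'$, which hinges on activation-time observability; if one weakened this to observing only the currently-activated set at time $d'$, distinguishing ``already'' from ``recently'' activated at the earlier step $d$ would in general be impossible, so the lemma implicitly inherits the observability assumption already adopted in Section~\ref{sec:motiv_mpid}.
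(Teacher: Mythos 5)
Your proposal is correct and is essentially the paper's own argument: both proofs rest on the single inequality that the optimal second-phase seed set computed from the finer observation at time $d'$ weakly dominates the (feasible) choice obtained by responding only to the coarser observation at time $d$. The paper phrases this as a nested decomposition of the expectation over partial observations $Y_i$ at time $d$ and their refinements $Y_{ij}$ at time $d^+$, which implicitly encodes the refinement map $\psi$ you make explicit; your added care about recoverability of $(\mathcal{A}^{Y_d},\mathcal{R}^{Y_d})$ from the later observation is a fair point but is already covered by the paper's assumption that activation times are observed.
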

\begin{proof}
Starting from a given first phase seed set $S_1$, let 
%the number of partial observations at time step $d$ be $\eta_{S}{(d)}$.
%Let $\beta_{S}{(d)} = \{Y_i\}_{i=1}^{\eta_{S}{(d)}}$ be the set of all 
$Y_i$'s be the partial observations at time step $d$.
%, given that $S$ is the seed set for first phase.
Also, let $Y_{ij}$'s be the partial observations at time step $d^+ > d$ resulting from a given $Y_i$ at time step $d$. 
%
%For any seed set for the first phase $S_1$ such that $|S_1|\leq k_1$, let the partially observable live graph at time step $d$ corresponding to a live graph $X$ be $(X,S_1,d)$. 
%For any $X$, the optimal second phase seed set is selected based on $(X,S_1,d)$; let $\chi^{(X,S_1,d)}$ be the set of possible live graphs emerging from $(X,S_1,d)$, that is, $(X,S_1,d) = (X_i,S_1,d)$ for all $X_i \in \chi^{(X,S_1,d)}$.
%Now as time progresses, the set of edges 
%
%That is, for any $d^+ > d$, $\chi^{(X,S_1,d^+)} \subseteq \chi^{(X,S_1,d)}$
By enumerating the partial observations at time step $d$, the expected number of nodes influenced at the end of diffusion, as given in Equation~(\ref{eqn:basic}), can be written~as
\begin{align*}
%\;\;\;\;\;
%\sum_{Y} \left\{ p(Y) \sum_{X} \left\{ p(X|Y) \sigma^X (S_1 \cup S_2 ^{O(Y,k_2)}) \right\} \right\}
%& f(S_1,d,k_2)
%\\&=
%\sum_{Y} p(Y) \sum_{X} p(X|Y) \sigma^X (S_1 \cup S_2 ^{O(Y,k_2)}) 
%\\&=
&
\sum_{i} p(Y_i) \sum_{X} p(X|Y_i) \sigma^X (S_1 \cup S_2 ^{O(Y_i,k_2)}) 
\\&=
\sum_i \sum_{j} p(Y_{ij}) \sum_{X} p(X|Y_{ij}) \sigma^X (S_1 \cup S_2 ^{O(Y_i,k_2)}) 
\\&\leq
\sum_i \sum_{j} p(Y_{ij}) \sum_{X} p(X|Y_{ij}) \sigma^X (S_1 \cup S_2 ^{O(Y_{ij},k_2)}) 
%\\&=
%f(S_1,d^+,k_2)
%\\&=
%\sum_{X \in \chi^{(X,S_1,d)}} p(X) \sigma^X (S_1 \cup S_2 ^{O(X,S_1,d,k_2)})  
%\\&=
%\sum_{X} p(X) \sigma^X (S_1 \cup S_2 ^{O(X,S_1,d,k_2)})  
\end{align*}
which is the expected number of nodes influenced at the end of diffusion, if the second phase starts at time step $d^+ > d$.
The last inequality results from the optimality of $S_2 ^{O(Y_{ij},k_2)}$ for partial observation $Y_{ij}$.
%The first inequality results from optimality of $T_2 ^{O(X,T_1,d,k_2)}$  and the second one from monotonicity of $\sigma ^X (\cdot)$.
%So $f(\cdot)$ is non-negative and monotone increasing as it is a non-negative linear combination of $h^X (\cdot)$.
\end{proof}

The following result now follows directly.

\begin{theorem}
\label{thm:D_is_opt}
For any given values of $k_1$ and $k_2$, the number of nodes influenced using an optimal two-phase influence maximization algorithm is maximized when $d=D$.
\end{theorem}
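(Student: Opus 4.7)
The plan is to derive the theorem as a quick corollary of Lemma~\ref{lem:high_d_better}, which already established that the optimal two-phase spread is non-decreasing in $d$. What remains is to argue that $d = D$ is large enough: no strictly larger delay can yield additional informational benefit, and hence the supremum over $d$ is attained at $D$.

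First I would invoke Lemma~\ref{lem:high_d_better} to assert that, for fixed $k_1,k_2$ (and a fixed optimal first-phase seed set $S_1$), the expected number of influenced nodes is monotone non-decreasing in $d$. Thus it suffices to show that no $d' > D$ improves upon $d = D$. The key observation is the following: under the IC model, a node activated at time $t$ has exactly one opportunity to activate each inactive neighbor, at time $t+1$. Since $D$ is defined as the length of the longest directed path in $G$, any diffusion cascade initiated by $S_1$ at time $0$ terminates by time $D$. Consequently, for every realization of the first-phase randomness (equivalently, every live graph $X$), the set of activated nodes stabilizes by time $D$, and so the partial observation $Y$ observed at any time $d' \geq D$ coincides with the observation $Y$ at time $D$ — in particular $\mathcal{A}^Y \cup \mathcal{R}^Y$ is the same, although at $d' > D$ the set $\mathcal{R}^Y$ may be empty (which only means the second-phase objective $\sigma(\mathcal{R}^Y \cup S_2)$ drops some ``free'' seeds and is therefore not larger).

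With this in hand, the optimal second-phase seed set $S_2^{O(Y,k_2)}$ and the final expected spread depend on $d$ only through $Y$, and $Y$ is determined (up to inclusion of already-activated nodes in $\mathcal{R}^Y$) once $d \geq D$. Hence the objective $\mathbb{F}(S_1, d, k_2)$ is constant on $d \geq D$, while Lemma~\ref{lem:high_d_better} gives $\mathbb{F}(S_1, d, k_2) \leq \mathbb{F}(S_1, D, k_2)$ for every $d < D$. Taking the maximum over $S_1$ on both sides (which is legitimate because the inequality holds pointwise in $S_1$) yields the theorem.

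The only subtle point — and the one I would state carefully in the proof — is the treatment of the distinction between ``already activated'' and ``recently activated'' nodes when $d' > D$. Since diffusion has died out by time $D$, at any later time step there are no recently activated nodes, so $\mathcal{R}^Y = \emptyset$. One must verify that the optimal two-phase value is unaffected by reclassifying nodes from $\mathcal{R}^Y$ to $\mathcal{A}^Y$: this is immediate because $\mathcal{R}^Y$ consists of nodes already influenced, and the second-phase optimization $\max_{S_2} \sigma(\mathcal{R}^Y \cup S_2)$ can only be weakly helped by having additional guaranteed-influenced nodes, so the value at $d = D$ is at least the value at any $d' > D$. Combined with Lemma~\ref{lem:high_d_better}, this pins the maximum exactly at $d = D$.
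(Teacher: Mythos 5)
Your proposal is correct and follows essentially the same route as the paper, which derives Theorem~\ref{thm:D_is_opt} directly from the monotonicity in $d$ established by Lemma~\ref{lem:high_d_better} together with the fact that diffusion terminates by time $D$. Your additional care about reclassifying $\mathcal{R}^Y$ into $\mathcal{A}^Y$ for $d' > D$ is a harmless elaboration: under the paper's formulation the objective is written as $\sigma^X(S_1 \cup S_2^{O})$, which depends on $d$ only through the partial observation $Y$, and $Y$ is unchanged for all $d \geq D$.
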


\begin{remark}
Determining $D$ exactly may be infeasible in practice. For instance, checking whether the first phase has completed its diffusion requires polling at every time step. Also, finding the length of the longest path in the network is known to be an NP-hard problem. 
However, for all practical purposes, $D$ can be approximated by a large enough value based on the network in consideration.
%it can be observed for most real-world networks that, starting from a decent seed set (even the simple degree heuristic would do) with a large enough value of $k_1$, most of the diffusion completes within the first 6 time steps and almost all within the first 10 time steps. 
%For our simulations, we poll after every time step to check if the first phase diffusion has completed.
\end{remark}

\subsection{Simulation Setup}

For computing the objective function value and evaluating performance using single phase diffusion, we ran $10^4$ Monte-Carlo iterations (standard in the literature). To set a balance between running time and variance, we ran $10^3$ Monte-Carlo iterations for each of the phases in two-phase diffusion (equivalent to $10^6$ live graphs); the observed variance was negligible. 

As mentioned earlier, for transforming an undirected and unweighted network (dataset) into a directed and weighted network for studying the diffusion process, we consider two popular, well-accepted special cases of the IC model, namely, the {weighted cascade (WC) model} and the {trivalency (TV) model}.
We first conduct simulations on the Les Miserables (LM) dataset  \cite{knuth1993stanford} consisting of 77 nodes and 508 directed edges in order to study the performances of computationally intensive farsighted algorithms for two-phase influence maximization.
% using all the algorithms, and then proceeded to experiment on a larger dataset using fast heuristics.
%
%
%
%
For studying two-phase diffusion on a larger dataset, we consider an academic collaboration network obtained from co-authorships in the ``High Energy Physics - Theory'' papers published on the e-print arXiv from 1991 to 2003. It contains 15,233 nodes and 62,774 directed edges, and is popularly denoted as NetHEPT. This network exhibits many structural features of large-scale social networks  and is widely used for experimental justifications, for example, in \cite{kempe2003maximizing, chen2009efficient, chen2010scalable}.
We also conducted experiments on a smaller collaboration network Hep-Th having 7,610 nodes and 31,502 directed edges \cite{newman2001structure}. As the results obtained were very similar, we present the results for only the NetHEPT dataset.
%
%We normalize the weights derived in \cite{newman2001scientific}, thus arriving at the following expression for the influence weight of node $u$ on node $v$.
%
%\begin{displaymath}
%p_{uv} = \frac{ \sum_r \frac{\Delta_v^r \Delta_u^r}{\mu_r-1} }{ \sum_r \Delta_v^r }
%\end{displaymath}
%%
%where $\mu_r$ is the number of coauthors of paper $r$, and $\Delta_v^r=1$ if node $v$ is a coauthor of paper $r$, else it is $0$.
%Note that single author papers need to be excluded for the above expression to be well-defined.
%Using the weighted cascade model and excluding single author papers, the network has 7,610 nodes and 31,502 directed edges.
%
For two-phase diffusion, as a na\"ive first approach as mentioned earlier, we consider equal budget split, $k_1=k_2=\frac{k}{2}$, and $d=D$. 

\begin{remark}
In the two-phase influence maximization problem, seed selection is not computationally intensive, but seed evaluation is. At the end of the first phase, only one $Y$ is possible in practice; however, for the purpose of evaluation as part of simulations, we need to consider $\mathcal{M}_1$ (Monte-Carlo iterations for first phase) number of $Y$'s. This severely restricts the size of network under study. We believe the NetHEPT dataset suffices for our study owing to its social networks-like features and its wide usage for experimentation in the literature. 
%since it exhibits many structural features of large-scale social networks \cite{newman2001structure} and is widely used for experimental justifications, for example, in \cite{kempe2003maximizing, chen2009efficient, chen2010scalable}.
\end{remark}

\begin{remark}
%\textbf{\color{red}{Describe in detail.\\}}
The simulations can also be run using a single level of Monte-Carlo iterations instead of two levels as described above.
For instance, instead of deciding the diffusion over each edge dynamically as in the above approach, one can decide an entire live graph in advance so that there is no requirement of separate Monte-Carlo iterations for the two phases.
However, the number of Monte-Carlo iterations (live graphs) required to compute the value with same variance as the above approach would be $\Theta(\mathcal{M}_1 \mathcal{M}_2)$.
\end{remark}

We now list the parameter values for the considered algorithms, specifically for the LM dataset.
For the detailed FACE algorithm, the reader is referred to \cite{de2005tutorial}.
We initialize the method with distribution $(\frac{\gamma}{n},\ldots,\frac{\gamma}{n})$, that is, each node has a probability of $\frac{\gamma}{n}$ of getting selected in any sample set in the first iteration (where $\gamma$ is the budget which is $k,k_1,k_2$ for single phase diffusion, first phase, and second phase, respectively).
In any iteration, the number of samples (satisfying budget constraint) is bounded by
 $\mathcal{N}_{\text{min}}=n$ and
$\mathcal{N}_{\text{max}}=20n$, the number of elite samples (samples that are deemed to have good enough function value) is
$\mathcal{N}_{\text{elite}}=\lceil \frac{n}{4} \rceil$.
%$n$ and $2n$, and the number of elite samples is $N_{\text{elite}}=\lceil \frac{n}{4} \rceil$.
%
We use a weighted update rule for the distribution where, in any given iteration, the weight of any elite sample is proportional to its function value. %~\cite{de2005tutorial}.
The smoothing factor (telling how much weight is to be given to the current iteration as against the previous iterations) that we consider is $\alpha = 0.6$.
%$l = ceil(n/20)$,
%$d = ceil(n/20)$
%
In our simulations, we observed that in most cases, the FACE algorithm converged in 5 iterations (extending till 7 at times) by giving a reliable solution ({\em reliable} refers to the case wherein the method deduces that it has successfully solved the problem). Also, the total number of samples drawn in any iteration was $n$ in almost all cases (it did not exceed $2n$ in any iteration).
That is, the total number of samples over all iterations was approximately $5n$.
So for direct comparison with SPIC and RMax, we consider $5n$ permutations in order to compute the approximate Shapley values of all the nodes~\cite{narayanam2010shapley},
and $5n$ sampled sets for RMax.
%\textbf{\color{red}{Why 5n?\\}}
%and output a set that maximizes the function value among the sampled sets.

%\setlength{\tabcolsep}{.5em}
\begin{table}[t!]
\centering
%\vspace{-4mm}
%\begin{footnotesize}
%\hspace{-2mm}
\begin{tabular}{c|c|c|c|c|c|c}
\hline \hline
\T \B
\multirow{4}{*}{\hspace{-4mm} Method} & \multicolumn{3}{c|}{Expected spread} &  \multicolumn{3}{c}{Running time for} \\
\T \B
	& \multicolumn{3}{c|}{} & \multicolumn{3}{c}{seed selection (seconds)} \\
\cline{2-7}
\T \B
	& Single & Two- & \%  & Single & Myopic & Farsighted  \\
\T \B
					& phase & phase & gain & phase  & 2-phase & 2-phase \\
\hline
\T \B
 FACE 	& 46.2 & 50.7 &  9.7 & 15 & 29  & 1209 \\
\T \B
  SPIC 	& 45.9 & 50.4 &  9.8 & 16 & 31   & 1272 \\
\T \B
  Greedy & 46.2 & 49.7 &  7.6 & 10  &  11 &  390  \\
\T \B
  PMIA 	& 46.2 & 49.4 &  6.9 & 0.2 & 0.2 & 0.2  \\
\T \B
  GDD 	& 45.8 & 49.3 &  7.6 & 0.002 & 0.002 & 0.002 \\
\T \B
  WD 	& 45.7 & 48.7 &  6.6 & 0.002 & 0.002 & 0.002 \\
\T \B
  SD	& 40.5 & 44.5 &  9.9 & 0.002 & 0.002 & 0.002  \\
\T \B
  RMax 	& 35.9 & 46.6 &  29.8 & 6 & 12  & 751 \\
\hline \hline
\end{tabular}
%\end{footnotesize}
%\vspace{-1mm}
\caption{{Gain of two-phase diffusion 
over single phase} one on LM dataset (WC model) with ${k = 6, k_1 = k_2 = 3, d = D}$}
\label{tab:percent_improve}
%\vspace{-5mm}
\end{table}

\subsection{Simulation Results}

Throughout the rest of this chapter, we present results for a few representative settings. We have conducted simulations over a large number of settings and the results presented here are very general in nature.

\begin{observation}
FACE algorithm is very effective for single phase influence maximization, performing at par with greedy and PMIA or better for most values of $k$. SPIC also performs almost at par with them. To justify the effectiveness of two-phase diffusion process, it was necessary to consider these high performing single phase algorithms. Furthermore, GDD heuristic performs very closely to these algorithms, while taking orders of magnitude less time.
\end{observation}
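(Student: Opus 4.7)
The statement is empirical rather than a theorem, so the proof plan is really an experimental validation plan: we need simulation evidence that (i) FACE, SPIC, Greedy, and PMIA achieve comparable expected spread on single phase influence maximization across a range of budgets $k$, and (ii) GDD matches them in spread while running orders of magnitude faster. My plan is to set up a controlled simulation study on the LM dataset using both the WC and TV special cases of the IC model, run each algorithm with the parameter settings already described (e.g.\ $5n$ permutations for SPIC, $5n$ samples for RMax, the FACE hyperparameters $\mathcal{N}_{\text{min}}, \mathcal{N}_{\text{max}}, \mathcal{N}_{\text{elite}}, \alpha$ specified in the previous section), evaluate each selected seed set via $10^4$ Monte-Carlo simulations of the IC diffusion, and record both expected spread $\sigma(S)$ and wall-clock seed selection time.

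First I would generate, for each value of $k \in \{2,4,6,\ldots,k_{\max}\}$, a table analogous to Table~\ref{tab:percent_improve} but restricted to the single phase columns, containing $\sigma(S)$ and running time for FACE, SPIC, Greedy, PMIA, GDD, WD, SD, and RMax. The qualitative claim ``at par or better'' is established by showing that the spreads of FACE, SPIC, Greedy, and PMIA lie within the Monte-Carlo confidence interval of each other for most $k$; I would report standard deviations across simulation replicates to confirm statistical indistinguishability. The observation that FACE is sometimes strictly better is supported by pointing to specific $k$ values where its expected spread exceeds the others by more than one standard deviation.

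For the GDD claim, I would compare its spread against the top group and verify that the relative gap is small (within a few percent), while its recorded running time is several orders of magnitude smaller — the representative row in Table~\ref{tab:percent_improve} already shows this (GDD at $0.002$s versus Greedy at $10$s and FACE at $15$s). I would then repeat the comparison on the larger NetHEPT dataset to argue that the ranking of algorithms is robust to network size, noting that SPIC and Greedy will necessarily be dropped at that scale due to their prohibitive running times, which itself reinforces why GDD is the preferred workhorse.

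The main obstacle will not be any single calculation but rather controlling the variance of the Monte-Carlo spread estimates so that small differences between FACE, SPIC, Greedy, and PMIA are genuinely not statistically distinguishable — this requires sufficiently many Monte-Carlo iterations (the chosen $10^4$) and, ideally, common random numbers across algorithms so that comparisons are paired. A secondary concern is reporting runtimes fairly: FACE and SPIC are randomized and their timings can vary across runs, so I would report averages over multiple independent executions and ensure all algorithms are implemented in the same environment. Once the tables and plots are consistent across budgets, models, and datasets, the empirical observation follows directly from inspection.
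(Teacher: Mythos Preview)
Your proposal is correct and takes essentially the same approach as the paper: the observation is supported purely by the single-phase columns of the simulation results (Table~\ref{tab:percent_improve} for $k=6$, with analogous unpresented runs for other $k$), comparing expected spread and seed-selection time across algorithms on the LM dataset under WC and TV. Your plan is in fact more methodologically careful than the paper's own presentation (explicit confidence intervals, paired comparisons via common random numbers, reporting across a range of $k$), whereas the paper simply states the observation after presenting one representative table; but the underlying validation strategy is identical.
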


Table~\ref{tab:percent_improve} shows the improvement of the na\"ive two-phase diffusion over single phase one for the considered algorithms on the LM dataset (WC model).
The performances of myopic and their corresponding farsighted algorithms were observed to be almost same (the maximum difference in the expected spread was observed to be 0.2 on the scale of 77 nodes),
%, which can be attributed to the Monte-Carlo simulation error). 
so they share a common column for the expected spread.
These results, in conjunction with other results for $k_1\neq k_2$ and $d<D$ (which are not presented here), show that the myopic algorithms perform at par with the farsighted ones, while running orders of magnitude faster.
%\textbf{\color{red}{Intuition why myopic performs well.\\}}
A possible reason for the excellent performance of myopic algorithms is that, the first set of $k_1$ nodes selected by most influence maximization algorithms, are generally the ones which would give a large enough observation and a well refined search space for the second phase seed set. 
Also, as mentioned earlier, there is no distinction between myopic and farsighted algorithms for heuristics such as PMIA, GDD, WD, SD, that do not consider the actual objective function for seed selection, so their running times also are the same.
The results obtained using TV model were qualitatively similar with a very slight dip in the \% gain with respect to the expected spread; the running times were significantly lower for most algorithms owing to lower edge probabilities in TV model as compared to WC model (in the case of LM dataset) and so the diffusion/simulation would terminate faster.

The results for NetHEPT are presented in Table~\ref{tab:impwithk}. For the purpose of this section, we need to only look at the first rows ($k_1=k_2$) of both WC and TV models.
%$d=20$. 

\begin{observation}
Though it is clear that two-phase diffusion strictly performs better than single phase diffusion, the amount of improvement depends on the value of $k$ as well as the diffusion model under consideration (see Table~\ref{tab:impwithk}). 
\end{observation}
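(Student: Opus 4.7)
The statement is empirical rather than a formal theorem, so my plan is to substantiate it through a systematic experimental study on the NetHEPT dataset, reporting the percentage gain of two-phase diffusion over single-phase diffusion as a function of both the total budget $k$ and the diffusion model (weighted cascade versus trivalency). First I would fix the na\"ive two-phase setup already adopted in this section ($k_1 = k_2 = k/2$, $d = D$) and sweep $k$ across a representative range (for example, $k \in \{10, 20, 30, 40, 50, 60\}$), running each influence-maximization algorithm (FACE, SPIC, Greedy, PMIA, GDD, WD, SD, RMax) under both WC and TV edge-weight assignments. For each $(k, \text{model}, \text{algorithm})$ triple I would measure the expected spread $\sigma(\cdot)$ for single-phase and the expected spread $f(\cdot)$ for two-phase (via the same Monte-Carlo protocol used earlier: $10^3$ iterations per phase), then tabulate the absolute and percentage gains.

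The key step is organizing the results to visibly isolate the two dependencies. I would populate a single table (Table~\ref{tab:impwithk} in the paper) whose rows correspond to WC and TV models and whose columns correspond to different values of $k$, reporting the gain separately for each representative algorithm. The intuition driving the expected pattern is twofold: (a) as $k$ grows, the marginal benefit of adaptive second-phase selection first rises (because more nodes can be re-targeted based on the observed partial diffusion $Y$), but can eventually flatten or decrease once the first phase alone already saturates a large connected region, leaving fewer high-value candidates for the second phase; and (b) the TV model has much smaller edge probabilities than the WC model on NetHEPT, so partial observations $Y$ reveal less information at time $d$, compressing the gap between $f(\cdot)$ and $\sigma(\cdot)$. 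These structural reasons give an a priori justification for why the gain should depend nontrivially on both $k$ and the model.

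The main obstacle I anticipate is distinguishing genuine structural trends from Monte-Carlo noise, particularly for TV where edge probabilities from $\{0.001, 0.01, 0.1\}$ yield small absolute spreads whose differences can be swamped by sampling variance. To counter this I would (i) fix the random-seed stream across single-phase and two-phase runs so that the gain is estimated as a paired difference, reducing variance; (ii) increase $\mathcal{M}_1, \mathcal{M}_2$ selectively for TV until the standard error of the estimated gain is an order of magnitude smaller than the reported gain; and (iii) repeat each configuration several times and report the mean. A secondary obstacle is computational cost, since the two-phase evaluation is $\Theta(\mathcal{M}_1 \mathcal{M}_2)$ per seed set; I would restrict the computationally heavy farsighted variants (FACE, SPIC, Greedy) to the smaller $k$ values and rely on the much faster heuristics (PMIA, GDD, WD, SD) to trace the trend across the full range of $k$, using the consistency between the two classes of algorithms (already observed in Table~\ref{tab:percent_improve}) as cross-validation. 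The resulting table should then directly support the observation, by exhibiting non-monotone and model-dependent gain curves.
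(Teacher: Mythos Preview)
Your methodology is essentially the same as the paper's: report percentage improvement of two-phase over single-phase on NetHEPT across a grid of $k$ values and both WC and TV models, and let the variation in those numbers establish the dependence. The paper's Table~\ref{tab:impwithk} is simpler than what you propose---it reports a single \% improvement per $(k,\text{model})$ pair (not broken down by algorithm), and uses $k \in \{50,100,200,300\}$ rather than your smaller range.

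One substantive point: your a priori intuition about the direction of the model effect is backwards. You predict that TV, with its smaller edge probabilities, yields less informative observations $Y$ and hence a smaller gap between $f(\cdot)$ and $\sigma(\cdot)$. The paper's data show the opposite: TV gains ($\approx$5--6\%) consistently exceed WC gains ($\approx$2--4.5\%) across all $k$. Your experimental protocol would still correctly reveal this and therefore still substantiate the observation as stated (which only claims dependence, not direction), but the mechanistic story you sketch does not match the empirical outcome. Your variance-control suggestions (paired random streams, increased Monte-Carlo budget for TV) are sensible refinements the paper does not discuss.
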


%Thus we conclude that the improvement of the two-phase diffusion over the single phase one is about 5\%. 
Note that the amount of improvement is significant, especially when the company is concerned with monetary profits or a long-term customer base.
%Furthermore, the myopic algorithms perform at par with the farsighted ones, while taking a lot less time.
%In the next section, 
We now attempt to further improve what we can get by using the two-phase diffusion.

\begin{table}[t!]
\centering
\begin{tabular}{c|c|c|c|c|c}
\hline \hline
\T \B
Model	&	$k \rightarrow$				&	50		&	100		&	200		&	300
\\ \hline \T \B
\multirow{3}{*}{WC}	&	\% Improvement ($k_1=k_2$) &	3.5		&	1.8		&	3.5		&	4.4
\\ \T \B
&	Opt. \% improvement &	4.5		&	2.0		&	4.0		&	4.5
\\ \T \B
	&	Optimizing $k_1$				&	15		&	35		&	70		&	105
\\ \hline \T \B
\multirow{3}{*}{TV}	&	\% improvement ($k_1=k_2$)	&	5.0		&	5.4		&	5.4		&	4.8
\\ \T \B
&	Opt. \% improvement	&	6.0		&	6.0		&	6.0		&	5.0
\\ \T \B
	&	Optimizing $k_1$						&	18		&	35		&	70	&	105
\\ \hline \hline
\end{tabular}
\caption{
\% Improvement of two-phase diffusion over single phase depending on $k$}
\label{tab:impwithk}
%\vspace{-5mm}
\end{table}

\section{Getting the Best out of Two-Phase Diffusion}
\label{sec:practical}

Till now, we assumed $k_1=k_2$ and $d=D$, and we needed to determine the best seed sets (a) of size $k_1$ for first phase and (b) of size $k_2$ for second phase based on the observed diffusion after a delay of $d$ time steps. However, in practical situations, there is also a need to determine (c) an appropriate split of the total budget $k$ into $k_1$ and $k_2$ as well as (d) an appropriate delay $d$ (we have proved that $d=D$ is optimal in absence of temporal constraints, but this may not be the case in their presence). In this section, we address these issues.
Henceforth, we use only farsighted algorithms, as they take the values of $k_2$ and $d$ into account while computing the objective function value.

\subsection{Budget Splitting}

Here we address the problem of splitting the total budget $k$ between the two phases, that is, determining an optimal $k_1$ and hence $k_2$.
Note that when $k_2$ is not fixed, the objective function $\mathbb{F}(S_1,d,k_2)$ is no longer monotone with respect to the first phase seed set $S_1$.
% (since $k_2 \leq k-|S_1|$ and it is optimal to have $k_2 = k-|S_1|$).
For instance, $\mathbb{F}(\{\},d,k-|\{\}|) = \mathbb{F}(\{\},d,k) = \sigma(S^O)$ where $S^O$ is the optimal seed set for single phase,
while for any $|S^\#|=k$, $\mathbb{F}(S^\#,d,k-|S^\#|) = \mathbb{F}(S^\#,d,0) = \sigma(S^\#)$. Unless $S^\#$ is an optimal seed set for single phase, we will have $ \sigma(S^O)> \sigma(S^\#)$ and hence $\mathbb{F}(\{\},d,k-|\{\}|)>\mathbb{F}(S^\#,d,k-|S^\#|)$, even though $\{\} \subset S^\#$.
%\textbf{\color{red}{To explain better?}}
%(However, note that $\{\}$ is not optimal as there exist sets which give higher value of $f$ when taking )

%In RMax method, for every sample, $k_1$ is chosen u.a.r. from $\{1,\ldots,k\}$ and hence a corresponding set is sampled. The output set is the one that maximizes the objective function among all the sampled sets.

In FACE algorithm, there is an implicit way to optimize over $k_1$ and $S_1$ (such that $|S_1|=k_1$) simultaneously by allowing each data sample to consist of a value of $k_1$ sampled from $\{1,\ldots,k\}$, as well as a sampled set $S_1$ of size $k_1$. 
%by allowing it to take values in $\{1,\ldots,k\}$ u.a.r. in the first iteration. So each sample set drawn in the first iteration depends on the corresponding sampled value of $k_1$.
\begin{remark}
%Note that if $k_1$ is chosen u.a.r. (uniformly at random) and $S_1$ is chosen with size $k_1$ such that each node has a probability of being in the set with probability $\frac{k_1}{n}$, then smaller sized sets are likely to give better values of $h(\cdot)$ than larger ones, because most sampled larger sets would not contain `influential enough' nodes which can compete against the influential nodes selected by GDD for the second phase (for computation of $h(\cdot)$). So in the second iteration of FACE algorithm and later, the solution would converge to a very low value of $k_1$ (also verified using simulations). 
%To overcome this problem, 
For faster convergence, in the first iteration, instead of choosing each node $i$ in the set with probability $\frac{k_1}{n}$, we choose it with probability $q_i = \frac{k_1 w_i}{\sum_i w_i}$, where $w_i$ is as in Equation~(\ref{eqn:gdd_value}).
In cases wherein the value of $q_i$ exceeds 1, we distribute the surplus to other nodes with values less than 1, in proportion of their current values. We repeat this until all nodes have values at most 1. This process of distributing the surplus value is to ensure that 
%the expected number of nodes selected, and hence 
the expected size of the sampled set does not drop below $k_1$.
%
%This preprocessing step ensures that there exist larger sets that consist of `influential enough' nodes.
%, each node having probability of \textbf{\color{red}{based on the weighted discount value (why not $\frac{k_1}{n}$ u.a.r.)}}  of getting selected in that sample. 
The rest of the iterations follow as per the standard FACE algorithm. 
\end{remark}
%\textbf{\color{red}{Reread.}}
%
In RMax method, for every sample, $k_1$ is chosen u.a.r. (uniformly at random) from $\{1,\ldots,k\}$ and hence a set $S_1$ of cardinality $k_1$ is sampled. The output set is one that maximizes the objective function among the sampled sets.
As there is no implicit way to optimize over $k_1$ in rest of the algorithms,
% such as SPIC, greedy, PMIA, GDD, WD, and SD, 
we do the following: as we add nodes one by one to construct the set $S_1$, we keep track of the maximum value attained so far, to determine a value maximizing set $S_1$ of size $k_1 \leq k$. 
%
%So the output seed set for the first phase is
%$
%S_1 = \argmax_{\psi^{(k_1)}: 1 \leq k_1 \leq k}{h(\psi^{(k_1)})}
%$
%where $\psi^{(k_1)}$ is the set of top $k_1$ seed nodes.

  \begin{figure}[t!]
 \begin{minipage}{.5\textwidth}
 \centering
 \iftoggle{clr}{
    \includegraphics[scale=.6]{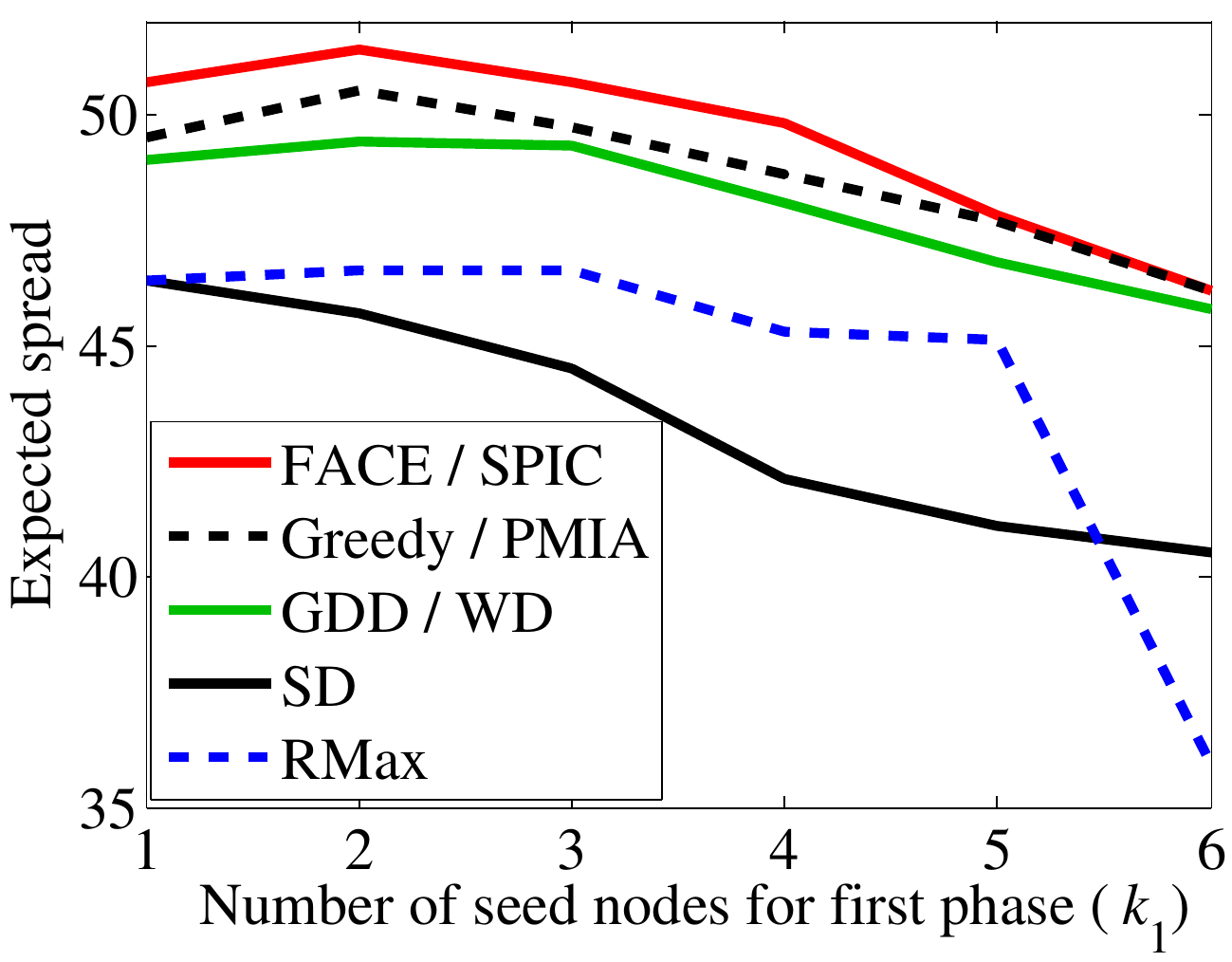} 
 }{
   \includegraphics[scale=.6]{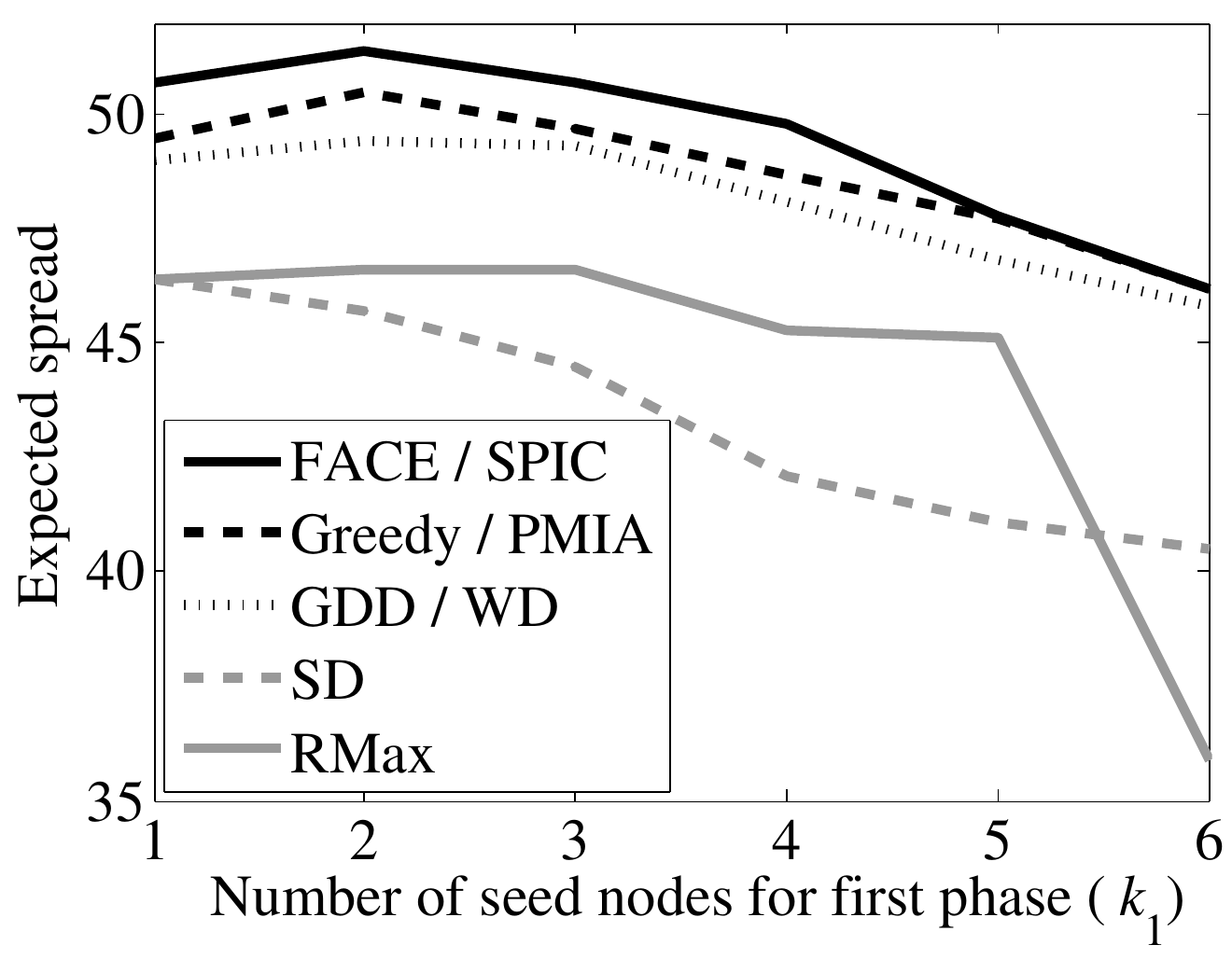} 
   }
 \\  (a)
\end{minipage}
\begin{minipage}{.5\textwidth}
\centering
\iftoggle{clr}{
\includegraphics[scale=.6]{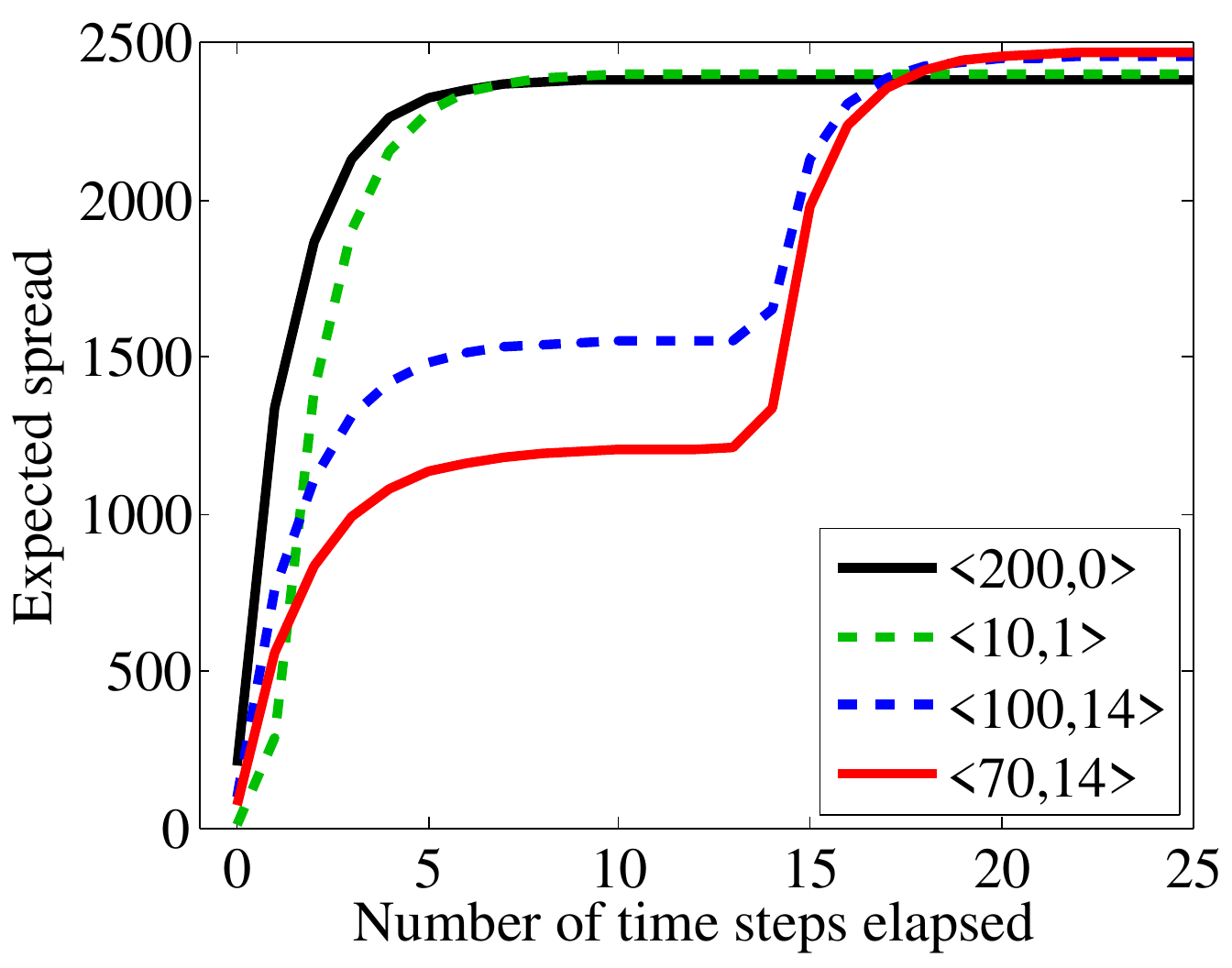} 
}{
\includegraphics[scale=.6]{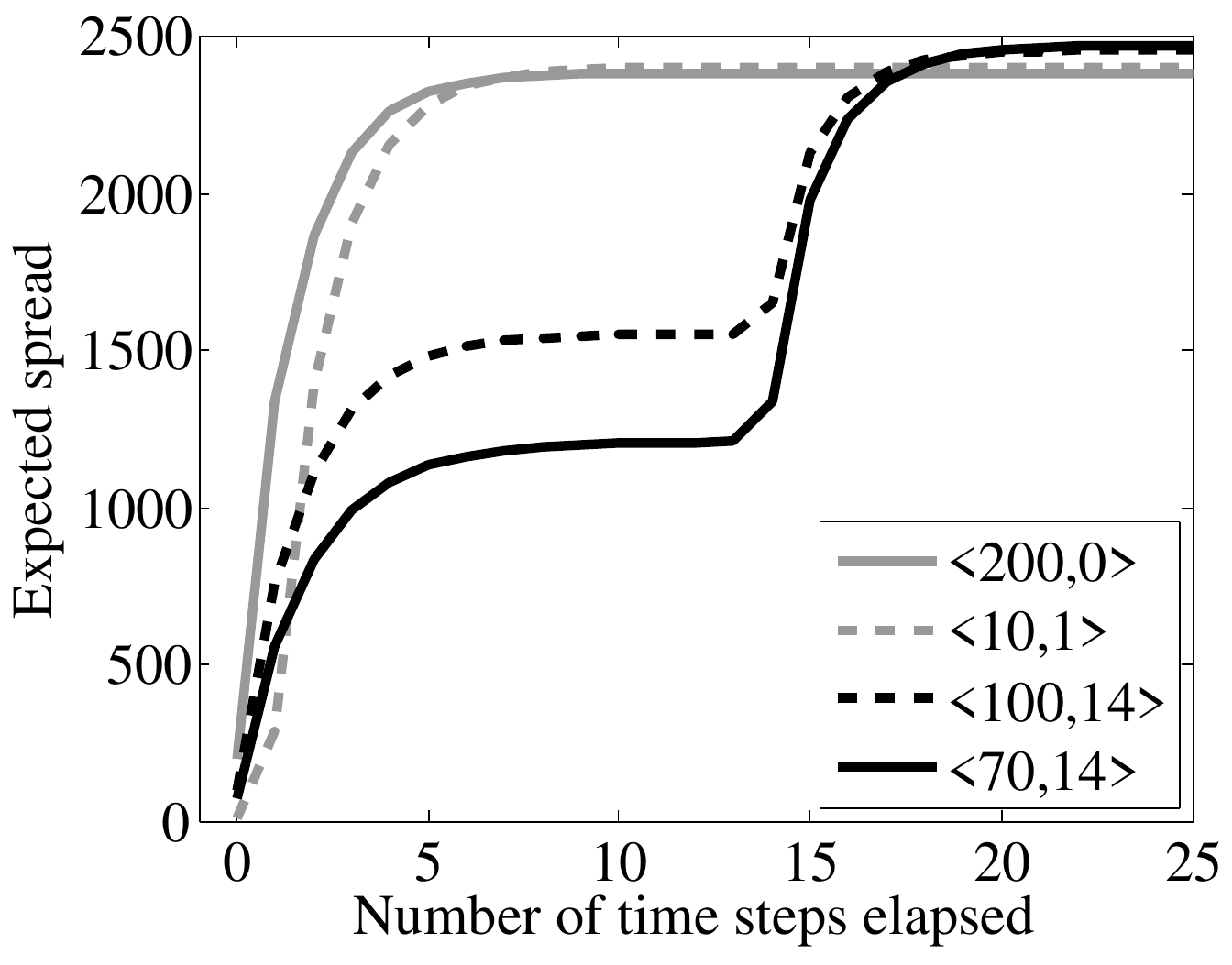} 
}
  \\ \centering (b)
\end{minipage}
%\vspace{5mm}
 \begin{minipage}{\textwidth}
 \centering
 \vspace{5mm}
 \includegraphics[scale=.6]{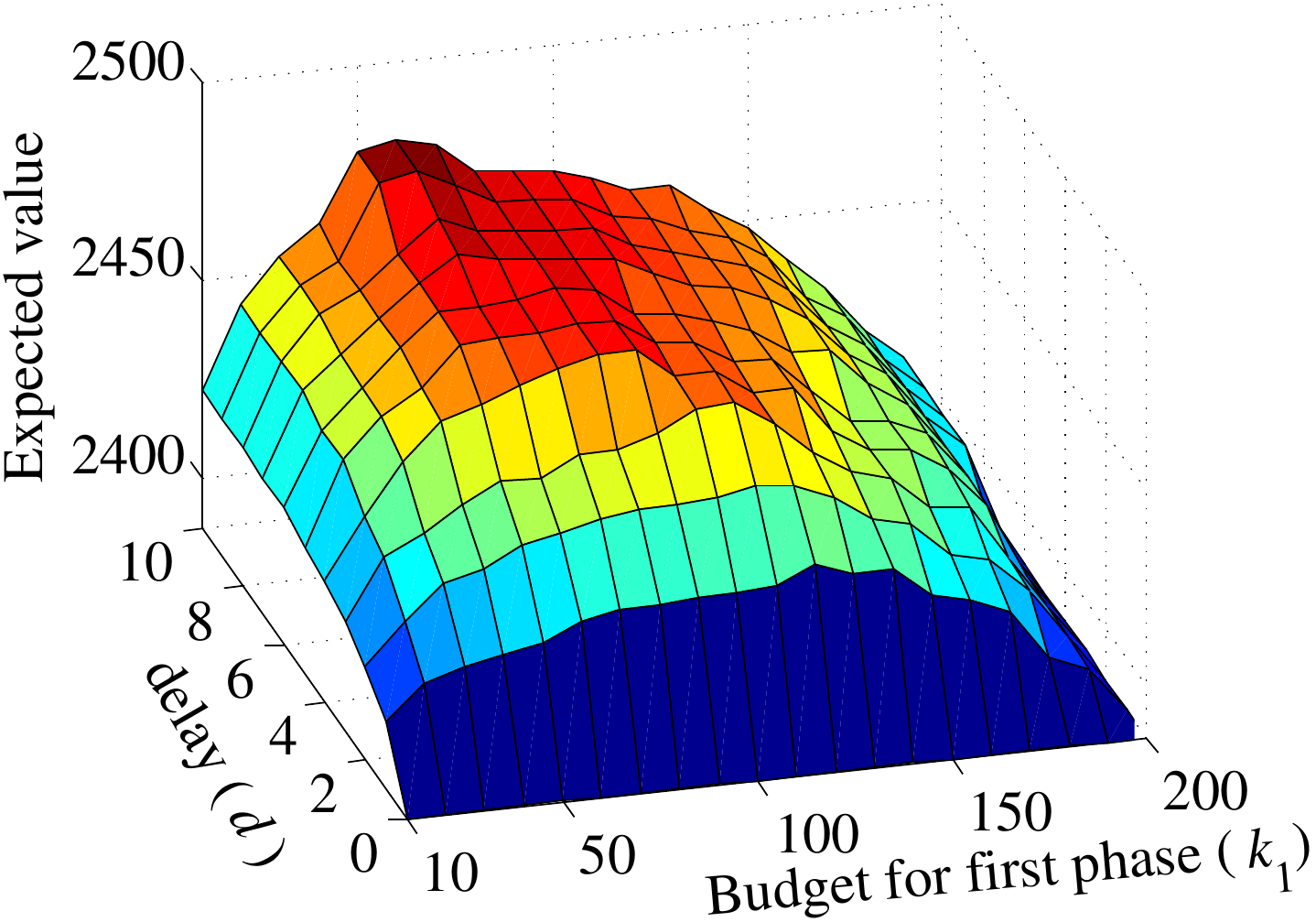} 
     \\ \centering (c)
\end{minipage}
%\vspace{-3mm}
   \caption{ 
   (a) Performance of algorithms for different values of ${k_1}$ on LM dataset under WC model (${k=6}$, ${d=D}$),
   %Comparison among the considered algorithms in single phase diffusion, 
(b) Typical progression of diffusion with time for different $<k_1,d>$ pairs on NetHEPT  under WC model ($k=200$),
(c) 3D plot considering a range of $<k_1,d>$ pairs for $\delta=1$ on NetHEPT  under WC model ($k=200$)
   }
   \label{fig:plots_mpid}
      %\vspace{-.2cm}
  \end{figure}

%Table~\ref{tab:budget_split} 
Figure~\ref{fig:plots_mpid}(a) presents the results of different budget splits for the considered algorithms on LM dataset (results are for WC model, results for TV model were qualitatively similar).
%; this is because the first phase diffusion almost always stagnates for any value of $k_1$ by time step 5).
We also studied various budget splits for NetHEPT dataset using both WC and TV models, the results of which are provided in
Figures~\ref{fig:plots_mpid}(c) and \ref{fig:3Dplots_TV}(c)
(see $d=10$; we have limited $d$ to $10$ for the purpose of clarity; the observations for $d>10$ were almost same as that for $d=10$)
and also Figure~\ref{fig:plot_all_deltas_hep_tv} ($\delta=1.00$).
The results for different values of $k$ are provided in
Table~\ref{tab:impwithk}.
These results show that our na\"ive first guess of splitting the budget equally was a good one, even though other splits give marginally higher values
(considering $d=D$).

\begin{observation}
%Though it is clear that two-phase diffusion strictly performs better than single phase diffusion, the amount of improvement depends on the value of $k$ as well as the model under consideration (see Table~\ref{tab:impwithk}). 
For the datasets considered, under all settings (different diffusion models and values of $k$), a split of $k_1:k_2 \approx \frac{1}{3}:\frac{2}{3}$ is observed to be optimal. 
%The improvement under WC model is specific to the dataset, while that under TV model is almost independent of the dataset for practically relevant values of $k$. 
%\textbf{\color{red}{Elaborate}}
\end{observation}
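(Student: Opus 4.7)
The plan is to validate this observation empirically through a systematic grid search, since the statement is inherently an experimental one rather than an analytical claim. First, I would fix $d = D$ (justified by Theorem~\ref{thm:D_is_opt}) so that the delay is not confounded with the split, and then for each $(\text{dataset}, \text{model}, k)$ combination I would sweep $k_1$ over $\{0, 1, \ldots, k\}$ while setting $k_2 = k - k_1$. For each $k_1$, I would run a farsighted algorithm to select $S_1$ maximizing $h(S_1)$, run the first phase diffusion via Monte-Carlo, and then run the second phase seed selection conditioned on the observed $Y$, finally aggregating the expected spread over $\mathcal{M}_1 \cdot \mathcal{M}_2$ iterations. The $k_1^\star$ that maximizes the spread gives the empirically optimal split, and I would report $k_1^\star / k$ across all settings to check whether it concentrates near $1/3$.

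The main obstacle is computational cost. Evaluating $h(S_1)$ already costs $O(k_2 n \Delta m \mathcal{M}_1 \mathcal{M}_2)$ time, and the outer sweep over $k_1$ plus the seed-search inside FACE or greedy multiplies this substantially. To tame this, I would (i) lean on GDD as the inner seed-selection routine for most of the sweep, invoking it as a proxy for greedy by Observation~\ref{obs:gdd_approx_greedy}, (ii) reuse the same set of sampled live graphs across different $k_1$ values to reduce variance and share computation, and (iii) exploit the approximate unimodality of the spread in $k_1$ by first using golden-section style coarse evaluation and then refining near the peak. For the large NetHEPT instance, where exhaustive sweep is infeasible, I would evaluate only at multiples of $k/20$, consistent with the granularity already visible in Table~\ref{tab:impwithk} and Figure~\ref{fig:plots_mpid}.

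The hardest part will be demonstrating universality rather than a single-point optimum. To that end, I would repeat the sweep for both WC and TV models on at least two datasets (LM and NetHEPT), across several values of $k \in \{50, 100, 200, 300\}$, and with multiple outer algorithms (FACE, greedy, GDD-based farsighted, SPIC) so the conclusion does not appear to depend on a particular combination. I would plot the normalized expected spread as a function of $k_1/k$ for every setting and verify that the family of curves peaks near $1/3$, with the specific ratios $\{15/50, 35/100, 70/200, 105/300\}$ already suggested by the optimizing-$k_1$ row of Table~\ref{tab:impwithk} as corroborating evidence.

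Finally, I would supplement the empirical evidence with an intuitive argument explaining why $\tfrac{1}{3}:\tfrac{2}{3}$ emerges: the first-phase seeds play a dual role of triggering diffusion and probing the live graph, while the second-phase seeds exploit the refined posterior $p(X \mid Y)$ to place influence where it is most effective. A small $k_1$ produces too little observation to sharpen $S_2 ^{O(X,S_1,d,k_2)}$, whereas a large $k_1$ leaves too few seeds to exploit the information once acquired; the balance at roughly one-third captures this exploration-exploitation trade-off and is consistent with the non-submodular yet diminishing-returns behaviour of $f(\cdot)$ established in Properties~\ref{prop:monotone}--\ref{prop:nonsubmodular}. The intuitive argument is not meant to replace the experiment, only to make the numerical finding more credible.
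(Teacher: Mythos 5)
Your proposal matches the paper's own treatment: the observation is established empirically by sweeping $k_1$ (with $d=D$) across the LM and NetHEPT datasets under both WC and TV models for several values of $k$, reading off the optimizing $k_1$ from the resulting curves and Table~\ref{tab:impwithk}, and then offering the same exploration--exploitation intuition (first-phase seeds are highly influential, so less than half the budget suffices to generate a large enough observation, skewing the optimum toward $k_1 \approx k/3$). Your added computational devices (GDD as a proxy via Observation~\ref{obs:gdd_approx_greedy}, exploiting unimodality in $k_1$ via golden-section search) are likewise the ones the paper itself employs, so there is no substantive difference in approach.
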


  \begin{figure}[t!]
 \begin{minipage}{.5\textwidth}
 \centering
   \includegraphics[scale=.6]{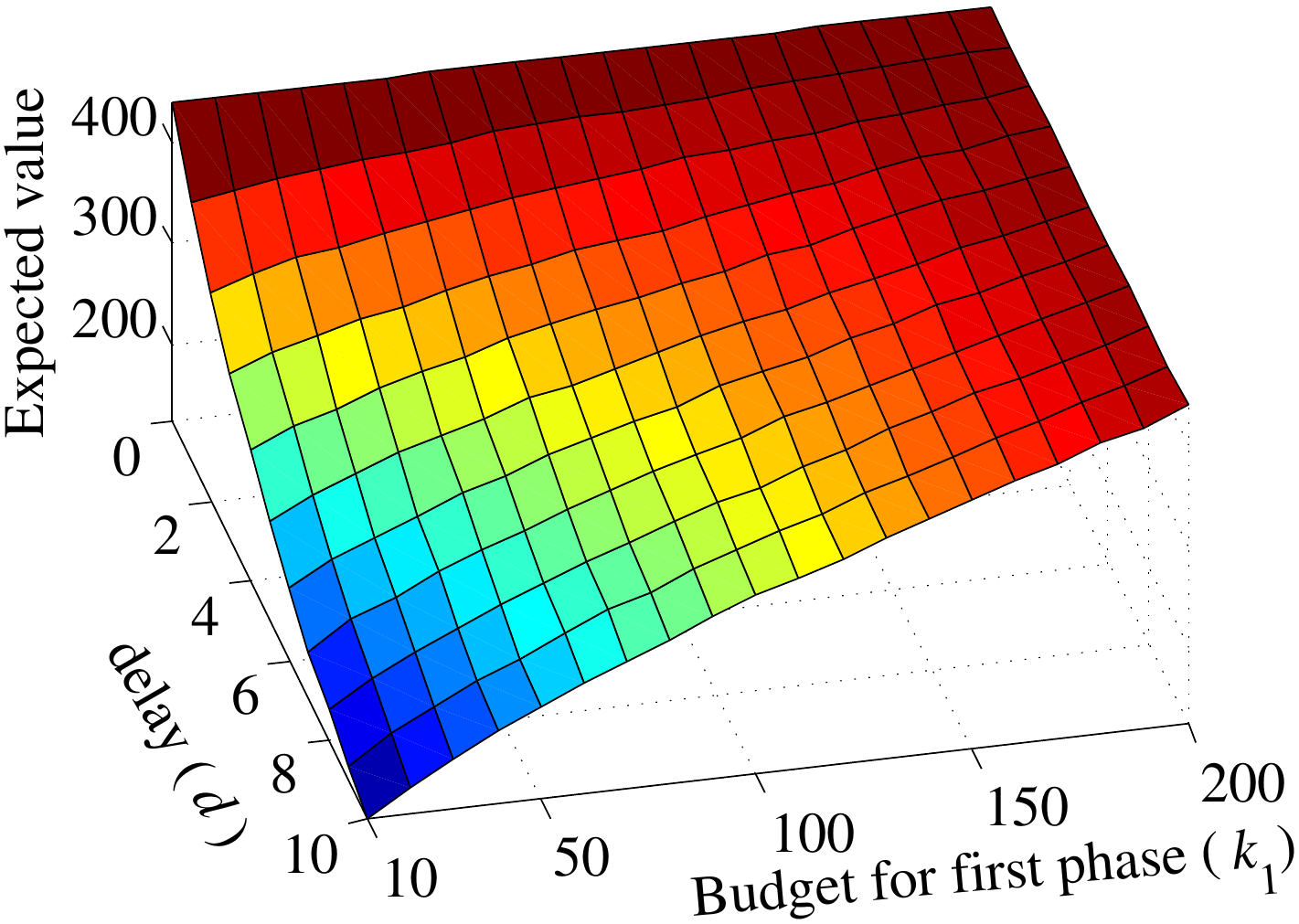} 
 \\ \centering (a) $\delta=0.85$
\end{minipage}
\vspace{5mm}
\begin{minipage}{.5\textwidth}
\centering
\includegraphics[scale=.6]{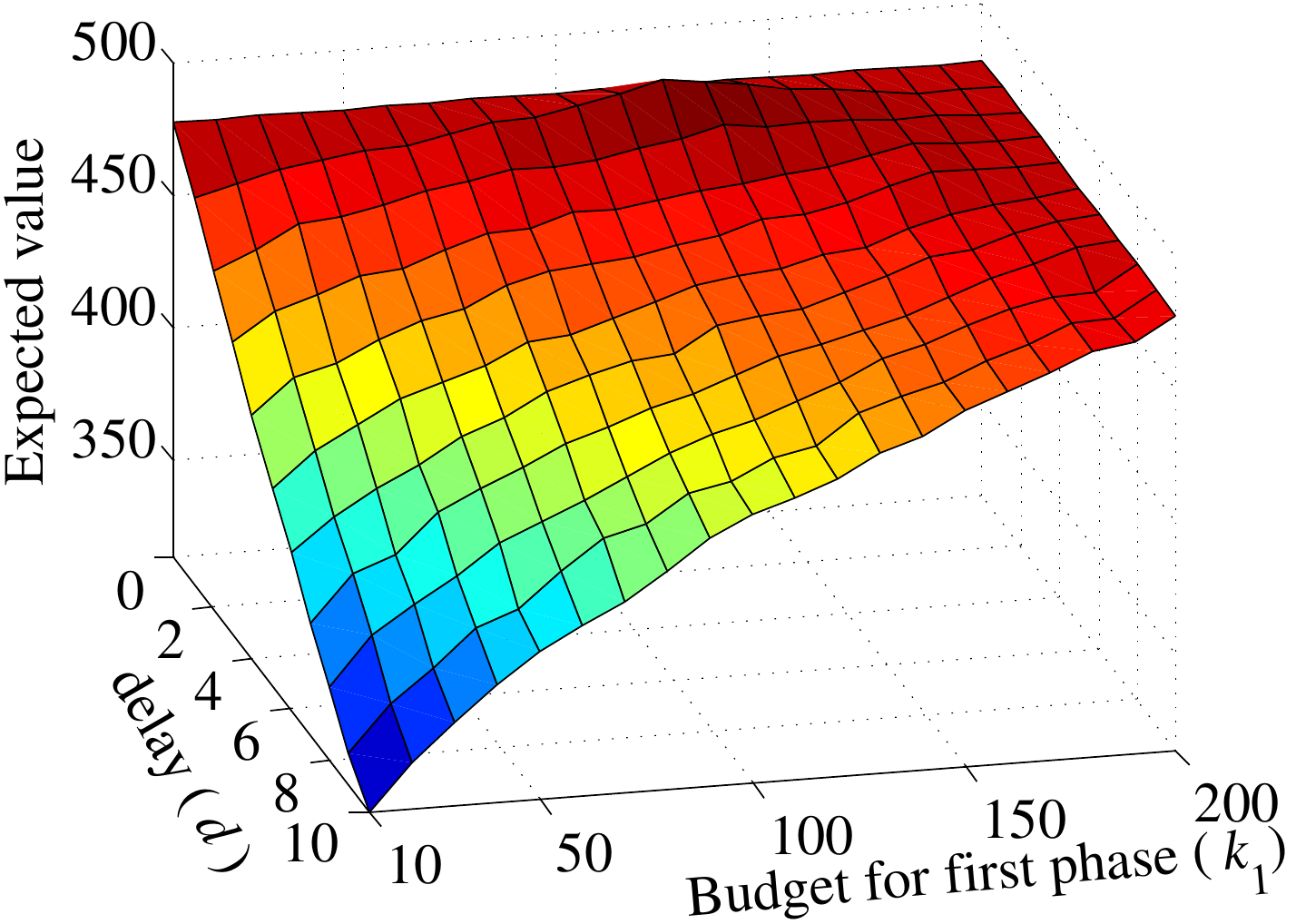}
  \\ \centering (b) $\delta=0.95$
\end{minipage}
%\vspace{5mm}
 \begin{minipage}{\textwidth}
 \centering
 \vspace{5mm}
\includegraphics[scale=.6]{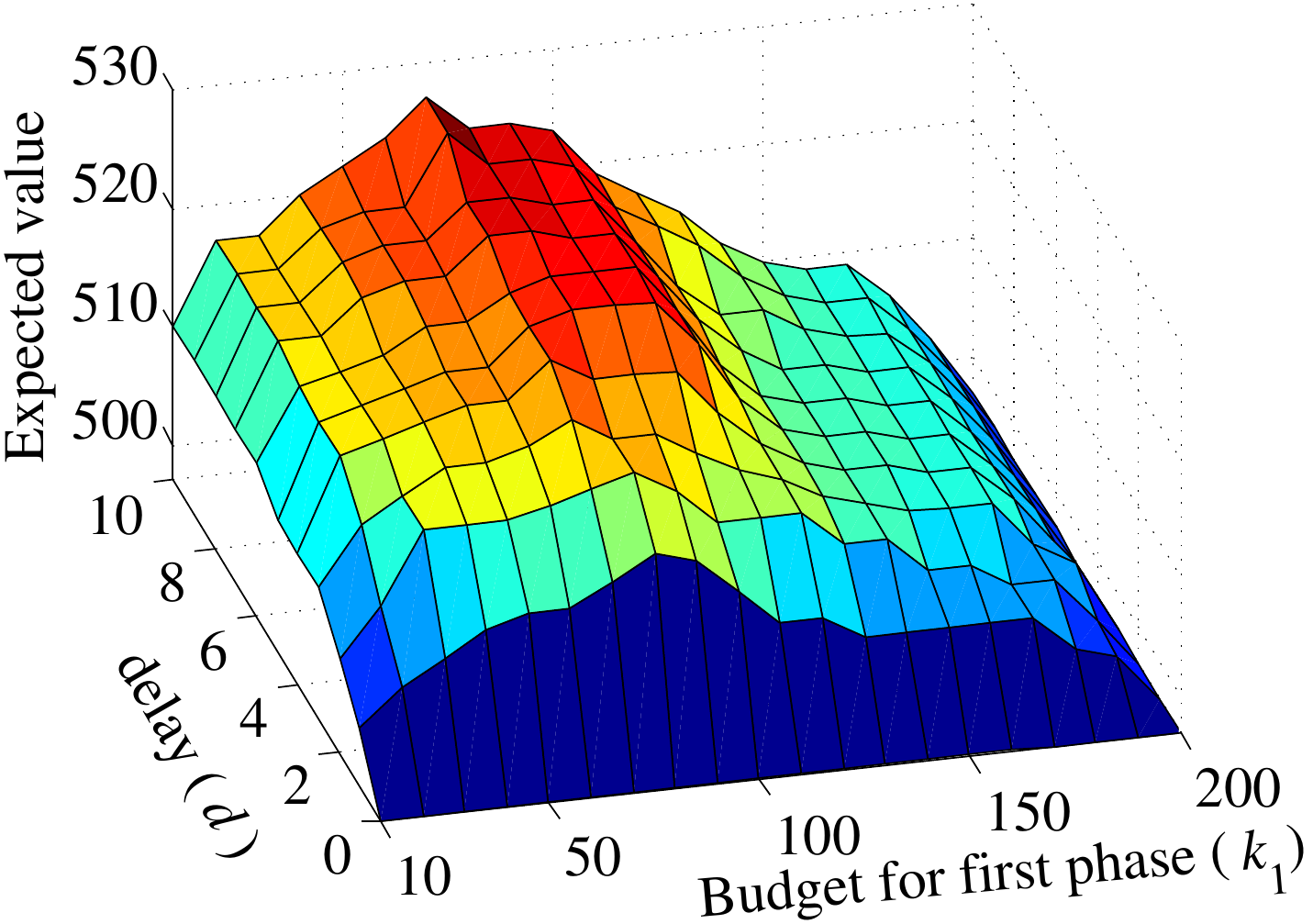}
     \\ \centering (c) $\delta=1$
\end{minipage}
%\vspace{-3mm}
   \caption{ 
   3D plots considering a range of $<k_1,d>$ pairs for different values of $\delta$ on NetHEPT dataset under TV model ($k=200$)
   (note the reversed delay axis in (c) as compared to (a-b))
   }
   \label{fig:3Dplots_TV}
      %\vspace{-.2cm}
  \end{figure}

  \begin{figure}
  \centering
   \iftoggle{clr}{
     \includegraphics[scale=0.62]{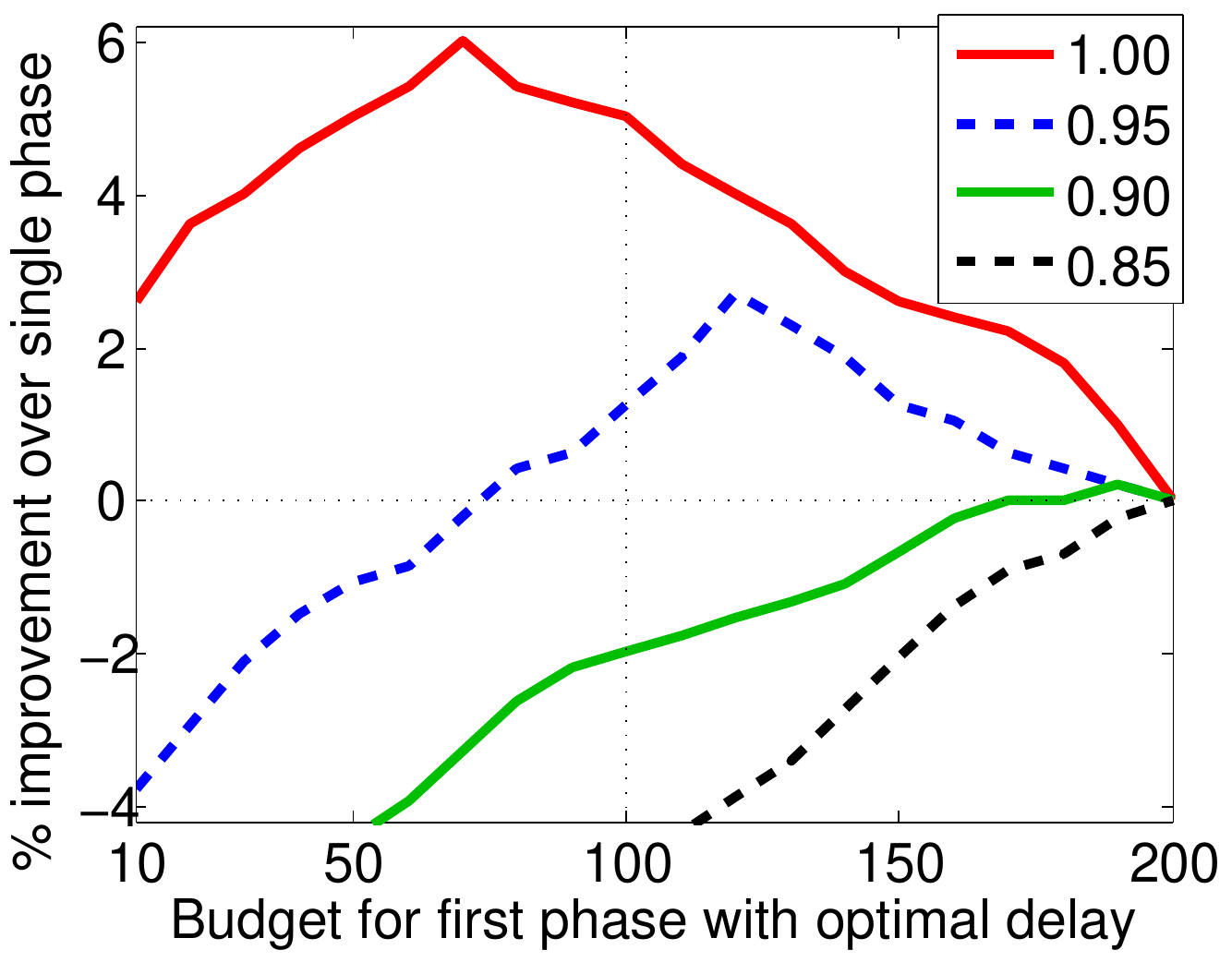}
   }{
  \includegraphics[scale=0.62]{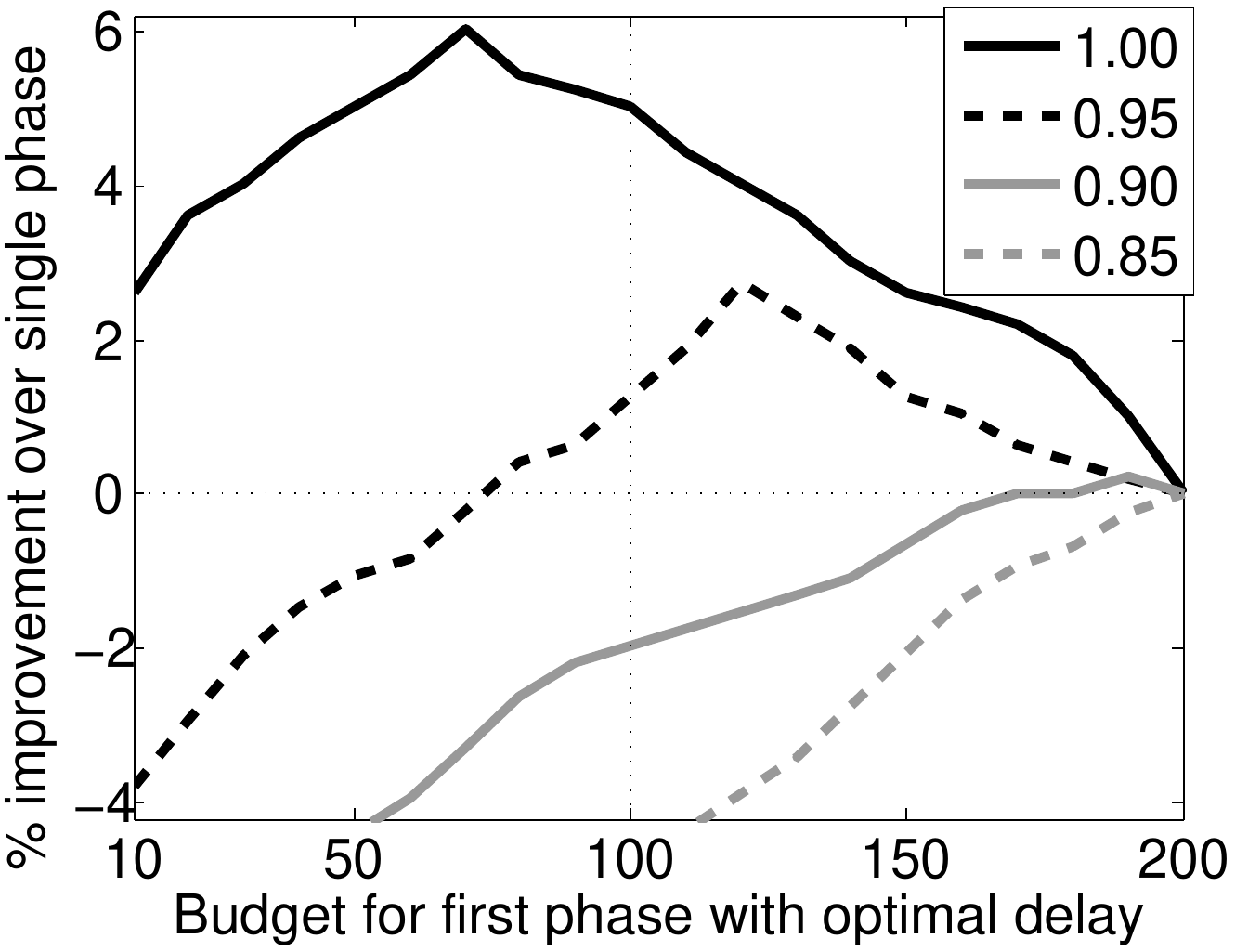}
  }
  \caption{Typical results of splitting budget ${k=200}$ with optimal delay ($d\geq 1$) for different ${\delta}$'s ($d=D$ for $\delta=1.00$, $d=1$ for other $\delta$'s) on NetHEPT  (TV model) (2D views of plots in Figure~\ref{fig:3Dplots_TV} corresponding to the optimal $<k_1,d>$ pairs)}
  \label{fig:plot_all_deltas_hep_tv}
  \vspace{-3mm}
  \end{figure}

%In our simulations, we observed that $k_1 \approx k_2$ generally gives results which are the best or very close to the best, a possible reason being 
A possible reason for $k_1 \approx k_2$ being a good guess is
a trade-off between (a) the size of the observed diffusion and (b) the exploitation based on the observed diffusion. If the value of $k_1$ is too low, not many nodes may be influenced and so we may not be able to observe the diffusion to a considerable extent, leaving us with little information on the basis of which we need to select the seed nodes for the second phase. On the other hand, if the value of $k_2$ is too low, we may not be able to select enough number of seed nodes for the second phase to exploit the information obtained from the observed diffusion.

The optimal split $k_1:k_2 \approx \frac{1}{3}:\frac{2}{3}$ (a skew towards lower values of $k_1$) can perhaps be attributed to the fact that the first set of seed nodes selected by most algorithms, are very influential, and it is not necessary to allocate half of the budget to first phase in order to obtain a large enough observable diffusion.
% $Y$.

\subsection{Scheduling the Second Phase}
\label{sec:temporal}

It is clear that a two-phase diffusion would result in a higher influence spread than the single-phase one. This brings us to address the following questions: (a) why not use two-phase diffusion all the time? and (b) why not wait for the first phase to complete its diffusion process before starting the second phase? It is to be noted that the standard IC model fails to capture the effects of time taken for the diffusion process. 
%Efforts have been made to address this issue by introducing temporal constraints~\cite{chen2012time}. 
A more realistic objective function would capture not only the influence spread, but also the rate of the diffusion process.
One such objective function could be 
$
%\label{eqn:time_also}
\nu(S) = \sum_{t=0}^\infty \Gamma(t) \sigma_{(t)}(S)
$,
where $\Gamma(\cdot)$ is a non-increasing function such that $\Gamma(t) \leq 1$ for all values of $t$, and $\sigma_{(t)}(S)$ is the expected number of newly activated nodes at time step $t$.

Alternatively, let $t_j^{X,S}$ be the minimum number of time steps in which node $j$ can be reached from set $S$ in live graph $X$.
Then $\Gamma(t_j^{X,S})$ is the value obtained for influencing node $j$ in live graph $X$, and $\sum_X p(X) \Gamma(t_j^{X,S})$ is the expected value obtained for influencing node $j$.
So the expected influence value obtained starting from a seed set $S$ is $\nu(S) = \sum_j  \sum_X p(X) \Gamma(t_j^{X,S})$.
Note that if $\Gamma(t) = 1$ for all $t$, then $\nu(\cdot)$ reduces to $\sigma(\cdot)$.
Thus we modify our two-phase objective function 
%in Equation~(\ref{eqn:f}) 
by incorporating $\Gamma(t)$.
%\begin{proposition}
%\label{prop:time_also}
%The objective function given in Equation~(\ref{eqn:time_also}) is non-negative, monotone, and submodular, for any non-increasing function $\Gamma(\cdot)$ where $\Gamma(t) \leq 1, \; \forall t$.
%\end{proposition}
%
%We skip the proof of Proposition~\ref{prop:time_also} as it is easy to prove.
%
\begin{theorem}
$\nu(\cdot)$  is non-negative, monotone increasing, and submodular, for any non-increasing function $\Gamma(\cdot)$ where $0 \leq \Gamma(t) \leq 1, \; \forall t$. 
\end{theorem}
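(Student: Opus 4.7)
My plan is to prove all three properties by working pointwise, first fixing a node $j$ and a live graph $X$, and then summing over $j$ and $X$ using the non-negative weights $p(X)$. Since $\nu(S) = \sum_j \sum_X p(X)\,\Gamma(t_j^{X,S})$ is a non-negative linear combination of the quantities $\Gamma(t_j^{X,S})$, each of the three properties will follow once it is established for the map $S \mapsto \Gamma(t_j^{X,S})$ for every fixed $j,X$.

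Non-negativity is immediate from $\Gamma(t) \in [0,1]$ and $p(X) \geq 0$. For monotonicity, I will observe that if $S \subseteq T$, then for any $j$ and $X$ the minimum hitting time satisfies $t_j^{X,T} \leq t_j^{X,S}$, because every path in $X$ from $S$ to $j$ is also a path from $T$ to $j$. Since $\Gamma$ is non-increasing, this flips the inequality to give $\Gamma(t_j^{X,T}) \geq \Gamma(t_j^{X,S})$, and summation preserves it.

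The main work is submodularity. For fixed $S \subseteq T$, $i \notin T$, $j$, and $X$, I will introduce the shorthand $a = t_j^{X,S}$, $b = t_j^{X,T}$, and $c = t_j^{X,\{i\}}$, and use the identity $t_j^{X,S \cup \{i\}} = \min(a,c)$ and $t_j^{X,T \cup \{i\}} = \min(b,c)$, together with $b \leq a$ (from monotonicity of hitting times). I then want to show
\[
\bigl[\Gamma(\min(a,c)) - \Gamma(a)\bigr] \;\geq\; \bigl[\Gamma(\min(b,c)) - \Gamma(b)\bigr].
\]
I will verify this by splitting into the three cases $c \geq a$, $b < c < a$, and $c \leq b$. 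In the first case both marginals are zero; in the second the left side equals $\Gamma(c)-\Gamma(a) \geq 0$ while the right side is zero; in the third both marginals equal $\Gamma(c)$ minus $\Gamma(a)$ or $\Gamma(b)$ respectively, and the difference $\Gamma(b) - \Gamma(a) \geq 0$ because $b \leq a$ and $\Gamma$ is non-increasing. Summing the pointwise inequality against $p(X)$ and over $j$ then yields submodularity of $\nu$.

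The main obstacle I anticipate is simply keeping the three case analysis clean and being careful that the comparison uses only the hypothesis that $\Gamma$ is non-increasing (and not, say, concavity). The remaining steps are routine: linearity of expectation over live graphs preserves non-negativity, monotonicity, and submodularity, so the pointwise argument directly lifts to $\nu(\cdot)$.
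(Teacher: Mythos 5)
Your proposal is correct and takes essentially the same approach as the paper: both decompose $\nu$ as a non-negative linear combination of the pointwise functions $S \mapsto \Gamma(t_j^{X,S})$ and establish each property there, with submodularity handled by a case analysis on whether adding $i$ shortens the hitting time. Your three-case split via the identity $t_j^{X,S\cup\{i\}}=\min(t_j^{X,S},t_j^{X,\{i\}})$ is just a slightly more explicit version of the paper's two-case argument, and all cases check out.
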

\begin{proof}
The non-negativity of $\nu(\cdot)$ is direct from the non-negativity of $\sigma_{(t)}(\cdot)$.
%let $t_j^{X,S}$ be the minimum number of time steps in which node $j$ can be reached from set $S$ in live graph $X$.
Now, it is clear that $t_j^{X,S} \geq t_j^{X,T}$ for any $S \subset T$, and owing to $\Gamma(\cdot)$ being a non-increasing function, we have $\Gamma(t_j^{X,S}) \leq \Gamma(t_j^{X,T})$.
Since this is true for any live graph $X$, we have $\sum_X p(X) \Gamma(t_j^{X,S}) \leq \sum_X p(X) \Gamma(t_j^{X,T})$. Also, since this is true for any node $j$, we have $\sum_j  \sum_X p(X) \Gamma(t_j^{X,S}) \leq \sum_j  \sum_X p(X) \Gamma(t_j^{X,T})$ or equivalently, $\nu(S) \leq \nu(T)$. 
%But $\sum_j  \sum_X p(X) \Gamma(t_j^{X,S})$ is nothing but $\nu(S)$, the expected influence value obtained starting from a seed set $S$. 
This proves the monotone increasing property of $\nu(\cdot)$.

For the purpose of proving its submodularity, let us define another function $\psi_j^X(S) = \Gamma(t_j^{X,S})$. So $\nu(S) = \sum_j  \sum_X p(X) \psi_j^X(S)$, that is, $\nu(\cdot)$ is a non-negative linear combination of the functions $\psi_j^X(\cdot)$.
%To prove its submodularity, we first show that $\nu^X(\cdot)$ is submodular for any live graph $X$.
%Since $\sigma^X(\cdot)$ is submodular, we have $\sigma^X(S\cup\{v\})-\sigma^X(S) \geq \sigma^X(T\cup\{v\})-\sigma^X(T)$ for all $S \subseteq T$.
%It is clear that $t_S \geq t_T$ and so $\Gamma(t_S) \leq \Gamma(t_T)$.
%Let $\nu^X(S) = \sum_j \Gamma(t_j^{X,T})$ be the influence value obtained assuming live graph $X$.
Consider arbitrary sets $S$ and $T$ and an arbitrary node $i$ such that $S \subset T$ and $i \in N\setminus T$.
We first prove the submodularity of $\psi_j^X(\cdot)$ for an arbitrary node $j$ and a live graph $X$ using two possible cases. 
In the first case, if addition of $i$ to the set $T$ does not reduce the number of time steps required to reach node $j$, then $\psi_j^X({T\cup\{i\}}) = \psi_j^X(T)$.
In the second case,  if addition of $i$ to the set $T$ reduces the number of time steps required to reach the node, then $\psi_j^X({T\cup\{i\}}) = \psi_j^X({\{i\}}) = \psi_j^X({S\cup\{i\}})$.
In both the cases, $\psi_j^X({S\cup\{i\}}) - \psi_j^X(S) \geq \psi_j^X({T\cup\{i\}}) - \psi_j^X(T)$. 
%\textbf{\color{red}{Explain more.\\}}
This proves the submodularity of $\psi_j^X(\cdot)$ and hence of their non-negative linear combination $\nu(\cdot)$.
%As this true for all nodes, $\nu^X(S\cup\{i\})-\nu^X(S) \geq \nu^X(T\cup\{v\})-\nu^X(T)$.
%And since $\nu(S) = \sum_X \nu^X(S)$, that is, $\nu(\cdot)$ is a non-negative linear combinations of submodular functions $\nu^X(\cdot)$, we have that $\nu(\cdot)$ is also submodular.
\end{proof}
%It is easy to show that, like $\sigma(\cdot)$, $\nu(\cdot)$ also is non-negative, monotone, and submodular, for any non-increasing function $\Gamma(\cdot)$ where $\Gamma(t) \leq 1, \; \forall t$. 
Thus following argument similar to that in Section~\ref{sec:compute}, the two-phase objective function (taking time into consideration) can be well approximated using greedy algorithm 
%instead of optimal one 
for seed selection in the second phase; and GDD heuristic can be used as an effective proxy for greedy.
%, as justified before.
%
Note that GDD heuristic is expected to perform very well for the temporal objective function $\nu(\cdot)$ because it maximizes the number of nodes influenced in the immediately following time step. In particular, GDD would be an excellent algorithm when $\Gamma(1)$ is significantly larger than $\Gamma(t)$ for $t \geq 2$.
In our simulations, we consider $\Gamma(t) = \delta^t$ where $\delta \in [0,1]$ (this is generally the first guess for a decay function in several problems).

Now our objective is to not only find an optimal $k_1$, but also an optimal delay $d$.
%
%In what follows, let $D$ be the maximum delay that we are ready to incur in absence of any temporal constraints, for example, it could be the time step at which the first phase completes its diffusion or it could be the the length of the longest path in the network.
%
We have seen that FACE algorithm implicitly computes influential seed nodes while simultaneously optimizing over $k_1$. Now in addition to a sampled value of $k_1$ and a sampled set of cardinality $k_1$, we allow each data sample to also contain a value of $d$, sampled from $\{1,\ldots,D\}$. Table~\ref{tab:face_temporal} shows that the differences between (a) the spread achieved using this implicit optimization method and (b) that achieved using exhaustive search over $k_1$ and $d$, for different $\delta$'s on LM dataset, are low. 
The time taken for implicit optimization was observed to be approximately $\frac{1}{k D}$ of that taken for exhaustive search.
%Let $\beta$ be the time complexity 
This shows the effectiveness of FACE algorithm for getting the best out of two-phase diffusion by addressing the combined optimization problem.

\begin{table}[t!]
\centering
%\begin{minipage}{.2\textwidth}
%\hspace{-.7cm}
%\includegraphics[scale=.25,angle=90]{plot_temporal_single_phase.pdf}
%\end{minipage}
%
%\begin{minipage}{.27\textwidth}
%\begin{footnotesize}
%\hspace{-2mm}
\begin{tabular}{c||c||c|c|c||c|c|c}
\hline \hline
\T \B
\multirow{3}{*}{$\delta$} & Single & \multicolumn{6}{c}{Two-phase with optimal $<k_1,d>$} 
\\
\cline{3-8}
& \multicolumn{1}{c||}{phase} &  \multicolumn{3}{c||}{\T \B Implicit opt.} & \multicolumn{3}{c}{\T \B Exhaustive}
\\ \cline{3-8}
\T \B
& value & $k_1$ & $d$ & value & $k_1$ & $d$ & value 
\\ \hline
\T \B
%\hspace{-3mm}
0.75 & 33.2	&	6	&	0	& 33.2	& 6 & 0  	& 33.2	
\\ \T \B
%\hspace{-3mm}
0.80 & 36.0 &	6	&	0	& 36.0	& 5 & 1  	& 36.7	
\\ \T \B
%\hspace{-3mm}
0.85 & 38.0	& 	5	&	2	&  39.1 & 5 & 1  	& 39.7	
\\ \T \B
%\hspace{-3mm}
0.90 & 40.6 & 	5	&	2	&  41.8	& 4 & 1 	& 42.4	
\\ \T \B
%\hspace{-3mm}
0.95 & 42.9	& 	4	&	1	&  46.1	& 4 & 2 	& 46.5 	
\\ \T \B
%\hspace{-3mm}
1.00 & 46.2	& 	2	&	$D$	&  51.4 & 2 & $D$ 	& 51.4	
\\ \hline \hline
\end{tabular}
%\end{center}
%\end{footnotesize}
%\end{minipage}
%
%\vspace{-2mm}
\caption{Performance of  FACE with implicit optimization versus that with exhaustive search}
\label{tab:face_temporal}
%\vspace{-5mm}
\end{table}

%We also conducted simulations on NetHEPT  dataset for finding an optimal split and delay, given a $\delta$. 
As mentioned earlier, for NetHEPT dataset also,
we observed that for $\delta=1$, it is optimal to allocate one-third of the budget to first phase and delay $d=D$. 
For $\delta \leq 0.85$, it is optimal to use single-phase diffusion. 
For intermediate values of $\delta$, it is optimal to allocate most of the budget to the first phase with a delay of one time step; the necessity of allocating most of the budget to the first phase increases as $\delta$ decreases. Figures~\ref{fig:3Dplots_TV}(a-c) and \ref{fig:plot_all_deltas_hep_tv} show this in an elaborate way.

Figure~\ref{fig:plots_mpid}(b) shows how the expected spread progresses with time for different $<k_1,d>$ pairs on NetHEPT dataset under WC model, given $k=200$.
$<200,0>$ corresponds to single phase diffusion, $<10,1>$ corresponds to two-phase diffusions with a random $<k_1,d>$ pair, $<100,14>$ corresponds to equal budget split $k_1=k_2=\frac{k}{2}$ with $d=D$, and $<70,14>$ corresponds to the optimal $<k_1,d>$ pair.
(We have $D=14$ in the plots as the first phase diffusion stagnated after 14 time steps for $k_1=70$ and $100$.) 
These types of plots showing the progression of diffusion with time may help a company to decide the ideal values of $k_1$ and $d$ based on its desired progression.

\subsection{An Efficient Method for the Combined Optimization Problem}
\label{sec:gss}

We have seen that the performance of FACE algorithm is excellent in terms of influence maximization. However, it is computationally intensive and hence impractical for use on large networks. With this in view, we propose another algorithm that is based on empirical observations in Figures~\ref{fig:plots_mpid}(c) and \ref{fig:3Dplots_TV}(a-c).

We note that the plots are unimodal in nature for the considered representative algorithms and datasets with respect to either $k_1$ (with a good enough interval between consecutive $k_1$'s) or $d$ as variable. We could exploit this nature for maximizing the objective function by using the {\em  golden section search} technique with $k_1$ as the variable, where the objective function itself is computed with an optimal $d$ for that particular $k_1$ (which can be found using golden section search). In the special case of the considered exponential decay function, since the optimal values of $d$ would be very small for almost any $\delta<1$, we find an optimal $d$ for a particular $k_1$ using sequential search starting from $d=0$.
It is to be noted that, as long as the function does not change its value drastically within small intervals (which would be true in general for the considered problem), the golden section search technique will give an optimal or near-optimal solution even when the function is not perfectly unimodal, but unimodal when the interval between consecutive $k_1$'s is good enough.

%\textbf{\color{red}{check if unimodal in 2D: use cycling iterative method, multidimensional direct search method}}
We also explored whether the plots are unimodal with respect to $k_1$ and $d$ simultaneously, so that faster methods such as multidimensional direct search, can be used. 
However, though the plots are observed to be unimodal with respect to $k_1$ and $d$ individually, they are not unimodal with respect to them simultaneously.

\section{Discussion}
\label{sec:conclusion_mpid}
We proposed and motivated the two-phase diffusion process, formulated an appropriate objective function, proposed an alternative objective function, developed suitable algorithms for selecting seed nodes in both the phases, and observed their performances using simulations. 
We observed that myopic algorithms perform closely to the corresponding farsighted algorithms, while taking orders of magnitude less time.

In order to make the best out of two-phase diffusion, we also studied the problems of budget splitting and scheduling second phase. We proposed the usage of FACE algorithm for the combined optimization problem. Further, we studied the nature of the plots and owing to their unimodal nature with respect to either $k_1$ or $d$ as variable, we proposed the usage of golden section search technique to find an optimal $<k_1,d>$ pair.
We concluded that: (a) under strict temporal constraints, use single-phase diffusion, (b) under moderate temporal constraints, use two-phase diffusion with a short delay while allocating most of the budget to the first phase, and (c) when there are no temporal constraints, use two-phase diffusion with a long enough delay with a budget split of $\frac{1}{3}:\frac{2}{3}$ between the two phases (one-third budget for the first phase).

We presented results for a few representative settings; these results are very general in nature.
We now provide notes on the decay function, satisfaction of the subadditivity property by the objective function, and how this work can be extended to the linear threshold model.

%We conclude that, even though a two-phase process starts slower and takes longer time for completion, it has a clear edge over the single-phase process in terms of the number of influenced nodes at the end of the diffusion process. 
%The objective of this paper was to motivate and formalize information diffusion in multiple phases. 
%In order to justify the effectiveness of two-phase diffusion process, it was necessary to consider high performing single-phase algorithms, namely, FACE, SPIC, and greedy. However, as a feasible solution for large networks, we also considered highly scalable algorithms such as PMIA, GDD, WD, and SD heuristics. 
%In absence of temporal constraints, we observed that the budget split $k_1:k_2=35:65$ and delay $d=D$ were the most effective under almost all settings. 
%The generalized degree discount heuristics served as a perfect balance between performance and running time.

\subsection{A Note on the Decay Function}
\label{sec:decay_fn}

We considered a very strict decay function with time (exponential), which resulted in humbling two-phase diffusion for most range of $\delta$. In practice, the decay function would be more lenient, where the value would remain high for first few time steps and then decay at a slow rate. Such a decay function would be more suitable for two-phase diffusion. 

In practice, the value of a product follows a thresholding behavior. A company would value its initial sales to be high owing to factors like immediate profits, urgency to break even, higher selling prices, etc. But after some time period, it would value its sales to be considerably lower owing to concentration on newer products, more competing products, less number of potential customers for older products, etc. This behavior can be described by a reverse S-shaped curve, where the value is high for initial periods of time, and then starts decreasing until it reaches a stable lower limit. 
%\textbf{\color{red}{Say why we considered such decaying.\\}}
Note, however, that our choice of a simple exponential decay function allowed us to draw firm conclusions, which would not have been the case while working with a sophisticated function as described above.

One could also account for time by studying the problem in presence of competing diffusions, where a delay in diffusion may help competitors reach the potential customers first.

\subsection[{A Note on Subadditivity of Objective Function $f(\cdot)$}]{A Note on Subadditivity of Objective Function $\boldsymbol f(\cdot)$}
\label{sec:subadditive}

Though we have shown that the function $f(\cdot)$ given by Equation~(\ref{eqn:f}) is not submodular, it can be shown to be subadditive, that is, $f(S_1 \cup T_1) \leq f(S_1)+f(T_1), \;\forall S_1,T_1 \subseteq N$.
\begin{property}
\label{prop:subadditive}
\mbox{$f(\cdot)$ is subadditive.}
\end{property}
\begin{proof}
%Let $h^X(S_1) = \sum_{X} p(X|Y) \sigma^X (S_1 \cup S_2 ^{O(X,S_1)})$.
Let $W_1=S_1\cup T_1$
 and $W_2^{O(X,W_1,d,k_2)}$ be an optimal set of $k_2$ nodes given the observation corresponding to $X,d$ starting with seed set $W_1$.
\begin{align*}
 f(S_1)+f(T_1)
&=
\sum_{X} p(X) \sigma^X (S_1 \cup S_2 ^{O(X,S_1,d,k_2)}) 
+ \sum_{X} p(X) \sigma^X (T_1 \cup T_2 ^{O(X,T_1,d,k_2)})
\\&\geq
\sum_{X} p(X) \sigma^X (S_1 \cup W_2 ^{O(X,W_1,d,k_2)}) 
+ \sum_{X} p(X) \sigma^X (T_1 \cup W_2 ^{O(X,W_1,d,k_2)})
\\&\geq
\sum_{X} p(X) \sigma^X (S_1 \cup T_1 \cup W_2 ^{O(X,W_1,d,k_2)})
\\&=
\sum_{X} p(X) \sigma^X (W_1 \cup W_2 ^{O(X,W_1,d,k_2)})
\\&=
f(W_1) = f(S_1 \cup T_1)
\end{align*}
The first inequality is from optimality of $S_2 ^{O(X,S_1,d,k_2)}$ and $T_2 ^{O(X,T_1,d,k_2)}$, and the second one from subadditivity of $\sigma ^X (\cdot)$ (since submodularity and non-negativity  implies subadditivity).
%So $f(\cdot)$ is non-negative and monotone increasing as it is a non-negative linear combination of $h^X (\cdot)$.
\end{proof}

There exists an algorithm that provides an approximation guarantee of $\frac{1}{2}$ for maximizing a subadditive function
% by relaxing the constraints of a linear program. 
\cite{feige2009maximizing}. However, owing to its relatively high running time, we leave it out of our study. It would be interesting though to develop more efficient algorithms for exploiting the subadditivity of $f(\cdot)$.

\subsection{A Note on the Linear Threshold (LT) Model}
\label{sec:mpid_lt}

Throughout this chapter, we discussed multi-phase diffusion using IC model, primarily because it is a natural setting for such a diffusion. One can as well study multi-phase diffusion using the other most popular model, the LT model.
We now discuss this in brief.

In LT model, an influence degree $b_{u,v}$ is associated with every directed edge $(v,u)$, where $b_{u,v} \geq 0$ is the degree of influence that node $v$ has on node $u$, and an influence threshold $\chi_u$ is associated with every node $u$. The weights $b_{u,v}$ are such that $\sum_v b_{u,v} \leq 1$. Owing to lack of knowledge about the thresholds, which are held privately by the nodes, it is assumed that the thresholds are uniformly distributed in $[0,1]$. The diffusion starts at time step 0 and proceeds in discrete time steps, one at a time. In each time step, a node $u$ is influenced or activated if and only if the sum of influence degrees of the edges incoming from its activated neighbors (irrespective of the time of their activation) crosses its own influence threshold, that is, when
$
\sum_{v\in \mathcal{A}} b_{u,v} \geq \chi_u
$,
where $\mathcal{A}$ is the set of activated nodes.
Nodes, once activated, remain activated for the rest of the diffusion process. 
 In any given time step, all activated nodes (not just the recently activated ones) contribute to the diffusion. The diffusion stops when no further nodes can be activated.
 % it is not possible to activate any further nodes.

At the beginning of the first phase, the thresholds of nodes are assumed to be uniformly distributed in $[0,1]$. When the second phase is scheduled to start, we have the information regarding the status of nodes thus far, that is, whether they are active or inactive. 
In addition, we also have the updated information regarding the thresholds of inactive nodes which are out-neighbors of active nodes. That is, such a node was not activated in the first phase even after receiving a total influence (sum of influence degrees of the edges incoming from activated neighbors) of $\sum_{v\in \mathcal{A}} b_{u,v}$. Thus we now have the information that this node has a threshold greater than $\sum_{v\in \mathcal{A}} b_{u,v}$, and so while determining the seed set for second phase, we can exploit this information by assuming its threshold to be uniformly distributed in $\left(\sum_{v\in \mathcal{A}} b_{u,v},1\right]$, instead of a wider (and hence more uncertain) range of $[0,1]$.

\vspace{10mm}
This chapter dealt with the problem of multi-phase information diffusion, while the preceding one dealt with the problem of network formation. Both network structure and the information flowing through it impact the opinions and preferences of individuals, depending on which other individuals they frequently interact with and what information they are more exposed to. The next chapter studies how the social network information can be harnessed to determine ideal representatives of the population, so as to aggregate the preferences of individuals in the population in an efficient and effective way in practice.

\begin{subappendices}
%Authors: Swapnil Dhamal, Akanksha Meghlan, Y. Narahari
%Indian Institute of Science, Bangalore

%\chapter[Cooperative Game Theoretic Solution Concepts for $top\text{-}k$ Problems]{Cooperative Game Theoretic Solution Concepts for $top\text{-}k$ Problems}

\chapter*{Appendix for Chapter~\ref{chap:mpid}}
\addcontentsline{toc}{chapter}{Appendix for Chapter~\ref{chap:mpid}}
\label{chap:appendix_mpid}
%\textbf{\color{red}{Appendix \ref{chap:appendix_mpid} is yet to be structured.}}

\section{Post-processing Methods for Cooperative Game Theoretic Solution Concepts}
\label{app:ppmethods}

%\begin{abstract} 
The problem of finding the $k$ most critical nodes, referred to as the $top\text{-}k$ problem, is a very important one in several contexts such as influence maximization, influence limitation, virus inoculation, 
%preference aggregation in social networks, clustering of data points, 
etc. It has been observed in the literature that the value allotted to a node by most of the popular cooperative game theoretic solution concepts, acts as a good measure of appropriateness of that node (or a data point) to be included in the $top\text{-}k$ set, by itself~\cite{narayanam2010shapley,suri2008determining}.  However, in general, nodes having the highest $k$ values are not the desirable $top\text{-}k$ nodes, because the appropriateness of a node to be a part of the $top\text{-}k$ set depends on other nodes in the set. As this is not explicitly captured by cooperative game theoretic solution concepts, it is necessary to post-process the obtained values in order to output the suitable $top\text{-}k$ nodes. Here, we propose a number of such post-processing methods and provide justification for each of them.

We now present a number of post-processing methods, primarily in the context of information diffusion; they can be extended to other contexts. However, some methods would be more suitable for a given application than others.
One can derive variants of the proposed methods or use multiple methods in conjunction.

Given an input that is a graph (or that can be converted to a graph), let $\beta_{xy}$ denote the weight of edge $xy$.
% (for undirected graphs, $\beta_{xy}=\beta_{yx}$).
%
%\section{Post-processing Methods}
%\label{sec:methods}
%
Starting from the null set, the $top\text{-}k$ set builds up as nodes get added to it, until it reaches cardinality $k$.
The most direct and na\"ive method of obtaining the $top\text{-}k$ set is to sort nodes in descending order of their values, say \textit{ordered\_list}, and then choose the first $k$ nodes from the list. However, as explained earlier, nodes having the highest $k$ values are generally not the desirable $top\text{-}k$ nodes, and so there is a need to post-process the obtained values in order to build an effective $top\text{-}k$ set.
We propose a number of post-processing methods that can be broadly classified into two types, namely,
(a) eliminating neighbors of chosen nodes
and
(b) discounting values of neighbors of chosen nodes.
%\begin{itemize}
%\vspace{-2mm}
%\item Eliminating neighbors of chosen nodes
%\vspace{-2.5mm}
%\item Discounting values of neighbors of chosen nodes
%%\item Combining with greedy hill-climbing algorithm
%\end{itemize}

In the following methods, neighbors of a node account for both its in-neighbors and out-neighbors; however, the accounting of neighbors can be altered based on the application.
In the methods that follow, in order to obtain an ordering over multiple nodes that are allotted equal values, the ties are broken randomly.

\subsection{Eliminating Neighbors of Chosen Nodes}

As choosing nodes na\"ively may result in the $top\text{-}k$ nodes to be clustered in one part of the network, these methods try to choose the nodes such that they are appropriately spread in the network, so that they influence  as many distinct nodes as possible.
In all the methods of this type, if it is not possible to choose any more nodes using the elimination approach, the methods reiterate over \textit{ordered\_list} and na\"ively choose the unselected nodes in order.
It is to be noted that in the methods of this type, 
%in the context of information diffusion where the network is directed, the term {\em neighbors} refers to both in-neighbors and out-neighbors. 
%Also, 
if a node $x$ is both in-neighbor and out-neighbor of a node $y$, we consider the mutual edge weight to be $\max \{ \beta_{xy} , \beta_{yx} \}$ whenever applicable.

\subsubsection{Eliminating Neighbors of Chosen Nodes Always}

This method is the one used in \cite{narayanam2010shapley} for influence maximization in a social network.
It keeps on choosing the nodes in order from \textit{ordered\_list} and skips a node if any of its neighbors is already chosen in the $top\text{-}k$ set. 
This method is observed to perform well in the context of information diffusion since once a node is chosen, it would likely influence its out-neighbors
either directly or indirectly (through several common neighbors owing to triadic closures in social networks); 
furthermore, if a node is chosen before its in-neighbors because of its high value, it is likely that its in-neighbors would have a high value only because they influence the former (a more influential node); so the method eliminates both in-neighbors and out-neighbors. 
%Similarly, for the preference aggregation problem, once a node is chosen, it would more or less represent its neighbors well.
% Note that this method is unsuitable for the clustering problem as the converted input is a complete network.

\subsubsection{Eliminating Neighbors of Chosen Nodes Based on a Threshold}

This method is similar to the one used in \cite{garg2013novel} in the context of clustering.
The method that eliminates neighbors of chosen nodes always, suffers from the fact that multiple nodes that are highly influential maybe connected with low edge weights; in such cases, it is undesirable to eliminate the neighbors of such influential nodes.
So instead of eliminating neighbors of chosen nodes always, this method keeps on choosing the nodes in order from \textit{ordered\_list} and skips a node if any of its neighbors which is already chosen in the $top\text{-}k$ set, is such that the corresponding edge weight exceeds a certain threshold.
This method would work well in all the contexts, provided the threshold is chosen appropriately.
One can come up with several variants of this method; for instance, the threshold could be a fixed one for the entire network or dataset \cite{garg2013novel}, or it could be a function of the value of the chosen node itself. %\cite{dhamal2012pattern}.

\subsubsection{Eliminating Neighbors of Chosen Nodes Based on their Local Networks}

This method determines whether a node should be selected based on its local neighborhood.
It keeps on choosing the nodes in order from \textit{ordered\_list\/}, and skips a node $x$ if there exists its neighbor $y$ which is already chosen in the $top\text{-}k$ set such that, when all the neighbors of $y$ are ordered in decreasing order of their edge weights with $y$, then $x$ lies in the {\em first  half}.
Note that this fraction {\em half} is just a natural first guess; as it acts like a threshold, it can also be a fixed one for the entire network or it can be a function of the value of the chosen node.
Intuitively, this method does not eliminate a node when it is a good candidate in the local neighborhood of the already chosen nodes, that is, it is less likely to be influenced by or to influence the nodes in the $top\text{-}k$ set.

\subsection{Discounting Values of Neighbors of Chosen Nodes}

The elimination methods are strict owing to their $0/1$ nature of eliminating a node. Moreover, it is highly likely that it would not be possible to choose any more nodes using these methods beyond a certain $k$, resulting in a na\"ive selection of the unchosen nodes. One way to overcome these problems is to discount the values of the neighbors of the chosen nodes based on certain criteria instead of eliminating them. 

These methods run in $k$ steps where the value of each node gets updated in each step $t$. Let $top\text{-}k^{(t)}$ be the $top\text{-}k$ set in step $t$, and let $\phi_x^{(t)}$ be the value of node $x$ in step $t$. The initializing $top\text{-}k$ set can be given by $top\text{-}k^{(0)} = \{\}$, and the initializing value of a node $\phi_x^{(0)} = \phi_x$ is the original value allotted to it. In the methods of this category, we do not update the values of the nodes which are already chosen in the $top\text{-}k$ set, that is,
\begin{equation}
\nonumber
\phi_z^{(t)} = \phi_z^{(t-1)} \;\;\; \forall 
z \in top\text{-}k^{(t-1)}
\end{equation}
It is important that the values of the chosen nodes do not change after they get added to the $top\text{-}k$ set, since the discounting of the values of unselected nodes critically depends on these values.
In order to explain the methods of this category, let $y$ be the node chosen in step $t$, that is, $top\text{-}k^{(t)} \setminus top\text{-}k^{(t-1)} = \{y\}$. Also let $\mathcal{N}_w$ be the set of neighbors of a node $w$.

\subsubsection{Discounting Values of Neighbors of Chosen Nodes - I}
\label{app:discount1}

\begin{equation}
\nonumber
\phi_x^{(t)} = (1-\beta_{yx}) \, \phi_x^{(t-1)} \;\;\; \forall x \in \mathcal{N}_y \setminus top\text{-}k^{(t-1)}
\end{equation}

\begin{equation}
\nonumber
\text{or} \;\;\; \phi_x^{(t)} = \left( \prod_{w \in \mathcal{N}_x \cap top\text{-}k^{(t)}} (1-\beta_{wx}) \right) \, \phi_x^{(0)} 
\end{equation}

This discounting is natural in the independent cascade model of information diffusion where $\beta_{yx}$ corresponds to the parameter $p_{yx}$ (in independent cascade model, when node $y$ first becomes active at time $\tau$, it is given a single chance to activate each of its currently inactive neighbors $x$ at time $\tau+1$ and it succeeds with probability $p_{yx}$).
Since node $x$ would get activated because of node $y$ with probability $p_{yx}$, the value of node $x$ should be discounted by the factor of $p_{yx}$ whenever any of its neighbor $y$ gets chosen in the $top\text{-}k$ set. 
Equivalently, since node $x$ would not get directly activated by any of its neighbors that are chosen in the $top\text{-}k$ set, with probability $\prod_{w \in \mathcal{N}_x \cap top\text{-}k^{(t)}} (1-p_{wx})$, the value of node $x$ is updated using this factor. 
Note that we ignore the possibility that the influence of node $y$ can reach node $x$ in multiple hops.

\subsubsection{Discounting Values of Neighbors of Chosen Nodes - II}
\label{app:discount2}

\begin{equation}
\nonumber
\phi_x^{(t)} = \phi_x^{(t-1)} - \beta_{yx} \, \phi_x^{(0)} \;\;\; \forall x \in \mathcal{N}_y \setminus top\text{-}k^{(t-1)}
\end{equation}

\begin{equation}
\nonumber
\text{or} \;\;\; \phi_x^{(t)} = \left(1 - \sum_{w \in \mathcal{N}_x \cap top\text{-}k^{(t)}} \beta_{wx} \right) \phi_x^{(0)}
\end{equation}

This discounting is natural in the linear threshold model of information diffusion where $\beta_{yx}$ corresponds to the parameter $b_{x,y}$ (in linear threshold model, $b_{x,y}$ is the influence weight of node $y$ on node $x$ such that the sum of the influence weights from all of its incoming neighbors is at most $1$; node $x$ gets activated if the sum of the influence weights from its active incoming neighbors exceeds a certain threshold $\chi_{x}$ that is drawn from a uniform distribution in $[0,1]$). 
%Since node $x$ would get activated because of node $y$ with probability $b_{yx}$ ($\because \theta_{x} \sim_{\mathcal{U}} [0,1]$), the value of node $x$ should be discounted by the factor of $p_{yx}$ whenever any of its neighbor $y$ gets chosen in the $top\text{-}k$ set. 
Analogous to the argument in the previous method, since node $x$ would not get directly activated by any of its neighbors that are chosen in the $top\text{-}k$ set, with probability $\left(1 - \sum_{w \in \mathcal{N}_x \cap top\text{-}k^{(t)}} b_{w,x}\right)$, the value of node $x$ is updated using this factor. 
Note that in this method also, we ignore the possibility that the influence of node $y$ can reach node $x$ in multiple hops.

\subsubsection{Discounting Values of Neighbors of Chosen Nodes - III}
\label{app:discount3}

\begin{equation}
\nonumber
\phi_x^{(t)} = \phi_x^{(t-1)} - \beta_{xy} \, \phi_y^{(t-1)} \;\;\; \forall x \in \mathcal{N}_y \setminus top\text{-}k^{(t-1)}
\end{equation}

\begin{equation}
\nonumber
\text{or} \;\;\; \phi_x^{(t)} = \phi_x^{(0)} - \sum_{w \in \mathcal{N}_x \cap top\text{-}k^{(t)}} \left( \beta_{xw} \, \phi_w^{(t-1)} \right)
\end{equation}

This discounting is natural in both independent cascade and linear threshold models of information diffusion (note the swapping of $x$ and $y$ with respect to the previous methods).
In the independent cascade model, as node $x$ influences node $y$ directly with probability $p_{xy}$, it gets a fractional share of the value of node $y$ (since $x$ would be influencing other nodes indirectly, through $y$).
Now given that node $y$ is chosen in the $top\text{-}k$ set, the share of $y$'s value should be removed from the value of $x$. 
This method uses a simplified expression for this share, namely, $p_{xy} \, \phi_y^{(t-1)}$.
Similar argument leads this share to be $b_{y,x} \, \phi_y^{(t-1)}$ in the linear threshold model.
Note that we may be possibly removing more share than required since there may exist multiple neighbors of $x$, that are already chosen in the $top\text{-}k$ set, with shares of similar nature (for example, they may be likely to influence almost the same set of nodes).  Owing to this, it is possible for the value of a node to become negative.
\\

%\subsubsection{Discounting using the above discounts in a weighted way}

Furthermore, depending on the application, one may also update the values of the nodes using a suitable  combination of the aforementioned discounting methods.
The SPIC algorithm described in Section~\ref{sec:shapley_method} uses post-processing that is a combination of the methods presented in Sections~\ref{app:discount1} and \ref{app:discount3}.

\begin{note}
For any practical problem, the natural valuation function would lead to intractable computation for most solution concepts. Though approximate algorithms exist for several solution concepts, an alternative would to formulate a valuation function that closely resembles the problem and at the same time, facilitates efficient computation of solution concepts. 
\end{note}

\end{subappendices}

\blankpagewithnumber

%Authors: Swapnil Dhamal, Rohith D. Vallam, Y. Narahari
%Indian Institute of Science, Bangalore

\newcommand{\meanoverall}{0.35}
\newcommand{\meanpersonal}{0.40}
\newcommand{\meansocial}{0.30}
\newcommand{\stdoverall}{0.09}
\newcommand{\stdpersonal}{0.12}
\newcommand{\stdsocial}{0.08}

\chapter[Modeling Spread of Preferences in Social Networks for Scalable Preference Aggregation]{Modeling Spread of Preferences in Social Networks for Scalable Preference Aggregation
  \blfootnote{A part of this chapter is published as \cite{dhamal2013scalable}:
Swapnil Dhamal and Y. Narahari. Scalable preference aggregation in social networks. In {\em First AAAI Conference on Human Computation and Crowdsourcing (HCOMP)}, pages 42--50. AAAI, 2013.}
%\vspace{-3mm}
}

\label{chap:pasn}
%\textbf{\color{red}{Chapter \ref{chap:pasn} is yet to be updated.}}
%\author{
%Swapnil Dhamal, Rohith D. Vallam, and Y. Narahari,~\IEEEmembership{Fellow,~IEEE}
%\\
%\{swapnil.dhamal, rohithdv, hari\}@csa.iisc.ernet.in\\
%Department of Computer Science and Automation\\
%Indian Institute of Science\\
%Bangalore 560012, India\\

%\maketitle

%EDITS
\begin{comment}
Author details
Add rebuttal statements
Include GRUM
Detailed description of the types of relationships in the network sample
Description of all the questions asked
Demographics of the entire sample for any known attributes commonly used for stratification
Non-monotonicity of voting vindicated by non-monotonicity of plots
Maximin or minimax
Shorten submodularity proof
Simulations on collected data itself
Simulations using exact algorithm
Simulations are for exact k
Shift model section after simulation results
Error after fitting next-hop model and alpha-beta model
Recompute complexity
Acknowledgments
8 page limit
\end{comment}

%\begin{spacing}{.99}

\begin{quote}
%\vspace{-5mm}
%\begin{abstract} 
%
Given a large population, aggregating individual preferences over a set of alternatives could be extremely computationally intensive. Moreover, it may not even be feasible to gather preferences from all the individuals. If the individuals are nodes in a social network, we show that information available about this social network can be exploited to efficiently compute the aggregate preference, with knowledge of preferences of a small subset of representative nodes. To drive the research in this work, we have developed a Facebook app to create a dataset consisting of preferences of nodes for a range of topics as well as the underlying social network. We develop models that capture the spread of preferences among nodes in a typical social network. We next propose an appropriate objective function for the problem of selecting representative nodes, and subsequently propose two natural objective functions as effective alternatives for computational purposes. We then devise two algorithms for selecting the best representatives. For the first algorithm, we provide performance guarantees that require the preference aggregation rule to satisfy a property, expected weak insensitivity. For the second algorithm, we study desirable properties from a viewpoint of cooperative game theory. Also, we empirically find that the degree centrality heuristic performs quite well, demonstrating the ability of high-degree nodes to serve as good representatives of the population. The key conclusion of this work is that harnessing social network information in a suitable way will achieve scalable preference aggregation of a large population and will certainly outperform random polling based methods.
%\vspace{-2mm}
%\end{abstract}
%
\end{quote}

%\noindent
%\textbf{Keywords:}
%Social Networks, Homophily, Preference Aggregation, Facebook App, Degree Centrality, Coalitional Games.

%\vspace{-1mm}
%%%%%%%%%%%%%%%%%%%%%%%%%%%%%%%%%%%%%%%%%%%%%%%%%%%%%%%%%%%%%%%%%%%%%%%%%%%%%%%%%%%%%%%%%%%%%%%%%%

%\newpage

\section{Introduction}
\label{sec:intro_pasn}

There are several scenarios such as elections, opinion polls, public project initiatives, funding decisions, etc., where a society faces a number of alternatives. In scenarios where the society's collective opinion is of importance, there is a need to get the society's collective preference over these alternatives.
Ideally, one would want to obtain the preferences of all the individuals in the society and aggregate them so as to represent the society's preference. This process of computing an aggregate preference over a set of alternatives given individual preferences is termed {\em preference aggregation} and is a well-studied topic in the field of social choice theory.
It is generally assumed that the preferences of all the individuals are known.
However, more often than not, obtaining all the individual preferences is a difficult and an expensive process in itself owing to factors such as the individuals' lack of interest to provide a prompt, truthful, well-informed, and well-thought preference over the given set of alternatives.
In an effort to circumvent this problem, we turn towards harnessing any additional information regarding the society of individuals, in particular, the underlying social network.

Social network information has been harnessed for a variety of purposes, ranging from viral marketing to controlling epidemic spread, from determining the most powerful personalities in a society to determining the behaviors of people. Social networks serve to explain several phenomena which cannot be explained otherwise, primarily because such phenomena are caused by the social interactions which are captured in the social network itself. 
Many of these phenomena can be explained with an important feature of social networks - {\em homophily}~\cite{networkscrowdsmarkets}.
%~\cite{homophilyres}. 
Homophily refers to a bias in friendships towards similar individuals - individuals with similar interests, behaviors, opinions, etc. 
%These similarities are a principal cause of formation of friendships. 
The tendency of individuals to form friendships with others who are like them is termed {\em selection}. 
On the other hand, similarities may also be a result of friendships; people tend to change their behaviors to align themselves more closely with the behaviors of their friends; this process is termed {\em social influence}.
%~\cite{170}.
%, since the existing social connections in a network influence the individual characteristics of agents. 
% {\em socialization}~\cite{233} and
Hence selection and social influence can be viewed as complements of each other.
It is evident that social networks and homophily are inseparable. 
%Social influence can be viewed as the reverse of selection: with selection, the individual characteristics drive the formation of links, while with social influence, the existing links in the network serve to shape people's (mutable) characteristics.
%

We exploit this important feature of homophily in addressing the problem of preference aggregation.
%, which we explain next.
%which we explain next.
%In particular, we explain how human computation helps  not only in simplifying the task of aggregating preferences of a large population, 
%%reducing the computational complexity of aggregation rules, 
%but also in deducing data that is not available.
In particular, we study how the social network among the individuals of a society can be used to determine the best representatives among them, so that obtaining only the representatives' preferences would suffice to effectively (and efficiently) deduce the aggregate preference of the society.
In this chapter, we use the terms voters, individuals, agents, and nodes interchangeably, so also  neighbors and friends. 
%Also, it is to be noted that even though the paper is notation-intensive, most of the them are used only locally.

%\textbf{\color{red}{Scalable preference aggregation problem - representative set selection.}}

\subsection{Preliminaries}
\label{sec:prelim_pasn}

Given a set of alternatives, individuals have certain preferences over them. These alternatives could be any entity, ranging from political candidates to food cuisines.
%If an individual $i$ prefers alternative $X$ to $Y$, then we represent it as $X \succ_i Y$.
%For the purpose of our survey and 
%In this paper, 
We assume that 
%each individual's preferences are (a) complete, that is, for each pair $X$ and $Y$, either $X \succ_i Y$ or $Y \succ_i X$, but not both and (b) transitive, that is, if $X \succ_i Y$ and $Y \succ_i Z$, then $X \succ_i Z$. Owing to these two assumptions, 
an individual's preference is represented as a complete ranked list of alternatives.
%~\cite{networkscrowdsmarkets}.
%Henceforth, 
Throughout this chapter, we refer to a ranked list of alternatives 
as a {\em preference} and the multiset consisting of the individual preferences as {\em preference profile}.
%%%%% YN - Why multiset?
%%%%% SD - Because multiple individuals can have the same preference 
For example, if the set of alternatives is $\{X,Y,Z\}$ and individual $i$ prefers $Y$ the most and $X$ the least, then $i$'s preference 
%is $Y \succ_i Z \succ_i X$ or 
is written as
$(Y,Z,X)_i$.
Suppose individual $j$'s preference is $(X,Y,Z)_j$, then the preference profile of the population $\{i,j\}$ is $\{(Y,Z,X),(X,Y,Z)\}$.
A widely used measure of dissimilarity between two preferences is {\em Kendall-Tau distance} which counts the number of pairwise inversions with respect to the alternatives.
In this chapter, given that the number of alternatives is $r$, we normalize Kendall-Tau distance to be in $[0,1]$, by dividing the actual distance by \begin{scriptsize}$\dbinom{r}{2}$\end{scriptsize}, the maximum distance between any two preferences on $r$ alternatives.
%
%For aggregating the preferences of the individuals, we primarily use {\em Kemeny's rule}~\cite{kemeny1959mathematics}, which satisfies several desirable properties of preference aggregation~\cite{young1988condorcet}. 
%Kemeny's rule outputs all preferences that agree with as many pairwise preferences of the individuals as possible.
%
For example, the Kendall-Tau distance between preferences $(X,Y,Z)$ and $(Y,Z,X)$ is 2, since two pairs $\{X,Y\}$ and $\{X,Z\}$ are inverted between them. The normalized Kendall-Tau distance is $2/$\begin{scriptsize}$\dbinom{3}{2}$\end{scriptsize} $=\frac{2}{3}$.% $\approx 0.67$.
%The preferences of individuals $i$ and $j$ are $(X,Y,Z)_i$ and $(Y,Z,X)_j$, respectively.
%The Kendall-Tau distance between them is 2, because two pairs $\{X,Y\}$ and $\{X,Z\}$ are inverted between these preferences.
%With $i$ and $j$ as the only voters, the Kemeny's rule would output three aggregate preferences, namely, $(X,Y,Z), (Y,X,Z), (Y,Z,X)$, each agreeing with a total of 4 pairwise preferences across both the voters.
%xyz 3+1 = 4
%xzy 2+0 = 2
%yxz 2+2 = 4
%yzx 1+3 = 4
%zxy 1+1 = 2
%zyx 0+2 = 2

Preference aggregation is a well-studied topic in social choice theory. An {\em aggregation rule} takes a preference profile as input and outputs the {\em aggregate preference(s)}, which in some sense reflect(s) the collective opinion of all the individuals. 
We consider a wide range of aggregation rules for our study, which are extensions of voting rules such as Bucklin, Smith set, Borda, Veto, Minmax (pairwise opposition), Dictatorship, Random Dictatorship, Schulze, Plurality, Kemeny, and Copeland.
%%%%5 YN - please give a reference for the above rules
A survey of voting rules
and related topics 
can be found in \cite{brandt2012computational}.
A more concise table of voting rules and their properties can be found in \cite{wiki:voting}.
Of these rules, only Kemeny, Dictatorship, and Random Dictatorship output the entire aggregate preference; others either determine a winning alternative or give each alternative a score.
For the sake of consistency, for all rules except Kemeny, Dictatorship, and Random Dictatorship, we employ the following well accepted approach for converting a series of winning alternatives into an aggregate preference: rank a winning alternative as first, then vote over the remaining alternatives and rank a winning alternative in this iteration as second, and repeat until all alternatives have been ranked
\cite{brandt2012computational}.
As we are indifferent among alternatives, we do not assume any tie-breaking rule in order to avoid any bias towards any particular alternative while determining a winner.
So an aggregation rule may not output a unique aggregate preference, that is, it is a correspondance.
%%%%%%%%%%% YN - what does the above statement mean?
%%%%%%%%%%% SD - clarified now.
%winning alternative.
%So an aggregation rule may not output a unique preference

We now motivate the problem of determining {\em best} representatives for deducing the aggregate preference of a population or society in an effective and efficient way.

%\vspace{-2mm} %Above section spacing
\section{Motivation}
\label{sec:motiv_pasn}

In real-world scenarios, it may not be feasible to gather the individual preferences of all the voters owing to  factors like time and interest of the voters. One such scenario is preference aggregation in a large online social network where gathering all the preferences is an expensive process in terms of time as well as cost. 
Another scenario is that of a company desiring to launch a future product based on the feedback received from its customers. However, very few customers might be willing to respond to such feedback queries in a prompt and honest way, and devote the needed effort to provide a useful feedback. 
In such cases, if information on the underlying social network is available, one could harness the homophily property, which states that most friendship relations imply similar preferences. 
%We propose a model that captures this similarity of preferences in a social network. 
In order to estimate the aggregate preference of the entire population, an attractive approach would be to select a subset of individuals based on such properties and incentivize those individuals to report their preferences.
We refer to these individuals as {\em representatives}.

Another motivation for this work comes from the field of computational voting theory. Most aggregation rules are extremely computationally intensive. Hence, it is important to have efficient ways of estimating the aggregate preferences. In almost all aggregation rules (apart from those similar to dictatorship), as the number of voters decreases, computation of the aggregate preference becomes faster.
The problem tackled in this chapter is potentially one such approach where we use a subset of preferences to arrive at an acceptable aggregate preference. 
As we will see, the manner in which we aggregate the preferences reduces the number of voters, and weighs their preferences appropriately. This helps speed up the computation of aggregate preference for most aggregation rules, since we do not have to parse the preferences of all the voters and also owing to the reduced number of distinct preferences.

%\vspace{-2mm} %Above section spacing
\section{Relevant Work}
\label{sec:relevant_pasn}

There have been studies in the literature that deal with the influence of social networks on voting in elections.
The pioneering Columbia and Michigan political voting research is discussed in~\cite{sheingold1973social} with an emphasis on importance of the underlying social network. 
%Certain hypotheses about the relationship of social networks to political party choice are tested on data from an Israeli election survey, wherein 
It has been observed that the social network has higher impact on one's political party choice than background attributes like class or ethnicity~\cite{burstein1976social}.
A similar study finds that social influence through ties plays a more important role than similarities in attributes such as religion, education, and social status~\cite{Nieuwbeerta2000313}. 
It has also been observed that conversations with partisan discussants and family members do act as a statistically significant influence on voting~\cite{pattie1999british}.
Ryan~\cite{ryan2011shortcut} suggests that social communication is a useful information shortcut for uninformed independents, but not uninformed partisans, while informed individuals incorporate biased social messages into their candidate evaluations.
%
%There have been studies on the role of social networks towards an individual's political participation itself~\cite{sonkhey2012correct,nickerson2008two}.
%{Nir21122005,lake1998capital}.
%
McClurg~\cite{Mcclurg01122003,McClurg2006electoral} argues that interactions in social networks have a strong, though often overlooked, influence on voting, since interactions allow an individual to gather information beyond personal resource constraints.
%; hence he shows that models of political participation that do not account for informal social interaction are theoretically underspecified.
%
%and that the level of political sophistication in an individual's social network exerts a positive influence on participation~\cite{McClurg2006electoral}.
%
%In an experiment targeting households with two registered voters, Nickerson~\cite{nickerson2008two} finds that 60\% of the propensity to vote is passed onto the other member of the household.
%
It has also been argued that, though it is obvious and well-known that social networks play a vital role in voting decisions, it has been overlooked by political scientists in their analysis~\cite{zuckerman2005social}.
%Zuckerman~\cite{zuckerman2005social} mentions in his book that ``It is both obvious and well-known that the immediate social circumstances of people's lives influence what they believe and do about politics. Even so, relatively few political scientists incorporate these principles into their analysis''.

Williams and Gulati~\cite{williams2007social} investigate the extent of Facebook profile use in the 2006 U.S. election, and analyze which candidates were more likely to use them, with what impact on their vote shares. Facebook users could register their support for specific candidates and also receive notifications when their Facebook friends registered support for a candidate. 
It was observed that the candidates' Facebook support had a significant effect on their final vote shares.
%Competitiveness of the race was the only variable to have a significant effect on whether or not a Senate candidate campaigned on Facebook. 
%       
The impact of social networks has also been compared with that of mass media communication with respect to voting choices, where it is observed that social discussions outweigh the media effect \cite{beck2002calculus}, and that both the effects should be studied together \cite{campus2008social}.
Boldi et al.~\cite{Boldi2009votingnetworks} study the properties of a voting system suited for electronically mediated social networks, where an individual can either vote or appoint another individual who can vote on his/her behalf.
On the other hand, it has also been argued via a maximum likelihood approach to political voting, that it is optimal to ignore the network structure~\cite{conitzer2012should}.

There have been works on modeling homophily in social networks \cite{mcpherson2001birds,zhang2005learning,krivitsky2009representing,xiang2010modeling}.
%,currarini2013simple}.
%
Results of certain behavioral experiments suggest that agents compromise their individual preferences to achieve unanimity in a situation where agents gain some utility if and only if the entire population reaches a unanimous decision~\cite{kearns2009behavioral}. The scenario in a real group is similar, where, members who do not comply with group norms, either eventually compromise or leave the group to evade the tension between the preferences.

%\textbf{\color{red}{
%GRUM
%There is an emerging literature on random utility models and their role in social choice and inference about ground truth rankings vs personal choice rankings. It relates back to the Barry-Pakes model in econometrics of personal choice, and thus the same (generalized) random utility models can handle both social and idiosyncratic preferences and might be of interest to the authors. Active learning can also be combined with these probabilistic models. For example,
%Hossein Azari Soufiani, David C. Parkes, Lirong Xia, Preference Elicitation For General Random Utility Models, (UAI 2013) \cite{grum2013uai}. 
%}}

There have been efforts to detect the most critical nodes in social networks in other contexts \cite{jacksonbook,networkscrowdsmarkets} such as influence maximization~\cite{kempe2003maximizing}, limiting the spread of misinformation~\cite{budak2011limiting}, virus inoculation~\cite{abbassi2011toward}, etc.
%There have been efforts to detect the most critical nodes in social networks with respect to other contexts, for instance, influence maximization~\cite{kempe2003maximizing}, limiting the spread of misinformation~\cite{budak2011limiting}, etc.
There is extant literature on modeling individual preferences using {\em general random utility models} which consider the attributes of alternatives and agents. The most relevant work here considers the problem of node selection by exploiting these attributes~\cite{grum2013uai}; however, the underlying network structure is not taken into consideration.

%We will refer to relevant articles as required. However, 
To the best of our knowledge, there do not exist any models that capture the way preferences are spread among nodes in a social network and furthermore,
there do not exist any attempts to determine critical nodes that represent the aggregate preference of nodes in a social network. 
%Also, our work focuses  on not only the winning alternative, but considers the entire aggregate preference.

%\begin{comment}
%\vspace{-2mm} %Above section spacing
\section{Contributions of this Chapter}
\label{sec:contrib_pasn}
We are given a social graph with a set of individuals $N$ and an aggregation rule $f$.
% that aggregates the preferences of the entire population $N$. 
Our objective is to choose a subset 
of nodes of a given (small) cardinality such that the aggregate preference of the nodes in this subset closely approximates the aggregate preference of the entire set of nodes. To solve this problem,
we proceed as follows in this chapter.
%Enlisted are the major contributions of the paper.

\begin{itemize}
\item
To drive the research in this work, we first develop a Facebook app for eliciting the preferences of individuals for a range of topics, while also obtaining the social network among them. The Facebook app has helped us obtain a real-world dataset consisting of over 1000 nodes.
We propose a number of simple yet faithful models with the aim of capturing how preferences are spread in a social network; some of these models provide an acceptable fit to the data collected above. (Section~\ref{sec:modeling})

\item
We formulate an appropriate objective function for the problem of determining critical nodes that best represent a given social network in terms of preferences. 
We propose a property which we call {\em expected weak insensitivity}, which captures the robustness of an aggregation rule, and we show that several aggregation rules satisfy this property.
We then establish a fundamental relation between (a) the closeness of the chosen representative set to the population in terms of expected distance and (b) the error incurred in the aggregate preference if that set is chosen as the representative set.
Following this, we present two alternate objective functions that are appropriate and natural for this setting and show the existence of efficient approximation algorithms for optimization problems involving these two objective functions. (Section~\ref{sec:problem_pasn})
%they are non-negative, monotone, and submodular, which immediately makes our approach amenable for use of efficient approximation algorithms. 

\item
We then propose algorithms for selecting the best representatives. Our algorithms include the popular greedy hill-climbing algorithm for optimal set selection. We also provide a guarantee on the performance of one of the algorithms (Greedy-min), subject to the aggregation rule satisfying the expected weak insensitivity property, and study desirable properties of one other algorithm (Greedy-sum) from the viewpoint of cooperative game theory. We also see that the degree centrality heuristic performs very well, thus demonstrating the ability of high-degree nodes to serve as good representatives of the population. We conclude that instead of using the common and popular method of random polling, it is better to harness social networks for scalable preference aggregation as a more effective and reliable approach. (Section~\ref{sec:results})
%%%%%% YN - We should give some names to these algorithms
%%%%%% SD - Given.

\item
We provide insights on the effectiveness of our approach for aggregating preferences with respect to personal and social topics, as well as a note on an alternate way of defining the distribution of preferences between nodes. 
%We conclude with a list of future directions to this work. 
(Section~\ref{sec:conclusion_pasn})

\end{itemize}

We believe the results in this chapter offer a rigorous model for capturing spread of preferences in a social network, leading to an efficient approach
for scalable preference aggregation for large-scale social networks.
%\end{comment}
%

\section{Modeling Spread of Preferences in a Social Network}
\label{sec:modeling}

In this section, we first introduce the idea of modeling the spread of preferences in a social network with the help of an analogy to modeling information diffusion. We then describe the dataset obtained through our Facebook app and hence develop a number of simple yet faithful models for deducing the spread of preferences in a social network.

%\vspace{-2mm} %Above section spacing
\subsection{An Analogy to Information Diffusion Models}

Several models have been proposed for studying how a piece of information diffuses in a social network, the most popular and well-studied being the independent cascade (IC) model and the linear threshold (LT) model \cite{kempe2003maximizing}. Given a weighted and directed network, and a set of seed nodes where the information starts propagating, these models provide ways of computing the individual probabilities of each node receiving the information (or getting influenced) given a particular seed set, and hence the expected spread achieved at the end of a diffusion process. 

The goal of this section is to develop models on similar lines, which will deduce the preference of each node and how similar it would be to the nodes in a given representative set, and hence compute the expected error in the aggregate preference obtained by considering only the preferences of the given set of representative nodes.
These models can also be seen as extending the modeling of similarities between connected nodes to that between unconnected nodes.

%\textbf{\color{red}{Position well.}}
% % % % % % % % % % % % % % % % % % % % % % % % % % % % % % % % % %
\begin{sidewaystable}
\begin{small}
\begin{tabular}{|c|c|c|c||c|c|c|c|}
\hline
%\T \B
\multicolumn{4}{|c||}{\T \B Personal}	&	\multicolumn{4}{c|}{\T \B Social}	
\\ \hline \T \B
Hangout	&	Chatting 			&	Facebook 	&	Lifestyle	&	Website		&	Government 		&	Serious 	&	Leader
\\ \T \B
Place			&	App					& Activity		&						&	visited		&	Investment			& Crime	&
\\ \hline \T \B
Friend's place 						&	WhatsApp			&	Viewing Posts				 & 		Intellectual		&	Google		&	Education					&	Rape		&		N. Modi (India) 
\\ \T \B
Adventure Park	&	Facebook chat	&	Chatting 					 & Exercising  &	Facebook	&		Agriculture		&	Terrorism 			& B. Obama (USA)
\\ \T \B
Trekking				&	Hangouts 					&	Posting 	& Social activist						&	Youtube		&	Infrastructure 				&	Murder				&	D. Cameron (UK) 
\\ \T \B
Mall	&	SMS		&	Games / Apps		& Lavish					&	Wikipedia	&	Military						&	Corruption	&	V. Putin (Russia)
\\ \T \B
Historical Place		&	Skype					&	Marketing			& Smoking			&	Amazon	&	Space exploration	&	Extortion		&	X. Jinping (China)
\\ \hline
\end{tabular}
\end{small}
\caption{Topics and alternatives in the Facebook app}
\label{tab:questions}
%\vspace{-2mm} %Below figure spacing
\end{sidewaystable}

\begin{figure}[t!]
\centering
\includegraphics[scale=.45]{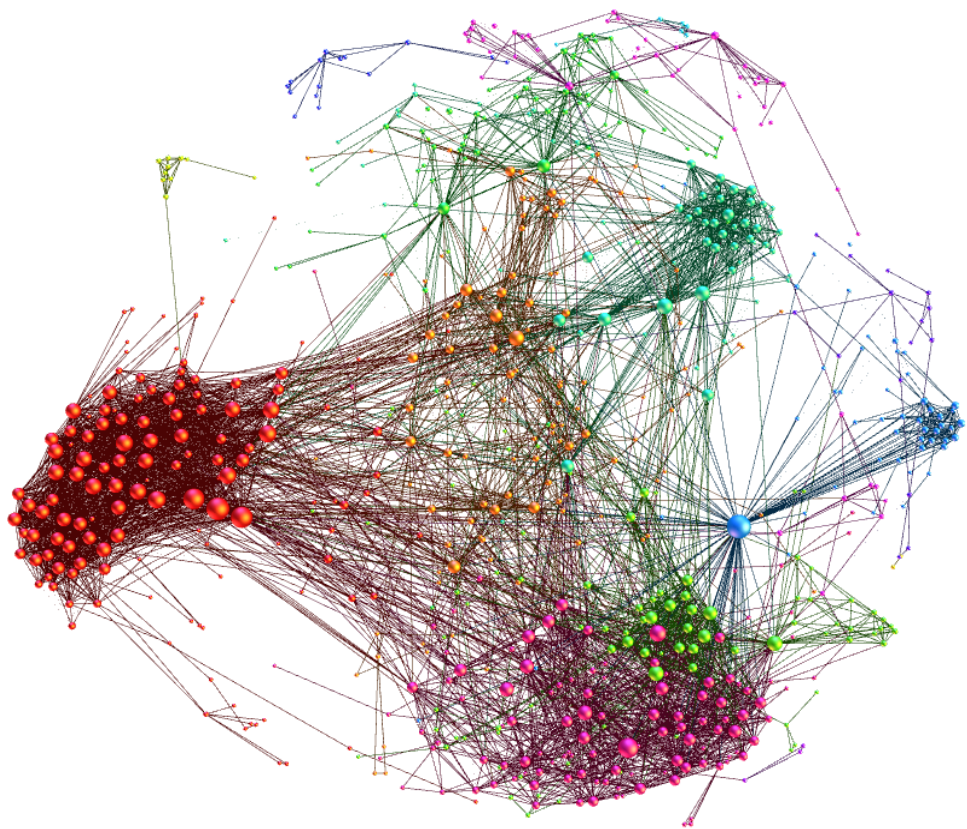}
%\vspace{-7mm} %Below figure spacing
\caption{Network of the app users}
\label{fig:network_app_users}
%\vspace{-2mm} %Below figure spacing
\end{figure}

%\vspace{-2mm} %Above section spacing
\subsection{The Facebook App}

In order to develop such models, there was a need of a dataset that consists of (a) preferences of nodes for a range of topics and (b) the underlying social network.
Keeping this goal in mind, we developed a Facebook app titled {\em The Perfect Representer}, which asked the app users to give their preferences for 8 topics, over 5 alternatives per topic. 
The topics, which were broadly classified into personal and social types, and their alternatives are listed in Table~\ref{tab:questions}; the ordering of alternatives from top to bottom is based on the aggregate preference of the population as per the Borda count aggregation rule.
This enabled us to not only model the spread of preferences in a social network, but also validate our algorithms for selecting the best set of representative nodes in the network.
The app got over 1000 users and 500 likes. Figure~\ref{fig:network_app_users} shows the network of app users. More details about the app are provided in
%Supplementary material.
Appendix~\ref{app:facebook_app}.
%The app page is \texttt{www.facebook.com/theperfectrepresenter}. 
%A primary reason for developing the App was unavailability of public data containing preferences of nodes as well as the underlying social network.

%\subsection{Preprocessing the Obtained Data for Model-Fitting}

The obtained data consisted of 1006 nodes and 7112 edges, however it was necessary to preprocess the data before using it for model-fitting. For instance, we eliminated nodes which had given response to fewer than 6 topics so as to consider only those nodes which have responded to sufficient number of topics.
Furthermore, in order to observe the network effect, it was necessary that all the nodes belonged to the same component; so we considered only the giant component which consisted of 844 nodes and 6129 edges.

Our focus throughout this chapter will be on aggregating preferences across all topics or issues, without making any distinction among them with respect to their type or nature, for instance, personal or social. However, in practice, it maybe advantageous to consider different types separately, and choose representatives based on the type of topic or issue. We provide a brief note on this in Section~\ref{sec:conclusion_pasn}.

%\vspace{-2mm} %Above section spacing
\subsection{Modeling Distance between Preferences of Pairs of Nodes}

In order to study the distance between preferences of two nodes, we used the measure of normalized Kendall-Tau distance.
The distribution of distance between preferences of most pairs of nodes (connected as well as unconnected), over the considered topics, was observed to follow a bell curve (most distances were clubbed together, with few of them spread apart). As a preliminary fit to the data, we considered Gaussian distribution since it is a natural and most commonly observed distribution in real-world applications. Since the range of values taken by the distance is between 0 and 1, we considered Truncated Gaussian distribution. Furthermore, as the range of values is discrete, we considered a discrete version of truncated Gaussian distribution; denote it by $\mathcal{D}$.
The discretization can be done in the following way.
We know that when the number of alternatives is $r$, 
%the number of possible values for Kendall-Tau distance is $\frac{r(r-1)}{2}+1$ in the range $[0,1]$, with equal distances between any two consecutive values. So 
the distance between consecutive discrete values is $1/$\begin{scriptsize}$\dbinom{r}{2}$\end{scriptsize} $=\frac{2}{r(r-1)}$.
Let $F$ be the cumulative distribution function of the continuous truncated Gaussian distribution. So the value of the probability mass function of $\mathcal{D}$ at $x$ can be shown to be
\begin{displaymath}
F\left(\min\left\{x+\frac{1}{r(r-1)},1\right\}\right) - F\left(\max\left\{x-\frac{1}{r(r-1)},0\right\}\right).
\end{displaymath}

So for any pair of nodes, the distance between their preferences was observed to follow  distribution $\mathcal{D}$. Let 
%the mean of this distribution corresponding to pair $\{i,j\}$, or 
the expected distance between nodes $i$ and $j$ with respect to their preferences, be denoted by $d(i,j)$.
%, and some standard deviation.
% (our proposed approach is agnostic to standard deviation).
%Note that the value of $d(i,j)$ is different for personal and social types of issues.
%As our treatment for both types of issues is the same, we refer to the expected distance 
%between $i$ and $j$ 
%simply as $d(i,j)$; the actual value depends on the type of issue under consideration.
\begin{comment}
Also, for any particular pair, the distance between their preferences was observed to follow  distribution $\mathcal{D}$ with varying values of mean and standard deviation depending on the network structure.
\end{comment}
Let {\em distance matrix} be a matrix whose cell $(i,j)$ is $d(i,j)$ and {\em similarity matrix} be a matrix whose cell $(i,j)$ is $c(i,j) = 1-d(i,j)$.
%
%Henceforth, we will assume that the similarity matrix is known to us. We will later discuss a more practical setting where the matrix itself needs to be derived.
%
Following are certain statistics about $d(i,j)$'s over all pairs of nodes in the obtained Facebook app data. 
% are given in Table~\ref{tab:dist}.
The mean ($=\avg_{\{i,j\}} d(i,j)$) for overall, personal, social type of topics was respectively \meanoverall, \meanpersonal, \meansocial, while the standard deviation was respectively \stdoverall, \stdpersonal, \stdsocial.

The parameters of the distribution ($\mu_{ij}$ and $\sigma_{ij}$ of the original Gaussian distribution from which $\mathcal{D}$ is derived) for every pair $\{i,j\}$ were obtained using maximum likelihood estimation (MLE).
%

%\textbf{\color{red}{$\mu$ $d(i,j)$ owing to thin tail}}

%\vspace{-2mm} %Above section spacing
\subsection{Modeling Spread of Preferences in a Social Network}

Recollect that the primary objective of modeling the spread of preferences in a social network is to identify a best set of representative nodes for the entire social network. In order to do this, we need not only the distribution of distances between connected nodes, but also that between unconnected nodes. As an analogy to modeling information diffusion, in order to identify a best set of seed nodes for maximizing diffusion, we need the influence of a candidate set not only on its neighbors, but also on distant nodes. 
%The independent cascade and linear threshold models are the two most popular models which achieve this purpose. 

Given the preferences of a set of nodes (call it {\em initializing set}), our models aim to deduce the possible preferences of all the nodes in the social network. If this model is run for several iterations, say $\mathbb{T}$, with a randomized initializing set in each iteration, we would have deduced the preferences of nodes for these $\mathbb{T}$ generated (or simulated) topics (and hence $\mathbb{T}$ preference profiles). This would then enable us to deduce the distribution of distances between unconnected nodes as well.
Since the deduced preferences and hence the distances are randomized, in the remainder of this chapter, we will address each of our models as a {\em Random Preferences Model (RPM)}.
%
%The suitability of the models was judged based on mean squared KL divergence and mean squared difference between the means.

%Dhamal and Narahari \cite{dhamal2013scalable} propose a preliminary model on the lines of shortest path, based on a very small dataset. 

In the models that we propose, we partition the nodes into two sets at all time steps, namely, (1) {\em assigned nodes} which are assigned a preference, and (2) {\em unassigned nodes} which are not assigned a preference as yet.
Let {\em potentially next nodes} be the subset of unassigned nodes, which have at least one neighbor in the set of assigned nodes.
Starting with the nodes in the initializing set as the only assigned nodes,
a node is chosen uniformly at random from the set of potentially next nodes at each time step, and is assigned a preference based on the preferences of its {\em assigned neighbors} (neighbors belonging to the set of assigned nodes).
Algorithm~\ref{alg:generic_model} presents a generic model on these lines.
We now present a number of models as its special cases.

\begin{algorithm}[t!]
%\DontPrintSemicolon
\KwIn{Connected graph $G$ with parameters $\mu,\sigma$ on its edges, Number of generated (or simulated) topics $\mathbb{T}$}
\KwOut{Preference profiles for $\mathbb{T}$ generated topics}
%\textbf{Initialize:} $\alpha_{i,1} = 0 , \beta_{i,1} = 0 \  \forall i \in \{1,2, \ldots, k\}$\\
\For{$t \gets 1$ \textbf{to} $\mathbb{T}$}{
Randomly choose an initializing set of certain size $s$\;
Assign preferences to nodes in this initializing set\;
\For{$i \gets 1$ \textbf{to} $n-s$}{
Choose an unassigned node $u$ uniformly at random from the set of potentially next nodes\;
Assign a preference to $u$ based on:\\
 (i) the model under consideration and \\
 (ii) either \\~~~~~(a) the preferences of its assigned neighbors or \\~~~~~(b) the preference of one of its assigned neighbors which is chosen based on a \text{~~~~~~~~~~}certain criterion\;
}
}
\caption{{A generic model for spread of preferences in a social network}}
\label{alg:generic_model}
\end{algorithm}

\subsubsection{{Independent Conditioning (RPM-IC)}}
Let $P_j$ be the random preference to be assigned to a node $j$ and $A_j$ be the set of {\em assigned neighbors} of node $j$. So given the preferences of its assigned neighbors, the probability of node $j$ being assigned a preference $p_j$ is 
\begin{align*}
& \mathbb{P}\left(P_j = p_j | (P_{i} = p_{i})_{i \in A_j}\right)
\\&= \frac{\mathbb{P}\left((P_{i} = p_{i})_{i \in A_j} | P_j = p_j\right) \; \mathbb{P}(P_j = p_j)}{\sum_{p_j} \mathbb{P}\left((P_{i} = p_{i})_{i \in A_j} | P_j = p_j\right) \;  \mathbb{P}(P_j = p_j)}
\\&\propto \mathbb{P}\left((P_{i} = p_{i})_{i \in A_j} | P_j = p_j\right)
\end{align*}

The proportionality results since the denominator is common, and $\mathbb{P}(P_j = p_j) = \frac{1}{r!}$ for all $p_j$'s (assuming no prior bias).

%\subsubsection{\textbf{Independent Conditioning (RPM-IC)}}
%In this model, 
Now we make a simplifying assumption of mutual independence among the preferences of assigned neighbors of node $j$, given its own preference. So the above proportionality results in
\begin{align}
\label{eqn:rpmic}
& \mathbb{P}\left(P_j = p_j | (P_{i} = p_{i})_{i \in A_j}\right)
%\\&= \frac{\prod_{i \in A_j} \mathbb{P}(P_{i} = p_{i} | P_j = p_j) \;  \mathbb{P}(P_j = p_j)}{\sum_{p_j} \mathbb{P}((P_{i} = p_{i})_{i \in A_j} | P_j = p_j)  \; \mathbb{P}(P_j = p_j)}
%\\&
\propto \prod_{i \in A_j} \mathbb{P}(P_{i} = p_{i} | P_j = p_j)
\end{align}
 %,
 We now see how to compute $\mathbb{P}\left(P_{i} = p_{i} | P_j = p_j\right)$.
 Let $D_{ij}$ be the random variable corresponding to the distance between nodes $i$ and $j$ (as described earlier, we assume that $D_{ij}$ has distribution $\mathcal{D}$ with the values of $\mu$ and $\sigma$ depending on the pair $\{i,j\}$), and $\tilde{d}(p_i,p_j)$ be the distance between preferences $p_i$ and $p_j$. So,
 %Now since $D_{ij}$ is independent of $P_j$, we have
\begin{align}
\nonumber
&\mathbb{P}\left(P_{i} = p_{i} | P_j = p_j\right) 
\\
\nonumber
&= \mathbb{P}\left(P_{i} = p_{i} , D_{ij} = \tilde{d}(p_i,p_j) | P_j = p_j\right) 
\\
\nonumber
&\;\;\;\;\;\;\;\;\;\;\;\;\;\;\;\;\;\;\;\;\;\;\;\;\;\;\;\;\;\;\;\;\;\;\; (\because \text{given } P_j, \text{ we have } P_i \cap D_{ij} = P_i)
\\
\nonumber
&= \mathbb{P}\left(D_{ij} = \tilde{d}(p_i,p_j) | P_j = p_j\right) \;\mathbb{P}\,\Big(P_{i} = p_{i} | D_{ij} = \tilde{d}(p_i,p_j) , P_j = p_j\Big)
\\
\label{eqn:conditional}
&= \mathbb{P}\left(D_{ij} = \tilde{d}(p_i,p_j)\right) \;\mathbb{P}\left(P_{i} = p_{i} | D_{ij} = \tilde{d}(p_i,p_j) , P_j = p_j\right)
\\
\nonumber
&\;\;\;\;\;\;\;\;\;\;\;\;\;\;\;\;\;\;\;\;\;\;\;\;\;\;\;\;\;\;\;\;\;\;\;\;\;\;\;\;\;\;\;\;\;\;\;\;(\because D_{ij} \text{ is independent of } P_j)
\end{align}
Here, $\mathbb{P}\left(D_{ij} = \tilde{d}(p_i,p_j)\right)$ can be readily obtained.
Moreover, as we assume that no preference has higher priority than any other, $\mathbb{P}\left(P_{i} = p_{i} | D_{ij} = \tilde{d}(p_i,p_j) , P_j = p_j\right)$ is precisely the reciprocal of the number of preferences which are at distance $\tilde{d}(p_i,p_j)$ from a given preference. 
This value can be expressed in terms of distance $\tilde{d}(p_i,p_j)$ and the number of alternatives.
As an example for the case of 5 alternatives, the number of preferences which are at a normalized Kendall-Tau distance of $0.1$ (or Kendall-Tau distance of 1) from any given preference is 4; for example, if the given preference is $(A,B,C,D,E)$, the 4 preferences are $(B,A,C,D,E)$, $(A,C,B,D,E)$, $(A,B,D,C,E)$, $(A,B,C,E,D)$. It is clear that this count is independent of the given preference.

The initializing set for this model is a singleton set chosen uniformly at random, and is assigned a preference chosen uniformly at random from the set of all preferences.
For each unassigned node, this model computes probabilities for each of the $r!$ possible preferences, by looking at the preferences of its assigned neighbors, and hence chooses exactly one preference based on the computed probabilities (multinomial sampling). So the time complexity of this model for assigning preferences for $\mathbb{T}$ topics is $O(r!(\sum_{i\in N} {deg}(i))\mathbb{T}) = O(r!m\mathbb{T})$, where $deg(i)$ is the degree of node $i$.
%\textbf{\color{red}{Time complexity.}}

\subsubsection{{Sampling (RPM-S)}}
In this model, an unassigned node $j$ is assigned a preference based on a distance sampled from the distribution with one of its assigned neighbors, say $i$, that is, from the distribution $\mathcal{D}$ having the parameters $(\mu_{ij},\sigma_{ij})$. This assigned neighbor could be selected in multiple ways; we enlist three simple ways which we have used in our experimentation:

%\textbf{\color{red}{Details.}}

%\subsubsection*{\textbf{Selection of assigned neighbor}} 
%In the aforementioned models where exactly one assigned neighbor is to be selected in order to deduce the preference of an unassigned neighbor $j$, this selection can be implemented in multiple ways. We enlist three simple ways which we have used in our experimentation.

\begin{enumerate}
\item[a)] \textit{Random:} A node is selected uniformly at random from $A_j$. 
This is the most natural way and is immune to overfitting.
\item[b)] \textit{$\mu$-based:} A node $i$ is selected randomly from $A_j$ with probability proportional to $1-\mu_{ij}$. 
This is consistent with the empirical belief that a node's preference should depend more on its similar friends.
\item[c)] \textit{$\sigma$-based:} A node $i$ is selected randomly from $A_j$ with probability proportional to $1/\sigma_{ij}$. 
This is statistically the best choice because, giving lower priority to distributions with low standard deviations may result in extremely large errors.
\end{enumerate}

The initializing set for this model is a singleton set chosen uniformly at random, and is assigned a preference chosen uniformly at random from the set of all preferences.
For each unassigned node, this model selects an assigned neighbor in one of the above ways, samples a distance value from the corresponding distribution, and chooses a preference uniformly at random from the set of preferences which are at that distance from the preference of selected assigned neighbor. So the time complexity of this model for assigning preferences for $\mathbb{T}$ topics is $O(r!(\sum_{i\in N} {deg}(i))\mathbb{T}) = O(r!m\mathbb{T})$.
%\textbf{\color{red}{Time complexity.}}

\subsubsection{{Duplicating (RPM-D)}}
In this model, node $j$ is assigned a preference by duplicating the preference of its most similar assigned neighbor. 
This model pushes the similarity between a node and its most similar assigned neighbor to the extreme extent that, the preference to be assigned to the former is not just similar to the latter, but is exactly the same.
%As we will see in Section~\ref{sec:problem_pasn}, the way we aggregate preferences is equivalent to duplicating the preference of each representative as per the number of nodes it represents. So this model is consistent with the way we aggregate the preferences of representatives in a weighted manner.
%\textbf{\color{red}{Time complexity.}}

The initializing set for this model is a connected set of certain size $s$ which is obtained using the following iterative approach: start with a node chosen uniformly at random and then continue adding a new node to the set uniformly at random, from among the nodes that are connected to at least one node in the set. In our experiments, we choose $s$ itself to be uniformly at random from $\{1,\ldots,\lceil\sqrt{n}\rceil\}$. The nodes in this initializing set are assigned preferences based on RPM-IC.
The time complexity of this model for assigning preferences for $\mathbb{T}$ topics is $O((\sum_{i\in N} {deg}(i))\mathbb{T}) = O(m\mathbb{T})$.
%\textbf{\color{red}{Details.}}

\subsubsection{{Random (RPM-R)}}
In this model, preferences are assigned randomly to all the nodes without considering the distribution of distances from their neighbors, that is, without taking the social network effect into account. This model can be refined based on some known bias in preferences, for instance, if the topic is of social type, there would be a prior distribution on preferences owing to common external effects such as mass media.
The time complexity of this model for assigning the preferences for $\mathbb{T}$ topics is $O(n\mathbb{T})$.
%\textbf{\color{red}{Time complexity.}}

%\subsection{Obtaining Distances between Unconnected Nodes}
\subsubsection{{Mean Similarity Model - Shortest Path Based (MSM-SP)}}

Unlike the models discussed so far, this model does not deduce the spread of preferences in a social network. Instead, it deduces the mean similarity between any pair of nodes, given the mean similarities of connected nodes.

Recall that cell $(i,j)$ of a {\em distance matrix} contains $d(i,j)$, the expected distance between preferences of nodes $i$ and $j$. 
We initialize all values in this matrix to $0$ for $i=j$ and to $1$ (the upper bound on the value of the distance) for any unconnected pair $\{i,j\}$.
%, because in the absence of any information, even if nodes are unconnected, they may or may not be dissimilar. 
In the case of a connected pair $\{i,j\}$, the value $d(i,j)$ is initialized to the actual observed expected distance (this value is known).
%$\alpha \beta ^ {-\gamma_{ij}}$, where $\alpha$ and $\beta$ are based on the type of issue.
%
%Once the values for $d(i,j)$ are set in the distance matrix for every connected pair $\{i,j\}$ and every $i = j$, the next question is about the distances between all pairs of nodes in the network, including updating the existing entries, if applicable.
Following the initialization of the distance matrix, the next step is to update it.
%
%We now present a procedure that will help us describe our model.

%\textbf{\color{red}{Assumption that all distributions are Gaussian.}}
%\begin{description}
%\item [Procedure 1] 
Consider nodes $\{p,i,j\}$ where we know the expected distances $d(p,i)$ and $d(p,j)$ and we are interested in finding $d(i,j)$ via node $p$.
Given the preference of node $p$ and $d_x=d(p,i)$, let the preference of node $i$ be chosen uniformly at random from the set of preferences that are at a distance $\eta$ from the preference of node $p$, where $\eta$ is drawn from distribution $\mathcal{D}$  with mean $d_x$ (and some standard deviation).
Similarly, given $d_y=d(p,j)$, let the preference of node $j$ be obtained.
Using this procedure, the distance between the obtained preferences of nodes $i$ and $j$ via $p$ over several iterations and varying values of standard deviations, was observed to follow a bell curve; so we again approximate this distribution by $\mathcal{D}$.
Let the corresponding expected distance constitute the cell $(d_x,d_y)$ of a table, say $T_r$, where $r$ is the number of alternatives (for the purpose of forming a table, we consider only finite number of values of $d_x,d_y$).
It is clear that this distance is independent of the actual preference of node $p$.
%\end{description}
%\textbf{\color{red}{Multiple of 0.01.}}

We empirically observe that $T_r$ is different from $T_{r'}$ for $r \neq r'$.
Following are the general properties of $T_r$:
%, which can be proved analytically:
\begin{itemize}
\item $T_r(d_y,d_x) = T_r(d_x,d_y)$
\item $T_r(1-d_x,d_y) = T_r(d_x,1-d_y) = 1-T_r(d_x,d_y)$
\item $T_r(1-d_x,1-d_y) = T_r(d_x,d_y)$
\end{itemize}
We define an operator $\text{\textcircled{+}}_r$ as follows:
\begin{displaymath}
  d_x~\text{\textcircled{+}}_r~d_y=\begin{cases}
    T_r(d_x,d_y), & \text{if $d_x \leq 0.5$ and $d_y \leq 0.5$}\\
    \max\{d_x,d_y\}, & \text{if $d_x > 0.5$ or $d_y > 0.5$}
  \end{cases}
\end{displaymath}
The two different cases while defining $\text{\textcircled{+}}_r$ are based on the reasonable assumption that $d(i,j)$ via $p$ should be assigned a value which is at least $\max\{d(p,i),d(p,j)\}$ (but $T_r$ does not follow this rule when either $d(p,i)$ or $d(p,j)$ exceeds 0.5).

\begin{table}[h]
\begin{center}
\begin{tabular}{|>{\small}c|>{\small}c|>{\small}c|>{\small}c|>{\small}c|>{\small}c||>{\small}c|}
\hline 
0.00 &  0.10 &  0.20 &  0.30 &  0.40 &  0.50 & \slashbox{$d_x$}{$d_y$}	\\ \hline  \hline
%     \T \B 
     0.00   &  0.10  &  0.20  &  0.30  &  0.40  &  0.50 &  0.00 \\ \hline
%     \T \B 
     \multicolumn{1}{c|}{}  &  0.17  &  0.26  &  0.33  &  0.42  &  0.50 &  0.10  \\ \cline{2-7}
%     \T \B 
     \multicolumn{2}{c|}{}  &  0.32  &  0.37  &  0.43  &  0.50 &  0.20\\ \cline{3-7}
%     \T \B 
     \multicolumn{3}{c|}{}  &  0.40  &  0.45  &  0.50   &  0.30\\ \cline{4-7}
%     \T \B 
     \multicolumn{4}{c|}{}  &  0.47  &  0.50   &  0.40\\ \cline{5-7}
%     \T \B 
     \multicolumn{5}{c|}{}  &  0.50  &  0.50\\ \cline{6-7}
\end{tabular}
%\vspace{-4mm}
\end{center}
\caption{A partial view of table $T_5$ 
%containing $d_x~\text{\textcircled{+}}_5~d_y$
}
\label{tab:distance_table}
%\vspace{-4mm}
%\vspace{-2mm} %Below figure spacing
\end{table}

As the topics of our app had 5 alternatives, we obtain the table $T_5$ and hence $d_x~\text{\textcircled{+}}_5~d_y$ for any pair $\{d_x,d_y\}$.
%, the table for operator $\text{\textcircled{+}}_5$. 
In order to consider finite number of values of $d_x,d_y$ for forming the table, we only account for values that are multiples of 0.01 (and also round every entry in $T_r$ to the nearest multiple of 0.01).
Table~\ref{tab:distance_table} presents a partial view of $T_5$ which can be completed using the general properties of $T_r$ enlisted above; also it presents $d_x,d_y$ in multiples of 0.10 for ease of presentation.
%Now arises the question of what the value of $d(p,i)~\text{\textcircled{+}}_r~d(p,j)$ is, and what the value of $d(i,j)$ for any pair $\{i,j\}$ should be. 
%Assuming we know the values $d(p,i)~\text{\textcircled{+}}_r~d(p,j)$ for any $p$, the next question is to find $d(i,j)$ for any pair $\{i,j\}$.
%Assuming we know $d_x~\text{\textcircled{+}}_r~d_y$ for any pair $\{d_x,d_y\}$, 
Now the next question is to find $d(i,j)$ for any pair $\{i,j\}$.
In order to provide a fit to the distances obtained from the dataset, 
we initialize the distance matrix as explained in the beginning of this subsection (while rounding every value to the nearest multiple of 0.01) and update it
%we construct the distance matrix 
based on the {\em all pairs shortest path algorithm}~\cite{cormen2009introduction} with the following update rule:
\begin{center}
%\textbf{if} $d(p,i) + d(p,j) < d(i,j)$ \textbf{then} $d(i,j) = d(p,i) + d(p,j)$ with \\
\textbf{if} $d(p,i)~\text{\textcircled{+}}_r~d(p,j) < d(i,j)$ \textbf{then} $d(i,j) = d(p,i)~\text{\textcircled{+}}_r~d(p,j)$,
\end{center}
where $r=5$ in our case.
The corresponding {similarity matrix} is obtained by assigning value $1-d(i,j)$ to its cell $(i,j)$.

The time complexity of deducing the mean distances between all pairs of nodes using MSM-SP is dominated by the all pairs shortest path algorithm, which is $O(n^2 \log n + nm)$ by Johnson's algorithm where the number of edges $m$ is generally small owing to sparsity of social networks.\\

We have already seen the time complexities of the other models for assigning preferences for $\mathbb{T}$ topics; the time complexity of deducing the mean distances between all pairs of nodes after that, is $O(\mathcal{G}_rn^2\mathbb{T})$, where $\mathcal{G}_r$ is the time complexity of computing the distance between two preferences ($\mathcal{G}_r = O(r^2)$ for Kendall-Tau distance).

%\vspace{-2mm} %Above section spacing
\subsection{Validating the Models}

To validate a given model, we generated the preferences of all the nodes for $\mathbb{T}=10000$ simulated topics. Following this, we could get the distances between preferences of every pair of nodes in terms of normalized Kendall-Tau distance. 
In order to measure the error $err(\{i,j\})$ of this deduced model distribution against the distribution $\mathcal{D}$ with the actual values of $\mu_{ij}$ and $\sigma_{ij}$ for a particular pair of nodes $\{i,j\}$, we used the following two methods:
\begin{enumerate}
\item Kullback-Leibler (KL) divergence, a well-accepted way of measuring error between the actual and model distributions,
\item Absolute difference between the means of these two distributions, since some of our algorithms would be working with the distribution means.
\end{enumerate}

Using these two methods,
we measured the total error over all pairs of nodes as root mean square (RMS) error, that is, $\sqrt{\avg_{\{i,j\}} [err(\{i,j\})]^2}$.
Figure~\ref{fig:plot_models} provides a comparison among the models under study, with respect to these errors and running time. 
(Note that RMS KL divergence is not applicable for MSM-SP).
RPM-IC gave the least errors but at the cost of extremely high running time.
RPM-D and RPM-R ran fast but their errors were in a higher range.
RPM-S showed a good balance between the errors and running time; the way of choosing the assigned neighbor ($\mu$-based, $\sigma$-based, or random) did not show significant effect on its results. 
 MSM-SP was observed to be the best model when our objective was to deduce the mean distances between all pairs of nodes, and not the preferences themselves.

\begin{figure}[t!]
%\hspace{-7mm}
\centering
   \iftoggle{clr}{
\includegraphics[scale=.75]{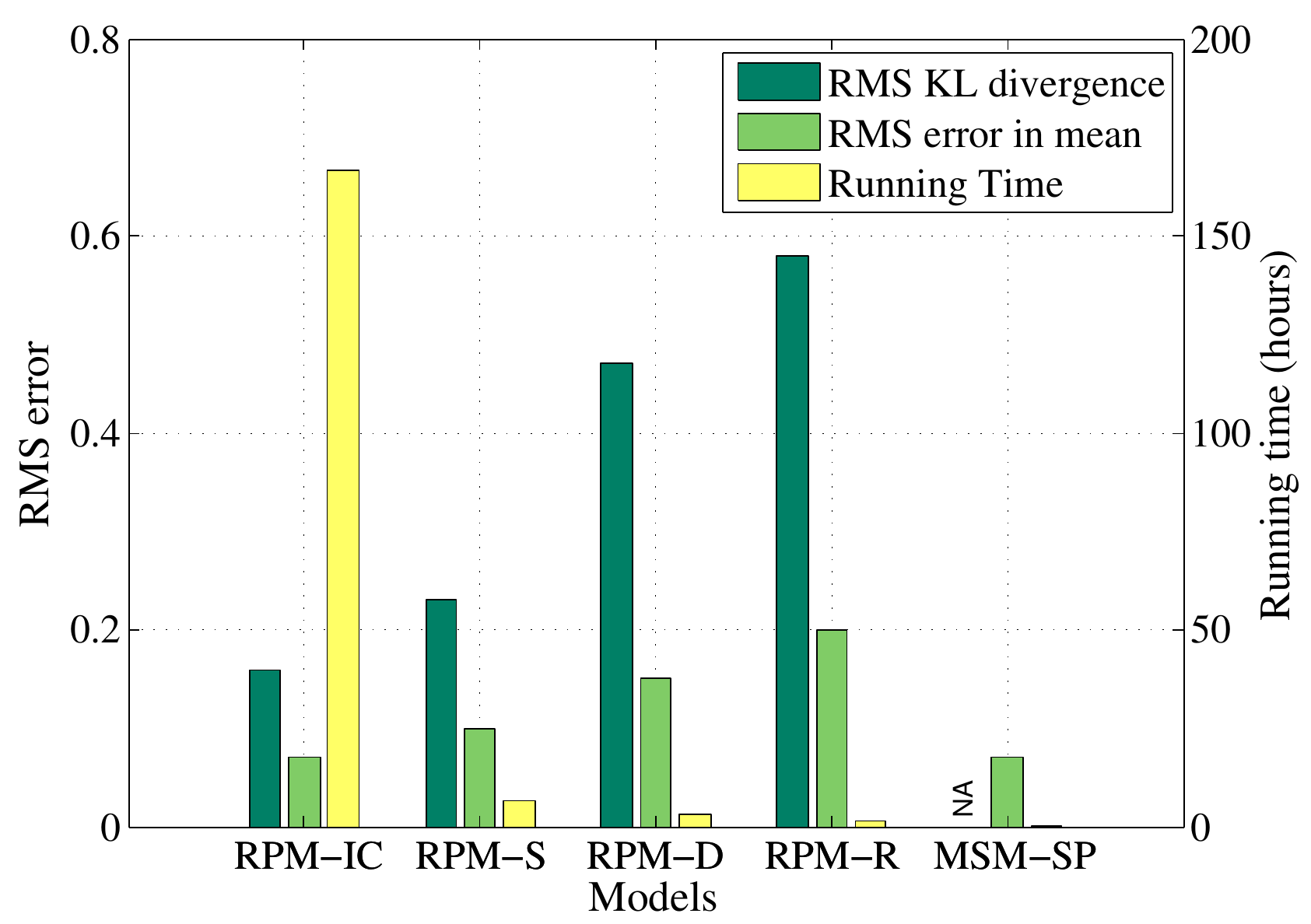}
   }{
\includegraphics[scale=.75]{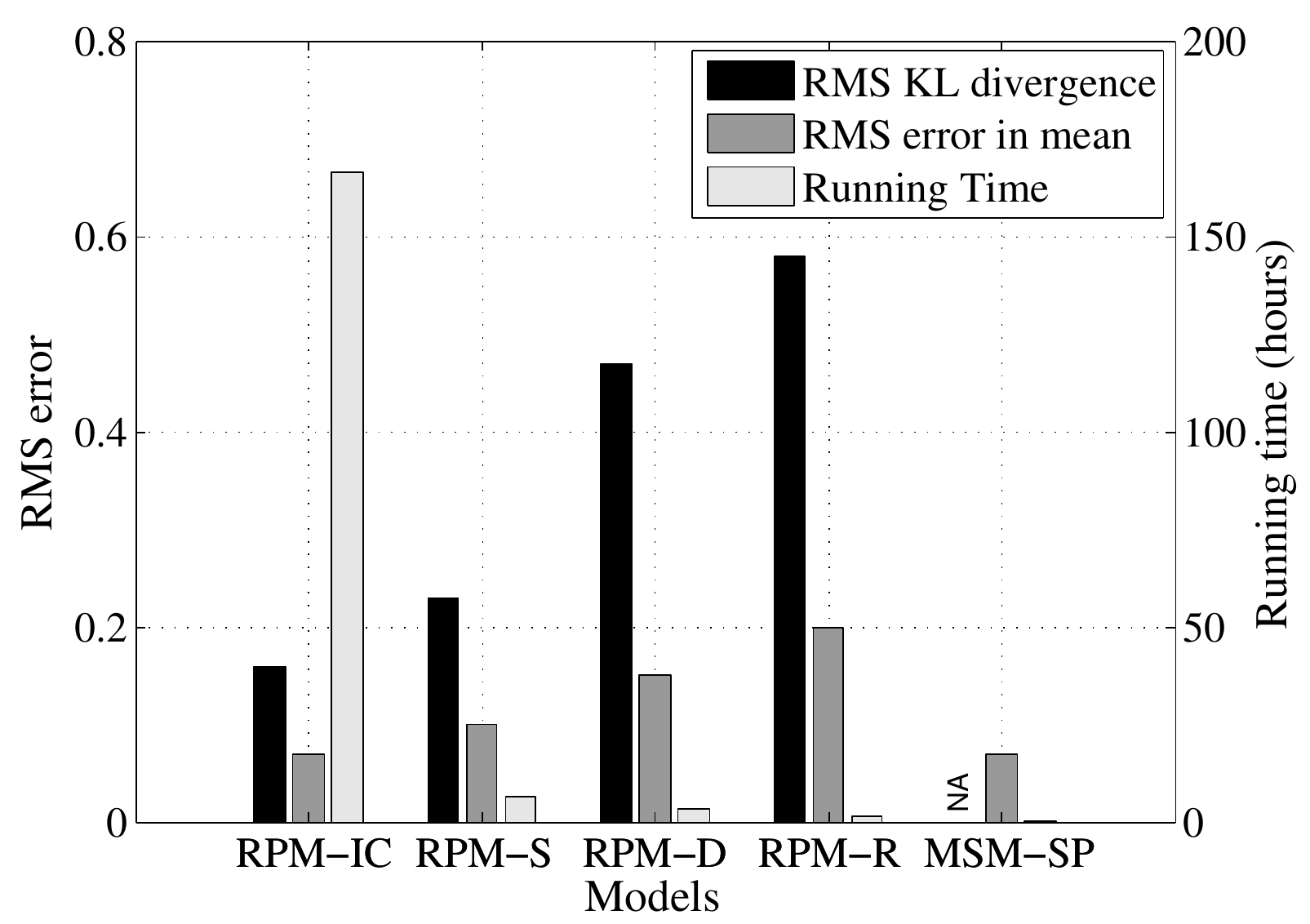}
}
\caption{Comparison among the considered models when run for 10000 iterations (or simulated topics)
%\textbf{\color{red}{Running times.}}
}
\label{fig:plot_models}
%\vspace{-2mm} %Below figure spacing
\end{figure}

%\textbf{\color{red}{How absence of nodes in dataset affects model?}}

This concludes our study on modeling the spread of preferences in a social network. We now turn to the problem of determining the best set of representative nodes in a network.
%%%%%%%%%%%%%%%%%%%%%%%%%%%%%%%%%%%%%%%%%%%%%%%%%%%%%%%%%%%%%%%%%%%%%%%%%%%%%%%%%%%%%%%%%%%%%%%%%%

\section{Scalable Preference Aggregation Problem}
\label{sec:problem_pasn}

Given a network with a set of nodes $N$ and an aggregation rule $f$, our objective is to choose a set of representative nodes $M \subseteq N$ of certain cardinality $k$, and aggregate their preferences to arrive at an aggregate preference that is `close enough' in expectation to the aggregate preference of $N$ using $f$. 
We now formalize this problem.
Table~\ref{tab:notation} presents the notation used in this chapter.

Let the expected distance between a set $S \subseteq N$ and a node $i \in N$ be
%\vspace{-2mm}
\begin{equation}
\label{eqn:dist}
d(S,i) = \min_{j \in S} d(j,i)
\end{equation}
We call $d(S,i)$ as the `expected' distance since $d(j,i)$ is the expected distance between nodes $j$ and $i$ with respect to their preferences.
Since $d(i,i)=0 \;\; \forall i \in N$, we have $d(S,j)=0 \;\; \forall j \in S$.
Let
%\vspace{-2mm}
\begin{equation}
\label{eqn:repr}
\Phi(S,i) \sim_\mathcal{U}\argmin_{j \in S} d(j,i)
\end{equation}
be a node chosen uniformly at random from the set of nodes in $S$ that are closest in expectation to node $i$ in terms of preferences. It can be said that $\Phi(S,i)$ represents node $i$ in set $S$. In other words, $\Phi(S,i)$ is the {\em representative} of $i$ in $S$.

The problem under consideration can be viewed as a setting where given certain individuals representing a population, every individual of the population is asked to choose one among them as its representative; now the representatives vote on behalf of the individuals who chose them.

%\begin{comment}
\begin{table}[t]
\begin{center}
\begin{tabular}{l l}%p{6.7cm}}
\hline \hline
\T \B 
$N$ & set of nodes in the network \\ \hline
\T \B 
$E$ & set of edges in the network \\ \hline
\T \B 
$n$ & number of nodes in the network \\ \hline
\T \B 
$m$ & number of edges in the network \\ \hline
\T \B 
$i,j,p$ & typical nodes in the network \\ \hline
\T \B 
$r$ & number of alternatives \\ \hline
\T \B 
$P$ & preference profile of $N$ \\ \hline
\T \B 
$f$ & preference aggregation rule \\ \hline
\T \B 
$d(i,j)$ & expected distance between preferences of $i$ and $j$ \\ \hline
\T \B 
$c(i,j)$ & $1-d(i,j)$ \\ \hline
\T \B 
$\tilde{d}(x,y)$ & distance between preferences $x$ and $y$ \\ \hline
\T \B
$\Phi(S,i)$ & representative of node $i$ in set $S$ \\ \hline
\T \B 
$M$ & set of representatives who report their preferences \\ \hline
\T \B 
$k$ & $|M|$, cardinality of $M$ \\ \hline
\T \B 
$Q$ & profile containing unweighted preferences of $M$ \\ \hline
\T \B 
$Q'$ & profile containing weighted preferences of $M$ \\ \hline
\T \B 
$\Delta$ & error operator between aggregate preferences \\ \hline
\hline
\end{tabular}
%\vspace{-3mm}
\end{center}
\caption{Notation used in the chapter}
\label{tab:notation}
%\vspace{-3mm}
%\vspace{-2mm} %Below figure spacing
\end{table}
%\end{comment}

%\vspace{-2mm} %Above section spacing
\subsection{Aggregating Preferences of Representative Nodes}

\begin{comment}
For aggregating the preferences of the selected nodes in $M$, we use an aggregation rule $g$, which in general is not same as $f$. The motivation for using $g \neq f$ can be explained with the following example.
\end{comment}
%
Recall that preference profile is a multiset containing preferences of the voters.
Let the preference profile of the population $N$ be $P$ and that of the selected set $M$ be $Q$. 
%
%Suppose we select nodes $i$ and $j$ to report their preferences such that 
Suppose $M=\{i,j\}$ where
$j$ represents, say ten nodes
 including itself, 
while $i$ represents one
(only itself). 
If the preferences are aggregated by 
%considering only two preferences, the first one of $i$ and the other of $j$, 
feeding $Q$ to aggregation rule $f$,
the aggregate preference $f(Q)$ so obtained may not reflect the preferences of the population, in general.
%, since $j$ is selected on behalf of ten nodes, whereas node $i$ represents only itself. 
%
So
in order to capture this asymmetry in the importance of selected nodes, their preferences must be weighted. 
%That is, the input for $f$ should be a profile $R \neq Q$.
In our approach, the weight given to the preference of a node is precisely the number of nodes that it represents.
%, including itself.
%
\begin{comment}
For giving weights to the representatives, if
% $|\Phi(M,i)|>1$, that is, 
node $i$ has multiple representatives in $M$, then we choose one representative uniformly at random.
\end{comment}

Let $Q'$ be the preference profile obtained by replacing every node's preference in $P$ by its uniquely chosen representative's preference. So, $k=|M|=|Q| \leq |Q'|=|P|=|N|=n$.
%Let $M'$ be the multiset of cardinality $n$ obtained by replacing every node in $N$ by its unique representative in $M$.
In our approach, the weight of a representative implies the number of times its preference appears in the new preference profile, that is, we use $R=Q'$.
So in the above example, 
%assuming that $j$ is a uniquely chosen representative for ten nodes and $i$ for one node,
the new  profile $R=Q'$ consists of ten preferences of $j$ and one of $i$.
Thus we aggregate the preferences of selected nodes using $f(Q')$.
%Thus, the aggregate preference obtained from the selected nodes is $f(Q')$.
%So $g$ can be formally defined as $f(Q') = f(R)$.
%
%
%As in our approach, $R=Q'$, we directly write $f(Q')$ instead of $f(R)$.

%\vspace{-.5mm}
%\vspace{-2mm} %Above section spacing
\subsection{A Measure of `Close Enough'}

%Let the set of selected individuals be $M \subseteq N$ and let $k=|M|$.
Now given $k$, our objective is to select a set of nodes $M$ such that $|M|=k$, who report their preferences such that, 
%the preference obtained by aggregating the preferences of the selected individuals using $f(R)$ is close to that obtained by aggregating the preferences of all the individuals using $f(P)$. 
%More formally, our objective is to ensure that, 
in expectation, the error incurred in using the aggregate preference $f(R)$ obtained by aggregating the preferences of the individuals in $M$ (in an unweighted manner if $R=Q$ or in a weighted manner if $R=Q'$) instead of $f(P)$ obtained by aggregating the preferences of the individuals in $N$, is minimized.
%
%, and our objective now becomes
%\begin{equation}
%f(P) \; \Delta \; f(Q') \leq \epsilon
%\end{equation}
%
Note that an aggregation rule $f$ may not output a unique aggregate preference, that is, $f$ is a correspondence. 
%So $f(P)$ or $f(R)$ may be a set of several preferences selected by the corresponding aggregation rule. 
So
%We know that 
the aggregation rule $f$ on the preferences of the entire population outputs $f(P)$ which is a set of preferences. 

Suppose $f(R)$ also is a set of several preferences, the question arises: which of these to choose as the output? As $f(P)$ is generally not known and all preferences in $f(R)$ are equivalent to us, we choose a preference from $f(R)$ uniformly at random and see how far we are from the actual aggregate preference, in expectation. 
%Let $f(P)=\{a_x\}$ and $f(Q')=\{b_y\}$. 
In order to claim that a chosen preference in $f(R)$ is a good approximation, it suffices to show that it is close to at least one preference in $f(P)$.  Also, as any preference $y$ in $f(R)$ is chosen uniformly at random, we define the error incurred in using $f(R)$ instead of $f(P)$ as
%the expected distance between $f(P)$ and $f(Q')$ is
%Let the \textbf{\color{red}{expected}} distance operator between two set of preferences be denoted by $\Delta$.
%So the expected distance between  is
%\vspace{-5mm}
\begin{equation}
\label{eqn:Delta}
f(P) \; \Delta \; f(R) = \mathbb{E}_{y \sim_\mathcal{U} f(R)} \left[ \min_{x \in f(P)} \tilde{d}(x,y) \right]
\end{equation}
%\vspace{-3mm}
%\begin{equation}
%\label{eqn:Delta}
%f(P) \; \Delta \; f(Q') = \avg_{y \in f(Q')} \left( \min_{x \in f(P)} \tilde{d}(x,y) \right)
%\end{equation}
where $\tilde{d}(x,y)$ is the distance between preferences $x$ and $y$ in terms of the same distance measure as $d(\cdot,\cdot)$. Notice that in general, $f(P) \; \Delta \; f(R) \neq f(R) \; \Delta \; f(P)$.
Also, $\Delta$ can be defined in several other ways depending on the application or the case we are interested in (worst, best, average, etc.). 
%Some of them are $\max\max$ (worst case), $\min\min$ (best case), $\avg\avg$ (average case), $\max\min$, $\min\max$, and so on. 
%However, $\Delta$ being $\avg\min$ is the most appealing for our problem for the reasons explained above. 
In this chapter, for the reasons explained above, we use the definition of $\Delta$ as given in Equation~(\ref{eqn:Delta}).

Recall that the distance between any pair of nodes is drawn from distribution $\mathcal{D}$, that is, the realized values for different topics are different in general. The value $f(P) \; \Delta \; f(R)$ can be obtained for every topic and hence the expected error $\mathbb{E} [ f(P) \; \Delta \; f(R) ]$ can be computed by averaging the values over all topics.
%belonging to that type of issue.
%As our treatment for personal and social issues is the same, 
%We refer to this expected distance as $\mathbb{E} [ f(P) \; \Delta \; f(R) ]$.
%; the actual value depends on the type of issue under consideration.
So now our objective is to find a set $M$ such that $\mathbb{E} [ f(P) \; \Delta \; f(R) ]$ is minimized.

%\vspace{-2mm} %Above section spacing
\subsection{An Abstraction of the Problem}
%\textbf{\color{red}{Details.}}

%\textbf{\color{red}{? 
%Read: Igor Averbakh On the complexity of a class of combinatorial optimization problems with uncertainty.
%}}
If the aggregation rule is known, an objective function can be $\mathcal{F}(M) = 1- \mathbb{E} [ f(P)\;\Delta \;f(R) ]$ with the objective of finding a set $M$ that maximizes this value. 
However, even if the set $M$ is given, computing $\mathcal{F}(M)$ is computationally intensive for most aggregation rules and furthermore, hard for rules such as Kemeny. 
It can be seen that $\mathcal{F}(\cdot)$ is not monotone for non-dictatorial aggregation rules (the reader is referred to Figure~\ref{fig:plots_pasn} for the non-monotonic plots of Greedy-sum and Degree-cen algorithms since
%The non-monotonicity of $\mathcal{F}(\cdot)$ for non-dictatorial aggregation rules for $R=Q'$ can be seen from the non-monotonic plots of Greedy-sum and Degree-cen algorithms in Figure~\ref{fig:plots_pasn}
in a run of these algorithms, a set of certain cardinality is a superset of any set having a smaller cardinality).
%
%\textbf{\color{red}{
%Non-monotonicity of voting vindicated by non-monotonicity of plots
%}}
It can also be checked 
%with the above example 
that $\mathcal{F}(\cdot)$ is neither submodular nor supermodular.
%\textbf{\color{red}{Define these first.}}
Owing to these properties of the objective function, even for simple non-dictatorial aggregation rules, it is not clear if one could efficiently find a set $M$ that maximizes $\mathcal{F}(\cdot)$, even within any approximation factor.
Moreover, the aggregation rule itself maybe unknown beforehand or maybe needed to be changed frequently in order to prevent strategic manipulation of preferences by the voters. 
%In such cases, computing $\mathcal{F}(\cdot)$ is intractable as the number of possible aggregation rules is infinite.
%
This motivates us to propose an approach that finds set $M$ \textit{agnostic to the aggregation rule} being used.

To this end, we propose a property for preference aggregation rules, {\em weak insensitivity} which we define as follows.
\begin{definition}
\label{def:weakins}
A preference aggregation rule satisfies {\em weak insensitivity property\/} under a distance measure and a $\Delta$, if and only if for all $\epsilon_d$, a change of $\eta_i \leq \epsilon_d$ in the preferences of all $i$, results in a change of at most $\epsilon_d$ in the aggregate preference.
That is, $\forall \epsilon_d,$
%\vspace{-2mm}
\begin{equation}
\nonumber
\;\eta_i \leq \epsilon_d \;\;\; \forall i \in N \implies f(P) \; \Delta \; f(P')  \leq \epsilon_d
\end{equation}
where $P'$ is the preference profile of voters after deviations.
\end{definition}

We call it `weak' insensitivity property because it allows `limited' change in the aggregate preference (strong insensitivity can be thought of as a property that allows no change). 
This is potentially an important property that an aggregation rule should satisfy as it is a measure of its robustness in some sense.
It is clear that under normalized Kendall-Tau distance measure and $\Delta$ as defined in Equation~(\ref{eqn:Delta}), an aggregation rule that outputs a random preference does not satisfy weak insensitivity property as it fails the criterion for any $\epsilon_d < 1$, whereas dictatorship rule that outputs the preference of a single individual trivially satisfies the property.
%For our purpose, 
%In fact, we observed using simulations that none of the aggregation rules under consideration, except dictatorship, satisfied this property.
For our purpose, we propose a weaker form of this property, which we call {\em expected weak insensitivity}.

\begin{definition}
\label{def:gaussweakins}
A preference aggregation rule satisfies {\em expected weak insensitivity property\/} under a distribution, a distance measure, and a $\Delta$, if and only if for all $\mu_d$, a change of $\eta_i$ in the preferences of all $i$, where $\eta_i$ is drawn from the distribution with mean $\delta_i \leq \mu_d$ and any permissible standard deviation $\sigma_d$, results in a change with an expected value of at most $\mu_d$ in the aggregate preference.
%\begin{equation}
%\mathbb{E}[] \leq \epsilon \forall \mu \in [0,1] \forall \sigma_p
%\end{equation}
That is, $\forall \mu_d, \; \forall \text{ permissible } \sigma_d ,  $
%\vspace{-2mm}
\begin{equation}
\label{eqn:weak_ins}
\;\delta_i \leq \mu_d \;\;\; \forall i \in N \implies \mathbb{E} [ f(P) \; \Delta \; f(P') ] \leq \mu_d
\end{equation}
where $P'$ is the preference profile of voters after deviations.
\end{definition}
Note that in $\mathbb{E} [ f(P) \; \Delta \; f(P') ]$, the expectation is over the varying modified preferences of the agents (since $\eta_i$'s vary across instances and also, there are multiple preferences at a distance of $\eta_i$ from any given preference, in general). 
In this work, we study expected weak insensitivity property under  distribution $\mathcal{D}$, normalized Kendall-Tau distance, and $\Delta$ as defined in Equation~(\ref{eqn:Delta}). 
For distribution $\mathcal{D}$ with $\mu_d \in [0,1]$,
% and range $[0,1]$, 
the permissible range of $\sigma_d$ depends on $\mu_d$. 
%For instance, for most values of $\mu_d$, 
This range is wider for intermediate values of $\mu_d$ and shortens as we move towards the extremes.
In any case,
the permissible range for $\sigma_d$ cannot exceed $\frac{1}{\sqrt{12}} \approx 0.28$ (value at which the truncated Gaussian becomes a Uniform distribution), while for $\mu_d \in \{0,1\}$, the permissible $\sigma_d=0$.
%It is to be noted that both $\mu_d$ and $\sigma_d$ refer to the mean and standard deviation of  distribution $\mathcal{D}$  and not of the original Gaussian distribution from which it is obtained.
%
%Table~\ref{tab:weak_ins_rules} presents the results of extensive simulations investigating empirical satisfaction of this property by the considered aggregation rules.
\begin{observation}
We conducted extensive simulations for investigating empirical satisfaction of the expected weak insensitivity property under  distribution $\mathcal{D}$, normalized Kendall-Tau distance, and $\Delta$ as defined in Equation~(\ref{eqn:Delta}) by the considered aggregation rules. We observed that several aggregation rules satisfy this property, whereas most others violate it only by small margins; in fact, of the studied aggregation rules, only Veto was observed to violate this property by more than 20\%.

%aggregation rules such as Random Dictatorship, Borda, Minmax, Schulze, Copeland, etc. satisfy this property, while most others violate it by only about 20\%.
%, whereas Veto, Bucklin,  do not.
\end{observation}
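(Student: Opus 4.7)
Since the claim is an empirical statement derived from simulations rather than a mathematical theorem, my ``proof'' proposal is really a simulation design that would reliably establish the observation. The plan is to fix a moderately large population $N$ (e.g.\ drawn from the Facebook dataset, so $n \approx 844$) and a modest number of alternatives $r$ (e.g.\ $r=5$ to match the app), then sweep over a grid of parameter pairs $(\mu_d, \sigma_d)$ with $\mu_d \in \{0.05, 0.10, \ldots, 0.95\}$ and $\sigma_d$ ranging over the permissible values for each $\mu_d$ (bounded by $1/\sqrt{12}$ and further constrained near the endpoints). For each $(\mu_d, \sigma_d)$, I would repeatedly (i) sample a base profile $P$ (either from the collected data or from a uniform prior over preferences), (ii) for each voter $i$ independently draw a perturbation magnitude $\eta_i$ from distribution $\mathcal D$ parametrized by some $\delta_i \le \mu_d$ and the given $\sigma_d$, and (iii) obtain $P'$ by choosing each $i$'s new preference uniformly at random among those at normalized Kendall--Tau distance $\eta_i$ from its original preference.

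Next, for each of the aggregation rules listed in Section~\ref{aggrrules} (Bucklin, Smith set, Borda, Veto, Minmax, Dictatorship, Random Dictatorship, Schulze, Plurality, Kemeny, Copeland), I would compute $f(P)$ and $f(P')$ using the iterated-winner construction already adopted in the chapter, and then compute $f(P) \, \Delta \, f(P')$ using the definition in Equation~(\ref{eqn:Delta}): for each sampled $y \in f(P')$ take $\min_{x \in f(P)} \tilde d(x,y)$, and average. Averaging these values over sufficiently many Monte Carlo trials (say $10^3$--$10^4$ per grid point) yields an estimate of $\mathbb E[f(P) \, \Delta \, f(P')]$. The expected weak insensitivity inequality~(\ref{eqn:weak_ins}) then reduces to checking, grid point by grid point, whether this estimate is at most $\mu_d$; the worst-case violation (over the grid) quantifies how badly each rule fails, and the claim asserts this worst-case gap is $0$ for several rules, under $20\%$ for most others, and exceeds $20\%$ only for Veto.

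To make the conclusions robust rather than artifacts of a particular configuration, I would (a) vary the choice of $\delta_i$ schedule (all equal to $\mu_d$, uniform on $[0, \mu_d]$, and a mixture, since the definition only requires $\delta_i \le \mu_d$), (b) try both the real-data profile and profiles sampled uniformly from $\mathcal S_r$, and (c) include the boundary cases $\mu_d \in \{0, 1\}$ as sanity checks (both sides of the inequality should collapse). A light theoretical sanity check supports the empirical picture: Dictatorship and Random Dictatorship satisfy the property exactly because each output preference is distributed identically to some perturbed voter preference, so the inner expectation is bounded by the individual $\delta_i \le \mu_d$; Borda, Copeland, and Kemeny are known to be Lipschitz-stable under small Kendall--Tau perturbations with constants close to $1$, which predicts the narrow violation margins we expect to see numerically.

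The main obstacle will be the behavior of Veto (and, to a lesser extent, Plurality-like rules that rely on extremal positions): because a single voter moving an alternative from last to second-last can flip the Veto winner, small per-voter perturbations can cascade into macroscopic changes in $f(P')$, and the iterated-winner extension amplifies this by re-running the vulnerable rule on shrinking alternative sets. Quantifying this cleanly will require careful choice of the Monte Carlo budget and the $\sigma_d$ grid near the regime where Veto's violation is largest; this is precisely where I expect to confirm the $>20\%$ figure and separate Veto from the remaining rules.
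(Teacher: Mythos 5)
Your proposal is correct and takes essentially the same approach as the paper: the observation is established purely by Monte Carlo simulation over the parameter space of Definition~\ref{def:gaussweakins} (perturbing each voter's preference by a distance drawn from $\mathcal{D}$ with mean $\delta_i \leq \mu_d$, computing $\mathbb{E}[f(P)\,\Delta\,f(P')]$ for each listed rule, and checking the inequality against $\mu_d$), which is exactly what you describe. The paper reports no further methodological detail, and your added robustness checks (varying the $\delta_i$ schedule, boundary cases, and the heuristic explanation for why Veto is the outlier) are consistent with its setup and conclusions.
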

%We conducted extensive simulations in order to investigate whether the aggregation rules under consideration empirically satisfy this property. Table~\ref{tab:weak_ins_rules} lists the observed results.

\begin{lemma}
\label{lem:guarantee}
Given a distance measure and a $\Delta$, with a preference aggregation rule satisfying expected weak insensitivity property under  distribution $\mathcal{D}$,
if the expected distance between every individual and 
%at least one of the individuals in 
the set $M$ is at most $\epsilon_d \in [0,1]$,
then the expected error incurred in using $f(Q')$ instead of $f(P)$ is at most $\epsilon_d$. 
That is, for $\epsilon_d \in [0,1] , \; $
%\vspace{-2mm}
\begin{displaymath}
d(M,i) \leq \epsilon_d \;\; \forall i \in N \implies \mathbb{E} [ f(P) \; \Delta \; f(Q') ] \leq \epsilon_d
\end{displaymath}
\end{lemma}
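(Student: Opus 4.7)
\textbf{Proof plan for Lemma~\ref{lem:guarantee}.}

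The plan is to interpret the transition from profile $P$ to profile $Q'$ as a per-voter perturbation of preferences, and then invoke the expected weak insensitivity property of $f$ directly. First, I would fix any individual $i \in N$ and consider the unique representative $\Phi(M,i) \in M$ chosen as in Equation~(\ref{eqn:repr}). By the construction of $Q'$, the preference of $i$ in $P$ is replaced in $Q'$ by the preference of $\Phi(M,i)$. Let $\eta_i$ denote the realized distance (under the given distance measure) between $i$'s preference in $P$ and $i$'s preference in $Q'$, i.e.\ between the preferences of $i$ and $\Phi(M,i)$ on the topic under consideration.

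Next, I would invoke the modeling assumption made throughout the chapter, namely that for any pair of nodes $\{i,j\}$ the distance between their preferences is drawn from the distribution $\mathcal{D}$ with mean $d(i,j)$ and some permissible standard deviation. Applied to the pair $\{i, \Phi(M,i)\}$, this gives that $\eta_i \sim \mathcal{D}$ with mean
\begin{displaymath}
\delta_i \;=\; d(\Phi(M,i), i) \;=\; \min_{j \in M} d(j,i) \;=\; d(M,i),
\end{displaymath}
where the second equality uses Equation~(\ref{eqn:repr}) and the third uses Equation~(\ref{eqn:dist}). The hypothesis $d(M,i) \leq \epsilon_d$ for every $i \in N$ therefore yields $\delta_i \leq \epsilon_d$ uniformly in $i$, with the standard deviations lying in the permissible range dictated by the truncation in $\mathcal{D}$.

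Finally, with $\mu_d := \epsilon_d$, the profile $Q'$ plays exactly the role of the perturbed profile $P'$ in Definition~\ref{def:gaussweakins}: every voter $i$ has shifted its preference by $\eta_i$ with $\mathbb{E}[\eta_i] = \delta_i \leq \mu_d$. Applying expected weak insensitivity (Equation~(\ref{eqn:weak_ins})) to $f$ then yields
\begin{displaymath}
\mathbb{E}\bigl[ f(P)\,\Delta\, f(Q') \bigr] \;\leq\; \mu_d \;=\; \epsilon_d,
\end{displaymath}
which is exactly the desired conclusion.

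The only subtle step — and the one I would be most careful about — is the bookkeeping in the second paragraph: one must argue that the joint distribution of the perturbations $(\eta_i)_{i \in N}$ induced by the $P \mapsto Q'$ substitution is covered by the universal quantifier over permissible $\sigma_d$'s in Definition~\ref{def:gaussweakins}. Because that definition quantifies over \emph{any} permissible per-coordinate standard deviations and makes no independence assumption across voters, the argument goes through. No submodularity, monotonicity, or algorithmic content is needed; the lemma is a direct bridge from the covering condition on $M$ to the robustness property of $f$.
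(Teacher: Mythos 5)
Your proposal is correct and follows essentially the same route as the paper's proof: both view the $P \mapsto Q'$ substitution as a per-voter perturbation whose distance is distributed as $\mathcal{D}$ with mean $\delta_i = d(\Phi(M,i),i) = d(M,i) \leq \epsilon_d$, set $\mu_d = \epsilon_d$ and $P' = Q'$, and invoke Definition~\ref{def:gaussweakins} over all permissible $\sigma_d$. The subtlety you flag about the quantification over standard deviations is exactly the point the paper also closes with ("As this argument is valid for any permissible $\sigma_d$, the result follows").
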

\begin{proof}
In the preference profile $P$ of all voters, the preference of any node $i \in N$ is replaced by the preference of its representative node $p=\Phi(M,i)$
%, that is, by that of one of its representatives in $M$ 
to obtain $Q'$. 
%Let that representative be $p$.
From Equations~(\ref{eqn:dist}), (\ref{eqn:repr}), and the hypothesis, we have $d(p,i) \leq \epsilon_d$.

Since in $P$, preference of every $i$ is replaced by that of the corresponding $p$ to obtain a new  profile $Q'$, and distance between $i$ and $p$ is distributed according to  distribution $\mathcal{D}$ with mean $d(p,i)$
and some standard deviation $\sigma_d$, 
the above is equivalent to node $i$ deviating its preference by some value which is drawn from distribution $\mathcal{D}$ with mean $d(p,i) = d(M,i)$. 
So we can map these variables to the corresponding variables in Equation~(\ref{eqn:weak_ins}) as follows: $\delta_i = d(M,i) \; \forall i$, $\mu_d = \epsilon_d$, and $P'=Q'$.
% in Equation~(\ref{eqn:weak_ins}).
Also, recall that in $\mathbb{E} [ f(P) \; \Delta \; f(P') ]$, the expectation is over varying modified preferences of the agents, while in $\mathbb{E} [ f(P) \; \Delta \; f(Q') ]$, the expectation is over varying preferences of the agents' representatives in $M$ with respect to different topics (and hence preferences) of the agents.
These are equivalent given $P'=Q'$.
As this argument is valid for any permissible $\sigma_d$, the result follows.
%\vspace{-1mm}
\end{proof}

So under the proposed model and for aggregation rules satisfying the expected weak insensitivity property, this lemma establishes a 
 relation between (a) the closeness of the chosen representative set to the population in terms of expected distance and (b) the error incurred in the aggregate preference if that set is chosen as the representative set.
%if that set is chosen as the representative set.
%theoretical guarantee on $\mathbb{E} [ f(P) \; \Delta \; f(Q') ]$ for aggregation rules that satisfy expected weak insensitivity property under $\mathcal{D}$.
% and $\Delta$.
%\textbf{\color{red}{Improve.}}
We now return to our goal of abstracting the problem of determining a representative set, by proposing an approach that is agnostic to the aggregation rule being used.

%\vspace{-2mm} %Above section spacing
\subsection{Objective Functions in the Abstracted Problem}

Recall that $c(\cdot,\cdot) = 1 - d(\cdot,\cdot)$. 
%Also $c(i,j)$ is the value stored in cell $(i,j)$ of the similarity matrix.
Our objective is now to find a set of critical nodes $M$ that maximizes some objective function, with the hope of minimizing  $\mathbb{E} [ f(P) \; \Delta \; f(R) ]$ where $R=Q'$ in our case.
As the aggregation rule is anonymous, in order to ensure that the approach works well, even for rules such as random dictatorship, the worst-case objective function for the problem under consideration, representing least expected similarity, is
\begin{equation}
\label{eqn:influence_min}
\rho(S) = \min_{i \in N} c(S,i)
\end{equation}
The above is equivalent to $\max_{i \in N} d(S,i) = 1-\rho(S)$.
Thus $\epsilon_d = 1-\rho(S)$ in Lemma~\ref{lem:guarantee} and so this objective function offers a guarantee on $\mathbb{E} [ f(P) \; \Delta \; f(Q') ]$ irrespective of the aggregation rule, provided it satisfies the expected weak insensitivity property.
We will provide a detailed analysis for the performance guarantee of an algorithm that aims to maximize $\rho(S)$, in Section~\ref{sec:greedymin_guarantee}.

Now the above worst-case objective function ensures that our approach works well even for aggregation rules such as random dictatorship.
However, such extreme aggregation rules are seldom used in real-world scenarios; hence, an alternative objective function, representing average expected similarity, or equivalently sum of expected similarities, is
%\vspace{-2mm}
\begin{equation}
\label{eqn:influence_sum}
\psi(S) = \sum_{i \in N} c(S,i)
\end{equation}
        We will look into the desirable properties of an algorithm that aims to maximize $\psi(S)$, in Section~\ref{sec:cooperativeview}.

We now turn towards the problem of maximizing the above two objective functions.
\begin{proposition}
Given constants $\chi$ and $\omega$, \\
(a) it is NP-hard to determine whether there exists a set $M$ consisting of $k$ nodes such that $\rho(M) \geq \chi$, and \\
(b) it is NP-hard to determine whether there exists a set $M$ consisting of $k$ nodes such that $\psi(M) \geq \omega$.
\end{proposition}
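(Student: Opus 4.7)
Both parts can be established by polynomial-time reductions from the classical NP-complete Dominating Set problem: given an undirected graph $H = (V,E)$ and an integer $k$, decide whether $H$ admits a dominating set of size $k$ (a set $M \subseteq V$ such that every $v \in V$ is in $M$ or adjacent to some node in $M$). The key observation is that, by the definitions in Equations~(\ref{eqn:influence_min}) and (\ref{eqn:influence_sum}), maximizing $\rho(\cdot)$ is equivalent to a $k$-center style problem on the expected-distance matrix $d(\cdot,\cdot)$, while maximizing $\psi(\cdot)$ is equivalent to a $k$-median style problem; both admit natural reductions from Dominating Set using a simple two-value distance encoding of the input graph.

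\textbf{Part (a).} From an instance $(H,k)$ of Dominating Set, construct a scalable preference aggregation instance as follows: let $N = V$, fix any $0 < a < b \leq 1$, and define $d(i,i) = 0$, $d(i,j) = a$ whenever $(i,j) \in E$, and $d(i,j) = b$ otherwise. Set $\chi = 1 - a$. Then $\rho(M) \geq \chi$ if and only if $\max_{i \in N} d(M,i) \leq a$, which by construction holds if and only if every $i \in N$ either belongs to $M$ or is adjacent in $H$ to some element of $M$. Hence $\rho(M) \geq \chi$ for some $|M| = k$ if and only if $H$ has a dominating set of size $k$, and the reduction is polynomial.

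\textbf{Part (b).} Use the same construction, now setting $\omega = n - (n-k)a$. If $M$ is a dominating set of size $k$, then $k$ nodes contribute $d(M,i) = 0$ and the remaining $n-k$ each contribute at most $a$, so $\sum_{i \in N} d(M,i) \leq (n-k)a$ and $\psi(M) = n - \sum_i d(M,i) \geq \omega$. Conversely, suppose $|M| = k$ and $\psi(M) \geq \omega$, i.e.\ $\sum_{i \in N} d(M,i) \leq (n-k)a$. Writing $\ell$ for the number of nodes outside $M$ with $d(M,i) = b$, the sum equals $\ell b + (n-k-\ell)a = (n-k)a + \ell(b-a)$, which forces $\ell = 0$ since $b > a$. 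Therefore every node outside $M$ is adjacent to some node of $M$, so $M$ is a dominating set in $H$. This proves part~(b).

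\textbf{Main obstacle.} The only subtlety is ensuring that the fabricated distance matrix is a legitimate input for the scalable preference aggregation problem. Since the formulation only requires $d(i,j) \in [0,1]$ with $d(i,i) = 0$, the construction above is manifestly valid; if one additionally wanted the distances to be realizable as expected normalized Kendall-Tau distances under some joint distribution over preference profiles, one can exhibit a simple explicit realization (e.g.\ taking the number of alternatives $r$ large enough and placing each node's preference independently so that expected pairwise distances take the two prescribed values), but this is not required for the NP-hardness conclusion itself.
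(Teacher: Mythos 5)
Your proof is correct and follows essentially the same route as the paper: a reduction from Dominating Set using a two-valued distance (equivalently, similarity) matrix on the vertex set, with thresholds $\chi = 1-a$ and $\omega = n-(n-k)a$ matching the paper's choices of $c(i,j)\in\{0.9,0.6\}$, $\chi=0.9$, $\omega=k+0.9(n-k)$. Your write-up is in fact slightly more careful than the paper's, since you explicitly verify the reverse direction of part (b) via the counting argument forcing $\ell=0$, which the paper leaves implicit.
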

\begin{proof}
We reduce an NP-hard Dominating Set problem instance to the problem under consideration.
Given a graph $G$ of $n$ vertices, the dominating set problem is to determine whether there exists a set $D$ of $k$ vertices such that every vertex not in $D$, is adjacent to at least one vertex in $D$.
%, or in other words, the distance between any of the $n$ vertices in $G$ from at least one vertex in $D$ is at most one.

Given a dominating set problem instance, we can construct a weighted undirected complete graph $H$ consisting of the same set of vertices as $G$ such that, the weight $c(i,j)$ of an edge $(i,j)$ in $H$ is some high value (say $0.9$) if there is edge $(i,j)$ in $G$, else it is some low value (say $0.6$). 

Now there exists a set $D$ of $k$ vertices in $G$ such that the distance between any vertex in $G$ and any vertex in $D$ is at most one, if and only if there exists a set $M$ of $k$ vertices in $H$ such that $\rho(M) \geq 0.9$ or $\psi(M) \geq k+0.9(n-k)$. Here $\chi=0.9$ and $\omega=k+0.9(n-k)$.
This shows that the NP-hard dominating set problem is a special case of the problems under consideration, hence the result.
\end{proof}

%\begin{comment}
A function $h(\cdot)$ is said to be {\em submodular} if,
%the marginal gain from adding an element to a set $S$ is at least as high as the marginal gain from adding it to a superset of $S$. That is, 
for all $v \in N \setminus T$ and for all $S,T$ such that $S \subset T \subset N$,
\begin{equation}
\label{eqn:submodular}
\nonumber
h(S \cup \{v\}) - h(S) \geq h(T \cup \{v\}) - h(T)
\end{equation}
%\end{comment}

\begin{proposition}
\label{prop:submodularity}
The objective functions $\rho(\cdot)$ and $\psi(\cdot)$ 
%as defined in Equations~(\ref{eqn:influence_min}) and (\ref{eqn:influence_sum}) 
are non-negative, monotone increasing, and submodular.
\end{proposition}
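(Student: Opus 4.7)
The plan is to reduce all three claims (non-negativity, monotonicity, submodularity) to properties of the per-voter similarity $c(S,i)$, via the rewriting
\[
c(S,i) \;=\; 1 - \min_{j \in S} d(j,i) \;=\; \max_{j \in S} c(j,i),
\]
which exhibits $c(\cdot,i)$ as a coverage-style aggregator of per-pair similarities. This factorization will let me work coordinate-wise in $i$ and then lift to $\rho$ and $\psi$ via the $\min$ and $\sum$ operators respectively.

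First I would dispatch non-negativity and monotonicity at this per-voter level. Since $d(j,i)\in[0,1]$, each $c(j,i)\in[0,1]$ and hence $c(S,i)\in[0,1]$; both summation and minimization over $i\in N$ preserve non-negativity. For monotonicity, if $S\subseteq T$ then $\max_{j\in T}c(j,i)\geq\max_{j\in S}c(j,i)$, so $c(S,i)\leq c(T,i)$ coordinate-wise in $i$, and this inequality is preserved by both $\sum_i$ and $\min_i$.

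For submodularity I would first establish that $S\mapsto c(S,i)$ is submodular for every fixed $i$ via the one-line identity $c(S\cup\{v\},i) - c(S,i) = \max\{0,\,c(v,i)-\max_{j\in S}c(j,i)\}$, whose right-hand side is non-increasing in $S$ because the threshold $\max_{j\in S}c(j,i)$ is non-decreasing in $S$. Submodularity of $\psi$ is then immediate, since a non-negative linear combination of submodular functions is submodular, and this is the easy half of the proposition.

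The genuinely delicate step, and the one I expect to be the main obstacle, will be submodularity of $\rho$: a pointwise minimum of submodular functions is not submodular in general, so the closure argument used for $\psi$ does not apply. My plan is to fix, for arbitrary $S\subset T\subseteq N$ and $v\notin T$, an index $i^\star$ attaining the minimum in one of the four quantities $\rho(S), \rho(T), \rho(S\cup\{v\}), \rho(T\cup\{v\})$ and to chase the chain
\[
\rho(S\cup\{v\})-\rho(S) \;\geq\; c(S\cup\{v\},i^\star)-c(S,i^\star) \;\geq\; c(T\cup\{v\},i^\star)-c(T,i^\star) \;\geq\; \rho(T\cup\{v\})-\rho(T),
\]
in which the middle step invokes per-voter submodularity at $i^\star$ and the outer steps relate the coordinate $i^\star$ to the minimum over all coordinates. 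The hard part is that the minimizers on the $S$-side and the $T$-side need not coincide, so the correct choice of $i^\star$ (and the direction in which the outer inequalities go) must be made carefully; if the unrestricted inequality cannot be closed, the fallback is to establish the property only for set pairs sharing a common minimizing coordinate, which is already enough to drive the Nemhauser-style greedy guarantee referenced in Section~\ref{sec:greedymin_guarantee}.
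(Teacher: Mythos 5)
Your treatment of non-negativity, monotonicity, and the submodularity of $\psi$ is correct and in fact cleaner than the paper's: the appendix proof establishes the coordinate-wise inequality $c(S\cup\{v\},i)-c(S,i)\geq c(T\cup\{v\},i)-c(T,i)$ by a four-case analysis on whether each marginal is zero or positive, whereas your identity $c(S\cup\{v\},i)-c(S,i)=\max\{0,\,c(v,i)-c(S,i)\}$ yields the same conclusion in one line; summing over $i$ then handles $\psi$ exactly as the paper does.

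The gap is in the $\rho$ part, and your suspicion that the inequality chain cannot be closed is well founded: $\rho$ is \emph{not} submodular in general, so no choice of $i^\star$ will make the middle and outer inequalities compatible. A pointwise minimum of monotone, coordinate-wise submodular functions need not be submodular, and the special structure $c(j,j)=1$ does not rescue it. Concretely, take $N=\{1,2,3\}$ with $c(1,2)=0.9$ and $c(1,3)=c(2,3)=0.1$ (values realizable as expected normalized Kendall-Tau similarities). Then $\rho(\{1\})=\rho(\{1,2\})=0.1$, $\rho(\{1,3\})=0.9$, and $\rho(\{1,2,3\})=1$, so adding $v=3$ to $S=\{1\}$ gains $0.8$ while adding it to $T=\{1,2\}$ gains $0.9$: the marginal increases on the larger set. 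The paper does not resolve this difficulty either; its appendix proves the three properties only for $\psi$ and asserts that the proof for $\rho(\cdot)$ is ``similar,'' which is precisely where the argument breaks, since the coordinate-wise marginal inequality does not survive the outer $\min_{i\in N}$. Your fallback of restricting to set pairs sharing a minimizing coordinate (or otherwise weakening the claim so that the greedy guarantee of Section~\ref{sec:greedymin_guarantee} still goes through) is the right direction, but it establishes something strictly weaker than the proposition as written; as stated, the submodularity claim for $\rho$ is false.
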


We provide a proof of Proposition~\ref{prop:submodularity} in Appendix~\ref{app:submod_proof}.
%\vspace{-6mm}
%
%THEOREM 2.1. \cite{nemhauser1978analysis} For a non-negative, monotone submodular function f, let S be a set of size k obtained by selecting elements one at a time, each time choosing an element that provides the largest marginal increase in the function value. Let S* be a set that maximizes the value of f over all k-element sets. Then S provides a (1-1/e)- approximation.

%Our objective is to select a set of $k$ nodes that maximizes the objective functions in Equations~(\ref{eqn:influence_min}), (\ref{eqn:influence_sum}).
For a non-negative, monotone increasing, submodular function, the greedy hill-climbing algorithm (selecting elements one at a time, each time choosing an element that provides the largest marginal increase in the function value), gives a $(1-\frac{1}{e}) \approx 0.63$-approximation to the optimal solution~\cite{nemhauser1978analysis}.
As the considered objective functions in Equations~(\ref{eqn:influence_min}) and (\ref{eqn:influence_sum}) satisfy these properties, we use the greedy hill-climbing algorithm to obtain a good approximation to the optimal solution.
Moreover, as desired, the functions are agnostic to the aggregation rule being used.

We next devise algorithms for determining a set of representative nodes and present their performances with the aid of extensive experimentation.

%%%%%%%%%%%%%%%%%%%%%%%%%%%%%%%%%%%%%%%%%%%%%%%%%%%%%%%%%%%%%%%%%%%%%%%%%%%%%%%%%%%%%%%%%%%%%%%%%%
\section{Selection of the Representative Set: Algorithms and their Performances}
\label{sec:results}

Given the number of nodes to be selected $k$, our objective is to find a set $M$ of size $k$ such that 
%$|M|=k$ and 
$\mathbb{E} [ f(P) \; \Delta \; f(R) ]$ is minimized, where $R=Q'$ or $Q$ depending on the algorithm being used.
%
\begin{comment} %Jan2016
Note that $|M|$ is exactly $k$ as opposed to the general trend in literature where it is at most $k$.
This is because we select the $k$ most critical nodes by solving the abstracted problem and so selecting at most $k$ nodes instead of exactly $k$ nodes does not guarantee that the solution obtained to the original problem by selecting at most $k$ nodes is better than that obtained by selecting exactly $k$ nodes.
%
%It is also to be noted that the selected set maybe different for personal and social types of issues.
\end{comment} %Jan 2016

%\vspace{-2mm} %Above section spacing
\subsection{Algorithms for Finding Representatives}
\label{sec:algos}

Recall that the preference profile of $N$ is $P$, that of $M$ is $Q$, and that obtained by replacing every node's preference in $P$ by that of its uniquely chosen representative in $M$, is $Q'$.
%Recall the preference profile of the entire population $N$ be $Q$.
%Let $M'$ be the multiset of cardinality $n$ obtained by replacing every node in $N$ by its unique representative in $M$.
%We consider four methods for finding $M$ and hence deducing $f(R)$; in each method, we initialize $M$ to $\{\}$.

\begin{itemize}

\item \textbf{Greedy-orig} (Greedy hill-climbing for maximizing $1-\mathbb{E} [ f(P) \; \Delta \; f(Q') ]$):
Initialize $M$ to $\{\}$.
Until $|M|=k$, choose a node $j \in N \setminus M$ that 
minimizes the expected error or equivalently, maximizes $1-\mathbb{E} [ f(P) \; \Delta \; f( {Q}_M') ]$,
where $Q_M'$ is the preference profile obtained by replacing every node's preference in $P$ by the preference of its uniquely chosen representative in $M$.
Note that the optimal set would depend on the aggregation rule $f$.
%Then, obtain $f(R) = f(Q')$.
Its time complexity for obtaining $M$ and hence $R$ is $O(kn\mathcal{T}_f)$, where $\mathcal{T}_f$ is the time complexity of obtaining an aggregate preference using the aggregation rule $f$. For instance, $\mathcal{T}_f$ is $O(rn)$ for plurality and $O(1)$ for dictatorship.

%\textbf{\color{red}{Greedy-sum or Greedy-sum?}}
\item \textbf{Greedy-sum} (Greedy hill-climbing for maximizing $\psi(\cdot)$):
Initialize $M$ to $\{\}$.
Until $|M|=k$, choose a node $j \in N \setminus M$ that 
maximizes $\psi(M \cup \{j\}) - \psi(M)$.
Then obtain $f(R) = f(Q')$.
If the similarity matrix is known, its time complexity for obtaining $M$ and hence $R$ is $O(kn^2)$.
If the similarity matrix is unknown, the time complexity for deriving it is largely decided by the model used for deducing the mean distances between all pairs of nodes. 
%Same as Greedy-min, with the exception that $j$ is chosen such that it
%%provides the largest marginal increase in $\psi(\cdot)$.
%maximizes $\psi(M \cup \{j\}) - \psi(M)$.

\item \textbf{Greedy-min} (Greedy hill-climbing for maximizing $\rho(\cdot)$):
Initialize $M$ to $\{\}$.
Until $|M|=k$, choose a node $j \in N \setminus M$ that 
%provides the largest marginal increase in the function value, that is, the one that 
maximizes $\rho(M \cup \{j\}) - \rho(M)$.
Then obtain $f(R) = f(Q')$.
Its time complexity is the same as that of Greedy-sum.

\item \textbf{Degree-cen} (Degree centrality Heuristic): Choose $k$ nodes having the maximum degrees.
Then obtain $f(R) = f(Q)$.
Its time complexity for obtaining $M$ is $O(n(k+\log n))$.
\item \textbf{Random-poll} (Random selection without representation):
Choose a set of $k$ nodes uniformly at random.
%Until $|M|=k$, choose a node $j$ uniformly at random such that $j \in N \setminus M$.
Then obtain $f(R) = f(Q)$.
%Note that as this method is without representation, we observe $f(P) \; \Delta \; f(Q)$ instead of $f(P) \; \Delta \; f(Q')$.
%\item \textbf{Random-rep} (Random selection with representation):
%Same as Random-poll, except that $f(R) = f(Q')$.
%Until $|M|=k$, choose a node $j$ uniformly at random such that $j \in N \setminus M$.
%Then, obtain $f(R) = f(Q')$.
\end{itemize}
For all algorithms, the time complexity for arriving at an aggregate preference $f(R)$ depends on the aggregation rule $f$.
For dictatorship rule, in Random-poll, if the dictator is not in $M$, then the output is the preference of a node in $M$ chosen uniformly at random, else the output is the dictator's preference itself; in all other methods, the output is the preference of the dictator's unique representative in $M$.

%\vspace{-2mm} %Above section spacing
%\subsection{Simulation Setup}

%After obtaining the representative-set using the above algorithms, we tested their performance on $\mathbb{T}=10000$ topics or preference profiles generated using the RPM-S model (with the assigned neighbor chosen in a random way). 
%Owing to the nature of the Random-poll algorithm, we ran it sufficient number of times to get an independent representative set each time, and then defined the performance as the average over all the runs.

Before proceeding to experimental observations, we provide an analytical guarantee on the performance of the Greedy-min algorithm, for aggregation rules satisfying the expected weak insensitivity property.

%\vspace{-2mm} %Above section spacing
\subsection{A Guarantee on the Performance of Greedy-min Algorithm}
\label{sec:greedymin_guarantee}

The following result shows the performance guarantee of the Greedy-min algorithm.
\begin{theorem}
\label{thm:approx_guarantee}
For an aggregation rule satisfying expected weak insensitivity, the error incurred in using the aggregate preference given by the Greedy-min algorithm instead of the actual aggregate preference, is at most $\left(1 - \left(1-\frac{1}{e}\right) \rho^*\right)$, where $\rho^* = \max_{S \subseteq N,|S|\leq k} \rho(S)$.
\end{theorem}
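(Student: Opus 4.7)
The plan is to chain together three ingredients already established in the excerpt: (i) the structural properties of $\rho(\cdot)$ from Proposition~\ref{prop:submodularity}, (ii) the standard Nemhauser--Wolsey--Fisher guarantee for greedy hill-climbing on such functions, and (iii) Lemma~\ref{lem:guarantee}, which converts a closeness-in-expectation guarantee on the representative set into a bound on the aggregation error for any rule satisfying expected weak insensitivity.

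First, let $M^G$ denote the size-$k$ set returned by the Greedy-min algorithm, which greedily maximizes $\rho(\cdot)$. By Proposition~\ref{prop:submodularity}, $\rho(\cdot)$ is non-negative, monotone increasing, and submodular, so the classical greedy approximation result (invoked earlier in the chapter for the same purpose) yields
\begin{equation}
\nonumber
\rho(M^G) \;\geq\; \left(1-\tfrac{1}{e}\right)\rho^*,
\end{equation}
where $\rho^* = \max_{S \subseteq N,\,|S|\leq k}\rho(S)$.

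Second, I translate this lower bound on $\rho(M^G)$ into an upper bound on the worst expected distance from $M^G$ to any node. By the definition in Equation~(\ref{eqn:influence_min}) together with $c(\cdot,\cdot)=1-d(\cdot,\cdot)$,
\begin{equation}
\nonumber
\max_{i \in N} d(M^G, i) \;=\; 1 - \min_{i \in N} c(M^G, i) \;=\; 1 - \rho(M^G) \;\leq\; 1 - \left(1-\tfrac{1}{e}\right)\rho^*.
\end{equation}
Note that $\rho^* \in [0,1]$ (since $c(\cdot,\cdot)\in[0,1]$) and $(1-1/e)\in[0,1]$, so the quantity $\epsilon_d := 1-(1-1/e)\rho^*$ lies in $[0,1]$, making it a legitimate argument for Lemma~\ref{lem:guarantee}.

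Finally, I apply Lemma~\ref{lem:guarantee} with $M = M^G$ and $\epsilon_d = 1-(1-1/e)\rho^*$. Since the aggregation rule is assumed to satisfy expected weak insensitivity and $d(M^G,i)\leq\epsilon_d$ for every $i\in N$, the lemma gives
\begin{equation}
\nonumber
\mathbb{E}\bigl[f(P)\;\Delta\;f(Q')\bigr] \;\leq\; \epsilon_d \;=\; 1 - \left(1-\tfrac{1}{e}\right)\rho^*,
\end{equation}
which is the claimed bound. The argument is essentially a two-line reduction; the only subtle point—and the one I would double-check carefully—is the range condition $\epsilon_d\in[0,1]$ required by Lemma~\ref{lem:guarantee}, which holds automatically here but would not for arbitrary reshufflings of the constants, so the bound is stated in its tightest natural form.
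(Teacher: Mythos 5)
Your proof is correct and follows exactly the paper's argument: the $(1-\frac{1}{e})$ greedy guarantee on the submodular function $\rho(\cdot)$, translation into a uniform bound on $\max_{i\in N} d(M^G,i)$, and an application of Lemma~\ref{lem:guarantee}. Your extra check that $\epsilon_d = 1-(1-\frac{1}{e})\rho^* \in [0,1]$ is a small point of care the paper leaves implicit, but otherwise the two proofs coincide.
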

\begin{proof}
%Let $S^*$ be a set that maximizes $\rho(\cdot)$ and 
Let $S^G$ be a set obtained using greedy hill-climbing algorithm for maximizing $\rho(\cdot)$. Since greedy hill-climbing provides a $\left(1-\frac{1}{e}\right)$-approximation to the optimal solution, we have
\begin{align*}
 &\;\;\;\;\;\;\;\;\;\;
 \rho(S^G) = \min_{i \in N} c(S^G,i) \geq \left(1-\frac{1}{e}\right) \rho^* \\
 &\implies
  1 - \max_{i \in N} d(S^G,i) \geq \left(1-\frac{1}{e}\right) \rho^* \\
&\implies
\max_{i \in N} d(S^G,i) \leq  1-\left(1-\frac{1}{e}\right) \rho^*\\
&\implies
d(S^G,i) \leq  1-\left(1-\frac{1}{e}\right) \rho^*, \;\;\; \forall i\in N
\end{align*}
For an aggregation rule satisfying expected weak insensitivity property, from Lemma~\ref{lem:guarantee}, when the representative set $M=S^G$, we have
\begin{equation}
\nonumber
%\implies 
\mathbb{E} [ f(P) \; \Delta \; f(Q') ] \leq 1 - \left(1-\frac{1}{e}\right) \rho^*
\end{equation}
\end{proof}

It is to be noted that though the approximation ratio given by the greedy algorithm is modest in theory, it has been observed in several domains that its performance is close to optimal in practice when it comes to optimizing non-negative, monotone increasing, submodular functions.

%\vspace{-2mm} %Above section spacing
\subsection{Experimental Observations}

After obtaining the representative set using the above algorithms, we tested their performance on $\mathbb{T}=10000$ topics or preference profiles generated using the RPM-S model (with the assigned neighbor chosen in a random way) on the Facebook data that we collected. 
Owing to the nature of the Random-poll algorithm, we ran it sufficient number of times to get an independent representative set each time, and then defined the performance as the average over all the runs.
The values of $\mathbb{E} [ f(P) \; \Delta \; f(R) ]$ were computed using extensive simulations with the considered aggregation rules. 
It can also be noted that tie-breaking between alternatives was rarely required in our experiments, and so the error $\mathbb{E} [ f(P) \; \Delta \; f(R) ]$ can be viewed as Kendall-Tau distance between the actual aggregate preference and the obtained aggregate preference.

For random dictatorship, with increasing value of $k$, the error remained almost constant (with a very slight descent) using all algorithms; while for dictatorship, the error monotonically decreased with $k$ using algorithms except Random-poll (in which case it remained almost constant).
%for average and worst cases with respect to personal issues and average case with respect to social ones. 
%The plots for all aggregation rules were similar (albeit with slightly different scaling) to the ones plotted in Figure~\ref{fig:plots_pasn}.
Apart from dictatorship and random dictatorship, the plots for all aggregation rules were similar (albeit with slightly different scaling) to the ones plotted in Figure~\ref{fig:plots_pasn}. 

\begin{figure}
\centering
   \iftoggle{clr}{
\includegraphics[scale=0.78]{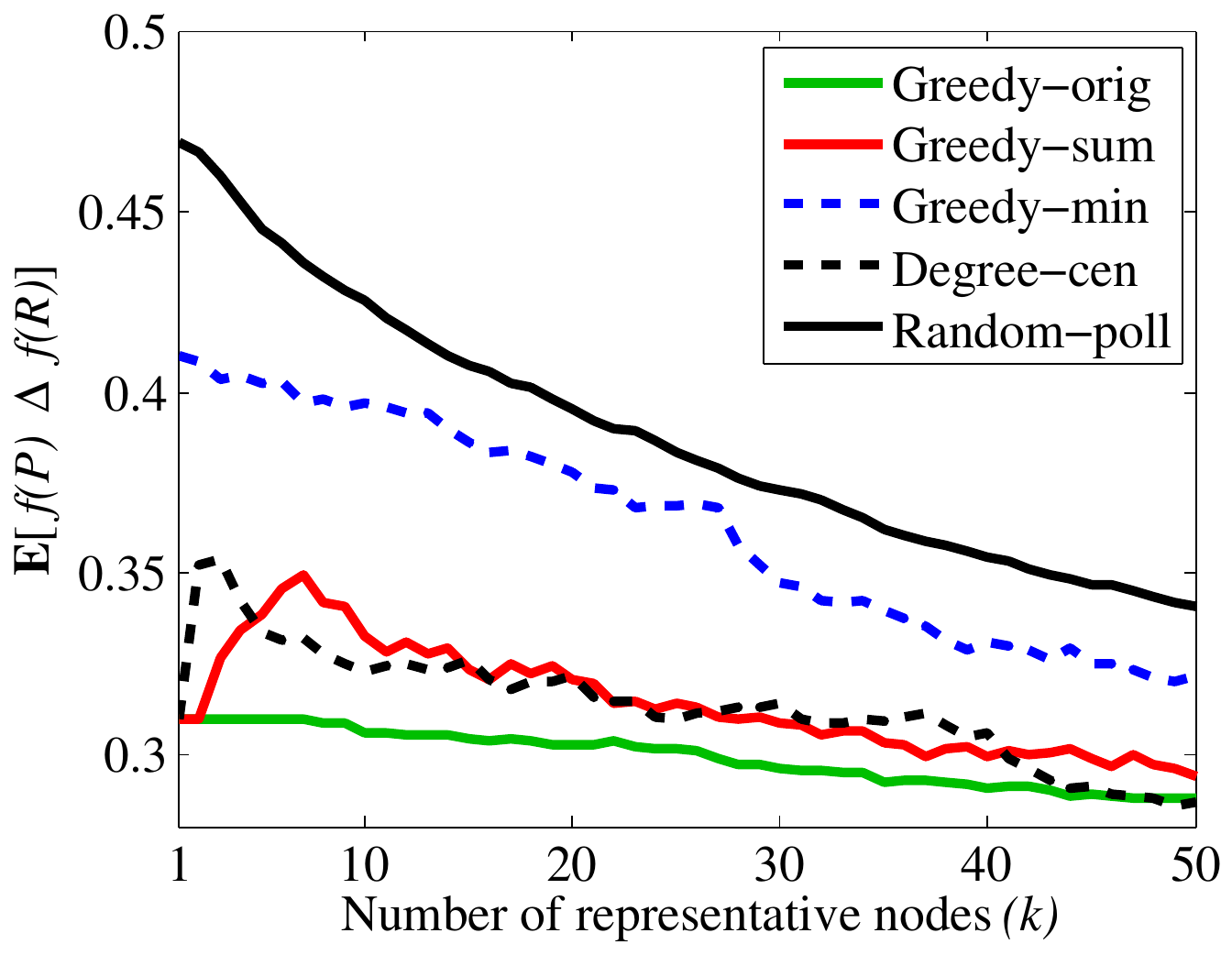} 
   }{
\includegraphics[scale=0.78]{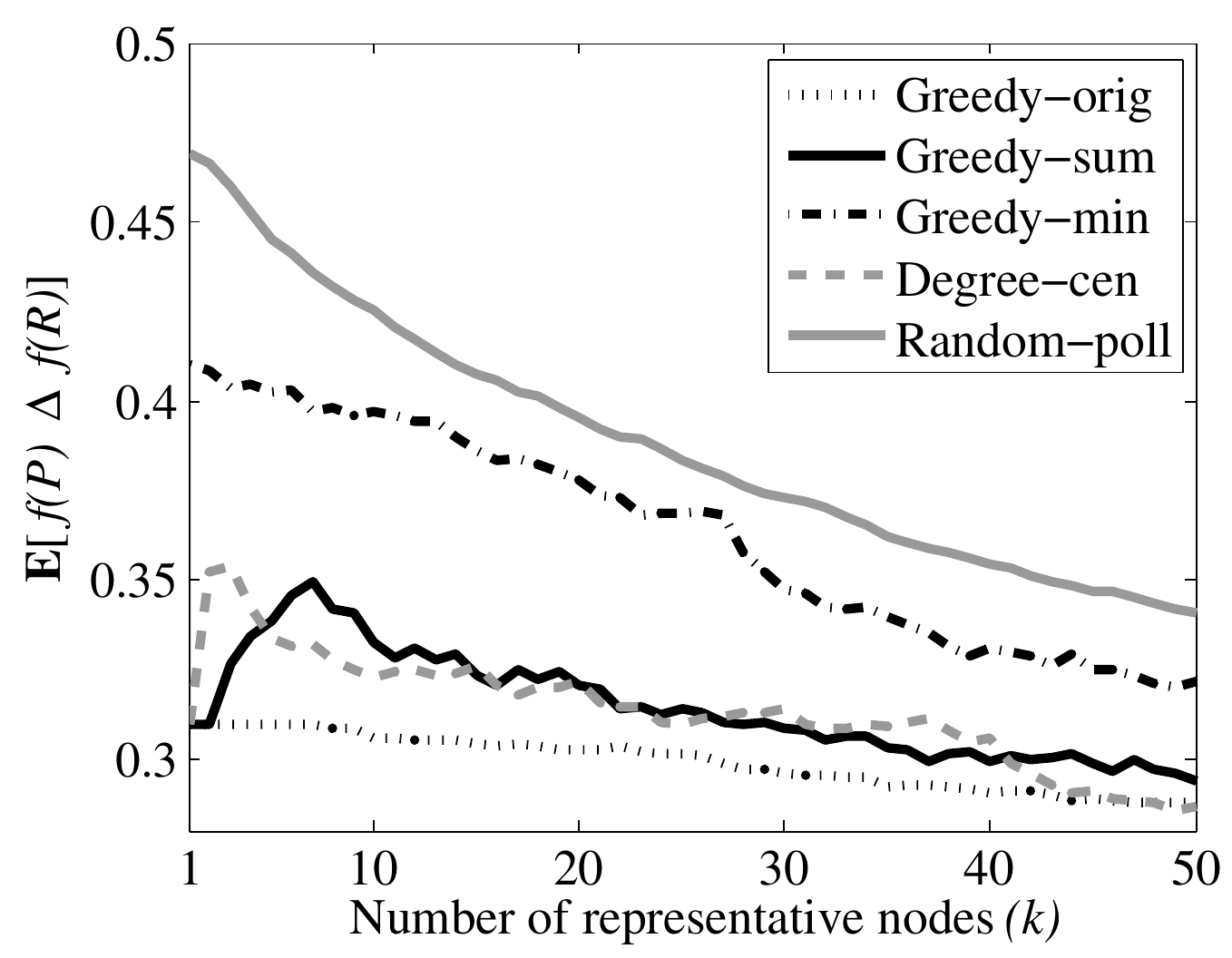} 
}
\caption{
Comparison among algorithms for plurality aggregation rule on Facebook dataset
}
\label{fig:plots_pasn}
%\vspace{-2mm} %Below figure spacing
\end{figure}

Our key observations are as follows:
\begin{itemize}
\item Greedy-orig performed the best throughout, however its execution took a few days even for computationally fast aggregation rules such as plurality; so it is practically infeasible to run this algorithm for computationally intensive rules, for example, Kemeny.
\item Greedy-sum performed close to Greedy-orig;
but its plots displayed non-monotonicity especially in the lower range of $k$, and so a higher $k$ might not always lead to a better result.
%\item The performances of Random-poll and Random-rep are not very different.
\item Greedy-min performed better than Random-poll for low values of $k$; this difference in performance decreased for higher values of $k$.
% not salient in average cases owing to low standard deviation of the data.
%Random-rep performs at par with Greedy-min for higher values of $k$.
The effect of satisfaction or otherwise of expected weak insensitivity was not very prominent, because the property is not violated by an appreciable enough margin for any aggregation rule.
% in case of only five alternatives.
%This makes our approach more attractive in practice as it performs well irrespective of the aggregation rule.
Nonetheless, the expected weak insensitivity property does provide a guarantee on the performance of Greedy-min for an aggregation rule.

\item As mentioned earlier, the performance of Random-poll is based on an average over several runs; the variance in performance was very high for low values of $k$, and the worst case performance was unacceptable. The variance was acceptable for higher values of $k$.

\item The performance of the simple Degree-cen heuristic showed a perfect balance between performance and running time. This demonstrates that high degree nodes indeed serve as good representatives of the population.
%
%Note that it is possible that nodes which have low degree in our dataset may have high degrees on Facebook. However, owing to absence of neighbors of such nodes from the population of our data...

\item As mentioned earlier, the running time of Greedy-orig was unacceptably large. Greedy-sum and Greedy-min took about a minute to run, while Random-poll and Degree-cen took negligible time (less than a second).
\end{itemize}

It is clear that the Greedy-sum algorithm exhibited consistently excellent performance. In fact, the node which is chosen as the first node 
%is the same as that chosen by Greedy-orig, and furthermore, 
gives almost the same result as when the representative set is large. We study the selection of this first representative from a different viewpoint, that of cooperative game theory. 

%\vspace{-2mm} %Above section spacing
\subsection{A Cooperative Game Theoretic Viewpoint of the Greedy-sum Algorithm}
\label{sec:cooperativeview}

We have seen that in order to maximize the objective function
$
\psi(S) = \sum_{i \in N} c(S,i)
$,
the greedy hill-climbing algorithm first chooses a node $j$ that maximizes
$\sum_{i \in N} c(i,j)$ or equivalently $\sum_{i \in N, i\neq j} c(i,j)$ (since $c(j,j)=1$ for all $j$'s).
It has been shown in \cite{garg2013novel} that the term $\sum_{i \in N, i\neq j} c(i,j) = \phi_j(\nu)$ is the Shapley value allocation of player $j$, in a convex Transferable Utility (TU) game $(N,\nu)$ with the characteristic function 
\begin{equation}
\label{eqn:game}
\nu(S) = \sum_{\substack{i,j \in S\\ i \neq j}} c(i,j) 
\end{equation}
In some sense, this characteristic function is an indication of how tightly knit a group is, or how similar the members of a set $S$ are to each other.

%\section{Coincidence of Nucleolus, Shapley value, Gately point and $\tau$-value in the current setting}
%\label{sec:equiv_nuc_shap}

There are several solution concepts in the literature on cooperative game theory, of which we consider Nucleolus, Shapley value, Gately point, and $\tau$-value, which individually possess a number of desirable properties. The reader is referred to \cite{straffin1993game,tijs1987axiomatization} for further details.
In the game as defined in Equation~(\ref{eqn:game}), we show that the aforementioned solution concepts coincide.
%Nucleolus, Shapley value, Gately point, and $\tau$-value, coincide. 
%First, we discuss the structure of the core. 
The intuition is that the core is symmetric about a single point, which is a primary reason why the above solution concepts coincide with that very point. 
We now formally prove the coincidence of the above solution concepts.
%a general result for $n$-player clustering game that all the above solution concepts coincide.
%
%This proof is similar to the one given by Chun Y. and Hokari T.\cite{chun2007coincidence}. We have a series of lemma which eventually show the result.

Let $Nu(\nu),Gv(\nu),\tau(\nu)$ be Nucleolus, Gately point, $\tau$-value of the TU game $(N,\nu)$, and $\phi(\nu)=(\phi_j(\nu))_{j\in N}$ and so on.

\begin{comment} %Jan2016
\begin{lemma} \cite{chun2007coincidence}
\label{lem1}
 For the TU game satisfying Equation~(\ref{shapleyeqhalf}), for each $S \subseteq N$,
\begin{equation}
 \nonumber
 \nu(S) - \sum_{i \in S}\phi_i = \nu(N\setminus S) - \sum_{i \in N\setminus S}\phi_i
\end{equation}
\end{lemma} 
\begin{flushleft}
%The reader is referred to \cite{chun2007coincidence} for the proof of Lemma~\ref{lem1}.
\end{flushleft}
\end{comment} %Jan2016

%%%%%%%%%%%%%%%%%%%%%%%%%%%%%%%%%%%%%%%%%%%%%%%%%%%%

\begin{theorem} %\cite{chun2007coincidence}
\label{shapeqnu}
 For the TU game defined by Equation~(\ref{eqn:game}),
 %satisfying Equation~(\ref{shapleyeqhalf}),
\begin{equation}
\nonumber
 \phi(\nu)=Nu(\nu)=Gv(\nu)=\tau(\nu).
\end{equation}
%where $\phi(\nu),Nu(\nu),Gv(\nu),\tau(\nu)$ are the Shapley value, Nucleolus, Gately point, $\tau$-value of the TU game $(N,\nu)$.
\end{theorem}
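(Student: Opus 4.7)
The plan is to compute each of the four solution concepts explicitly and show that they all collapse to the Shapley allocation $\phi_i(\nu)=\sum_{j\neq i}c(i,j)$. A key structural fact I will use repeatedly is the decomposition $\nu=\sum_{\{p,q\}}\nu_{pq}$ into two-player games, where $\nu_{pq}(S)=2c(p,q)\,\mathbf{1}_{\{p,q\}\subseteq S}$, and the identities $\nu(\{i\})=0$ and $\nu(N\setminus\{i\})=\nu(N)-2\sum_{j\neq i}c(i,j)=\nu(N)-2\phi_i(\nu)$, which are immediate from the definition in Equation~(\ref{eqn:game}).

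First I would verify the Shapley formula. By the additivity axiom, $\phi_i(\nu)=\sum_{\{p,q\}}\phi_i(\nu_{pq})$, and the symmetry/dummy axioms force $\phi_i(\nu_{pq})=c(p,q)$ when $i\in\{p,q\}$ and $0$ otherwise, yielding $\phi_i(\nu)=\sum_{j\neq i}c(i,j)$, with $\sum_i \phi_i(\nu)=\nu(N)$ by symmetry of $c$. Next, the $\tau$-value is a short computation: $M_i(\nu)=\nu(N)-\nu(N\setminus\{i\})=2\phi_i(\nu)$ and $m_i(\nu)=\nu(\{i\})=0$, so the efficiency requirement $\sum_i[\lambda M_i(\nu)+(1-\lambda)m_i(\nu)]=\nu(N)$ gives $\lambda=\tfrac{1}{2}$, hence $\tau_i(\nu)=\phi_i(\nu)$. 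For the Gately point, plug $\phi$ into the propensity-to-disrupt formula: using $\sum_{j\neq i}\phi_j(\nu)=\nu(N)-\phi_i(\nu)$ and $\nu(N\setminus\{i\})=\nu(N)-2\phi_i(\nu)$, the numerator equals $\phi_i(\nu)$ and the denominator equals $\phi_i(\nu)$, so $d_i(\phi)=1$ for every $i$. Since all propensities are equalized at $\phi$ and $Gv(\nu)$ is characterized as minimizing the maximum propensity (achieved by equalization), we conclude $Gv(\nu)=\phi(\nu)$.

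The main obstacle is the nucleolus. I would first show that $\phi(\nu)$ lies in the core by observing that, for any $S\subseteq N$,
\begin{equation*}
\sum_{i\in S}\phi_i(\nu)=\nu(S)+\sum_{i\in S,\,j\in N\setminus S}c(i,j)\;\geq\;\nu(S),
\end{equation*}
so the excess $e_S(\phi)=-\sum_{i\in S,\,j\in N\setminus S}c(i,j)\leq 0$. This expression is visibly invariant under $S\mapsto N\setminus S$, giving the crucial symmetry $e_S(\phi)=e_{N\setminus S}(\phi)$. Since the map $S\mapsto N\setminus S$ is a bijection between $\{S:i\in S,\,j\notin S\}$ and $\{T:j\in T,\,i\notin T\}$ that preserves excess at $\phi$, the two corresponding maxima coincide, which is exactly the Kernel condition for $\phi$. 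Finally, I would invoke the coincidence theorem of Chun (2007) cited in the excerpt, whose hypotheses are tailored to games enjoying precisely this kind of symmetric-excess and core-membership structure, to promote ``$\phi$ lies in the kernel and in the core'' to ``$\phi$ equals the nucleolus.''

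The delicate step is the last one: being in the kernel alone does not identify the nucleolus, so I would need either to verify the precise hypotheses of Chun's coincidence result for the game $(N,\nu)$ of Equation~(\ref{eqn:game}), or, failing that, to supplement the argument with a direct lexicographic comparison — showing that for any imputation $x\neq\phi$, some coalition achieves a strictly larger excess than the worst excess at $\phi$ — exploiting the pairwise symmetry of $e_S(\phi)$ together with convexity of $\nu$ (which itself follows from the pair-decomposition and non-negativity of $c$). Bridging this gap is where the care is needed; the remaining three coincidences ($\phi=\tau=Gv$) are essentially one-line computations once the Shapley formula is in hand.
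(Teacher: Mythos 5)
Your proposal is correct and follows essentially the same route as the paper's proof. The Gately and $\tau$-value coincidences are obtained by the identical computations ($m_i(\nu)=0$, $M_i(\nu)=\nu(N)-\nu(N\setminus\{i\})$ proportional to $\phi_i(\nu)$, hence $\lambda=\tfrac{1}{2}$, and $d_i(\phi(\nu))=1$ for all $i$), and the nucleolus coincidence is ultimately delegated to the coincidence result of \cite{chun2007coincidence} --- which is exactly what the paper does: its entire argument for $\phi(\nu)=Nu(\nu)$ is the sentence ``follows from \cite{chun2007coincidence}''. Where you go beyond the paper is in deriving the key identity self-containedly: your computation of the excess $e_S(\phi)=-\sum_{i\in S,\,j\notin S}c(i,j)$ and its invariance under $S\mapsto N\setminus S$ is precisely the lemma the paper imports from that reference, and your core- and kernel-membership observations are sound. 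Moreover, the ``delicate step'' you flag is not actually left open by your own ingredients: you note that $\nu$ is convex (a nonnegative sum of pair games), and for convex games the kernel is a singleton coinciding with the nucleolus (Maschler, Peleg, and Shapley), so kernel membership of $\phi(\nu)$ already forces $\phi(\nu)=Nu(\nu)$ without any further lexicographic comparison. One cosmetic remark: you read Equation~(\ref{eqn:game}) as a sum over ordered pairs, giving $\phi_i(\nu)=\sum_{j\neq i}c(i,j)$ as in the paper's main text, whereas the appendix reads it over unordered pairs and carries an extra factor $\tfrac{1}{2}$; the discrepancy is uniform and affects none of the coincidences.
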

We provide a proof of Theorem~\ref{shapeqnu} in Appendix~\ref{app:coincidence}.
So one possible reason for the excellent performance of the Greedy-sum algorithm is that, it aims to maximize a term that is the solution suggested by several solution concepts for a TU game capturing the similarities within a group.

%%%%%%%%%%%%%%%%%%%%%%%%%%%%%%%%%%%%%%%%%%%%%%%%%%%%%%%%%%%%%%%%%%%%%%%%%%%%%%%%%%%%%%%%%%%%%%%%%%
\section{Discussion}
\label{sec:conclusion_pasn}

This chapter focused on two problems on preference aggregation with respect to social networks, namely, (1) how preferences are spread in a social network and (2) how to determine the best set of representative nodes in the network.
We started by motivating both these problems. Based on the Facebook app that we had developed, we developed a number of simple and natural models, of which RPM-S showed a good balance between accuracy and running time; while MSM-SP was observed to be the best model when our objective was to deduce the mean distances between all pairs of nodes, and not the preferences themselves.
We then formulated an objective function for representative-set selection and followed it up with two alternative objective functions for practical usage. 
We then proposed algorithms for selecting best representatives, wherein we provided a guarantee on the performance of the Greedy-min algorithm, subject to the aggregation rule satisfying the expected weak insensitivity property; we also studied the desirable properties of the Greedy-sum algorithm from the viewpoint of cooperative game theory. 
We also observed that degree centrality heuristic performed very well, thus showing the ability of high-degree nodes to serve as good representatives of the population. 
Our main finding is that harnessing social network information in a suitable way will achieve scalable preference aggregation of a large population and will certainly outperform random polling based methods.

We now provide notes on the effectiveness of our approach for aggregating preferences with respect to personal and social topics, as well as on an alternative way of defining the distribution of preferences between nodes.

%\vspace{-2mm} %Above section spacing
\subsection{A Note on Personal vs. Social Topics}

Throughout this chapter, we focused on aggregating preferences across all topics, without classifying them into different types (such as personal or social). 
%However, in practice, it maybe advantageous to consider different types separately, and choose representatives based on the type of topic. 
We now provide a brief note on personal versus social type of topics.

%\textbf{\color{red}{We have studied overall, now personal and social, different representatives.}}
As noted earlier, the performance of Random-poll is close to that of Greedy-min, a primary reason being the low standard deviation of the mean distances (\stdoverall). This also applies when the topics are restricted to social topics, which again has a low standard deviation of \stdsocial. The performance difference between the two is visible when it comes to aggregating preferences with respect to personal topics, which has a higher standard deviation of \stdpersonal. Also note that the variance in performance of Random-poll is unacceptably high for lower values of $k$, but is acceptable for higher values. 
This justifies its usage for social topics with a non-negligible sample size.

A high level of similarity between unconnected nodes with respect to social topics can perhaps be attributed to the impact of news and other common channels. It may also be justified by a theory of political communication~\cite{huckfeldt1995political} which stresses the importance of citizen discussion beyond the boundaries of cohesive groups for the dissemination of public opinion.

%\textbf{\color{red}{Details.}}
%\vspace{-2mm} %Above section spacing
\subsection{An Alternative Way of Modeling the Distribution of Preferences between Nodes}
\label{sec:alt_dist}

Throughout this chapter, we conducted the study based on the distribution of (Kendall-Tau or Footrule) distances between nodes with respect to their preferences. 
Though modeling the distribution this way makes it a reasonably general approach, a primary drawback is the way distances between preferences themselves are distributed, that is, there are several preferences which are at a distance of, say 0.5, from a given preference as compared to those at a distance of, say 0.3, which leads to a bias in distances to be concentrated in the intermediate range. 

One could model the distribution of preferences between two nodes in a more explicit way, by modeling the correlation between the positions of an alternative in the preferences of the two nodes. 
%Hence we now propose an alternative and a more direct way of deducing the distribution of preferences in a social network, instead of taking an indirect way of deducing preferences through the distances between nodes.
For instance, the distribution between two nodes $\{i,j\}$ could answer the following question: what is the probability that an alternative is ranked at position $x$ by node $i$, given that it is ranked at position $y$ by node $j$?
We believe this way of modeling the distribution as a promising future direction to this work.

%Jan2016

\begin{comment}
%\vspace{-2mm} %Above section spacing
\subsection{Future Work}
\label{sec:future}

A primary objective of this work was to select $k$ nodes so as to minimize $\mathbb{E} [ f(P) \; \Delta \; f(R) ]$. This work can be extended to select minimum number of nodes such that this value is bounded.
%
%We introduced and 
We empirically investigated expected weak insensitivity for various aggregation rules.
% under distribution $\mathcal{D}$, Kendall-Tau distance measure, and the defined $\Delta$. 
This property might be of prime importance in social choice theory and so, it will be interesting to analytically determine the rules that satisfy it.
%this property.
%It may be of practical interest to study its generalization
%% of this property 
%where the aggregate preference changes by at most $\theta \mu_d$ instead of $\mu_d$ (see Equation~(\ref{eqn:weak_ins})), in expectation, for some constant $\theta$.
%\textbf{\color{red}{constant factor expected weak insensitivity property}}
%
We used a particular form of modified profile $R=Q'$. It will be interesting to study the `best' form of $R$.
% given a profile $P$.
%
%The time complexity of Greedy-sum and Greedy-min algorithms is large owing to the global nature of the abstracted optimization problem. It will be useful to provide a reasonable solution using its localized counterpart like, for example, degree centrality heuristic.
%The scale of our survey
%% data for which we proposed a model, 
%was modest. In order to verify and refine the proposed model, it is essential to have a survey on a larger scale.
%
We assumed that the voters are not strategic and so report their preferences truthfully. From a game theoretic viewpoint, it would be interesting to look at the strategic aspect of the problem.
%
General random utility models are complementary to our model, exploiting attributes of nodes and alternatives instead of the underlying social network. It will be interesting to consider attributes as well as the underlying social network for node selection.
\end{comment}

%\end{document}

\vspace{10mm}
This concludes the technical contributions of the thesis. The next chapter will summarize the thesis by presenting conclusions of the addressed problems in brief, followed by some interesting future directions.

\begin{subappendices}
%Authors: Swapnil Dhamal, Rohith D. Vallam, Y. Narahari
%Indian Institute of Science, Bangalore

%\chapter{The Facebook App}
\chapter*{Appendix for Chapter~\ref{chap:pasn}}
\addcontentsline{toc}{chapter}{Appendix for Chapter~\ref{chap:pasn}}
\label{chap:appendix_pasn}

%\counterwithin{figure}{section}

\section{Proof of Proposition \ref{prop:submodularity}}
\label{app:submod_proof}

\begin{customprop}{\ref{prop:submodularity}}
The objective functions $\rho(\cdot)$ and $\psi(\cdot)$ 
%as defined in Equations~(\ref{eqn:influence_min}) and (\ref{eqn:influence_sum}) 
are non-negative, monotone increasing, and submodular.
\end{customprop}
\begin{proof}
We prove the properties in detail for $\psi(\cdot)$. The proof for $\rho(\cdot)$ is similar.

Consider sets $S,T$ such that $S \subset T \subset N$ and a node $v \in N \setminus T$. It is clear from Equation~(\ref{eqn:influence_sum}) that $\psi(\cdot)$ is non-negative.
Let 
$x_i = c(S,i),\;y_i = c(T,i),\;\bar{x_i} = c(S \cup \{v\} ,i),\;\bar{y_i} = c(T \cup \{v\} ,i)$.
%\begin{center}
%\begin{tabular}{p{3cm}p{3cm}}
%$x_i = c(S,i)$ , & $\bar{x_i} = c(S \cup \{v\} ,i)$ , \\
%\vspace{.5mm}
%$y_i = c(T,i)$ , & \vspace{.5mm} $\bar{y_i} = c(T \cup \{v\} ,i)$  .
%\end{tabular}
%\end{center}
%\begin{eqnarray}
%\nonumber x_i &=& c(S,i) \\
%\nonumber y_i &=& c(T,i) \\
%\nonumber \bar{x_i} &=& c(S \cup \{v\} ,i) \\
%\nonumber \bar{y_i} &=& c(T \cup \{v\} ,i)
%\end{eqnarray}
%
For any $i \in N$,
%\vspace{-2mm}
\begin{displaymath}
c(S,i) = \max_{j \in S \subseteq T} c(j,i) \leq \max_{j \in T} c(j,i) = c(T,i)
\end{displaymath}
\begin{equation}
\label{eqn:x_y}
\implies x_i \leq y_i
\end{equation}
That is, $\psi(\cdot)$ is monotone. Similarly, it can be shown that
\begin{equation}
\label{eqn:three}
\bar{x_i} \leq \bar{y_i} \; ; \;\;\;
x_i \leq \bar{x_i} \; ; \;\;\;
y_i \leq \bar{y_i}
%\vspace{-3mm}
\end{equation}
%\begin{equation}
%\label{eqn:xbar_ybar}
%\bar{x_i} \leq \bar{y_i}
%\end{equation}
%\begin{equation}
%\label{eqn:x_xbar}
%x_i \leq \bar{x_i} 
%\end{equation}
%\begin{equation}
%\label{eqn:y_ybar}
%y_i \leq \bar{y_i}
%\end{equation}
%
%Now,
%\vspace{-2mm}
\begin{eqnarray}
\hspace{-5mm}
\text{Now,} \;\;\; y_i < \bar{y_i} 
\label{eqn:maxT} & \implies & k \notin \argmax_{j \in T \cup \{v\}} c(j,i) \text{ } \forall k \in T \text{ }\\
\label{eqn:maxS} & \implies & k \notin \argmax_{j \in S \cup \{v\}} c(j,i) \text{ } \forall k \in S \subseteq T \text{ }\\
\label{eqn:y_ybar_x_xbar}
& \implies & x_i < \bar{x_i} 
\end{eqnarray}
The contrapositive of the above, from Inequalities~(\ref{eqn:three}) is
%(\ref{eqn:x_xbar}) and (\ref{eqn:y_ybar}), is 
\begin{equation}
\label{eqn:contra}
x_i = \bar{x_i} \implies y_i = \bar{y_i} 
\end{equation}
Also, from Implications~(\ref{eqn:maxT}) and (\ref{eqn:maxS}),
\begin{eqnarray}
\nonumber 
y_i < \bar{y_i} 
& \implies & \{v\} = \argmax_{j \in T \cup \{v\}} c(j,i) = \argmax_{j \in S \cup \{v\}} c(j,i) \\
\label{eqn:xbar_equals_ybar} & \implies & \bar{x_i} = \bar{y_i}
\end{eqnarray}
%
%Let $L = \psi(S \cup \{v\}) - \psi(S)$ and $U = \psi(T \cup \{v\}) - \psi(T)$.
%From Inequality~(\ref{eqn:submodular}), our objective is to show that $L \geq R$ for all $v \in N$ and for all $S,T$ such that $S \subseteq T \subseteq N$.\\
%%
%%\vspace{-2mm}
%\noindent
%From Equation~(\ref{eqn:influence_sum}),
%\begin{equation}
%\nonumber
%L =  \sum_{i \in N} \Big( c(S\cup \{v\},i) - c(S,i) \Big) = \sum_{i \in N}(\bar{x_i} - x_i)
%%\vspace{-2mm}
%\end{equation}
%\begin{equation}
%\nonumber
%U =  \sum_{i \in N} \Big( c(T\cup \{v\},i) - c(T,i) \Big) = \sum_{i \in N}(\bar{y_i} - y_i)
%\end{equation}
%
Now from Inequalities~(\ref{eqn:three}), depending on node $i$, four cases arise that relate the values of $\bar{x_i} - x_i$ and $\bar{y_i} - y_i$.
\begin{description}
\item[Case 1:] $x_i = \bar{x_i}$ and $y_i = \bar{y_i}$:\\
In case of such an $i$, we have $\bar{x_i} - x_i = \bar{y_i} - y_i$
\item[Case 2:] $x_i = \bar{x_i}$ and $y_i < \bar{y_i}$:\\
By Implication~(\ref{eqn:contra}), there does not exist such an $i$.
\item[Case 3:] $x_i < \bar{x_i}$ and $y_i = \bar{y_i}$:\\
In case of such an $i$, we have $\bar{x_i} - x_i > \bar{y_i} - y_i$
\item[Case 4:] $x_i < \bar{x_i}$ and $y_i < \bar{y_i}$:
For such an $i$,
\begin{eqnarray}
\hspace{-4mm}
\nonumber
\bar{x_i} - x_i &=& \bar{y_i} - x_i \;\;\;\text{ (from Implications~(\ref{eqn:y_ybar_x_xbar}) and (\ref{eqn:xbar_equals_ybar}))} \\
\nonumber
&\geq& \bar{y_i} - y_i \;\;\;\text{ (from Inequality~(\ref{eqn:x_y}))} 
\end{eqnarray}
From the above cases, we have 
\begin{eqnarray}
\hspace{-4mm}
\nonumber
& & \bar{x_i} - x_i \geq \bar{y_i} - y_i, \;\;\;\forall i \in N \\
\nonumber
&\implies& \sum_{i \in N} (\bar{x_i} - x_i) \geq \sum_{i \in N} (\bar{y_i} - y_i) \\
\nonumber
&\implies& \sum_{i \in N} c(S\cup \{v\},i) - \sum_{i \in N} c(S,i) 
\geq \sum_{i \in N} c(T\cup \{v\},i) - \sum_{i \in N} c(T,i) \\
\nonumber
&\implies& \psi(S \cup \{v\}) - \psi(S) \geq \psi(T \cup \{v\}) - \psi(T)
\end{eqnarray}
\end{description} 
%These cases imply that for any $i \in N$, $L \geq U$. 
As the proof is valid for any $v \in N \setminus T$ and for any $S,T$ such that $S \subset T \subset N$, the result is proved.
\end{proof}

\section{Proof of Theorem~\ref{shapeqnu}}
\label{app:coincidence}

\begin{customthm}{\ref{shapeqnu}}
 For the TU game defined by Equation~(\ref{eqn:game}),
 %satisfying Equation~(\ref{shapleyeqhalf}),
\begin{equation}
\nonumber
 \phi(\nu)=Nu(\nu)=Gv(\nu)=\tau(\nu).
\end{equation}
%where $\phi(\nu),Nu(\nu),Gv(\nu),\tau(\nu)$ are the Shapley value, Nucleolus, Gately point, $\tau$-value of the TU game $(N,\nu)$.
\end{customthm}

\begin{proof}
In Equation~(\ref{eqn:game}), when $|S|=2$ where $S=\{i,j\}$, 
\begin{equation}
\label{eqn:for2}
\nu(\{i,j\})=c(i,j)
\end{equation}
So we have
\begin{equation}
\label{eqn:pair}
\nu(S) = \sum_{\substack{T \subseteq N\\ |T|=2}} \nu(T)
\end{equation}
Now from Equation~(\ref{eqn:for2}),
 %for the game defined by Equation~(\ref{eqn:game}), 
 the Shapley value allocation for each $j \in N$, can be rewritten as
 % given by
%\begin{equation}
%\label{shapleyeqhalf}
%\phi_i = \frac{1}{2}\sum_{\substack{S \subseteq N\\ i \in S\\|S| = 2}} \nu(S)
%\end{equation}
%\end{proposition}
%
%\begin{proof}
%From Equations~(\ref{charfn}) and (\ref{shapleyeq}) and (\ref{eqn:for2}),
\begin{align}
\label{shapleyeqhalf}
\hspace{-2mm}
\phi_j(\nu) = \frac{1}{2} \sum_{\substack{i \in N\\i \neq j}}c(i,j)
= \frac{1}{2} \sum_{\substack{i \in N\\i \neq j}}\nu(\{i,j\})
%&= \frac{1}{2}\ \frac{1}{2} \sum_{\substack{S \subseteq N\\|S|=2\\k, l \in S\\ k \neq l\\ i \in S}}c(k,l)\\
%&= \frac{1}{2}\sum_{\substack{S \subseteq N\\|S|=2\\ i \in S}} \frac{1}{2}\sum_{\substack{k, l \in S\\ k \neq l}}c(k,l) 
= \frac{1}{2}\sum_{\substack{S \subseteq N\\ j \in S\\|S|=2}} \nu(S)
\end{align}

 The proof for $\phi(\nu)=Nu(\nu)$ follows from \cite{chun2007coincidence}.
 %Theorem~\ref{shapeqnu}.
 %
  Furthermore, it has been shown in \cite{chun2007coincidence} that, for the TU game satisfying Equation~(\ref{shapleyeqhalf}), for each $S \subseteq N$,
 \begin{equation}
  \nonumber
  \nu(S) - \sum_{i \in S}\phi_i(\nu) = \nu(N\backslash S) - \sum_{i \in N\backslash S}\phi_i(\nu)
 \end{equation}
 %
 %By Lemma \ref{lem1}, 
 When $S = \{i\}$, we have
 \begin{equation}
 \nonumber
  \nu(\{i\}) - \phi_i(\nu) = \nu(N\backslash\{i\}) - \sum_{\substack{j\in N\\j\neq i}}\phi_j(\nu)
 \end{equation}
 %From Equation \ref{gatelyeq}, 
 So, the propensity to disrupt for player $i$ \cite{straffin1993game} for the Shapley value allocation is
 \begin{equation}
 \nonumber
 d_i(\phi(\nu)) = \frac{\sum_{j\in N,j\neq i}\phi_j(\nu) - \nu(N\backslash \{i\})}{\phi_i(\nu) - \nu(\{i\})} = 1
 \end{equation}
 As the propensity to disrupt is equal for all the players ($=1$), 
% from the theory in Section~\ref{sec:gately}, 
 this allocation is the Gately point \cite{straffin1993game}, that is,
 %\begin{equation}
 %\nonumber
 $
  \phi(\nu)=Gv(\nu).
  $
Let $M(\nu)=(M_i(\nu))_{i\in N}$ and $m(\nu)=(m_i(\nu))_{i\in N}$, where $M_i(\nu)=\nu(N)-\nu(N\backslash \{i\})$ and $m_i(\nu)=\nu(\{i\})$.
For a convex game, 
%$\tau$-value selects the maximal feasible allocation on the line connecting $M(\nu)=(M_i(\nu))_{i\in N}$ and $m(\nu)=(m_i(\nu))_{i\in N}$, where $M_i(\nu)=\nu(N)-\nu(N\backslash \{i\})$ and $m_i(\nu)=\nu(\{i\})$ \cite{chun2007coincidence}. For a convex game $(N,\nu)$, 
$
\tau(\nu)= \lambda M(\nu) + (1-\lambda)m(\nu)
$,
where $\lambda\in [0,1]$ is chosen such that \cite{chun2007coincidence},
\begin{align}
\label{tausatis}
\sum_{i \in N} \left[ \lambda M_i(\nu) + (1-\lambda) m_i(\nu) \right] = \nu(N)
\end{align}
From 
Equation~(\ref{eqn:pair}),
%(\ref{taumi}), and  
%the marginal contribution of $i$ to the grand coalition is
we have
\begin{eqnarray*}
M_i(\nu)
%&=& \nu(N)-\nu(N\backslash \{i\}) \\
= \sum_{\substack{S\subseteq N \\ |S| = 2}}\nu(S) - \sum_{\substack{S\subseteq N\backslash \{i\} \\ |S| = 2}}\nu(S) 
= \sum_{\substack{S\subseteq N \\ i \in S \\ |S| = 2}}\nu(S)
\end{eqnarray*}
This, with Equation~(\ref{tausatis}) and the fact that for our game, for all $i$, $m_i(\nu) = \nu(\{i\}) = 0$,
\begin{equation*}
%\begin{equation}
%\nonumber
%\sum_{i\in N} [\lambda(\nu(N)-\nu(N-i))] &=& 
 \nu(N) \;=\; \lambda\sum_{i\in N}M_i(\nu) 
\;=\; \lambda\sum_{i\in N} \sum_{\substack{S\subseteq N \\ i \in S \\ |S| = 2}}\nu(S)
%\end{equation}
%\begin{equation}
%\nonumber
%\therefore  
\;=\; 2\lambda\sum_{\substack{S\subseteq N \\ |S| = 2}}\nu(S) \\ %= \nu(N) \\
%\end{equation}
%\begin{equation}
%\nonumber
\end{equation*}
Using Equation~(\ref{eqn:pair}), we get $\lambda = \frac{1}{2}$.
%\end{equation}
%This, with Equation~(\ref{taueq}), gives
So we have
%and the fact that for all $i$, $m_i(\nu) = 0$,
\begin{equation}
 \nonumber
\tau_i(\nu) = \frac{1}{2}M_i(\nu) + \frac{1}{2}m_i(\nu) = \frac{1}{2}\sum_{\substack{S \subseteq N\\ i \in S\\|S| = 2}} \nu(S)
\end{equation}
%\begin{eqnarray*}
%\nonumber
%\tau_i(\nu) &=& \frac{1}{2}M_i(\nu) \\
%&=& \frac{1}{2}\sum_{\substack{S \subseteq N\\ i \in S\\|S| = 2}} \nu(S)
%\end{eqnarray*}
This, with 
%Proposition~\ref{prop}, 
Equation~(\ref{shapleyeqhalf}), gives
%\begin{equation}
%\nonumber
$
\phi(\nu) = \tau(\nu)
$
%\end{equation}
\end{proof}

\section{Description of the Facebook App}
\label{app:facebook_app}

\subsection{Overview}
\label{sec:overview}

Online social networking sites such as Facebook, Twitter, and Google+ are highly popular in the current age; for instance, Facebook has over 1.5 billion users as of 2016. Using such online social networking sites for data collection has become a trend in several research domains. 
When given permission by a user, it is easy to obtain access to the user's friend list, birthday, public profile, and other relevant information using Facebook APIs.
%
%The topics as well as alternatives were designed so that they are a good representative of the general trend, as well as simple and interesting.
Furthermore, Facebook provides a facility to its users to invite their friends to use any particular application, and hence propagate it. 
%The thoughtfully designed questionnaire will reveal general opinions and habits of the user, which will correspond to his/her preferences. 
%
%Since Facebook is the largest online social network in operation today, we developed a Facebook application to propagate an online survey. This survey elicited the preferences of the users by asking them to answer a series of simple questions by ranking their preferences over five available options. 

Owing to the above reasons, in order to obtain the data for our purpose, we developed a Facebook application titled {\em The Perfect Representer} for eliciting the preferences of users over a set of alternatives for a wide range of topics, as well as to obtain the underlying social network. 
Once a user logged into the app, the welcome page as shown in Figure~\ref{fig:intro_page} was presented, which described to the user what was to be expected from the app.
%\begin{enumerate}  [\setlength{\labelwidth}{\widthof{\textbullet}}
%\setlength{\labelsep}{7.5pt}
% \setlength{\IEEElabelindent}{0pt}
%    \IEEEiedlabeljustifyl ] %[topsep=0pt,parsep=0pt,partopsep=0pt,leftmargin=10pt,labelwidth=6pt,labelsep=4pt]

\begin{figure}[t!]
%\hspace{-5mm}
\centering
\includegraphics[scale=.37]{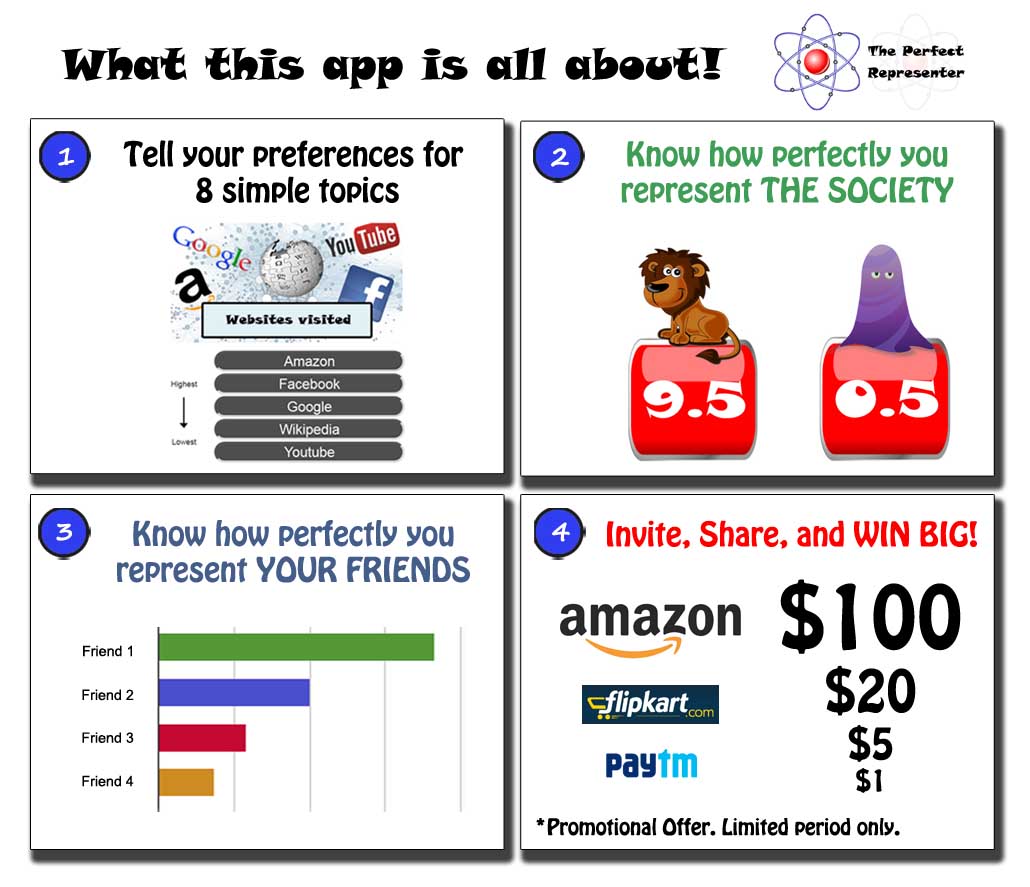}
\caption{Screenshot of the welcome page}
\label{fig:intro_page}
\end{figure}

\begin{figure}[t!]
%\hspace{-5mm}
\centering
\includegraphics[scale=.8]{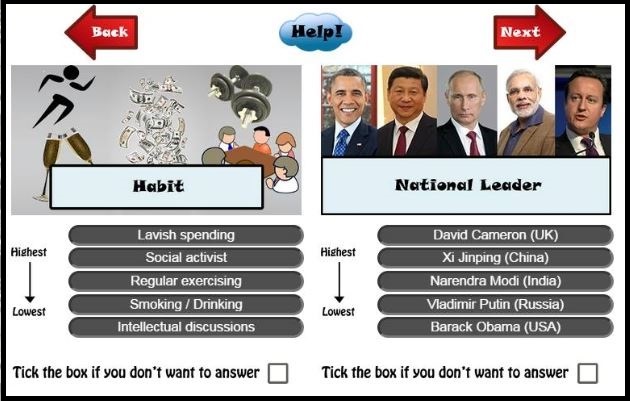}
\caption{Screenshot of a page with topics and their alternatives}
\label{fig:questions_page}
\end{figure}

First, the user would have to give his/her preferences over 5 alternatives for 8 topics, using a drag`n'drop interface as shown in Figure~\ref{fig:questions_page}.
The user was given the option of skipping any particular topic if he/she wished to.
%The topics, which were broadly classified into personal and social types, and their alternatives are listed in Table~I;
%%\ref{tab:questions}; 
%the ordering of alternatives from top to bottom is based on the aggregate preference of the population as per the Borda count aggregation rule.
%
From a user's viewpoint, the app gave the user a {\em social centrality score} out of 10, telling how well the user represents the society or how well the user's opinions are aligned with that of the society with respect to the provided preferences.
The score was dynamic and kept on updating as more users used the app (since the aggregate preference itself kept on updating); this score could be posted on the user's timeline.
The user also had an option of seeing how similar his/her preferences were to his/her selected friends. 
Explicit incentives were provided for users to propagate the app either by inviting their friends or sharing on their timelines as well as social networking and other websites (Figure~\ref{fig:sharing_page}). 
%The app would also tell how similar the user is to his/her friends.
%\item The user is also informed that the scores are live and get updated when others play, since the aggregate opinion may keep on changing.
%Owing to homophily in social networks, it is likely that the user's friends have similar opinions as that of the user, and hence the aggregate population opinion is likely to shift towards the user's opinion, leading to a higher score.
%\end{enumerate}
%From a user viewpoint, the app gave the user a {\em social centrality score}, which is a measure of how well the user represents other users of the application with respect to the provided preferences; the score was dynamic and kept updating as more users used the app (and the aggregate preference kept on updating); this score could be posted on the user's timeline. The user also had an option of seeing how similar his/her preferences are to his/her selected friends. Explicit incentives were provided for users to propagate the app either by inviting their friends or sharing on their timelines as well as social networking and other websites. 

To host our application, we used Google App Engine, which provides a Cloud platform for facilitating large number of hits at any given time as well as large and fast storage.

%\vspace{-2mm} %Above section spacing
\subsection{The Scores}
\label{sec:scores}

Let $\mathbb{A}$ be the set of alternatives and $r=|\mathbb{A}|$.
Let $p_{it}$ be the preference of user $i$ for topic $t$. Let $\tilde{c}(p,q)$ be the similarity between preferences $p$ and $q$. In our app, we consider $\tilde{c}(p,q)$ to be the normalized Footrule similarity, which we now explain.
Let $w_a^p$ denote the position of alternative $a$ in preference $p$. The Footrule distance between preferences $p$ and $q$ is given by 
$ %\begin{displaymath}
\sum_{a \in \mathbb{A}} |w_a^p - w_a^q|
$. %\end{displaymath}
With $r$ being the number of alternatives, it can be shown that the maximum possible Footrule distance is
$ %\begin{displaymath}
2 \lceil \frac{r}{2}\rceil \lfloor \frac{r}{2} \rfloor
$. %\end{displaymath}
So the normalized Footrule distance between preferences $p$ and $q$ can be given by
\begin{displaymath}
\tilde{d}(p,q) = 
\frac{\sum_{a \in \mathbb{A}} |w_a^p - w_a^q|}{2 \lceil \frac{r}{2}\rceil \lfloor \frac{r}{2} \rfloor}
\end{displaymath}
and normalized Footrule similarity by $\tilde{c}(p,q) = 1-\tilde{d}(p,q)$.

For example, the normalized Footrule similarity between preferences $p=(A,B,C,D,E)$ and $q=(B,E,C,A,D)$ is $\tilde{c}(p,q)=\left( 1-\frac{|1-4|+|2-1|+|3-3|+|4-5|+|5-2|}{2 \lceil \frac{5}{2}\rceil \lfloor \frac{5}{2} \rfloor} \right) = \frac{1}{3}$.

\subsubsection{Social Centrality Score - How Perfectly you Represent the Society?}

%\setlength\tabcolsep{1.5pt}
%\begin{figure}[b!]
%\begin{tabular}{ccccc}
%\includegraphics[scale=.15]{Images/Scores/1.jpg}
%&
%\includegraphics[scale=.15]{Images/Scores/2.jpg}
%&
%\includegraphics[scale=.15]{Images/Scores/3.jpg}
%&
%\includegraphics[scale=.15]{Images/Scores/4.jpg}
%&
%\includegraphics[scale=.15]{Images/Scores/5.jpg}
%\\
%\includegraphics[scale=.15]{Images/Scores/6.jpg}
%&
%\includegraphics[scale=.15]{Images/Scores/7.jpg}
%&
%\includegraphics[scale=.15]{Images/Scores/8.jpg}
%&
%\includegraphics[scale=.15]{Images/Scores/9.jpg}
%&
%\includegraphics[scale=.15]{Images/Scores/10.jpg}
%\end{tabular}
%\caption{Some scores telling {\em How perfectly you represent the society?}}
%\label{fig:intro_page}
%\end{figure}
%\setlength\tabcolsep{5.4pt}

Let $p_{it}$ be the preference of node $i$ for topic $t$, and $p_{At}$ be the aggregate preference of the population for topic $t$. For the purpose of our app's implementation, we obtain the aggregate preference using the Borda count rule with tie-breaking based on a predefined ordering over alternatives.
For computing the social centrality, we give each topic $t$, a weight proportional to $n_t$ (the number of users who have given their responses for that topic).
So the fractional score of user $i$ is given by
$ %\begin{displaymath}
\sum_t \left( \frac{n_t}{\sum_t n_t} \right) \tilde{c}(p_{it} , p_{At})
$. %\end{displaymath}

As also mentioned in 
Section
%V-B,
\ref{sec:alt_dist}, 
the distance between two preferences (here, user preference and the aggregate preference) for most topics would be concentrated in the intermediate range, which would forbid a user to get a very high or a very low score. A mediocre score would, in some sense, act as a hurdle in the way of user sharing the post on his/her timeline. So to promote posting their scores, we used a simple boosting rule (square root) and then enhanced it to the nearest multiple of 0.5, resulting in the final score (out of 10) of
\begin{displaymath}
\frac{1}{2}\;  \left\lceil 20 \;\sqrt{\sum_t \left( \frac{n_t}{\sum_t n_t} \right) \tilde{c}(p_{it} , p_{At})} \;\right\rceil
\end{displaymath}
%\begin{displaymath}
%\frac{1}{2} \cdot \left\lceil 2\cdot 10 \cdot \sqrt{\sum_t \left( \frac{n_t}{\sum_t n_t} \right) \tilde{c}(p_{it} , p_{At})} \;\right\rceil
%\end{displaymath}

\subsubsection{How Perfectly you Represent your Friends?}

Once a user selected a list of friends to see how similar they are to the user, the app would give the similarity for each friend in terms of percentage. This similarity was also a function of the number of common questions they responded to. So the similarity between nodes $i$ and $j$ in terms of percentage was given by
\begin{displaymath}
100 \cdot \sum_t \tilde{c}(p_{it} , p_{jt})
\end{displaymath}
where $\tilde{c}(p_{it} , p_{jt})=0$ if either $i$ or $j$ or both did not respond to topic $t$.

%\vspace{-2mm} %Above section spacing
\subsection{The Incentive Scheme}
\label{sec:incentives}

\begin{figure}[t!]
%\hspace{-5mm}
\centering
\includegraphics[scale=.32]{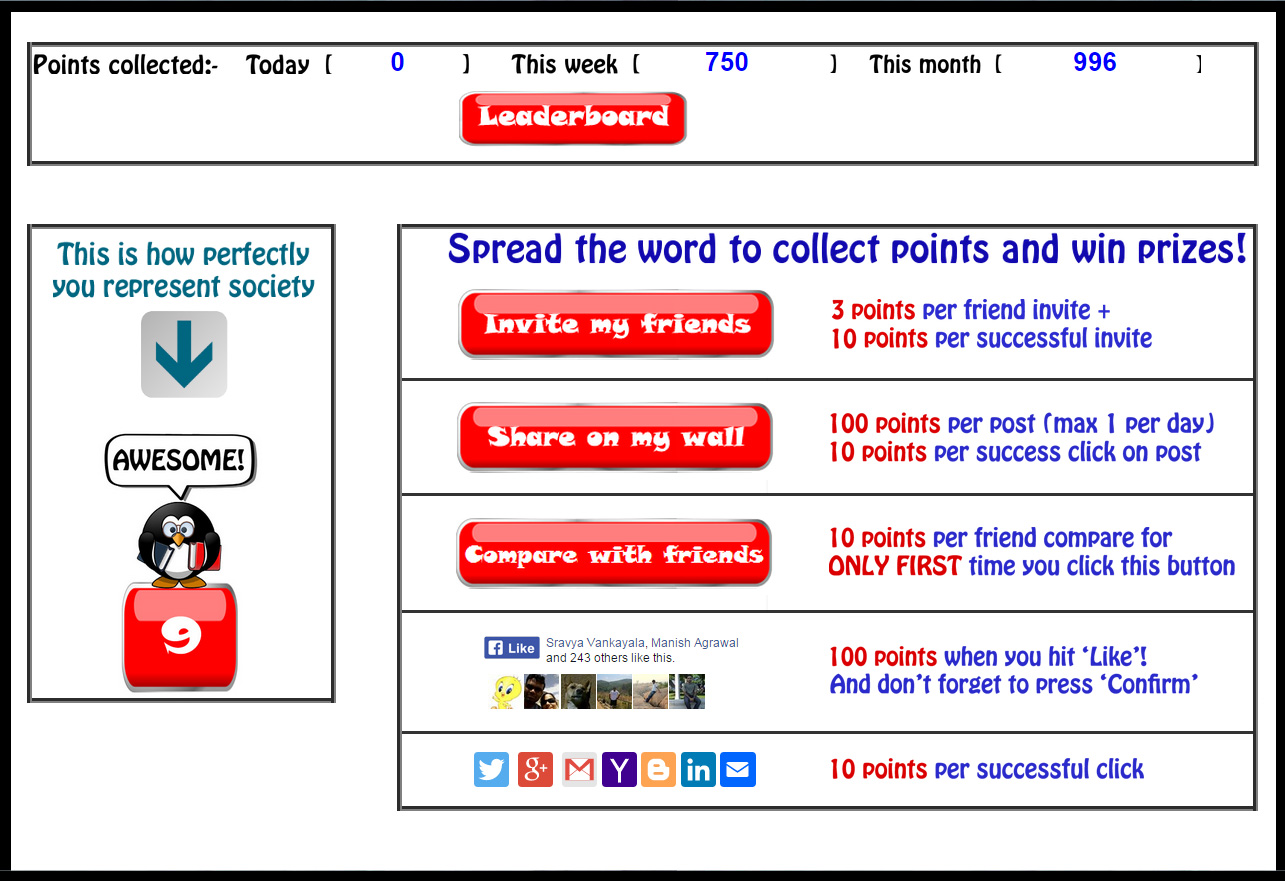}
\caption{Screenshot of the sharing and scoring page}
\label{fig:sharing_page}
\end{figure}

A typical active Facebook user uses several apps in a given span of time, and invites his/her friends to use it depending on the nature of the app. 
In order to ensure a larger reach, it was important to highlight the benefits of propagating our app. We achieved this by designing a simple yet effective incentive scheme for encouraging users to propagate the app by sharing it on their timelines and inviting their friends to use it. 

We incorporated a points system in our app, where suitable points were awarded on a daily, weekly, as well as on an overall basis, for spreading the word about the app through shares, invites, likes, etc. Bonus points were awarded when a referred friend used the app through the link shared by the user. 
 To ensure competitiveness in sharing and inviting, the daily and weekly `top 10' point collectors were updated in real-time and the winners were declared at 12 noon GMT (daily) and Mondays 12 noon GMT (weekly). 
% This was to ensure competitiveness in sharing and inviting, as well as to develop trust in the app and boost its reputation.
%
A cutoff was set on the number of points to be eligible to get a prize. 
Users were also given a chance to win a big prize through daily, weekly, and bumper lucky draws, if they crossed a certain amount of points, so that users with less number of friends could also put their effort even though they did not have a chance to make it into the `top 10'.
Prizes were awarded in the form of gift coupons so that getting the prize was in itself, quick as well as hassle-free for the users.

The points structure as well as the links to invite, like, and share were given on the scoring page (Figure~\ref{fig:sharing_page}) giving the users a clear picture of how to earn points and win prizes.
%
%Daily and weekly prizes are offered to users who have more points. 
The lists of daily and weekly winners were displayed on the welcome page as well as on the scoring page.
% (Figure~\ref{fig:winners}). %Through this, we aim to keep up the interest and gain the trust of users.

\end{subappendices}

\chapter{Summary and Future Work}
\label{chap:conclusion}

In this thesis, we proposed three novel problems in the context of social networks, and developed models and methods for solving them. 
First, we addressed the problem of deriving conditions under which autonomous link alteration decisions made by strategic agents lead to the formation of a desired network structure.
Second, we addressed the problem of effective viral marketing by determining how many and which individuals are to be selected in different phases, and what should be the delay between the phases.
Third, we addressed the problem of determining the best representatives using the underlying social network data, by modeling the spread of preferences among the individuals in a social network.
Below, we summarize our results for each problem chapter-wise.

\section{Summary of the Thesis}

\subsection*{Chapter~\ref{chap:nfsc}: \nameref{chap:nfsc}}

In this chapter, we proposed a model of recursive network formation where nodes enter a network sequentially, thus triggering evolution 
%of the network 
each time a new node enters.
We considered a sequential move game model with myopic nodes and pairwise stability as the equilibrium notion, but the proposed model %(Figure~\ref{fig:model}) 
is independent of the network evolution model, the equilibrium notion, as well as the utility model.
The recursive nature of our model enabled us to analyze the network formation process using an elegant induction-based technique.

%Given a network topology, 
For each of the relevant topologies, we derived sufficient conditions 
%using the proposed utility model
for the formation of that topology uniquely, by directing network evolution as desired.
% along a desired improving path.
% in the 
%sequential move
 %game tree.
 The derived conditions suggest that conditions on network entry impact degree distribution, while
  conditions on link costs impact density;
  also there arise constraints on intermediary rents owing to contrasting densities of connections in the desired topology.
We then analyzed the social welfare properties of the relevant topologies under the derived conditions,
% of relevant topologies, 
and studied the effects of deviating from the derived conditions using simulations. 
In the process, we also provided efficient algorithms for computing graph edit distance from certain topologies.

\subsection*{Chapter~\ref{chap:mpid}: \nameref{chap:mpid}}

In this chapter, we proposed and motivated the two-phase diffusion process, formulated an appropriate objective function, proposed an alternative objective function, developed suitable algorithms for selecting seed nodes in both the phases, and observed their performances using simulations. 
We observed that myopic algorithms perform closely to the corresponding farsighted algorithms, while taking orders of magnitude less time.
%We conclude that, even though a two-phase process starts slower and takes longer time for completion, it has a clear edge over the single-phase process in terms of the number of influenced nodes at the end of the diffusion process. 
%The objective of this paper was to motivate and formalize information diffusion in multiple phases. 
%In order to justify the effectiveness of two-phase diffusion process, it was necessary to consider high performing single-phase algorithms, namely, FACE, SPIC, and greedy. However, as a feasible solution for large networks, we also considered highly scalable algorithms such as PMIA, GDD, WD, and SD heuristics. 
%In absence of temporal constraints, we observed that the budget split $k_1:k_2=35:65$ and delay $d=D$ were the most effective under almost all settings. 
%The generalized degree discount heuristics served as a perfect balance between performance and running time.

In order to make the best out of two-phase diffusion, we also studied the problems of budget splitting and scheduling the second phase. We proposed the usage of FACE algorithm for the combined optimization problem. Further, we studied the nature of the plots and owing to their unimodal nature,
% with respect to $k_1$ and $d$, 
we proposed the usage of golden section search technique to find an optimal $<k_1,d>$ pair.
We concluded that: (a) under strict temporal constraints, use single-phase diffusion, (b) under moderate temporal constraints, use two-phase diffusion with a short delay while allocating most of the budget to the first phase, and (c) when there are no temporal constraints, use two-phase diffusion with a long enough delay with a budget split of $1:2$ between the two phases ($1/3$ of the budget for the first phase).
We presented results for a few representative settings; these results are very general in nature.
We then provided notes on the decay function accounting for the temporal constraints, the subadditivity property of the objective function, and how this work can be extended to the linear threshold model.

\subsection*{Chapter~\ref{chap:pasn}: \nameref{chap:pasn}}

This chapter focused on two problems on preference aggregation with respect to social networks, namely, (1) how preferences are spread out in a social network and (2) how to determine the best set of representative nodes in the network.
We started by motivating both these problems. Based on the Facebook app that we had developed, we developed a number of simple and natural models, of which RPM-S set a good balance between accuracy and running time; while MSM-SP was observed to be the best model when our objective was to deduce the mean distances between all pairs of nodes, and not the preferences themselves.

With a model for predicting spread of preferences in a social network at hand, we formulated an objective function for representative-set selection and followed it up with two alternative objective functions for practical usage. 
We then proposed algorithms for selecting best representatives, wherein we provided a guarantee on the performance of the Greedy-min algorithm, subject to the aggregation rule satisfying the expected weak insensitivity property; we also studied the desirable properties of the Greedy-avg algorithm from the viewpoint of cooperative game theory. 
We also observed that degree centrality heuristic performed very well, thus demonstrating the ability of high-degree nodes to serve as good representatives of the population. 
Our main finding is that harnessing social network information in a suitable way will achieve scalable preference aggregation of a large population and will certainly outperform random polling based methods.
We finally provided notes on the effectiveness of our approach for aggregating preferences with respect to personal and social topics, as well as on an alternative way of defining the distribution of preferences between nodes.

\section{Directions for Future Work}

Owing to the novel nature of the studied problems, there are many natural extensions to this thesis.
Below, we list some immediate possible directions for future work for each problem.

\subsection*{Chapter~\ref{chap:nfsc}: \nameref{chap:nfsc}}

\begin{itemize}
\item 
%\begin{comment} %OctEdit
This work proposed a way to derive conditions on the network parameters (namely,
costs for maintaining link with an immediate neighbor as well as for entering a network), 
under which a desired network topology is uniquely obtained. 
%\end{comment} %OctEdit
%Going a step further, 
It would be interesting to design incentives such that nodes in the network comply with the derived sufficient conditions. 

\item 
\begin{comment} %SepEdit
We introduced the concept of multiple networks over a given set of nodes. It would be interesting to study the correlation among the network formation processes of these networks.
\end{comment} %SepEdit
%
%We introduced the concept of dynamic conditions on a network, which we used in our simulation study.
\begin{comment} %OctEdit
% for observing the deviation in a network when there are deviations from the corresponding sufficient conditions. 
\end{comment} %OctEdit
Based on the concept of dynamic conditions on a network that we introduced, 
it is an interesting future direction to study network formation when the conditions vary over time 
%or evolve 
owing to certain internal and external factors.
%, including the influence of other networks over the same or other set of nodes.
%

\item 
The proposed model of recursive network formation can also be used in conjunction with other utility models to investigate the formation of interesting topologies under them.

\item 

%Our analysis ensure
We ensure 
that irrespective of the 
%chosen node at any point in time, 
order in which nodes make their moves,
the network evolution is directed as desired;
a possible solution for simplifying analysis for more involved topologies would be 
%to carry out probabilistic analysis for deriving 
to derive conditions so that a network has the desired topology with high probability.

\item 
%Another interesting direction, 
From a practical viewpoint, one may study the problem of forming networks where the topology need not be exactly the desired one, for example, a near-$k$-star instead of a precise $k$-star.

\item 

%The derived sufficient conditions 
%%under which a desired network uniquely emerges, 
%serve a network owner's purpose, but it may be interesting to obtain 
%%a tight characterization by studying 
%conditions which are 
%%sufficient as well as necessary 
%%(a note in 
%necessary as well
%(Appendix \ref{app:extending_necessary}).

%\subsection{Extending to Necessary Conditions}
%\label{app:extending_necessary}
%
%\subsubsection{Extending to Necessary Conditions}
%\label{sec:necessary}
%
We only derived sufficient conditions under which a desired network topology uniquely emerges. As these conditions serve the purpose of a network designer, the topic of necessary conditions is beyond the scope of this work. From a theoretical viewpoint, however, one may want to obtain a tight characterization by studying conditions that are both necessary and sufficient, that is, the union of all the conditions under which a topology uniquely emerges.
One na\"{i}ve way to obtain such conditions would be to search exhaustively by varying the parameters $c,c_0$ and $\gamma$ with respect to $b_i$'s and each other. As a tractable step towards searching for such conditions, 
%apart from the derived sufficient conditions, under which a topology uniquely emerges, 
one may search in the neighborhood of the values derived in this chapter. 
Although the objective of Section~\ref{sec:deviation} 
was to study the effects on resulting networks owing to temporary deviations of parameters from the derived sufficient conditions, the section can be viewed as a step in this direction.
\end{itemize}

\subsection*{Chapter~\ref{chap:mpid}: \nameref{chap:mpid}}

\begin{itemize}
\item 
This work can be extended to study diffusion 
in more than two phases
 and compare their performances against two-phase diffusion, with respect to the influence spread and time taken. 
% \item The edge probabilities in a network have a significant impact on the diffusion process and would determine the best $<k_1,d>$ pair.
% For instance, if most edge probabilities are very high or low, then owing to less uncertainty, it may be better to proceed with single-phase diffusion, or a two-phase diffusion with small $d$ 
% %and large $k_1$ 
% (to save time), while if the probabilities are close to 0.5, then two-phase with larger $d$
% %, and smaller or moderate $k_1$
%  may be better; one can study this dependency. 
% %\textbf{\color{red}{Certainties of occurrence of live graphs}}
% In other words, a primary motivation for using multiple phases was the uncertainty involved in diffusion. By quantifying this uncertainty based on the edge probabilities and network structure, one can determine the effectiveness of multi-phase diffusion.
%

 \item 
 We focused on the well-studied IC model and provided a note regarding the LT model; studying multi-phase diffusion under other diffusion models is another direction to look at.
 %
 
% \item For practical implementation of two-phase diffusion while simultaneously optimizing over $k_1$, $d$, and the seed set, it would be useful to develop algorithms that are more efficient and scalable.
\item 
It would be useful to study how multi-phase diffusion can be harnessed to get a desired expected spread with a reduced budget.
%\item We considered decaying of diffusion value with time; an alternative way of taking time into consideration would be to study the problem in presence of a competing diffusion.
%

\item 
%We proved that the objective function for the two-phase diffusion is monotone but neither submodular nor supermodular with a given budget split and delay; 
It would be of theoretical interest to prove or disprove if there exists an algorithm that gives a constant factor approximation for the problem of two-phase influence maximization.

\item 
It would also be interesting to study equilibria in a game theoretic setting where multiple campaigns consider the possibility of multi-phase diffusion.
\end{itemize}

%We proposed a modified greedy algorithm suitable for multi-phase diffusion. However, a custom algorithm for multi-phase diffusion using the greedy technique would be useful. 

%Furthermore, other fast algorithms can be developed to select the seed nodes for different phases. 
%Due to the nature of the problem, it is important to formulate an objective function that accounts for time (in addition to the number of influenced nodes), study its properties, and develop algorithms for the same.
%
%

\subsection*{Chapter~\ref{chap:pasn}: \nameref{chap:pasn}}

\begin{itemize}
\item 
A primary objective of this work was to select $k$ nodes so as to minimize $\mathbb{E} [ f(P) \; \Delta \; f(R) ]$. This work can be extended to select minimum number of nodes such that this value is bounded with high confidence.

\item 
%We introduced and 
We empirically observed expected weak insensitivity for various aggregation rules.
% under distribution $\mathcal{D}$, Kendall-Tau distance measure, and the defined $\Delta$. 
This property maybe of prime importance in social choice theory and so, it will be interesting to analytically determine the aggregation rules that satisfy it.
%this property.
%It may be of practical interest to study its generalization
%% of this property 
%where the aggregate preference changes by at most $\theta \mu_d$ instead of $\mu_d$ (see Equation~(\ref{eqn:weak_ins})), in expectation, for some constant $\theta$.
%\textbf{\color{red}{constant factor expected weak insensitivity property}}
%

\item 
We used a particular form of modified profile $R=Q'$ (profile consisting of  preferences of representatives, multiplied by the number of nodes they represent). It will be interesting to study the `best' form of $R$.
% given a profile $P$.
%
%The time complexity of Greedy-avd and Greedy-min algorithms is large owing to the global nature of the abstracted optimization problem. It will be useful to provide a reasonable solution using its localized counterpart like, for example, degree centrality heuristic.
%The scale of our survey
%% data for which we proposed a model, 
%was modest. In order to verify and refine the proposed model, it is essential to have a survey on a larger scale.
%

\item 
We assumed that the voters are not strategic and so report their preferences truthfully. From a game theoretic viewpoint, it would be interesting to look at the strategic aspect of the problem.

\item
We conducted the study based on the distribution of distances between nodes with respect to their preferences. 
It would be interesting to study alternative ways of modeling the distribution of preferences between nodes.

\item 
General random utility models are complementary to our model, exploiting attributes of nodes and alternatives instead of the underlying social network. It will be interesting to consider the attributes as well as the underlying social network for determining the best representatives.

\end{itemize}

%\begin{subappendices}
%\section{a}a
%\end{subappendices}

%\newpage
%This thesis was all about social networks, and online social networks (OSNs) are the way to go for effective implementation of the methods suggested in this thesis. In today's age, OSNs are something which simply cannot be ignored. Facebook in particular has now become an integral part of most people's lives. It has significantly impacted the way an individual's image is formed in his or her friend circle. I conclude this thesis with a small note -
%\begin{center}
%\begin{tabular}{l}
%\textit{You must have lots of Facebook friends to prove you are social,} \\
%\textit{You must add pics on Facebook to prove you are essential,} \\
%\textit{You must share posts on Facebook to prove you are an activist,} \\
%\textit{And you must be on Facebook to prove that you exist.} 
%\end{tabular}
%\end{center}
%
%%\appendix
%%\subimport{NFSC/}{appendix_NFSC}
%%\subimport{MPID/}{appendix_MPID}
%%\subimport{PASN/}{appendix_PASN}
%
%\bibtitle{Bibliography}
%\bibliographystyle{myplainnat}
\blankpagewithnumber
\bibliographystyle{plainnat}
\bibliography{references}
\blankpagewithnumber
\newpage
~ \vspace{5cm}

\noindent
This thesis was all about social networks, and online social networks (OSNs) are an effective way of implementing the methods suggested in this thesis. In today's age, OSNs are something which simply cannot be ignored. Facebook in particular, has now become an integral part of most people's lives. It has significantly impacted the way an individual's image is formed in his or her social circle with respect to reputation, viewpoints, lifestyle, etc. I conclude this thesis with a few lines on Facebook, because without OSNs like it, this thesis would not have existed.\\

\begin{center}
\begin{tabular}{l}
\textit{You must have many Facebook friends to prove you are social,} \\
\textit{You must add pics on Facebook to prove you are essential,} \\
\textit{You must share posts on Facebook to prove you are an activist,} \\
\textit{And you must exist on Facebook to prove that you exist.} \\
\end{tabular}
\end{center}

\vspace{5mm}
\begin{center}
-- THE END --
\end{center}

\end{document}